\newtheorem{theorem}{Theorem}[section]
\newtheorem{lemma}[theorem]{Lemma}
\newtheorem{corollary}[theorem]{Corollary}
\newtheorem{proposition}[theorem]{Proposition}
\newtheorem{claim}[theorem]{Claim}
\newtheorem*{theorema}{Theorem A}
\newtheorem*{theoremb}{Theorem B}
\theoremstyle{definition}
\newtheorem{definition}[theorem]{Definition}
\newtheorem{example}[theorem]{Example}
\theoremstyle{remark}
\newtheorem{remark}[theorem]{Remark}
\numberwithin{subsection}{section}
\numberwithin{equation}{section}
\newcommand\cA{{\mathcal{A}}}
\newcommand\cC{{\mathcal{C}}}
\newcommand\cM{{\mathcal{M}}}
\newcommand\cP{{\mathcal{P}}}
\newcommand\bR{{\mathbb{R}}}
\newcommand\oH{{\overline{H_2}}}
\newcommand\oHp{{\overline{H_2^+}}}
\newcommand\oi{{\widehat{\infty}}}
\newcommand\oR{{\overline{\mathbb{R}}}}
\newcommand\pp{{\Pi_{P,Q}^+}}
\newcommand\ppp{{\Pi_{P_1,P_2}^+}}
\begin{document}


\title{Optimal version of the fundamental theorem of chronogeometry}


\author{Michiya Mori}
\address{Graduate School of Mathematical Sciences, The University of Tokyo, 3-8-1 Komaba, Meguro-ku, Tokyo, 153-8914, Japan; Center for Interdisciplinary Theoretical and Mathematical Sciences (iTHEMS), RIKEN, 2-1 Hirosawa, Wako, Saitama, 351-0198, Japan}
\curraddr{}
\email{mmori@ms.u-tokyo.ac.jp}
\thanks{The first author was supported by JSPS KAKENHI Grant Number 22K13934.}

\author{Peter \v{S}emrl}
\address{Institute of Mathematics, Physics, and Mechanics, Jadranska 19, SI-1000 Ljubljana, Slovenia; Faculty of Mathematics and Physics, University of Ljubljana, Jadranska 19, SI-1000 Ljubljana, Slovenia}
\curraddr{}
\email{peter.semrl@fmf.uni-lj.si}
\thanks{The second author was supported by grants J1-2454 and P1-0288 from ARIS, Slovenia.}

\date{}

\subjclass[2020]{Primary 83A05, Secondary 47B49, 51B20, 53C50.}

\keywords{fundamental theorem of chronogeometry; special relativity; coherency preserving mapping}


\begin{abstract}
We study lightlikeness preserving mappings from the $4$-dimensional Minkowski spacetime $\mathcal{M}_4$ to itself under no additional regularity assumptions like continuity, surjectivity, or injectivity. 
We prove that such a mapping $\phi$ satisfies one of the following three conditions.
\begin{enumerate}
\item The mapping $\phi$ can be written as a composition of a Lorentz transformation, a multiplication by a positive scalar, and a translation. 
\item There is an event $r\in \mathcal{M}_4$ such that $\phi(\mathcal{M}_4\setminus\{r\})$ is contained in one light cone. 
\item There is a lightlike line $\ell$ such that $\phi(\mathcal{M}_4\setminus \ell)$ is contained in another lightlike line. 
\end{enumerate}
Here, a line that is contained in some light cone in $\mathcal{M}_4$ is called a lightlike line.
We also give several similar results on mappings defined on a certain subset of $\mathcal{M}_4$ or the compactification of $\mathcal{M}_4$. 
\end{abstract}

\maketitle

\tableofcontents




\part{Introduction and preliminaries}

\section{Introduction}

\subsection{Concise description of our result}
In this paper, we study the standard Minkowski spacetime \index{Minkowski spacetime} $\mathcal{M}_4$ \index{$M_4$@$\mathcal{M}_4$} of dimension $3+1$.
In the mathematical foundations of special relativity, we adopt the harmless normalization that the speed of light equals $1$.
Recall that two spacetime events $r_1 = (x_1 , y_1 , z_1 , t_1), r_2 = (x_2, y_2, z_2 , t_2 ) \in \mathcal{M}_4$ are \emph{lightlike} \index{lightlike} if
\[
(x_2 - x_1)^2 + (y_2 - y_1)^2 + (z_2 - z_1)^2 = (t_2 - t_1)^2 .
\]
Hence, $r_1$ and $r_2$ are lightlike if the light signal can pass between $r_1$ and $r_2$.

The Lorentz--Minkowski indefinite inner product \index{Lorentz--Minkowski indefinite inner product} on $\mathcal{M}_4$ is defined by \index{$\langle\cdot, \cdot\rangle$}
\[
\langle r_1 , r_2 \rangle = -x_1 x_2 - y_1 y_2  -z_1 z_2  + t_1 t_2 
\]
for a pair of spacetime events $r_1 = (x_1 , y_1 , z_1 , t_1)$, $r_2 = (x_2, y_2, z_2 , t_2 ) \in \mathcal{M}_4$.
Thus, $r_1 , r_2$ are lightlike if and only if 
\[
\langle r_1 - r_2 , r_1 - r_2 \rangle = 0  .
\]
For a given spacetime event $r$, the set of all spacetime events $s$ satisfying $\langle s - r , s - r \rangle = 0$, that is, the set of all spacetime events $s$ such that $s$ and $r$ are lightlike,
is called the \emph{light cone} \index{light cone} with vertex $r$.

Recall that a \emph{Lorentz matrix}\index{Lorentz matrix} is a $4\times 4$ real matrix $Q$ satisfying $Q^t MQ = M$, where
\begin{equation}\label{Matrix}
M = \left[ \begin{matrix} -1 & 0 & 0 & 0 \cr
0 & -1 & 0 & 0 \cr
0 & 0 & -1 & 0 \cr
0 & 0 & 0 & 1 \cr\end{matrix} \right] \index{$M$}
\end{equation}
and $Q^t$ denotes the transpose of $Q$.
A mapping on $\cM_4$ of the form $r\mapsto Qr$ for some Lorentz matrix $Q$ is called a \emph{Lorentz transformation}\index{Lorentz transformation}.  
A Lorentz transformation is characterized as a linear mapping $\phi\colon \cM_4\to \cM_4$ satisfying 
\[
\langle \phi(r_1), \phi(r_2) \rangle = \langle r_1 , r_2 \rangle
\] 
for every pair $r_1,r_2\in \cM_4$.
A mapping on $\cM_4$ of the form $r\mapsto Qr+a$ for some Lorentz matrix $Q$ and a spacetime event $a$ is called a \emph{Poincar\'e transformation}\index{Poincar\'e transformation}.

A map $\phi \colon \mathcal{M}_4 \to \mathcal{M}_4$ is said to \emph{preserve lightlikeness in both directions} \index{preserve lightlikeness in both directions} if it satisfies
\begin{equation}\label{ggg}
\langle r_1 - r_2 , r_1 - r_2 \rangle = 0 \iff \langle \phi (r_1 ) - \phi ( r_2 ) , \phi ( r_1 ) - \phi ( r_2 ) \rangle = 0
\end{equation}
for every pair of spacetime events $r_1, r_2 \in \mathcal{M}_4$.
Observe that a bijection $\phi \colon \mathcal{M}_4 \to \mathcal{M}_4$ preserves lightlikeness in both directions if and only if the light cone with vertex $r$ is mapped by $\phi$ onto the light cone with vertex $\phi(r)$ for every $r\in \cM_4$.
The following theorem is known as the \emph{fundamental theorem of chronogeometry}.
\begin{theorem}\index{fundamental theorem of chronogeometry}\label{ftc}
Every bijective map $\phi \colon \mathcal{M}_4 \to \mathcal{M}_4$ satisfying \eqref{ggg} for every pair of spacetime events $r_1, r_2 \in \mathcal{M}_4$, is of the form
\begin{equation}\label{fiesa}
\phi (r) =   c\,Qr + a,\ \ r\in \cM_4,
\end{equation}
for some positive real number $c$, some Lorentz matrix $Q$, and some spacetime event $a$.
\end{theorem}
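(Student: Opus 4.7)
The plan is to follow the classical Alexandrov--Zeeman strategy: reduce $\phi$ to a linear map via a translation, upgrade the lightlike-preserving property to preservation of the affine structure, and finally show the linear part is a positive scalar multiple of a Lorentz transformation. Replacing $\phi$ by $r\mapsto \phi(r)-\phi(0)$, I may assume $\phi(0)=0$; the bijectivity together with \eqref{ggg} then implies, as already noted in the excerpt, that the light cone with vertex $r$ is mapped bijectively onto the light cone with vertex $\phi(r)$ for every $r\in\cM_4$.

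The first step is to show that $\phi$ maps every lightlike line bijectively onto a lightlike line. This rests on the elementary algebraic fact that, in Lorentz signature, two nonzero null vectors $v,w$ with $\langle v,w\rangle=0$ must be proportional: after rotating the spatial part of $v$ one may take $v=(1,0,0,1)$, and then the two conditions on $w$ solve explicitly, forcing $w\in\bR v$. Consequently, three distinct spacetime events are pairwise lightlike-related if and only if they lie on a common lightlike line, a purely combinatorial property of the relation in \eqref{ggg}. Since both $\phi$ and $\phi^{-1}$ preserve that relation, each lightlike line is bijectively carried to a lightlike line.

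The main obstacle is upgrading this to the statement that $\phi$ is affine. I would first establish preservation of parallelism of lightlike lines: two distinct lightlike lines are parallel precisely when they are coplanar and disjoint, and coplanarity can in turn be detected from the incidence pattern of lightlike transversals. This yields preservation of the null affine $2$-planes (those containing a single $1$-parameter family of null directions, i.e.\ planes of the form $r+\text{span}(v,w)$ with $v$ null, $\langle v,w\rangle=0$, $\langle w,w\rangle\neq 0$). Arbitrary affine lines are then recovered from lightlike-incidence data: a spacelike line, for instance, is the intersection of all null $2$-planes containing it, and the remaining lines are handled by a dual or limiting construction. An appeal to the fundamental theorem of affine geometry---trivial over $\bR$ because $\bR$ admits no nontrivial field automorphisms---then gives that $\phi$ is affine, and since $\phi(0)=0$ it is linear, say $\phi(r)=Ar$ for some invertible $A$.

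Finally, $A$ preserves the zero set of the non-degenerate quadratic form $v\mapsto\langle v,v\rangle$. Because the Minkowski polynomial $-x^2-y^2-z^2+t^2$ is irreducible and its real zero set is Zariski-dense in its complex zero set, any real quadratic form vanishing on it is a scalar multiple of it; thus $A^tMA=cM$ for some $c\in\bR\setminus\{0\}$. By Sylvester's law of inertia, $A^tMA$ has the same signature as $M$, so $c>0$, and $Q:=A/\sqrt{c}$ is a Lorentz matrix. Undoing the initial translation with $a=\phi(0)$ yields $\phi(r)=\sqrt{c}\,Qr+a$, which is of the form \eqref{fiesa}.
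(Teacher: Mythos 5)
Your outline follows the classical Alexandrov--Zeeman route, which is genuinely different from the paper's: the paper transports everything to $H_2$ via \eqref{zacet} and deduces Theorem \ref{ftc} from Theorem \ref{ftcgelem}, whose proof runs through Theorem A, Wigner's theorem on $\mathcal{P}$, and the identity-type theorem, precisely so as to avoid the affine-geometry machinery you invoke. The two endpoints of your argument are sound: the observation that two null vectors $v,w$ with $\langle v,w\rangle=0$ are proportional correctly characterizes lightlike lines as the maximal pairwise-lightlike sets, so a bijection preserving \eqref{ggg} in both directions carries lightlike lines onto lightlike lines; and once linearity of $\phi$ is in hand, the argument that $A^tMA=cM$ with $c>0$ by inertia is essentially the same computation the paper performs at the end of its proof.

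The genuine gap is the middle step, which is the actual content of the theorem: passing from ``$\phi$ permutes the lightlike lines'' to ``$\phi$ is affine.'' Three assertions there are left unproved, and one is not salvageable as stated. First, the claim that coplanarity of two disjoint lightlike lines is detectable from the incidence pattern of lightlike transversals requires a case analysis (parallel lines inside a null $2$-plane, parallel lines not inside one, and skew lines give different transversal families), none of which is carried out. Second, and more seriously, a timelike line is contained in \emph{no} null $2$-plane: if $\langle w,w\rangle>0$ then $w^\perp$ is negative definite and contains no null direction, so the proposed recovery of lines as intersections of null $2$-planes fails outright for timelike lines, and ``a dual or limiting construction'' is a placeholder rather than an argument. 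Third, the fundamental theorem of affine geometry needs $\phi$ to preserve collinearity for all lines (or at least enough of them, with an argument for why that suffices), which has not been established; moreover, even after additivity one must rule out wild additive bijections of $\bR$ along each null direction, which is exactly where the ``no nontrivial field automorphisms of $\bR$'' remark has to be turned into a proof. These steps can all be completed --- this is how the classical references \cite{Al2}, \cite{Zee}, \cite{Ben} proceed --- but as written the proposal asserts rather than proves the heart of the theorem.
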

Note that linearity or continuity is not assumed in this theorem.
Conversely, it is easily seen that a mapping of the form \eqref{fiesa} preserves lightlikeness in both directions.
According to \cite{Al3}, the fundamental theorem of chronogeometry was first given by Alexandrov in 1949 \cite{Al1} (see also \cite{Al2}, \cite{Ben}, \cite{Zee}).
For a physical interpretation of this theorem, we refer to \cite[p.691]{Pfe}. 

The fundamental theorem of chronogeometry has been improved in many ways. The same conclusion as in Theorem \ref{ftc} has been obtained under some weaker assumptions, see \cite{BoH}, \cite{ChP}, \cite{Les}, \cite{LeM}, \cite{Pfe}, \cite{PoR}, \cite{Sch}, and \cite{Zee}. Our aim is to optimize the fundamental theorem of chronogeometry.
We will assume no regularity conditions like injectivity or surjectivity. 
The assumption of preserving lightlikeness in both directions will be replaced by the following weaker condition.
A map $\phi \colon \mathcal{M}_4 \to \mathcal{M}_4$ is said to \emph{preserve lightlikeness} \index{preserve lightlikeness} (or more precisely, it preserves lightlikeness in one direction) if 
for every pair of spacetime events $r_1, r_2 \in \mathcal{M}_4$ we have
\[
\langle r_1 - r_2 , r_1 - r_2 \rangle = 0 \Rightarrow \langle \phi (r_1 ) - \phi ( r_2 ) , \phi ( r_1 ) - \phi ( r_2 ) \rangle = 0.
\]
In other words, a mapping $\phi \colon \mathcal{M}_4 \to \mathcal{M}_4$ preserves lightlikeness when the light cone with vertex $r$ is mapped by $\phi$ to a (possibly proper) subset of the light cone with vertex $\phi(r)$ for every $r\in \cM_4$.
We study a general mapping $\phi \colon \mathcal{M}_4 \to \mathcal{M}_4$ that preserves lightlikeness in one direction only.
Under such a weak assumption, not all lightlikeness preserving maps are of the form \eqref{fiesa}. 

Now let us state one consequence of our main results. 
A \emph{lightlike line} \index{lightlike line} is a subset of $\mathcal{M}_4$ of the form 
\[
\{(x_0+tx, y_0+ty, z_0+tz, t)\,:\, t\in \mathbb{R}\}
\]
for some $(x_0, y_0, z_0), (x,y,z) \in \mathbb{R}^3$ with $x^2+y^2+z^2=1$. 
Observe that this is a subset of the light cone with vertex $(x_0,y_0, z_0, 0)\in \cM_4$.
\begin{theorem}\label{MM}
Let $\phi\colon \mathcal{M}_4\to \mathcal{M}_4$ preserve lightlikeness in one direction. 
Then one of the following holds.
\begin{enumerate}
\item The mapping is of the form \eqref{fiesa} for some positive real number $c$, some Lorentz matrix $Q$, and some spacetime event $a$. 
\item There are events $r, r'\in \mathcal{M}_4$ such that $\phi(\mathcal{M}_4\setminus\{r\})$ is contained in the light cone with vertex $r'$. 
\item There are lightlike lines $\ell, \ell'$ in $\mathcal{M}_4$ such that $\phi(\mathcal{M}_4\setminus \ell)$ is contained in $\ell'$. 
\end{enumerate}
\end{theorem}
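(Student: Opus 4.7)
The starting point is a classification of pairwise-lightlike subsets of $\cM_4$: any three distinct events $r_1, r_2, r_3$ with $r_i-r_j$ null for all $i,j$ must lie on a common lightlike line. One checks this by translating so that $r_1=0$, parameterizing the nonzero null vectors at the origin as $t(e,1)$ with $e\in S^2$ and $t\ne 0$, and verifying that if $t(e,1)$ and $s(f,1)$ have null difference (with $t,s\ne 0$) then $e=f$. Consequently, for any lightlike line $\ell$, the image $\phi(\ell)$ is pairwise lightlike, hence is either a single point or contained in a unique lightlike line $\ell'\subset \cM_4$. This yields a partial map $\ell\mapsto \ell'$ on the set of lightlike lines.

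Next, I would analyze, for each event $r$, the restriction of $\phi$ to the light cone $C(r)$ through $r$. Since $C(r)$ is the union of the $S^2$-family of lightlike lines through $r$, and since $\phi(C(r))\subset C(\phi(r))$, the behavior of $\phi|_{C(r)}$ is encoded by the partial map on lightlike lines. Three regimes appear at $r$: \emph{generic}, where many lightlike lines through $r$ map to distinct lightlike lines through $\phi(r)$; \emph{intermediate}, where most lightlike lines through $r$ are collapsed to points that together trace out a single lightlike line through $\phi(r)$; and \emph{totally degenerate}, where $\phi(C(r))=\{\phi(r)\}$.

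If the generic regime holds at sufficiently many events $r$, the plan is to bootstrap local rigidity on cones to the global affine form $r\mapsto cQr+a$, establishing conclusion (1). This is the main obstacle. Since no regularity assumption is placed on $\phi$, one cannot invoke Theorem~\ref{ftc} directly; instead, one must first show that in the generic regime $\phi$ is injective on a large set and induces a well-defined map on lightlike directions, then use intersections of cones from several base points to transport the Lorentzian structure, and finally upgrade this to an affine map on all of $\cM_4$. I expect this bootstrap is carried out via successive strengthening lemmas developed earlier in the paper, culminating in a version of the fundamental theorem that drops bijectivity and continuity.

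In the remaining cases, if some event $r$ falls in the totally degenerate regime, one shows via interlocking cone-intersection arguments that $\phi$ sends all but at most one event of $\cM_4$ into the single light cone $C(\phi(r))$, producing conclusion (2). If the intermediate regime occurs widely while the totally degenerate one does not, the images of these cones all lie in a common lightlike line $\ell'$, and one identifies the exceptional lightlike line $\ell$ off which every event maps into $\ell'$, producing conclusion (3). Showing that the exceptional set in each case is precisely a single event or a single lightlike line, and not something larger, is the other delicate combinatorial point of the argument.
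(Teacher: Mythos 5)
Your opening observations are correct: pairwise lightlike triples are collinear, so the image of a lightlike line is a point or lies in a unique lightlike line, and a trichotomy of "regimes" on cones is roughly the right shape of the answer. But the proposal defers precisely the two steps that constitute the actual content of the theorem. First, in the "generic" regime you write that one should "bootstrap local rigidity on cones to the global affine form" and that you "expect this bootstrap is carried out via successive strengthening lemmas\dots culminating in a version of the fundamental theorem that drops bijectivity and continuity." That sentence is the theorem; no argument is given for it. Establishing rigidity with no injectivity, surjectivity, or continuity hypothesis is where all the work lies: in the paper it occupies Theorem A (whose proof needs the Bloch-sphere metric, an iterative construction of fixed values at scales $(2/3)^n$, a Lipschitz estimate on rank-one projections, and Wigner's theorem) plus the identity-type theorem to globalize. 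Nothing in your outline indicates how to produce a metric statement on directions from the purely incidence-theoretic data of which lightlike lines go where, nor what "sufficiently many generic events" should mean so that the conclusion is forced. Second, the control of the exceptional sets ("precisely a single event or a single lightlike line, and not something larger") is likewise announced as "the other delicate combinatorial point" without an argument; in the paper this is the long case analysis of Theorem B (Section 6), split according to $d(\varphi(A),\varphi(B))\in\{0,1,2\}$ and further into five subcases, and it is not a routine verification.

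A structural point worth noting: the paper does not argue directly in affine Minkowski space. It passes to $2\times 2$ hermitian matrices and then to the conformal compactification $\oH$ (equivalently $U_2$), where inversions become available as automorphisms. This is what lets one normalize any non-coherent pair to $(0,\oi)$, reduce surfaces to the diagonal surface, and reduce arbitrary timelike triples to $0,(1/2)I,I$ — normalizations used on nearly every page. Your purely affine setup has only the Poincar\'e-plus-dilation symmetries, which are not transitive enough to carry out these reductions, so even the preliminary lemmas would be substantially harder to prove along your route. As it stands the proposal is a plausible table of contents, not a proof.
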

In fact, we will give a result that applies to a mapping defined on a more general subset of $\mathcal{M}_4$.

\subsection{The Minkowski spacetime versus the space of hermitian matrices}\label{vs}
There is another source of motivation from matrix theory to think of lightlikeness preserving mappings.
With $H_2$ \index{$H_2$} we denote the set of all $2 \times 2$ complex hermitian matrices.
To each spacetime event $r = (x,y,z,t) \in \mathcal{M}_4$, we associate a $2\times 2$ hermitian matrix
\begin{equation}\label{zacet}\index{$xi$@$\xi$}
\xi(r)=A= \left[ \begin{matrix}  t-z & x+iy \cr x-iy & t+z \cr\end{matrix} \right]\in H_2. 
\end{equation}
A straightforward computation shows that the spacetime
events $r_1 , r_2$ are lightlike if and only if the associated matrices $A_1$ and $A_2$ satisfy
\[
\det (A_2 - A_1) = 0.
\]
For a $2\times 2$ matrix, its determinant is zero if and only if it is either the zero matrix or a matrix of rank one. Recall that two matrices
$A_1 , A_2$ are said to be \emph{coherent}\index{coherent}, $A_1 \sim A_2$, \index{$\sim$1@$\sim$ (on $H_2$)} if
\[
\mathrm{rank}\, (A_2 - A_1 ) \le 1.
\]
Hence, two spacetime
events $r_1 , r_2$ are lightlike if and only if the associated matrices $A_1$ and $A_2$ are coherent.
There is a vast literature on mappings on the space of matrices that preserve the coherency relation or some similar relation that involves rank. 
See for example Hua's series of work \cite{Hu1,Hu2,Hu3,Hu4,Hu5,Hu6,Hu7,Hu8} and related results \cite{HuS}, \cite{SS}, \cite{Se3}. See also the survey article \cite{LP}. 
This kind of results have many applications, for example in the geometry of algebraic homogeneous spaces, see \cite{Cho}, and in the study of symmetries of certain quantum structures, see \cite{Mol}, \cite{Se1}, \cite{Se2}, and the references therein.

By considering a mapping satisfying the assumption in Theorem \ref{MM}, we get a mapping $\varphi\colon H_2\to H_2$ satisfying
\[
\mathrm{rank}\, (A - B ) \le 1 \Rightarrow \mathrm{rank}\, (\varphi(A) - \varphi(B))\leq 1,
\]
or equivalently, 
\begin{equation}\label{det}
\det (A - B ) = 0 \Rightarrow \det(\varphi(A) - \varphi(B))=0
\end{equation}
for every pair $A, B\in H_2$.
It is easily seen that this is also equivalent to the condition
\[
\mathrm{rank}\, (\varphi(A) - \varphi(B))\leq \mathrm{rank}\, (A - B )
\]
for every pair $A, B\in H_2$.
Therefore, by considering $H_2$ instead of $\mathcal{M}_4$, we get at least three formulations of this condition. 

This paper is basically formulated in terms of $2\times 2$ hermitian matrices instead of the Minkowski spacetime.
We avoid using results on the Minkowski spacetime like Theorem \ref{ftc}.
Therefore, the reader needs no prerequisite knowledge about the geometry of Minkowski spacetime. 
Instead, throughout the paper we freely use facts about $2\times 2$ matrices. We believe that in such a way the results are more accessible for the general audience.

\subsection{Structure of the paper} 
In the next section, we will introduce the space $\overline{H_2}\supset H_2$. 
We extend the coherency relation $\sim$ on $H_2$ to $\oH$, and show that the space $\overline{H_2}$ endowed with the coherency relation can be identified with the $2\times 2$ unitary group $U_2$ via the Cayley transform. 
We also show that $\oH$ can be identified with $\overline{\mathcal{M}_4}$, which appears in the literature as a compactification of $\mathcal{M}_4$.  

Let $\mathcal{A}\subset \oH$. 
A mapping $\varphi\colon \mathcal{A}\to \oH$ is called a \emph{coherency preserver}\index{coherency preserver}, or it preserves coherency (in one direction), if it satisfies 
\[
A\sim B\Rightarrow \varphi(A) \sim \varphi(B)
\]
for every pair $A, B\in \mathcal{A}$.
Instead of lightlikeness preserving mappings on $\mathcal{M}_4$, we will think of the equivalent problem of coherency preserving mappings on $H_2$.
In fact, we will study mappings defined on a subset of $\oH$ rather than on $H_2$.
An advantage of working with $\oH$ (or $\overline{\mathcal{M}_4}$) rather than $H_2$ (or $\mathcal{M}_4$) derives from the fact that there are more symmetries in $\oH$ than in $H_2$. 

Section \ref{prelim} collects basic concepts concerning the coherency relation of $\oH$. 
In Subsection \ref{automorphism} we introduce the concept of an automorphism, that is, a bijective mapping of $\oH$ that preserves coherency in both directions. 
We define two special classes of automorphisms: affine automorphisms and standard automorphisms (Definition \ref{automorphisms}). 
It is known that an automorphism is always standard. For the sake of completeness, we give a proof of this fact in Subsection \ref{apply}.
We remark here that the concept of standard automorphism coincides with that of conformal transformation in the literature. \index{conformal transformation}

We then introduce basic notions like projections (together with its relation to the Bloch representation, Subsection \ref{projectionsx}), lines (Subsection \ref{line}), and surfaces (Subsection \ref{ssss}). We give some properties of such sets. We also study the relative position of three points in $\oH$ (Subsection \ref{three points}).
In the last subsection of that section (Subsection \ref{identity-type theorem}), we collect results about the identity-type theorem in our setting.

The main part of our paper starts in Section \ref{standard}.
We study coherency preserving mappings from a subset of $\oH$ to $\oH$. 
Such a map is said to be standard if it extends to a standard automorphism of $\oH$.
The first main result is Theorem A, which gives a sufficient condition for a coherency preserving mapping to be standard. 
To demonstrate the potential of this theorem, we show that the classical version of the fundamental theorem of chronogeometry can be obtained easily from Theorem A (Subsection \ref{apply}).

In Section \ref{non}, we first introduce two important types of coherency preserving maps (Definitions \ref{degeneratefirst}, \ref{degeneratesecond}) in correspondence with the latter two items in Theorem \ref{MM}. 
The key result is Theorem B.
Applications of Theorem B are given in Subsection \ref{applici}.
We prove that every coherency preserver from $\mathcal{U}$ into $\oH$ is either standard or of one of the two types if $\mathcal{U}$ is either a matrix interval in $H_2$ or the whole space $\oH$. 
This together with the identity-type theorem in Subsection \ref{identity-type theorem} shows that a coherency preserver defined on an open connected subset of $\oH$ is either standard or locally degenerate in a certain sense (Theorem \ref{locally}).
We also demonstrate that Theorem \ref{MM} and the main result in Lester's article \cite{Les} can be obtained easily from our theorems.
In Subsection \ref{concrete}, we give a more concrete description of coherency preservers of the two types. In particular, we study the case where the domain is either $H_2$ or $\oH$.
A rather long proof of Theorem B is given in the last section. 
The proof is split into three cases.

\subsection{Further research directions} 
Before closing the current section, let us mention the possibility of generalizing our work. 
In this paper, we study everything in the setting of $4$-dimensional Minkowski spacetime or its compactification. 
Most of our discussion, after translating the concepts in a suitable manner, is valid for a mapping from (a subset of the compactification of) the Minkowski spacetime of arbitrary dimension $\geq 5$ to itself, even though the picture in terms of hermitian matrices does not make sense.
In the $3$-dimensional case there is at least one point where the argument in the 4-dimensional space cannot be modified easily. 
More precisely, it is Lemma \ref{134}.
We do not know whether the same conclusions as in the 4-dimensional case hold in the 3-dimensional case.

There are further possible directions of research that seem highly challenging. 
For example, what happens if we consider mappings on a space endowed with a more general symmetric bilinear form instead of the Lorentz--Minkowski indefinite inner product?
Is it possible to give a more general result for coherency preserving mappings on the space of hermitian matrices of an arbitrary size?
How about mappings satisfying \eqref{det}?


\section{The compactification $\oH$ of $H_2$}\label{compactification}
Whenever appropriate, matrices will be identified with linear operators.
We use the symbol $I$ \index{$I$} for the unit matrix and $0$ for the zero matrix.
For any complex matrix $A$, we denote by $A^t$ \index{$A^t$} the transpose of $A$, by $A^\ast$ \index{$A^\ast$} the conjugate transpose of $A$, and by $\mathrm{tr}\, A$ \index{$trA$@$\mathrm{tr}\, A$} the trace of $A$. Let $i,j$ be integers,
$1 \le i,j \le 2$. By $E_{ij}$ \index{$E_{ij}$} we denote the $2 \times 2$ matrix whose all entries are zero but the $(i,j)$-entry which is equal to $1$.
Vectors in $\mathbb{C}^2$ will be represented by $2 \times 1$ complex matrices. Every $2 \times 2$ complex matrix of rank one is of the form $xy^\ast$ for some nonzero vectors
$x,y \in \mathbb{C}^2$. 
A \emph{projection} \index{projection} is a matrix $P$ satisfying $P=P^2=P^*$.
If $x = y$ is a vector of norm one, then $xx^\ast$ is a projection of rank one, and every projection of rank one is of this form. Let $\{ e_1 ,  e_2 \}$ be the standard basis of $\mathbb{C}^2$\index{$e_i$}. Then $E_{ij} = e_i e_{j}^\ast$.

With $\le$ \index{$\le$} we denote the usual partial order (Loewner order\index{Loewner order}) on $H_2$. That is, $A\leq B$ means that both eigenvalues of $B-A$ are at least $0$.
For $A,B \in H_2$ we will write $A < B$ \index{$<$} if $B-A$ is a positive invertible matrix, that is, both eigenvalues of $B-A$ are positive.
Note that the order relation $A<B$ can be interpreted as the following situation in special relativity under the identification via the mapping $\xi$ as in \eqref{zacet}: 
The spacetime event corresponding to $A$ is in the past of that corresponding to $B$.

For a $2 \times 2$ matrix $A$,  $\sigma(A)$ \index{$sigmaA$@$\sigma(A)$} denotes its spectrum, that is, the set of eigenvalues of $A$.
Let $\cP$ \index{$P$@$\mathcal{P}$} denote the collection of all rank one projections in $H_2$.
For $P\in \mathcal{P}$, we write $P^\perp:= I-P\in \mathcal{P}$\index{$P^\perp$}. 
Let $\oR=\mathbb{R}\cup\{\infty\}$\index{$\infty$}\index{$R$@$\oR$} denote the one-point compactification of $\mathbb{R}$. Later in this paper, we will frequently use the symbols $(a,\infty] :=(a, \infty)\cup \{\infty\}$\index{$a$1@$(a,\infty]$}, $[a,\infty] :=[a, \infty)\cup \{\infty\}\subset \oR$ \index{$a$2@$[a,\infty]$} for $a\in \mathbb{R}$.

In this section, we introduce a space $\oH\supset H_2$ in a somewhat intuitive manner that involves only $2\times 2$ hermitian matrices. 
Then we endow the space $\oH$ with the coherency relation $\sim$ that extends the usual coherency relation in $H_2$.
We will show that the space $\oH$ endowed with the coherency relation can be identified via the Cayley transform with the $2\times 2$ unitary group $U_2$, and also with the compactification $\overline{\mathcal{M}_4}$ of $\mathcal{M}_4$.

Before we proceed, let us give one easy lemma. 
Let $H_{2}^{++}$\index{$H_2^{++}$} denote the set of all positive invertible matrices in $H_2$, and let $H_{2}^{--}$\index{$H_2^{--}$} denote the set of negative invertible matrices in $H_2$.
Let $H_{2}^{+-}$ \index{$H_2^{+-}$} denote the set of matrices in $H_2$ having one positive eigenvalue and one negative eigenvalue. 

\begin{lemma}\label{apbq}
Let $P,Q\in \cP$ satisfy $P\neq Q$. Let $a,b\in \mathbb{R}$. 
\begin{itemize}
\item If $a,b>0$, then $aP+bQ\in H_2^{++}$. 
\item If $a,b<0$, then $aP+bQ\in H_2^{--}$. 
\item If $ab<0$, then $aP+bQ\in H_2^{+-}$. 
\end{itemize}
More generally, if $a,b$ are nonzero complex numbers, then $aP+bQ$ is an invertible matrix.
\end{lemma}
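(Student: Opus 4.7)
My plan is to reduce everything to a single determinant computation. Write $P = xx^\ast$ and $Q = yy^\ast$ for some unit vectors $x,y \in \mathbb{C}^2$. Since $P\neq Q$, the vectors $x,y$ are linearly independent, so by the strict Cauchy--Schwarz inequality we have $|x^\ast y|^2 < 1$.

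Next I would compute the trace and determinant of $M := aP + bQ$. The trace is immediate: $\mathrm{tr}\, M = a + b$. For the determinant I would use the $2\times 2$ identity $\det M = \tfrac{1}{2}\bigl((\mathrm{tr}\, M)^2 - \mathrm{tr}(M^2)\bigr)$. Expanding $M^2 = a^2 P + ab(PQ + QP) + b^2 Q$ and taking traces, using $\mathrm{tr}\, P = \mathrm{tr}\, Q = 1$ and $\mathrm{tr}(PQ) = \mathrm{tr}(xx^\ast yy^\ast) = |x^\ast y|^2$, gives
\[
\det M = ab\bigl(1 - |x^\ast y|^2\bigr).
\]
Since the factor $1 - |x^\ast y|^2$ is strictly positive, $\det M \neq 0$ whenever $ab \neq 0$. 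This proves the invertibility statement for arbitrary nonzero complex $a,b$ and handles the final assertion.

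For the three bulleted cases, $a,b$ are real, so $M$ is hermitian with two real eigenvalues whose sum equals $a+b$ and whose product equals $ab(1-|x^\ast y|^2)$. Sign analysis via Vieta is then immediate: if $a,b > 0$ both the sum and the product are positive, forcing both eigenvalues positive, so $M \in H_2^{++}$; if $a,b < 0$ the sum is negative and the product positive, so $M \in H_2^{--}$; and if $ab < 0$ the product is negative, so the two real eigenvalues have opposite signs and $M \in H_2^{+-}$.

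There is no real obstacle here; the only small point to be careful about is noting that $P \neq Q$ is precisely what gives the strict inequality $|x^\ast y|^2 < 1$, which is exactly the input that makes the determinant nonvanishing (aside from the scalar factor $ab$).
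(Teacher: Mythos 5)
Your proof is correct and follows essentially the same route as the paper: both arguments boil down to the identity $\det(aP+bQ)=ab\bigl(1-\mathrm{tr}\,(PQ)\bigr)$ with $\mathrm{tr}\,(PQ)<1$ forced by $P\neq Q$, followed by a sign analysis using one extra scalar invariant. The only cosmetic differences are that you compute the determinant coordinate-free via the trace identity and use $\mathrm{tr}\, M=a+b$ for the sign analysis, whereas the paper normalizes $P=E_{11}$ and inspects the $(1,1)$-entry.
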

\begin{proof}
There is no loss of generality in assuming that $P=E_{11}$. 
Then
\[
Q=\left[ \begin{matrix} c & e^{it}\sqrt{c-c^2}  \cr e^{-it}\sqrt{c-c^2} & 1-c \cr \end{matrix} \right]
\]
for some $0\leq c<1$ and some $t\in [0,2\pi)$. 
Now the verification of the statement is easily done by the equation
\[
\det(aE_{11}+bQ) = (a+bc)\cdot b(1-c)-b^2(c-c^2) = ab(1-c)
\]
and by looking at the $(1,1)$-entry of $aE_{11}+bQ$.
\end{proof}
\subsection{Definition of $\oH$}
We consider the collection $\oH$ \index{$H$@$\oH$} of all formal sums $aP+bP^{\perp}$ for $P\in \cP$ and $a, b\in \oR$, with the following rules:
\begin{itemize}
\item  $aP+bP^\perp=bP^\perp+aP$ for any $a, b\in {\oR}$ and any $P\in \cP$,  
\item $a P+a P^{\perp}=a Q+a Q^{\perp}$ for any $a\in {\oR}$ and any $P, Q\in \cP$. 
\end{itemize}
Then it is clear that the set $H_2$ embeds into $\oH$ in a natural manner.
Using this embedding, we will always regard elements of $H_2$ as elements of $\oH$.
Therefore, for $a\in \mathbb{R}$ and $P, Q\in \cP$, we have $a P = a P +0P^\perp$ and $aI =aP+aP^\perp =aQ+aQ^\perp$.
We also use the symbols ${\oi}:=\infty P+\infty P^{\perp}=\infty Q+\infty Q^{\perp}$\index{$\infty$2@$\oi$} and $\infty P := \infty P +0P^\perp$\index{$Pinfty$@$\infty P$}.
We extend the coherency relation $\sim$ on $H_2$ to $\oH$ in the following way\index{$\sim$2@$\sim$ (on $\oH$)}: 
\begin{itemize}
\item For $A, B\in H_2$, we define $\sim$ as before, i.e., $A\sim B \iff \mathrm{rank}\, (B - A ) \le 1$. 
\item If $a\in \bR$, $A\in H_2$, and $P\in\cP$, then $\infty P+a P^{\perp}\sim A$ if and only if $P^{\perp} AP^{\perp}=aP^{\perp}$ (note that the condition $P^{\perp} AP^{\perp}=aP^{\perp}$ is equivalent to $\mathrm{tr}\, (P^\perp A) = a$).
\item If $a, b\in \bR$ and $P, Q\in \cP$, then $\infty P+a P^{\perp}\sim \infty Q + bQ^{\perp}$ if and only if $P=Q$. 
\item ${\oi} \sim \infty P+aP^{\perp}$ for every $a\in {\oR}$ and every $P\in \cP$, and ${\oi} \not\sim A$ for every $A\in H_2$.
\end{itemize}
For $A, B\in \oH$, we define $d(A, A)=0$, $d(A, B)=1$ if $A\neq B\sim A$, and $d(A, B)=2$ if $A\not\sim B$\index{$d$}. It is easily seen that $d$ satisfies the axioms of distance. 

For $A\in \oH$, the collection $\cC_A$ \index{$C(A)$@$\cC_A$} of all elements $B\in \oH$ satisfying $A\sim B$ is called the \emph{cone} \index{cone} with vertex $A$. The following are easy to verify.
\begin{itemize}
\item If $A\in H_2$, then 
\[
\cC_A=\{B\in H_2\,:\, A\sim B\} \cup \{\infty P+\mathrm{tr}\, (P^\perp A) P^\perp\,:\, P\in \cP\}.
\]
\item If $P\in \cP$ and $a\in \bR$, then 
\[
\cC_{\infty P+ aP^{\perp}}=\{\infty P+ bP^{\perp}\,:\, b\in {\oR}\} \cup \{B\in H_2\,:\, P^{\perp}BP^{\perp}=aP^\perp\}.
\] 
\item $\cC_{\oi} =\oH\setminus H_2$.
\end{itemize}

\subsection{The relation between $\oH$ and $U_2$}\label{oHU2}
In order to demonstrate that the above definition of $\sim$ is natural, we will identify $\oH$ with $U_2$\index{$U_2$}, the group of all $2 \times 2$ unitary matrices. The identification is given by the Cayley transform \index{Cayley transform}
\begin{equation}\label{cayley}
f(aP+bP^{\perp}) = f(a)P+f(b)P^{\perp},
\end{equation}
 where $f\colon {\oR}\to\mathbb{T}$ is determined by 
\begin{equation}\label{cayleyf}
 f(t)=\frac{t-i}{t+i}, \ \ \ t\in \mathbb{R},\ \ \  \text{and}\ \ \ f(\infty)=1.
\end{equation}
Here, $\mathbb{T}$ \index{$T$@$\mathbb{T}$} denotes the group of all complex numbers of modulus one.
See also Uhlmann's article \cite{Uh}.

The coherency relation on $U_2$ is defined in the same way as on the set of hermitian matrices, that is, unitary matrices $U$ and $V$ are coherent if and only if $\mathrm{rank}\, (U-V) \le 1$. 
More generally, for an arbitrary pair of $2\times 2$ complex matrices $A, B$, we say $A$ and $B$ are coherent and write $A\sim B$ \index{$\sim$3@$\sim$ (for a pair of $2\times 2$ matrices)} when $\mathrm{rank}\, (A-B) \le 1$.

\begin{lemma}\label{tr1}
Let $A$ be a $2\times 2$ complex matrix of rank at most one.
Then $A\sim I$ holds if and only if $\mathrm{tr}\, A=1$. 
\end{lemma}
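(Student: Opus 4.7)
The plan is to observe that for a $2\times 2$ matrix, rank at most one is equivalent to vanishing determinant, so the statement $A\sim I$ reduces to $\det(A-I)=0$. Hence I would prove that, under the hypothesis $\mathrm{rank}\, A\leq 1$, the equation $\det(A-I)=0$ holds if and only if $\mathrm{tr}\, A = 1$.

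To do this, I would compute the characteristic polynomial of $A$. Since $A$ is $2\times 2$ with $\det A=0$ (because $\mathrm{rank}\,A\leq 1$), its characteristic polynomial is $\lambda^2 - (\mathrm{tr}\, A)\lambda$, so the eigenvalues of $A$ are $0$ and $\mathrm{tr}\, A$. Consequently $\det(A-I)$ equals the product of the eigenvalues of $A-I$, which is $(-1)(\mathrm{tr}\, A - 1) = 1-\mathrm{tr}\, A$. Thus $\det(A-I)=0$ precisely when $\mathrm{tr}\, A=1$, and combining with the opening observation, $A\sim I$ iff $\mathrm{tr}\, A = 1$.

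Alternatively, one can argue more concretely: if $\mathrm{rank}\,A\leq 1$, write $A=xy^\ast$ for some $x,y\in \mathbb{C}^2$ (allowing $x=0$). Then the matrix determinant lemma gives $\det(I-xy^\ast)=1-y^\ast x = 1 - \mathrm{tr}\, A$, so $\det(A-I)=\det(I-A)=1-\mathrm{tr}\, A$, and the same conclusion follows.

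There is no real obstacle here; the only subtlety worth flagging is the implicit use of the equivalence between $\mathrm{rank}\,(A-I)\leq 1$ and $\det(A-I)=0$ for $2\times 2$ matrices, together with the fact that $\det A = 0$ whenever $\mathrm{rank}\,A\leq 1$. Both are standard, so the proof is essentially a one-line determinant calculation.
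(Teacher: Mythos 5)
Your proposal is correct and is essentially the same argument as the paper's: both reduce $A\sim I$ to $\det(A-I)=0$ and verify $\det(A-I)=1-\mathrm{tr}\,A$ using $\det A=0$. The paper gets this by an orthonormal change of basis putting $A$ in triangular form with diagonal $(\mathrm{tr}\,A,\,0)$, whereas you read it off the characteristic polynomial (or the matrix determinant lemma); the difference is purely cosmetic and there are no gaps.
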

\begin{proof}
By choosing a suitable basis of $\mathbb{C}^2$, one may assume that $A$ is of the form
\[
\left[ \begin{matrix} \mathrm{tr}\,A  & 0  \cr * & 0 \cr \end{matrix} \right]. 
\]
It is now easy to get the desired conclusion.
\end{proof}
It is clear that the map $A \mapsto f(A)$, $A \in \oH$, is a bijection of $\oH$ onto $U_2$. 
Moreover, the image $f(H_2)$ equals 
the set of all unitary matrices $U \in U_2$ with the property that $1 \not\in \sigma (U)$.

We will show that for any pair $A,B \in \oH$ we have $A \sim B$ if and only if $f(A) \sim f(B)$. We first note that
\begin{equation}\label{mmmm}
{t-i \over t+i} = 1 - {2i \over t+i}.
\end{equation}
Therefore, for any  $A,B \in H_2$ we have
\[
\begin{split}
f(A) - f(B) &= \left( I - 2i(A+iI)^{-1} \right) -  \left( I-2i(B+iI)^{-1} \right)\\
&=-2i(A+iI)^{-1} + 2i(B+iI)^{-1} \\
&= 2i(A+iI)^{-1} ((A+iI)-(B+iI))(B+iI)^{-1}\\
&= 2i(A+iI)^{-1} (A-B) (B+iI)^{-1}.
\end{split}
\]
Hence for  $A,B \in H_2$, we have $A \sim B$ if and only if $f(A) \sim f(B)$. 

Let us consider the case $A= \infty P+a P^{\perp}$ for some $a \in \bR$ and $P \in \cP$. 
Then $U:=f(A) = P+f(a)P^\perp$. Note that $f(a)\in \mathbb{T}\setminus\{1\}$.
\begin{claim}\label{uv}
Let $B\in \oH$ and $V:=f(B)\in U_2$. 
Then $V$ is coherent to $U$ if and only if one of the following holds: 
\begin{itemize}
\item $B\notin H_2$ and there is $\lambda\in \mathbb{T}$ such that $V=P+\lambda P^\perp$. 
\item $B\in H_2$ and $P^\perp BP^\perp = aP^\perp$. 
\end{itemize}
\end{claim}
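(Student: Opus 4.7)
The plan is to recast $V\sim U$ as the scalar equation $\det(V-U)=0$ and then split into cases according to the form of $B\in\oH$. Since the Cayley transform, the coherency relation, and the conditions in the claim are all equivariant under conjugation by $2\times 2$ unitaries (because $f$ is defined via the spectral decomposition of its argument, and $W(X-Y)W^*$ has the same rank as $X-Y$), I may assume $P=E_{11}$, so $U=\operatorname{diag}(1,f(a))$ with $f(a)\in\mathbb{T}\setminus\{1\}$.

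For $B\in H_2$, I would use the identity $f(B)=I-2i(B+iI)^{-1}$ from the derivation of \eqref{mmmm}, together with the rewriting $U=I-\frac{2i}{a+i}P^\perp$, to obtain
\[
V-U=\frac{2i}{a+i}\,(B+iI)^{-1}\bigl[(B+iI)P^\perp-(a+i)I\bigr].
\]
In the chosen coordinates the bracketed factor is upper triangular with diagonal entries $-(a+i)$ and $b_{22}-a$, where $b_{22}$ is the $(2,2)$-entry of $B$. Hence $\det(V-U)$ is a nonzero scalar multiple of $b_{22}-a$, so $V\sim U$ holds if and only if $b_{22}=a$, which is exactly $P^\perp BP^\perp=aP^\perp$.

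For $B=\infty Q+bQ^\perp\in\oH\setminus H_2$ with $b\in\mathbb{R}$, one has $V=Q+f(b)Q^\perp$. Writing $Q$ in the parameterization used in the proof of Lemma \ref{apbq}, with $(1,1)$-entry $c\in[0,1]$, a direct $2\times 2$ determinant expansion collapses the off-diagonal contribution and yields
\[
\det(V-U)=-(1-c)(1-f(a))(1-f(b)).
\]
Since $f(a),f(b)\in\mathbb{T}\setminus\{1\}$, this vanishes precisely when $c=1$, i.e.\ $Q=P$, in which case $V=P+f(b)P^\perp$ is of the form $P+\lambda P^\perp$ with $\lambda=f(b)\in\mathbb{T}$. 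The remaining case $B=\oi$ gives $V=I=P+1\cdot P^\perp$ and $V-U=(1-f(a))P^\perp$ has rank one, matching $\lambda=1$. Conversely, any $V=P+\lambda P^\perp$ with $\lambda\in\mathbb{T}$ produces $V-U=(\lambda-f(a))P^\perp$ of rank at most one.

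The proof reduces to three explicit determinant computations; the only mild subtlety is to handle the three forms of $B$ uniformly within a single framework, and to use crucially that $f$ maps $\mathbb{R}$ into $\mathbb{T}\setminus\{1\}$, which prevents spurious coincidences in the $\infty Q+bQ^\perp$ case. I anticipate no substantial obstacle beyond careful bookkeeping.
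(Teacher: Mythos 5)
Your proof is correct and follows essentially the same route as the paper: the same case split on $B\in H_2$ versus $B\notin H_2$ and the same use of the resolvent identity $f(B)=I-2i(B+iI)^{-1}$ from \eqref{mmmm}. The only difference is cosmetic — where the paper reduces $U\sim V$ to $U-I\sim V-I$ and then invokes Lemma \ref{tr1} (trace criterion) and Lemma \ref{apbq} (invertibility of $aP+bQ$), you normalize $P=E_{11}$ and evaluate the $2\times 2$ determinants directly, and I have checked that both determinant formulas are correct.
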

\begin{proof}
Note that  $U\sim V$ is equivalent to $U-I\sim V-I$. Note also that $U-I= (f(a)-1)P^\perp\neq 0$. 
Assume that $B\notin H_2$, or equivalently, $1\in \sigma(V)$. 
Then $V-I$ is a scalar multiple of a rank one projection.   
Observe that $V-I$ is a scalar multiple of $P^\perp$ if and only if there is $\lambda\in \mathbb{T}$ such that $V=P+\lambda P^\perp$. 
If these conditions hold, we clearly have $U\sim V$. 
If $V-I=bQ$ for some $P^\perp\neq Q\in \mathcal{P}$ and $0\neq b\in \mathbb{C}$, then $V-U=bQ- (f(a)-1)P^\perp$ is invertible by Lemma \ref{apbq}. 
It follows that $U$ is not coherent to $V$ in this case. 

Assume that $B\in H_2$, or equivalently, $1\notin \sigma(V)$. 
We have $f(a)-1=-2i(a+i)^{-1}$ and $V-I= -2i(B+iI)^{-1}$ by \eqref{mmmm}.
Thus the condition $U-I\sim V-I$ is equivalent to $(a+i)^{-1}P^\perp\sim (B+iI)^{-1}$, which is in turn equivalent to $(a+i)^{-1}(B+iI)P^\perp \sim I$.
By Lemma \ref{tr1}, this is further equivalent to $1=\mathrm{tr}\, ((a+i)^{-1}(B+iI)P^\perp) = (a+i)^{-1}(\mathrm{tr}\,(BP^\perp) +i)$. 
This is equivalent to $\mathrm{tr}\,(BP^\perp) =a$, which means $P^\perp BP^\perp =aP^\perp$.
\end{proof}

This claim clearly implies that for $B\in \oH$ we have $A\sim B$ if and only if $f(A)\sim f(B)$. 
In the case where $A = {\oi}$ and $B$ is any element of $\oH$, we have $I = f(A) \sim f(B)$ if and only if $1$ is an eigenvalue of $f(B)$, which is equivalent to $B= \infty P+aP^{\perp}$ for some $P\in \cP$ and some $a \in {\oR}$.
We have shown that the map $f\colon \oH \to U_2$ is an isomorphism with respect to the coherency relation.

\subsection{The relation between $\oH$ and $\overline{\mathcal{M}_4}$}\label{ohm4}
The compactification $\overline{\mathcal{M}_4}$ \index{$M_4$2@$\overline{\mathcal{M}_4}$} of $\mathcal{M}_4$ is a concept studied by both mathematicians and physicists, notably by R. Penrose. 
Essentially the same space can be introduced in several ways, and it is called by different names such as the conformal compactification\index{conformal compactification}, the conformal Minkowski space\index{conformal Minkowski space}, the compactified Minkowski space\index{compactified Minkowski space}, etc.
The space $\overline{\mathcal{M}_4}$ is visualized with the so-called Penrose's diagram\index{Penrose's diagram}. 
Those readers who are familiar with this concept are encouraged to think of a visual image to see what is going on in each of the arguments in the subsequent sections. 
See for example  \cite[Section 5.1]{HaE}, \cite{Pe} for more information about the space $\overline{\mathcal{M}_4}$.  

In this subsection, we first introduce the space $\overline{\mathcal{M}_4}$ in accordance with Lester's article \cite{Les0}.
The space $\overline{\mathcal{M}_4}$ is naturally endowed with a binary relation that extends the lightlikeness relation in $\mathcal{M}_4$.
We give a complete proof of the fact that $\overline{\mathcal{M}_4}$ endowed with this binary relation can be identified with $\oH$ endowed with our coherency relation.
The results in the current subsection will not be used in the rest of this paper, so those readers who are new to the space $\overline{\mathcal{M}_4}$ may skip to Section \ref{prelim}.

Let us denote by $(\cdot \, , \cdot)$ \index{$(\cdot \, , \cdot)$} the usual inner product on $\mathbb{R}^6$. We further denote 
\[
\index{$M$2@$\overline{M}$}
\overline{M} = \left[ \begin{matrix}   -1 & 0 & 0 & 0 & 0 & 0 \cr
0 & -1 & 0 & 0 & 0 & 0 \cr
0 & 0 & -1 & 0 & 0 & 0 \cr
0 & 0 & 0 & 1 & 0 & 0 \cr  
 0 & 0 & 0 & 0 & 0 & -{1\over 2}\cr
0 & 0 & 0 & 0 & -{1 \over 2} & 0 \cr\end{matrix} \right] .
\]
The symbol $\mathbb{P} ( \mathbb{R}^6 )$ \index{$P(R^6)$@$\mathbb{P} ( \mathbb{R}^6 )$} stands for the projective space over $\mathbb{R}^6$, that is,
\[
\mathbb{P} ( \mathbb{R}^6 ) = \{ [X] \, : \, X \in  \mathbb{R}^6 \setminus \{ 0 \} \}.
\]
Here, $[X]$ \index{$X$@$[X]$} denotes the one-dimensional subspace spanned by the nonzero vector $X$. Next we introduce the symmetric bilinear form $\langle \cdot , \cdot \rangle \colon  \mathbb{R}^6 \times  \mathbb{R}^6 \to  \mathbb{R}$ defined by
\[
\langle X , Y \rangle = (\overline{M} X,Y), \ \ \ X,Y \in  \mathbb{R}^6 ,
\]
and the corresponding quadratic form $q\colon \mathbb{R}^6 \to  \mathbb{R}$ \index{$q$} defined by
\[
q(X) = \langle X,X \rangle = ( \overline{M} X,X), \ \ \ X \in  \mathbb{R}^6 .
\]
Note that the symbol $\langle \cdot , \cdot \rangle$ has been used before to denote the Lorentz--Minkowski indefinite inner product on $\mathcal{M}_4$. When using this symbol it will be always clear from the context which of the two bilinear forms is on our mind. 
Clearly,
\begin{equation}\label{tuniz}\index{$\langle\cdot, \cdot\rangle$}
q(x,y,z,t,h,n) = -x^2 -y^2-z^2 + t^2 -hn , \ \ \ (x,y,z,t,h,n) \in \mathbb{R}^6 .
\end{equation}
For every nonzero $X \in \mathbb{R}^6$ and every nonzero real number $s$ we have  $q(X)=0$ if and only if $q(sX) =0$. We define
\[
\overline{ \mathcal{M}_4 } = \{ [X] \, : \, [X] \in \mathbb{P} ( \mathbb{R}^6 ) \, \ \text{and} \ \, q(X) = 0 \}.
\]
We say that $[X], [Y] \in \overline{ \mathcal{M}_4 }$ are coherent, $[X] \sim [Y]$\index{$\sim$4@$\sim$ (on $\overline{\cM_4}$)}, if $\langle X , Y \rangle = 0$. The set  $\overline{ \mathcal{M}_4 }$ equipped with the coherency relation is called the conformal Minkowski space.

In the next step we will classify points in $\overline{ \mathcal{M}_4 }$ and at the same time we will construct a bijective map $\xi$ \index{$xi$@$\xi$} mapping $\overline{ \mathcal{M}_4 }$ onto $\oH$ which preserves coherency in both directions. Let $[X] = [ (x,y,z,t,h,n) ]$ be a point in $\overline{ \mathcal{M}_4 }$. We will distinguish two possibilities.

We start with the possibility that $h \not = 0$. All points in $\overline{ \mathcal{M}_4 }$ with this property will be called finite points\index{finite point}.
If  $[X] = [ (x,y,z,t,h,n) ]$  is a finite point
then we can assume with no loss of generality that $h=1$. It follows from $q(X) = 0$ that
\[
X =  (x,y,z,t,1,  -x^2 -y^2-z^2 + t^2).
\]
If the spacetime event $(x,y,z,t)\in \mathcal{M}_4$ is denoted by $r$, then we can write shorter
\[
X = (r,1, \langle r,r \rangle).
\]
For such a point $[X] \in \overline{ \mathcal{M}_4 }$ we define
\[
\xi ([X]) =  \left[ \begin{matrix}  t-z & x+iy \cr x-iy & t+z \cr\end{matrix} \right].
\]
Clearly, $\xi$ is a bijection of the set of all finite points in  $\overline{ \mathcal{M}_4 }$ onto $H_2$. 

It is easy to see that the coherency relation on the set of finite points in the conformal Minkowski space corresponds to the coherency relation on $H_2$, that is, to the lightlikeness in Minkowski space. 
Indeed, let $[X], [Y] \in \overline{ \mathcal{M}_4 }$ be finite points,
\[
X =  (r_1,1, \langle r_1,r_1 \rangle)
\]
and
\[
Y =  (r_2,1, \langle r_2,r_2 \rangle).
\]
Then
we have 
\[
\langle X, Y \rangle  =  \langle r_1 , r_2 \rangle -{1 \over 2} ( \langle r_1,r_1 \rangle +  \langle r_2,r_2 \rangle) = -{1 \over 2} \langle r_1-r_2,r_1-r_2 \rangle.
\]
Therefore, the map $r\mapsto  [(r,1, \langle r,r \rangle)]$ is an embedding of the space $\mathcal{M}_4$ endowed with the lightlikeness relation into the space $\overline{\mathcal{M}_4}$ endowed with the relation $\sim$.

When $h=0$ we further distinguish two possibilities. The first one is that $x=y=z=t=0$. There is only one such point in the conformal Minkowski space that will be denoted by $[e]$, where $e = (0,0,0,0,0,1)$. \index{$e$}
We define $\xi ([e]) = {\oi}$. It is clear that for $[X] \in \overline{ \mathcal{M}_4 }$ we have $\langle X, e \rangle \not= 0$ if and only if $[X]$ is a finite point in $\overline{ \mathcal{M}_4 }$.

It remains to consider points $[X] = [(x,y,z,t,0,n)]$ with the property that at least one of the coordinates $x,y,z,t$ is nonzero. Then because of $q(X) = 0$ and \eqref{tuniz} we have $t\not=0$ and without loss of generality we can assume that $t=1/2$. So each such point $[X]$ can be represented by
\begin{equation}\label{musi}
X =\left(x,y,z, {1 \over 2}, 0 , a \right) ,
\end{equation}
where $x^2 + y^2 + z^2 = 1/4$, 
and we define
\[
\xi ([X]) = \xi \left( \left[ \left(x,y,z, {1 \over 2}, 0 , a \right) \right] \right) = \infty P + aP^\perp,
\]
where 
\[
P = \left[ \begin{matrix}  {1 \over 2} -z & x+iy \cr x-iy & {1 \over 2} +z \cr\end{matrix} \right] \ \ \ \text{and} \  \ \ 
P^\perp = \left[ \begin{matrix}  {1 \over 2}   + z & -x-iy \cr -x+ iy & {1 \over 2} - z \cr\end{matrix} \right]
\]
is an orthogonal pair of projections of rank one.

We show that for every $X$ of the form \eqref{musi} and for every $[Y] \in \overline{ \mathcal{M}_4 }\setminus [e]$ we have $[X] \sim [Y] \iff \xi ([X]) \sim \xi ([Y])$.
If both $X$ and $Y$ are of the form \eqref{musi},
\[
X =\left(x_1,y_1,z_1, {1 \over 2}, 0 , a_1 \right) = (r_1, 0, a_1) 
\]
and 
\[
Y =\left(x_2,y_2,z_2, {1 \over 2}, 0 , a_2 \right) = (r_2, 0, a_2)
\]
with $r_j = \left( x_j, y_j , z_j , 1/2 \right)$, $j=1,2$,
then $\langle X , Y \rangle = 0 \iff \langle r_1 , r_2 \rangle =0$ which by a straightforward application of the Cauchy--Schwarz inequality for vectors in $\mathbb{R}^3$ is equivalent to $r_1 = r_2$, that is, $\xi ([X]) \sim \xi ([Y])$.

If
\[
X =\left(x_1,y_1,z_1, {1 \over 2}, 0 , a_1 \right) = (r_1, 0, a_1)
\]
and $Y= (r_2, 1, n)$ with $r_2 = (x_2 , y_2, z_2 , t_2)$ and $n= \langle r_2 , r_2 \rangle$, then
$\langle X, Y \rangle = \langle r_1 , r_2 \rangle - (1/2)a_1$.
On the other hand, $\xi ([X]) \sim \xi ([Y])$ if and only if $\mathrm{tr}\, (P^\perp A) = a_1$, where
\[
P^\perp = \left[ \begin{matrix}  {1 \over 2}  + z_1 & -x_1 -iy_1 \cr -x_1+ iy_1 & {1 \over 2} - z_1 \cr\end{matrix} \right] \ \ \ \text{and} \ \ \ 
A=  \left[ \begin{matrix}  t_2 -z_2 & x_2 +iy_2 \cr x_2 -iy_2  & t_2 +z_2 \cr\end{matrix} \right].
\]
A straightforward calculation shows that $[X] \sim [Y] \iff \xi ([X]) \sim \xi ([Y])$ in  this case as well.

We have shown that $\oH$ can be identified either with the conformal Minkowski space, or with $U_2$. 
The subset $H_2 \subset \oH$ corresponds to the set of all finite points in $\overline{ \mathcal{M}_4 }$, which can be further identified with the classical Minkowski space. Alternatively, the
 subset $H_2 \subset \oH$ corresponds to the set of all unitary matrices $U \in U_2$ with the property that $1 \not\in \sigma (U)$.


\section{Basic concepts}\label{prelim}
In the current section, we introduce basic concepts on $\oH$ and give their properties. 

\subsection{Automorphisms}\label{automorphism}
An \emph{automorphism} \index{automorphism} of $\oH$ is a bijective map $\varphi \colon \oH \to \oH$ that preserves coherency in both directions, i.e., 
\begin{equation}\label{ABAB}
A \sim B \iff \varphi (A) \sim \varphi (B)
\end{equation}
for every pair $A,B \in \oH$.
Let $c \in \{-1 , 1\}$, $S$ be an invertible $2 \times 2$ complex matrix, and $T \in H_2$. 
It is easily seen that the mapping $\varphi\colon H_2\to H_2$ given by 
\begin{equation}\label{prvap}
\varphi (A) = c SAS^\ast + T
\end{equation}
for every $A \in H_2$; or
\begin{equation}\label{drugad}
\varphi (A) = c SA^t S^\ast + T
\end{equation}
for every $A \in H_2$, is a bijection satisfying \eqref{ABAB} for every pair $A,B \in H_2$.

\begin{remark}\label{m4h2}
It is known that a bijection $\varphi\colon H_2\to H_2$ satisfying \eqref{ABAB} for every pair $A,B \in H_2$ needs to be of the form \eqref{prvap} or \eqref{drugad}. 
For the sake of completeness, we will give a proof of this fact in Subsection \ref{apply}, which leads to our proof of the fundamental theorem of chronogeometry.
See also \cite[Subsection 5.3]{Hua}.
\end{remark}

In what follows, we show that a map $\varphi\colon H_2\to H_2$ of the form \eqref{prvap} or \eqref{drugad} extends to an automorphism of $\oH$.
We first introduce the rule $-\infty = \infty$ and define
\[
-( a P + bP^{\perp}) =  (-a) P +(-b)P^{\perp}, \ \ \ P\in \mathcal{P}, \ \, a,b \in \oR.\index{$A$0@$-A$}
\]
Note that this is well-defined, and that if $A\in H_2$ then $-A$ defined in this manner coincides with $-A$ as a matrix.
It is easy to see that the map $X \mapsto -X$, $X  \in \oH$, is a bijection of $\oH$ onto itself and for every pair $X,Y \in \oH$ we have
\[
X \sim Y \iff - X \sim -Y.
\]
Thus, the map $X  \mapsto -X$, $X \in \oH$, is an automorphism of $\oH$. Clearly, this automorphism is the inverse of itself, that is, $-(-X) = X$, $X \in \oH$.

Secondly, let $B \in H_2$. We define
\[
{\oi} +B = {\oi}
\]
and
\[
( \infty P + aP^{\perp}) + B =  \infty P + ( a + \mathrm{tr}\, (P^\perp B) )P^{\perp}  , \ \ \ P\in \mathcal{P}, \ \, a \in \mathbb{R}.
\]
Then $A+B$ \index{$A+B$} is defined for every $A\in \oH\setminus H_2$.
For $A\in H_2$, $A+B$ is defined as a matrix, as usual.
\begin{claim}
The map $X \mapsto X + B$, $X \in \oH$, is an automorphism of $\oH$. 
\end{claim}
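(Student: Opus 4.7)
The plan is to verify the claim by direct case analysis on the position of the argument in $\oH$. First I would establish bijectivity by producing an explicit inverse, then I would verify preservation of coherency in both directions by running through the four possibilities listed in the definition of $\sim$ on $\oH$.

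For bijectivity, the natural candidate is the map $\tau_{-B}\colon X\mapsto X+(-B)$ constructed by the same recipe from $-B\in H_2$. To verify $\tau_{-B}\circ\tau_B=\mathrm{id}$, I would split into the three kinds of elements: on $A\in H_2$ it is ordinary matrix addition, so it is immediate; on $A=\oi$ both maps fix the point; and on $A=\infty P+aP^{\perp}$ the finite coordinate gets shifted by $\mathrm{tr}(P^\perp B)+\mathrm{tr}(P^\perp(-B))=0$, returning us to $A$. The other composition is symmetric.

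For coherency preservation, I would work through the four cases defining $\sim$. (i) If $A,C\in H_2$, the translation is standard matrix addition and $(A+B)-(C+B)=A-C$, so the rank is unchanged. (ii) If $A=\infty P+aP^{\perp}$ and $C\in H_2$, then $\tau_B(A)=\infty P+(a+\mathrm{tr}(P^\perp B))P^{\perp}$ and $\tau_B(C)=C+B\in H_2$; the definition of $\sim$ between these says $\mathrm{tr}(P^\perp(C+B))=a+\mathrm{tr}(P^\perp B)$, which by linearity of the trace is equivalent to $\mathrm{tr}(P^\perp C)=a$, i.e.\ $A\sim C$. (iii) If $A=\infty P+aP^{\perp}$ and $C=\infty Q+bQ^{\perp}$, their images under $\tau_B$ are still of the shape $\infty P+a'P^{\perp}$ and $\infty Q+b'Q^{\perp}$, so coherency of images reduces to $P=Q$, which is exactly $A\sim C$. (iv) For $A=\oi$, note that $\tau_B$ maps $H_2$ to $H_2$, maps $\oi$ to $\oi$, and maps elements of the form $\infty P+aP^{\perp}$ to elements of the same form, so it restricts to a bijection of $\oH\setminus H_2$ onto itself; since $\oi\sim X$ is equivalent to $X\in\oH\setminus H_2$, coherency with $\oi$ is preserved in both directions.

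There is no real obstacle here: every step reduces to checking that the formal sum rules defining $+B$ on $\oH$ are compatible with ordinary matrix operations, and the only algebraic input is the additivity $\mathrm{tr}(P^\perp(C+B))=\mathrm{tr}(P^\perp C)+\mathrm{tr}(P^\perp B)$ used in case (ii). The proof is therefore essentially a careful bookkeeping exercise.
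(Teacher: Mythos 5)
Your proposal is correct and follows essentially the same route as the paper: the paper verifies only the mixed case $X\in H_2$, $Y=\infty P+aP^{\perp}$ (your case (ii), using the same trace-linearity argument) and declares the remaining cases easier or similar, while deferring bijectivity to the explicit inverse $X\mapsto X+(-B)$ exactly as you do. Your write-up is simply a more complete version of the same bookkeeping.
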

\begin{proof}
We need to show that the map preserves coherency in both directions. We will verify only the special case when $X \in H_2$ and $Y =  \infty P + aP^{\perp}$ for some real $a$ and some $P \in \mathcal{P}$. (The other cases are easier or similar.) Then $X \sim Y$ if and only if $P ^\perp X P ^\perp = a P ^\perp$ which is equivalent to $\mathrm{tr}\, (P ^\perp X) = a$.

On the other hand, $ X + B \sim Y+ B = \infty P + (a + \mathrm{tr}\,( P^{\perp} B)) P^\perp$ if and only if $P ^\perp (X+B) P ^\perp = (a + \mathrm{tr}\, (P^{\perp} B)) P^\perp$, which is true if and only if $\mathrm{tr}\, ( P ^\perp (X+B) P ^\perp ) = a + \mathrm{tr}\, (P^{\perp} B)$. The last equality is easily seen to be equivalent to $\mathrm{tr}\, (P ^\perp X) = a$.

For the verification of bijectivity see the next paragraph.
\end{proof}

A mapping of the form $X\mapsto X+B$ will be called a \emph{translation}. \index{translation}
It is trivial to see that the map $X \mapsto X + (- B)$, $X \in \oH$, is the inverse of this mapping. 
We will write shortly $A + (-B) = A-B$\index{$A-B$}, $A \in \oH$. Clearly, we have $A-B = - (-A + B)$ for every $A \in \oH$ and every $B \in H_2$.
We will occasionally write $B+A$ instead of $A+ B$ and $B-A$ instead of $-A+ B$ for $A \in \oH$ and $B \in H_2$. For a subset $\mathcal{A}\subset \oH$ and $B\in H_2$, we will sometimes use the symbols like $-\mathcal{A}=\{-A\,:\, A\in \mathcal{A}\}$ and $\mathcal{A}+B=\{A+B\,:\, A\in \mathcal{A}\}$ etc.
 
Thirdly, we define
\[
( a P + bP^{\perp})^t =  a P^t +  b (P^t)^{\perp}  , \ \ \ P\in \mathcal{P}, \ \,a,b\in \oR. \index{$A^t$}
\]
Note that this is well-defined, and that if $A\in H_2$ then $A^t$ defined in this manner coincides with the usual transpose of the matrix.
It is easy to see that $X \mapsto X^t$, $X \in \oH$, is an automorphism of $\oH$ with $(X^t)^t = X$ for every $X \in \oH$.

Fourthly, let $S$ be any invertible $2 \times 2$ complex matrix. If $P$ is any projection of rank one, then $SPS^\ast$ is a positive rank one matrix.
Therefore,
\begin{equation}\label{rankone}
Q= {1 \over \mathrm{tr}\, (SPS^\ast)  } \, SPS^\ast 
\end{equation}
is a projection of rank one. We define
\[
S\,  {\oi}\,  S^\ast = {\oi}
\]
and
\[
S ( \infty P + aP^{\perp})S^\ast =  \infty Q +   a \mathrm{tr}\, (Q^\perp SS^\ast) \, Q^{\perp}  , \ \ \ P\in \mathcal{P}, \ \, a \in \mathbb{R},
\]
where $Q$ is defined by \eqref{rankone}.  
Then $SAS^*$ \index{$SAS^*$} is defined for every $A\in \oH\setminus H_2$.
For $A\in H_2$, $SAS^*$ is defined as a matrix, as usual.
\begin{claim}\label{sas*}
The map $X \mapsto SXS^\ast $, $X \in \oH$, is an automorphism of $\oH$. 
\end{claim}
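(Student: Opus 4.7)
The plan is to verify the candidate inverse and then check coherency preservation by a case analysis according to where $X, Y$ sit in the stratification $H_2 \sqcup (\oH \setminus (H_2 \cup \{\oi\})) \sqcup \{\oi\}$. For bijectivity, the candidate inverse is $Y \mapsto S^{-1} Y (S^{-1})^\ast$: on $H_2$ this is clear, $\oi$ is fixed, and on the remaining stratum one verifies from invertibility of $S$ that the composition is the identity, with the scalar parameter tracked by evaluating the formula of the claim with $S$ replaced by $S^{-1}$ at the point $\infty Q + a\,\mathrm{tr}\,(Q^\perp SS^\ast)\,Q^\perp$.

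For coherency preservation, when $X, Y \in H_2$ the matrix $S(X - Y)S^\ast$ has the same rank as $X - Y$, giving the equivalence. The cases involving $\oi$ are immediate, since $\oi$ is fixed and the map sends $\oH \setminus H_2$ into itself. When both $X = \infty P + aP^\perp$ and $Y = \infty R + bR^\perp$, coherency on either side reduces to $P = R$ versus $Q_P = Q_R$, and the injectivity of the map $P \mapsto SPS^\ast/\mathrm{tr}\,(SPS^\ast)$ on $\cP$ (which follows from invertibility of $S$) finishes the case.

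The crux is the mixed case $X = \infty P + aP^\perp$ and $Y \in H_2$. Here $X \sim Y$ reads $\mathrm{tr}\,(P^\perp Y) = a$, whereas $SXS^\ast \sim SYS^\ast$ reads $\mathrm{tr}\,(Q^\perp SYS^\ast) = a\,\mathrm{tr}\,(Q^\perp SS^\ast)$. The key step is to establish the identity
\[
S^\ast Q^\perp S \;=\; \mathrm{tr}\,(Q^\perp SS^\ast)\, P^\perp .
\]
Writing $P = uu^\ast$ for a unit vector $u$ and $Q^\perp = ww^\ast$ for a unit vector $w \perp Su$, one has $u^\ast S^\ast w = (Su)^\ast w = 0$, so $S^\ast w$ is orthogonal to $u$ and therefore lies in $\mathrm{range}\,(P^\perp)$. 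Hence $S^\ast Q^\perp S = (S^\ast w)(S^\ast w)^\ast$ is a positive rank-one matrix proportional to $P^\perp$, with the proportionality constant pinned down by $\mathrm{tr}\,(S^\ast Q^\perp S) = \mathrm{tr}\,(Q^\perp SS^\ast)$. Substituting into the trace pairing gives $\mathrm{tr}\,(Q^\perp SYS^\ast) = \mathrm{tr}\,(Q^\perp SS^\ast)\,\mathrm{tr}\,(P^\perp Y)$, and since $\mathrm{tr}\,(Q^\perp SS^\ast) > 0$ (as $S^\ast$ is invertible), the two coherency conditions are equivalent. The main obstacle is precisely this identity, which reconciles the formula for the action at infinity with the finite-to-infinite coherency condition; once it is in hand, all the other cases are routine verification.
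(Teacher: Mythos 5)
Your proof is correct and follows essentially the same route as the paper: the same case decomposition, with the crux in the mixed case being the identity that $S^\ast Q^\perp S$ equals the positive multiple $\mathrm{tr}\,(Q^\perp SS^\ast)\,P^\perp$ of $P^\perp$, which you establish by the vector computation $(Su)^\ast w=0$ where the paper uses the operator identity $S^\ast Q^\perp SP=0$. The only real divergence is that you verify bijectivity by directly computing the candidate inverse on each stratum, whereas the paper defers this to Claim \ref{SS-1} and argues indirectly from two-sided coherency preservation together with the fact that a point of $\oH\setminus H_2$ is determined by the intersection of its cone with $H_2$; both arguments are valid.
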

\begin{proof}
We need to verify that for every pair $X,Y \in \oH$ we have
\begin{equation}\label{sxs}
X \sim Y \iff SXS^\ast \sim SYS^\ast.
\end{equation}
The verification of this equivalence is trivial in the following cases:
\begin{itemize}
\item at least one of $X$ and $Y$ is equal to ${\oi}$,
\item both $X$ and $Y$ belong to $H_2$, and
\item both $X$ and $Y$ belong to $\oH\setminus H_2$.
\end{itemize}
Thus, it remains to consider the case when 
$X \in H_2$ and $Y =  \infty P + aP^{\perp}$ for some real $a$ and some $P \in \mathcal{P}$. Then $X \sim Y$ if and only if $\mathrm{tr}\, (P^\perp X) = a$. And we have $SXS^\ast \sim SYS^\ast$ if and only if
\[
\mathrm{tr}\, (Q^\perp SXS^\ast) = a \mathrm{tr}\, (Q^\perp SS^\ast),
\]
or equivalently, 
\[
\mathrm{tr}\, (S^\ast Q^\perp SX) = a \mathrm{tr}\, (S^\ast Q^\perp S).
\]
Note that $S^\ast Q^\perp S$ is a positive matrix of rank one. 
We show that the equality
\[
\mathrm{tr}\, (P^\perp X)={1 \over \mathrm{tr}\, (S^\ast Q^\perp S ) } \, 
\mathrm{tr}\, (S^\ast Q^\perp SX)
\]
holds for all $X\in H_2$. 
It suffices to show that the positive matrix $S^\ast Q^\perp S$ of rank one is a (positive) scalar multiple of $P^\perp$. 
This is easily seen by the following equation: 
\[
S^\ast Q^\perp SP = S^\ast Q^\perp SP S^\ast (S^\ast)^{-1} =  S^\ast  (Q^\perp SP S^\ast) (S^\ast)^{-1} = S^\ast \cdot 0\cdot (S^\ast)^{-1} =0.
\]

For the verification of bijectivity see the next claim.
\end{proof}

\begin{claim}\label{SS-1}
The inverse of the map $X \mapsto SXS^\ast $, $X \in \oH$,  is the map $X \mapsto S^{-1}X(S^{-1})^\ast$, $X \in \oH$. 
\end{claim}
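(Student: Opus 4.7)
My plan is to verify the claim case by case according to the structure of $X \in \oH$, since the map $X \mapsto SXS^\ast$ was defined piecewise. Because the roles of $S$ and $S^{-1}$ are symmetric, it suffices to show that $S^{-1}(SXS^\ast)(S^{-1})^\ast = X$ for every $X \in \oH$; then replacing $S$ by $S^{-1}$ gives the other composition.

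For $X \in H_2$, the identity $S^{-1}(SXS^\ast)(S^{-1})^\ast = X$ is immediate from ordinary matrix algebra. For $X = \oi$, both maps send $\oi$ to $\oi$ by definition, so the composition fixes $\oi$. The real content lies in the remaining case $X = \infty P + aP^\perp$ with $P \in \cP$ and $a \in \mathbb{R}$. Here I would apply the definition of $SXS^\ast$ to get
\[
SXS^\ast = \infty Q + a\,\mathrm{tr}(Q^\perp SS^\ast)\,Q^\perp, \qquad Q = \frac{1}{\mathrm{tr}(SPS^\ast)}\,SPS^\ast,
\]
and then apply the definition once more (with $S$ replaced by $S^{-1}$) to $\infty Q + bQ^\perp$, where $b = a\,\mathrm{tr}(Q^\perp SS^\ast)$. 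This produces $\infty R + b\,\mathrm{tr}(R^\perp S^{-1}(S^{-1})^\ast)\,R^\perp$, where $R = \mathrm{tr}(S^{-1}Q(S^{-1})^\ast)^{-1}\,S^{-1}Q(S^{-1})^\ast$.

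The task then reduces to two algebraic identities: $R = P$, and $\mathrm{tr}(Q^\perp SS^\ast)\,\mathrm{tr}(P^\perp S^{-1}(S^{-1})^\ast) = 1$. For the first, a direct substitution gives $S^{-1}Q(S^{-1})^\ast = \mathrm{tr}(SPS^\ast)^{-1}\,P$, so $R$ simplifies to $P$ upon using $\mathrm{tr}(P)=1$. For the second, I would invoke the key identity extracted in the proof of Claim \ref{sas*}, namely that $S^\ast Q^\perp S$ is a positive scalar multiple of $P^\perp$; write $S^\ast Q^\perp S = \lambda P^\perp$ with $\lambda > 0$. Then a cyclic trace manipulation yields $\mathrm{tr}(Q^\perp SS^\ast) = \lambda\,\mathrm{tr}(P^\perp) = \lambda$, and inverting the relation to $P^\perp = \lambda^{-1} S^\ast Q^\perp S$ and substituting gives $\mathrm{tr}(P^\perp S^{-1}(S^{-1})^\ast) = \lambda^{-1}\mathrm{tr}(Q^\perp) = \lambda^{-1}$, so the product is $1$.

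The only step requiring any care is the final constant-tracking computation; everything else is bookkeeping. Once the identity $S^\ast Q^\perp S = \lambda P^\perp$ is in hand from the previous claim, the cancellations fall out by two applications of cyclicity of the trace. The main obstacle is simply keeping the notation organized so that the two trace factors coming from the two applications of the definition can be recognized as reciprocals of one another.
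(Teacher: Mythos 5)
Your proof is correct, but it takes a genuinely different route from the paper's. You verify the identity $S^{-1}(SXS^\ast)(S^{-1})^\ast = X$ by direct computation in each of the three cases of the piecewise definition, with the only nontrivial case $X = \infty P + aP^\perp$ handled by tracking the two normalizing trace factors and showing they are reciprocal via the identity $S^\ast Q^\perp S = \lambda P^\perp$; I checked the constant-tracking and it is right, including the reduction of the second composition to the first by substituting $S^{-1}$ for $S$. The paper instead avoids all computation on $\oH \setminus H_2$: it observes that both maps preserve coherency in both directions, that the composition fixes every point of $H_2 \cup \{\oi\}$ by ordinary matrix algebra, and then invokes the separation property $\mathcal{C}_X \cap H_2 = \mathcal{C}_Y \cap H_2 \iff X = Y$ for $X, Y \in \oH \setminus H_2$ to conclude the composition is the identity everywhere. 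The paper's argument is softer and is reused essentially verbatim to prove that affine automorphisms form a group (Lemma \ref{simsim}); your argument is more self-contained and has the virtue of confirming the explicit formula for the image of $\infty P + aP^\perp$ under the composed map, at the cost of some bookkeeping. Both are complete proofs.
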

\begin{proof}
We know that \eqref{sxs} holds for every pair $X,Y \in \oH$. 
In the same way, we see that 
\[
X \sim Y \iff S^{-1}X(S^{-1})^\ast \sim  S^{-1}Y(S^{-1})^\ast
\]
for every pair $X,Y \in \oH$.
If we denote by $\varphi$ and $\psi$ the maps from $\oH$ to itself defined by $\varphi (X) = SXS^\ast$ and $\psi (X) =  S^{-1}X(S^{-1})^\ast$, $X \in \oH$, then 
we have 
\[
X \sim Y \iff \psi ( \varphi (X)) \sim \psi (\varphi (Y))
\]
for every pair $X,Y \in \oH$, and $\psi (\varphi  (X)) = X$ for every $X \in H_2 \cup \{ {\oi} \}$.
It is easy to see that for any pair of points $X,Y \in \oH\setminus H_2$ we have 
\begin{equation}\label{mrda}
\mathcal{C}_X \cap H_2 = \mathcal{C}_Y \cap H_2 \iff X=Y.
\end{equation}
It follows that  $\psi (\varphi  (X)) = X$ for every $X \in \oH$, and in the same way we see that $\varphi (\psi  (X)) = X$ for every $X \in \oH$.
Hence, the inverse of the automorphism $X \mapsto SXS^\ast$ is $X \mapsto S^{-1} X (S^{-1})^\ast$.
\end{proof}

We introduce the rules $\infty^{-1}= 0$, $0^{-1}=\infty$ in ${\oR}$.
For $A\in \oH$, we define $A^{-1}$ \index{$A^{-1}$} in the following manner:
\[
(aP +bP^{\perp})^{-1} = a^{-1} P+ b^{-1} P^{\perp}, \ \ \ P\in \mathcal{P}, \ \, a, b \in  {\oR}.
\]
Note that this is well-defined and compatible with the usual inverse of matrices when $a, b\in \bR\setminus \{0\}$.
\begin{claim}\label{cl}
The map $X \mapsto X^{-1}$, $X \in \oH$, is an automorphism of $\oH$. 
\end{claim}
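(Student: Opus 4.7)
The plan is to leverage the Cayley transform identification $f \colon \oH \to U_2$ established in Subsection \ref{oHU2}. First I would verify that the map $X \mapsto X^{-1}$ is well-defined on $\oH$: since $t \mapsto t^{-1}$ on $\oR$ swaps $0$ and $\infty$ and is a bijection of $\oR$ onto itself, it respects both defining equivalences (the symmetry $aP + bP^\perp = bP^\perp + aP$ and the scalar identification $aI = aP + aP^\perp = aQ + aQ^\perp$). Bijectivity is then immediate from the involution $(X^{-1})^{-1} = X$.

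The heart of the proof is to identify what inversion corresponds to on the unitary side. The key computation is the identity
\[
f(t^{-1}) = -\overline{f(t)}, \qquad t \in \oR,
\]
which follows by direct verification: for $t \in \bR \setminus \{0\}$ one checks that both sides equal $(t+i)/(i-t)$, while the boundary cases reduce to $f(0) = -1$ and $f(\infty) = 1$. Since $P$ and $P^\perp$ are hermitian, the definition \eqref{cayley} of $f$ yields, for every $X = aP + bP^\perp$,
\[
f(X^{-1}) = f(a^{-1})P + f(b^{-1})P^\perp = -\overline{f(a)}P - \overline{f(b)}P^\perp = -f(X)^{\ast}.
\]
Hence under the isomorphism $f$, the map $X \mapsto X^{-1}$ on $\oH$ corresponds to the map $U \mapsto -U^{\ast}$ on $U_2$.

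Finally, one checks that $U \mapsto -U^{\ast}$ is a coherency automorphism of $U_2$: it sends $U_2$ into itself (since $(-U^{\ast})(-U^{\ast})^{\ast} = U^{\ast} U = I$), it is its own inverse, and it preserves coherency because $-U^{\ast} - (-V^{\ast}) = -(U-V)^{\ast}$ has the same rank as $U - V$. Composing with the coherency isomorphism $f$ on both sides then completes the proof.

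The only nontrivial step I expect is the identity $f(t^{-1}) = -\overline{f(t)}$; once that is in hand, everything else is either algebraic bookkeeping or already established. An alternative direct route would split into cases based on which of $X, Y$ lie in $H_2$ or are invertible and use the identity $A^{-1} - B^{-1} = A^{-1}(B-A)B^{-1}$ for invertible $A, B \in H_2$, but this proliferates subcases and is less clean than the Cayley-transform argument above.
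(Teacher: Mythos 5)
Your proposal is correct and follows essentially the same route as the paper: transport the problem to $U_2$ via the Cayley transform, use the identity $f(t^{-1})=-\overline{f(t)}$ to identify inversion with $U\mapsto -U^\ast$, and observe that this map preserves coherency since $-U^\ast-(-V^\ast)$ has the same rank as $U-V$. The extra details you supply (well-definedness, the boundary cases of the identity) are consistent with, and slightly more explicit than, the paper's argument.
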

\begin{proof}
Let us consider everything inside $U_2$. 
Let $f$ be defined as in \eqref{cayley} and \eqref{cayleyf}.
For a unitary matrix $U\in U_2$, take $a,b\in {\oR}$ and $P\in \cP$ such that $U=f(a) P + f(b) P^\perp$.
Then $aP+bP^\perp$ is the corresponding point in $\oH$, and $(aP+bP^\perp)^{-1}=a^{-1}P+b^{-1}P^\perp$. 
It follows that $f(a^{-1}P+b^{-1}P^\perp)=f(a^{-1})P+f(b^{-1})P^\perp$. 
Since \eqref{cayleyf} clearly implies $f(t^{-1})=-\overline{f(t)}$ for every $t\in {\oR}$, we obtain 
\[
f(a^{-1})P+f(b^{-1})P^\perp = -\overline{f(a)}P - \overline{f(b)}P^\perp = -U^*. 
\]
Now, we obtain the desired conclusion as a consequence of the following fact.  
For every pair of unitaries $U, V\in U_2$, we have $U \sim V \iff -U^* \sim -V^*$.
\end{proof}
The automorphism $X\mapsto X^{-1}$ of $\oH$ will be called the \emph{inversion}\index{inversion}.
The inverse mapping of the inversion is the inversion itself.

\begin{definition}\label{automorphisms}
An automorphism $\varphi\colon \oH\to \oH$ of the form
\[
\varphi (X) = c SXS^\ast + B, \ \ \ X \in \oH;
\]
or
\[
\varphi (X) = c SX^t S^\ast + B, \ \ \ X \in \oH
\]
for some $c \in \{-1 , 1 \}$, some invertible $2 \times 2$ complex matrix $S$, and some $B \in H_2$ will be called an \emph{affine automorphism}\index{affine automorphism}. 
We say that an automorphism of $\oH$ is \emph{standard} \index{standard automorphism} if it is written as a composition of finitely many affine automorphisms and inversions.
\end{definition}
It is easily seen that the set of standard automorphisms forms a group.
It is known that every automorphism of $\oH$ is standard. 
For the sake of completeness, we will give a proof of this fact in Subsection \ref{apply}.
We also see the following. 

\begin{lemma}\label{simsim}
The set of affine automorphisms forms a group.
\end{lemma}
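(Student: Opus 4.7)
The plan is to verify directly the three group axioms: containing the identity, closure under composition, and closure under inversion. Associativity is automatic from function composition. The identity map on $\oH$ is itself an affine automorphism of the first form \eqref{prvap}, obtained by taking $c=1$, $S=I$, and $B=0$; so this step is immediate.

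For closure under composition, I would split into the four cases according to the types of the two factors. Given $\varphi_j(X) = c_j S_j X S_j^\ast + B_j$ of type \eqref{prvap}, the composition $\varphi_1\circ\varphi_2$ is
\[
\varphi_1(\varphi_2(X)) = c_1 c_2\,(S_1 S_2)\,X\,(S_1 S_2)^\ast \;+\; \bigl(c_1 S_1 B_2 S_1^\ast + B_1\bigr),
\]
which is again of the form \eqref{prvap} since $c_1 c_2\in\{-1,1\}$, $S_1 S_2$ is invertible, and the additive constant lies in $H_2$. If exactly one of the factors is of the transpose type \eqref{drugad}, one obtains analogously a composition of the form \eqref{drugad}; the only new ingredient is the identity $(SXS^\ast)^t = \overline{S}\,X^t\,\overline{S}^{\,\ast}$, which lets one push the transpose outward and absorb $\overline{S}$ into the $S$-slot. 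For two factors of the transpose type, the same identity yields
\[
\varphi_1(\varphi_2(X)) = c_1 c_2\,(S_1 \overline{S_2})\,X\,(S_1 \overline{S_2})^\ast \;+\; \bigl(c_1 S_1 B_2^t S_1^\ast + B_1\bigr),
\]
which is again of type \eqref{prvap} since $(X^t)^t = X$. In every case the additive constant is a hermitian matrix (recall $B^t = \overline{B}\in H_2$ whenever $B\in H_2$, and $H_2$ is closed under conjugation by invertible matrices), so closure holds.

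For inverses, I would solve $Y = c S X S^\ast + B$ directly to get $X = c\,S^{-1} Y (S^{-1})^\ast - c\,S^{-1} B (S^{-1})^\ast$, which exhibits $\varphi^{-1}$ as an affine automorphism of type \eqref{prvap} with parameters $c$, $S^{-1}$, and $-cS^{-1}B(S^{-1})^\ast \in H_2$. Similarly, for a transpose-type map one solves $X^t = c S^{-1}(Y-B)(S^\ast)^{-1}$ and transposes, using $((S^\ast)^{-1})^t = \overline{S^{-1}}$ and $(\overline{S^{-1}})^\ast = (S^{-1})^t$, to obtain
\[
X = c\,\overline{S^{-1}}\,Y^t\,\overline{S^{-1}}^{\,\ast} \;-\; c\,\overline{S^{-1}}\,B^t\,\overline{S^{-1}}^{\,\ast},
\]
an affine automorphism of type \eqref{drugad}. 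The only routine point to check is that these candidates really invert $\varphi$, and that follows from inserting the formulas.

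The main obstacle is purely bookkeeping: keeping straight how transposes interact with conjugation, products, and inverses of complex matrices. In particular, the identities $(SXS^\ast)^t = \overline{S}X^t\overline{S}^{\,\ast}$ and $((S^\ast)^{-1})^t = \overline{S^{-1}}$ are the crucial algebraic facts that make types \eqref{prvap} and \eqref{drugad} interchange cleanly under composition and inversion. No deeper property of the space $\oH$ is needed beyond the facts, already established above, that each of the constituent maps $X\mapsto SXS^\ast$, $X\mapsto X+B$, $X\mapsto -X$, and $X\mapsto X^t$ is a well-defined automorphism of $\oH$, so that the compositions in question actually extend from $H_2$ to $\oH$.
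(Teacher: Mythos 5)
Your computations are all correct, and your overall strategy (direct case analysis on the two forms \eqref{prvap}, \eqref{drugad}, with explicit formulas for the composition and the inverse) is essentially the paper's. The one point you gloss over is, however, the only genuinely non-trivial step in the lemma: your matrix identities such as $S_1(S_2XS_2^\ast)S_1^\ast=(S_1S_2)X(S_1S_2)^\ast$ and $(SXS^\ast)^t=\overline{S}X^t\overline{S}^{\,\ast}$ are literal matrix algebra, so they prove that the composite and your candidate affine automorphism agree \emph{on $H_2$}. On $\oH\setminus H_2$ the operations $X\mapsto SXS^\ast$, $X\mapsto X+B$, $X\mapsto X^t$ are not matrix operations but the ad hoc extensions defined in Subsection \ref{automorphism} (e.g.\ $S(\infty P+aP^\perp)S^\ast=\infty Q+a\,\mathrm{tr}(Q^\perp SS^\ast)Q^\perp$), and it is not automatic that the algebraic identities persist there. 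Your closing remark that the constituent maps "extend to $\oH$" only shows that the composite is a well-defined automorphism of $\oH$; it does not show that it coincides with the single affine automorphism given by your computed parameters at the points of $\oH\setminus H_2$. The paper closes this gap by the uniqueness-of-extension argument of Claim \ref{SS-1}: two bijections of $\oH$ preserving coherency in both directions that agree on $H_2$ agree everywhere, because a point $X\in\oH\setminus H_2$ is determined by $\mathcal{C}_X\cap H_2$ (equation \eqref{mrda}). Adding one sentence invoking that argument after your $H_2$-computations makes your proof complete.
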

\begin{proof}
Let $\varphi_1, \varphi_2$ be two affine automorphisms. 
We show that $\varphi_2\circ \varphi_1$ is an affine automorphism. 
Let us consider the case where $\varphi_i$ is of the form $X\mapsto S_iXS_i^\ast + B_i$ for some invertible $2 \times 2$ complex matrix $S_i$, and some $B_i \in H_2$,  $i=1,2$. 
Let $\varphi$ be the affine automorphism $X\mapsto (S_2S_1)X(S_2S_1)^* + S_2B_1S_2^* +B_2$. 
It is easily seen that $\varphi^{-1}\circ \varphi_2\circ \varphi_1(X)=X$ for every $X\in H_2$. 
This leads to $\varphi^{-1}\circ \varphi_2\circ \varphi_1(X)=X$ for every $X\in \oH$ (as in the discussion in the proof of Claim \ref{SS-1}), hence we get $\varphi_2\circ \varphi_1=\varphi$. 
The other cases can be considered in an analogous manner. 
Similarly, we may show that $\varphi_1^{-1}$ is an affine automorphism.
\end{proof}

In fact, the class of affine automorphisms coincides with the group of symmetries on $H_2$, that is, the group of all bijections on $H_2$ that preserve coherency in both directions (Theorem \ref{ftcgelem} and Corollary \ref{corp}).
Clearly, the class of standard automorphisms is bigger than that of affine automorphisms. 
This is the reason why we introduce the space $\oH$. 
If we work with $H_2$ together with symmetries on $H_2$ instead of $\oH$, then we cannot obtain most of the results in the subsequent subsections (particularly in Subsections \ref{existence}, \ref{line}, \ref{ssss}, \ref{three points}).

\subsection{Bloch representation}\label{projectionsx}\index{Bloch representation}
The symbol $\| \cdot \|$ \index{$\lVert \cdot \rVert$} denotes the usual operator norm.
A matrix $A\in H_2$ of trace one with $0\leq A$ is called a \emph{density matrix}\index{density matrix}.
We first recall some known facts about the Bloch representation \index{Bloch representation} of the set of all density matrices in $H_2$. 
The reader will note that the idea is the same as in Introduction where we have identified spacetime events with $2\times 2$ hermitian matrices. 

It is easy to see that the map
given by
\begin{equation}\label{blochmap}
 \left[ \begin{matrix} 1/2 - z & x+iy \cr x-iy & 1/2 +z \cr \end{matrix} \right] \mapsto (x,y,z) \in \mathbb{R}^3 
\end{equation}
is a bijection of the set of all density matrices in $H_2$ onto the closed ball in $\mathbb{R}^3$ centered at the origin with radius $1/2$, which we call the \emph{Bloch ball}\index{Bloch ball}. 
The sphere of the Bloch ball is called the \emph{Bloch sphere}\index{Bloch sphere}, and we denote it by $\mathcal{S}^2$\index{$S^2@$\mathcal{S}^2$}. 
Observe that the above mapping sends the set $\cP$ of $2\times2$ projections of rank one onto $\mathcal{S}^2$. 

Let
\[
A =  \left[ \begin{matrix} 1/2 - z_1 & x_1+iy_1 \cr x_1-iy_1 & 1/2 +z_1 \cr \end{matrix} \right] \ \ \ \text{and} \ \ \ B =  \left[ \begin{matrix} 1/2 - z_2 & x_2+iy_2 \cr x_2-iy_2 & 1/2 +z_2 \cr \end{matrix} \right]
\]
be any density matrices and $u = (x_1 , y_1 , z_1)$ and $v = (x_2, y_2, z_2)$ the corresponding points in the Bloch ball. 
Because $A-B$ is a $2\times 2$ hermitian matrix with trace zero, we have 
\begin{equation}\label{bb}
\| A - B \|^2 = - \det (A-B)= (x_2 - x_1)^2 + (y_2 - y_1)^2 + (z_2 - z_1)^2,
\end{equation}
that is, the norm distance between $A$ and $B$ is equal to the Euclidean distance in the Bloch ball.

If in addition $A=P, B=Q$ are projections, then the right-hand side of \eqref{bb} is further equal to
\[
 ( u-v, u-v ) = \frac{1}{4} + \frac{1}{4} - 2 ( u,v ) = \frac{1- \cos \alpha}{2}
= \sin^2 \frac{\alpha}{2},
\]
where $\alpha\in [0,\pi]$ is the angle between $u$ and $v$ in $\mathcal{S}^2$, that is, $\cos \alpha = 4 ( u , v )$.
Therefore, $\arcsin \| P - Q \|$ equals $\alpha/2$, which is the geodesic distance between $u$ and $v$ in the Bloch sphere (that is, the length of the shorter arc in the great circle passing through $u$ and $v$). 
We define \index{$d^g$} 
\begin{equation}\label{sooce}
d^g (P,Q) =\arcsin \| P - Q \|
\end{equation}
for $P, Q\in \cP$. 
For us, one important fact about this distance function is the following obvious property. 

\begin{lemma}\label{P}
If $P,Q \in \cP$ with $d^g (P,Q) = s$ and $n$ is a positive integer and $t$ a real number such that $nt \ge s$, then there exist points $P_0 = P, P_1, \ldots, P_n = Q$ in $\cP$ such that $d^g(P_{j-1}, P_j ) \le t$, $j= 1,2,\ldots, n$.
\end{lemma}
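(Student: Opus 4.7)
The plan is to translate the problem into geometry on the Bloch sphere via the bijection in \eqref{blochmap}. Write $u, v \in \mathcal{S}^2$ for the Bloch vectors corresponding to $P$ and $Q$. As noted just above the lemma, $d^g(P,Q)$ equals the length of the shorter great-circle arc from $u$ to $v$ on the sphere of radius $1/2$; that is, $d^g$ is precisely the intrinsic Riemannian (arc-length) metric on $\mathcal{S}^2$. In this guise the statement becomes the standard fact that on a round sphere any two points can be joined by a geodesic whose length is the distance between them, and such a geodesic may be subdivided into arbitrarily many equal subarcs.

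Concretely, I would choose a minimizing geodesic arc $\gamma \colon [0,s] \to \mathcal{S}^2$ with $\gamma(0)=u$, $\gamma(s)=v$, parametrized by arc length. If $u \neq -v$ (which is automatic unless $s = \pi/2$, i.e.\ $\|P-Q\|=1$) the choice of $\gamma$ is unique; in the antipodal case one simply picks any half of a great circle through $u$ and $v$. Set $u_j = \gamma(js/n)$ for $j=0,1,\ldots,n$, and let $P_j \in \cP$ be the rank-one projection corresponding to $u_j$ under \eqref{blochmap}. Then $P_0 = P$, $P_n = Q$, and the geodesic distance from $u_{j-1}$ to $u_j$ along $\gamma$ is $s/n$; since a subarc of a minimizing geodesic is itself minimizing, $d^g(P_{j-1}, P_j) = s/n \le t$, as required.

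There is no real obstacle: the only small point to be careful about is that in the antipodal case the minimizing geodesic is not unique, but this does not affect the construction, since we only need the existence of \emph{one} such arc of length $s$. All other assertions are standard sphere geometry, and no results beyond the identifications already made in Subsection~\ref{projectionsx} are required.
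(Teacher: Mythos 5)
Your proof is correct and takes exactly the route the paper intends: the paper's own proof consists of the single sentence that the claim is trivial once $d^g$ is recognized as the geodesic distance on the Bloch sphere, and you have simply written out the subdivision of a minimizing great-circle arc that makes this explicit. The remark about non-uniqueness in the antipodal case is a fine point but harmless, as you note.
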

\begin{proof}
This is trivial by thinking about the geodesic distance in $\mathcal{S}^2$.
\end{proof}

The following theorem will be used in Section \ref{standard}.
\begin{theorem}[Special case of the non-bijective version of Wigner's theorem]\index{Wigner's theorem}\label{wigner}
Let $\mu \colon \mathcal{P} \to \mathcal{P}$ be a mapping satisfying
\begin{equation}\label{wigass}
\mathrm{tr}\, (\mu (P) \mu (Q) ) = \mathrm{tr}\, (PQ), \ \ \ P,Q \in \mathcal{P}.
\end{equation}
Then there exists a $2 \times 2$ unitary matrix $U$ such that either 
\[
\mu (P) = UPU^\ast,\ \ P \in \mathcal{P},
\]
or 
\[
\mu (P) = UP^tU^\ast,\ \ P \in \mathcal{P}.
\] 
\end{theorem}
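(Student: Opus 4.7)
The plan is to pass to the Bloch representation of $\mathcal P$ and reduce the statement to a fact about orthogonal transformations of $\mathbb R^3$. Write $P \leftrightarrow u,\ Q \leftrightarrow v \in \mathcal S^2$ via \eqref{blochmap}. A short computation using $\mathrm{tr}\,((P-Q)^2) = 2 - 2\mathrm{tr}\,(PQ)$ and $\mathrm{tr}\,((P-Q)^2) = 2\|P-Q\|^2 = 2\|u-v\|^2$ (because $P-Q$ is traceless hermitian and we already know $\|P-Q\|^2 = \|u-v\|^2$ from \eqref{bb}), combined with $\|u\|=\|v\|=1/2$, gives the clean dictionary
\[
\mathrm{tr}\,(PQ) \;=\; \tfrac{1}{2} + 2\,(u,v).
\]
Hence the hypothesis \eqref{wigass} is equivalent to saying that the induced map $\tilde\mu\colon\mathcal S^2\to \mathcal S^2$ preserves the Euclidean inner product on $\mathbb R^3$.

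Next I would show that $\tilde\mu$ extends to an orthogonal transformation $T\in O(3)$. Pick three mutually orthogonal points $f_1,f_2,f_3\in\mathcal S^2$ (e.g.\ $(1/2,0,0),(0,1/2,0),(0,0,1/2)$). Since $(\tilde\mu(f_i),\tilde\mu(f_j)) = (f_i,f_j) = \tfrac{1}{4}\delta_{ij}$, the vectors $\tilde\mu(f_1),\tilde\mu(f_2),\tilde\mu(f_3)$ are again mutually orthogonal with norms $1/2$. Expanding any $u\in\mathcal S^2$ as $u = \sum_i 4(u,f_i) f_i$ and using inner-product preservation gives $\tilde\mu(u) = \sum_i 4(u,f_i)\tilde\mu(f_i)$, so $\tilde\mu$ is the restriction to $\mathcal S^2$ of the linear map $T$ sending $f_i \mapsto \tilde\mu(f_i)$, and this $T$ lies in $O(3)$.

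It remains to realize $T$ via the two types of transformations allowed in the conclusion. A direct expansion of
\[
P = \tfrac{1}{2}I + x\sigma_x - y\sigma_y - z\sigma_z
\]
in Pauli matrices shows that the transpose $P \mapsto P^t$ corresponds under \eqref{blochmap} to the reflection $(x,y,z)\mapsto (x,-y,z)$, which lies in $O(3)\setminus SO(3)$. On the other hand, unitary conjugation $P\mapsto UPU^\ast$ fixes the identity part and acts linearly on the traceless hermitian part of $H_2$; this well-known action yields a surjective group homomorphism $U_2 \twoheadrightarrow SO(3)$ (the double cover $SU(2)\to SO(3)$ up to the scalar kernel). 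So: if $\det T = 1$, pick $U\in U_2$ whose conjugation induces $T$ and then $\mu(P) = UPU^\ast$ on $\mathcal P$; if $\det T = -1$, compose $T$ with the transpose reflection to land in $SO(3)$, pick a suitable $U$, and obtain $\mu(P) = UP^tU^\ast$. Uniqueness of $\tilde\mu$ on $\mathcal S^2$ forces equality on all of $\mathcal P$.

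The main obstacle is step three: verifying that $SO(3)$ is exhausted by unitary conjugation actions. This is standard (the two-to-one covering $SU(2)\to SO(3)$), but it is the only place where one really uses the low-dimensionality $n=2$; the translation-and-linear-extension steps are essentially algebraic bookkeeping with the Bloch map.
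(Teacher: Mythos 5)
Your argument is correct, but note that the paper does not actually prove Theorem \ref{wigner}: it states the result and defers to the cited reference \cite{Geh}, which gives an elementary proof of the non-bijective Wigner theorem for a general (even infinite-dimensional) Hilbert space. Your proof is therefore a genuinely different, self-contained route that exploits the $2\times 2$ setting. The dictionary $\mathrm{tr}\,(PQ)=\tfrac12+2(u,v)$ follows correctly from \eqref{bb} together with $\|P-Q\|^2=1-\mathrm{tr}\,(PQ)$ (Remark \ref{wignerremark}), so \eqref{wigass} does say that $\tilde\mu$ preserves the inner product on the radius-$1/2$ sphere; your extension to a linear $T\in O(3)$ via an orthogonal frame $f_1,f_2,f_3$ and the expansion $\tilde\mu(u)=\sum_i 4(\tilde\mu(u),\tilde\mu(f_i))\tilde\mu(f_i)=\sum_i 4(u,f_i)\tilde\mu(f_i)$ is sound and does not require $\tilde\mu$ to be surjective or continuous. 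The identification of transposition with the reflection $(x,y,z)\mapsto(x,-y,z)$ checks out against \eqref{blochmap}, and the remaining ingredient — that every element of $SO(3)$ is induced by some unitary conjugation — is the standard $SU(2)\to SO(3)$ double cover, which you correctly flag as the one genuinely two-dimensional input. What your approach buys is a short, geometric, completely explicit proof of exactly the special case needed; what it gives up is the generality of \cite{Geh}, whose argument works without a Bloch-sphere model. Either way, the proof is complete as written.
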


See {\cite{Geh}} for an elementary proof of this theorem.

\begin{remark}\label{wignerremark}
Since $(P-Q)^2$ is a scalar multiple of $I$, we get 
\[
\| P - Q \|^2 =\| (P - Q)^2 \|= \frac{1}{2}\mathrm{tr}\,\left( (P-Q)^2\right) = \frac{1}{2}\mathrm{tr}\, (P-PQ-QP+Q) =1-\mathrm{tr}\,(PQ).
\]
From this, we see that \eqref{wigass} is equivalent to 
\[
\| \varphi (P) - \varphi (Q) \| = \| P - Q \|, \ \ \ P,Q \in \mathcal{P}.
\]
\end{remark}

We continue collecting some lemmas concerning $\cP$ that will be used in Section \ref{standard}.
\begin{lemma}\label{matrum}
Let $P,Q \in \mathcal{P}$. Then
\[
\| P -Q \|^2 + \| P^\perp - Q\|^2 = 1.
\]
\end{lemma}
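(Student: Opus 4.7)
The plan is to compute both norms via the trace identity from Remark \ref{wignerremark} and exploit the fact that $P + P^\perp = I$.

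By Remark \ref{wignerremark}, for any two rank-one projections $R, S \in \mathcal{P}$ we have $\|R - S\|^2 = 1 - \mathrm{tr}(RS)$. Applying this to the pairs $(P, Q)$ and $(P^\perp, Q)$ and adding gives
\[
\|P - Q\|^2 + \|P^\perp - Q\|^2 = 2 - \mathrm{tr}(PQ) - \mathrm{tr}(P^\perp Q) = 2 - \mathrm{tr}\bigl((P + P^\perp)Q\bigr) = 2 - \mathrm{tr}(Q) = 1,
\]
since $P + P^\perp = I$ and $\mathrm{tr}(Q) = 1$ for any $Q \in \mathcal{P}$.

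There is essentially no obstacle; the identity is a one-line trace computation. Alternatively, one could argue geometrically via the Bloch representation: if $u, v \in \mathcal{S}^2$ are the points corresponding to $P, Q$, then $P^\perp$ corresponds to the antipode $-u$, so by \eqref{bb} the sum becomes $|u - v|^2 + |u + v|^2 = 2|u|^2 + 2|v|^2 = 2 \cdot \tfrac{1}{4} + 2 \cdot \tfrac{1}{4} = 1$. Either route yields the claim immediately.
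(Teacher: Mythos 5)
Your proof is correct. Your primary route --- applying the identity $\|R-S\|^2 = 1 - \mathrm{tr}(RS)$ from Remark \ref{wignerremark} to both pairs and summing, so that $P+P^\perp = I$ and $\mathrm{tr}(Q)=1$ finish the job --- is a clean algebraic argument that differs from the paper's proof, which instead invokes \eqref{bb} and the fact that $P$ and $P^\perp$ are antipodal points of the Bloch sphere (essentially the parallelogram-law computation you sketch as your alternative). The two are of course equivalent in content, since \eqref{bb} and the trace identity are two expressions of the same quantity $-\det(P-Q)$; the trace version has the minor advantage of requiring no appeal to the geometric picture, while the paper's version is a one-line corollary of the Bloch-sphere machinery it has already set up. Both of your computations check out: in the geometric version the points lie on the sphere of radius $1/2$, so $|u|^2=|v|^2=1/4$ and the sum is indeed $1$.
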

\begin{proof}
This is clear from \eqref{bb} and the fact that $P$ and $P^{\perp}$ correspond to antipodal points in the Bloch sphere.
\end{proof}

\begin{lemma}\label{veryeasy}
For $A\in \oH$, the following are equivalent:
\begin{itemize}
\item $A \sim 0$ and $A\sim I$,
\item $A \in \mathcal{P}$.
\end{itemize}
\end{lemma}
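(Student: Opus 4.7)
The proof splits naturally into the two directions of the equivalence, and the forward direction is the only non-immediate piece.

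For the easy direction, assume $A \in \mathcal{P}$. Then $A$ is a rank-one hermitian matrix, so $\mathrm{rank}(A-0) = 1$ giving $A \sim 0$, and $A - I = -A^\perp$ is also of rank one, giving $A \sim I$.

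For the nontrivial direction, I would first rule out the points of $\oH \setminus H_2$. If $A = \widehat{\infty}$, then by the definition of $\sim$ in $\oH$ we have $\widehat{\infty} \not\sim B$ for any $B \in H_2$, contradicting $A \sim 0$. If $A = \infty P + a P^\perp$ for some $P \in \mathcal{P}$ and $a \in \mathbb{R}$, then $A \sim 0$ forces $P^\perp \cdot 0 \cdot P^\perp = a P^\perp$, i.e.\ $a = 0$, while $A \sim I$ forces $P^\perp I P^\perp = a P^\perp$, i.e.\ $a = 1$ — a contradiction. Hence $A \in H_2$.

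From $A \sim 0$ we get $\mathrm{rank}(A) \leq 1$. If $A = 0$, then $A - I = -I$ has rank $2$, contradicting $A \sim I$; so $A$ has rank exactly one. Now invoking Lemma \ref{tr1} with the hypothesis $A \sim I$, we obtain $\mathrm{tr}(A) = 1$. Since every rank-one hermitian matrix is of the form $\lambda xx^*$ for some unit vector $x \in \mathbb{C}^2$ and some $\lambda \in \mathbb{R} \setminus \{0\}$, the trace condition forces $\lambda = 1$, so $A = xx^* \in \mathcal{P}$.

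There is no real obstacle here; the only point to be careful about is to handle the three types of points in $\oH$ (finite matrices, the points $\infty P + a P^\perp$, and $\widehat{\infty}$) separately when unwinding $\sim$, and to remember that Lemma \ref{tr1} requires the rank-at-most-one hypothesis, which is supplied by $A \sim 0$.
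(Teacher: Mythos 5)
Your proof is correct and follows essentially the same route as the paper: the paper simply notes that $A \sim 0$ forces $A = tP$ for some $P \in \mathcal{P}$ and $t \in \oR$ (which already packages your case analysis of finite points, points $\infty P + aP^\perp$, and $\oi$ into one line), and then observes that $tP \sim I$ iff $t = 1$. Your more explicit unwinding of the coherency relation on $\oH \setminus H_2$ and your use of Lemma \ref{tr1} are just a more detailed presentation of the same argument.
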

\begin{proof}
If $A \in \mathcal{P}$, then obviously $A \sim 0$ and $A\sim I$. If $A \sim 0$, then $A = tP$ for some $P \in \mathcal{P}$ and some $t \in {\oR}$. Clearly, $tP \sim I$ if and only if $t=1$. This completes the proof.
\end{proof}

For $A,B \in \oH$ with $d(A,B) = 2$, let $\mathcal{S}_{A,B}$ \index{$S(A,B)$@$\mathcal{S}_{A,B}$} denote the set $\mathcal{C}_A \cap \mathcal{C}_B$.

\begin{lemma}\label{mcmcv}
Let $a$ be a nonzero real number and $P \in \mathcal{P}$. 
Then
\begin{equation}\label{eqmcmcv}
\mathcal{S}_{(1-a)P, P + aP^\perp} = \{ (1-a)P + aR\, : \, R \in \mathcal{P} \}.
\end{equation}
\end{lemma}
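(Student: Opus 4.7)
The plan is to verify the two inclusions separately. For $\supseteq$, fix $R \in \cP$ and set $X := (1-a)P + aR$. Then $X - (1-a)P = aR$ has rank at most $1$, so $X \sim (1-a)P$; likewise $X - (P+aP^\perp) = a(R - I) = -aR^\perp$ has rank at most $1$, so $X \sim P+aP^\perp$. Hence $X$ lies in $\mathcal{S}_{(1-a)P,\,P+aP^\perp}$.

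For $\subseteq$, take $X \in \mathcal{S}_{(1-a)P,\,P+aP^\perp}$ and note the useful identity $(P+aP^\perp) - (1-a)P = aI$. I first rule out $X \in \oH \setminus H_2$. The point $\oi$ is by definition not coherent with any element of $H_2$, so $X \neq \oi$. If instead $X = \infty Q + b Q^\perp$ for some $Q \in \cP$ and $b \in \bR$, then the two coherency conditions yield $\mathrm{tr}\,(Q^\perp (1-a)P) = b = \mathrm{tr}\,(Q^\perp (P+aP^\perp))$; subtracting gives $a\,\mathrm{tr}\,(Q^\perp) = a = 0$, contradicting $a>0$.

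Therefore $X \in H_2$. Set $M := X - (1-a)P$; coherency with $(1-a)P$ gives $\mathrm{rank}\,M \leq 1$. If $M = 0$, then $X - (P+aP^\perp) = -aI$ has rank $2$, contradicting $X \sim P+aP^\perp$. Otherwise the spectral theorem lets me write $M = \lambda R$ for some nonzero $\lambda \in \mathbb{R}$ and some $R \in \cP$. Then $X - (P+aP^\perp) = \lambda R - aI$ is a hermitian matrix with eigenvalues $\lambda - a$ and $-a$, and rank at most $1$ combined with $a \neq 0$ forces $\lambda = a$, yielding $X = (1-a)P + aR$ as claimed.

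The main obstacle is simply the bookkeeping needed to handle the points of $\oH \setminus H_2$ using the case-wise definition of $\sim$; once these are eliminated, the argument reduces to a one-line eigenvalue check for the rank of $\lambda R - aI$.
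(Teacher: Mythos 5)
Your proof is correct, but it takes a different route from the paper. The paper's proof is a two-line reduction: it first notes that $\mathcal{S}_{0,aI}=\{aR : R\in\cP\}$ (an easy consequence of Lemma \ref{veryeasy} plus the scaling automorphism $X\mapsto \sqrt{a}X\sqrt{a}$), and then applies the translation automorphism $X\mapsto X+(1-a)P$, using $(P+aP^\perp)=aI+(1-a)P$, so that $\mathcal{S}_{(1-a)P,\,P+aP^\perp}$ is the translate of $\mathcal{S}_{0,aI}$. Because translations and congruences are automorphisms of all of $\oH$, the points of $\oH\setminus H_2$ are handled automatically and never appear explicitly. You instead verify both inclusions by direct computation: the $\supseteq$ direction via the identity $X-(P+aP^\perp)=-aR^\perp$, and the $\subseteq$ direction by first excluding $\oi$ and the points $\infty Q+bQ^\perp$ through the trace condition $\mathrm{tr}\,(Q^\perp\,\cdot\,)$ (which correctly yields $a=0$, a contradiction), and then by the eigenvalue computation for $\lambda R-aI$, which forces $\lambda=a$. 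Both arguments are sound; yours is more elementary and self-contained, while the paper's is shorter because it leans on the already-established automorphism machinery of Subsection \ref{automorphism}, which is the pattern used throughout the paper to avoid exactly the kind of case analysis on $\oH\setminus H_2$ that you carry out by hand.
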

\begin{proof}
It is an easy consequence of Lemma \ref{veryeasy} that
\[
\mathcal{S}_{0, aI} = \{  aR\, : \, R \in \mathcal{P} \}.
\]
For any $A,B \in H_2$ with $d(A,B) = 2$ we have
\[
\mathcal{S}_{A + (1-a)P, B + (1-a) P} =  \{X+(1-a)P \,:\, X\in\mathcal{S}_{A,B}\}.
\] 
Combining the two equalities with $A=0$, $B=aI$, we get \eqref{eqmcmcv}.
\end{proof}

\begin{lemma}\label{mikar}
Let $P,Q \in \mathcal{P}$ and $a$ be a positive real number, $a < 1/2$. Then the intersection
\[
\{ (1-a) P + aR \, : \, R\in \mathcal{P} \} \, \cap \, \{ (1-a) Q + aR \, : \, R\in \mathcal{P} \}
\]
is nonempty if and only if
\[
\| P - Q \| \le { a \over 1-a }.
\]
\end{lemma}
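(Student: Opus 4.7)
The plan is to work in the Bloch representation introduced in Subsection \ref{projectionsx}. Any matrix of the form $(1-a)P+aR$ with $P,R\in\mathcal{P}$ is positive and has trace $1$, hence is a density matrix; it therefore corresponds to a point in the Bloch ball. Under the Bloch map \eqref{blochmap}, which is linear and trace-preserving on density matrices, the point corresponding to $(1-a)P+aR$ is $(1-a)p+ar$, where $p,q,r$ denote the Bloch vectors of $P,Q,R$ respectively (these lie on the Bloch sphere $\mathcal{S}^2$ of radius $1/2$).

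Consequently, as $R$ ranges over $\mathcal{P}$, the set $\{(1-a)P+aR : R\in\mathcal{P}\}$ is sent to the set $\{(1-a)p+ar : r\in\mathcal{S}^2\}$, which is precisely the sphere in $\mathbb{R}^3$ of radius $a/2$ centered at $(1-a)p$; similarly for $Q$. The intersection we care about is nonempty if and only if these two concentric-radius-$a/2$ spheres in $\mathbb{R}^3$ have nonempty intersection.

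Two spheres in $\mathbb{R}^3$ of equal radius $a/2$ intersect if and only if the distance between their centers is at most $a$. The distance between the centers is $\|(1-a)p-(1-a)q\|_{\mathbb{R}^3}=(1-a)\|p-q\|_{\mathbb{R}^3}$, and by the key identity \eqref{bb} (which says $\|P-Q\|$ equals the Euclidean distance between the Bloch vectors) this equals $(1-a)\|P-Q\|$. Thus the intersection is nonempty iff $(1-a)\|P-Q\|\le a$, i.e., iff $\|P-Q\|\le a/(1-a)$, as required.

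There is no substantial obstacle here: everything reduces to elementary three-dimensional geometry once the Bloch picture is invoked. The only points deserving care are the verification that $(1-a)P+aR$ is a density matrix (so that the Bloch map applies) and the identification of its Bloch vector by linearity; the hypothesis $a<1/2$ plays no role in the argument itself but is precisely the range in which the bound $a/(1-a)$ is a nontrivial constraint on $\|P-Q\|\le 1$.
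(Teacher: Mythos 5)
Your proof is correct and follows essentially the same route as the paper: both pass to the Bloch representation, observe that each set becomes a sphere of radius $a/2$ (the paper notes they are inscribed in $\mathcal{S}^2$), and reduce to elementary geometry of two equal-radius spheres in $\mathbb{R}^3$. You simply carry out the final distance computation that the paper declares "fairly easy" and omits.
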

\begin{proof}
Since the mapping \eqref{blochmap} is a restriction of an affine mapping, we see that the sets $\{ (1-a) P + aR \, : \, R\in \mathcal{P} \}$ and $\{ (1-a) Q + aR \, : \, R\in \mathcal{P} \}$ are sent by this mapping to spheres with radius $a/2$ that are both inscribed in $\mathcal{S}^2$.
Thus we need to verify that if
$\mathcal{S}_1$, $\mathcal{S}_2$ are spheres of radius $a/2$ that are inscribed in $\mathcal{S}^2$ with contact points $P, Q\in \mathcal{S}^2$, respectively, 
then $\mathcal{S}_1$ and $\mathcal{S}_2$ intersect if and only if the Euclidean distance between $P$ and $Q$ is at most $a/(1-a)$.
The proof of this claim is fairly easy and we omit it.
\end{proof}

\subsection{Existence of standard automorphisms with certain properties}\label{existence}
\begin{lemma}\label{pocelo}
Let $A,B \in \oH$. If $d(A,B) = 2$, then there is a standard automorphism $\varphi \colon \oH \to \oH$ such that $\varphi (A) = 0$ and $\varphi (B) = {\oi}$.
\end{lemma}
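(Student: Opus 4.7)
The plan is to reduce the problem to the existence of a standard automorphism that sends $B$ to $\oi$, and then use a translation to move $\psi(A)$ to $0$. Throughout I will freely use the fact stated just after Definition \ref{automorphisms} that the standard automorphisms form a group.

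Suppose we have produced a standard automorphism $\psi$ with $\psi(B)=\oi$. Since every automorphism preserves coherency in both directions, it preserves the distance $d$, hence $d(\psi(A),\oi)=d(A,B)=2$. Inspecting the definition of $\sim$ on $\oH$, the set of points that are \emph{not} coherent with $\oi$ is precisely $H_2$, so $\psi(A)\in H_2$. The translation $\tau\colon X\mapsto X-\psi(A)$ is (by definition) an affine automorphism, hence standard; it fixes $\oi$ and maps $\psi(A)$ to $0$. The composition $\varphi:=\tau\circ \psi$ is then a standard automorphism sending $A$ to $0$ and $B$ to $\oi$.

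It remains to construct $\psi$, which I do case-by-case according to the form of $B$. If $B=\oi$, take $\psi$ to be the identity. If $B\in H_2$, take $\psi(X)=(X-B)^{-1}$, the composition of the translation $X\mapsto X-B$ with the inversion; then $\psi(B)=0^{-1}=\oi$ by the rule $0\cdot P+0\cdot P^\perp\mapsto \infty P+\infty P^\perp=\oi$. Finally, if $B=\infty P+aP^\perp$ for some $P\in\cP$ and $a\in\mathbb{R}$, first translate by $(1-a)P^\perp\in H_2$: using $\mathrm{tr}(P^\perp\cdot(1-a)P^\perp)=1-a$, the image of $B$ is $\infty P+P^\perp$. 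Applying inversion then yields $0\cdot P+1^{-1}P^\perp=P^\perp\in H_2$, reducing to the previous case. An additional composition with $X\mapsto (X-P^\perp)^{-1}$ sends $P^\perp$ to $\oi$. In every case $\psi$ is a composition of translations and inversions, hence standard.

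The only obstacle is mild bookkeeping: one must carefully apply the definitions of translation, inversion, and "zero/infinity arithmetic" to points lying in $\oH\setminus H_2$. Once these rules are unpacked, the verifications are direct, and no further ideas are required.
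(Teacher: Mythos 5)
Your proof is correct and follows essentially the same route as the paper: first construct a standard automorphism $\psi$ with $\psi(B)=\oi$ by the same case analysis (identity, $X\mapsto (X-B)^{-1}$, and for $B\notin H_2\cup\{\oi\}$ a translation--inversion--translation--inversion composition that agrees with the paper's formula up to the choice of writing $B$ as $\infty P+aP^\perp$ versus $aP+\infty P^\perp$), then observe $d(\psi(A),\oi)=2$ forces $\psi(A)\in H_2$ and finish with a translation. No gaps.
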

\begin{proof}
There exists a standard automorphism $\psi \colon \oH \to \oH$ such that $\psi (B) = {\oi}$. 
Indeed, if $B\in H_2$ then take the map $X\mapsto (X-B)^{-1}$. 
If $B=aP+\infty P^\perp$ with $P\in \cP$, $a\in \mathbb{R}$, then take the map $X\mapsto ((X+(1-a)P)^{-1}-P)^{-1}$.
If $B=\oi$, let $\psi$ be the identity mapping.
From $d(\psi (A), \psi (B)) = d(A,B) = 2$, we deduce that $\psi(A) \in H_2$. Hence, the standard automorphism
\[
X \mapsto \psi (X) - \psi (A), \ \ \ X \in \oH,
\]
has the desired property.
\end{proof}

\begin{lemma}\label{pjanck}
Let $A,B \in \oH$ with $d(A,B) =2$. Then for any $X,Y \in \mathcal{S}_{A,B}$ we have either $X=Y$, or $d(X,Y) =2$.
\end{lemma}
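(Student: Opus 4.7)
The plan is to reduce to a normal form via Lemma \ref{pocelo} and then read off the conclusion from the definition of $\sim$ on $\oH$. Since $d(A,B)=2$, Lemma \ref{pocelo} supplies a standard automorphism $\varphi$ of $\oH$ with $\varphi(A)=0$ and $\varphi(B)=\oi$. Because $\varphi$ is bijective and preserves $\sim$ in both directions, it preserves the three-valued distance $d$ and carries $\mathcal{S}_{A,B}$ bijectively onto $\mathcal{S}_{0,\oi}$, so it suffices to verify the claim for the pair $(0,\oi)$.

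Next I would compute $\mathcal{S}_{0,\oi}$ directly from the bullet-point definition of $\sim$. Coherency with $\oi$ forces the candidate to lie in $\oH\setminus H_2$, i.e., be of the form $\infty P+aP^\perp$ with $P\in\cP$, $a\in\oR$, or equal $\oi$ itself. The option $X=\oi$ is ruled out because $\oi\not\sim 0$ (no element of $H_2$ is coherent to $\oi$), and the rule $\infty P+aP^\perp\sim 0$ (which says $P^\perp\cdot 0\cdot P^\perp=aP^\perp$) forces $a=0$. Hence $\mathcal{S}_{0,\oi}=\{\infty P\,:\,P\in\cP\}$.

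Finally, I would invoke the rule $\infty P+aP^\perp\sim\infty Q+bQ^\perp \iff P=Q$, applied with $a=b=0$: two elements $\infty P,\infty Q$ of $\mathcal{S}_{0,\oi}$ are coherent precisely when $P=Q$, i.e., when they coincide. Distinct elements of $\mathcal{S}_{0,\oi}$ therefore satisfy $d(X,Y)=2$, completing the argument. No real obstacle arises here; the lemma is essentially a bookkeeping statement once the normalization $(A,B)\to(0,\oi)$ is in place, and the only point requiring attention is that standard automorphisms genuinely preserve $d$ and the sets $\mathcal{S}_{\cdot,\cdot}$, which is immediate from bijectivity together with preservation of coherency in both directions.
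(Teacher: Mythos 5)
Your proof is correct and follows exactly the paper's route: normalize to $(0,\oi)$ via Lemma \ref{pocelo}, identify $\mathcal{S}_{0,\oi}=\{\infty P\,:\,P\in\cP\}$, and conclude from the rule that $\infty P\sim\infty Q$ iff $P=Q$. You have merely spelled out the details that the paper leaves as ``trivial.''
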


\begin{proof}
Applying Lemma \ref{pocelo}, we see that there is no loss of generality in assuming that $A=0$ and $B=\oi$. In that case, we have ${\mathcal{S}}_{A,B}=\{\infty P\,:\, P\in \cP\}$, and the desired conclusion follows trivially.
\end{proof}

\begin{lemma}\label{pocelo1}
Let $A,B \in \oH$. If $d(A,B) = 1$, then there is a standard automorphism $\varphi \colon \oH \to \oH$ satisfying
\[
\varphi (A) = 0\ \  \text{and}\ \ \varphi (B) = E_{11}=\left[ \begin{matrix} 1 &0 \cr 0 & 0\cr \end{matrix} \right].
\]
\end{lemma}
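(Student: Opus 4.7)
The plan is to reduce the claim to Lemma \ref{pocelo} by introducing an auxiliary point $C$ that is incoherent to both $A$ and $B$, and then correcting the image of $B$ by two affine automorphisms that fix the origin.

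The first step is to produce $C\in\oH$ with $d(A,C)=d(B,C)=2$. The union $\cC_A\cup \cC_B$ cannot exhaust $\oH$ for simple dimensional reasons: if a point $X$ lies in $H_2$, then $\cC_X\cap H_2=\{X+tR : t\in\bR,\, R\in\cP\}\cup\{X\}$ is a $3$-parameter family inside the $4$-dimensional space $H_2$; if $X=\infty P+aP^\perp$, then $\cC_X\cap H_2=\{Y\in H_2 : \mathrm{tr}(P^\perp Y)=a\}$ is an affine hyperplane; and $\cC_{\oi}\cap H_2=\emptyset$. Consequently, at least one $C\in H_2$ avoids both cones (and one may simply take $C=\oi$ when $A,B\in H_2$).

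Once $C$ is in hand, Lemma \ref{pocelo} applied to the pair $(A,C)$ yields a standard automorphism $\psi_1$ with $\psi_1(A)=0$ and $\psi_1(C)=\oi$. Setting $B':=\psi_1(B)$, the two-sided preservation of coherency by $\psi_1$ gives $B'\sim 0$ and $B'\not\sim \oi$, hence $B'\in \cC_0\cap H_2$. Since $B'\neq 0$ and any nonzero hermitian matrix of rank at most one has the form $tR$ with $R\in\cP$ and $t\in\bR\setminus\{0\}$, I conclude $B'=tR$.

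To finish, I compose with two affine automorphisms that fix $0$. First, pick a unitary $U$ with $URU^\ast=E_{11}$; the affine map $\psi_2(X)=UXU^\ast$ fixes $0$ and sends $tR$ to $tE_{11}$. Second, the affine map $\psi_3(X)=c\,SXS^\ast$ with $c=\mathrm{sgn}(t)\in\{-1,1\}$ and $S=|t|^{-1/2}I$ equals multiplication by $1/t$; it fixes $0$ and sends $tE_{11}$ to $E_{11}$. The composition $\psi_3\circ\psi_2\circ\psi_1$ is a composition of affine automorphisms and applications of Lemma \ref{pocelo}, so it is standard and satisfies the requirement. The only point requiring a small argument is the existence of $C$; the rest is a routine reduction to results already established.
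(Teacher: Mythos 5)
Your proof is correct, and it takes a slightly different route from the paper's. The paper first reduces to $A=0$ (using the same machinery as Lemma \ref{pocelo} to move an arbitrary point to the origin), writes $B=sP$ with $s\in\oR\setminus\{0\}$ and $P\in\cP$, and then splits into two cases: for finite $s$ it produces an invertible $T$ with $T(sP)T^\ast=\pm E_{11}$ and absorbs the sign into an affine automorphism, while for $s=\infty$ it composes with a standard automorphism sending $0\mapsto 0$, $\oi\mapsto I$ and invokes Lemma \ref{veryeasy} to conclude that $\infty P$ lands on a rank one projection, finishing with a unitary similarity. You eliminate this case split entirely by inserting an auxiliary point $C$ with $d(A,C)=d(B,C)=2$ and applying Lemma \ref{pocelo} to the pair $(A,C)$: since $B\not\sim C$, the image $\psi_1(B)$ is forced into $H_2$, so it is automatically a finite nonzero multiple of a projection and the infinity case never arises. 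The price is the existence argument for $C$, which you state somewhat informally ("simple dimensional reasons"); it is correct --- each of $\cC_A\cap H_2$ and $\cC_B\cap H_2$ is either empty, an affine hyperplane, or the zero set of a nonconstant quadratic polynomial on $H_2\cong\mathbb{R}^4$, so their union has empty interior --- but a referee might ask you to say this explicitly. Your final normalization $tE_{11}\mapsto E_{11}$ via $X\mapsto \mathrm{sgn}(t)\,|t|^{-1/2}IX|t|^{-1/2}I$ is a legitimate affine automorphism in the sense of Definition \ref{automorphisms}, so the composition is standard as required. Both arguments are sound; yours trades a case analysis for an auxiliary existence claim, and the paper's stays entirely within explicit matrix manipulations.
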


\begin{proof}
Without loss of generality, we may assume that $A=0$. Then $B= sP$ for some rank one projection $P$ and some nonzero $s \in {\oR}$. If $s\not=\infty$, then we can find an invertible $2 \times 2$ matrix $T$ such that $T(sP)T^\ast = \pm E_{11}$ and we are done. In the case where $s= \infty$, we can find a standard automorphism $\varphi_1$ of $\oH$ such that $\varphi_1 (0) = 0$ and $\varphi_1 ({\oi}) = I$. Because $\varphi_1 (\infty P) \sim \varphi_1 (0)=0$ and  $\varphi_1 (\infty P) \sim \varphi_1  ({\oi}) = I$, Lemma \ref{veryeasy} implies that $\varphi_1 (B) = Q$ for some rank one projection $Q$. Multiplying $\varphi_1$ with an appropriate unitary similarity, we get a standard automorphism $\varphi$ with the desired properties.
\end{proof}

Let $A,B \in H_2$ and $A \leq B$. We denote by $[A,B] \subset H_2$ \index{$AB$1@$[A,B]$} the matrix interval \index{matrix interval}
\[
[A,B] = \{ X \in H_2 \, : \, A \le X \le B \}.
\]
Further, let $(A,B) = \{ X \in H_2 \, : \, A < X < B \}$ \index{$AB$2@$(A,B)$} and $[A,B) = \{ X \in H_2 \, : \, A \le  X < B \}$\index{$AB$3@$[A,B)$}, $(A,B] = \{ X \in H_2 \, : \, A <  X \leq B \}$ \index{$AB$4@$(A,B]$} when $A<B$. 
In the language of special relativity, these correspond to the set of spacetime events in the future of a fixed event and in the past of another event, possibly with suitable boundary points.

\begin{lemma}\label{Porder}
Let $A, B, C, D\in H_2$ satisfy $A<B$ and $C<D$.
Then there is an affine automorphism $\psi$ of $\oH$ satisfying $\psi([A, B])=[C, D]$ and 
\[
X\leq Y\iff \psi(X)\leq \psi(Y),\quad X< Y\iff \psi(X)< \psi(Y)
\]
for any pair $X, Y\in H_2$.
\end{lemma}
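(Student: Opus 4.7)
The plan is to build $\psi$ as a composition of three affine automorphisms of the simplest possible types, and to check that each of them preserves the Loewner order (and the strict Loewner order) in both directions. First I would reduce to the case $A=C=0$ by using the translations $\tau_A(X)=X-A$ and $\tau_C(X)=X-C$, which are affine automorphisms by the second paragraph of Subsection \ref{automorphism}. Translations clearly satisfy $X\le Y\iff X-A\le Y-A$ and similarly for $<$, since $(Y-A)-(X-A)=Y-X$. So modulo pre- and post-composition with translations, the task becomes: given $B',D'\in H_2^{++}$, produce an affine automorphism $\rho$ of $\oH$ with $\rho([0,B'])=[0,D']$ which preserves $\le$ and $<$.

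For the second step, I would write $B'=S_1S_1^*$ and $D'=S_2S_2^*$ for invertible $2\times 2$ matrices $S_1,S_2$ (both $B'$ and $D'$ are positive invertible, hence admit invertible positive square roots). By Claim \ref{sas*} the maps $\sigma_i(X)=S_iXS_i^*$ are affine automorphisms of $\oH$, and by Claim \ref{SS-1} their inverses are $X\mapsto S_i^{-1}X(S_i^{-1})^*$. A key point is that for any invertible $S$, the map $X\mapsto SXS^*$ preserves the Loewner order in both directions: indeed $Y-X\ge 0$ is equivalent to $Y-X=R R^*$ for some $R$, and $S(Y-X)S^*=(SR)(SR)^*$, so $SXS^*\le SYS^*$; the converse follows by applying the same argument to the inverse map. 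The same invertibility argument shows that $Y-X$ is positive invertible iff $S(Y-X)S^*=SYS^*-SXS^*$ is, so $<$ is preserved in both directions as well.

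Hence $\sigma_1$ sends $[0,I]$ bijectively and order-isomorphically to $[0,B']$, and $\sigma_2$ sends $[0,I]$ bijectively and order-isomorphically to $[0,D']$. Setting $\rho=\sigma_2\circ\sigma_1^{-1}$ gives an affine automorphism (by Lemma \ref{simsim}) which maps $[0,B']$ onto $[0,D']$ and preserves $\le$ and $<$ in both directions. Finally, putting everything together, the map
\[
\psi(X)=\tau_C^{-1}\circ\rho\circ\tau_A(X)=S_2S_1^{-1}(X-A)(S_2S_1^{-1})^*+C
\]
is the desired affine automorphism: it sends $A$ to $C$ and $B$ to $D$, takes $[A,B]$ onto $[C,D]$, and preserves both $\le$ and $<$ in both directions by combining the three order-preservation statements above.

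There is no serious obstacle. The only thing one has to keep in mind is to obtain the equivalences $X\le Y\iff \psi(X)\le \psi(Y)$ and $X<Y\iff \psi(X)<\psi(Y)$ in both directions, and this is handled uniformly by noting that the inverse of an affine automorphism of the form $X\mapsto SXS^*+T$ is of the same form (with $S$ replaced by $S^{-1}$), so the argument for one direction applies verbatim to the other.
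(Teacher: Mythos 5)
Your proof is correct and follows essentially the same route as the paper: the paper also writes $\psi$ as a composition $\psi_2\circ\psi_1^{-1}$ with $\psi_i$ of the form $X\mapsto S_iXS_i^*+T_i$, taking $S_1=(B-A)^{1/2}$ and $S_2=(D-C)^{1/2}$, which is exactly your construction with that particular choice of square roots. The only difference is that you spell out the order-preservation of congruences and translations in more detail than the paper does.
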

\begin{proof}
Let us define the affine automorphism $\psi_1$ by 
\[
\psi_1(X)=(B-A)^{1/2}X(B-A)^{1/2} + A,\ \ X\in \oH.
\]
Then we have 
$\psi_1([0, I])=[A, B]$ and 
\[
X\leq Y\iff \psi_1(X)\leq \psi_1(Y),\quad X< Y\iff \psi_1(X)< \psi_1(Y)
\]
for any pair $X, Y\in H_2$.
Similarly, we define the affine automorphism $\psi_2$ by 
\[
\psi_2(X)=(D-C)^{1/2}X(D-C)^{1/2} + C,\ \ X\in \oH. 
\]
We see that the affine automorphism $\psi_2\circ\psi_1^{-1}$ satisfies the desired properties.
\end{proof}

\begin{corollary}\label{A<B}
Let $A, B\in H_2$ satisfy $A<B$. 
Then $\mathcal{S}_{A,B}\subset [A,B]\subset H_2$. 
\end{corollary}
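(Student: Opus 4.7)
The plan is to first normalize to the case $A=0$, $B=I$ via Lemma \ref{Porder}, and then read off the conclusion from Lemma \ref{mcmcv}.

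First, I would apply Lemma \ref{Porder} with $(C,D)=(0,I)$ (note $0<I$) to obtain an affine automorphism $\psi$ of $\oH$ with $\psi([A,B])=[0,I]$ that preserves the strict order $<$. Inspecting the construction in the proof of that lemma shows that the map $\psi_1(X)=(B-A)^{1/2}X(B-A)^{1/2}+A$ satisfies $\psi_1(0)=A$ and $\psi_1(I)=B$, while the analogous $\psi_2$ for $(C,D)=(0,I)$ is the identity. Hence the resulting $\psi=\psi_2\circ\psi_1^{-1}$ sends $A\mapsto 0$ and $B\mapsto I$. Being an affine automorphism, $\psi$ preserves coherency in both directions, and therefore sends $\mathcal{S}_{A,B}=\mathcal{C}_A\cap\mathcal{C}_B$ bijectively onto $\mathcal{S}_{0,I}$. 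Likewise it sends $[A,B]$ onto $[0,I]$. Consequently, the claim $\mathcal{S}_{A,B}\subset[A,B]$ is equivalent to $\mathcal{S}_{0,I}\subset[0,I]$.

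To handle the normalized case, I would apply Lemma \ref{mcmcv} with $a=1$ and any $P\in\mathcal{P}$: since $(1-a)P=0$ and $P+aP^\perp=I$, the lemma yields
\[
\mathcal{S}_{0,I}=\{\,R\,:\,R\in\mathcal{P}\,\}=\mathcal{P}.
\]
Every $R\in\mathcal{P}$ satisfies $0\le R\le I$, so $\mathcal{P}\subset[0,I]$, which completes the reduction and hence the proof. The inclusion $[A,B]\subset H_2$ is immediate from the definition of the matrix interval.

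There is no serious obstacle here; the only minor point that needs care is confirming that the specific $\psi$ produced by Lemma \ref{Porder} really maps the endpoints $A,B$ to $0,I$ (as opposed to merely sending the interval to the interval), which is transparent from the explicit formulas used in the construction.
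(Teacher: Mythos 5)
Your proof is correct and follows essentially the same route as the paper: reduce to $A=0$, $B=I$ via Lemma \ref{Porder} and then use $\mathcal{S}_{0,I}=\mathcal{P}\subset[0,I]$. The only cosmetic difference is that you derive $\mathcal{S}_{0,I}=\mathcal{P}$ from Lemma \ref{mcmcv} with $a=1$, whereas the paper takes it directly from Lemma \ref{veryeasy}; both are valid.
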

\begin{proof}
By the preceding lemma, it suffices to consider the case $A=0$ and $B=I$. 
In this case, we have $\mathcal{S}_{0,I}=\cP\subset [0,I]\subset H_2$, as desired.
\end{proof}

\begin{lemma}\label{order}
Let $C \in (0,I)$. 
There is a standard automorphism $\psi$ of $\oH$ such that 
\[
\psi([0, I])=[0, I],\ \  \psi(0)=0,\ \  \psi(I)=I,\ \  \text{and}\ \  \psi((1/2)I)=C.
\] 
\end{lemma}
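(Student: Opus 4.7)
The plan is to construct $\psi$ explicitly as a four-fold composition of standard automorphisms, consisting of two inversions flanking a translation and a positive congruence. Since $C\in(0,I)$ forces $C^{-1}>I$, the matrix $S:=(C^{-1}-I)^{1/2}\in H_2^{++}$ is well-defined, as is $T:=(C^{-1}-I)^{-1}-I=S^{-2}-I\in H_2$; being functional calculi of $C$, the matrices $S$ and $T$ commute. Define
\[
\psi(X):=\bigl(S(X^{-1}+T)S\bigr)^{-1},\qquad X\in\oH,
\]
which is the composition of the standard automorphisms $X\mapsto X^{-1}$, $X\mapsto X+T$, $X\mapsto SXS$ (using $S=S^*$), and $X\mapsto X^{-1}$, hence itself standard.

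The three pointwise identities then follow by direct algebra in $\oH$ using the arithmetic rules there. First, $\psi(0)=\oi^{-1}=0$. Next, $I+T=S^{-2}$ gives $S(I+T)S=I$, so $\psi(I)=I$. Finally, $S^2(2I+T)=(C^{-1}-I)\bigl(I+(C^{-1}-I)^{-1}\bigr)=C^{-1}$, so $\psi((1/2)I)=C$.

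The main content is the identity $\psi([0,I])=[0,I]$. For $X\in(0,I)$, the order-reversal of inversion gives $X^{-1}>I$, so $X^{-1}+T>I+T=S^{-2}$, and applying the positive congruence $S(\cdot)S$ gives $S(X^{-1}+T)S>I$; inverting yields $0<\psi(X)<I$. The explicit formula $\psi^{-1}(Y)=(S^{-1}Y^{-1}S^{-1}-T)^{-1}$, combined with the identity $S^{-2}-T=I$, gives the analogous inclusion $\psi^{-1}((0,I))\subseteq(0,I)$, so $\psi$ restricts to a bijection of $(0,I)$ onto itself.

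The remaining, most delicate step is the extension to the closure. I plan to invoke continuity of $\psi$ on $\oH$ (each of its four building blocks is continuous under the Cayley identification $\oH\cong U_2$) together with the observation that $(0,I)$ is bounded in $H_2$, so its closure in $\oH$ coincides with $[0,I]$. Then $\psi([0,I])\subseteq\overline{\psi((0,I))}=\overline{(0,I)}=[0,I]$, and the analogous statement for $\psi^{-1}$ yields the reverse inclusion. This topological extension is the main obstacle in the exposition; it could alternatively be sidestepped by computing $\psi$ directly on boundary matrices of the forms $aP$ and $P+bP^\perp$ ($P\in\cP$, $a,b\in[0,1]$) using the $\oH$-arithmetic rules for $\infty$-terms, but that calculation is considerably more cumbersome.
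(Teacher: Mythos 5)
Your map is, after regrouping, exactly the automorphism the paper uses: writing $D=C^{-1}-I$, your $\psi(X)=\bigl(S(X^{-1}+T)S\bigr)^{-1}$ equals $\bigl(D^{1/2}(X^{-1}-I)D^{1/2}+I\bigr)^{-1}$, i.e.\ the conjugate $\psi_1^{-1}\circ\psi_2\circ\psi_1$ with $\psi_1(X)=X^{-1}-I$ and $\psi_2(X)=D^{1/2}XD^{1/2}$. The three pointwise identities and the inclusion $\psi\bigl((0,I)\bigr)\subset(0,I)$ (together with the same bound for $\psi^{-1}$) are verified correctly.

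The step that is not closed is the passage from $(0,I)$ to $[0,I]$. Your primary route invokes continuity of each building block on all of $\oH$ under the identification $\oH\cong U_2$. For the inversion this is implicit in Claim \ref{cl} (it becomes $U\mapsto -U^*$), but for the translation $X\mapsto X+T$ and the congruence $X\mapsto SXS$ continuity at the points of $\oH\setminus H_2$ is a genuine assertion that is nowhere established in the paper; indeed the paper explicitly remarks that it avoids needing that automorphisms of $\oH$ are homeomorphisms. So as written this is a gap (a true but unproven lemma), and your fallback --- computing $\psi$ on the boundary matrices directly --- is only sketched. The paper's own verification of $\psi([0,I])=[0,I]$ sidesteps both issues: one checks directly from the $\oR$-arithmetic that $\psi_1(aP+bP^\perp)=(a^{-1}-1)P+(b^{-1}-1)P^\perp$, so that $\psi_1$ maps $[0,I]$ \emph{onto} $\oHp=\{aP+bP^\perp : a,b\in[0,\infty],\ P\in\cP\}$ including all boundary and infinite points; the positive congruence $\psi_2$ visibly preserves $\oHp$ (it preserves positivity on $H_2$ and sends $\infty P+bP^\perp$ with $b\in[0,\infty]$ to a point of the same form, as in Lemma \ref{nasunek}); and applying $\psi_1^{-1}$ returns $\oHp$ to $[0,I]$. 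I recommend replacing the continuity argument by this set-theoretic transport through $\oHp$, which requires no topology and handles the closed interval in one stroke.
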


\begin{proof}
Consider the standard automorphism $\psi_1$ of $\oH$ defined by $\psi_1(X)=X^{-1}-I$, $X\in \oH$. 
It is easily seen by the definition of the mapping that 
\[
\psi_1(0)=\oi,\ \ \psi_1(I)=0,\ \  \psi_1((1/2)I)=I,
\]
\[
\psi_1([0, I])=\oHp:=\{aP+bP^{\perp}\,:\, a, b\in [0, \infty],\, P\in \mathcal{P}\}, \index{$Hp$@$\oHp$}
\] and that $\psi_1(C)$ is a positive invertible matrix in $H_2$. 
Consider another standard automorphism $\psi_2$ defined by $\psi_2(X)=\psi_1(C)^{1/2}X\psi_1(C)^{1/2}$, $X\in \oH$. 
It satisfies 
\[
\psi_2(0)=0,\ \ \psi_2(\oi)=\oi,\ \ \psi_2(I)=\psi_1(C),\ \ \text{and}\ \  \psi_2(\oHp)=\oHp.
\] 
It follows that the composition $\psi_1^{-1}\circ\psi_2\circ \psi_1$ satisfies the desired property.
\end{proof}

The above two lemmas clearly imply the following.

\begin{corollary}\label{cororder}
Let $A, B, C, A',B',C'\in H_2$ satisfy $A<C<B$ and $A'<C'<B'$.
Then there is a standard automorphism $\psi$ of $\oH$ satisfying 
\[
\psi([A, B])=[A', B'],\ \  \psi(A)=A',\ \  \psi(B)=B',\ \  \text{and}\ \  \psi(C)=C'.
\] 
\end{corollary}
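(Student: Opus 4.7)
The plan is to reduce everything to a common reference configuration inside $[0,I]$ and then glue two reductions together, using the fact that standard automorphisms form a group (stated just after Definition \ref{automorphisms}).

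First I would use Lemma \ref{Porder} to produce an affine automorphism $\psi_1$ with $\psi_1([A,B]) = [0,I]$ and, by specializing the construction there, $\psi_1(A) = 0$, $\psi_1(B) = I$. Since $\psi_1$ preserves the strict order $<$, the element $C$, which satisfies $A < C < B$, is sent to some $C_1 \in (0,I)$. Applying the same lemma to the primed data yields an affine automorphism $\psi_1'$ with $\psi_1'([A',B']) = [0,I]$, $\psi_1'(A') = 0$, $\psi_1'(B') = I$, and $\psi_1'(C') =: C_1' \in (0,I)$.

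Next I would invoke Lemma \ref{order} twice. Applied with $C = C_1 \in (0,I)$, it produces a standard automorphism $\psi_2$ fixing $[0,I]$ (setwise), $0$ and $I$, and sending $(1/2)I$ to $C_1$. Applied with $C = C_1'$, it produces a standard automorphism $\psi_3$ fixing $[0,I]$, $0$ and $I$, and sending $(1/2)I$ to $C_1'$. Now I would form the composition
\[
\psi \;=\; (\psi_1')^{-1} \circ \psi_3 \circ \psi_2^{-1} \circ \psi_1,
\]
which is a standard automorphism because the set of standard automorphisms is a group. Tracking the images: $A \mapsto 0 \mapsto 0 \mapsto 0 \mapsto A'$; similarly $B \mapsto I \mapsto I \mapsto I \mapsto B'$; and $C \mapsto C_1 \mapsto (1/2)I \mapsto C_1' \mapsto C'$. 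The chain also gives $\psi([A,B]) = (\psi_1')^{-1}(\psi_3(\psi_2^{-1}([0,I]))) = (\psi_1')^{-1}([0,I]) = [A',B']$, which is exactly the required conclusion.

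There is no real obstacle here; the only point to be careful about is that $\psi_1(C)$ and $\psi_1'(C')$ really land in the open interval $(0,I)$, so that Lemma \ref{order} is applicable. This is immediate from the order-preserving property guaranteed by Lemma \ref{Porder}, which sends the strict inequalities $A<C<B$ and $A'<C'<B'$ to $0 < \psi_1(C) < I$ and $0 < \psi_1'(C') < I$.
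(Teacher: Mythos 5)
Your proof is correct and is exactly the argument the paper intends: the paper simply states that Lemmas \ref{Porder} and \ref{order} ``clearly imply'' the corollary, and your reduction to $[0,I]$ followed by the composition $(\psi_1')^{-1}\circ\psi_3\circ\psi_2^{-1}\circ\psi_1$ is the natural way to make that implication explicit. The one detail you rely on --- that $\psi_1(A)=0$ and $\psi_1(B)=I$ --- indeed follows either from the explicit formula in the proof of Lemma \ref{Porder} or from the order-preservation in its statement, so there is no gap.
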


\subsection{Lines}\label{line}\index{line}
A subset $\mathcal{A} \subset \oH$ is said to be coherent \index{coherent set} if $A \sim B$ for every pair $A,B \in \mathcal{A}$. A maximal coherent set is called a \emph{line}\index{line}. Clearly, if $\varphi \colon \oH \to \oH$ is an automorphism and $\ell \subset \oH$ is a line, then $\varphi ( \ell)$ is a line as well.

\begin{lemma}\label{pq}
Let $a,b \in {\oR}\setminus\{0\}$ and $P,Q \in \mathcal{P}$. 
Then $aP \sim bQ$ if and only if $P=Q$.
\end{lemma}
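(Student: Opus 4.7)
The plan is to prove both directions by a straightforward case analysis, splitting according to whether the scalars $a,b$ are finite or equal to $\infty$. In every case the desired equivalence will follow from either Lemma \ref{apbq} or directly from the definition of the extended coherency relation on $\oH$. The backward implication ($P=Q$ yielding $aP\sim bQ$) is essentially immediate in each case; the work is in the forward direction.

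For the main case, assume $a,b\in\mathbb{R}\setminus\{0\}$, so that $aP,bQ\in H_2$. If $P=Q$, then $aP-bQ=(a-b)P$ has rank at most one and $aP\sim bQ$. Conversely, suppose $P\neq Q$. Applying Lemma \ref{apbq} with the nonzero complex coefficients $a$ and $-b$, the matrix $aP+(-b)Q=aP-bQ$ is invertible, hence of rank $2$, so $aP\not\sim bQ$.

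Next consider the mixed case $a=\infty$, $b\in\mathbb{R}\setminus\{0\}$ (the symmetric case is identical). Here $aP=\infty P+0\cdot P^\perp$ and $bQ\in H_2$, so by the definition of $\sim$ on $\oH$ we have $aP\sim bQ$ if and only if $P^\perp(bQ)P^\perp=0\cdot P^\perp$, i.e. $P^\perp Q P^\perp=0$ since $b\neq 0$. Writing $Q=xx^\ast$ for a unit vector $x$, this reads $\|P^\perp x\|^2=0$, i.e. $x\in\mathrm{Range}(P)$, which (since $P$ and $Q$ are both rank-one projections onto unit vectors) forces $Q=P$.

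Finally, when $a=b=\infty$ we have $aP=\infty P+0\cdot P^\perp$ and $bQ=\infty Q+0\cdot Q^\perp$, and the third bullet in the definition of $\sim$ on $\oH$ gives $aP\sim bQ$ if and only if $P=Q$. There is no substantial obstacle here; the only subtle point is making sure Case~2 is invoked correctly, since the rule that $\infty P+aP^\perp\sim A$ iff $P^\perp AP^\perp=aP^\perp$ is not symmetric in $P$ and $Q$, so one must read off the projection attached to the ``infinite'' side rather than the one attached to the hermitian matrix side.
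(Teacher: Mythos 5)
Your proof is correct and follows essentially the same route as the paper: Lemma \ref{apbq} (with coefficients $a$ and $-b$) for the case of two finite nonzero scalars, unwinding the definition of $\sim$ on $\oH$ for the mixed case to get $P^\perp QP^\perp=0$ and hence $P=Q$, and the definitional equivalence for the doubly infinite case. The extra care you take about which projection carries the ``infinite'' coefficient is sensible but not a point of divergence.
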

\begin{proof}
It is clear that $P=Q$ implies $aP \sim bQ$.
Let us prove that $aP \sim bQ$ implies $P=Q$.
Lemma \ref{apbq} verifies this when $a,b\in \mathbb{R}\setminus\{0\}$. 
Assume that $b\in \mathbb{R}\setminus\{0\}$ and $\infty P\sim bQ$. 
Then we have $bP^\perp QP^\perp =0$, thus $P=Q$. 
Similarly, $a\in \mathbb{R}\setminus\{0\}$ and $a P\sim \infty Q$ imply $P=Q$. 
Finally, it is trivial that $\infty P=\infty Q$ implies $P=Q$.
\end{proof}

\begin{lemma}\label{jojhcer}
Let $\ell$ be a line. Then either there exist $A\in H_2$ and $P \in \mathcal{P}$ such that
\[
\ell = \{ aP+A \, : \, a \in \oR \}
\]
or there exists $P \in \mathcal{P}$ such that
\[
\ell = \{ \infty P + aP^\perp \, : \, a \in {\oR}  \}.
\]
\end{lemma}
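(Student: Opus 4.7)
The plan is to pick two distinct points $A, B \in \ell$ (which exist since any singleton is easily enlarged to a coherent pair), compute $\cC_A \cap \cC_B$ in closed form, and then combine $\ell \subseteq \cC_A \cap \cC_B$ (from pairwise coherency of $\ell$) with maximality of $\ell$ to force $\ell = \cC_A \cap \cC_B$. Throughout I may apply translations $X \mapsto X - A$ (for $A \in H_2$) and unitary similarities, which are affine automorphisms that send lines to lines and send each asserted form to a set of the same form.

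Case 1: $\ell \cap H_2 \neq \emptyset$. Pick $A \in \ell \cap H_2$ and translate to $A = 0$; pick any $B \in \ell \setminus \{0\}$. Since $B \sim 0$ and $\oi \not\sim 0$, either $B = aP$ with $a \in \mathbb{R}\setminus\{0\}$ and $P \in \cP$, or $B = \infty P$ with $P \in \cP$. I claim $\cC_0 \cap \cC_B = \{tP : t \in \oR\}$. The $\supseteq$ direction is immediate from the coherency rules. For $\subseteq$, let $X \in \cC_0 \cap \cC_B$. From $X \sim 0$, either $X = sQ$ for some $s \in \oR$ and $Q \in \cP$ (with $s = 0$ giving $X = 0$), or $X = \infty Q$ (already included by $s = \infty$). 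If $s \in \mathbb{R}\setminus\{0\}$, Lemma~\ref{apbq} applied to $X - B = sQ + (\text{scalar})P$ rules out $Q \neq P$, so $Q = P$ and $X \in \{tP : t \in \oR\}$. If $s = \infty$, the coherency rule for $\infty Q \sim B$ forces $Q = P$ (writing $B = aP$ or $B = \infty P$ and examining $Q^\perp B Q^\perp$). This proves the claim. Translating back then yields $\ell = \{tP + A : t \in \oR\}$, the first stated form.

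Case 2: $\ell \cap H_2 = \emptyset$. Every element of $\ell$ lies in $\oH \setminus H_2$ and hence has the form $\infty P + aP^\perp$ (with $a = \infty$ corresponding to $\oi$). Since $\ell$ has at least two elements, it contains some $A \neq \oi$; write $A = \infty P + aP^\perp$ with $a \in \mathbb{R}$. Then every $B \in \ell \setminus \{\oi\}$ satisfies $A \sim B$, which by the coherency rule between elements of $\oH \setminus H_2$ forces $B$ to share the leading projection $P$. Hence $\ell \subseteq \{\infty P + tP^\perp : t \in \oR\}$, where the right-hand side contains $\oi$ at $t = \infty$. That right-hand side is itself coherent (every pair of its elements shares the projection $P$), so maximality of $\ell$ forces equality, giving the second form.

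The main obstacle—beyond routine case-chasing—is keeping precise track of the $\oi$ corner: one must confirm that $\oi \not\sim A$ whenever $A \in H_2$, and that $\oi$ lies in every line of the second form but in no line of the first form (since $\infty P + A = \infty P + \mathrm{tr}(P^\perp A) P^\perp \neq \oi$). Apart from this bookkeeping and the one appeal to Lemma~\ref{apbq}, every step follows immediately from the coherency rules introduced in Section~\ref{compactification}.
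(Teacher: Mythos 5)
Your proof is correct and follows essentially the same route as the paper: split on whether $\ell$ meets $H_2$, translate to $0$ in the first case and use Lemma~\ref{apbq}/Lemma~\ref{pq} plus maximality to pin down a single projection $P$, and in the second case use the coherency rule on $\oH\setminus H_2$ plus maximality. The only cosmetic difference is that you compute $\cC_0\cap\cC_B$ for a chosen second point $B$ rather than directly invoking Lemma~\ref{pq} on the whole coherent set, which amounts to the same argument.
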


\begin{proof}
We first consider the possibility that there exists $A \in \ell \cap H_2$. Then $\ell' = \ell - A = \{ X - A \, : \, X \in \ell \}$ is a line that contains $0$. 
We know that $\mathcal{C}_0 = \{ aP \, : \, P \in \mathcal{P}, \ \, a \in {\oR} \}$. Since $\ell'$ is a maximal coherent set, the preceding lemma implies that $\ell' = \{ aP \, : \, a \in {\oR} \}$ for some $P \in \mathcal{P}$. 
Hence 
\[
\ell = \ell' + A =  \{ aP+A \, : \, a \in \oR \}.
\]

The remaining possibility is that $\ell \subset \oH \setminus H_2$. Since $\ell$ is a maximal coherent set, there exists a point in $\ell$ that is different from ${\oi}$. Thus, there exists $P \in \mathcal{P}$ and $a_0 \in \mathbb{R}$ such that $\infty P + a_0 P^\perp \in \ell$. The set of all points in $ \oH \setminus H_2$ that are coherent to $\infty P + a_0 P^\perp$ is $ \{ \infty P + aP^\perp \, : \, a \in {\oR}  \}$. It is easy to see that this is a maximal coherent set. Hence this set is equal to $\ell$.
\end{proof}

\begin{lemma}\label{simpll}
Let $A,B \in \oH$ be such that $d(A,B) = 1$. Then there exists exactly one line $\ell$ in $\oH$ such that $A,B \in \ell$, which equals $\cC_A\cap \cC_B$.
\end{lemma}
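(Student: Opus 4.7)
The plan is to reduce to a very concrete normal form via Lemma \ref{pocelo1}, in which the line can be written down explicitly. Since $d(A,B)=1$, Lemma \ref{pocelo1} provides a standard automorphism $\varphi$ of $\oH$ with $\varphi(A)=0$ and $\varphi(B)=E_{11}$. Because standard automorphisms preserve coherency in both directions, they send lines to lines bijectively and satisfy $\varphi(\cC_A\cap \cC_B)=\cC_0\cap \cC_{E_{11}}$. Hence it suffices to prove the statement for $A=0$ and $B=E_{11}$.

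The candidate line is $\ell_0:=\{aE_{11}\,:\,a\in\oR\}$, which by Lemma \ref{jojhcer} is indeed a line (take $A=0$, $P=E_{11}$) and obviously contains $0$ and $E_{11}$. The key computation is to verify $\cC_0\cap \cC_{E_{11}}=\ell_0$. Take $X\in \cC_0$. Since ${\oi}\not\sim E_{11}$, we have $X\neq {\oi}$, so by the description of $\cC_0$ recorded just after the definition of $\sim$, either $X=aP$ with $a\in \mathbb R$ and $P\in\cP$, or $X=\infty P$ for some $P\in\cP$. In the first case with $a\neq 0$, the condition $X\sim E_{11}$ means $\mathrm{rank}(aP-E_{11})\le 1$, and Lemma \ref{apbq} (applied to $aP+(-1)E_{11}$ with both coefficients nonzero) forces $P=E_{11}$, so $X\in\ell_0$. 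The case $a=0$ gives $X=0\in\ell_0$. In the second case, by the definition of $\sim$ on $\oH\setminus H_2$, the condition $\infty P\sim E_{11}$ is equivalent to $P^\perp E_{11}P^\perp=0$, i.e.\ $P^\perp e_1=0$, which forces $P=E_{11}$, hence $X=\infty E_{11}\in\ell_0$. The reverse inclusion $\ell_0\subset\cC_0\cap\cC_{E_{11}}$ is immediate since all scalar multiples of $E_{11}$ are pairwise coherent.

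Once the equality $\cC_0\cap\cC_{E_{11}}=\ell_0$ is established, both existence and uniqueness follow easily: any line $\ell$ containing $0$ and $E_{11}$ is, by definition, a coherent set, so $\ell\subset\cC_0\cap\cC_{E_{11}}=\ell_0$, and maximality of $\ell$ gives $\ell=\ell_0$. Transporting back via $\varphi^{-1}$ yields the desired conclusion in general.

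The only mildly delicate point — and the place where one must be careful — is handling the points outside $H_2$ in the cone computation, since the coherency relation is defined case by case; but the formulas listed right after the definition of $\sim$, together with Lemma \ref{apbq}, dispose of every case in a couple of lines, so no real obstacle remains.
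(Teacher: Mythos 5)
Your proof is correct and follows essentially the same route as the paper: reduce to $A=0$, $B=E_{11}$ via Lemma \ref{pocelo1} and identify $\cC_0\cap\cC_{E_{11}}$ with $\{aE_{11}\,:\,a\in\oR\}$. The only difference is cosmetic — the paper cites Lemma \ref{pq} for this intersection, whereas you unpack that lemma's content directly from Lemma \ref{apbq} and the definition of $\sim$ on $\oH\setminus H_2$.
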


\begin{proof}
Let $\varphi \colon \oH \to \oH$ be an automorphism. Then there exists exactly one line passing through $A$ and $B$ if and only if there exists exactly one line passing through $\varphi(A)$ and $\varphi(B)$. Thus, by Lemma \ref{pocelo1},
with no loss of generality we may assume that $A=0$ and $B=E_{11}$. 
In that case, it is obvious from Lemma \ref{pq} that $\cC_A\cap\cC_B = \{ a E_{11} \, : \, a\in {\oR} \}$ is the unique line that contains both $A$ and $B$.
\end{proof}

\begin{lemma}\label{simplll}
Let $\ell$ be a line in $\oH$ and $A$ a point in $\oH$ such that $A\not\in \ell$. Then there is exactly one point on the line $\ell$ that is coherent to $A$.
\end{lemma}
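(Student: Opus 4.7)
The proof naturally splits into uniqueness and existence.

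For uniqueness, suppose two distinct points $X, Y \in \ell$ were both coherent to $A$. Since $X$ and $Y$ both lie on the line $\ell$ we have $X \sim Y$, so $\{A, X, Y\}$ is a coherent set. Extending it to a maximal coherent set produces a line, which by Lemma \ref{simpll} must coincide with $\ell$ (the unique line through the distinct coherent pair $X, Y$). Hence $A \in \ell$, contrary to hypothesis.

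For existence, my plan is to reduce to a concrete normalization using a standard automorphism. Lemma \ref{jojhcer} splits lines into two shapes: those meeting $H_2$, and those contained entirely in $\oH \setminus H_2$. A translation followed by a unitary similarity brings a line of the first shape to $\ell_0 := \{aE_{11} : a \in \oR\}$, while the inversion $X \mapsto X^{-1}$ (a standard automorphism) interchanges the two shapes, so every line is the image of $\ell_0$ under some standard automorphism. Since such automorphisms preserve coherency in both directions, it suffices to produce, for any $A \in \oH \setminus \ell_0$, some $a \in \oR$ with $aE_{11} \sim A$.

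With $\ell = \ell_0$ the existence becomes a routine case analysis using the rules for $\sim$ recalled in Section \ref{compactification}. If $A = \left[\begin{matrix} x & y \\ \overline{y} & z \end{matrix}\right] \in H_2$, then $\det(A - aE_{11}) = (x-a)z - |y|^2$, which vanishes for $a = x - |y|^2/z$ when $z \ne 0$; when $z=0$ the hypothesis $A \notin \ell_0$ forces $y \ne 0$, and the point $\infty E_{11}$ works since $E_{22} A E_{22} = 0$. If $A = \oi$, then $\oi \sim \infty E_{11}$. Finally, if $A = \infty Q + bQ^\perp$ with $Q \in \cP$ and $b \in \mathbb{R}$, the coherency condition reduces to $a\,(Q^\perp)_{11} = b$, which yields a finite $a$ when $(Q^\perp)_{11} \ne 0$; otherwise $Q = E_{11}$ and the infinity point $\infty E_{11} \in \ell_0$ itself is coherent to $A$ by the rule governing pairs of infinity points. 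The main obstacle is merely the bookkeeping across these subcases; no deeper idea is needed, because uniqueness is already built into Lemma \ref{simpll} and the reduction to $\ell_0$ absorbs all the geometry.
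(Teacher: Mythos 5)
Your proof is correct, but it normalizes the opposite object from the paper: you move the \emph{line} to the standard position $\ell_0=\{aE_{11}:a\in\oR\}$ and then run a case analysis on $A$, whereas the paper moves the \emph{point}, assuming without loss of generality that $A=\oi$. With $A=\oi$ the whole lemma collapses into one line: since $\oi\notin\ell$, Lemma \ref{jojhcer} forces $\ell=\{aP+B:a\in\oR\}$ with $B\in H_2$, and $\mathcal{C}_{\oi}=\oH\setminus H_2$ meets this line in the single point $\infty P+B=\infty P+\mathrm{tr}\,(P^\perp B)P^\perp$; existence and uniqueness come out simultaneously with no subcases. Your route costs more bookkeeping (three shapes of $A$, each with a degenerate subcase), but it is self-contained at the level of the coherency rules and makes explicit which point of $\ell_0$ is the coherent one in each configuration, which is occasionally useful information. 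Your uniqueness argument is fine, though it can be shortened: Lemma \ref{simpll} already identifies $\ell_{X,Y}$ with $\mathcal{C}_X\cap\mathcal{C}_Y$, so two distinct coherent points $X,Y\in\ell$ would give $A\in\mathcal{C}_X\cap\mathcal{C}_Y=\ell$ directly, without passing through an extension to a maximal coherent set. One small point worth stating explicitly in your reduction: when you claim every line is a standard-automorphism image of $\ell_0$, you are using that the inversion sends $\ell_0$ to $\{a^{-1}E_{11}+\infty E_{22}:a\in\oR\}$, a line of the second shape, and that unitary similarities act transitively on $\cP$; both are immediate from Subsection \ref{automorphism}, so the gap is only expository.
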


\begin{proof}
With no loss of generality, we may assume that $A = \oi$. 
From $A = \oi \not\in \ell$ and Lemma \ref{jojhcer}, we see that there are
$B\in H_2$ and $P \in \mathcal{P}$ such that
\[
\ell = \{ aP+B  \, : \, a \in \oR \},
\]
and then clearly, $\infty P+B = \infty P + \mathrm{tr}\, (P^\perp B) P^\perp$ is the unique point on $\ell$ that is coherent to $A = \oi$.
\end{proof}

For $A,B \in \oH$ with $d(A,B) = 1$, we denote by $\ell_{A,B}$ \index{$l_A,B$@$\ell_{A,B}$} the unique line passing through $A$ and $B$. Note that by Lemma \ref{simpll} we have $\ell_{A,B} = \mathcal{C}_A \cap \mathcal{C}_B$. 
Denoting $\mathcal{C}_A \cap \mathcal{C}_B$ with two different symbols (depending on the distance between $A$ and $B$) makes sense because the sets $\mathcal{S}_{A,B}$ and $\ell_{A,B}$ are of completely different shapes. 

\begin{remark}\label{ceruk}
We will quite often use the following observation. Assume that $A,B,C \in \oH$ are points such that $d(A,B) =1$ and $d(A,C) =2$. Then there is a unique point $D \in \oH$ that is coherent to $A$ and $B$ and $C$. Indeed, let us denote by $\ell$ the unique line passing through $A$ and $B$. Then, clearly $C \not\in \ell$ and there is a unique $D \in \ell$ such that $D \sim C$.
\end{remark}

\begin{corollary}\label{veryeasy2}
Let $s,t$ be nonzero distinct real numbers and $Q \in \mathcal{P}$. Assume that $A\in \oH$ satisfies $A \sim sI$, $A\sim tI$, and $A\sim sQ$. Then $A = sQ + tQ^\perp$.
\end{corollary}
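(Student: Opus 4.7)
The plan is to reduce the statement to Remark \ref{ceruk} by identifying suitable distances among the three given points, and then to verify directly that $sQ + tQ^\perp$ is coherent to each of them.

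First, I would compute the mutual distances of the points $sI$, $tI$, $sQ$ in $\oH$. Since $s \neq t$ and $(s-t)I$ is invertible, we have $d(sI, tI) = 2$. Since $sI - sQ = sQ^\perp$ has rank one, $d(sI, sQ) = 1$. Finally, $tI - sQ = (t-s)Q + tQ^\perp$, which is invertible by Lemma \ref{apbq} (the coefficients $t-s$ and $t$ are both nonzero), so $d(tI, sQ) = 2$.

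Now I would invoke Remark \ref{ceruk} with the triple $(sI, sQ, tI)$: from $d(sI, sQ) = 1$ and $d(sI, tI) = 2$, there is a unique point $D \in \oH$ coherent to all three of $sI$, $sQ$, and $tI$. By hypothesis $A$ is such a point, so $A = D$.

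To finish, I would verify that $sQ + tQ^\perp$ plays the role of $D$ by computing the three differences:
\[
(sQ + tQ^\perp) - sI = (t-s)Q^\perp, \quad (sQ + tQ^\perp) - tI = (s-t)Q, \quad (sQ + tQ^\perp) - sQ = tQ^\perp,
\]
each of rank at most one. Hence $sQ + tQ^\perp$ is coherent to $sI$, $tI$, and $sQ$, and uniqueness forces $A = sQ + tQ^\perp$. There is no real obstacle here; the only subtle point is confirming that all three pairs of reference points really have the required distances, which uses the hypothesis that $s$ and $t$ are nonzero and distinct together with Lemma \ref{apbq}.
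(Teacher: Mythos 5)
Your proof is correct and is essentially the paper's argument in a slightly different packaging: the paper notes that $\{sQ, sI, A\}$ is coherent, hence $A$ lies on the line $\ell_{sQ,sI}=\{sQ+aQ^\perp : a\in\oR\}$, and then solves $A\sim tI$ for $a=t$, while you invoke Remark \ref{ceruk} (whose proof is exactly that line argument) to get uniqueness and then verify the candidate $sQ+tQ^\perp$. The extra computation $d(tI,sQ)=2$ is not needed for Remark \ref{ceruk} but is harmless.
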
 
\begin{proof}
Since the set $\{sQ, sI, A\}$ is coherent, it is contained in the line $\ell_{sQ,sI}$, thus we have $A = sQ + aQ^\perp$ for some $a\in {\oR}$. 
By $A\sim tI$ we obtain $a=t$.
\end{proof}

\begin{corollary}\label{triangle}
Let $\ell_1 , \ell_2 , \ell_3$ be pairwise different lines in $\oH$.  Assume that $\ell_1 \cap \ell_2 \not= \emptyset$, 
$\ell_1 \cap \ell_3 \not= \emptyset$,  and $\ell_2 \cap \ell_3 \not= \emptyset$. Then there exists $A \in \oH$ such that
\[
\ell_1 \cap \ell_2 = \ell_1 \cap \ell_3 =\ell_2 \cap \ell_3  = \{ A \}.
\]
\end{corollary}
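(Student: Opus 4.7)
The plan is to first show that any two distinct lines share at most one point, and then use that together with Lemma \ref{simpll} to collapse the three pairwise intersections into a single point.

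First I would verify that for distinct lines $\ell, \ell'$ in $\oH$, the intersection $\ell \cap \ell'$ has at most one element. Indeed, if $X, Y \in \ell \cap \ell'$ are distinct, then both lie on $\ell$, so $X \sim Y$ and hence $d(X,Y) = 1$; by Lemma \ref{simpll} there is exactly one line through $X$ and $Y$, which forces $\ell = \ell'$. Applying this to the three given pairs, each intersection $\ell_i \cap \ell_j$ (for $i \ne j$) is a singleton, and I denote
\[
\{A_{12}\} = \ell_1 \cap \ell_2, \qquad \{A_{13}\} = \ell_1 \cap \ell_3, \qquad \{A_{23}\} = \ell_2 \cap \ell_3.
\]

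The remaining task is to show $A_{12} = A_{13} = A_{23}$. I would argue by contradiction assuming, say, $A_{12} \ne A_{13}$. Since both points lie on $\ell_1$ and are distinct, Lemma \ref{simpll} gives
\[
\ell_1 = \cC_{A_{12}} \cap \cC_{A_{13}}.
\]
Now $A_{23}$ lies on $\ell_2$ together with $A_{12}$, so $A_{23} \sim A_{12}$; similarly $A_{23}$ lies on $\ell_3$ together with $A_{13}$, so $A_{23} \sim A_{13}$. Therefore $A_{23} \in \cC_{A_{12}} \cap \cC_{A_{13}} = \ell_1$. Combined with $A_{23} \in \ell_2$, this yields $A_{23} \in \ell_1 \cap \ell_2 = \{A_{12}\}$, so $A_{23} = A_{12}$; but then $A_{12} = A_{23} \in \ell_3$, so $A_{12} \in \ell_1 \cap \ell_3 = \{A_{13}\}$, contradicting $A_{12} \ne A_{13}$.

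Hence $A_{12} = A_{13}$, and a symmetric step (or the direct observation that $A_{12} = A_{13} \in \ell_2 \cap \ell_3 = \{A_{23}\}$) gives $A_{12} = A_{13} = A_{23}$. Setting $A$ to be this common value proves the corollary. There is no real obstacle here: the whole argument is a straightforward consequence of the uniqueness part of Lemma \ref{simpll}, and the only mildly delicate point is to track that the three lines are pairwise distinct, which is exactly what prevents two of the intersection points from being equal without forcing the third to coincide with them as well.
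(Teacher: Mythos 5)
Your proof is correct and follows essentially the same route as the paper: both arguments reduce to showing that if two of the pairwise intersection points were distinct, the third would be forced onto all three lines, yielding a contradiction. The only cosmetic difference is that you invoke the identity $\ell_1 = \cC_{A_{12}} \cap \cC_{A_{13}}$ from Lemma \ref{simpll} where the paper cites Lemma \ref{simplll} to place the third point on $\ell_1$; these are interchangeable here.
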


\begin{proof}
By  Lemma \ref{simpll}, the intersection of any two  of the above three lines is a singleton. Denote $\ell_1 \cap \ell_2 = \{ A \}$,  $\ell_1 \cap \ell_3 = \{ B \}$, and $ \ell_2 \cap \ell_3  = \{ C \}$. We need to show that $A=B=C$. Assume on the contrary that this is not true, say $A\not=B$. Then $A,B$ are two distinct points on $\ell_1$ and $C$ is coherent to both of them. By Lemma \ref{simplll}, we have $C \in \ell_1$, too. 
Thus 
\[
C\in \ell_1\cap \ell_2\cap \ell_3 = (\ell_1\cap \ell_2)\cap (\ell_1 \cap \ell_3)=\{A\}\cap\{B\} = \emptyset,
\]
a contradiction.
\end{proof}

\subsection{Surfaces}\label{ssss}\index{surface}
\begin{lemma}\label{herz}
Let $\ell_1$ and $\ell_2$ be distinct lines in $\oH$.
\begin{itemize}
\item If $\ell_1 \cap \ell_2 \not= \emptyset$, then there exists a standard automorphism $\varphi$ of $\oH$ such that
\[
\varphi ( \ell_1) = \{ a E_{11} \, : \, a \in {\oR} \} \ \ \ \text{and} \ \ \ \varphi ( \ell_2) = \{ a E_{22} \, : \, a \in {\oR} \}.
\]
\item   If $\ell_1 \cap \ell_2 = \emptyset$, then there exists a standard automorphism $\varphi$ of $\oH$ such that
\[
\varphi ( \ell_1) = \{ a E_{11} \, : \, a \in {\oR} \} \ \ \ \text{and} \ \ \ \varphi ( \ell_2) = \{  a E_{11}+\infty E_{22}  \, : \, a \in {\oR} \}.
\]
\end{itemize}
\end{lemma}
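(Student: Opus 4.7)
The plan is to treat the two cases in parallel: in each case I first pick an appropriate distinguished pair of points, apply one of the normalization lemmas (Lemma~\ref{pocelo} or Lemma~\ref{pocelo1}) to bring that pair into canonical position, and then finish with a single conjugation $X\mapsto SXS^{*}$ (Claim~\ref{sas*}), which provides the remaining freedom needed to align both lines simultaneously with the target.

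For the intersecting case, the intersection is necessarily a single point $A_0$, since two distinct lines meeting in two coherent points would both coincide with the unique line through those points by Lemma~\ref{simpll}. Pick $B_0\in \ell_1\setminus\{A_0\}$; then $d(A_0,B_0)=1$ and Lemma~\ref{pocelo1} produces a standard automorphism $\varphi_1$ with $\varphi_1(A_0)=0$ and $\varphi_1(B_0)=E_{11}$. This forces $\varphi_1(\ell_1)=\{aE_{11}:a\in \oR\}$, while $\varphi_1(\ell_2)$ is a second line through $0$ and hence by Lemma~\ref{jojhcer} has the form $\{aP:a\in\oR\}$ for some $P\in\cP$ different from $E_{11}$. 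Writing $P=e_Pe_P^{*}$ with $\|e_P\|=1$, the vectors $\{e_1,e_P\}$ form a basis of $\mathbb{C}^2$, so I can define an invertible $S$ by $Se_1=e_1$ and $Se_P=e_2$. This $S$ satisfies $SE_{11}S^{*}=E_{11}$ and $SPS^{*}=E_{22}$, and the composition $\varphi_2\circ\varphi_1$ with $\varphi_2(X)=SXS^{*}$ does the job.

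For the disjoint case, the crucial choice is to pick a \emph{non}-coherent cross pair rather than a coherent one, and to send it to $(0,\oi)$ via Lemma~\ref{pocelo}. Such a pair exists because $\ell_1\cup\ell_2$ cannot itself be a coherent set: otherwise the maximality of each of $\ell_1,\ell_2$ would force $\ell_1=\ell_2$. After applying Lemma~\ref{pocelo}, $\varphi_1(\ell_1)$ passes through $0$ and so has the form $\{aP:a\in\oR\}$, while $\varphi_1(\ell_2)$ passes through $\oi$. Inspecting the two items of Lemma~\ref{jojhcer} shows that no type-1 line (one of the form $\{aP+A:a\in\oR\}$ with $A\in H_2$) can contain $\oi$, so $\varphi_1(\ell_2)$ must be of the form $\{\infty Q+aQ^\perp:a\in\oR\}$ for some $Q\in\cP$. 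Disjointness of the image lines then forces $P\neq Q$ (otherwise they share $\infty P$). Choose an invertible $S$ with $Se_P=e_1$ and $Se_Q=e_2$, which exists because $P\neq Q$ makes $\{e_P,e_Q\}$ a basis. The conjugation $\varphi_2(X)=SXS^{*}$ maps $\{aP\}$ onto $\{aE_{11}\}$, and by the formula recorded in Claim~\ref{sas*} it sends $\infty Q+aQ^\perp$ to $\infty E_{22}+a\,\mathrm{tr}(E_{11}SS^{*})E_{11}$; the positive scalar $\mathrm{tr}(E_{11}SS^{*})$ is then absorbed by reparametrization to give $\{aE_{11}+\infty E_{22}:a\in\oR\}$.

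The main obstacle --- and the single insight that keeps the proof short --- is the choice of starting pair in the disjoint case. Sending a coherent cross pair via Lemma~\ref{pocelo1} would leave $\varphi_1(\ell_2)$ as a generic line through a finite point, and further delicate manipulations (involving inversion-type automorphisms) would be needed to bring it into the target type-2 form. Using Lemma~\ref{pocelo} on a non-coherent pair instead exploits the structural fact that $\oi$ lies on no type-1 line, which automatically forces $\varphi_1(\ell_2)$ to be of type~2 and reduces the remaining work to the same single conjugation step already used in the intersecting case.
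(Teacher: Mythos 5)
Your proof is correct and follows essentially the same route as the paper: both cases are reduced by a normalization (the paper sends the intersection point to $0$, respectively a non-coherent cross pair to $(0,\oi)$ via Lemma~\ref{pocelo}) and then finished by a single conjugation $X\mapsto SXS^{*}$ with $S$ adapted to the two rank-one projections. Your use of Lemma~\ref{pocelo1} to pre-normalize $\ell_1$ in the intersecting case is only a cosmetic variation, and the disjoint case coincides with the paper's argument.
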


\begin{proof}
Assume first that $\ell_1 \cap \ell_2 \not=\emptyset$. Then, by Lemma \ref{simpll}, the intersection of these two lines is a singleton. With no loss of generality, we may assume that  $\ell_1 \cap \ell_2= \{ 0 \}$. Then there are $P,Q \in \mathcal{P}$ with $P \not=Q$ and $\ell_1 = 
\{ a P \, : \, a \in {\oR} \}$ and  $\ell_2 = 
\{ a Q \, : \, a \in {\oR} \}$. If $u,v$ are unit vectors spanning the image of $P$ and the image of $Q$, respectively, then they are linearly independent and there exists an invertible linear operator $S \colon \mathbb{C}^2 \to \mathbb{C}^2$ such that $Su= e_1$ and $Sv = e_2$. 
It follows that 
\[
SPS^\ast = Suu^*S^* = (Su)(Su)^* = e_1e_1^*= E_{11}
\]
and similarly, $SQS^\ast = E_{22}$. 
Let $\psi$ denote the automorphism $X\mapsto SXS^*$. 
Then $\psi(\ell_1)$ is a line that contains $0$ and $E_{11}$, hence $\psi(\ell_1)= \{ a E_{11} \, : \, a \in {\oR} \}$.
Similarly, $\psi(\ell_2) = \{ a E_{22} \, : \, a \in {\oR} \}$.

Assume next that $\ell_1 \cap \ell_2 =\emptyset$. Since $\ell_1\cup \ell_2$ is clearly not a coherent set, by Lemma \ref{pocelo} we can assume that $0 \in \ell_1$ and $\oi \in \ell_2$. It follows that there are $P,Q \in \mathcal{P}$ such that $\ell_1 = 
\{ a P \, : \, a \in {\oR} \}$ and  $\ell_2 = 
\{ \infty Q + aQ^\perp \, : \, a \in {\oR} \}$. We have $P \not=Q$, since otherwise $\infty P = \infty Q$ would belong to the intersection of $\ell_1$ and $\ell_2$. 
We see that there exists an invertible $2 \times 2$ matrix $S$ such that $SPS^\ast = E_{11}$ and $SQS^\ast = E_{22}$.
Let $\psi$ denote the automorphism $X\mapsto SXS^*$. 
Then $\psi(\ell_1) = \{ a E_{11} \, : \, a \in {\oR} \}$ as before. 
Since $\psi(\oi)=\oi$ and $\psi(\infty Q)=\infty E_{22}$, we get $\psi(\ell_2) = \{  a E_{11} +\infty E_{22}  \, : \, a \in {\oR} \}$.
\end{proof}

\begin{definition}\label{surface}
Let $\ell_1$ and $\ell_2$ be lines in $\oH$ with $\ell_1 \cap \ell_2 = \emptyset$. The \emph{surface} \index{surface} along $\ell_1$ and $\ell_2$ is the union of all lines $\ell \subset \oH$ satisfying $\ell \cap \ell_1 \not= \emptyset$ and $\ell \cap \ell_2 \not= \emptyset$. 
\end{definition}

\begin{lemma}\label{standsurface}
The surface along the lines  $\{ a E_{11} \, : \, a \in {\oR} \}$ and $\{  \infty E_{22}  + a E_{11} \, : \, a \in {\oR} \}$ is
\[
\{ a E_{11} + b E_{22} \, : \, a,b \in {\oR} \}.
\]
\end{lemma}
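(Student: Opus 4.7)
The plan is to denote $\mathcal{S}=\{aE_{11}+bE_{22}\,:\,a,b\in\oR\}$ and prove both inclusions between $\mathcal{S}$ and the surface. First I would note that $\ell_1\cap\ell_2=\emptyset$ (every point of $\ell_1$ has $E_{22}$-coefficient $0$ while every point of $\ell_2$ has $E_{22}$-coefficient $\infty$), so the surface is well-defined.

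For the inclusion $\mathcal{S}\subset\text{surface}$, I would exhibit, for each $X=aE_{11}+bE_{22}\in\mathcal{S}$, a line through $X$ meeting both $\ell_1$ and $\ell_2$. If $a\in\mathbb{R}$, take
\[
\ell_X=\{cE_{22}+aE_{11}\,:\,c\in\oR\},
\]
which is a line by Lemma \ref{jojhcer} (of the form $\{cP+A\}$ with $P=E_{22}$ and $A=aE_{11}\in H_2$), contains $aE_{11}\in\ell_1$ at $c=0$, contains $\infty E_{22}+aE_{11}\in\ell_2$ at $c=\infty$, and passes through $X$ at $c=b$. If $a=\infty$, take $\ell_X=\{\infty E_{11}+cE_{22}\,:\,c\in\oR\}$; this is a line by Lemma \ref{jojhcer} (of the form $\{\infty P+cP^\perp\}$ with $P=E_{11}$), contains $\infty E_{11}\in\ell_1$ at $c=0$, contains $\oi\in\ell_2$ at $c=\infty$, and passes through $X$ at $c=b$.

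For the reverse inclusion, I would take an arbitrary line $\ell$ meeting $\ell_1$ at $A$ and $\ell_2$ at $B$. Since $\ell_1\cap\ell_2=\emptyset$ we have $A\neq B$, hence by Lemma \ref{simpll} $\ell$ is the unique line through $A$ and $B$. Write $A=aE_{11}$ and $B=\infty E_{22}+bE_{11}$ for some $a,b\in\oR$, and split into cases using the coherency rules of Subsection 2.1. If $a\in\mathbb{R}$: the case $B=\oi$ is impossible since $\oi\not\sim A$ for $A\in H_2$, so $b\in\mathbb{R}$, and the bullet $\infty P+bP^\perp\sim A\iff P^\perp AP^\perp=bP^\perp$ applied with $P=E_{22}$ forces $E_{11}(aE_{11})E_{11}=bE_{11}$, i.e.\ $b=a$; then $A,B\in\{cE_{22}+aE_{11}\,:\,c\in\oR\}\subset\mathcal{S}$, so $\ell$ equals this line. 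If $a=\infty$: the case $b\in\mathbb{R}$ is ruled out by the bullet $\infty P+aP^\perp\sim\infty Q+bQ^\perp\iff P=Q$ (since $E_{11}\neq E_{22}$), so $b=\infty$ and $B=\oi$; then $A,B\in\{\infty E_{11}+cE_{22}\,:\,c\in\oR\}\subset\mathcal{S}$, forcing $\ell$ to be this line. In every case $\ell\subset\mathcal{S}$.

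There is no real obstacle: the argument is pure bookkeeping, but one must be careful with the case split between finite and infinite coordinates, since the definition of $\sim$ from Subsection 2.1 branches accordingly. The inclusion $\subset$ uses the uniqueness of the line through two coherent points (Lemma \ref{simpll}) in an essential way, whereas the inclusion $\supset$ is just the construction of an explicit connecting line.
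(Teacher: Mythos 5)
Your proof is correct and follows essentially the same route as the paper: both arguments rest on the observation that each point $aE_{11}\in\ell_1$ is coherent to exactly one point of $\ell_2$ (namely $\infty E_{22}+aE_{11}$, which is $\oi$ when $a=\infty$), so the unique connecting line is $\{aE_{11}+bE_{22}\,:\,b\in\oR\}$. You merely organize this as two explicit inclusions with a finite/infinite case split, where the paper compresses it into a single parametrization of the connecting lines by their intersection with $\ell_1$.
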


\begin{proof}
Denote $\ell_1 = \{ a E_{11} \, : \, a \in {\oR} \}$ and $\ell_2 = \{  \infty E_{22}  + a E_{11} \, : \, a \in {\oR} \}$.
For every $a \in {\oR}$, the point $\infty E_{22}  + a E_{11}$ is the unique point on $\ell_2$ that is coherent to $aE_{11} \in \ell_1$. Hence, there is exactly one line passing through $aE_{11}$ that intersects $\ell_2$ and this line is $\{ a E_{11} + b E_{22} \, : \, b \in {\oR} \}$. Consequently, the union of all lines that intersect  $\ell_1$ and $\ell_2$ is
$\{ a E_{11} + b E_{22} \, : \, a,b \in {\oR} \}$.
\end{proof}

We will call the set 
\begin{equation}\label{ds}
\{ a E_{11} + b E_{22} \, : \, a,b \in {\oR} \}
\end{equation}
the \emph{diagonal surface}\index{diagonal surface}.
A direct consequence of Lemmas \ref{herz} and \ref{standsurface} is that any surface is a $\varphi$-image of the diagonal surface for some standard automorphism $\varphi$ of $\oH$.

\begin{corollary}\label{ilja}
Let $\Pi \subset \oH$ be a surface. 
\begin{itemize}
\item Assume that $A,B \in \Pi$ satisfy $d(A,B) = 1$. Then the line $\ell_{A, B}$ passing through $A$ and $B$ is contained in $\Pi$. 
\item Assume that $\ell_1$ and $\ell_2$ are lines contained in $\Pi$ and $\ell_1 \cap \ell_2 = \emptyset$. Then $\Pi$ is the surface along $\ell_1$ and $\ell_2$.
\end{itemize}
\end{corollary}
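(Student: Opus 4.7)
The plan is to reduce both parts to the diagonal surface \eqref{ds} via Lemmas \ref{herz} and \ref{standsurface}: given any surface $\Pi$ along lines $m_1, m_2$, there is a standard automorphism $\varphi$ of $\oH$ mapping $m_1, m_2$ to the two standard lines of Lemma \ref{standsurface}, hence $\varphi(\Pi)$ equals the diagonal surface $\mathcal{D} := \{aE_{11} + bE_{22} \,:\, a,b \in \oR\}$. Since automorphisms preserve coherency, lines, and surfaces, it suffices to prove both items when $\Pi = \mathcal{D}$.

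The crucial preliminary step is to classify the lines contained in $\mathcal{D}$. I would first describe the coherency relation restricted to $\mathcal{D}$: given $A = aE_{11} + bE_{22}$ and $A' = a'E_{11} + b'E_{22}$ in $\mathcal{D}$, a direct check using the four bullet definitions of $\sim$ in $\oH$ shows that $A \sim A'$ if and only if $a = a'$ or $b = b'$ (treating $\infty$ as a value in $\oR$). Indeed, for finite entries this reduces to $(a-a')(b-b') = 0$; the mixed cases fall under the third and fourth bullets of the definition of $\sim$ on $\oH$, and the case $\oi$ is handled by the last bullet. Hence every coherent subset of $\mathcal{D}$ lies in a horizontal line $H_b := \{aE_{11} + bE_{22} : a \in \oR\}$ or in a vertical line $V_a := \{aE_{11} + bE_{22} : b \in \oR\}$; moreover each $H_b$ and each $V_a$ is itself a line of $\oH$ by Lemma \ref{jojhcer} (with $P = E_{11}$ or $P = E_{22}$). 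In particular, these are exactly the lines contained in $\mathcal{D}$.

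For the first item: given $A, B \in \Pi$ with $d(A,B) = 1$, apply $\varphi$ so that $\varphi(A), \varphi(B) \in \mathcal{D}$ are coherent. By the classification above, they lie on a common $H_b$ or $V_a$, which is a line in $\oH$ and is contained in $\mathcal{D}$. By Lemma \ref{simpll} this line is $\ell_{\varphi(A),\varphi(B)} = \varphi(\ell_{A,B})$, so $\varphi(\ell_{A,B}) \subset \mathcal{D}$, giving $\ell_{A,B} \subset \Pi$.

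For the second item: let $\ell_1, \ell_2 \subset \Pi$ be disjoint lines. After applying $\varphi$, $\varphi(\ell_1), \varphi(\ell_2)$ are disjoint lines contained in $\mathcal{D}$. By the classification, both must be horizontal (say $\varphi(\ell_i) = H_{b_i}$ with $b_1 \neq b_2$) or both vertical; any horizontal and any vertical line of $\mathcal{D}$ intersect, so mixed types are excluded. In the horizontal case, a line of $\oH$ meeting both $H_{b_1}$ and $H_{b_2}$ cannot be horizontal (as the $H_b$'s are pairwise disjoint) and must therefore be some $V_a$; conversely each $V_a$ meets both. Hence the surface along $\varphi(\ell_1)$ and $\varphi(\ell_2)$ is $\bigcup_{a \in \oR} V_a = \mathcal{D}$, and applying $\varphi^{-1}$ shows that $\Pi$ equals the surface along $\ell_1$ and $\ell_2$. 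The main (mild) obstacle is simply the careful case analysis in the coherency classification on $\mathcal{D}$, handling the four bullets of the definition of $\sim$ and the point $\oi$; once this classification is in hand, the rest is immediate.
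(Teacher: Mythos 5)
Your proof is correct and follows the same route as the paper: the paper's proof simply reduces to the diagonal surface via Lemmas \ref{herz} and \ref{standsurface} and declares the verification there "easy," and your write-up supplies exactly those omitted details (the classification $aE_{11}+bE_{22}\sim a'E_{11}+b'E_{22}\iff a=a'$ or $b=b'$, checked against all four bullets of the definition of $\sim$, and the resulting identification of the lines in the diagonal surface as the horizontal and vertical families). The only point worth tightening is in the second item, where "must therefore be some $V_a$" should be justified by noting that the two intersection points are distinct and coherent, so by Lemma \ref{simpll} the line through them is unique and coincides with the $V_a$ determined by their common first coordinate.
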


\begin{proof}
Without loss of generality, we may assume that $\Pi$ is the diagonal surface. Then the verification of both statements is easy.
\end{proof}

\begin{lemma}\label{svinj}
Let $Q \in \mathcal{P}$. Assume that $\Pi \subset \oH$ is a surface containing $0$, $(1/2)I$, $I$, and $Q$. Then
\[
\Pi = \{ a Q + b Q^\perp \, : \, a,b \in {\oR} \}.
\]
\end{lemma}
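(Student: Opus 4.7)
The plan is to apply the second part of Corollary \ref{ilja} by producing two disjoint lines inside $\Pi$. Since $Q$ and $Q^\perp = I - Q$ have rank one, we have $d(0,Q) = d(I,Q) = 1$, so Corollary \ref{ilja} places both
\[
\ell_{0,Q} = \{aQ \, : \, a \in \oR\} \quad \text{and} \quad \ell_{I,Q} = \{Q + aQ^\perp \, : \, a \in \oR\}
\]
inside $\Pi$ (the explicit forms follow from Lemma \ref{jojhcer}). These two lines meet precisely at $Q$.

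Next I would push $(1/2)Q$ into $\Pi$. By Lemmas \ref{herz} and \ref{standsurface}, $\Pi$ is the image of the diagonal surface under a standard automorphism, and a direct inspection using Lemma \ref{jojhcer} shows that the diagonal surface contains exactly two lines through each of its points, organized into two families in which lines of the same family are pairwise disjoint while lines from opposite families meet in a single point. Transporting this grid structure back to $\Pi$, and noting that $\ell_{0,Q}$ and $\ell_{I,Q}$ belong to different families since they intersect, the line through $(1/2)I$ in the family containing $\ell_{I,Q}$ meets $\ell_{0,Q}$ in a unique point of $\Pi$. This intersection point must be coherent to $(1/2)I$ and lie on $\ell_{0,Q}$, so Lemma \ref{simplll} combined with the computation $(1/2)I - aQ = (1/2-a)Q + (1/2)Q^\perp$ (of rank $\le 1$ only when $a = 1/2$) forces it to be $(1/2)Q$. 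Hence $(1/2)Q \in \Pi$, and Corollary \ref{ilja} delivers
\[
m := \ell_{(1/2)I, (1/2)Q} = \{(1/2)Q + aQ^\perp \, : \, a \in \oR\} \subset \Pi.
\]

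Finally, $m$ and $\ell_{I,Q}$ are disjoint (their $Q$-coefficients $1/2$ and $1$ disagree), and the second part of Corollary \ref{ilja} identifies $\Pi$ with the surface along $m$ and $\ell_{I,Q}$. A routine case check, treating the finite and infinite parameter values separately, shows that for each $a \in \oR$ the unique coherent pair on these two lines is $(1/2)Q + aQ^\perp$ and $Q + aQ^\perp$, and that the unique line through them is $\{sQ + aQ^\perp \, : \, s \in \oR\}$. Taking the union over $a \in \oR$ yields $\Pi = \{aQ + bQ^\perp \, : \, a,b \in \oR\}$, as claimed. The main obstacle is the middle step of forcing $(1/2)Q \in \Pi$, which crucially invokes the grid structure of the surface; the rest is bookkeeping through Lemma \ref{jojhcer}.
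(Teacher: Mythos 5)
Your proof is correct and follows the paper's strategy: produce the two disjoint lines $\{(1/2)Q+aQ^\perp : a\in\oR\}$ and $\{Q+aQ^\perp : a\in\oR\}$ inside $\Pi$ and invoke the second part of Corollary \ref{ilja}. However, the step you single out as ``the main obstacle'' --- forcing $(1/2)Q\in\Pi$ via the grid structure of the surface --- is a needless detour: $(1/2)Q$ is simply the point $aQ$ with $a=1/2$ on the line $\ell_{0,Q}=\{aQ : a\in\oR\}$, which you have already shown lies in $\Pi$, so the entire transport-to-the-diagonal-surface argument (which also rests on an unproved, though true, classification of the lines in a surface) can be deleted. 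This is exactly what the paper does; with that simplification your argument coincides with the paper's proof.
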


\begin{proof}
We have $0,Q \in \Pi$ and therefore, by Corollary \ref{ilja} the line $\{ a Q  \, : \, a \in {\oR} \}$ is contained in $\Pi$. In particular, $(1/2)Q\in \Pi$. 
It follows that the line $\ell_1 := \{Q+aQ^\perp\,:\, a\in {\oR}\}$ passing through $Q$ and $I$, and the line $\ell_2 := \{(1/2)Q+aQ^\perp\,:\, a\in {\oR}\}$ passing through $(1/2)Q$ and $(1/2)I$, are both contained in $\Pi$. 
Observe that the surface $\{ a Q + b Q^\perp \, : \, a,b \in {\oR} \}$ also contains $\ell_1$ and $\ell_2$.
Since $\ell_1\cap \ell_2=\emptyset$, the second item of Corollary \ref{ilja} leads to the desired conclusion.
\end{proof}

Let $P \in \mathcal{P}$ and consider the set
\begin{equation}\label{sp}\index{$\square_P$}
{\square}_P = \{ a P + b P^\perp \, : \, a,b \in [0,1] \}.
\end{equation}
It is contained in the surface $\{ a P + b P^\perp \, : \, a,b \in {\oR} \}$.

\begin{lemma}\label{segment}
Let $P \in \mathcal{P}$ and $A, B\in {\square}_P$ satisfy $d(A, B)=1$. 
Then every point of ${\square}_P$ is coherent to some point of $\ell_{A, B} \cap {\square}_P$. 
\end{lemma}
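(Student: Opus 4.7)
The plan is to reduce to a completely explicit computation inside the surface $\{aP + bP^\perp : a,b \in \overline{\mathbb{R}}\}$ that contains $\square_P$.

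First, I would write $A = a_1 P + b_1 P^\perp$ and $B = a_2 P + b_2 P^\perp$ with $a_i, b_i \in [0,1]$. Since $P$ and $P^\perp$ are orthogonal rank one projections, Lemma \ref{apbq} (applied to the nonzero coefficients) shows that $(a_1 - a_2)P + (b_1 - b_2) P^\perp$ is invertible whenever both coefficients are nonzero. Combined with the hypothesis $d(A,B) = 1$, i.e., $A \neq B$ and $\mathrm{rank}\,(A-B) \le 1$, this forces exactly one of the equalities $a_1 = a_2$ or $b_1 = b_2$ to hold.

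Next I would describe $\ell_{A,B}$. By the symmetry of $P$ and $P^\perp$, it is enough to treat the case $a_1 = a_2 =: a$ (so $b_1 \neq b_2$). The line $\{aP + tP^\perp : t \in \overline{\mathbb{R}}\}$ is coherent (by Lemma \ref{pq} or direct check) and contains both $A$ and $B$, so by the uniqueness in Lemma \ref{simpll} it equals $\ell_{A,B}$. Intersecting with $\square_P$ gives
\[
\ell_{A,B} \cap \square_P = \{aP + tP^\perp : t \in [0,1]\}.
\]

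Finally, I would verify the coherency claim directly: for an arbitrary $X = \alpha P + \beta P^\perp \in \square_P$ with $\alpha, \beta \in [0,1]$, the candidate point $Y := aP + \beta P^\perp$ lies in $\ell_{A,B} \cap \square_P$, and
\[
X - Y = (\alpha - a) P,
\]
which has rank at most one, so $X \sim Y$. This settles the case $a_1 = a_2$, and the case $b_1 = b_2$ follows by exchanging the roles of $P$ and $P^\perp$. There is no real obstacle here; the only point that requires any care is the initial observation that coherency of two points of $\square_P$ forces one of the two coordinates to match, which is exactly what Lemma \ref{apbq} delivers.
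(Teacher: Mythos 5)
Your proposal is correct and follows essentially the same route as the paper: both reduce to the dichotomy that coherency of two points of $\square_P$ forces one of the two coordinates (with respect to $P$, $P^\perp$) to agree, identify $\ell_{A,B}$ as the corresponding coordinate line, and then observe that matching the other coordinate produces a coherent point. Your write-up merely makes explicit the final verification that the paper leaves as "clear."
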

\begin{proof}
By the assumption, we see that one of the following holds. 
Either there are $p,q,r\in [0,1]$ with $A=pP+qP^\perp$, $B=pP+rP^\perp$, and $q\neq r$, or there are $p,q,r\in [0,1]$ with $A=pP+qP^\perp$, $B=rP+qP^\perp$, and $p\neq r$. 
We have $\ell_{A, B}=\{pP+tP^\perp\,:\, t\in {\oR}\}$ in the former case and $\ell_{A, B}=\{tP+qP^\perp\,:\, t\in {\oR}\}$ in the latter case. 
It is now clear to see that the desired conclusion holds.
\end{proof}

\begin{lemma}\label{where}
Let $P \in \mathcal{P}$ and assume that $\varphi \colon {\square}_P \to \oH$ is a coherency preserver. Then $\varphi ({\square}_P)$ is contained in a cone or in a surface.
\end{lemma}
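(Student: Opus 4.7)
The plan is to exploit the grid structure of $\square_P$. Parametrize $\square_P$ by $[0,1]^2$ via $(a,b) \leftrightarrow aP + bP^\perp$ and write $\psi(a,b) := \varphi(aP + bP^\perp)$. A direct rank computation shows that two elements of $\square_P$ are coherent precisely when they share the first or second coordinate, so the horizontal segments $h_a := \{aP + bP^\perp : b \in [0,1]\}$ and vertical segments $v_b := \{aP + bP^\perp : a \in [0,1]\}$ are maximal coherent subsets of $\square_P$; consequently their images $\varphi(h_a)$ and $\varphi(v_b)$ are coherent in $\oH$ and each is contained in some line.

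I would split on whether all the $\varphi(h_a)$ fit inside one single line of $\oH$. If yes, $\varphi(\square_P)$ sits on that line and hence inside $\cC_Y$ for any $Y$ on it, so it lies in a cone and we are done. Otherwise I pick $a_1, a_2 \in [0,1]$ and lines $\ell_1 \supseteq \varphi(h_{a_1})$, $\ell_2 \supseteq \varphi(h_{a_2})$ with $\ell_1 \neq \ell_2$ (such a choice exists because otherwise every $\varphi(h_a)$ could be fit inside a fixed $\ell_1$), and split further by Lemma~\ref{simpll} on whether $\ell_1 \cap \ell_2$ is empty or a singleton $\{X\}$. In the disjoint case, let $\Pi$ be the surface along $\ell_1, \ell_2$: for each $b$, the points $\psi(a_1, b) \in \ell_1$ and $\psi(a_2, b) \in \ell_2$ are distinct and coherent, so by Lemmas~\ref{simpll} and~\ref{simplll} the unique line through them contains $\varphi(v_b)$ and meets both $\ell_i$, hence lies in $\Pi$; taking the union over $b$ yields $\varphi(\square_P) \subseteq \Pi$.

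In the remaining case $\ell_1 \cap \ell_2 = \{X\}$ I claim $\varphi(\square_P) \subseteq \cC_X$. Fix $b$: if $\psi(a_1, b) = \psi(a_2, b)$, the common value lies in $\ell_1 \cap \ell_2 = \{X\}$ and therefore equals $X$, so every $\psi(a,b)$ is coherent to $X$ by coherency of $\varphi(v_b)$; otherwise $\varphi(v_b)$ spans a unique line $\ell_b^v$ containing both $\psi(a_1, b) \in \ell_1$ and $\psi(a_2, b) \in \ell_2$, and $X \in \ell_b^v$ either by Corollary~\ref{triangle} applied to the pairwise-intersecting triple $\ell_1, \ell_2, \ell_b^v$, or, in the degenerate case $\ell_b^v \in \{\ell_1, \ell_2\}$, by forcing the opposite $\psi(a_i, b)$ into $\ell_1 \cap \ell_2 = \{X\}$. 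In every situation $\psi(a, b)$ and $X$ lie on the common line $\ell_b^v$, so they are coherent. The main bookkeeping obstacle throughout is handling the configurations where some image collapses to a single point or where a derived line happens to coincide with $\ell_1$ or $\ell_2$; each such degeneracy resolves by forcing the offending point into $\ell_1 \cap \ell_2$, which is either empty (hence the case does not occur) or the distinguished singleton $\{X\}$.
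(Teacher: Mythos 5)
Your proposal is correct and follows essentially the same route as the paper's proof: decompose $\square_P$ into the two families of coherent segments, push their images into lines, and split into the cases "all in one line", "two disjoint lines" (giving a surface), and "two lines meeting in a point $X$" (giving the cone $\mathcal{C}_X$ via Corollary~\ref{triangle}). The only cosmetic differences are that the paper peels off the singleton-image case at the start whereas you absorb it into the non-uniqueness of the containing line, and that the final cone argument is run on the transverse family of segments; both handle the same degeneracies.
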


\begin{proof}
The first possibility we will consider is that there exists $a\in [0,1]$ such that the line segment $s_a := \{ aP + bP^\perp \, : \, b \in [0,1] \}$ is mapped to a singleton $\{ A \}$. Then clearly, $\varphi ({\square}_P)$ is contained in the cone with vertex $A$.

So, we may assume from now on that for each line segment $s_a$, $0 \le a \le 1$, there is a unique line $\ell_a \subset \oH$ such that $\varphi (s_a) \subset \ell_a$.

Assume next that there is a pair $a_1 , a_2 \in [0,1]$ such that $\ell_{a_1} \cap \ell_{a_2} =\emptyset$. We will show that in this case $\varphi ({\square}_P)$ is contained in the surface along $\ell_{a_1}$ and $\ell_{a_2}$. 
We observe that for any line segment $s'_b := \{ aP + bP^\perp \, : \, a \in [0,1] \}$, $0 \le b \le 1$, there is a line $\ell'_b \subset \oH$ such that $\varphi (s'_b) \subset \ell'_b$. Moreover, each $\ell'_b$, $0 \le b \le 1$, intersects both $\ell_{a_1}$ and $\ell_{a_2}$, as desired.

It remains to consider the case where for every pair $a_1 , a_2 \in [0,1]$ either $\ell_{a_1} = \ell_{a_2}$, or $\ell_{a_1}$ and $\ell_{a_2}$ intersect in one point. 
The desired conclusion holds trivially if $\ell_{a_1} = \ell_{a_2}$ for every pair $a_1 , a_2 \in [0,1]$. 
So, the final possibility we need to consider is that there are $a_1 , a_2 \in [0,1]$ such that  $\ell_{a_1}$ and $\ell_{a_2}$ intersect in one point $A$.
Let $0 \le a \le 1$. 
If $\ell_{a}\in\{\ell_{a_1}, \ell_{a_2}\}$, then we have $A\in \ell_a$. 
If $\ell_{a}\notin\{\ell_{a_1}, \ell_{a_2}\}$, then Corollary \ref{triangle} shows that $A\in \ell_a$. 
Thus, we get $\varphi ({\square}_P)\subset \cC_A$ in this case.
\end{proof}

We know that a surface is determined by any pair of lines that belong to the surface and do not intersect. The situation is different for pairs of lines that do intersect. 
To see this, we need the following lemma. 

\begin{lemma}\label{inftysurface}
Let $P, Q\in \cP$ be distinct elements. 
Then the points $0, \infty P, \infty Q, \oi$ are contained in no single cone, and are contained in the unique surface 
\begin{equation}\label{PI}
\Pi_{P, Q}:=\{aP+bQ\,:\, a, b\in \bR\}\cup \ell_{\infty P, \oi}\cup \ell_{\infty Q, \oi}.\index{$PiPQ$@$\Pi_{P, Q}$}
\end{equation}
\end{lemma}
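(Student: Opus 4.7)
The plan has three parts, each short once the right reduction is in place.

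First, I will rule out that a single cone contains all four points. Suppose $\cC_A$ contains $0,\infty P,\infty Q,\oi$. Since $A\sim \oi$ forces $A\in\oH\setminus H_2$, I write $A=\infty R+cR^\perp$ with $R\in\cP$ and $c\in\oR$. The point $A=\oi$ is excluded because $\oi\not\sim 0$, so $c\in\bR$, and the rule for coherency between $\infty R+cR^\perp$ and an element of $H_2$ (namely $0$) forces $c=\mathrm{tr}(R^\perp\cdot 0)=0$. Hence $A=\infty R$, and then $A\sim \infty P$, $A\sim\infty Q$ force $R=P$ and $R=Q$, contradicting $P\neq Q$.

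Second, I will show $\Pi_{P,Q}$ is a surface. Following the strategy in the proof of Lemma \ref{herz}, I pick unit vectors spanning $\mathrm{Im}\,P$ and $\mathrm{Im}\,Q$; these are linearly independent since $P\neq Q$, so there is an invertible $S$ with $SPS^*=E_{11}$ and $SQS^*=E_{22}$. The induced standard automorphism $\psi(X)=SXS^*$ fixes $0$ and $\oi$, sends $\infty P\mapsto \infty E_{11}$, $\infty Q\mapsto \infty E_{22}$, and is linear on $H_2$. A routine check of its action on the two lines $\ell_{\infty P,\oi}$ and $\ell_{\infty Q,\oi}$ confirms that $\psi(\Pi_{P,Q})$ equals the diagonal surface \eqref{ds}, which is a surface by Lemma \ref{standsurface}. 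Since standard automorphisms carry surfaces to surfaces, $\Pi_{P,Q}$ is itself a surface. The four points lie in it by inspection: $0=0\cdot P+0\cdot Q$, while $\infty P,\oi\in\ell_{\infty P,\oi}$ and $\infty Q\in\ell_{\infty Q,\oi}$.

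Third, I will prove uniqueness. Let $\Pi$ be any surface containing the four points. From $d(0,\infty P)=d(\infty Q,\oi)=1$, the first part of Corollary \ref{ilja} forces $\Pi$ to contain the lines
\[
\ell_1:=\ell_{0,\infty P}=\{aP\,:\,a\in\oR\}\quad\text{and}\quad \ell_2:=\ell_{\infty Q,\oi}=\{\infty Q+aQ^\perp\,:\,a\in\oR\}.
\]
These lines are disjoint: the finite part of $\ell_1$ lies in $H_2$ while $\ell_2\subset\oH\setminus H_2$, and the only element of $\ell_1\setminus H_2$ is $\infty P$, which cannot equal any $\infty Q+aQ^\perp$ because $P\neq Q$. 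The second part of Corollary \ref{ilja} then identifies $\Pi$ with the unique surface along $\ell_1$ and $\ell_2$. Since $\Pi_{P,Q}$ itself contains $\ell_1$ (via $b=0$, together with $\infty P\in\ell_{\infty P,\oi}$) and $\ell_2$, this forces $\Pi=\Pi_{P,Q}$.

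The main obstacle is bookkeeping around the boundary stratum $\oH\setminus H_2$: I need to keep the formal-sum notation straight when verifying that $\psi$ carries $\Pi_{P,Q}$ exactly onto the diagonal surface, and when checking that $\ell_1$ and $\ell_2$ are disjoint. Once the reduction to the $E_{11},E_{22}$ model is made cleanly, every remaining step is an immediate appeal to Lemma \ref{standsurface} or Corollary \ref{ilja}.
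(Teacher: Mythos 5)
Your proof is correct. The first part (no single cone) is exactly the paper's argument. For the surface part you take a genuinely different, though closely related, route: the paper fixes the two disjoint lines $\ell_1=\ell_{0,\infty P}$ and $\ell_2=\ell_{\infty Q,\oi}$ and computes the surface along them directly, enumerating for each $tP\in\ell_1$ the unique point $\infty Q+t\,\mathrm{tr}(PQ^\perp)Q^\perp$ of $\ell_2$ coherent to it and the line joining them, and observing that the union of these lines together with $\ell_{\infty P,\oi}$ is precisely $\Pi_{P,Q}$. You instead conjugate by the affine automorphism $X\mapsto SXS^\ast$ to move $\Pi_{P,Q}$ onto the diagonal surface and then quote Lemma \ref{standsurface}, which already contains that computation in the model case $P=E_{11}$, $Q=E_{22}$; this avoids redoing the line-by-line enumeration but requires the (routine) verification that $\psi$ carries $\ell_{\infty P,\oi}$ onto $\ell_{\infty E_{11},\oi}$ and the finite stratum onto the finite stratum. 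A small bonus of your write-up is that the uniqueness step --- any surface through the four points contains $\ell_1$ and $\ell_2$ by the first item of Corollary \ref{ilja}, hence equals the surface along them by the second item --- is spelled out explicitly, whereas the paper leaves it implicit. The one thing your route does not produce, and the paper's direct computation does, is the explicit parametrization of the lines lying in $\Pi_{P,Q}$, which is occasionally convenient later; but for the statement of the lemma itself both arguments are complete.
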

\begin{proof}
To show that these points are contained in no single cone, assume that $A\in \oH$ is coherent to these points. Since $0\sim A\sim \oi$, we see that $A=\infty R$ for some $R\in \cP$. 
We also have $\infty P\sim A=\infty R\sim \infty Q$, which implies $P=R=Q$, contradicting the assumption. 

Let $\ell_1$ be the line passing through $0$ and $\infty P$, and $\ell_2$ the line passing through $\infty Q$ and $\oi$. Then $\ell_1=\{aP\,:\, a\in {\oR}\}$ and $\ell_2=\{\infty Q+ aQ^\perp\,:\, a\in {\oR}\}$. 
For each $t\in\mathbb{R}$, the unique point in $\ell_2$ that is coherent to $tP$ is $\infty Q+tP=\infty Q+t\mathrm{tr}\,(PQ^\perp)Q^\perp$. Using Lemma \ref{jojhcer}, we easily see that
the line passing through $tP$ and $\infty Q+tP$ is $\{sQ+tP\,:\, s\in \oR\} =\{tP+sQ\,:\, s\in \mathbb{R}\}\cup \{\infty Q+t\mathrm{tr}\,(PQ^\perp)Q^\perp\}$. 
The unique point in $\ell_2$ that is coherent to $\infty P$ is $\oi$. 
This completes the proof.
\end{proof}

Consider two lines $\ell_1=\ell_{0, \infty E_{11}}, \ell_2=\ell_{\infty E_{11}, \oi}$ in the diagonal surface. 
There are infinitely many surfaces that contain both $\ell_1$ and $\ell_2$. 
Indeed, for every $P\in \mathcal{P}\setminus\{E_{11}\}$, we have $0, \infty E_{11}, \oi\in \Pi_{E_{11}, P}$, which leads to $\ell_1\cup \ell_2\subset \Pi_{E_{11}, P}$.

\subsection{Three points}\label{three points}
Next, we will be interested in a relative position of any three points $A_1 , A_2 , A_3 \in \oH$ with $d(A_1 , A_2 ) = d(A_1 , A_3 ) = d(A_2 , A_3 ) = 2$.

\begin{lemma}\label{threepoints}
Let $A_1 , A_2 , A_3 \in \oH$ satisfy $d(A_1 , A_2 ) = d(A_1 , A_3 ) = d(A_2 , A_3 ) = 2$. Then one and only one of the following holds.
\begin{itemize}
\item There exists a standard automorphism $\varphi \colon \oH \to \oH$ with the property that 
\[
\varphi (A_1) = 0,\ \varphi (A_2) = \oi,\ \  \text{and}\ \  \varphi (A_3) = I.
\]
\item There exists a standard automorphism $\varphi \colon \oH \to \oH$ with the property that 
\[
\varphi (A_1) = 0,\ \ \varphi (A_2) = \oi,\ \ \text{and}\ \ \varphi (A_3) = J := \left[ \begin{matrix} 1 & 0 \cr 0 & -1 \cr \end{matrix} \right]. \index{$J$}
\]
\end{itemize}
\end{lemma}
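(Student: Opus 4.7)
The plan is to reduce to the canonical pair $(A_1,A_2) = (0,\oi)$ by Lemma~\ref{pocelo}, and then exploit the richer family of standard automorphisms fixing both $0$ and $\oi$ to normalize $A_3$ either to $I$ or to $J$.

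First I would apply Lemma~\ref{pocelo} to obtain a standard automorphism sending $A_1$ to $0$ and $A_2$ to $\oi$; thus we may assume $A_1 = 0$ and $A_2 = \oi$. After this reduction, $A_3 \in \oH$ satisfies $d(A_3,0) = d(A_3,\oi) = 2$. Since $\cC_\oi = \oH\setminus H_2$, the first condition forces $A_3 \in H_2$, and since $A \sim 0$ for $A \in H_2$ means $\mathrm{rank}\,A\le 1$, the second forces $\mathrm{rank}\,A_3 = 2$. Hence $A_3$ is an invertible hermitian matrix, and so $A_3$ lies in exactly one of $H_2^{++}$, $H_2^{--}$, or $H_2^{+-}$.

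Next I would observe that maps of the form $X\mapsto SXS^\ast$ (for any invertible $S$) and the map $X\mapsto -X$ are affine automorphisms (hence standard) fixing both $0$ and $\oi$. If $A_3 \in H_2^{--}$, apply $X\mapsto -X$ to reduce to $A_3 \in H_2^{++}$; then $\psi(X) = A_3^{-1/2}XA_3^{-1/2}$ sends $A_3$ to $I$. If instead $A_3\in H_2^{+-}$, diagonalize $A_3 = U\,\mathrm{diag}(\lambda,-\mu)\,U^\ast$ with $\lambda,\mu>0$ and $U$ unitary; setting $T = \mathrm{diag}(\sqrt{\lambda},\sqrt{\mu})$ we get $A_3 = (UT)\,J\,(UT)^\ast$, so $\psi(X) = (UT)^{-1}X\bigl((UT)^{-1}\bigr)^\ast$ sends $A_3$ to $J$. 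Composing with the initial automorphism produces the desired $\varphi$ in each case.

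For the ``one and only one'' clause I would use the automorphism invariant $\mathcal{S}_{A_1,A_2,A_3} := \cC_{A_1}\cap\cC_{A_2}\cap \cC_{A_3}$. When $A_1 = 0$ and $A_2 = \oi$, the description of the cones gives $\mathcal{S}_{0,\oi} = \{\infty P\,:\, P\in\cP\}$, and the condition $\infty P \sim A_3$ amounts to $\mathrm{tr}\,(A_3 P^\perp)=0$. For $A_3 = I$ this has no solution (since $\mathrm{tr}\,P^\perp = 1$), whereas for $A_3 = J$ it has a one-parameter family of solutions (a great circle in the Bloch sphere). Hence $\mathcal{S}_{0,\oi,I}=\emptyset \ne \mathcal{S}_{0,\oi,J}$, so the two normal forms are inequivalent. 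The argument is elementary once Lemma~\ref{pocelo} is in hand; I do not anticipate a substantial obstacle, only the mild bookkeeping of keeping track of signs and of the transpose-versus-adjoint variants of the affine automorphisms (which turn out not to be needed here).
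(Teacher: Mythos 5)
Your proposal is correct and follows essentially the same route as the paper: reduce to $(A_1,A_2)=(0,\oi)$ via Lemma \ref{pocelo}, note that $A_3$ becomes an invertible hermitian matrix, use $X\mapsto -X$ and congruence $X\mapsto SXS^\ast$ to normalize to $I$ or $J$ according to the sign pattern of the eigenvalues, and distinguish the two cases by the invariant $\cC_{A_1}\cap\cC_{A_2}\cap\cC_{A_3}$ (empty for $I$, nonempty for $J$), which is exactly the content of the paper's Lemma \ref{noauto}.
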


\begin{proof}
By Lemma \ref{pocelo}, there is a standard automorphism $\psi \colon \oH \to \oH$ with the property that $\psi (A_1) = 0$ and $\psi (A_2) = \oi$. 
By $d(\psi (A_3) , \oi) = d(\psi (A_3) , \psi (A_2)) = 2$, we get $\psi (A_3)\in H_2$.
Because $d(\psi (A_3) , 0) = d(\psi (A_3) , \psi (A_1)) = 2$, the $2 \times 2$ hermitian matrix $\psi (A_3)$ is invertible.
In the case where $\psi (A_3)<0$, we replace the automorphism $\psi$ by the automorphism $X \mapsto -\psi (X)$, $X \in \oH$. Hence, we have either 
\[
\psi (A_1) = 0,\ \ \psi (A_2) = \oi,\ \ \text{and}\ \ \psi (A_3)\in H_2^{++},
\]
 or 
\[
\psi (A_1) = 0,\ \ \psi (A_2) = \oi, \ \ \text{and} \ \ \psi (A_3)\in H_2^{+-}.
\] 
In the first case, we can find an invertible $2 \times 2$ complex matrix $S$ such that $S \psi (A_3) S^\ast = I$.
In the second case, we can find an invertible $2 \times 2$ complex matrix $S$ such that $S \psi (A_3) S^\ast = J$. Then the standard automorphism $\varphi (X)= S\psi (X) S^\ast$, $X \in \oH$, has one of the above two desired properties. 

It remains to verify that one cannot find an automorphism of $\oH$ that sends $0$, ${\oi}$, and $I$, to $0$, ${\oi}$, and $J$, respectively. 
This is a consequence of the following lemma.
\end{proof}

\begin{lemma}\label{noauto}
We have 
\begin{equation}\label{empty}
\cC_0\cap \cC_{\oi}\cap \cC_I=\emptyset
\end{equation}
but
\begin{equation}\label{nempty}
 \cC_0\cap \cC_{\oi}\cap \cC_J=\{\infty P\,:\, P^\perp JP^\perp =0\}\neq \emptyset.
\end{equation}
Thus there is no automorphism $\varphi$ of $\oH$ with 
\[
\varphi(0)=0,\ \ \varphi(\oi)=\oi,\ \  \text{and}\ \ \varphi(I)=J.
\] 
\end{lemma}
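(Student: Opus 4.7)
The plan is to compute both intersections explicitly using the descriptions of cones given earlier in the section, then extract the non-existence of the automorphism as a formal consequence of these two computations.

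For the first intersection, I would first describe $\cC_0\cap \cC_{\oi}$. From the description listed after the definition of $\sim$, we have $\cC_{\oi}=\oH\setminus H_2$ and $\cC_0=\{aP\,:\,a\in \oR,\, P\in \cP\}$ (the $B\in H_2$ part of $\cC_0$ consists of matrices of rank at most one, which is exactly the $a\in \mathbb{R}$ portion, together with the points $\infty P$). Intersecting forces $A=\infty P$ for some $P\in \cP$. Now imposing the extra condition $A\sim I$: Lemma \ref{veryeasy} asserts that any element of $\cC_0\cap \cC_I$ lies in $\cP\subset H_2$. Since $\infty P\notin H_2$, this forces $\cC_0\cap \cC_{\oi}\cap \cC_I=\emptyset$.

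For the second intersection, I would again reduce to $A=\infty P$ as above, and then apply the definition of $\sim$ between a point of the form $\infty P+aP^{\perp}$ and a matrix in $H_2$: $\infty P\sim J$ is equivalent to $P^{\perp}JP^{\perp}=0\cdot P^{\perp}=0$, which yields exactly the claimed description of $\cC_0\cap \cC_{\oi}\cap \cC_J$. To see this set is nonempty, I would exhibit one such $P$ explicitly: take $v=\tfrac{1}{\sqrt{2}}(e_1+e_2)$, set $P^{\perp}=vv^\ast$, and compute $P^{\perp}JP^{\perp}=\langle Jv,v\rangle P^{\perp}=0$ since $\langle Jv,v\rangle=\tfrac{1}{2}(1-1)=0$.

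For the final statement, the argument is a clean contradiction. If $\varphi$ is an automorphism of $\oH$, then $B\sim A\iff \varphi(B)\sim \varphi(A)$ gives $\varphi(\cC_A)=\cC_{\varphi(A)}$ for every $A\in \oH$. Applied to $A\in\{0,\oi,I\}$ and combined with bijectivity of $\varphi$, this yields a bijection between $\cC_0\cap \cC_{\oi}\cap \cC_I$ and $\cC_{\varphi(0)}\cap \cC_{\varphi(\oi)}\cap \cC_{\varphi(I)}=\cC_0\cap \cC_{\oi}\cap \cC_J$. By the two computations above, the former is empty while the latter is not, a contradiction.

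There is no real obstacle here; the only thing to be careful about is correctly unwinding the case-by-case definition of the coherency relation $\sim$ on $\oH$ when one of the arguments is of the form $\infty P+aP^{\perp}$ or equals $\oi$. Once those definitions are applied cleanly to $0$, $\oi$, $I$, and $J$, both intersections are determined by short direct computations, and the non-existence of $\varphi$ is immediate from the fact that automorphisms send cones to cones.
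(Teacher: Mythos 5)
Your proposal is correct and follows essentially the same route as the paper: reduce $\cC_0\cap\cC_{\oi}$ to $\{\infty P\,:\,P\in\cP\}$, observe $\infty P\not\sim I$, characterize $\infty P\sim J$ by $P^{\perp}JP^{\perp}=0$, and exhibit an explicit $P$ (your choice is the paper's $P^{\perp}$, which works equally well). The concluding contradiction via $\varphi(\cC_A)=\cC_{\varphi(A)}$ is exactly the intended argument, which the paper leaves implicit.
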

\begin{proof}
Every element of $\cC_0\cap \cC_{\oi}$ is of the form $\infty P$ for some $P\in \cP$. 
It is clear that $\infty P\notin \cC_I$, so \eqref{empty} is established. 
We have $P^\perp JP^\perp=0$ for 
\[
P=\frac{1}{2}\left[ \begin{matrix} 1 & 1 \cr 1 & 1 \cr \end{matrix} \right]\in \cP,
\]
thus we obtain \eqref{nempty}.
\end{proof}

\begin{definition}
Let $A_1 , A_2 , A_3 \in \oH$ satisfy $d(A_1 , A_2 ) = d(A_1 , A_3 ) = d(A_2 , A_3 ) = 2$. We say that the triple $A_1 , A_2 , A_3$ is in \emph{timelike} position \index{timelike} if $\cC_{A_1} \cap \cC_{A_2}\cap \cC_{A_3}=\emptyset$. Otherwise, the triple $A_1 , A_2 , A_3$ is said to be in \emph{spacelike} position\index{spacelike}.
The preceding lemmas imply that the triple $A_1 , A_2 , A_3$ is in timelike position if and only if the first option of Lemma \ref{threepoints} holds.
\end{definition}
%

\begin{example}\label{IJ}
The triple $0, (1/2)I, I$ is in timelike position. Indeed, the automorphism $\varphi \colon \oH \to \oH$ given by $\varphi (X) = -I + (I-X)^{-1}$ satisfies 
\[
\varphi (0) = 0,\ \ \varphi ((1/2)I) = I,\ \  \text{and}\ \ \varphi (I) = {\oi}.
\] 
On the other hand, the triple $0, (1/2)J, J$ is in spacelike position. Indeed, we have 
\[
\varphi (0) = 0,\ \ \varphi ((1/2)J) = J,\ \ \text{and}\ \ \varphi (J) = {\oi}, 
\]
where the automorphism $\varphi \colon \oH \to \oH$ is given by $\varphi (X) = -J + (J-X)^{-1}$, $X \in \oH$.
\end{example}

\begin{remark}\label{remark02}
Let us recall that the triple $0, (1/2)I, I$ corresponds to spacetime events $(0,0,0,0)$, $(0,0,0,1/2)$, and $(0,0,0,1)$, respectively. Thus, we have the origin $(0,0,0)$ of the space $\mathbb{R}^3$ at three different moments, $t_1 =0$, $t_2 = 1/2$, and $t_3 = 1$.  
On the other hand, the triple $0, (1/2)J, J$ corresponds to spacetime events $(0,0,0,0)$, $(0,0,-1/2, 0)$, and $(0,0,-1,0)$, respectively. Thus we have three different points on the line through the origin of $\mathbb{R}^3$ all of them considered at the moment $t=0$. 
This motivates our choice of terminology above.
\end{remark}

\begin{example}\label{definite}
For an invertible $A\in H_2$, the triple $0, \oi, A$ is in timelike position if and only if $A\in H_2^{++}\cup H_2^{--}$. 
To show this, consider the image of these points by a suitable affine automorphism sending $0$ to $0$.
\end{example}

For $P\in \cP$, the triple $(1/2)P, (1/2)P^\perp, I$ is in spacelike position. Indeed, we have $\mathcal{C}_{(1/2)P} \cap \mathcal{C}_{(1/2)P^\perp} \cap \mathcal{C}_I\neq \emptyset$. 
A concrete description of the left-hand side follows.
\begin{lemma}\label{nnjk}
Let $P \in \mathcal{P}$. Then
\[
\mathcal{C}_{(1/2)P} \cap \mathcal{C}_{(1/2)P^\perp} \cap \mathcal{C}_I = \{ Q + (1/3)Q^\perp \, : \, Q\in \mathcal{P}, \, \mathrm{tr}\, (PQ) = 1/2 \}.
\]
\end{lemma}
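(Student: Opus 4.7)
The plan is to exploit the symmetries of $\oH$ to reduce the lemma to a concrete $2\times 2$ determinant computation. The reverse inclusion is a short check: for $Q\in \cP$ with $\mathrm{tr}(PQ)=1/2$, the identity $Q+(1/3)Q^\perp = (1/3)I + (2/3)Q$ makes $(Q+(1/3)Q^\perp)-I=-(2/3)Q^\perp$ manifestly rank one, and the two remaining coherency conditions $\det(Q+(1/3)Q^\perp-(1/2)P)=0$ and the analogous condition with $P^\perp$ both collapse, after expansion, to the single hypothesis $\mathrm{tr}(PQ)=1/2$. So essentially all the work is in the forward inclusion.

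For $\subseteq$, the first step is to show the intersection lies in $H_2$. If $A=\oi$ then $A\not\sim I$, and if $A=\infty R+cR^\perp$ with $R\in \cP$ and $c\in \bR$, then $A\sim I$ forces $c=\mathrm{tr}(R^\perp I)=1$, after which $A\sim(1/2)P$ would require $\mathrm{tr}(R^\perp P)=2$, contradicting $0\le \mathrm{tr}(R^\perp P)\le 1$. Having confined $A$ to $H_2$, I would apply the affine automorphism $X\mapsto UXU^*$ for a unitary $U$ with $UPU^*=E_{11}$ (and hence $UP^\perp U^*=E_{22}$) to reduce to $P=E_{11}$. Since $I$ itself is not coherent to $(1/2)E_{11}$ (the difference $(1/2)E_{11}-I$ is invertible by Lemma \ref{apbq}), the rank-one condition $A\sim I$ gives a unique presentation $A=I+\lambda Q$ with $Q\in \cP$ and $\lambda\in \bR\setminus\{0\}$.

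The rest is a direct calculation. Parametrize $Q$ by $q:=\mathrm{tr}(E_{11}Q)\in[0,1]$ and a phase, so the off-diagonal entries of $Q$ have modulus $\sqrt{q-q^2}$. In both $\det(A-(1/2)E_{11})$ and $\det(A-(1/2)E_{22})$ the cross-term $\lambda^2 q(1-q)$ cancels, leaving the linear expressions $\tfrac{1}{2}(1+\lambda(1+q))$ and $\tfrac{1}{2}(1+\lambda(2-q))$. Requiring both to vanish forces $1+q=2-q$, hence $q=1/2$ and $\lambda=-2/3$. Therefore $A=I-(2/3)Q=(1/3)Q+Q^\perp$, which rewrites as $Q^\perp+(1/3)(Q^\perp)^\perp$, and $\mathrm{tr}(E_{11}Q^\perp)=1-q=1/2$. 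Setting $Q':=Q^\perp$ and undoing the unitary reduction yields the claim. The main obstacle—such as it is—is purely notational: one must keep track of the fact that the projection $Q$ produced by the spectral decomposition of $A-I$ is the \emph{complementary} projection of the one labelled $Q$ in the statement of the lemma, otherwise the trace condition $\mathrm{tr}(PQ)=1/2$ will appear to be contradicted by the symmetric calculation.
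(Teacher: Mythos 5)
Your proof is correct and follows essentially the same route as the paper's: reduce to $P=E_{11}$, exclude the points of $\oH\setminus H_2$ by the same trace argument, and then solve the determinant conditions explicitly. The only (harmless) difference is that you first use $A\sim I$ to write $A=I+\lambda Q$ and solve two equations in $(\lambda,q)$, whereas the paper parametrizes $A$ by its matrix entries and solves the three determinant equations simultaneously.
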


Note that for any two projections $P,Q \in \mathcal{P}$, we have $\mathrm{tr}\, (PQ) = 1/2$ if and only if $\mathrm{tr}\, (PQ^\perp) = 1/2$.

\begin{proof}
Without loss of generality, we may assume that $P= E_{11}$. Then the set $\{ Q\in \mathcal{P} \, : \, \mathrm{tr}\, (PQ) = 1/2 \}$ is equal to
\[
\left\{ {1\over 2}\left[ \begin{matrix} {1} &  z \cr   \overline{z} &  {1}  \cr\end{matrix} \right] \ : \, z \in {\mathbb{T}} \right\}.
\]
Clearly, for every $z \in \mathbb{T}$ we have
\[
\left({1\over 2} \left[ \begin{matrix} {1} &  z \cr   \overline{z} &  {1}  \cr\end{matrix} \right] \right)^\perp =  {1\over 2} \left[ \begin{matrix} {1} &  - z \cr   -\overline{z} &  {1}  \cr\end{matrix} \right]
\]
and consequently, 
\[
 \left\{ Q + {1\over 3}Q^\perp \, : \, Q\in \mathcal{P}, \, \mathrm{tr}\, (PQ) = {1 \over 2} \right\} = 
\left\{ {1\over 3}\left[ \begin{matrix} {2} &   z \cr   \overline{z} &  {2}  \cr\end{matrix} \right] \ : \, z \in \mathbb{T} \right\} .
\]

In the next step, we will verify that 
\[
\mathcal{B} := \mathcal{C}_{(1/2)E_{11}} \cap \mathcal{C}_{{(1/2)}E_{22}} \cap \mathcal{C}_I\subset H_2.
\] Clearly, ${\oi}\not\in \mathcal{B}$. If $\infty R + bR^\perp \in \mathcal{B}$ for some real number $b$ and some $R \in \mathcal{P}$, then from 
$\infty R + bR^\perp  \sim I$ we conclude that $b=1$. But then  $\infty R + R^\perp  \sim (1/2)E_{11}$ implies $(1/2)R^\perp E_{11} R^\perp = R^\perp$. Considering the trace, we get 
a contradiction.

Thus, $\mathcal{B} \subset H_2$. 
Let $t,s$ be real numbers and $\alpha$ a complex number. 
Then 
\[
A =  
\left[ \begin{matrix}t &  \alpha \cr   \overline{\alpha} &  s  \cr\end{matrix} \right] \in \mathcal{B}
\]
holds if and only if 
\[
\det (A - (1/2)E_{11}) = \det (A - (1/2)E_{22}) = \det (A - I) = 0,
\]
which is further equivalent to
\[
ts - {1 \over 2}s = |\alpha|^2, \ \ \ ts - {1 \over 2}t = |\alpha|^2, \ \ \ \text{and} \ \ \ (t-1)(s-1)  = |\alpha|^2 .
\]
Real numbers $t,s$ and a complex number $\alpha$ satisfy the above system of equations if and only if $t=s= 2/3$ and $| \alpha | = 1/3$. Hence, $A \in \mathcal{B}$ if and only if
\[
A = {1\over 3}\left[ \begin{matrix} {2} &  z \cr   \overline{z} &  {2}  \cr\end{matrix} \right]
\] 
for some $z \in \mathbb{T}$.  
\end{proof}

\begin{lemma}\label{c(s)}
Let $A, B\in H_2$ satisfy $A<B$. 
Then $X\in H_2$ is coherent to some element of $\mathcal{S}_{A, B}$ if and only if 
\begin{equation}\label{o1}
X\notin (A, B),\ \  X\leq B,\ \ \text{and} \ \ X\not< A,
\end{equation}
or
\begin{equation}\label{o2}
X\notin (A, B),\ \ X\geq A,\ \ \text{and}\ \ X\not> B 
\end{equation}
holds.
\end{lemma}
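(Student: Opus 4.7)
The plan is to normalize to $A=0,B=I$ and then translate the question into an inequality on the eigenvalues of $X$. First, by Lemma~\ref{Porder}, there is an affine automorphism $\psi$ of $\oH$ with $\psi(A)=0$, $\psi(B)=I$, $\psi([A,B])=[0,I]$, preserving $\leq$ and $<$ on $H_2$. Since affine automorphisms preserve coherency in both directions, $\psi(\mathcal{S}_{A,B})=\mathcal{S}_{0,I}$, and the three-part conditions (1), (2) applied to $X$ translate (using $\psi((A,B))=(0,I)$, $\psi(A)=0$, $\psi(B)=I$) to the analogous conditions with $A=0$, $B=I$ applied to $\psi(X)$. So one may assume $A=0,B=I$. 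By Corollary~\ref{A<B} and Lemma~\ref{veryeasy}, $\mathcal{S}_{0,I}=\mathcal{P}$, and I must characterize those $X\in H_2$ coherent to some rank-one projection.

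The key input will be the $2\times 2$ determinant identity $\det(X-R)=\det(X)+\det(R)-\mathrm{tr}(X\cdot\mathrm{adj}(R))$. Specializing to $R\in\mathcal{P}$, where $\det(R)=0$ and $\mathrm{adj}(R)=\mathrm{tr}(R)I-R=R^\perp$, this becomes $\det(X-R)=\det(X)-\mathrm{tr}(XR^\perp)$. Hence $X\sim R$ if and only if $\mathrm{tr}(XR^\perp)=\det(X)$. As $R$ ranges over $\mathcal{P}$, so does $R^\perp$; writing $R^\perp=rr^*$ with $\|r\|=1$, the value $\mathrm{tr}(XR^\perp)=r^*Xr$ ranges over the numerical range of $X$, which for a Hermitian $X$ with eigenvalues $\lambda_1\leq\lambda_2$ is exactly the interval $[\lambda_1,\lambda_2]$ (immediate from the expansion $r^*Xr=|\alpha|^2\lambda_1+|\beta|^2\lambda_2$ in an orthonormal eigenbasis). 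Therefore the $X\in H_2$ coherent to some element of $\mathcal{S}_{0,I}$ are precisely those satisfying $\lambda_1\lambda_2=\det(X)\in[\lambda_1,\lambda_2]$.

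It remains to check that the inequality $\lambda_1\leq\lambda_1\lambda_2\leq\lambda_2$ is equivalent to the disjunction of $\lambda_1\leq 0\leq\lambda_2\leq 1$ and $0\leq\lambda_1\leq 1\leq\lambda_2$, which, via the dictionary $X\leq I\Leftrightarrow\lambda_2\leq 1$, $X\geq 0\Leftrightarrow\lambda_1\geq 0$, $X\not<0\Leftrightarrow\lambda_2\geq 0$, $X\not>I\Leftrightarrow\lambda_1\leq 1$, and $X\notin(0,I)\Leftrightarrow\lambda_1\leq 0$ or $\lambda_2\geq 1$, are exactly conditions (1), (2) specialized to $A=0$, $B=I$. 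This is a short elementary case analysis splitting on whether $\lambda_1<0$, $\lambda_1=0$, or $\lambda_1>0$, and where $\lambda_2$ sits relative to $0$ and $1$; the bookkeeping of matching the two regions given by the eigenvalue inequality with the two clauses (1), (2) is the most tedious but entirely routine step, and is the only obstacle of any substance.
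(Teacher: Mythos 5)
Your proof is correct, and it takes a genuinely different route from the paper's. The paper, after the same normalization to $A=0$, $B=I$ and the same reduction to the eigenvalue condition ``one eigenvalue in $[0,1]$, the other in $(-\infty,0]\cup[1,\infty)$'', proves the two directions separately: for the forward direction it writes $X=P+R$ with $R$ of rank at most one (hence $R\ge 0$ or $R\le 0$) and estimates quadratic forms to locate the eigenvalues; for the backward direction it writes $X=aQ+bQ^\perp$, observes $\det(X-Q)\cdot\det(X-Q^\perp)\le 0$, and invokes connectedness of $\mathcal{P}$ together with the intermediate value theorem. Your argument replaces both directions with the single identity $\det(X-R)=\det X-\mathrm{tr}(XR^\perp)$ for $R\in\mathcal{P}$, so that coherency to some projection becomes exactly the statement $\det X\in[\lambda_1,\lambda_2]$ via the numerical range; this is an exact characterization obtained in one stroke, with no connectedness or IVT input, at the cost of a slightly different (but equally routine) terminal case analysis, namely checking $\lambda_1\le\lambda_1\lambda_2\le\lambda_2\iff(\lambda_1\le 0\le\lambda_2\le 1\text{ or }0\le\lambda_1\le 1\le\lambda_2)$. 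I verified that case analysis and your dictionary between (1), (2) and eigenvalue inequalities; both are accurate, so leaving them as ``routine'' is comparable to the paper's own ``it is easy to check''. Your determinant-identity technique is essentially the quantitative form of Lemma \ref{tr1} and recurs elsewhere in the paper (e.g.\ in the proof of Theorem \ref{ftcgelem}), so it fits naturally into the framework.
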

\begin{proof}
By Lemma \ref{Porder}, we may assume that $A=0$ and $B=I$ without loss of generality.
Then we have $\mathcal{S}_{A, B}=\mathcal{S}_{0, I}=\mathcal{P}$. 
In that case, it is easy to check that \eqref{o1} or \eqref{o2} holds if and only if one of the eigenvalues of $X$ is in $[0,1]$ and the other one is in $(-\infty, 0]\cup [1, \infty)$.
We need to verify that $X$ satisfies this condition if and only if $X$ is coherent to some element of $\mathcal{P}$.

Assume first that $X \sim P$ for some $P \in \mathcal{P}$, that is, $X = P +R$ for some $R \in H_2$ of rank at most one. 
If $R\geq 0$, then $P\leq X$ and thus the larger eigenvalue of $X$ is at least $1$. 
On the other hand, for a unit vector $y\in \mathbb{C}^2$ that is perpendicular to the range of $R$, we have $(Xy,y)=( (P+R) y,y) = (Py,y) \in [0,1]$.
This together with $X\geq 0$ implies that the smaller eigenvalue of $X$ is in $[0,1]$. 
Similarly, $R\leq 0$ implies that one of the eigenvalues of $X$ is in $[0,1]$ and the other is in $(-\infty, 0]$. 

To prove the other direction, assume that one of the eigenvalues of $X$ is in $[0,1]$ and the other is in $(-\infty, 0]\cup [1, \infty)$. 
Then $X=aQ+bQ^\perp$ for some $Q\in \cP$, $a\in [0,1]$, and $b\in (-\infty, 0]\cup [1, \infty)$.
It follows that $\det(X-Q)\cdot \det(X-Q^\perp)\leq 0$. 
Combining this with the fact that $\cP$ is connected, we see that there is $P\in \cP$ satisfying $\det(X-P)=0$, as desired.
\end{proof}

\begin{corollary}\label{matrixtriple}
Let $A, B\in H_2\subset \oH$ satisfy $A<B$. 
Let $X\in H_2$ be a matrix with $d(A, X)=d(B,X)=2$. 
Then the triple $A, B, X$ is in spacelike position if and only if \eqref{o1} or \eqref{o2} holds.
\end{corollary}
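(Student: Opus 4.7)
My plan is to reduce the statement to the two earlier lemmas, namely Lemma~\ref{noauto} and Lemma~\ref{c(s)}, via a standard invariance argument.

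First, I would observe that any automorphism $\varphi$ of $\oH$ satisfies $\varphi(\mathcal{C}_Z)=\mathcal{C}_{\varphi(Z)}$ for every $Z\in\oH$, hence
\[
\varphi(\mathcal{C}_{A_1}\cap\mathcal{C}_{A_2}\cap\mathcal{C}_{A_3})=\mathcal{C}_{\varphi(A_1)}\cap\mathcal{C}_{\varphi(A_2)}\cap\mathcal{C}_{\varphi(A_3)}
\]
for any triple. Thus the property ``$\mathcal{C}_{A_1}\cap\mathcal{C}_{A_2}\cap\mathcal{C}_{A_3}$ is empty'' is invariant under standard automorphisms. Combining this with Lemma~\ref{noauto}, which states that $\mathcal{C}_0\cap\mathcal{C}_{\oi}\cap\mathcal{C}_I=\emptyset$ while $\mathcal{C}_0\cap\mathcal{C}_{\oi}\cap\mathcal{C}_J\neq\emptyset$, and with the dichotomy provided by Lemma~\ref{threepoints}, I obtain the following key criterion: a triple of pairwise distance-$2$ points $A_1,A_2,A_3\in\oH$ is in spacelike position if and only if $\mathcal{C}_{A_1}\cap\mathcal{C}_{A_2}\cap\mathcal{C}_{A_3}\neq\emptyset$.

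Next I would apply this to $A_1=A$, $A_2=B$, $A_3=X$. The hypothesis $A<B$ implies $B-A$ is positive invertible of rank $2$, so $A\not\sim B$ and $d(A,B)=2$; hence $\mathcal{C}_A\cap\mathcal{C}_B=\mathcal{S}_{A,B}$. Therefore the condition $\mathcal{C}_A\cap\mathcal{C}_B\cap\mathcal{C}_X\neq\emptyset$ is exactly the condition that $X$ be coherent to some element of $\mathcal{S}_{A,B}$. By Lemma~\ref{c(s)}, this happens if and only if \eqref{o1} or \eqref{o2} holds, finishing the proof.

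I do not foresee a substantive obstacle: all of the work has been done in the preceding lemmas, and the only thing to check is the automorphism-invariance of emptiness of the triple intersection of cones, which is immediate. The argument relies crucially on Lemma~\ref{noauto} to distinguish the timelike and spacelike normal forms through this single invariant, and on the identification $\mathcal{C}_A\cap\mathcal{C}_B=\mathcal{S}_{A,B}$ afforded by the assumption $A<B$.
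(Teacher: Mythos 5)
Your proof is correct and is essentially the argument the paper intends (the corollary is stated without proof, as an immediate consequence of Lemma~\ref{c(s)}): the invariance of $\mathcal{C}_{A_1}\cap\mathcal{C}_{A_2}\cap\mathcal{C}_{A_3}=\emptyset$ under automorphisms, combined with Lemmas~\ref{threepoints} and~\ref{noauto}, characterizes spacelike triples by nonemptiness of the triple cone intersection, and $\mathcal{C}_A\cap\mathcal{C}_B\cap\mathcal{C}_X\neq\emptyset$ is precisely the condition that $X$ be coherent to some point of $\mathcal{S}_{A,B}$, which Lemma~\ref{c(s)} translates into \eqref{o1} or \eqref{o2}. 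All steps check out, including the verification that $A<B$ gives $d(A,B)=2$ so that the triple is pairwise at distance $2$.
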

From this corollary, we see for example that a triple $A,B, C\in H_2$ with $A<C<B$ is in timelike position.

\begin{corollary}\label{SJS}
Let $A\in H_2\cap \mathcal{S}_{0, J}$. Then the triple $(1/2)A, (1/2)(J-A), J$ is in timelike position.
\end{corollary}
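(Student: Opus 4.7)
First, I would analyze $A$ via its rank decomposition. Because $A \in H_2 \cap \mathcal{S}_{0,J}$ forces $A \sim 0$ and $A \sim J$, both $A$ and $J-A$ are hermitian of rank at most one. Since $J$ has rank two, both summands in $J = A + (J-A)$ must have rank exactly one, so I can write $A = \lambda P$ and $J - A = \mu Q$ with $\lambda, \mu \in \mathbb{R}\setminus\{0\}$ and $P, Q \in \cP$. Taking traces in $J = \lambda P + \mu Q$ shows $\mu = -\lambda$, and $P \neq Q$ since otherwise $J = 0$. In particular $s := \mathrm{tr}(PQ) \in [0,1)$.

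The strategy is to build a standard automorphism $\psi$ that sends the outer two points $(1/2)A$ and $J$ to $0$ and $\oi$, respectively, and then verify that the image of the remaining point lies in $H_2^{++} \cup H_2^{--}$. Example \ref{definite} will then show that $0, \oi, \psi((1/2)(J-A))$ is in timelike position, and since the classification into timelike and spacelike is invariant under standard automorphisms (Lemma \ref{threepoints}), the original triple will be timelike too. A natural candidate is
\[
\psi(X) = (J - X)^{-1} - (J - (1/2)A)^{-1},
\]
a composition of the affine map $X \mapsto J - X$, the inversion, and a translation. Lemma \ref{apbq} applied to $J - (1/2)A = (\lambda/2)P - \lambda Q$ (coefficients of opposite sign) guarantees that the shift vector is a well-defined element of $H_2$, and one checks directly that $\psi((1/2)A) = 0$ and $\psi(J) = \oi$.

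The main computational step, and the main obstacle, is to evaluate
\[
\psi((1/2)(J-A)) = ((1/2)(J+A))^{-1} - (J - (1/2)A)^{-1},
\]
where $(1/2)(J+A) = \lambda P - (\lambda/2)Q$. I plan to use the $2\times 2$ inverse formula $M^{-1} = (\mathrm{tr}(M)\, I - M)/\det(M)$ together with the identity $\det(\alpha P + \beta Q) = \alpha \beta (1-s)$ valid for $P \neq Q$, which follows easily from $2\det(M) = (\mathrm{tr}\,M)^2 - \mathrm{tr}(M^2)$. A direct expansion and subtraction should collapse to a scalar multiple of $-(P^{\perp} + Q^{\perp})$; since by Lemma \ref{apbq} the sum $P^{\perp} + Q^{\perp}$ lies in $H_2^{++}$ (and $\lambda(1-s) \neq 0$), the image is definite. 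The only delicate point is the bookkeeping in this short calculation; once it is carried out, invoking Example \ref{definite} completes the proof.
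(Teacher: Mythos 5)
Your proposal is correct, and it takes a genuinely different route from the paper. The paper's proof stays entirely inside the order structure of $H_2$: after writing $A=aP$, $J-A=bQ$ with $a+b=0$, it shows $a\leq -1<1\leq b$ (up to swapping the roles of $A$ and $J-A$), deduces $(1/2)A<(1/2)(J-A)$, checks that $J-(1/2)A=(a/2)P+bQ$ and $J-(1/2)(J-A)=aP+(b/2)Q$ lie in $H_2^{+-}$ via Lemma \ref{apbq}, and concludes by the order-theoretic characterization of spacelike triples (Corollary \ref{matrixtriple}, which rests on Lemma \ref{c(s)}). You instead normalize the triple by an explicit standard automorphism $\psi$ sending $(1/2)A\mapsto 0$ and $J\mapsto\oi$, and reduce to Example \ref{definite}. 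Your computation does come out as predicted: with $M_1=(1/2)(J+A)=\lambda P-(\lambda/2)Q$ and $M_2=J-(1/2)A=(\lambda/2)P-\lambda Q$, both determinants equal $-(\lambda^2/2)(1-s)$, the adjugate formula gives $\mathrm{tr}(M_1)I-M_1=(\lambda/2)(2P^\perp-Q^\perp)$ and $\mathrm{tr}(M_2)I-M_2=-(\lambda/2)(2Q^\perp-P^\perp)$, and the difference collapses to
\[
\psi\bigl((1/2)(J-A)\bigr)=M_1^{-1}-M_2^{-1}=-\frac{1}{\lambda(1-s)}\,(P^\perp+Q^\perp)\in H_2^{++}\cup H_2^{--}.
\]
What each approach buys: the paper's argument avoids any inversion computation but needs the heavier Corollary \ref{matrixtriple} and a case split on the sign of $a$; yours is self-contained at the level of Lemma \ref{apbq} and Example \ref{definite}, needs no case split (the sign of $\lambda$ is irrelevant because $H_2^{++}\cup H_2^{--}$ is closed under negation), and incidentally re-verifies that all three pairwise $d$-distances equal $2$, since the normalized triple is $0,C,\oi$ with $C$ invertible in $H_2$. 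The only point worth making explicit in a write-up is that the timelike/spacelike dichotomy is invariant under standard automorphisms and under reordering of the triple (Lemma \ref{threepoints} and Remark \ref{remark01}), both of which you correctly invoke.
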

\begin{proof}
From $0\sim A\sim J$, we get $0\sim J-A\sim J$. 
It follows that there are projections $P, Q\in \cP$ and $a,b\in \mathbb{R}$ such that $A=aP$ and $J-A=bQ$.
We get $aP+bQ=J$. By the definition of $J$, we see that $P\neq Q$ and $0 = {\rm tr}\, J = {\rm tr}\, (aP + bQ) = a+b$. Thus, $a=-b$. If $a < 0$, then $aI \le aP \le aP + bQ = J$, and hence, $a\leq -1<1\leq b$. Similarly, if $a>0$, then $b\leq -1<1\leq a$.

Assume that $a\leq -1<1\leq b$. Then we have $A=aP<bQ=J-A$ by Lemma \ref{apbq}. 
Moreover, since $J-(1/2)A=(1/2)aP+bQ$ and $J-(1/2)(J-A)=aP+(1/2)bQ$, Lemma \ref{apbq} also implies that $J-(1/2)A, J-(1/2)(J-A)\in H_2^{+-}$.
Thus the preceding corollary implies that the triple $(1/2)A, (1/2)(J-A), J$ is in timelike position.
Similarly, if $b\leq -1<1\leq a$, then we see that the triple $(1/2)(J-A), (1/2)A, J$ is in timelike position.
\end{proof}

\subsection{Identity-type theorem}\label{identity-type theorem}\index{identity-type theorem}
The set $U_2$ of $2\times 2$ unitaries is endowed with the usual topology as a closed subset of a finite-dimensional vector space.   
We endow $\oH$ with the topology that comes from the identification with $U_2$ as in Subsection \ref{oHU2}. 
From the discussion there, it is easy to see that $H_2\subset \oH$ is open and that the relative topology given to the subset $H_2\subset \oH$ coincides with the usual topology of $H_2$.

\begin{lemma}\label{srtsrt}
For any pair $A, B\in \oH$, there is a homeomorphic automorphism $\varphi\colon \oH\to\oH$ satisfying $\varphi(A)=B$.
\end{lemma}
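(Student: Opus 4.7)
The strategy splits naturally into two steps: establishing that every standard automorphism of $\oH$ is a homeomorphism, and establishing that the group of standard automorphisms acts transitively on $\oH$; the conclusion is then immediate.

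For transitivity, I would use the construction embedded in the proof of Lemma \ref{pocelo}, which produces, for each $A \in \oH$, an explicit standard automorphism $\psi_A$ with $\psi_A(A) = \oi$. Then for given $A, B \in \oH$, the composition $\psi_B^{-1}\circ \psi_A$ is a standard automorphism sending $A$ to $B$.

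For the homeomorphism claim, by Definition \ref{automorphisms} and Lemma \ref{simsim}, every standard automorphism and its inverse factor as finite compositions of affine automorphisms and inversions, so it suffices to check these generators. Since the topology on $\oH$ is by definition the one pulled back via the Cayley transform $f\colon \oH \to U_2$, I would transport the question to $U_2$. The inversion corresponds under $f$ to $U \mapsto -U^\ast$ by Claim \ref{cl}, which is plainly continuous. An affine automorphism $X\mapsto cSXS^\ast + B$ (and its transpose variant) decomposes into a translation, a sign change, a transpose, and a congruence $X \mapsto SXS^\ast$; sign change, transpose, and unitary congruence are manifestly continuous via $f$ since each commutes with it in a natural way. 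For translations and positive-definite congruences, the formulas $f(X) = I - 2i(X + iI)^{-1}$ from \eqref{mmmm} and $f^{-1}(U) = i(I + U)(I - U)^{-1}$ realize the maps as rational self-maps of $U_2 \setminus \{U : 1 \in \sigma(U)\}$, and the extensions defined in Subsection \ref{automorphism} are to be checked to coincide with the continuous extensions.

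The main obstacle is this continuity check at points of $\oH \setminus H_2$. The representative case is translation at $\oi$ (i.e.\ at $U = I$): for a nearby unitary $U = \exp(iH)$ with small Hermitian $H$, the expansion $(I - U)^{-1} = iH^{-1} + O(1)$ yields $f^{-1}(U) + B = -2H^{-1} + O(1)$ and hence $f(f^{-1}(U) + B) = U + O(H^2)$, matching the rule $\oi + B = \oi$. Continuity at $\infty P + aP^\perp$ (i.e.\ at $U = P + f(a)P^\perp$) similarly reduces to verifying that any $H_2$-sequence $s_n P_n + t_n P_n^\perp$ with $s_n \to \infty$, $t_n \to a$, $P_n \to P$ is sent by translation to a sequence converging in $\oH$ to $\infty P + (a + \mathrm{tr}(P^\perp B))P^\perp$. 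Analogous limit computations handle the congruence case, which together with the transitivity argument completes the proof.
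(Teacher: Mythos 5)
Your strategy is workable but takes a genuinely different, and much heavier, route than the paper. The paper's proof is a two-line argument carried out entirely in $U_2$: given $V=f(A)$ and $V'=f(B)$, the map $U\mapsto V'V^\ast U$ is obviously a homeomorphism of $U_2$, fixes the coherency relation in both directions because left multiplication by the invertible matrix $V'V^\ast$ preserves $\mathrm{rank}\,(U-U')$, and sends $V$ to $V'$; transporting back through $f$ gives the required homeomorphic automorphism of $\oH$. Note that this automorphism need not be recognized as standard at this point in the paper, and the paper explicitly remarks right after the lemma that it deliberately avoids proving the general fact that every automorphism is a homeomorphism. Your proposal instead proves exactly that stronger fact for standard automorphisms (checking each generator through the Cayley transform) and combines it with transitivity via the construction in Lemma \ref{pocelo}. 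Both halves of your argument are sound: the transitivity step is correct as stated, and the generator-by-generator continuity check can be completed. What your approach buys is the extra information that standard automorphisms are homeomorphisms, which the paper never needs; what it costs is that essentially all of the work sits in the boundary continuity verifications at points of $\oH\setminus H_2$ for translations and congruences $X\mapsto SXS^\ast$, which you only sketch (the translation case at $\oi$ is written out, but the congruence case at $\infty P+aP^\perp$, where one must match the limit with the defined extension $\infty Q+a\,\mathrm{tr}(Q^\perp SS^\ast)Q^\perp$, is deferred to ``analogous computations''). If you intend to submit this route, those computations need to be carried out in full; otherwise the unitary left-multiplication trick gets you the lemma with no analysis at all.
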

\begin{proof}
This can be easily seen by considering $U_2$ instead of $\oH$. 
Indeed, if $V,V'\in U_2$, then the mapping $\varphi\colon U_2\to U_2$ defined by $U\mapsto V'V^*U$, $U\in U_2$, is clearly a homeomorphism satisfying $\varphi(V)=V'$ and 
\[
U\sim U'\iff \varphi(U)\sim \varphi(U')
\] 
for every pair $U, U'\in U_2$.
\end{proof}

\begin{remark}
In fact, it turns out that every automorphism of $\oH$ is a homeomorphism, but we will not need such a general fact.
\end{remark}

\begin{lemma}
The interior of a cone in $\oH$ is empty. 
Let $\ell\subset \oH$ be a line and $\mathcal{U}\subset \oH$ an open subset. 
If $\ell\cap \mathcal{U}$ is nonempty, then it has infinitely many points.  
\end{lemma}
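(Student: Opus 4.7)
The plan is to use the Cayley-transform identification of $\oH$ with $U_2$ from Subsection \ref{oHU2} to convert both claims into topological facts about subsets of $U_2$, recalling that $H_2\subset \oH$ corresponds to the open subset $\{U\in U_2\,:\,1\notin \sigma(U)\}$.

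For the first claim, I will first apply Lemma \ref{srtsrt} to reduce to the single cone $\mathcal{C}_{\oi}$. For any $A\in \oH$ there is a homeomorphic automorphism $\varphi$ with $\varphi(A)=\oi$; since $\varphi$ preserves coherency, $\varphi(\mathcal{C}_A)=\mathcal{C}_{\oi}=\oH\setminus H_2$, and since $\varphi$ is a homeomorphism, $\mathcal{C}_A$ has empty interior if and only if $\mathcal{C}_{\oi}$ does. Thus it suffices to show that $H_2$ is dense in $\oH$, equivalently, that $f(H_2)$ is dense in $U_2$. Given any $U\in U_2$ with $1\in \sigma(U)$, I would consider the perturbation $U_n:=e^{i/n}U$, whose spectrum is $e^{i/n}\cdot \sigma(U)$. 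A short case analysis (according as $U=I$ or $U$ has a second eigenvalue $\mu\neq 1$) shows that $1\notin \sigma(U_n)$ for all sufficiently large $n$, so $U_n\in f(H_2)$ and $U_n\to U$.

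For the second claim, by Lemma \ref{jojhcer} the line $\ell$ is either $\{aP+A\,:\,a\in \oR\}$ with $A\in H_2$ and $P\in \mathcal{P}$, or $\{\infty P+aP^\perp\,:\,a\in \oR\}$ with $P\in \mathcal{P}$. I would show that the natural parametrization $\gamma\colon \oR\to \ell$ is a homeomorphism; then $\ell$ carries the topology of a circle, so any nonempty open subset of $\ell$ contains an open arc and hence infinitely many points. Bijectivity of $\gamma$ is clear, continuity at finite $a$ is immediate (since the topology of $\oH$ restricts to the usual topology of $H_2$), and because $\oR$ is compact and $\oH$ is Hausdorff, continuity will automatically upgrade $\gamma$ to a homeomorphism onto $\ell$ with the subspace topology.

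The main obstacle is continuity of $\gamma$ at $a=\infty$ in the first case; the second case is immediate because, under Cayley, $\gamma$ becomes $a\mapsto P+f(a)P^\perp$ and $f\colon \oR\to \mathbb{T}$ is already a homeomorphism. For the first case, I would choose coordinates so that $P=E_{11}$ and compute $f(aP+A)=I-2i(aP+A+iI)^{-1}$ by explicit $2\times 2$ matrix inversion, verifying that $f(aP+A)\to P+f(\mathrm{tr}(P^\perp A))P^\perp$ as $a\to \infty$; this limit equals $f(\infty P+\mathrm{tr}(P^\perp A)P^\perp)=f(\gamma(\infty))$, completing the continuity check.
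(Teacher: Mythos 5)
Your proof is correct. For the first claim you follow essentially the paper's route: both arguments pass to $U_2$, use Lemma \ref{srtsrt} to reduce to a single cone, and observe that the cone with vertex $\oi$ corresponds to $\{U\in U_2\,:\,1\in\sigma(U)\}$; the paper simply asserts that this set has empty interior, while you supply the detail with the rotation $e^{i/n}U$, which is a clean way to fill in that step. For the second claim your route is genuinely different and heavier: you prove that every line, with its subspace topology, is a topological circle by checking that the parametrization $a\mapsto aP+A$ is continuous at $a=\infty$ via an explicit $2\times 2$ inversion under the Cayley transform. The paper instead uses Lemma \ref{srtsrt} once more to assume $0\in\ell\cap\mathcal{U}$, so that $\ell=\{tP\,:\,t\in\oR\}$ and $H_2\cap\mathcal{U}$ is an ordinary open neighborhood of $0$ in $H_2$, whence $\{tP\,:\,|t|<\varepsilon\}\subset\ell\cap\mathcal{U}$ for small $\varepsilon$ — no analysis at $\infty$ is needed because homogeneity moves the problematic point away. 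Your computation is correct (the limit $f(aP+A)\to P+f(\mathrm{tr}(P^\perp A))P^\perp$ checks out) and buys the stronger structural fact that lines are circles, but for the stated lemma the paper's reduction is shorter; note also that for the conclusion you only need continuity and injectivity of the parametrization, not the full homeomorphism.
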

\begin{proof}
Observe that a unitary $U\in U_2$ is coherent to $I$ if and only if one of the eigenvalues of $U$ is $1$. 
It is easily seen that the set of all such unitaries has empty interior in $U_2$. 
This together with Lemma \ref{srtsrt} proves the first statement.
Let us show the second statement. 
By Lemma \ref{srtsrt} again, it suffices to consider the case $0\in \ell\cap \mathcal{U}$. 
Then $\ell=\{tP\,:\, t\in \oR\}$ for some $P\in \cP$, and $H_2\cap \mathcal{U}$ is an open neighborhood of $0$ in $H_2$. It follows that $\ell\cap \mathcal{U}$ has infinitely many points.
\end{proof}

\begin{lemma}\label{ua}
Let $\mathcal{U}\subset \mathcal{A}\subset \oH$ and assume that $\mathcal{U}$ is open in $\oH$. 
Let $\varphi\colon \mathcal{A}\to \oH$ be a coherency preserving map. 
Assume that $\varphi(X)=X$ for every $X\in \mathcal{U}$. 
If $A\in \mathcal{A}$ satisfies $\mathcal{C}_A \cap \mathcal{U}\neq \emptyset$, then $\varphi(A)=A$. 
\end{lemma}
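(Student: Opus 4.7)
The case $A\in \mathcal{U}$ follows immediately from the hypothesis, so assume $A\notin \mathcal{U}$. I plan to fix any $B\in \mathcal{C}_A\cap \mathcal{U}$ (noting $B\neq A$, so $d(A,B)=1$ and the line $\ell_{A,B}$ is defined by Lemma \ref{simpll}) and then constrain $\varphi(A)$ in two stages: first show $\varphi(A)\in \ell_{A,B}$, and then rule out all points of $\ell_{A,B}\setminus\{A\}$.

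For the first stage, the preceding lemma gives infinitely many points in $\ell_{A,B}\cap \mathcal{U}$ (since $B$ lies in this intersection and $\mathcal{U}$ is open), so one can pick $C\in \ell_{A,B}\cap \mathcal{U}$ with $C\neq B$. From $A\sim B$, $A\sim C$, and $\varphi(B)=B$, $\varphi(C)=C$, we get $\varphi(A)\sim B$ and $\varphi(A)\sim C$, so $\varphi(A)$ is coherent to two distinct points of $\ell_{A,B}$; Lemma \ref{simplll} then forces $\varphi(A)\in \ell_{A,B}$. For the second stage it will suffice to produce a point $D\in \mathcal{C}_A\cap \mathcal{U}$ with $D\notin \ell_{A,B}$, because then $\varphi(A)\sim \varphi(D)=D$; Lemma \ref{simplll} applied to the line $\ell_{A,B}$ and the external point $D$ (which satisfies $D\sim A\in \ell_{A,B}$) identifies $A$ as the unique point of $\ell_{A,B}$ coherent to $D$, so together with $\varphi(A)\in \ell_{A,B}$ this forces $\varphi(A)=A$.

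Constructing $D$ is the main obstacle. I would first apply Lemma \ref{srtsrt} to choose a homeomorphic automorphism $\Phi$ of $\oH$ with $\Phi(B)=0$, and then replace $\varphi, \mathcal{A}, \mathcal{U}, A, B$ by $\Phi\varphi\Phi^{-1}, \Phi(\mathcal{A}), \Phi(\mathcal{U}), \Phi(A), 0$ respectively; the hypotheses transfer because $\Phi$ is a coherency-preserving homeomorphism, so we may assume $B=0$. Then $A\sim 0$ gives $A=aP$ for some $a\in \oR\setminus\{0\}$ and $P\in \mathcal{P}$, and a further unitary similarity $X\mapsto UXU^*$ (which fixes $0$) lets us assume $P=E_{11}$, so $\ell_{A,B}=\{tE_{11}:t\in \oR\}$. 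When $a\in \mathbb{R}\setminus\{0\}$, I would take
\[
D=\left[\begin{matrix} 0 & y \cr \bar y & -|y|^2/a \end{matrix}\right]
\]
for small $y\neq 0$; the identity $\det(D-aE_{11})=0$ shows $D\sim A$, while the nonzero off-diagonal entry keeps $D$ off $\ell_{A,B}$. When $a=\infty$, I would instead take
\[
D=\left[\begin{matrix} 0 & y \cr \bar y & 0 \end{matrix}\right]
\]
for small $y\neq 0$; then $E_{22}DE_{22}=0$ gives $D\sim \infty E_{11}=A$, and again $D\notin \ell_{A,B}$. In both cases $D\to 0$ as $y\to 0$, so $D\in \mathcal{U}$ for sufficiently small $|y|$. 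The construction succeeds because the cone $\mathcal{C}_A$ is locally three-dimensional at every non-vertex point, while $\ell_{A,B}$ is only one-dimensional, leaving ample room to perturb within $\mathcal{C}_A$ off the distinguished line.
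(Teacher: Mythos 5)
Your proof is correct and follows essentially the same route as the paper's: both arguments locate two distinct lines through $A$ whose intersections with $\mathcal{U}$ are fixed pointwise by $\varphi$, and then trap $\varphi(A)$ in their intersection $\{A\}$. The only real difference is that you construct the second coherent point $D\in(\mathcal{C}_A\cap\mathcal{U})\setminus\ell_{A,B}$ by an explicit matrix computation, whereas the paper asserts its existence more briefly from the openness of $\mathcal{U}$ and the fact that a line has empty interior.
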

\begin{proof}
Let $A\in \mathcal{A}$ satisfy $\mathcal{C}_A \cap \mathcal{U}\neq \emptyset$.
Since $\mathcal{U}$ is open and the interior of a line is empty, we may take a pair of distinct lines $\ell_1$, $\ell_2$ passing through $A$ such that $\ell_1\cap \mathcal{U}$ and $\ell_2\cap \mathcal{U}$ are nonempty. 
Then, by the openness of $\mathcal{U}$ again we see that $\ell_1\cap \mathcal{U}$ and $\ell_2\cap \mathcal{U}$ have infinitely many points. 
Observe that $\varphi(\ell_1\cap \mathcal{A})$ and $\varphi(\ell_2\cap \mathcal{A})$ are coherent sets containing $\varphi(\ell_1\cap \mathcal{U})=\ell_1\cap \mathcal{U}$ and $\varphi(\ell_2\cap \mathcal{U})=\ell_2\cap \mathcal{U}$, respectively. 
It follows that $\varphi(\ell_1\cap \mathcal{A})\subset \ell_1$ and $\varphi(\ell_2\cap \mathcal{A})\subset \ell_2$. 
Thus we conclude that 
\[
\varphi(A)\in \varphi(\ell_1\cap \mathcal{A})\cap \varphi(\ell_2\cap \mathcal{A})\subset \ell_1\cap \ell_2=\{A\}.
\] 
\end{proof}

\begin{lemma}\label{smtq}
Let $A,B,C \in H_2$ satisfy $A < B < C$. Let $\varphi \colon [A,C ] \to \oH$ be a coherency preserver. Assume that $\varphi (X) = X$ for every $X \in [A,B]$. Then $\varphi (X) = X$ for every $X \in [A,C]$.
\end{lemma}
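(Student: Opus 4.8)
The plan is to put the interval into a normal form, propagate the identity outward from the open interior via Lemma \ref{ua}, and treat the boundary (especially the corner $C$) last.

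First I would normalize. By Lemma \ref{Porder} there is an affine automorphism $\psi$ of $\oH$ with $\psi([A,B])=[0,I]$ that preserves the order; since it carries the minimum and maximum of $[A,B]$ to those of $[0,I]$, we have $\psi(A)=0$, $\psi(B)=I$, and $\psi(C)=:C'>I$. Then $\psi\circ\varphi\circ\psi^{-1}$ is a coherency preserver on $\psi([A,C])=[0,C']$ that is the identity on $[0,I]$, and proving it equals the identity on $[0,C']$ yields the lemma. So assume $A=0$, $B=I$, $C=C'>I$. The next step is to show $\varphi$ fixes the whole open interior $(0,C')$. I apply Lemma \ref{ua} with the open set $\mathcal{U}=(0,I)$: if $X\in[0,C']$ has smaller eigenvalue $\lambda_{\min}(X)\in(0,1)$, writing $X=\lambda_{\min}(X)P+\lambda_{\max}(X)P^{\perp}$ and taking $Y=\lambda_{\min}(X)P+qP^{\perp}$ with $q\in(0,1)$ gives $Y\in(0,I)$ and $X\sim Y$, so $\mathcal{C}_X\cap\mathcal{U}\neq\emptyset$ and $\varphi(X)=X$. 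In particular $\varphi$ is the identity on the open set $\mathcal{U}_1:=\{X:0<X<C',\ \lambda_{\min}(X)<1\}$.

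A second application of Lemma \ref{ua}, now with $\mathcal{U}_1$, reaches the rest of the interior: if $0<X<C'$ and $\lambda_{\min}(X)\geq1$, choose a unit eigenvector $u$ for $\lambda_{\min}(X)$ and set $Y=X-tuu^{\ast}$ with $t\in(\lambda_{\min}(X)-1,\lambda_{\min}(X))$. Then $X\sim Y$, $Y>0$, $\lambda_{\min}(Y)<1$, and $C'-Y=(C'-X)+tuu^{\ast}>0$ because $C'-X>0$ for interior $X$; hence $Y\in\mathcal{U}_1$ and $\varphi(X)=X$. Thus $\varphi$ is the identity on the open interior $(0,C')$. I then pick up the boundary faces. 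A point $X\in[0,C']$ on exactly one face is still coherent to an interior point: for $X=aP$ with $C'-aP>0$ take $Y=aP+tP^{\perp}$ with $t$ small; for $X=C'-bQ$ with $X>0$ take $Y=X-sQ^{\perp}$ with $s$ small. In both cases $Y\in(0,C')$ and $X\sim Y$, so Lemma \ref{ua} (with $\mathcal{U}=(0,C')$) gives $\varphi(X)=X$.

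The main obstacle is the remaining set: the edges (where $\det X=0$ and $\det(C'-X)=0$) and the corner $C'$ itself. These are coherent to \emph{no} point of the open interior at all --- if $v^{\ast}Xv$ equals $v^{\ast}C'v$ or $0$ for some unit vector $v$, then $\det(X-Y)$ keeps one sign as $Y$ ranges over the connected set $(0,C')$, so $\mathcal{C}_X\cap(0,C')=\emptyset$ --- and therefore Lemma \ref{ua} cannot be invoked for them. For these points I would instead use the two-line mechanism underlying that lemma directly. Through such an $X$, Lemma \ref{jojhcer} provides two distinct lines $\ell_1,\ell_2$ that each run along a boundary face, and each of these faces has already been fixed in the previous step, so each $\ell_i$ meets the fixed region in infinitely many points. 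Since $\varphi$ preserves coherency, $\varphi(\ell_i\cap[0,C'])$ is a coherent set containing at least two distinct fixed points of $\ell_i$; by Lemma \ref{simplll} a coherent set meeting a line in two distinct points is contained in that line, so $\varphi(\ell_i\cap[0,C'])\subseteq\ell_i$. Consequently $\varphi(X)\in\ell_1\cap\ell_2=\{X\}$ by Lemma \ref{simpll}, which finishes the edges and the corner and hence the whole interval. The delicate point to get right is organizing the argument so that by the time one reaches an edge or corner the adjacent faces are genuinely already fixed, and checking that the two chosen lines are distinct and lie in distinct fixed faces.
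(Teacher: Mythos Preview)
Your argument is correct and follows the paper's approach: normalize via Lemma~\ref{Porder}, propagate the identity through the interval by repeated applications of Lemma~\ref{ua}, and handle the points not reachable from the open interior by intersecting two already-fixed lines. The only notable difference is the normalization: the paper sends $[A,C]$ to $[0,I]$ (so $B$ becomes an interior point with smaller eigenvalue $t_0$), which lets two applications of Lemma~\ref{ua} reach all of $[0,I]\setminus(\{0,I\}\cup\mathcal{P})$ at once; the remaining points are then dispatched in one line each (for $P\in\mathcal{P}$, via $\varphi(P)\in\mathcal{S}_{0,I}=\mathcal{P}$ and $\varphi(P)\sim(1/2)P$), giving a slightly shorter endgame than your interior--faces--edges--corner breakdown.
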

\begin{proof}
By Lemma \ref{Porder}, we may, and we will assume that $A=0$ and $C= I$ without loss of generality.
Hence, we are considering $B \in H_2$ with $0 < B <  I$ and a coherency preserver $\varphi \colon [0,I ] \to \oH$ satisfying $\varphi (X) = X$ for every $X \in [0,B]$.
Let $t_0$ be the smaller eigenvalue of $B$. Then $0 < t_0 <1$. 

We consider the subset $\mathcal{U}_1:=(0, t_0I)\subset [0, I]$, which is open in $\oH$. 
Note that $\mathcal{U}_1\subset [0,B]$.
If one of the eigenvalues of a matrix $A\in [0, I]$ lies in $(0, t_0)$, then $\mathcal{C}_A\cap \mathcal{U}_1$ is clearly nonempty, so Lemma \ref{ua} implies $\varphi(A)=A$. 
Next we define the open subset $\mathcal{U}_2$ of $\oH$ to be the set formed of all matrices $X$ in $(0, I)$ such that $\sigma(X)\cap (0, t_0)\neq \emptyset$.
We already know that $\varphi(X)=X$ for every $X\in \mathcal{U}_2$. 
If $A\in [0, I]\setminus (\{0, I\} \cup \mathcal{P})$, then it is easy to see that $\mathcal{C}_A\cap \mathcal{U}_2$ is nonempty, thus we obtain $\varphi(A)=A$ again by Lemma \ref{ua}.
We know $\varphi(0)=0$ from the assumption.
From $\varphi(P+tP^\perp)=P+tP^\perp$ for every $t\in (0, 1)$ and $P\in \mathcal{P}$,  it clearly follows that $\varphi(I)=I$. (Indeed, we have $\varphi(I)\in \ell_{P+tP^\perp, P+sP^\perp} = \ell_{P, I}$ for every $0<t<s<1$ and $P\in \cP$.) Finally, for $P \in \mathcal{P}$ we know that $\varphi (P) \in {\mathcal{S}}_{\varphi(0), \varphi(I)}= \mathcal{P}$ and $\varphi(P) \sim \varphi((1/2)P)=(1/2)P$, and therefore $\varphi (P) =P$.
\end{proof}

\begin{corollary}\label{smtq1}
Let $A,B,C \in H_2$ with $A < B < C$. Let $\varphi \colon [A,C ] \to \oH$ be a coherency preserver. Assume that $\varphi (X) = X$ for every $X \in [B,C]$. Then $\varphi (X) = X$ for every $X \in [A,C]$.
\end{corollary}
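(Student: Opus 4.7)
The plan is to reduce Corollary \ref{smtq1} directly to Lemma \ref{smtq} via an order-reversing symmetry. Concretely, I would consider the affine automorphism $\sigma\colon \oH \to \oH$ defined by $\sigma(X) = A + C - X$, which is the composition of the automorphism $X \mapsto -X$ and the translation $X \mapsto X + (A+C)$ introduced in Subsection \ref{automorphism}. This $\sigma$ is involutive ($\sigma \circ \sigma = \mathrm{id}$) and, because $A < B < C$ implies $-C < -B < -A$, it restricts to a bijection of the matrix interval $[A, C]$ onto itself; moreover, it sends the sub-interval $[B, C]$ onto $[A, B']$ where $B' := A + C - B$ satisfies $A < B' < C$.

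Next, I would form the conjugated map $\tilde{\varphi} := \sigma \circ \varphi \circ \sigma \colon [A, C] \to \oH$. Since $\sigma$ preserves coherency in both directions, $\tilde{\varphi}$ is again a coherency preserver. For any $X \in [A, B']$ we have $\sigma(X) \in [B, C]$, so by hypothesis $\varphi(\sigma(X)) = \sigma(X)$; applying $\sigma$ once more and using $\sigma \circ \sigma = \mathrm{id}$ we obtain $\tilde{\varphi}(X) = X$. Hence $\tilde{\varphi}$ fixes every point of the sub-interval $[A, B']$.

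Finally, I would apply Lemma \ref{smtq} to $\tilde{\varphi}$ with the triple $A < B' < C$ to conclude that $\tilde{\varphi}(X) = X$ for every $X \in [A, C]$. Unraveling the definition, this identity reads $\sigma(\varphi(\sigma(X))) = X$, and since $\sigma$ restricts to an involutive bijection of $[A, C]$, applying $\sigma$ to both sides and setting $Y = \sigma(X)$ yields $\varphi(Y) = Y$ for every $Y \in [A, C]$, as required.

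There is essentially no hard part in this argument: it is a straightforward symmetry reduction to the preceding lemma. The only things to verify are that $\sigma$ is indeed an automorphism of $\oH$ (already established in Subsection \ref{automorphism}) and that $\sigma$ preserves $[A, C]$ while swapping $[B, C]$ with $[A, B']$, both of which follow immediately from the definition of the Loewner order.
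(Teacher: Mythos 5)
Your proof is correct and is essentially the paper's own argument: the paper conjugates by the order-reversing affine automorphism $X\mapsto C-X$ (working on $[0,C-A]$) and then invokes Lemma \ref{smtq}, while you conjugate by $X\mapsto A+C-X$ (working on $[A,C]$ itself); the two reductions are the same idea up to a translation.
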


\begin{proof}
We define a coherency preserver $\varphi_1 \colon [0, C-A] \to \oH$ by
\[
\varphi_1 (X) = C - \varphi (C-X), \ \ \ X \in [0, C-A].
\]
It is straightforward to check that the restriction of $\varphi_1$ to $[0, C-B]$ is the identity map. By Lemma \ref{smtq}, $\varphi_1$ is the identity on the whole matrix interval $[0, C-A]$, which is equivalent to $\varphi (X) = X$ for every $X \in [A,C]$.
\end{proof}

In the proof of the next corollary, we will use the following simple fact: Let $\mathcal{U} \subset H_2$ be an open subset and $A \in \mathcal{U}$. Then there exist $B,C \in H_2$ such that $B < A < C$ and $[B,C] \subset \mathcal{U}$. Indeed, we only need to observe that for each $\varepsilon > 0$ we have $[ A- \varepsilon I, A + \varepsilon I ] = \{ D \in H_2 \, : \, \| D-A \| \le \varepsilon \}$.

\begin{corollary}\label{smtq2}
Let $A,B\in H_2$ with $A < B$. Let $\varphi \colon [A,B] \to \oH$ 
be a coherency preserver. Assume that 
the set $\{X\in [A, B]\,:\, \varphi (X) = X\}$ as a subset of $H_2$ has nonempty interior.
Then $\varphi (X) = X$ for every $X \in [A,B]$.
\end{corollary}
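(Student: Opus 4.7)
My plan is to reduce the statement to the two previous results, Lemma \ref{smtq} and Corollary \ref{smtq1}, by locating a closed matrix sub-interval $[A', B']$ with $A < A' < B' < B$ on which $\varphi$ is already the identity. Once such an interval is in hand, the conclusion follows by extending the identity property first upward and then downward.

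First I would invoke the elementary fact recalled immediately before the statement: the set $\mathcal{V} := \{X\in [A,B]\,:\, \varphi(X) = X\}$ has nonempty interior in $H_2$ by hypothesis, so picking an interior point $A_0$ of $\mathcal{V}$ gives $A', B' \in H_2$ with $A' < A_0 < B'$ and $[A', B'] \subset \mathcal{V}$. I would then note that any interior point of $\mathcal{V}$ lies strictly between $A$ and $B$: an interior point $X$ tolerates perturbations $X \pm \varepsilon I$ while still belonging to $[A,B]$, which forces $A < X < B$. Hence, after possibly replacing $A'$ and $B'$ by $A_0 - \varepsilon I$ and $A_0 + \varepsilon I$ for sufficiently small $\varepsilon > 0$, I may assume $A < A' < B' < B$ while still having $\varphi = \mathrm{id}$ on $[A', B']$.

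I would then propagate the identity in two steps. Applying Lemma \ref{smtq} to the coherency preserver $\varphi|_{[A', B]}$ with the triple $A' < B' < B$ and the known identity on the lower sub-interval $[A', B']$ gives $\varphi(X) = X$ for every $X \in [A', B]$. Applying Corollary \ref{smtq1} to $\varphi|_{[A, B]}$ with the triple $A < A' < B$ and the just-obtained identity on the upper sub-interval $[A', B]$ then yields the desired conclusion that $\varphi(X) = X$ for every $X \in [A, B]$. I do not foresee a real obstacle: all the substantive work (promoting identity on a matrix sub-interval to identity on the whole interval) already resides in Lemma \ref{smtq} and Corollary \ref{smtq1}; the only point needing mild care is the topological remark that interior points of $\mathcal{V}$ in $H_2$ lie strictly inside the order interval $(A, B)$, allowing the bracketing interval $[A', B']$ to be chosen \emph{strictly} inside so that both preceding results apply verbatim.
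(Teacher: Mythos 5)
Your proof is correct and follows essentially the same route as the paper: extract a matrix interval $[A',B']$ strictly inside $(A,B)$ on which $\varphi$ is the identity (using the remark that small order-intervals $[A_0-\varepsilon I, A_0+\varepsilon I]$ are norm balls), then extend upward via Lemma \ref{smtq} and downward via Corollary \ref{smtq1}. Your extra care in checking that the interior point lies strictly in $(A,B)$ is the same observation the paper leaves implicit.
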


\begin{proof}
By the assumption together with the remark above, we may take $C, D\in (A, B)$ with $C<D$ such that $\varphi(X)=X$ for every $X\in [C, D]$. 
By Lemma \ref{smtq}, $\varphi$ is the identity on the matrix interval $[C, B]$. By Corollary \ref{smtq1}, $\varphi$ is the identity on the whole matrix interval $[A,B]$.
\end{proof}

\begin{corollary}\label{kiakia}
Let $A,B\in H_2$ with $A < B$. Let $\varphi \colon [A,B] \to \oH$ 
be a coherency preserver. 
Assume that there exists an automorphism $\psi$ of $\oH$
such that the set $\{X\in [A, B]\,:\, \varphi (X) = \psi (X) \}$ as a subset of $H_2$ has nonempty interior.
Then $\varphi (X) = \psi (X)$ for every $X \in [A,B]$.
\end{corollary}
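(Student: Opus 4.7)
The plan is to reduce Corollary \ref{kiakia} directly to Corollary \ref{smtq2} by composing with $\psi^{-1}$.

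Define $\widetilde{\varphi} \colon [A,B] \to \oH$ by $\widetilde{\varphi}(X) = \psi^{-1}(\varphi(X))$. Since $\psi^{-1}$ is an automorphism of $\oH$ and hence preserves coherency in both directions, and since $\varphi$ preserves coherency, the composition $\widetilde{\varphi}$ is a coherency preserver on $[A,B]$.

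By hypothesis, the set $\{X \in [A,B] : \varphi(X) = \psi(X)\}$ has nonempty interior in $H_2$. This set coincides exactly with $\{X \in [A,B] : \widetilde{\varphi}(X) = X\}$, because applying $\psi^{-1}$ to both sides of $\varphi(X) = \psi(X)$ gives $\widetilde{\varphi}(X) = X$, and conversely. Hence $\{X \in [A,B] : \widetilde{\varphi}(X) = X\}$ has nonempty interior in $H_2$.

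Corollary \ref{smtq2} now applies to $\widetilde{\varphi}$, yielding $\widetilde{\varphi}(X) = X$ for every $X \in [A,B]$, i.e., $\psi^{-1}(\varphi(X)) = X$, which is equivalent to $\varphi(X) = \psi(X)$ for every $X \in [A,B]$. There is no substantive obstacle here; the only point requiring any care is verifying that the composition with the automorphism preserves coherency and turns the ``$\varphi = \psi$'' locus into the ``identity'' locus, after which the work has already been done in Corollary \ref{smtq2}.
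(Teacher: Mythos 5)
Your proposal is correct and is essentially identical to the paper's proof: both compose $\varphi$ with $\psi^{-1}$ to reduce to the case $\psi = \mathrm{id}$ and then invoke Corollary \ref{smtq2}. No issues.
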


\begin{proof}
After introducing a new coherency preserver defined by
\[
X \mapsto \psi^{-1} (\varphi (X)), \ \ \ X \in [A,B],
\]
we may assume with no loss of generality that $\psi$ is the identity. Then the desired conclusion follows directly from Corollary \ref{smtq2}.
\end{proof}

We continue with an identity-type theorem\index{identity-type theorem}. 

\begin{theorem}\label{identity}
Let $\mathcal{A}$ be a nonempty connected open subset of $\oH$ and let $\varphi \colon \mathcal{A} \to \oH$ be a coherency preserving map. Let $\mathcal{D}$ be a nonempty open subset of $\mathcal{A}$. If an automorphism $\psi$ of $\oH$
satisfies $\varphi (X) = \psi (X)$ for every $X \in \mathcal{D}$, then $\varphi (X) = \psi (X)$ for every $X \in \mathcal{A}$.
\end{theorem}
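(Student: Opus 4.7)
The plan is to use a connectedness argument. First I would replace $\varphi$ by $\psi^{-1}\circ\varphi$, which is still a coherency preserver on $\mathcal{A}$ since $\psi^{-1}$ is an automorphism of $\oH$. This reduces the statement to: if $\varphi\colon \mathcal{A}\to\oH$ is a coherency preserver that is the identity on a nonempty open set $\mathcal{D}\subset \mathcal{A}$, then $\varphi$ is the identity on all of $\mathcal{A}$. Define $\mathcal{E}\subset \mathcal{A}$ as the set of points $X$ admitting an open neighborhood $\mathcal{V}\subset \mathcal{A}$ on which $\varphi$ restricts to the identity. Then $\mathcal{E}$ is open by construction and contains $\mathcal{D}$, hence nonempty. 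Since $\mathcal{A}$ is connected, it suffices to show $\mathcal{E}$ is closed in $\mathcal{A}$.

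Let $X\in \mathcal{A}$ be a point lying in the closure of $\mathcal{E}$ in $\oH$. The main step is to exhibit an open neighborhood of $X$ on which $\varphi$ is the identity, using the local rigidity supplied by Corollary \ref{kiakia}. Since that corollary is phrased in terms of matrix intervals in $H_2$, I first transport the situation into $H_2$. By Lemma \ref{srtsrt} there is a homeomorphic automorphism $\tau$ of $\oH$ with $\tau(X)=0$. Set $\mathcal{A}' = \tau(\mathcal{A})$ and $\varphi' = \tau \circ \varphi \circ \tau^{-1}$; then $\mathcal{A}'$ is open in $\oH$, the map $\varphi'\colon \mathcal{A}'\to \oH$ is a coherency preserver, and $0$ lies in the closure of $\mathcal{E}' := \tau(\mathcal{E})$, on whose open translates $\varphi'$ restricts to the identity. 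Since $H_2$ is open in $\oH$ and $\mathcal{A}'$ is a neighborhood of $0$, some matrix interval $[-\varepsilon I,\varepsilon I]$ is contained in $\mathcal{A}'\cap H_2$.

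Because $0$ is a limit point of $\mathcal{E}'$, the open set $(-\varepsilon I,\varepsilon I)$ meets $\mathcal{E}'$ at some point $Y$, around which $\varphi'$ agrees with the identity automorphism on an open neighborhood; intersecting this neighborhood with $[-\varepsilon I,\varepsilon I]$ yields a subset of $[-\varepsilon I,\varepsilon I]$ with nonempty interior as a subset of $H_2$ on which $\varphi'$ equals the identity. Applying Corollary \ref{kiakia} to the coherency preserver $\varphi'|_{[-\varepsilon I,\varepsilon I]}$ with $\psi = \mathrm{id}$ forces $\varphi'$ to be the identity on the entire matrix interval, and in particular on the open neighborhood $(-\varepsilon I,\varepsilon I)$ of $0$. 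Conjugating back by $\tau^{-1}$ produces an open neighborhood of $X$ on which $\varphi$ is the identity, so $X\in \mathcal{E}$. Hence $\mathcal{E}$ is closed in $\mathcal{A}$, and by connectedness $\mathcal{E}=\mathcal{A}$.

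The only delicate point is the transport step: Corollary \ref{kiakia} lives in a matrix interval of $H_2$, whereas $\mathcal{A}$ and the limit point $X$ are arbitrary in $\oH$. Lemma \ref{srtsrt} together with the openness of $H_2\subset \oH$ neatly handles this, so no further work is needed.
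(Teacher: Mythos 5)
Your proof is correct and follows essentially the same route as the paper's: the same open--closed connectedness argument on the set of points where $\varphi$ locally agrees with $\psi$, with closedness obtained by moving a limit point to $0$ via Lemma \ref{srtsrt} and invoking Corollary \ref{kiakia} on a small matrix interval $[-\varepsilon I,\varepsilon I]$. The only cosmetic differences are your initial normalization $\psi=\mathrm{id}$ and your use of conjugation $\tau\circ\varphi\circ\tau^{-1}$ instead of the paper's one-sided composition $\varphi\circ\psi_1^{-1}$, both of which are harmless.
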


\begin{proof}
We denote by $\mathcal{L} \subset \mathcal{A}$ the set of all points $A \in \mathcal{A}$ with the property that 
$\varphi (X) = \psi(X)$ for every $X$ in some neighborhood of $A$. It is trivial to see that $\mathcal{D}\subset \mathcal{L}$ and that $\mathcal{L}$ is an open subset of $\mathcal{A}$.
It remains to prove that $\mathcal{L}$ is a closed subset of $\mathcal{A}$. 

Hence, assume that $A \in \mathcal{A}$ and that there exists a sequence $A_n \in \mathcal{L}$, $n\geq 1$, such that $\lim A_n = A$. 
By Lemma \ref{srtsrt}, there is a homeomorphic automorphism $\psi_1\colon \oH\to\oH$ satisfying $\psi_1(A)=0$. 
Thus $\psi_1(A_n)$ tends to $0$ as $n\to\infty$. 
Let $\varepsilon$ be a positive number such that $[- \varepsilon I, \varepsilon I] \subset \psi_1(\mathcal{A})$. There exists a positive integer $m$ such that $\psi_1(A_m) \in 
(- \varepsilon I, \varepsilon I)$. 
By Corollary \ref{kiakia}, we have 
 $\varphi\circ \psi_1^{-1} (X) = \psi\circ \psi_1^{-1}(X)$ for every $X \in [- \varepsilon I, \varepsilon I]$. Thus, we get $A\in \mathcal{L}$. 
This completes the proof.
\end{proof}

\begin{remark}
The assumption of openness in Theorem \ref{identity} is essential. 
For example, since $tI\not\sim sI$ for distinct $s,t\in \mathbb{R}$, we see that any mapping from  $\{tI\,:\, t\in \mathbb{R}\}$ to $\oH$ preserves coherency even when $\varphi(tI)=tI$ for every $t>0$.

The assumption of connectedness is also essential. 
For example, consider the mapping $\varphi\colon H_2^{++}\cup H_2^{--}\to H_2$ defined by 
\[
\varphi(X)=X, \ \ X\in H_2^{++},\ \ \text{and}\ \ \varphi(H_2^{--})=\{0\}. 
\]
Since $X\not\sim Y$ for any $X\in H_2^{++}$ and $Y\in H_2^{--}$, we see that $\varphi$ preserves coherency.
The identity mapping on $\oH$ is an automorphism that extends $\varphi|_{H_2^{++}}$, but $\varphi$ is not the identity mapping on $H_2^{--}$.   
\end{remark}

We close this section with two lemmas whose motivation is close to the identity theorem. 

\begin{lemma}\label{sss}
Let $A, B\in \oH$ satisfy $d(A, B)=2$. 
If $C\in \oH$ has the property $C\sim X$ for every $X\in \mathcal{S}_{A, B}$ then $C\in \{A, B\}$.
\end{lemma}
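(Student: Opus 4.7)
The plan is first to reduce to the canonical case $A = 0$, $B = \oi$ by Lemma \ref{pocelo}, then compute $\mathcal{S}_{0,\oi}$ explicitly, and finally argue by cases on where $C$ lies in $\oH$.

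Since $d(A,B) = 2$, Lemma \ref{pocelo} supplies a standard automorphism $\varphi$ of $\oH$ with $\varphi(A) = 0$ and $\varphi(B) = \oi$. Because $\varphi$ is a bijection that preserves coherency in both directions, it sends $\mathcal{S}_{A,B}$ bijectively onto $\mathcal{S}_{0,\oi}$ and respects the property $C \sim X$; hence it suffices to prove the claim when $A = 0$ and $B = \oi$.

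In this canonical case I would first identify $\mathcal{S}_{0,\oi}$. Since $\mathcal{C}_{\oi} = \oH \setminus H_2$, every point of $\mathcal{S}_{0,\oi}$ has the form $\infty P + aP^\perp$ with $P \in \cP$ and $a \in \oR$. The coherency rule for such a point against $0 \in H_2$ requires $P^\perp \cdot 0 \cdot P^\perp = aP^\perp$, forcing $a = 0$. Therefore
\[
\mathcal{S}_{0,\oi} = \{\infty P : P \in \cP\}.
\]

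Now I would split into three cases according to the form of $C \in \oH$. If $C = \oi$, then $C = B$ and we are done. If $C \in H_2$, the hypothesis $C \sim \infty P$ translates via the definition of $\sim$ into $\mathrm{tr}\,(CP^\perp) = 0$ for every $P \in \cP$; since $\cP$ spans $H_2$ as a real vector space (for instance by Bloch representation), the linear functional $X \mapsto \mathrm{tr}\,(CX)$ vanishes on $H_2$, forcing $C = 0 = A$. Finally, if $C = \infty Q + bQ^\perp$ with $Q \in \cP$ and $b \in \bR$, the rule for coherency of two points in $\oH \setminus H_2$ yields $C \sim \infty P$ iff $P = Q$, which cannot hold for every $P \in \cP$; this case is therefore vacuous. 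No step is a serious obstacle: the only mild care required is to invoke the correct version of $\sim$ in each sub-case, and each then follows directly from the definitions given in Section \ref{compactification}.
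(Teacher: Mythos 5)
Your proof is correct and follows the same route as the paper: reduce to $A=0$, $B=\oi$ via Lemma \ref{pocelo}, identify $\mathcal{S}_{0,\oi}=\{\infty P : P\in\cP\}$, and conclude; the paper leaves the final case analysis as "easy," and your three cases (using that $\cP$ spans $H_2$ so the trace form forces $C=0$, and that $\infty Q + bQ^\perp$ is coherent to $\infty P$ only for $P=Q$) fill it in correctly.
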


\begin{proof}
Without loss of generality, we may assume $A=0$ and $B=\oi$. 
In this case, we have $\mathcal{S}_{A, B}=\{\infty P\,:\, P\in \mathcal{P}\}$, and it is easy to get to the desired conclusion.
\end{proof}

\begin{lemma}\label{smtr}
Let $A \in H_2$, $0 < A \le I$, and $B \in \oH$. Assume that for every $P \in \mathcal{P}$ and every real $s \in [0,1]$ we have
\[
A \sim sP \Rightarrow B \sim sP.
\]
Then $B=A$ or $B=0$.
\end{lemma}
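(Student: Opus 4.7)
The plan is to recognize the hypothesis as saying that $B$ is coherent to every element of $\mathcal{S}_{0, A}$, and then invoke Lemma \ref{sss}. First I would note that since $0 < A$, the matrix $A$ is invertible of rank $2$, so $A \not\sim 0$, i.e., $d(0, A) = 2$, which makes the set $\mathcal{S}_{0, A}$ well-defined and eligible for Lemma \ref{sss}.

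The crux is to establish the equality
\[
\mathcal{S}_{0, A} \;=\; \{sP \,:\, s \in [0,1],\ P \in \mathcal{P},\ A \sim sP\}.
\]
The inclusion $\supseteq$ is immediate: any matrix of the form $sP$ with $s \in [0, 1]$ and $P \in \mathcal{P}$ has rank at most one, hence is coherent to $0$; coherence with $A$ is built into the set on the right. For the reverse inclusion, I take $X \in \mathcal{S}_{0, A}$. From $X \sim 0$ it follows that either $X \in H_2$ has rank at most one, or $X = \infty R$ for some $R \in \mathcal{P}$. The latter would force $R^\perp A R^\perp = 0$, which fails because $A > 0$ implies $(Ay, y) > 0$ for any unit $y$ in the range of $R^\perp$. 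The possibility $X = 0$ is incompatible with $X \sim A$ since $A$ has rank $2$. Hence $X = tR$ with $t \in \mathbb{R} \setminus \{0\}$ and $R \in \mathcal{P}$. Writing $R = yy^*$ with $y$ a unit vector and applying the matrix determinant lemma,
\[
\det(A - tR) = \det(A)\bigl(1 - t\,y^*A^{-1}y\bigr),
\]
so $X \sim A$ gives $t = 1/(y^*A^{-1}y)$. Since $A \le I$ implies $A^{-1} \ge I$, we get $y^*A^{-1}y \ge 1$, and therefore $t \in (0, 1]$, so $X = tR$ indeed lies in the required family.

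Once this equality is in hand, the hypothesis of the lemma translates exactly to the statement that $B$ is coherent to every element of $\mathcal{S}_{0, A}$. Lemma \ref{sss}, applied to the pair $0, A$ (for which $d(0,A) = 2$), then yields $B \in \{0, A\}$, which is the desired conclusion.

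The only real obstacle is the inclusion $\mathcal{S}_{0, A} \subseteq \{sP\,:\, s \in [0,1],\ P \in \mathcal{P}\}$: both hypotheses $0 < A$ (to exclude $X = 0$ and $X = \infty R$ from $\mathcal{S}_{0, A}$) and $A \le I$ (to ensure the scalar $t$ produced by the determinant condition does not exceed $1$) are essential. Without the upper bound on $A$, the set $\mathcal{S}_{0, A}$ would contain points $tR$ with $t > 1$ on which the hypothesis places no constraint, and the conclusion would fail.
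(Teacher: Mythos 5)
Your proof is correct and follows essentially the same route as the paper: both reduce the statement to Lemma \ref{sss} by establishing the inclusion $\mathcal{S}_{0,A}\subset\{sP\,:\,s\in[0,1],\ P\in\mathcal{P}\}$. The only difference is cosmetic — the paper gets the inclusion by applying the automorphism $X\mapsto A^{1/2}XA^{1/2}$ and bounding $\|A^{1/2}PA^{1/2}\|$, whereas you verify it directly via the determinant identity $\det(A-tyy^*)=\det(A)(1-t\,y^*A^{-1}y)$; both are valid.
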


\begin{proof}
By Lemma \ref{sss}, it suffices to show that $\mathcal{S}_{0, A}\subset \{sP\,:\, s\in [0,1], P\in \mathcal{P}\}$. 
Using the standard automorphism $X\mapsto A^{1/2} XA^{1/2}$, $X\in \oH$, we see
\[
\mathcal{S}_{0, A}= \{A^{1/2}PA^{1/2}\,:\, P\in \mathcal{S}_{0, I}\}.
\]
Observe that for $P\in \mathcal{S}_{0, I}=\mathcal{P}$ the norm of the rank-one positive operator $A^{1/2}PA^{1/2}$ is at most $1$,
which leads to the desired claim.
\end{proof}


\part{Main results and their proofs}

\section{Standard coherency preservers}\label{standard}
Let $\mathcal{A}$ be a subset of $\oH$ and $\varphi \colon \mathcal{A} \to \oH$ a map.  We say that $\varphi$ is a \emph{standard coherency preserver} \index{standard coherency preserver} if $\varphi$ extends to a standard automorphism of $\oH$. 
The main result of the current section is the following one.
\begin{theorema}\index{Theorem A}
Let $\mathcal{A}$ be a connected open subset of $\oH$ and $\varphi \colon \mathcal{A} \to \oH$ a coherency preserving map. Assume that there exist $A,B,C \in \mathcal{A}\cap H_2$ such that 
\begin{itemize}
\item $A <C< B$,
\item $[A,B] \subset \mathcal{A}$, and 
\item the triple $\varphi (A), \varphi (C) , \varphi (B)$ is in timelike position.
\end{itemize}
Then $\varphi$ is standard.
\end{theorema}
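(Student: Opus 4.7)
My approach is to normalize $\varphi$ by pre- and post-composition with standard automorphisms so that it fixes the reference triple $(0,(1/2)I,I)$, then prove that under this normalization $\varphi$ equals the identity on $[0,I]$, and finally use the identity-type theorem (Theorem \ref{identity}) to propagate equality to all of $\mathcal{A}$. For the normalization, Corollary \ref{cororder} supplies a standard automorphism $\psi_1$ with $\psi_1([A,B])=[0,I]$, $\psi_1(A)=0$, $\psi_1(C)=(1/2)I$, $\psi_1(B)=I$. On the target side, Lemma \ref{threepoints} applied to the timelike triple $\varphi(A),\varphi(C),\varphi(B)$ provides a standard automorphism sending this triple to $(0,\oi,I)$; composing with the permutation automorphism $\tau\sigma$ from Remark \ref{remark01} and the inverse of $X\mapsto -I+(I-X)^{-1}$ from Example \ref{IJ} gives a standard $\psi_2$ with $\psi_2(\varphi(A))=0$, $\psi_2(\varphi(C))=(1/2)I$, $\psi_2(\varphi(B))=I$. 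Replacing $\varphi$ by $\tilde\varphi:=\psi_2\circ\varphi\circ\psi_1^{-1}$ on $\tilde{\mathcal{A}}:=\psi_1(\mathcal{A})$, the task reduces to showing that $\tilde\varphi$ (fixing the three reference points and defined on a connected open set containing $[0,I]$) is standard.

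To pin down $\tilde\varphi$ on $[0,I]$, I would first use $\mathcal{P}=\mathcal{S}_{0,I}$ and Lemma \ref{veryeasy} to deduce $\tilde\varphi(\mathcal{P})\subset\mathcal{P}$, set $\mu:=\tilde\varphi|_{\mathcal{P}}$, and combine the computation $\mathcal{S}_{0,(1/2)I}=\{(1/2)R\,:\,R\in\mathcal{P}\}$ with the coherences $(1/2)P\sim 0,(1/2)I,P$ to conclude $\tilde\varphi((1/2)P)=(1/2)\mu(P)$. Next, for each $P\in\mathcal{P}$, Lemma \ref{where} forces $\tilde\varphi({\square}_P)$ into either a cone or a surface; the cone case is ruled out because no projection is coherent to $(1/2)I$, so $0,(1/2)I,I$ share no common cone, and Lemma \ref{svinj} identifies the containing surface as $\{a\mu(P)+b\mu(P)^{\perp}\,:\,a,b\in\oR\}$. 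A case analysis of the induced rook-coherency-preserving map ${\square}_P\to\oR^2$---in which one rules out the degenerate case $\mu(P^{\perp})=\mu(P)$ by observing that otherwise the image of the segment $[0,1]\times\{1/2\}$ would have to lie on a single line that contains $(1/2,1/2)$, a point with second coordinate $0$, and a point with first coordinate $1$, which is impossible---together with the condition that $\tilde\varphi(sI)$ depend only on $s$ yields
\[
\tilde\varphi(sP+tP^{\perp})=\gamma(s)\mu(P)+\gamma(t)\mu(P)^{\perp},\qquad s,t\in[0,1],
\]
for a single function $\gamma\colon[0,1]\to\oR$ (independent of $P$) with $\gamma(0)=0$, $\gamma(1/2)=1/2$, $\gamma(1)=1$.

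The main obstacle is promoting this partial description to $\tilde\varphi=\mathrm{id}$ on $[0,I]$. My plan is to exploit cross-coherences between matrices living in different $\square_P$'s. A direct determinant computation (using Lemma \ref{apbq} to isolate the singular case) shows that for distinct $P,Q\in\mathcal{P}$ with $c:=\mathrm{tr}(PQ)$, the matrices $P+aP^{\perp}$ and $sQ$ are coherent precisely when $s=a/(1-c+ac)$; preserving this relation under $\tilde\varphi$ gives a functional equation relating $\gamma(a)$, $\gamma(s)$, $c$ and $c':=\mathrm{tr}(\mu(P)\mu(Q))$. Varying $a$ and using the three known values of $\gamma$ at $0,1/2,1$, this functional equation forces $c'=c$, which by Remark \ref{wignerremark} is the hypothesis of Wigner's theorem (Theorem \ref{wigner}); that theorem then produces a unitary $U$ with $\mu(P)=UPU^{\ast}$ or $\mu(P)=UP^{t}U^{\ast}$. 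Post-composing $\tilde\varphi$ with the standard automorphism $X\mapsto U^{\ast}XU$ (respectively $X\mapsto U^{\ast}X^{t}U$), which still fixes $(0,(1/2)I,I)$, reduces to $\mu=\mathrm{id}$, whereupon the functional equation collapses (via the algebraic manipulation $a(1-c)=\gamma(a)(1-c)$) to $\gamma=\mathrm{id}$. Thus $\tilde\varphi$ agrees with the identity on the nonempty open set $(0,I)\subset\tilde{\mathcal{A}}$, and Theorem \ref{identity} extends this to all of $\tilde{\mathcal{A}}$, exhibiting $\varphi$ as a composition of standard automorphisms and hence standard.
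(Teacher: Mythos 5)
Your overall architecture is the same as the paper's: normalize so that $0$, $(1/2)I$, $I$ are fixed, use Lemmas \ref{where} and \ref{svinj} to confine $\varphi(\square_P)$ to the surface spanned by $\mu(P),\mu(P)^\perp$, feed Wigner's theorem (Theorem \ref{wigner}) to identify $\mu$ up to a unitary, and finish with Theorem \ref{identity}. The gap is in the step you lean on hardest: the claim that the functional equation coming from the coherence $P+aP^\perp\sim sQ$, $s=a/(1-c+ac)$, together with the three values $\gamma(0)=0$, $\gamma(1/2)=1/2$, $\gamma(1)=1$, forces $c'=\mathrm{tr}(\mu(P)\mu(Q))=\mathrm{tr}(PQ)=c$. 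It does not. Writing $x=(1-a)/a$ and $\Gamma(x)=(1-\gamma(a))/\gamma(a)$, your relation $\gamma(T_c(a))=T_{c'}(\gamma(a))$ becomes $\Gamma((1-c)x)=(1-c')\,\Gamma(x)$ with $\Gamma(1)=1$, i.e.\ a multiplicative Cauchy equation $\Gamma(\lambda x)=\Gamma(\lambda)\Gamma(x)$. For every $p>0$ the pair $\gamma(a)=a^p/(a^p+(1-a)^p)$, $c'=1-(1-c)^p$ solves it, satisfies all three prescribed values of $\gamma$, and is even continuous and strictly increasing; so no amount of "varying $a$" within this equation alone can yield $c'=c$. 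What actually kills these spurious solutions is extra geometric input about $\mathcal{P}\cong\mathcal{S}^2$: one needs to know that $c'$ genuinely arises as $\mathrm{tr}(\mu(P)\mu(Q))$ for a self-map of the Bloch sphere, together with $\mu(P^\perp)=\mu(P)^\perp$ (Lemma \ref{matrum}) and some quantitative small-distance control. This is precisely where the paper spends its effort: the inductive rescaling producing the exact values $\gamma\bigl((2/3)^n\bigr)=(2/3)^n$ and $\gamma\bigl(1-(2/3)^n\bigr)=1-(2/3)^n$, the inscribed-sphere intersection criterion (Lemmas \ref{mcmcv} and \ref{mikar}), and the geodesic subdivision argument via Lemma \ref{P}, which together give $\|\mu(P)-\mu(Q)\|\le\|P-Q\|$ and then equality from Lemma \ref{matrum}. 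Without an argument of this kind your proof does not reach the hypothesis of Wigner's theorem.

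Two smaller remarks. First, your "rook" analysis on $\square_P$ is essentially sound once you know $\tilde\varphi(P^\perp)=\mu(P)^\perp$ and $\tilde\varphi(P+(1/2)P^\perp)=\mu(P)+(1/2)\mu(P)^\perp$ (the paper gets these cleanly via Corollary \ref{veryeasy2} in Claim \ref{0121}); but the independence of $\gamma$ from $P$ needs the observation that $\tilde\varphi(sI)$ must be a scalar matrix, which deserves a line of justification rather than an appeal to "the condition that $\tilde\varphi(sI)$ depend only on $s$". Second, once $c'=c$ and $\mu=\mathrm{id}$ are granted, your concluding step is fine: the intertwining becomes $\Gamma(\lambda x)=\lambda\Gamma(x)$ with $\Gamma(1)=1$, whence $\Gamma=\mathrm{id}$ and $\gamma=\mathrm{id}$, and Theorem \ref{identity} finishes the proof exactly as in the paper.
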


\subsection{Proof of Theorem A}\label{ptand}
We first show the following proposition.
\begin{proposition}\label{unitary}
Let $\varphi\colon [0, I]\to \oH$ be a coherency preserver satisfying 
\begin{equation}\label{mid}
\varphi (0) = 0, \ \ \ \varphi ((1/2)I) = (1/2)I, \ \ \ \text{and} \ \ \ \varphi (I) = I.
\end{equation}
Then there exists a unitary $U\in U_2$ satisfying either 
\[
\varphi(X)=UXU^*,\ \ \ X\in [0, I],
\]
or 
\[
\varphi(X)=UX^tU^*, \ \ \ X\in [0, I].
\]
\end{proposition}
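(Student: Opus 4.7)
My plan is to set up $\mu := \varphi|_{\cP}$, show it satisfies the hypothesis of the non-bijective Wigner theorem (Theorem \ref{wigner}), and then extend the conclusion to all of $[0,I]$. Since $\cP = \mathcal{S}_{0,I}$ and $\varphi$ fixes $0$ and $I$, we get $\mu\colon\cP\to\cP$ immediately. The first substantive step is to analyze $\varphi$ on each square $\square_P$: by Lemma \ref{where}, $\varphi(\square_P)$ lies in a cone or a surface, and the cone case is ruled out because the image contains the timelike triple $0,(1/2)I,I$ (Example \ref{IJ} together with Lemma \ref{noauto}); then Lemma \ref{svinj} shows the surface must be $\Pi_{\mu(P)}$. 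Tracking how the ``chessboard'' coherency structure on $\square_P$ (two points are coherent iff they share their first or their second coordinate) is preserved under $\varphi$, I would derive the separated form $\varphi(aP + bP^\perp) = g_P(a)\mu(P) + g_{P^\perp}(b)\mu(P)^\perp$ together with $\mu(P^\perp) = \mu(P)^\perp$. Since the sphere $\mathcal{S}_{(1/2)P,\,P+(1/2)P^\perp} = \{(1/2)P + (1/2)R : R\in\cP\}$ from Lemma \ref{mcmcv} contains $(1/2)I$, combining this with $\varphi((1/2)I) = (1/2)I$ and Lemma \ref{pq} gives $g_P(1/2) = 1/2$.

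The hard part will be pinning down $g_P$ completely. Writing $sI = sP + sP^\perp$, the equation $\varphi(sI) = g_P(s)\mu(P) + g_{P^\perp}(s)\mu(P)^\perp$ must hold for every $P\in\cP$; by Lemma \ref{apbq}, this forces $g_P$ to be independent of $P$ (a common function $G$, with $\varphi(sI) = G(s)I$), after ruling out the degenerate alternative in which $\mu$ takes only two values and $\varphi$ sends some $sI$ to a point at infinity (which I would exclude by confronting it with further coherencies between distinct squares). To identify $G$, I would examine coherencies between $\square_P$ and $\square_Q$ for $P\ne Q$: a direct $2\times 2$ determinant computation yields $aP + bP^\perp \sim cQ + dQ^\perp$ iff $(a-d)(b-c) + \mathrm{tr}(PQ)(c-d)(a-b) = 0$, and preservation of this condition with $\mu(P),\mu(Q)$ replacing $P,Q$ forces $\mathrm{tr}(\mu(P)\mu(Q))$ to depend only on $\mathrm{tr}(PQ)$. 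Substituting $\tilde G(s/(1-s)) := G(s)/(1-G(s))$ turns the emerging identities into the statement that $\tilde G$ is multiplicative on $(0,\infty)$ and satisfies $\tilde G(1+y) = \tilde G(y)+1$ for $y>0$; these two properties together force $\tilde G|_{\mathbb{Q}_+} = \mathrm{id}$, and the relation $\tilde G(z) = \tilde G(z-1)+1 > 1$ for $z>1$ gives strict monotonicity, so density of the rationals yields $\tilde G = \mathrm{id}$. Thus $G(s)=s$ and $\varphi(aP + bP^\perp) = a\mu(P) + b\mu(P)^\perp$ on each $\square_P$.

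With this explicit form, the images of the standard spheres $\mathcal{S}_{(1-a)P,\,P + aP^\perp}$ under $\varphi$ are exactly $\mathcal{S}_{(1-a)\mu(P),\,\mu(P) + a\mu(P)^\perp}$, again described by Lemma \ref{mcmcv}. Applying Lemma \ref{mikar} in source and target, whenever $\|P-Q\| \le a/(1-a)$ the two source spheres intersect, and hence so do their images, giving $\|\mu(P)-\mu(Q)\| \le a/(1-a)$. Letting $a$ decrease to its critical value produces $\|\mu(P)-\mu(Q)\| \le \|P-Q\|$, and applying the same inequality with $Q^\perp$ in place of $Q$ and invoking the Pythagorean-type identity Lemma \ref{matrum} together with $\mu(Q^\perp) = \mu(Q)^\perp$ upgrades this to $\|\mu(P)-\mu(Q)\| = \|P-Q\|$. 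By Remark \ref{wignerremark} this is exactly the hypothesis of Theorem \ref{wigner}, so there exists a unitary $U\in U_2$ with either $\mu(P) = UPU^*$ for all $P$ or $\mu(P) = UP^tU^*$ for all $P$. Substituting back into $\varphi(aP+bP^\perp) = a\mu(P) + b\mu(P)^\perp$ and using the spectral decomposition of any $A\in[0,I]$, one concludes $\varphi(A) = UAU^*$ respectively $\varphi(A) = UA^tU^*$ on the entire matrix interval. The main obstacle is the functional-equation analysis in the second paragraph, especially excluding the degenerate two-value alternative for $\mu$.
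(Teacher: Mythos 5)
Your overall architecture parallels the paper's: the reduction of each square $\square_P$ into a surface via Lemmas \ref{where} and \ref{svinj} (with the cone alternative excluded because $0,(1/2)I,I$ lie in no common cone), the sphere-intersection estimates from Lemmas \ref{mcmcv} and \ref{mikar}, the upgrade to an isometry of $\cP$ via Lemma \ref{matrum} and $\mu(P^\perp)=\mu(P)^\perp$, and the appeal to Wigner's theorem are all exactly the paper's moves. Where you genuinely diverge is the middle. The paper never establishes $\varphi(aP+bP^\perp)=a\varphi(P)+b\varphi(P)^\perp$ for all $a,b$ before invoking Wigner: it bootstraps only the special values $(2/3)^n$, $\frac13(\frac23)^{n-1}$, $1-(\frac23)^n$ by explicit coherency chains anchored at $0$, $(1/2)I$, $I$ (Claims \ref{0121}, \ref{013}, \ref{0123} and the rescaling induction), which is all the metric argument needs, and it recovers the action on rank-two matrices only at the very end (the $(1/3)Q+Q^\perp$ computation plus Lemma \ref{smtr}). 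You instead propose to coordinatize each square and solve a functional equation to get $G=\mathrm{id}$ on all of $[0,1]$; your determinant identity $\det(aP+bP^\perp-cQ-dQ^\perp)=(a-d)(b-c)+\mathrm{tr}\,(PQ)(c-d)(a-b)$ is correct, and if this route were completed it would buy a cleaner finish (spectral decomposition, no geodesic chaining via Lemma \ref{P}).

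However, the two steps your route leans on hardest are asserted rather than proved, and both are nontrivial precisely because $\varphi$ preserves coherency in one direction only. First, the separated form $\varphi(aP+bP^\perp)=g_P(a)\mu(P)+g_{P^\perp}(b)\mu(P)^\perp$ does not follow from ``the chessboard structure is preserved'': a one-directional preserver may a priori send distinct vertical segments of $\square_P$ into the same line of the target surface, send a vertical segment into a horizontal line, or satisfy $\mu(P^\perp)=\mu(P)$, collapsing the two lines through $0$. Excluding this already requires the coherency chains through $(1/2)I$ and $I$, i.e., essentially the content of Claim \ref{0121}. Second, from a one-directional preserver you only get ``source determinant $=0$ $\Rightarrow$ image determinant $=0$''; to solve the image equation for $\mathrm{tr}\,(\mu(P)\mu(Q))$ or for values of $G$ you must first know that the relevant image differences are nonzero and finite. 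Those degeneracies — constant or two-valued $\mu$, values of $g_P$ escaping into $\oR\setminus\mathbb{R}$ — are not edge cases: they are exactly where the non-standard preservers of Section \ref{non} live, and they are excluded here only by the hypothesis $\varphi((1/2)I)=(1/2)I$, which your sketch does not yet bring to bear at this stage (the paper does so via Corollary \ref{veryeasy2} and Lemma \ref{nnjk}). You flag this yourself as ``the main obstacle,'' and I agree: as written the proposal is a plausible program, but the functional-equation core is a genuine gap, not a routine verification.
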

To prove this, we assume that $\varphi\colon [0, I]\to \oH$ is a coherency preserver satisfying \eqref{mid}.
By Lemma \ref{veryeasy}, we have $\varphi (\mathcal{P}) \subset \mathcal{P}$.
Lemma \ref{veryeasy} also shows that the triple $0, (1/2)I, I$ is not contained in any cone.

We fix $P \in \mathcal{P}$. Since $0, (1/2)I, I \in \varphi ({\square}_P)$, where ${\square}_P$ is defined as in \eqref{sp}, Lemma \ref{where} yields that $\varphi ({\square}_P)$ is contained in some surface. By Lemma \ref{svinj}, we have
\[
\varphi ({\square}_P) \subset 
\{ a \varphi (P) + b \varphi (P)^\perp \, : \, a,b \in {\oR} \}.
\]

\begin{claim}\label{0121}
The equation
\begin{equation}\label{iris}
\varphi (aP + bP^\perp ) = a \varphi (P) + b \varphi (P)^\perp
\end{equation}
holds for any $a, b\in \{0, 1/2, 1\}$.
\end{claim}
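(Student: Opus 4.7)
The plan is to handle the nine cases $(a,b) \in \{0, 1/2, 1\}^2$ systematically, exploiting the already-established inclusion
\[
\varphi({\square}_P) \subset \{ c\varphi(P) + d\varphi(P)^\perp : c, d \in \oR \}.
\]
Four of the nine cases are immediate: $(0,0)$, $(1/2, 1/2)$, $(1, 1)$ by hypothesis \eqref{mid}, and $(1, 0)$ trivially. For each of the remaining five cases I would write $\varphi(aP + bP^\perp) = c\varphi(P) + d\varphi(P)^\perp$ and pin down $(c,d)$ by imposing the coherence relations inherited from those holding in the domain.

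The key mechanism is that, when $c, d \in \mathbb{R}$, the matrix $c\varphi(P) + d\varphi(P)^\perp$ has spectrum $\{c, d\}$ (since $\varphi(P)$ and $\varphi(P)^\perp$ are orthogonal rank one projections), so coherence with $0$, $(1/2)I$, $I$, $\varphi(P)$, or $\varphi(P)^\perp$ translates into an explicit two-element disjunction on $(c,d)$; for example, coherence with $(1/2)I$ forces $c = 1/2$ or $d = 1/2$. The extended cases $c = \infty$ or $d = \infty$ can be excluded individually using the definition of $\sim$ on $\oH \setminus H_2$: for instance, $\infty \varphi(P) \not\sim I$ because the required identity $\varphi(P)^\perp I \varphi(P)^\perp = 0 \cdot \varphi(P)^\perp$ fails.

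I would proceed in the following order. For $(1/2, 0)$: the coherences $(1/2)P \sim 0$, $(1/2)P \sim (1/2)I$, and $(1/2)P \sim P$ together force $\varphi((1/2)P) = (1/2)\varphi(P)$. For $(1, 1/2)$: the coherences $P + (1/2)P^\perp \sim P, (1/2)I, I$ yield $\varphi(P + (1/2)P^\perp) = \varphi(P) + (1/2)\varphi(P)^\perp$. For $(1/2, 1)$: the coherences $(1/2)P + P^\perp \sim (1/2)P, (1/2)I, I$, together with the value of $\varphi((1/2)P)$ just computed, give $\varphi((1/2)P + P^\perp) = (1/2)\varphi(P) + \varphi(P)^\perp$. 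For $(0, 1)$: Lemma \ref{veryeasy} yields $\varphi(P^\perp) \in \mathcal{P}$, so $\varphi(P^\perp) \in \{\varphi(P), \varphi(P)^\perp\}$, and the coherence $P^\perp \sim (1/2)P + P^\perp$ combined with the previous step eliminates $\varphi(P^\perp) = \varphi(P)$. Finally, $(0, 1/2)$ is handled just like $(1/2, 0)$ but with $P$ replaced by $P^\perp$, now that $\varphi(P^\perp) = \varphi(P)^\perp$ is available.

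The main obstacle is careful bookkeeping: one must identify precisely which coherence relations hold in the domain. In particular, $(1/2)P \not\sim I$, $P \not\sim (1/2)I$, and $(1/2)I \not\sim I$, so the would-be corresponding conditions on $(c, d)$ cannot be invoked. Forgetting this would produce the simultaneous requirements ``$c = 0$ or $d = 0$'', ``$c = 1/2$ or $d = 1/2$'', and ``$c = 1$ or $d = 1$'', which admit no solution $(c, d)$ and would spuriously suggest the claim is false; avoiding this pitfall is the only subtlety of the argument.
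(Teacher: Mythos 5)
Your proof is correct and follows essentially the same strategy as the paper's: both arguments pin down $\varphi(aP+bP^\perp)$ by intersecting the coherence constraints inherited from $0$, $(1/2)I$, $I$, and the already-determined images, within the surface spanned by $\varphi(P)$ and $\varphi(P)^\perp$. The paper packages the same deductions via Lemma \ref{veryeasy} and Corollary \ref{veryeasy2} rather than your explicit $(c,d)$ disjunctions, and orders the cases slightly differently, but the content is the same.
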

\begin{proof}
We have $\varphi ( (1/2)P) \sim \varphi ((1/2)I) =(1/2)I$ and $\varphi ((1/2)P) \sim \varphi (0) = 0$, and by a simple modification of Lemma \ref{veryeasy} we see that $\varphi ( (1/2)P) = (1/2)Q$ for some $Q \in \mathcal{P}$. But $\varphi ( (1/2)P) \sim \varphi (P)\in \mathcal{P}$ and therefore,
\[
\varphi ((1/2)P) = (1/2)\varphi (P).
\]
We continue with $\varphi ((1/2)P + P^\perp)$. By
\[
(1/2)P + P^\perp \sim (1/2)I, I, (1/2)P,
\] 
we get
\[
\varphi ((1/2)P + P^\perp) \sim (1/2)I, I, (1/2) \varphi (P).
\]
Using Corollary \ref{veryeasy2}, we conclude that
\[
\varphi ((1/2)P + P^\perp) = (1/2) \varphi (P)+ \varphi(P)^\perp.
\]
Now,
\[
P^\perp \sim 0, I, (1/2)P + P^\perp
\]
yields
\[
\varphi(P^\perp) \sim 0, I, (1/2) \varphi (P)+ \varphi(P)^\perp.
\]
Since $\ell_{I, (1/2) \varphi (P)+ \varphi(P)^\perp} = \{t\varphi(P)+\varphi(P)^\perp\,:\,  t\in \oR\}$, we get
\[
\varphi (P^\perp) = \varphi (P)^\perp.
\]
Using the same arguments as above, we see that
\[
\varphi ((1/2)P^\perp) = (1/2)\varphi (P)^\perp \ \ \ \text{and} \ \ \  \varphi (P + (1/2)P^\perp) = \varphi (P) + (1/2)\varphi (P)^\perp .
\]
\end{proof}

\begin{claim}\label{013}
Equation \eqref{iris} holds for any $a, b\in \{0, 1/3, 1\}$. 
\end{claim}
\begin{proof}
Take $Q \in \mathcal{P}$ with $\mathrm{tr}\, (PQ) = 1/2$. 
Claim \ref{0121} implies that 
\[
\varphi ((1/2)Q) = (1/2)\varphi (Q),\ \ \varphi ((1/2)Q^\perp) = (1/2)\varphi (Q)^\perp.
\] 
By Lemma \ref{nnjk}, we have
\[
P + (1/3)P^\perp \in 
\mathcal{C}_{(1/2)Q} \cap \mathcal{C}_{(1/2)Q^\perp} \cap \mathcal{C}_I.
\]
Thus we obtain
\begin{equation}\label{sanbunnoichi}
\begin{split}
\varphi ( P + (1/3) P^\perp) &\in \mathcal{C}_{(1/2)\varphi (Q)} \cap \mathcal{C}_{(1/2)\varphi (Q)^\perp} \cap \mathcal{C}_I\\
&= \{ R + (1/3) R^\perp \, : \, R \in \mathcal{P}, \, \mathrm{tr}\, (\varphi (Q)R) = 1/2 \}.  
\end{split}
\end{equation}
Because $P + (1/3) P^\perp\in \ell_{P, I}$, we have 
\begin{equation}\label{phip}
\varphi ( P + (1/3) P^\perp)\in \ell_{\varphi(P), \varphi(I)}=\ell_{\varphi(P), I}  =
\{ \varphi (P) + a\varphi(P)^\perp   \, : \, a \in  {\oR} \}.
\end{equation}
It follows from \eqref{sanbunnoichi} and \eqref{phip} that
\[
\varphi (P + (1/3) P^\perp ) = \varphi (P) + (1/3) \varphi(P)^\perp,
\]
and similarly,
\[
\varphi (P^\perp + (1/3) P ) = \varphi (P)^\perp + (1/3) \varphi(P) .
\]
Further, $\varphi ((1/3)P)\in \ell_{0, \varphi(P)}$
and $\varphi ((1/3)P)   \sim  \varphi (P)^\perp + (1/3) \varphi(P)$. 
It follows from Lemma \ref{simplll} that 
\[
\varphi ( (1/3) P ) = (1/3) \varphi(P)  \ \ \ \text{and similarly,} \ \ \ \varphi ( (1/3) P^\perp ) = (1/3) \varphi(P)^\perp.
\]
Finally, $\varphi ((1/3)I)$ belongs to the lines 
\[
\ell_{(1/3)\varphi(P), (1/3)\varphi(P) + \varphi (P)^\perp} \not=\ell_{(1/3)\varphi(P)^\perp, (1/3)\varphi(P)^\perp + \varphi (P)}
\]
which intersect in the point $(1/3)I$. Thus,
$\varphi ((1/3)I) = (1/3)I$.
\end{proof}

In what follows, we sometimes make use of the coherency preserving map $\varphi_1 \colon [0,I] \to \oH$ defined by $\varphi_1 (X) = I - \varphi ( I -X)$, $X \in [0, I]$. Clearly, 
\[
\varphi_1 (0) = 0,\ \ \varphi_1 ((1/2)I) = (1/2)I,\ \ \varphi_1 (I) = I,\ \ \text{and}\ \ \varphi_1 (P) = \varphi (P).
\] 

\begin{claim}\label{0123}
Equation \eqref{iris} holds for any $a, b\in \{0, 1/3, 2/3, 1\}$. 
\end{claim}
\begin{proof}
Applying Claim \ref{013} to the mapping $\varphi_1$ in place of $\varphi$, we have \eqref{iris} for every $a, b\in \{0, 2/3, 1\}$.

Let $a,b \in \{ 0, 1/3, 2/3, 1\}$. 
We have shown that \eqref{iris} holds when one of the numbers $a,b$ equals $0$ or $1$. 
In the remaining cases, we have $\{aP + bP^\perp\}=\ell_{aP, aP+P^\perp}\cap \ell_{bP^{\perp}, P+bP^\perp}$. Thus 
\[
\begin{split}
\varphi(aP + bP^\perp)&\in\ell_{\varphi(aP), \varphi(aP+P^\perp)}\cap \ell_{\varphi(bP^{\perp}), \varphi(P+bP^\perp)}\\
&=\ell_{a\varphi(P), a\varphi(P)+\varphi(P)^\perp}\cap \ell_{b\varphi(P)^{\perp}, \varphi(P)+b\varphi(P)^\perp}\\
&= \{a\varphi(P) + b\varphi(P)^\perp\},
\end{split}
\]
and consequently, $\varphi(aP + bP^\perp)=a\varphi(P) + b\varphi(P)^\perp$.
\end{proof}

\begin{claim}
For every positive integer $n$, equation \eqref{iris} holds for any 
\[
a,b \in \left\{ 0, {1 \over 3}  \left({2 \over 3}\right)^{n-1}, \left({2 \over 3}\right)^{n} \right\}.
\]
\end{claim}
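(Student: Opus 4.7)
The plan is to proceed by induction on $n$. For the base case $n=1$, the set $\left\{0,\, \tfrac{1}{3},\, \tfrac{2}{3}\right\}$ is contained in $\{0, \tfrac{1}{3}, \tfrac{2}{3}, 1\}$, so \eqref{iris} already holds by Claim \ref{0123}.

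For the inductive step, assume \eqref{iris} holds for every pair $a, b$ in $T_n := \left\{0,\, (1/3)(2/3)^{n-1},\, (2/3)^n\right\}$; I want to establish it for every pair in $T_{n+1} := \left\{0,\, (1/3)(2/3)^{n},\, (2/3)^{n+1}\right\}$. The key observation is the arithmetic identity $(1/3)(2/3)^{n-1} = (1/2)(2/3)^n$, which places the middle element of $T_n$ exactly halfway between $0$ and the largest element. This makes rescaling the natural move: define $\psi_n \colon [0, I] \to \oH$ by
\[
\psi_n(X) := (3/2)^n \, \varphi\bigl((2/3)^n X\bigr).
\]
Since $X \mapsto (2/3)^n X$ and $Y \mapsto (3/2)^n Y$ are affine automorphisms of $\oH$, the map $\psi_n$ is a coherency preserver. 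Taking $a = b = (1/3)(2/3)^{n-1}$, then $a = b = (2/3)^n$, and then $a = (2/3)^n,\ b = 0$ in the inductive hypothesis, one directly verifies
\[
\psi_n(0) = 0, \ \ \psi_n((1/2)I) = (1/2)I, \ \ \psi_n(I) = I, \ \ \psi_n(P) = \varphi(P) \ \text{for every}\ P \in \mathcal{P}.
\]

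Hence $\psi_n$ satisfies the hypotheses \eqref{mid}, and Claims \ref{0121}, \ref{013}, and \ref{0123} apply to it verbatim. In particular, Claim \ref{0123} gives $\psi_n(aP + bP^\perp) = a\varphi(P) + b\varphi(P)^\perp$ for all $a, b \in \{0, \tfrac{1}{3}, \tfrac{2}{3}, 1\}$. Unwinding the definition of $\psi_n$, this translates to
\[
\varphi\bigl(a(2/3)^n P + b(2/3)^n P^\perp\bigr) = a(2/3)^n\, \varphi(P) + b(2/3)^n\, \varphi(P)^\perp,
\]
and restricting to $a, b \in \{0, \tfrac{1}{3}, \tfrac{2}{3}\}$ yields \eqref{iris} for every pair of values in $T_{n+1}$, completing the induction. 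I anticipate no substantive obstacle: the only delicate point is the arithmetic identity linking consecutive levels of $T_n$, which is precisely what allows $\psi_n$ to inherit the hypotheses \eqref{mid} and thereby reduce matters to the claims already established.
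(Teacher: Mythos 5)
Your proposal is correct and follows essentially the same route as the paper: the paper's proof also inducts on $n$ using the rescaled coherency preserver $\tau(X)=(3/2)^n\varphi((2/3)^nX)$, checks that it satisfies \eqref{mid} together with $\tau(P)=\varphi(P)$ via the induction hypothesis, and then invokes Claim \ref{0123}. Your verification of the hypotheses is, if anything, slightly more explicit than the paper's.
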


\begin{proof}
The case $n=1$ is true by the previous claim. So, assume that the above is true for some positive integer $n$. 
We introduce a new coherency preserving map $\tau \colon [0, I ] \to \oH$ by
\[
\tau (X) = \left (3/2 \right)^n \varphi \left(\left(2/3\right)^n X \right), \ \ \ X \in [0,I].
\]
Using the induction hypothesis, we see that
\[
\tau (0) = 0, \ \ \ \tau ((1/2)I) = (1/2)I, \ \ \ \text{and} \ \ \ \tau (I) = I.
\]
Moreover, $\tau (P) = \varphi (P)$.
But then we already know that
\[
\tau (a' P + b' P^\perp ) = a' \varphi (P) + b' \varphi (P)^\perp
\]
for any $a',b' \in \{ 0, 1/3 , 2/3 \}$. This immediately implies that \eqref{iris} holds for any 
\[
a,b \in \left\{ 0, {1 \over 3}  \left({2 \over 3}\right)^{n}, \left({2 \over 3}\right)^{n+1} \right\},
\]
as desired.
\end{proof}

\begin{claim}
For every positive integer $n$, we have
\begin{equation}\label{sita}
\varphi \left( \left(1-\left(\frac{2}{3}\right)^n \right) P \right) = \left(1-\left(\frac{2}{3}\right)^n\right) \varphi ( P )
\end{equation}
and
\begin{equation}\label{ue}
\varphi \left( P +\left(\frac{2}{3}\right)^n  P^\perp \right) = \varphi (P) + \left(\frac{2}{3}\right)^n \varphi (P)^\perp.
\end{equation}
\end{claim}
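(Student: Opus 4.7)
My plan is to prove \eqref{ue} first and then deduce \eqref{sita} from it by invoking the coherency preserver $\varphi_1(X) = I - \varphi(I - X)$ that already appeared in the proof of Claim \ref{0123}. The main technical care lies in handling points of $\oH \setminus H_2$ that might a priori arise as images; this turns out to be routine.

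For \eqref{ue}, fix $n$ and set $Q := \varphi(P) \in \mathcal{P}$. Since $P + (2/3)^n P^\perp \in {\square}_P$, the surface inclusion $\varphi({\square}_P) \subset \{\alpha Q + \beta Q^\perp : \alpha, \beta \in \oR\}$ established just before Claim \ref{0121} yields $\varphi(P + (2/3)^n P^\perp) = \alpha Q + \beta Q^\perp$ for some $\alpha, \beta \in \oR$. I pin down $(\alpha, \beta)$ using three coherencies: the matrix $P + (2/3)^n P^\perp$ is coherent to $P$, to $I$, and to $(2/3)^n P^\perp$, and from the previous claim we have $\varphi((2/3)^n P^\perp) = (2/3)^n Q^\perp$. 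Hence $\alpha Q + \beta Q^\perp$ must simultaneously be coherent to $Q$, to $Q + Q^\perp = I$, and to $(2/3)^n Q^\perp$. When $\alpha, \beta \in \mathbb{R}$ these three conditions become the disjunctions ($\alpha = 1$ or $\beta = 0$), ($\alpha = 1$ or $\beta = 1$), and ($\alpha = 0$ or $\beta = (2/3)^n$), whose only common solution is $(\alpha, \beta) = (1, (2/3)^n)$. A brief case analysis using the definition of coherency between points of $\oH \setminus H_2$ and the $H_2$-elements $Q$ and $(2/3)^n Q^\perp$ rules out the possibility that $\alpha$ or $\beta$ equals $\infty$, establishing \eqref{ue}.

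For \eqref{sita}, I apply the just-established \eqref{ue} to the coherency preserver $\varphi_1$, which satisfies the same normalization \eqref{mid} as $\varphi$ and further has $\varphi_1(P) = \varphi(P)$. This gives
\[
\varphi_1\bigl(P + (2/3)^n P^\perp\bigr) = \varphi(P) + (2/3)^n \varphi(P)^\perp.
\]
The left-hand side equals $I - \varphi\bigl((1 - (2/3)^n) P^\perp\bigr)$, so rearranging yields $\varphi\bigl((1 - (2/3)^n) P^\perp\bigr) = (1 - (2/3)^n) \varphi(P)^\perp$. Since $P \in \mathcal{P}$ is arbitrary, replacing $P$ by $P^\perp$ (and using $\varphi(P^\perp) = \varphi(P)^\perp$ from Claim \ref{0121}) yields \eqref{sita}.
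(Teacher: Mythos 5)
Your proof is correct and uses the same ingredients as the paper's own argument --- the previously established identity $\varphi\left((2/3)^nP^\perp\right)=(2/3)^n\varphi(P)^\perp$, the reflection $\varphi_1(X)=I-\varphi(I-X)$, and pinning down a point by its coherence with three reference points --- only in the reverse order: the paper proves \eqref{sita} first (using coherence with $0$, $\varphi(P)$, and $\varphi((1-(2/3)^n)P+P^\perp)$, the latter obtained by applying the previous claim to $\varphi_1$) and then transfers it to \eqref{ue} via $\varphi_1$, whereas you prove \eqref{ue} first. Both routes are sound, and your dismissal of the points at infinity is indeed routine as claimed.
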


\begin{proof}
We know that
\begin{equation}\label{sipl}
\varphi ( (2/3)^n P ) = (2/3)^n \varphi (P) \ \ \ \text{and}  \ \ \ \varphi ( (2/3)^n P^\perp ) = (2/3)^n \varphi(P)^\perp
\end{equation}
for every positive integer $n$. 

Using \eqref{sipl} for the coherency preserving map $\varphi_1$, we arrive at
\[
\varphi ( P + (1-(2/3)^n ) P^\perp ) = \varphi (P) + (1-(2/3)^n) \varphi (P)^\perp 
\]
and 
\[
\varphi ( (1 -(2/3)^n ) P + P^\perp ) = (1 - (2/3)^n ) \varphi (P) + \varphi(P)^\perp
\]
for every positive integer $n$. 

We now use the fact that 
\[
\varphi ( (1-(2/3)^n ) P ) \sim 0,\, \varphi (P),\, \varphi ( (1 -(2/3)^n ) P + P^\perp )
\]
combined with Lemma \ref{simplll} to deduce that
\[
\varphi ( (1-(2/3)^n ) P ) = (1-(2/3)^n) \varphi ( P ) 
\]
and similarly, 
\[
\varphi ( (1 - (2/3)^n ) P^\perp ) = (1 - (2/3)^n ) \varphi(P)^\perp
\]
for every positive integer $n$. 

The last equation remains true if we replace $\varphi$ by $\varphi_1$.  This yields
\[
\varphi ( P +(2/3)^n  P^\perp ) = \varphi (P) + (2/3)^n \varphi (P)^\perp 
\]
for every positive integer $n$. 
\end{proof}

\begin{claim}
There is a $2\times 2$ unitary matrix $U\in U_2$ such that 
\[
\text{either}\ \ \varphi (P) = UPU^\ast,\ \  P \in \mathcal{P},\ \ \text{or}\ \  \varphi (P) = UP^tU^\ast,\ \ P \in \mathcal{P}.
\]
\end{claim}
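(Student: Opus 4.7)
By the non-bijective Wigner theorem (Theorem \ref{wigner}) combined with Remark \ref{wignerremark}, the desired conclusion reduces to the statement that $\mu := \varphi|_{\mathcal{P}}$ is an isometry for the operator norm, i.e., $\|\varphi(P)-\varphi(Q)\| = \|P-Q\|$ for all $P,Q \in \mathcal{P}$. So the plan is to prove this equality and then invoke Theorem \ref{wigner}.

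The key one-sided bound comes from combining the identities already established in the previous claims with Lemmas \ref{mcmcv} and \ref{mikar}. Fix $a = (2/3)^n$. Lemma \ref{mcmcv} gives $\mathcal{S}_{(1-a)P,\,P+aP^\perp} = \{(1-a)P + aR : R \in \mathcal{P}\}$, and Lemma \ref{mikar} states that the intersection of this set with the analogous one for $Q$ is nonempty if and only if $\|P-Q\| \le a/(1-a)$. Since $\varphi((1-a)P) = (1-a)\varphi(P)$ and $\varphi(P + aP^\perp) = \varphi(P) + a\varphi(P)^\perp$, the coherency-preserving property sends any $X$ in the source intersection to $\varphi(X)$, which is coherent to all four of $(1-a)\varphi(P),\,\varphi(P)+a\varphi(P)^\perp,\,(1-a)\varphi(Q),\,\varphi(Q)+a\varphi(Q)^\perp$. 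Applying Lemma \ref{mikar} once more on the image side yields the implication
\[
\|P-Q\| \le \frac{a}{1-a}\ \Longrightarrow\ \|\varphi(P)-\varphi(Q)\| \le \frac{a}{1-a}.
\]

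From here, I would chain these discrete bounds along a geodesic on the Bloch sphere via Lemma \ref{P}: for a geodesic of length $s = d^g(P,Q)$, subdivide into $P=P_0,P_1,\dots,P_m=Q$ with $d^g(P_{j-1},P_j) \le s/m$, apply the above one-sided bound to each increment, and pass to the limit $m \to \infty$, using the triangle inequality for $d^g$. This yields a Lipschitz-type bound $\|\varphi(P)-\varphi(Q)\| \le \|P-Q\|$ (with further care on how the discrete values $a = (2/3)^n$ approximate arbitrary small radii). Combining this upper bound with the previously proven facts $\varphi(P^\perp) = \varphi(P)^\perp$ (antipode preservation) and $\mathrm{tr}(\varphi(P)\varphi(Q))=1/2$ whenever $\mathrm{tr}(PQ)=1/2$ (equator preservation, obtained inside the proof of Claim \ref{013}), I would then upgrade the inequality to an equality.

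The main obstacle will be the reverse inequality $\|\varphi(P)-\varphi(Q)\| \ge \|P-Q\|$, since the hypothesis that $\varphi$ preserves coherency is intrinsically one-sided and only yields upper bounds through the intersection argument above. Getting the lower bound relies on the rigidity of the incidences that $\varphi$ already respects: on the compact Bloch sphere $\mathcal{S}^2$, a map that preserves the antipodal map and sends every great-circle equator into a great-circle equator, while additionally being one-sided Lipschitz, cannot contract strictly on any pair without violating one of these structural constraints. Once the isometry $\|\varphi(P)-\varphi(Q)\| = \|P-Q\|$ is in hand, Theorem \ref{wigner} delivers a unitary $U \in U_2$ realizing $\varphi(P) = UPU^*$ or $\varphi(P) = UP^tU^*$ for every $P \in \mathcal{P}$, which is the claim.
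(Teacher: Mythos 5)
Your derivation of the one-sided bound is essentially the paper's own argument: the intersection criterion from Lemmas \ref{mcmcv} and \ref{mikar}, combined with the identities $\varphi((1-a)P)=(1-a)\varphi(P)$ and $\varphi(P+aP^\perp)=\varphi(P)+a\varphi(P)^\perp$ for $a=(2/3)^n$, followed by the chaining along geodesics via Lemma \ref{P}, correctly yields $\|\varphi(P)-\varphi(Q)\|\le\|P-Q\|$ for all $P,Q\in\mathcal{P}$. Up to that point the proposal is sound and matches the paper.

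The reverse inequality, however, is left as a genuine gap. You correctly identify it as the main obstacle and list the right ingredients (antipode preservation and the one-sided Lipschitz bound), but the appeal to ``rigidity of the incidences'' on the Bloch sphere is not an argument --- as stated it is a claim about maps of $\mathcal{S}^2$ preserving antipodes and equators that you neither prove nor reduce to anything proved, and the equator-preservation fact plays no role in the actual resolution. The missing step is a one-line computation using Lemma \ref{matrum}, which asserts $\|P-Q\|^2+\|P^\perp-Q\|^2=1$ for all $P,Q\in\mathcal{P}$. Applying this identity to the pair $\varphi(P),\varphi(Q)$ and using $\varphi(P)^\perp=\varphi(P^\perp)$ together with the already-established contraction applied to the pair $P^\perp, Q$ gives
\[
\|\varphi(P)-\varphi(Q)\|^2 \;=\; 1-\|\varphi(P^\perp)-\varphi(Q)\|^2 \;\ge\; 1-\|P^\perp-Q\|^2 \;=\; \|P-Q\|^2,
\]
which is exactly the lower bound you need. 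With that, equality holds and Theorem \ref{wigner} (via Remark \ref{wignerremark}) finishes the claim as you describe. In short: the architecture is right, but the decisive step converting the contraction into an isometry must be made explicit, and Lemma \ref{matrum} is the tool that does it.
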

\begin{proof}
Let $a$ be a positive real number $< 1/2$. Using Lemmas \ref{mcmcv} and \ref{mikar}, we see that a pair $P,Q \in \mathcal{P}$ satisfies
\[
([0,I]\supset) \,\, \mathcal{S}_{(1-a)P, P + aP^\perp} \cap \mathcal{S}_{(1-a)Q, Q + aQ^\perp} \not= \emptyset
\]
if and only if
\[
\| P - Q \| \le { a \over 1-a } .
\]
Therefore, the two equations \eqref{sita} and \eqref{ue} imply that for any positive integer $n > 2$, we have
\[
\| P -Q \| \le { (2/3)^n \over 1 - (2/3)^n } \Rightarrow \| \varphi (P ) - \varphi (Q) \| \le { (2/3)^n \over 1 - (2/3)^n }.
\]
Applying \eqref{sooce}, we conclude that there exists a sequence of positive real numbers $a_n$, $n\geq 1$, such that $\lim a_n = 0$ and
\[
d^g (P,Q) \le a_n \Rightarrow d^g (\varphi (P) , \varphi (Q)) \le a_n
\]
for every positive integer $n$.

Let $P,Q \in \mathcal{P}$, $P \not= Q$, and $ \varepsilon >0 $. Take a positive integer $n$ such that $a_n <  \varepsilon$. Find a positive integer $k$ such that
\[
d^g (P,Q) \le k a_n < d^g (P,Q) + \varepsilon.
\]
By Lemma \ref{P}, we can find rank one projections $P_0 = P, P_1 , \ldots, P_{k-1}, P_k = Q$ such that $d^g (P_{j-1},P_j) \le a_n$. But then
\[
\begin{split}
d^g (\varphi (P),\varphi (Q)) 
&\le d^g (\varphi (P_0),\varphi (P_1)) + \cdots +  d^g (\varphi (P_{k-1}),\varphi (P_k)) \\
& \leq  k a_n < d^g (P,Q) + \varepsilon.
\end{split}
\]
Therefore,
\[
d^g (\varphi (P),\varphi (Q)) \le d^g (P,Q), \ \ \ P,Q \in \mathcal{P}.
\]
Applying \eqref{sooce} once more, we conclude that
\[
\| \varphi (P) - \varphi (Q) \| \le \| P - Q \|, \ \ \ P,Q \in \mathcal{P}.
\]
Finally, it follows from Lemma \ref{matrum} and $\varphi (P^\perp) = \varphi(P)^\perp$ that 
\[
\| \varphi (P) - \varphi (Q) \| = \| P - Q \|, \ \ \ P,Q \in \mathcal{P}.
\]
Now, Wigner's theorem (Theorem \ref{wigner}, see also Remark \ref{wignerremark}) completes the proof. 
\end{proof}

Hence, after composing $\varphi$ with the standard automorphism of $\oH$ given by $X \mapsto U^\ast X U$, 
and possibly with the standard automorphism of $\oH$ given by $X \mapsto X^t$, we get to the situation where
\[
\varphi (P) = P, \ \ \ P \in \mathcal{P}.
\]

The following claim finally completes the proof of Proposition \ref{unitary}.

\begin{claim}
In the above situation, we have $\varphi (A) = A$ for every $A\in [0, I]$.
\end{claim}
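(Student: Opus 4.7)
The plan is to determine $\varphi$ first on each diagonal surface $\{aP+bP^\perp:a,b\in\oR\}$, and then to couple the surfaces through cross-coherency in order to derive a scalar functional equation rigid enough to pin $\varphi$ down uniquely.

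First, I would show that for every $P\in\mathcal{P}$, $\varphi({\square}_P)$ lies in the surface $\{aP+bP^\perp:a,b\in\oR\}$. The image contains $0$, $(1/2)I$, $I$, and $P$ by hypothesis, and Lemma \ref{veryeasy} shows that the first three points are not contained in a common cone, so Lemmas \ref{where} and \ref{svinj} deliver the containment. Analyzing the resulting coherency preserver on this two-dimensional grid (where coherency means ``share a coordinate''), I would check that horizontal segments map to subsets of horizontal lines and vertical segments to subsets of vertical lines; the swap alternative is blocked because $\varphi(0)=0$ and $\varphi(P)=P$ have different first coordinates in the grid. This yields the coordinate form $\varphi(aP+bP^\perp)=\alpha_P(a)P+\beta_P(b)P^\perp$ for functions $\alpha_P,\beta_P\colon[0,1]\to\oR$. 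Since $aI$ belongs to every such surface and the pairwise intersection of two such surfaces (for distinct pairs $\{P,P^\perp\}$) is $\{tI:t\in\oR\}\cup\{\oi\}$, the diagonal image $\varphi(aI)=\alpha_P(a)P+\beta_P(a)P^\perp$ must be a scalar multiple of $I$ for every $P$, forcing $\alpha_P(a)=\beta_P(a)=:F(a)$ with $F$ independent of $P$. The spurious possibility $\varphi(aI)=\oi$ for $a\in(0,1)$ is ruled out using the earlier claims, which give $\varphi((2/3)^nQ)=(2/3)^n Q$: for any $a>(2/3)^n$ one can choose $Q\in\mathcal{P}$ and $\lambda:=\mathrm{tr}(PQ)\in(0,1)$ so that $aP\sim(2/3)^nQ+Q^\perp$ in $H_2$, whereupon coherency preservation would require $(2/3)^n(1-\lambda)+\lambda=0$, which has no solution with $\lambda\in(0,1)$.

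Second, coherency between matrices on different surfaces produces a functional equation for $F$. For $P,Q\in\mathcal{P}$ with $\lambda:=\mathrm{tr}(PQ)\in(0,1)$, a direct $2\times 2$ determinant calculation shows $aP+bP^\perp\sim cQ$ in $H_2$ precisely when
\[
c=\frac{ab}{(1-\lambda)a+\lambda b},
\]
a value always in $[0,1]$. Applying $\varphi$ and using coherency preservation gives
\[
F(a)F(b)=F(c)\bigl((1-\lambda)F(a)+\lambda F(b)\bigr).
\]
Setting $a=1$, $b=1/(n+1)$, $\lambda=1/n$ yields $c=1/n$ and reduces this to the recursion $F(1/(n+1))=(n-1)F(1/n)/(n-F(1/n))$, which, starting from $F(1/2)=1/2$, inductively forces $F(1/n)=1/n$ for every integer $n\geq 1$. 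For integers $m>n\geq 1$, taking $a=1/n$ and $b=1/m$, the formula $c=1/(m-(m-n)\lambda)$ parameterizes $c$ over the entire interval $[1/m,1/n]$ as $\lambda$ varies in $[0,1]$, and the functional equation immediately gives $F(c)=c$ throughout. Letting $n=1$ and $m\to\infty$ covers all of $(0,1]$, and combined with $F(0)=0$ this gives $F\equiv\mathrm{id}$ on $[0,1]$.

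Consequently $\varphi(aP+bP^\perp)=aP+bP^\perp$ for every $a,b\in[0,1]$ and every $P\in\mathcal{P}$; since any $A\in[0,I]$ admits a spectral decomposition of this form, this establishes $\varphi(A)=A$ on all of $[0,I]$. I expect the main obstacle to be the passage from the countable collection of fixed values of $F$ produced by the earlier claims to the full identity $F\equiv\mathrm{id}$: no identity-type theorem applies directly at this stage, so the functional equation arising from cross-surface coherency is the essential new ingredient, and the induction on $F(1/n)$ together with the covering $(0,1]=\bigcup_{m>n\geq 1}[1/m,1/n]$ is what finally closes the argument.
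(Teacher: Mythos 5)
Your proof is correct in substance, but it reaches the conclusion by a genuinely different route from the paper's. The paper probes the unknown values $\varphi(tP)$ directly with the already-fixed rank-two points $(1/3)Q+Q^\perp$: the relation $tP\sim(1/3)Q+Q^\perp$ is equivalent to $t(1+2\,\mathrm{tr}(QP))=1$, and letting $\mathrm{tr}(QP)$ sweep $[0,1]$ pins down $\varphi(tP)=tP$ for all $t\in[1/3,1]$ in one stroke; the range is extended to $[0,1]$ via the symmetric map $X\mapsto I-\varphi(I-X)$, and rank-two matrices are then dispatched by Lemma \ref{smtr}, whose conclusion $\varphi(A)\in\{A,0\}$ is intersected with $\varphi(A)\in\{A,I\}$ coming from the symmetric map. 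You instead first establish the product form $\varphi(aP+bP^\perp)=F(a)P+F(b)P^\perp$ with a single $P$-independent scalar function $F$, and then extract the functional equation $F(a)F(b)=F(c)\bigl((1-\lambda)F(a)+\lambda F(b)\bigr)$ from the cross-surface coherency $aP+bP^\perp\sim cQ$ with $c=ab/((1-\lambda)a+\lambda b)$; the recursion on $F(1/n)$ anchored at $F(1/2)=1/2$ and the covering $(0,1]=\bigcup_m[1/m,1]$ then force $F=\mathrm{id}$. Both arguments rest on the same mechanism, namely coherency between points of two distinct surfaces $\{aP+bP^\perp\}$ and $\{aQ+bQ^\perp\}$; yours buys a uniform treatment of rank-one and rank-two matrices (no appeal to Lemma \ref{smtr}) at the price of first proving the product structure and the $P$-independence of $F$, while the paper's version is shorter because its probe points are already known to be fixed. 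The steps you leave as sketches are all routine but would need to be written out: the image of each row (resp.\ column) of ${\square}_P$ contains a point of $\ell_{0,P^\perp}$ and a point of $\ell_{P,I}$ (resp.\ of $\ell_{0,P}$ and $\ell_{P^\perp,I}$) whose coherency forces a common coordinate, which is what yields the product form; the existence of $Q$ with $aP\sim(2/3)^nQ+Q^\perp$ for $a>(2/3)^n$ requires an intermediate value argument in $\lambda=\mathrm{tr}(PQ)$; and one should record that $F$ takes values in $\mathbb{R}$ rather than $\oR$ before invoking the determinant criterion for coherency (your exclusion of $\varphi(aI)=\oi$ does exactly this).
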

\begin{proof}
Fix $P \in \mathcal{P}$ and a real number $t$, $1/3 \le t \le 1$. 
By $0\sim tP\sim P$, $\varphi(0)=0$, and $\varphi(P)=P$, there exists $s \in {\oR}$ such that $\varphi (tP) = sP$. 
It follows from Claim \ref{013} that for every $Q \in \mathcal{P}$ we have 
\[
\varphi ( (1/3)Q + Q^\perp) = (1/3)Q + Q^\perp.
\]
Further,
\[
(1/3)Q + Q^\perp -tP = ((1/3)Q + Q^\perp)\, (I - (3Q + Q^\perp) (tP)).
\]
Therefore, by Lemma \ref{tr1}, the condition $(1/3)Q + Q^\perp  \sim tP$ holds if and only if 
\[
1 = \mathrm{tr}\,( (3Q + Q^\perp) (tP)) = \mathrm{tr}\, ( (2Q + I)(tP)) = t (1 + 2 \mathrm{tr}\, (QP)).
\]
Because $(1-t) / (2t) \in [0,1]$, we can find a projection $Q \in \mathcal{P}$ such that  $\mathrm{tr}\, (QP) = (1-t) / (2t)$, and then by the above, 
 $(1/3)Q + Q^\perp  \sim tP$. It follows that  $(1/3)Q + Q^\perp  \sim sP$, which combined with Lemma \ref{simplll} yields that $t=s$.

Hence, for every real $t$, $1/3 \le t \le 1$, and every $P \in \mathcal{P}$, we have $\varphi (tP) = t P$. 
Using the coherency preserving map $\varphi_1 \colon X\mapsto I - \varphi (I -X)$, we obtain $\varphi (tP+P^\perp) = t P+P^\perp$ for every real $t$, $0\le t \le 2/3$, and every $P \in \mathcal{P}$. 
Since $\varphi(tP)\sim 0, \varphi(P), \varphi (tP+P^\perp)$, we get $\varphi(tP)=tP$ when $0\le t \le 2/3$ by Lemma \ref{simplll}. 
We conclude that
\[
\varphi (tP) = tP, \ \ \ P \in \mathcal{P}, \ \, t \in [0,1].
\]
Let now $A$ be any element of $[0,I]$ of rank two. Applying Lemma \ref{smtr} with $B = \varphi (A)$, we see that $\varphi (A) = A$ or $\varphi (A) = 0$. 

We have shown that $\varphi (A) \in \{0, A\}$ holds for every $A \in [0,I]$.
We now use the obtained result for the map $\varphi_1\colon X \mapsto I - \varphi ( I -X)$ to conclude that  $\varphi (A) \in \{A, I\}$ for every $A \in [0,I]$. Hence we obtain $\varphi (A) = A$ for every $A \in [0,I]$. 
\end{proof}

\begin{corollary}\label{pqyl}
Let $\varphi : [0,I] \to \oH$ be a coherency preserver with the property that $\varphi (0)$, $\varphi((1/2)I)$, and $\varphi (I)$ are in timelike position. Then $\varphi$ is standard.
\end{corollary}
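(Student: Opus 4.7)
The plan is to reduce to Proposition \ref{unitary} by pre-composing $\varphi$ with a standard automorphism of $\oH$ that carries the timelike triple $\varphi(0), \varphi((1/2)I), \varphi(I)$ to the triple $0, (1/2)I, I$ in exactly that order. Once this is arranged, Proposition \ref{unitary} yields that the composition agrees on $[0,I]$ with a standard automorphism of $\oH$, and since the standard automorphisms form a group, it will follow immediately that $\varphi$ itself is standard.

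Concretely, by Example \ref{IJ} the mapping $\alpha \colon X \mapsto -I + (I-X)^{-1}$ is a standard automorphism of $\oH$ satisfying $\alpha(0) = 0$, $\alpha((1/2)I) = I$, and $\alpha(I) = \oi$. On the other side, Lemma \ref{threepoints} applied to the timelike triple $\varphi(0), \varphi((1/2)I), \varphi(I)$, combined with the permuting automorphisms $\tau$ and $\sigma$ of Remark \ref{remark01} (which together realize every permutation of $\{0, I, \oi\}$), yields a standard automorphism $\beta$ of $\oH$ with
\[
\beta(\varphi(0)) = 0, \quad \beta(\varphi((1/2)I)) = I, \quad \beta(\varphi(I)) = \oi.
\]
Setting $\psi := \alpha^{-1} \circ \beta$ gives a standard automorphism of $\oH$ satisfying $\psi(\varphi(0)) = 0$, $\psi(\varphi((1/2)I)) = (1/2)I$, and $\psi(\varphi(I)) = I$.

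Now $\psi \circ \varphi \colon [0, I] \to \oH$ is a coherency preserver meeting the hypotheses of Proposition \ref{unitary}. That proposition supplies a unitary $U \in U_2$ such that $\psi \circ \varphi$ coincides on $[0,I]$ with either $X \mapsto UXU^\ast$ or $X \mapsto UX^tU^\ast$; in either case, this mapping $\Psi$ is an affine, hence standard, automorphism of $\oH$ (Definition \ref{automorphisms}). Consequently $\varphi = \psi^{-1} \circ \Psi|_{[0, I]}$, and $\psi^{-1} \circ \Psi$ is a standard automorphism extending $\varphi$, so $\varphi$ is standard. The real work has already been done in Proposition \ref{unitary}; the only subtlety in the present argument is ensuring that $\beta$ can be chosen to realize the specific ordered bijection we need, and this is immediate from Lemma \ref{threepoints} together with Remark \ref{remark01}.
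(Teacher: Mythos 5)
Your proof is correct and follows essentially the same route as the paper, which simply cites Proposition \ref{unitary} and Example \ref{IJ}: you use the definition of a timelike triple (via Lemma \ref{threepoints} and the permuting automorphisms of Remark \ref{remark01}) together with the explicit automorphism of Example \ref{IJ} to normalize the image triple to $0$, $(1/2)I$, $I$, and then invoke Proposition \ref{unitary}. The details you supply are exactly the ones the paper leaves implicit, so nothing further is needed.
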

\begin{proof}
This statement is a straightforward consequence of Proposition \ref{unitary} and Example \ref{IJ}.
\end{proof}

\begin{proof}[Proof of Theorem A]
Suppose that $\varphi\colon \mathcal{A}\to \oH$ satisfies the assumption of Theorem A.
By Corollary \ref{cororder}, there is a standard automorphism $\psi_1$ of $\oH$ satisfying 
\[
\psi_1([0,I])=[A, B],\ \ \psi_1(0)=A,\ \ \psi_1(I)=B, \ \ \text{and} \ \ \psi_1((1/2)I)=C.
\]
It follows that the restriction of the coherency preserver $\varphi\circ \psi_1$ to $[0,I]$ satisfies the assumption of Corollary \ref{pqyl}. 
Thus, Corollary \ref{pqyl} and Theorem \ref{identity} imply that $\varphi\circ \psi_1$ is standard, which further shows that $\varphi$ is standard, too.
\end{proof}

\subsection{Applications of Theorem A}\label{apply}
We begin with a useful lemma.
\begin{lemma}\label{path}
Let $\mathcal{U}$ be an open connected subset of $H_2$. 
For any pair $X, Y\in \mathcal{U}$, there are $X_1=X, X_2, \ldots, X_n=Y\in \mathcal{U}$ such that $X_1\sim X_2\sim \cdots\sim X_n$.
\end{lemma}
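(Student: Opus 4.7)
The plan is to introduce an equivalence relation on $\mathcal{U}$ and exploit connectedness. Declare $X\approx Y$ for $X,Y\in\mathcal{U}$ when there exists a finite chain $X=X_1\sim X_2\sim\cdots\sim X_n=Y$ with every $X_i\in\mathcal{U}$. This is clearly an equivalence relation: reflexivity via the trivial one-term chain, symmetry by reversing a chain, transitivity by concatenation.

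The heart of the argument is to show that each equivalence class $[X]$ is open in $H_2$. Fix $Y\in[X]$ and choose $r>0$ with $B(Y,r)\subset\mathcal{U}$, where $B(Y,r)$ denotes the open norm ball. For an arbitrary $Y'\in B(Y,r)$, apply the spectral theorem to write
\[
Y'-Y=aP+bP^{\perp}
\]
for some $P\in\cP$ and $a,b\in\bR$; since $aP+bP^{\perp}$ has eigenvalues $a,b$, we have $\max(|a|,|b|)=\|Y'-Y\|<r$. Set $Z:=Y+aP=Y'-bP^{\perp}$. Then $Z-Y=aP$ and $Z-Y'=-bP^{\perp}$ are both of rank at most one, so $Y\sim Z\sim Y'$. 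Moreover $\|Z-Y\|=|a|<r$, hence $Z\in B(Y,r)\subset\mathcal{U}$. Concatenating any chain from $X$ to $Y$ inside $\mathcal{U}$ with the two-step chain $Y\sim Z\sim Y'$ yields a chain from $X$ to $Y'$ inside $\mathcal{U}$, and so $Y'\in[X]$. Thus $B(Y,r)\subset[X]$, proving that $[X]$ is open.

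Finally, $\mathcal{U}$ is the disjoint union of the open equivalence classes of $\approx$, and since $\mathcal{U}$ is connected there can be only one such class. Therefore $X\approx Y$ for every pair $X,Y\in\mathcal{U}$, which is exactly the assertion of the lemma. The only slightly nontrivial ingredient is the spectral-decomposition trick producing the intermediate point $Z$ that is simultaneously coherent with $Y$ and $Y'$ while remaining inside the norm ball; the rest is routine bookkeeping with equivalence relations and connectedness.
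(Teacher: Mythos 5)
Your proof is correct. The overall skeleton---introduce the chain-equivalence relation $\approx$, show each class is open, and invoke connectedness---is the same as the paper's, but the local step is genuinely different. The paper first treats the case where $\mathcal{U}$ is an open interval: for $X<Y$ it picks any $Z\in\mathcal{S}_{X,Y}$, which lies in $\mathcal{U}$ by Corollary \ref{A<B}, giving $X\sim Z\sim Y$, and for general $X,Y$ in the interval it routes through a common upper bound; it then bootstraps to arbitrary open connected $\mathcal{U}$ by observing that $\mathcal{U}_0$ and its complement are open. Your argument instead works directly inside a norm ball: diagonalizing $Y'-Y=aP+bP^\perp$ and taking $Z=Y+aP$ produces a two-step chain $Y\sim Z\sim Y'$ with $\|Z-Y\|=|a|\le\|Y'-Y\|<r$, so $Z$ stays in the ball. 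This is slightly more elementary and self-contained, as it avoids the order-theoretic machinery (Lemma \ref{Porder}, Corollary \ref{A<B}) and needs only the spectral theorem; what the paper's route buys is reuse of lemmas it has already established and a formulation adapted to the matrix intervals that recur throughout the rest of the argument. Both proofs are complete and correct.
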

\begin{proof}
For $X, Y\in \mathcal{U}$, we write $X\approx Y$ if there are $X_1=X, X_2, \ldots, X_n=Y\in \mathcal{U}$ such that $X_1\sim X_2\sim \cdots\sim X_n$.
It is clear that $\approx$ is an equivalence relation.

First, let us consider the case where $\mathcal{U}$ is an open interval. 
When $X<Y$, then we have $\mathcal{S}_{X, Y}\subset \mathcal{U}$ by Corollary \ref{A<B}. We may take any point $Z\in \mathcal{S}_{X, Y}\subset \mathcal{U}$, and then $X\sim Z\sim Y$. Thus $X\approx Y$.  
If $X, Y$ are arbitrary, then we may take $Z\in \mathcal{U}$ such that $X< Z$, $Y< Z$, hence we obtain $X\approx Z\approx Y$. 
Thus the proof is complete when $\mathcal{U}$ is an open interval. 

Let $\mathcal{U}$ be an arbitrary open connected subset of $H_2$.
Fix $X_0\in \mathcal{U}$, and let $\mathcal{U}_0$ denote the collection of $Y\in \mathcal{U}$ with $X_0\approx Y$.
Then the preceding paragraph shows that $\mathcal{U}_0$ and $\mathcal{U}\setminus\mathcal{U}_0$ are open.
Since $\mathcal{U}$ is connected, we obtain $\mathcal{U}_0=\mathcal{U}$.
\end{proof}

Let us prove the fundamental theorem of chronogeometry \index{fundamental theorem of chronogeometry} as a consequence of the preceding subsection.
\begin{theorem}\label{ftcgelem}
Let $\varphi \colon H_2 \to H_2$ be a bijective map satisfying 
\[
A \sim B \iff \varphi (A) \sim \varphi (B)
\]
for every pair $A,B \in H_2$. Then $\varphi$ extends to an affine automorphism. 
\end{theorem}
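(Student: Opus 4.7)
My approach is to apply Theorem A after a suitable normalization, and then verify that the resulting standard extension is in fact affine. After post-composing $\varphi$ with the translation $X\mapsto X-\varphi(0)$ (an affine automorphism), I may assume $\varphi(0)=0$. Since $d(\varphi(0),\varphi(I))=d(0,I)=2$, the matrix $\varphi(I)$ is invertible hermitian, and by the spectral theorem a further post-composition with an affine automorphism of the form $X\mapsto cSXS^\ast$ (with $c\in\{\pm 1\}$ and $S$ invertible) normalizes $\varphi(I)$ to be either $I$ or $J$ while maintaining $\varphi(0)=0$.

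The central step is to verify that the timelike chain $0<(1/2)I<I$ (Example \ref{IJ}) has timelike image under $\varphi$, so that Theorem A applies with $\mathcal{A}=H_2$, $A=0$, $C=(1/2)I$, $B=I$. In the normalized case $\varphi(I)=I$, I would argue by contradiction: if the image triple $(0,\varphi((1/2)I),I)$ were spacelike, then there would exist a common coherent point $Z\in\oH$. Since $\cC_0\cap\cC_I=\cP\subset H_2$ by Lemma \ref{veryeasy}, we would have $Z\in\cP$. Because $\varphi$ bijectively permutes $\cP=\cC_0\cap\cC_I\cap H_2$ (it fixes $0$ and $I$ and preserves coherency in both directions), the preimage $\varphi^{-1}(Z)$ would be a rank-one projection coherent to $(1/2)I$. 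But by Lemma \ref{apbq}, $P-(1/2)I=(1/2)P-(1/2)P^\perp$ has rank $2$, so no projection is coherent to $(1/2)I$ — a contradiction. Theorem A then yields that $\varphi$ extends to a standard automorphism $\Phi$ of $\oH$. Since $\varphi\colon H_2\to H_2$ is bijective, $\Phi$ preserves $H_2$ setwise, and hence fixes $\oi$ (the unique point characterized by $\cC_{\oi}=\oH\setminus H_2$). Together with $\Phi(0)=0$ and the fact that inversions interchange $0$ and $\oi$ while affine automorphisms fix both, a direct analysis of the compositional structure of standard automorphisms forces $\Phi$ to be affine of the form $X\mapsto cSXS^\ast$ or $X\mapsto cSX^tS^\ast$.

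The main obstacle lies in the case $\varphi(I)=J$: the intersection $\cC_0\cap\cC_J$ in $\oH$ also contains points $\infty P$ (with $P$ in the equator of the Bloch sphere) lying outside $H_2$, so the $\varphi^{-1}$ argument does not immediately rule out such potential common coherents. A symmetric analysis applied to $\varphi^{-1}$ resolves this whenever $\varphi^{-1}(I)$ is positive definite — one uses the chain $(0,(1/2)\varphi^{-1}(I),\varphi^{-1}(I))$, whose image under $\varphi$ is of the form $(0,Y,I)$ for some $Y\in H_2$, and runs the previous contradiction (since the analogue of $\cP$ for the pair $(0,\varphi^{-1}(I))$ is $\{\varphi^{-1}(I)^{1/2}P\varphi^{-1}(I)^{1/2}:P\in\cP\}$, which by conjugation also contains no element coherent to $(1/2)\varphi^{-1}(I)$). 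The remaining subcase, where both $\varphi(I)$ and $\varphi^{-1}(I)$ are indefinite, is the genuinely delicate one; I expect it to be handled either by exhibiting a chain whose image necessarily has its common coherent points inside $H_2$, or by an additional structural argument exploiting the bijectivity of $\varphi$ on $H_2$.
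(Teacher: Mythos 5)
Your argument that the image triple is timelike is correct and complete in the case where $\varphi(I)$ is (positive or negative) definite, and your symmetric variant via $\varphi^{-1}$ covers the case where $\varphi^{-1}(I)$ is definite; but the subcase in which both $\varphi(I)$ and $\varphi^{-1}(I)$ are indefinite is left at the level of ``I expect it to be handled either by \dots or by \dots'', which is not a proof. This is a genuine gap, and it is precisely the point where the paper's argument does something you are missing: before invoking Theorem A it shows that $\varphi ( H_{2}^{++} \cup H_{2}^{--} ) = H_{2}^{++} \cup H_{2}^{--}$, by characterizing definiteness purely in terms of the coherency structure at $0$: an invertible $A\in H_2$ is definite if and only if \emph{every} line $\{aP : a\in\oR\}$ through $0$ contains a point of $H_2$ coherent to $A$ (for definite $A$ one solves $\mathrm{tr}\,(A^{-1}(aP))=1$; for indefinite $A$ one exhibits a rank one projection $Q$ with $\mathrm{tr}\,(A^{-1}Q)=0$, so no finite point of that line works). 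Since this characterization is preserved by a bijection preserving coherency in both directions, $\varphi(I)$ is automatically definite and your problematic $J$-case never occurs; moreover, combined with Lemma \ref{path} applied to the connected open sets $H_2^{++}$ and $H_2^{--}$, one gets $\varphi(-I)<0<\varphi(I)$ or the reverse, so the triple $\varphi(-I),0,\varphi(I)$ is timelike by Corollary \ref{matrixtriple}. You need this lemma, or an equivalent coherency-theoretic detection of definiteness, to close your case split.

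A secondary under-justified step: the assertion that a standard automorphism fixing $\oi$ and $0$ ``is forced to be affine by a direct analysis of the compositional structure'' is not carried out and is not routine; note also that you cannot appeal to Corollary \ref{corp} here, since its proof in the paper is itself derived from Theorem \ref{ftcgelem} and would make the argument circular. The paper instead normalizes so that $\hat\varphi(0)=0$ and $\hat\varphi(I)=I$, conjugates by $\psi\colon X\mapsto X^{-1}-I$ so that $\psi^{-1}\circ\hat\varphi\circ\psi$ fixes $0$, $(1/2)I$, and $I$, applies Proposition \ref{unitary} on $[0,I]$ followed by the identity-type theorem (Theorem \ref{identity}), and then unconjugates to conclude $\hat\varphi(X)=UXU^\ast$ or $\hat\varphi(X)=UX^tU^\ast$ globally. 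Some concrete argument of this kind is needed to finish your last step.
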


\begin{proof}
After composing $\varphi$ with a translation, we can assume with no loss of generality that $\varphi (0) =0$. 
As before, we define $d(A,B) = \mathrm{rank}\, (A-B) \in \{ 0,1,2 \}$, $A,B \in H_2$. It is clear that we have $d(A,B) = d( \varphi (A) , \varphi (B))$, $A, B \in H_2$. 
The set $H_2^2$ of invertible matrices in $H_2$ is formed of $A\in H_2$ with the property $d(A,0) = 2$. Thus we get 
\[
\varphi(H_2^2)=H_2^2.
\] 

Recall that $H_2^2$ is the disjoint union of three sets $H_2^{++}, H_2^{--}$, and $H_2^{+-}$.
We first claim that 
\[
\varphi ( H_{2}^{++} \cup H_{2}^{--} ) = H_{2}^{++} \cup H_{2}^{--} .
\]
The above equality follows directly from the following characterization of the set $H_{2}^{++} \cup H_{2}^{--}$: An invertible matrix $A \in H_2$ belongs to $H_{2}^{++} \cup H_{2}^{--}$ if and only if for each line $\ell$ passing through zero there exists $B \in \ell\cap H_2$ such that $A \sim B$. 
Indeed, let $A \in H_{2}^{++}$ and $\ell = \{ aP \, : \, a \in {\oR}\}$ where $P \in \mathcal{P}$. Then $A^{-1} > 0$ and therefore $\mathrm{tr}\, (A^{-1}P) = \mathrm{tr}\, (P A^{-1}P) >0$. It follows that there exists exactly one (positive) $a \in \mathbb{R}$ such that $\mathrm{tr}\, (A^{-1}(aP)) = 1$, or equivalently, $A \sim aP$ (see Lemma \ref{tr1}).
In the same way we treat the case when $A \in H_{2}^{--}$. So, assume finally that $A \in H_{2}^{+-}$. Then $A^{-1}$ is again in $H_{2}^{+-}$. 
Therefore, there exists a unit vector $x \in \mathbb{C}^2$ such that $x^ \ast A^{-1}x = 0$, yielding that $\mathrm{tr}\, (A^{-1}Q) = 0$, where $Q = xx^\ast$. 
It follows from Lemma \ref{tr1} that there is no point in $\{ aQ \, : \, a \in \mathbb{R} \}$ that is coherent to $A$.

Observe that if $A,B \in H_{2}^{++} \cup H_{2}^{--}$ and $A \sim B$, then either $A,B \in H_{2}^{++}$, or $A,B \in H_{2}^{--}$. 
Moreover, $H_2^{++}$ and $H_2^{--}$ are open and connected sets. 
Thus Lemma \ref{path} implies that
\[
\varphi ( H_{2}^{++} ) = H_{2}^{++}  \ \ \ \text{and} \ \ \  \varphi ( H_{2}^{--} ) = H_{2}^{--} \, ;
\]
or 
\[
\varphi ( H_{2}^{++} ) = H_{2}^{--}  \ \ \ \text{and} \ \ \  \varphi ( H_{2}^{--} ) = H_{2}^{++} \, .
\]

Therefore, we get $\varphi(-I)<0=\varphi(0)<\varphi(I)$ or $\varphi(-I)>0=\varphi(0)>\varphi(I)$.
Hence Corollary \ref{matrixtriple} implies that the triple $\varphi(-I), \varphi(0), \varphi(I)$ is in timelike position. 
Theorem A shows that $\varphi$ extends to an automorphism $\hat{\varphi}$ of $\oH$. 
Note that $\hat{\varphi}(H_2)=H_2$ implies $\hat{\varphi}(\oi)=\oi$.
It remains to show that $\hat{\varphi}$ is an affine automorphism.
By composing $\hat{\varphi}$ with an appropriate affine automorphism of $\oH$, we may assume $\hat{\varphi}(0)=0$ and $\hat{\varphi}(I)=I$ without loss of generality. (Recall that the composition of affine automorphisms is again an affine automorphism by Lemma \ref{simsim}.) 

Then the automorphism $\psi \colon X\mapsto X^{-1}-I$ of $\oH$ satisfies 
\[
\psi^{-1}\circ \hat{\varphi}\circ\psi(0) = \psi^{-1}\circ \hat{\varphi}(\oi) = \psi^{-1}(\oi)=0,
\]
\[
\psi^{-1}\circ \hat{\varphi}\circ\psi(I)=\psi^{-1}\circ \hat{\varphi}(0) = \psi^{-1}(0) =I,\ \ \text{and}
\]
\[
\psi^{-1}\circ \hat{\varphi}\circ\psi((1/2)I)=\psi^{-1}\circ \hat{\varphi}(I) =\psi^{-1}(I) =(1/2)I.
\]
Now we apply Proposition \ref{unitary}.
We see that there is a $2\times 2$ unitary matrix $U$ such that $\psi^{-1}\circ \hat{\varphi}\circ\psi(X)=UXU^*$ holds for every $X\in [0, I]$, or $\psi^{-1}\circ \hat{\varphi}\circ\psi(X)=UX^tU^*$ holds for every $X\in [0, I]$. 
By the identity-type theorem (Theorem \ref{identity}), the same equality holds for every $X\in \oH$. 
It is easy to see that $\psi(UXU^*)=U\psi(X)U^*$ and $\psi(X^t)=\psi(X)^t$ for every $X\in \oH$. 
It follows that $\hat{\varphi}(X)=UXU^*$ holds for every $X\in \oH$, or $\hat{\varphi}(X)=UX^tU^*$ holds for every $X\in \oH$. 
\end{proof}

\begin{proof}[Proof of Theorem \ref{ftc}]
Assume that $\phi\colon \cM_4\to \cM_4$ is a bijection that preserves lightlikeness in both directions. 
We need to show that $\phi$ is of the form \eqref{fiesa}. 
We may assume $\phi(0)=0$ without loss of generality. 
Let $\xi\colon \mathcal{M}_4\to H_2$ be the mapping as in \eqref{zacet}.
Then the mapping $\varphi:=\xi\circ \phi\circ \xi^{-1}\colon H_2\to H_2$ clearly satisfies the assumption of Theorem \ref{ftcgelem}. 
It follows that $\varphi$ extends to an affine automorphism.
In particular, we see that $\varphi$ is linear.  
It follows that $\phi=\xi^{-1}\circ \varphi\circ \xi$ is also linear, that is, $\phi (r) = Sr$ for some real $4\times 4$ matrix $S$. Our task is to show that $S=cQ$ for some positive real number $c$ and some Lorentz matrix $Q$.
We know that for every $r \in \mathcal{M}_4$ we have
\[
\langle r,r \rangle = 0 \iff \langle Sr, Sr \rangle = 0,
\]
that is,
\[
(M r,r ) = 0 \iff (S^t M Sr, r ) = 0.
\]
Here, $(\cdot , \cdot )$ \index{$(\cdot \, , \cdot)$} denotes the standard inner product on the space of all real quadruples, and $M$ is as in \eqref{Matrix}.  Choosing suitable spacetime events $r$ (say, $r = (\cos \gamma, \sin \gamma, 0 , 1), \ldots$) and applying the fact that $S^t MS$ is symmetric, it is trivial to deduce that
\[
S^t MS = dM
\]
for some real number $d$. The matrix $S^t MS$ has the same inertia as $M$, that is, it has three negative eigenvalues and one positive eigenvalue. It follows that $d$ is positive, and consequently,
\[
\left( {1 \over \sqrt{d}} S \right)^t \, M \, \left( {1 \over \sqrt{d}} S \right) = M,
\]
as desired.
\end{proof}

\begin{corollary}\label{corp}
Every automorphism of $\oH$ is standard. 
An automorphism $\varphi$ of $\oH$ is an affine automorphism if and only if $\varphi ({\oi}) = {\oi}$.
\end{corollary}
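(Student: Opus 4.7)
The plan is to deduce both statements from Theorem \ref{ftcgelem} (the extension result for bijective coherency preservers on $H_2$) together with the identity-type theorem (Theorem \ref{identity}). I will address the two statements in the reverse order, since the first follows from the second once we can send $\oi$ to itself using a standard automorphism.

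For the second assertion, the ``only if'' direction is a direct verification against the definitions introduced in Subsection \ref{automorphism}. By the rules $S\cdot\oi\cdot S^{\ast}=\oi$, $-\oi=\oi$, and $\oi+B=\oi$, one checks that any map of the form $X\mapsto cSXS^{\ast}+B$ or $X\mapsto cSX^tS^{\ast}+B$ sends $\oi$ to $\oi$; the transpose extension has $\oi^t=\oi$ by the formula defining it. So every affine automorphism fixes $\oi$.

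For the ``if'' direction, assume $\varphi$ is an automorphism of $\oH$ with $\varphi(\oi)=\oi$. Since $\mathcal{C}_{\oi}=\oH\setminus H_2$ and $\varphi$ preserves coherency in both directions, the image $\varphi(\oH\setminus H_2)$ equals $\mathcal{C}_{\varphi(\oi)}=\mathcal{C}_{\oi}=\oH\setminus H_2$; bijectivity of $\varphi$ then forces $\varphi(H_2)=H_2$. The restriction $\varphi|_{H_2}\colon H_2\to H_2$ is a bijection satisfying \eqref{ABAB} on $H_2$, so by Theorem \ref{ftcgelem} it extends to an affine automorphism $\psi$ of $\oH$. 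The maps $\varphi$ and $\psi$ agree on $H_2$, which is a nonempty open subset of the connected open set $\oH$, so Theorem \ref{identity} gives $\varphi=\psi$ on all of $\oH$. Hence $\varphi$ itself is affine.

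For the first assertion, let $\varphi$ be an arbitrary automorphism of $\oH$, and set $A=\varphi(\oi)$. Using the same constructions employed in the proof of Lemma \ref{pocelo} (an inversion, a translation, and possibly an additional inversion), one produces a standard automorphism $\tau$ of $\oH$ satisfying $\tau(A)=\oi$. Then $\tau\circ\varphi$ is an automorphism of $\oH$ with $(\tau\circ\varphi)(\oi)=\oi$, so by the second assertion it is affine, and in particular standard. Consequently $\varphi=\tau^{-1}\circ(\tau\circ\varphi)$ is a composition of standard automorphisms, hence standard. The main content is really the ``if'' half of the equivalence, but since Theorem \ref{ftcgelem} has already been established this amounts to no more than invoking the identity-type theorem on the open set $H_2\subset\oH$; no further computation is required.
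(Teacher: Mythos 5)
Your proof is correct. The second assertion is handled exactly as in the paper: from $\varphi(\oi)=\oi$ one gets $\varphi(H_2)=H_2$ via $\mathcal{C}_{\oi}=\oH\setminus H_2$, then Theorem \ref{ftcgelem} supplies an affine extension and Theorem \ref{identity} (applied with $\mathcal{A}=\oH$, $\mathcal{D}=H_2$) forces agreement everywhere. For the first assertion, however, you take a genuinely different route. The paper argues directly: the triple $0,(1/2)I,I$ is timelike (Example \ref{IJ}), an automorphism preserves the timelike/spacelike dichotomy, so $\varphi(0),\varphi((1/2)I),\varphi(I)$ is timelike and Theorem A together with Theorem \ref{identity} gives standardness at once. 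You instead reduce the first assertion to the second: pick a standard automorphism $\tau$ with $\tau(\varphi(\oi))=\oi$ (the construction inside the proof of Lemma \ref{pocelo} covers every point of $\oH$, not just the $d=2$ situation, so this is legitimate), observe that $\tau\circ\varphi$ fixes $\oi$ and is therefore affine, and conclude using the group property of standard automorphisms. Both arguments ultimately rest on Theorem A (yours through Theorem \ref{ftcgelem}), and both are available at this point in the paper, so there is no circularity. What your version buys is that the timelike-triple machinery is invoked only once, inside Theorem \ref{ftcgelem}; what the paper's version buys is that the first assertion is proved independently of the second, rather than as its corollary.
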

\begin{proof}
Let $\varphi$ be an automorphism of $\oH$. 
Since the triple $0, I, (1/2)I$ is timelike (Example \ref{IJ}), we see that the triple $\varphi(0), \varphi(I), \varphi((1/2)I)$ is also timelike.
Thus Theorem A together with Theorem \ref{identity} implies that $\varphi$ is standard.
If in addition $\varphi(\oi)=\oi$, then we get $\varphi(H_2)=H_2$. 
Thus Theorems \ref{ftcgelem} and \ref{identity} show that $\varphi$ is an affine automorphism.
Conversely, it is clear that an affine automorphism $\varphi$ satisfies $\varphi ({\oi}) = {\oi}$. 
\end{proof}

Therefore, we no longer need to make the distinction between an automorphism and a standard automorphism.

\begin{corollary}
Let $A, B\in H_2$. 
An automorphism $\varphi$ of $\oH$ satisfies $\varphi(A)=B$ if and only if there is an affine automorphism $\psi$ satisfying 
\[
\varphi(X)=  \psi((X-A)^{-1})^{-1}+B
\]
for every $X\in \oH$.
\end{corollary}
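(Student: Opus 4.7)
The plan is to interpret the formula as a canonical factorization: the map $X\mapsto (X-A)^{-1}$ sends $A$ to $\oi$, while $Y\mapsto Y^{-1}+B$ sends $\oi$ back to $B$. Any automorphism $\varphi$ with $\varphi(A)=B$ should, when conjugated between these two maps, give an automorphism of $\oH$ that fixes $\oi$, and by Corollary \ref{corp} such an automorphism is precisely an affine automorphism.

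For the ``if'' direction, I would observe that translations, inversion, and affine automorphisms are all automorphisms of $\oH$ (by the discussion in Subsection \ref{automorphism} and Claim \ref{cl}), so $X\mapsto \psi((X-A)^{-1})^{-1}+B$ is an automorphism as a composition of such. Substituting $X=A$ gives $(X-A)^{-1}=0^{-1}=\oi$; since $\psi$ is affine, Corollary \ref{corp} yields $\psi(\oi)=\oi$, and finally $\oi^{-1}+B=0+B=B$, so $\varphi(A)=B$.

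For the ``only if'' direction, given an automorphism $\varphi$ with $\varphi(A)=B$, I would define
\[
\psi(Y):=\bigl(\varphi(Y^{-1}+A)-B\bigr)^{-1},\qquad Y\in\oH.
\]
This is the composition of the automorphisms $Y\mapsto Y^{-1}$, $Y\mapsto Y+A$, $\varphi$, $Y\mapsto Y-B$, and $Y\mapsto Y^{-1}$, hence an automorphism of $\oH$. Computing $\psi(\oi)=(\varphi(0+A)-B)^{-1}=(B-B)^{-1}=0^{-1}=\oi$, Corollary \ref{corp} gives that $\psi$ is affine. Substituting $Y=(X-A)^{-1}$ and using that inversion is its own inverse (so $((X-A)^{-1})^{-1}=X-A$, valid throughout $\oH$ by the definition of the extended inverse) gives $\psi((X-A)^{-1})=(\varphi(X)-B)^{-1}$, whence $\psi((X-A)^{-1})^{-1}+B=\varphi(X)$, as required.

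There is no real obstacle here; the only point requiring care is bookkeeping with the extended inversion and addition on $\oH$, specifically that $(Y^{-1})^{-1}=Y$ and that $0^{-1}=\oi$, $\oi^{-1}=0$, which is built into the definition of $Y^{-1}$ in Subsection \ref{automorphism}. The identification of affine automorphisms with those automorphisms that fix $\oi$, supplied by Corollary \ref{corp}, is what makes this characterization possible.
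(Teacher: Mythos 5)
Your proposal is correct and follows essentially the same route as the paper: the paper likewise forms $\psi=\varphi_2\circ\varphi\circ\varphi_1$ with $\varphi_1(Y)=Y^{-1}+A$ and $\varphi_2(Y)=(Y-B)^{-1}$, checks $\psi(\oi)=\oi$, and invokes Corollary \ref{corp} to conclude $\psi$ is affine, then inverts the conjugation to recover the stated formula. The only difference is cosmetic: you write $\psi$ by an explicit formula rather than as a named composition.
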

\begin{proof}
If $\psi$ is an affine automorphism, then
\[
\psi((A-A)^{-1})^{-1}+B=\psi(\oi)^{-1}+B=\oi^{-1}+B=B.
\]
Conversely, assume that an automorphism $\varphi$ of $\oH$ satisfies $\varphi(A)=B$.
Define two automorphisms $\varphi_1, \varphi_2$ by $\varphi_1(X):=X^{-1} +A$, $\varphi_2(X):=(X-B)^{-1}$, $X\in \oH$. 
Then $\psi:=\varphi_2\circ\varphi\circ\varphi_1$ is an automorphism. 
By $\varphi(A)=B$, we get $\psi(\oi)=\oi$, or equivalently, $\psi$ is an affine automorphism. 
We get
\[
\varphi(X)= \varphi_2^{-1}\circ\psi\circ\varphi_1^{-1}(X) = \psi((X-A)^{-1})^{-1}+B
\]
for every $X\in \oH$, as desired.
\end{proof}
In particular, if $A=B=0$, then $\varphi$ is of the form $\varphi(X)= \psi(X^{-1})^{-1}$.
We may also see that a general automorphism can be expressed as the composition of only a few affine automorphisms and inversions. 
To show this, let $\varphi$ be an automorphism of $\oH$. 
From the fact that the cone with vertex $\oi$ is mapped by $\varphi$ onto some cone, it is easy to see that there is $A\in H_2$ such that $B:=\varphi(A)\in H_2$. 
Thus the preceding corollary shows that $\varphi$ is expressed as the composition of two inversions and three affine automorphisms (two of which are translations).


\section{Degenerate coherency preservers}\label{non}
\subsection{Two types of degenerate coherency preservers}
We first give some examples of coherency preservers that are not standard. 
This is to motivate the definition of degenerate coherency preservers.
Those who want to get to the main result immediately may look at Definitions \ref{degeneratefirst}, \ref{degeneratesecond}, \ref{degspl}, and Theorem B, then skip to Subsection \ref{applici}.

For $n\in \{0,1,2\}$, let $H_2^n$ \index{$H_2^n$} denote the set of matrices in $H_2$ of rank $n$.
Observe that $H_2^2$ is the set of all invertible ($=$ nonsingular) matrices in $H_2$.

\begin{example}
Obviously, any mapping from $\mathcal{A}\subset \oH$ into a line $\ell$ is a coherency preserver. 
As a special case, let $P$ be any projection of rank one, and define $\varphi\colon H_2\to \oH$ by
\[
\varphi (A) = (\mathrm{tr}\, A) P, \ \ \ A \in H_2.
\]
This is a coherency preserver, and it is of interest to observe that $\varphi$ is linear. 
\end{example}

\begin{example}\label{exC}
Another simple example is the map $\varphi \colon H_2 \to \oH$ defined by
\[
\varphi (A) = A, \ \ A \in H_2\setminus H_2^2
\]
and 
\begin{equation}\label{H22}
\varphi (A) = 0,\ \ A \in H_2^2.
\end{equation}
A slight modification gives a much ``wilder" example. We  choose any function $\eta \colon \mathcal{P} \to \mathcal{P}$ and for every rank one projection $P$
we choose a function $f_P \colon \mathbb{R} \to {\oR}$ satisfying $f_P (0) = 0$. Define $\varphi\colon H_2\to \oH$ by
\[
\varphi (tP) = f_P (t) \eta (P),\ \ (t,P)\in \mathbb{R}\times \cP
\]
and \eqref{H22}.
It is easily seen from Lemma \ref{pq} that any such $\varphi$ preserves coherency. 
If $\eta$ and all functions $f_P$ are chosen to be surjective, then the image of $\varphi$ is the whole cone $\mathcal{C}_0$.
\end{example}
Observe that the range of a coherency preserver in the above example is contained in one cone. 
We will give the general form of such coherency preservers in Proposition \ref{genC}.
After some more reflection, one may encounter an example of a coherency preserver whose range slightly protrudes one cone. 
Here is such an example: 

\begin{example}\label{exoh}
Let $f\colon \mathcal{P}\to \mathcal{P}$ be any mapping, and define $\varphi\colon \oH\to \oH$ by 
\[
\varphi(0)=0,\ \ \varphi(\oH\setminus\cC_0)=\{I\},
\]
and 
\[
\varphi(tP)=f(P), \ \ P\in \cP,\ \ t\in \oR\setminus\{0\}.
\] 
It is easy and left to the reader to show that $\varphi$ is a coherency preserver.
Observe that $\varphi(\oH)=f(\cP)\cup\{0, I\}$ is not necessarily contained in one cone.
\end{example}
Note that $\varphi(\oH\setminus \{0\})=f(\cP)\cup\{I\}$ is contained in the cone $\mathcal{C}_I$, in this example. 
Let us give another example, which is seemingly more complicated. 
\begin{example}\label{varphi1}
We denote 
by $\mathcal{L} \subset H_{2}^1$ the set of all rank one matrices that are of the form
\[
t \,  \left[ \begin{matrix} s &  i\sqrt{ s (1-s)}  \cr -i \sqrt{ s (1-s)}  & 1 - s  \end{matrix} \right]
\]
for some nonzero real $t$ and some real $s$, $0 \le s \le 1$. 
We set
\[
\mathcal{K} = \left\{ \left[ \begin{matrix} 0 &  p  \cr p & 0  \end{matrix} \right] \ :  \, p \in \mathbb{R} \setminus \{ 0 \} \right\} \subset H_{2}^2.
\]
Let $g \colon \mathbb{R} \setminus \{ 0 \} \to {\oR}$ and $\eta\colon [0,1] \to \mathcal{P}$ be any maps. 
We define $\varphi_1 \colon H_2 \to \oH$ by
\[
\varphi_1 (0) = 0,\ \ \varphi_1 (H_{2}^1 \setminus \mathcal{L}) = \{E_{11}\}, \ \ \varphi_1 (H_{2}^2 \setminus \mathcal{K}) = \{I\},
\]
\[
\varphi_1 \left( t \,  \left[ \begin{matrix} s &  i\sqrt{ s (1-s)}  \cr -i \sqrt{ s (1-s)}  & 1 - s  \end{matrix} \right]  \right) = \eta (s),\ \ (t,s)\in (\mathbb{R} \setminus \{ 0 \})\times [0,1],
\]
and 
\[
\varphi_1 \left( \left[ \begin{matrix} 0 &  p  \cr p & 0  \end{matrix} \right] \right) = g(p) E_{22} + E_{11}, \ \ p\in \mathbb{R} \setminus \{ 0 \}.
\]
Then $\varphi_1$ is a coherency preserver.
\end{example}

\begin{proof}
Assume that $A,B \in H_2$ are coherent. Observe that rank one matrices are mapped to rank one matrices. Thus, if one of $A$ or $B$ is the zero matrix, then the other one is of rank at most one, and therefore $\varphi_1 (A) \sim \varphi_1 (B)$. The next case we will treat is the one where both $A$ and $B$ are of rank one, that is, $A=tP$ and $B=sP$ for some rank one projection $P$ and some nonzero scalars $t,s$.
Then clearly, $\varphi_1 (A) = \varphi_1 (B)$. We continue by treating the case where both $A$ and $B$ are invertible. Using the fact that $\varphi_1$ maps the set of invertible matrices into the line
\[
\ell = \left\{ tE_{22} + E_{11} \, :  \, t \in {\oR} \right\},
\]
we see that $\varphi_1 (A) \sim \varphi_1 (B)$ in this case, as well.

It remains to consider the case when one of $A,B$, say $A$, is of rank one, and $B$ is of rank two. Then clearly, $\varphi_1 (A) \in \mathcal{P}$. 
If $B \not\in \mathcal{K}$, then $\varphi_1 (B) = I$ and $\varphi_1 (A) \sim \varphi_1 (B)$.  If $B \in \mathcal{K}$, then Lemma \ref{jelensr} below implies $A\not\in \mathcal{L}$, which yields 
$\varphi_1 (A) = E_{11}$. Therefore, we have $\varphi_1 (A) \sim \varphi_1 (B)$ in this last case, too. 
\end{proof}

\begin{lemma}\label{jelensr}
Let $p,t$ be nonzero real numbers and let $s$ be a real number, $0 \le s \le 1$. Then the matrices
\[
A=t \,  \left[ \begin{matrix} s &  i\sqrt{ s (1-s)}  \cr -i \sqrt{ s (1-s)}  & 1 - s  \end{matrix} \right]\in \mathcal{L} \ \ \ \text{and} \ \ \  B=\left[ \begin{matrix} 0 &  p  \cr p  & 0  \end{matrix} \right]\in \mathcal{K}
\]
are not coherent.
\end{lemma}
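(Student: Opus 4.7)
The plan is essentially a direct computation of $\det(A-B)$, since coherency is equivalent to $\det(A-B)=0$. There is no conceptual obstacle here: the claim should drop out from the specific form of the $(1,2)$-entry of $B$ being purely real while the $(1,2)$-entry of $A$ is purely imaginary, so the off-diagonal contributions to the determinant decouple into a sum rather than cancelling.

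More concretely, I would write
\[
A-B = \left[ \begin{matrix} ts &  ti\sqrt{s(1-s)} - p  \cr -ti\sqrt{s(1-s)} - p  & t(1-s)  \end{matrix} \right]
\]
and compute
\[
\det(A-B) = t^2 s(1-s) - \bigl(ti\sqrt{s(1-s)} - p\bigr)\bigl(-ti\sqrt{s(1-s)} - p\bigr).
\]
Setting $u=ti\sqrt{s(1-s)}$, the second factor simplifies as $(u-p)(-u-p) = p^2 - u^2 = p^2 + t^2 s(1-s)$, so
\[
\det(A-B) = t^2 s(1-s) - p^2 - t^2 s(1-s) = -p^2.
\]
Since $p\neq 0$, we get $\det(A-B) = -p^2 \neq 0$, hence $\mathrm{rank}\,(A-B)=2$ and $A\not\sim B$.

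The only step that requires any care is keeping track of signs in the $(2,1)$-entry (it should be $-ti\sqrt{s(1-s)}-p$, not $-ti\sqrt{s(1-s)}+p$, because $B$ is symmetric with real off-diagonal $p$), but there is no genuine obstacle. The proof takes a few lines and uses nothing beyond the definition of coherency from Subsection~\ref{vs}.
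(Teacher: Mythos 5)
Your computation is correct: with $u=ti\sqrt{s(1-s)}$ one indeed gets $(u-p)(-u-p)=p^2-u^2=p^2+t^2s(1-s)$, so $\det(A-B)=-p^2\neq 0$, and since both matrices are hermitian and coherency of $2\times 2$ matrices means $\det(A-B)=0$, the conclusion follows. The paper reaches the same conclusion by a different, slightly slicker route: it writes $B-A=B(I-B^{-1}A)$, observes that $B^{-1}A$ has trace zero and rank one, hence is square-zero, so $I-B^{-1}A$ is invertible and therefore $B-A$ has rank two. The paper's argument avoids any determinant bookkeeping and isolates the structural reason the lemma holds (the purely imaginary off-diagonal of $A$ against the purely real off-diagonal of $B$ forces $\mathrm{tr}\,(B^{-1}A)=0$), whereas your direct computation is more transparent and self-contained, at the cost of the sign-tracking you correctly flag. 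Both are complete proofs of the same length; no gap in yours.
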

\begin{proof}
Observe that $A$ is of rank one while $B$ is invertible. We have
$B - A = B (I - B^{-1} A)$.
It is straightforward to verify that $B^{-1} A$ is a trace zero complex matrix and since it is of rank one, it is a square-zero matrix. Consequently, $I - B^{-1} A$ is invertible, which further yields that $B - A$ is of rank two, as desired.
\end{proof}

For $\varphi_1$ in Example \ref{varphi1}, we see that the image $\varphi_1(H_2\setminus \{0\})$ is contained in the cone $\mathcal{C}_I$. 
Thus, for each of the above examples of coherency preservers $\varphi\colon \mathcal{A}\to \oH$, there are $A\in \mathcal{A}$, $B\in \varphi(\mathcal{A})$ with the property $\varphi(\mathcal{A}\setminus \{A\})\subset \mathcal{C}_B$. 
We give a name to this class of coherency preservers.
\begin{definition}\label{degeneratefirst}
A coherency preserver $\varphi \colon \mathcal{A} \to \oH$ defined on a subset $\mathcal{A}$ of $\oH$ is of \emph{type $(\mathcal{C})$} \index{coherency preserver of type $(\mathcal{C})$} if there exist $A\in \mathcal{A}$ and $B \in \varphi(\mathcal{A})$  such that $\varphi (\mathcal{A}\setminus \{A\}) \subset \mathcal{C}_B$.
\end{definition}

Is every non-standard coherency preserver of type $(\mathcal{C})$? The answer is no.

\begin{example}\label{varphi2}
Let $f\colon \mathbb{R} \to {\oR}$ be any function.
Let us define a map $\varphi_2 \colon H_2 \to \oH$ in the following way. 
We start by defining 
\[
\varphi_2 (tE_{11}) = f(t) E_{11}
\]
for every $t \in \mathbb{R}$. 
If a matrix $A\in H_2\setminus\ell_{0, E_{11}}$ is coherent to  $tE_{11}$ for some real number $t$, then such a $t$ is uniquely determined.
In this case, we set
\[
\varphi_2 (A) = E_ {22} + f(t) E_{11}.
\]
In the remaining case where $d(A, tE_{11}) = 2$ for all $t \in \mathbb{R}$ (equivalently, $A$ is of rank two and the $(2,2)$-entry of $A$ is zero), we define
\[
\varphi_2 (A) = E_{22}.
\]
Then $\varphi_2$ is a coherency preserving map.
\end{example}

\begin{proof}
We need to show that $\varphi_2 (A) \not\sim \varphi_2 (B)$ implies $A \not\sim B$. 
Observe that $\varphi(H_2\cap\ell_{0, E_{11}})\subset \ell_{0,E_{11}}$ and $\varphi(H_2\setminus\ell_{0, E_{11}})\subset \ell_{E_{22},I}$. 
Therefore, $\varphi_2 (A) \not\sim \varphi_2 (B)$ implies that one of the two matrices $A, B$ is in $H_2\cap\ell_{0, E_{11}}$ and the other is in $H_2\setminus\ell_{0, E_{11}}$. 
In that case, it is obvious from the definition of $\varphi_2$ that $\varphi_2 (A) \not\sim \varphi_2 (B)$ implies $A \not\sim B$.
\end{proof}

Observe that $\varphi_2$ defined in this manner is not necessarily of type $(\mathcal{C})$. 
However, it is a coherency preserver of type $(\ell)$ in the following sense.
\begin{definition}\label{degeneratesecond}
A coherency preserver $\varphi \colon \mathcal{A} \to \oH$ defined on a subset $\mathcal{A}$ of $\oH$ is of \emph{type $(\ell)$} \index{coherency preserver of type $(\ell)$}
if there are lines $\ell, \ell'$ such that $\varphi(\mathcal{A}\setminus \ell)\subset \ell'$.
\end{definition}

Note that the two types $(\mathcal{C})$, $(\ell)$ of coherency preservers are not disjoint. 
For example, any mapping  $\varphi\colon \mathcal{A}\to \oH$ whose range is contained in a single line is a coherency preserver of both types $(\mathcal{C})$ and $(\ell)$. 
Note also that in both types the range of $\varphi$ is contained in a ``small" subset of $\oH$. 
Indeed, if $\varphi$ is of type $(\mathcal{C})$, then its range is contained in a union of a point and a cone; and if $\varphi$ is of type $(\ell)$, then its range is contained in a union of two lines.

It will turn out that the two types of coherency preservers play a dominant role in the class of non-standard coherency preservers, so let us give a name to the union of these two classes.

\begin{definition}\label{degspl}
A coherency preserver $\varphi \colon \mathcal{A} \to \oH$ defined on a subset $\mathcal{A}$ of $\oH$ is \emph{degenerate} \index{degenerate coherency preserver} if it is of type $(\mathcal{C})$ or of type $(\ell)$.
\end{definition}

Here we present the key result of the current section. 
\begin{theoremb}\index{Theorem B}
Let $A, B\in H_2$ satisfy $A<B$ and $\varphi\colon [A, B]\to \oH$ be a coherency preserver that is not standard. Then $\varphi$ is degenerate. 
\end{theoremb}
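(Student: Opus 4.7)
The plan is to argue by contrapositive: assuming $\varphi$ is not degenerate, I will exhibit a strictly ordered triple $C_1 < C_2 < C_3$ inside the interior $(0,I)$ of the matrix interval such that $\varphi(C_1), \varphi(C_2), \varphi(C_3)$ is timelike in $\oH$. By Lemma \ref{Porder} I may reduce to the normalised case $[A,B] = [0, I]$ without loss of generality. Since every strictly ordered triple $C_1 < C_2 < C_3$ in $(0,I)$ is intrinsically timelike in $\oH$ (by the remark following Corollary \ref{matrixtriple}), Theorem A applied to the open connected subset $(0,I) \subset \oH$ and the restriction $\varphi|_{(0,I)}$ will then produce a standard automorphism $\psi$ of $\oH$ agreeing with $\varphi$ on $(0,I)$; Corollary \ref{kiakia} will promote $\varphi = \psi$ from the interior to all of $[0,I]$, contradicting non-standardness. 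So the assumption that $\varphi$ is not standard is equivalent to requiring that every such interior triple has a non-timelike image --- meaning either two of the images are coherent (possibly coinciding), or the three images are pairwise at distance $2$ but in spacelike position, by Lemma \ref{threepoints}.

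I would split the analysis into three cases according to the relative position of $\varphi(0)$, $\varphi((1/2)I)$, and $\varphi(I)$. Case 1: $\varphi(0) \sim \varphi(I)$. Then by Lemma \ref{simpll} the unique line $\ell_{\varphi(0),\varphi(I)}$ contains every point coherent to both; combined with Lemma \ref{where} applied to each square ${\square}_P$, this pushes the image into either a single line (type $(\ell)$) or a single cone with one isolated exceptional point (type $(\mathcal{C})$). Case 2: $d(\varphi(0), \varphi(I)) = 2$ but the triple $\varphi(0), \varphi((1/2)I), \varphi(I)$ is not pairwise of distance $2$, so $\varphi((1/2)I)$ is coherent to $\varphi(0)$ or $\varphi(I)$; this constraint is propagated across $[0,I]$ via Lemma \ref{where} and the identity-type theorem (Theorem \ref{identity}) on sub-intervals where $\varphi$ is controlled. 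Case 3: the pairwise-distance-$2$ spacelike configuration, which must persist for ``generic'' triples. In each case, the essential tool is Lemma \ref{where}, which confines $\varphi({\square}_P)$ to a cone or a surface for each $P \in \cP$; the task is to argue that the only consistent family of such local choices, as $P$ varies over the connected Bloch sphere, forces the image of $\varphi$ into a single cone (plus one exceptional point) or into a union of two lines.

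The main obstacle is Case 3. Timelike triples are all equivalent under standard automorphisms (Lemma \ref{threepoints}), so the timelike constraints in Cases 1 and 2 are relatively rigid and the cone/line collapses drop out quickly. Spacelike triples form a richer moduli, and ensuring that the image collapses into two lines rather than spreading over a full surface requires fine control of how the surfaces predicted by Lemma \ref{where} glue across the Bloch sphere. I expect the explicit description of surfaces $\Pi_{P,Q}$ through pairs of coherent infinite points (Lemma \ref{inftysurface}), the triple-cone-intersection formula of Lemma \ref{nnjk}, and especially Corollary \ref{SJS} (which detects hidden timelike sub-triples inside apparently spacelike configurations) to be the key technical instruments. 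The subtle danger is that an apparently spacelike $\varphi$ might secretly contain a single interior triple whose image is timelike; such a triple would immediately force standardness via Theorem A, yielding a contradiction. Ruling out this possibility uniformly across all interior triples --- equivalently, showing that ``spacelike everywhere'' forces the image into a union of two lines --- is the heart of the three-case analysis.
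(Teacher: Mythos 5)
Your overall strategy is the right one and matches the paper's: the contrapositive of Theorem A tells you that if $\varphi$ is not standard then no ordered triple $C_1<C_2<C_3$ in the domain can have a timelike image triple, and the task is to show that this constraint forces the image into a cone plus an exceptional point or into a union of two lines. You have also correctly identified several of the right instruments (Lemma \ref{where}, Lemma \ref{nnjk}, Corollary \ref{SJS}, the surfaces $\Pi_{P,Q}$). However, what you have written is a plan rather than a proof, and the gap sits exactly where you yourself locate ``the heart of the three-case analysis'': you never actually show that the everywhere-non-timelike condition collapses the image. In the paper this is the content of Proposition \ref{proposition08}, whose proof occupies most of Section \ref{ci}. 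The first nontrivial step is the containment \eqref{lemma07}, namely that every image point lies either on one of the lines $\ell_{0,\varphi(P)}$, $\ell_{\varphi(P),I}$ or in $\bigcap_{P\in\cP}\cC_{\varphi(P)}$; the spacelike subcase of that step is precisely where Corollary \ref{SJS} and Lemma \ref{noauto} enter, and you only ``expect'' these to be key rather than using them. Beyond that, the bulk of the work is the analysis of the functions $f_P$ and $g_P$ recording $\Phi$ on the lines through $0$ and through $\oi$ (after an inversion), the compatibility relation \eqref{trace}, the dichotomy of Claim \ref{twolines}, and the technical Lemma \ref{134} controlling the possible non-injectivity of $P\mapsto P'$; none of this, nor any substitute for it, appears in your proposal.

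A second, independent gap: your Case 1 (where $\varphi(0)\sim\varphi(I)$, i.e.\ $d(\varphi(A),\varphi(B))\in\{0,1\}$) is dismissed as a quick collapse via Lemma \ref{simpll} and Lemma \ref{where}, but in the paper these are the cases of Propositions \ref{00II} and \ref{E11}, and they are not quick: they rely on the full five-type classification of Corollary \ref{cor08}, which is only available after the $d=2$ case has been completed, applied to carefully chosen subintervals such as $[0,A]$ and $[B,I]$, followed by a further case analysis (seven subcases when $\varphi(A)=\varphi(B)$). Since your outline never establishes that classification, the ``propagation'' you invoke in Cases 1 and 2 has nothing to propagate. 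In short, the skeleton is sound and consistent with the paper's architecture, but the proof is missing its body.
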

We will give its lengthy proof in Section \ref{ci}. 
Before that, we give applications of Theorem B in the next subsection.
It will turn out that a coherency preserver is either standard or degenerate if it is defined on a subset that looks like a matrix interval.

However, it is not true that a non-standard coherency preserver defined on, say, an open connected subset of $\oH$ is always degenerate in the sense of Definition \ref{degspl}. 
We give some examples.
Let $\mathcal{U}\subset H_2$ be the set formed of all matrices $X$ in $H_2$ such that the difference of the largest eigenvalue of $X$ and the smallest eigenvalue is less than $1$. Clearly, it is open. It is also connected. Indeed, let $A,B \in \mathcal{U}$ with eigenvalues $t_1 \le t_2$ and $s_1 \le s_2$, respectively. It is trivial to find paths contained in $\mathcal{U}$ from $A$ to $((t_1 + t_2)/2)I$,   from $((t_1 + t_2)/2)I$ to $((s_1 + s_2)/2)I$, and from $((s_1 + s_2)/2)I$ to $B$.

\begin{lemma}\label{neksej}
Let  $\mathcal{U}\subset H_2$ be as above.
Assume that $A, B\in \mathcal{U}$ are coherent. Then $|\mathrm{tr}\,(A-B)|<2$. 
\end{lemma}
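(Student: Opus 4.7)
The plan is to exploit the fact that for hermitian matrices the coherency relation $A\sim B$ with $A\neq B$ forces $A-B=\lambda P$ for a rank-one projection $P$ and a nonzero real $\lambda$, and then to note that $\lambda=\mathrm{tr}(A-B)$. The case $A=B$ is trivial, so I would focus on the nontrivial case and, after possibly swapping $A$ with $B$ (which only changes the sign of $\mathrm{tr}(A-B)$), assume $\lambda>0$. Then $B=A-\lambda P\le A$ in the Loewner order. This reduces the task to bounding $\lambda$.

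The key input is the standard interlacing for rank-one positive perturbations: writing $A=B+\lambda P$ with $\lambda>0$, and denoting the eigenvalues of $A$ by $t_1\le t_2$ and those of $B$ by $s_1\le s_2$, one has
\[
s_1\le t_1\le s_2\le t_2.
\]
The hypothesis $A,B\in\mathcal{U}$ supplies the two numerical gaps $t_2-t_1<1$ and $s_2-s_1<1$. Combining $t_1\le s_2$ with $s_2-s_1<1$ yields $0\le t_1-s_1\le s_2-s_1<1$, and combining $s_2\ge t_1$ with $t_2-t_1<1$ yields $0\le t_2-s_2\le t_2-t_1<1$. Adding and using $\mathrm{tr}(A-B)=(t_1-s_1)+(t_2-s_2)$ produces $0<\mathrm{tr}(A-B)<2$, which, together with the symmetric case, gives the claim.

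There is no real obstacle here; the only substantive ingredient is the rank-one interlacing inequality, which is a classical fact. If one wished to avoid invoking it by name, an equally short route would be to pick a unit vector $v$ orthogonal to the range of $P$: then $(A-B)v=0$ forces $(Av,v)=(Bv,v)$, so this common value lies simultaneously in $[t_1,t_2]$ and in $[s_1,s_2]$, whence $\max(t_1,s_1)\le\min(t_2,s_2)$. From $t_1\le s_2<s_1+1$ and $s_1\le t_2<t_1+1$ one reads off $|t_1-s_1|<1$, and analogously $|t_2-s_2|<1$; the triangle inequality applied to $\mathrm{tr}(A-B)=(t_1-s_1)+(t_2-s_2)$ again gives $|\mathrm{tr}(A-B)|<2$.
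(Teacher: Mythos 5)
Your proof is correct and rests on the same mechanism as the paper's: a coherent pair differs by $tP$ with $t=\mathrm{tr}(A-B)$, and evaluating the quadratic forms at a vector spanning the range of $P$ and at an orthogonal vector shows the bounded spreads of $A$ and $B$ force $|t|<2$. The paper's proof is essentially your ``alternative route'' (it derives the needed estimate directly from $(Bx,x)\ge p+t$ and $(By,y)\le q$ rather than citing rank-one interlacing by name), so both of your variants are fine.
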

\begin{proof}
We have $B = A + tP$ for some real number $t$ and some rank one projection $P$. Assume $t \ge 0$.
 Then $\mathrm{tr}\, A \le \mathrm{tr}\, B$. We need to verify that $t < 2$. Let $p \le q$ be eigenvalues of $A$, $x\in \mathbb{C}^2$ a unit vector spanning the image of $P$ and $y$ a unit vector orthogonal to $x$. We have
\[
(Bx,x) = (Ax,x) +t(Px,x) \ge p + t
\]
and
\[
(By,y) = (Ay,y) +t(Py,y) \le q 
\]
and therefore, the difference of the larger eigenvalue of $B$ and the smaller eigenvalue of $B$ is no smaller than $(p+t) - q = t - (q-p) > t -1$. It follows that $1 > t-1$, as desired. 
If $t\leq 0$, then exchange the roles of $A$ and $B$.
\end{proof}

We first give an easy example.
\begin{example}
Let $a > 3$ be a real number.
Let $A_n\in \oH$, $n\in \mathbb{Z}$, satisfy $A_n\sim A_{n+1}$ for every $n$.
We define $\psi\colon \mathcal{U}\to \oH$ by $\psi(X)=A_n$ for $X\in \mathcal{U}$ with $an\leq\mathrm{tr}\, X<a(n+1)$, $n\in \mathbb{Z}$. 
Since every pair $X, Y\in \mathcal{U}$ with $X\sim Y$ satisfies $|\mathrm{tr}\,(X-Y)|<2$, it is easy to see that $\psi$ is a coherency preserver.
\end{example}

Next we give a more involved example that mixes coherency preservers of the two types $(\ell)$ and $(\mathcal{C})$.

\begin{example}
Let $\varphi_1 \colon H_2 \to \oH$ be a coherency preserver of type $(\mathcal{C})$ given in Example \ref{varphi1}. 
Let $\varphi_2 \colon H_2 \to \oH$ be a coherency preserver of type $(\ell)$ given in Example \ref{varphi2}. 
Let $\varphi_3 \colon H_2\to \mathcal{C}_0$ be any coherency preserver of type $(\mathcal{C})$. 
Assume that $a \in \mathbb{R}$, $a > 3$.
We define $\varphi\colon \mathcal{U}\to \oH$ by
\[
\varphi(X) =\begin{cases}\varphi_2(X+aI)-E_{22},& \mathrm{tr}\, X<-a,\\ \varphi_1(X)-I, & -a\leq \mathrm{tr}\, X<a,\\\varphi_3(X), & a\leq \mathrm{tr}\, X .\end{cases}
\]
Then $\varphi$ is a coherency preserver. 
\end{example}
\begin{proof}
Recall that $\varphi_1(X)=I$ if $X>0$ or $X<0$. 
Thus we have $\varphi(X)=0$ for $0< X\in \mathcal{U}$ with $\mathrm{tr}\,X< a$, and for $0>X\in \mathcal{U}$ with $\mathrm{tr}\,X \geq -a$. 

Assume that $X, Y\in \mathcal{U}$ are coherent. 
We show that $\varphi(X)$ is coherent to $\varphi(Y)$.
Since $Y-X$ is of rank one, we see that $X\leq Y$ or $Y\leq X$ holds. 
We may assume $X\leq Y$ without loss of generality. 
By Lemma \ref{neksej}, we get $|\mathrm{tr}\,(X-Y)|<2$. 
Obviously, one of the following holds: 
\begin{itemize}
\item The set $\{\mathrm{tr}\,X, \mathrm{tr}\,Y\}$ is contained either in $(-\infty, -a)$,  $[-a, a)$, or  $[a,\infty)$. 
\item $\mathrm{tr}\,X<-a\leq \mathrm{tr}\,Y$.
\item $\mathrm{tr}\,X<a\leq \mathrm{tr}\,Y$.
\end{itemize}

In the first case, it is clear that $\varphi(X)\sim \varphi(Y)$. 

Assume $\mathrm{tr}\,X<-a\leq \mathrm{tr}\,Y$. Then we have  $\mathrm{tr}\,Y < \mathrm{tr}\,X + 2 <-a+2<-1$. By the definition of $\mathcal{U}$, we see that $Y<0$, and we obtain $\varphi(Y)=\varphi_1(Y)-I =0$. 
On the other hand, we know that $\mathrm{tr}\,X> \mathrm{tr}\,Y -2 \ge -a-2$ and thus $\mathrm{tr}\,(X+aI)>a-2>1$, which leads to $X+aI>0$. From the definition of $\varphi_2$, we have $\varphi(X)=\varphi_2(X+aI)-E_{22}\in \{sE_{11}\,:\, s\in {\oR}\}$. Therefore, $\varphi(Y)=0\sim \varphi(X)$. 

If $\mathrm{tr}\,X<a\leq \mathrm{tr}\,Y$, a similar argument shows that $X>0$ and thus $\varphi(X)=\varphi_1(X)-I=0$. Since $\varphi(Y)=\varphi_3(Y)\in\mathcal{C}_0$, we have $\varphi(X)\sim\varphi(Y)$, as desired. 
\end{proof}

Combining ideas in the above two examples together, one may construct various coherency preservers involving (possibly infinitely) many degenerate coherency preservers.
One may also construct similar examples of coherency preservers on, say, the open connected subset $\{X\in H_2\,:\, -1<\mathrm{tr}\,X<1\}\subset H_2$.

\subsection{Applications of Theorem B}\label{applici}
In this subsection, we give several results assuming that Theorem B holds true. 
The proof of Theorem B will be given later in Section \ref{ci}.
Recall that $\oH$ is endowed with the topology that comes from the identification with $U_2$ as in Subsection \ref{oHU2}. 

\begin{lemma}\label{ccc}
Let $A_n\in \oH$, $n\geq 1$, be a sequence that converges to $A$ in the compact space $\oH$. 
Then 
\[
\bigcap_{N\geq 1}\bigcup_{n\geq N} \mathcal{C}_{A_n}\subset \cC_A.
\]
\end{lemma}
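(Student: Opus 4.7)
The plan is to unpack what membership in $\bigcap_{N\geq 1}\bigcup_{n\geq N}\mathcal{C}_{A_n}$ means, and then reduce to the observation that the coherency relation $\sim$ on $\oH$ is closed in the product topology.

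First I would note that $B\in \bigcap_{N\geq 1}\bigcup_{n\geq N}\mathcal{C}_{A_n}$ if and only if $B\sim A_n$ for infinitely many $n$. Choose such a subsequence $A_{n_k}$; then $A_{n_k}\to A$ as well because $(A_n)$ converges, so the statement reduces to showing that if $A_{n_k}\sim B$ and $A_{n_k}\to A$, then $A\sim B$. In other words, it suffices to verify that the subset $\{(X,Y)\in\oH\times\oH\,:\, X\sim Y\}$ is closed.

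To prove closedness, I would transfer the problem to $U_2$ via the Cayley transform $f$ of Subsection \ref{oHU2}. By definition, the topology on $\oH$ is the one making $f\colon \oH\to U_2$ a homeomorphism, and $f$ preserves coherency in both directions. On $U_2$, the coherency relation is
\[
U\sim V \iff \mathrm{rank}\,(U-V)\leq 1 \iff \det(U-V)=0,
\]
and since $\det$ is a continuous function on the space of $2\times 2$ complex matrices, the set $\{(U,V)\in U_2\times U_2\,:\, \det(U-V)=0\}$ is closed in $U_2\times U_2$. Pulling back along $f\times f$, the coherency relation on $\oH$ is closed, which is exactly what is needed.

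There is no real obstacle here; the only thing to be slightly careful about is distinguishing ``$B$ lies in the union for every $N$'' from ``$B\sim A_n$ eventually''—the statement only gives the former, hence the passage to a subsequence. Once that is noted, the closedness argument through $U_2$ finishes the proof in a line.
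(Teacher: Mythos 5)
Your proof is correct and follows essentially the same route as the paper: pass to $U_2$ via the Cayley transform, observe that membership in the intersection means $B\sim A_n$ for infinitely many $n$, and conclude by the closedness of the rank-at-most-one condition. Your explicit use of the continuity of $\det$ just makes precise the step the paper leaves implicit.
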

\begin{proof}
We may consider $U_2$ instead of $\oH$. 
Thus, let us think of $A_n$ and $A$ as elements in $U_2$.
The condition
\[
B\in \bigcap_{N\geq 1}\bigcup_{n\geq N} \mathcal{C}_{A_n}
\] 
means that $B\sim A_n$ for infinitely many $n$. 
Recall that $B\sim A_n$ in $U_2$ means that the rank of $B-A_n$ is at most $1$. 
Since $A_n\to A$, it follows that the rank of $B-A$ is at most $1$, too. 
\end{proof}

\begin{lemma}\label{elln}
Let $\ell_n\subset \oH$, $n\geq 1$, be a sequence of lines. 
Then there is a subsequence $\ell_{n_k}$ and a line $\ell$ satisfying
\[
\bigcap_{N\geq 1}\bigcup_{k\geq N} \ell_{n_k}\subset \ell.
\]
\end{lemma}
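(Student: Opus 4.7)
The plan is to endow $\oH$ with the metric coming from the identification $\oH\cong U_2$ (Subsection \ref{oHU2}) and exploit compactness. First observe that every line in $\oH$ is compact: by Lemma \ref{jojhcer} a line is a continuous image of $\oR\cong S^1$, and equivalently each line $\ell_{A,B}$ with $A\sim B$, $A\ne B$ equals the closed intersection $\mathcal{C}_A\cap \mathcal{C}_B$ (Lemma \ref{simpll}). Consequently, for each $n$ I can pick $X_n,Y_n\in \ell_n$ realizing the diameter $d_n:=\operatorname{diam}(\ell_n)$. Since $\oH$ is compact and the $d_n$ are uniformly bounded, I pass to a subsequence $n_k$ so that $X_{n_k}\to X$, $Y_{n_k}\to Y$, and $d_{n_k}\to d$ for some $d\ge 0$.

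If $d=0$, any $Z\in \bigcap_{N\ge 1}\bigcup_{k\ge N}\ell_{n_k}$ belongs to $\ell_{n_k}$ for infinitely many $k$, and along that sub-subsequence $d(Z,X_{n_k})\le d_{n_k}\to 0$, forcing $Z=X$. Hence $\bigcap_{N\ge 1}\bigcup_{k\ge N}\ell_{n_k}\subset\{X\}$, and any line through $X$ yields the conclusion. If $d>0$, then $X\ne Y$, and the closedness of the rank-at-most-one condition on $M_2(\mathbb{C})$ (the same ingredient used in Lemma \ref{ccc}, here applied with both arguments varying) combined with $X_{n_k}\sim Y_{n_k}$ forces $X\sim Y$. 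Setting $\ell:=\ell_{X,Y}$, which is the unique line through $X$ and $Y$ by Lemma \ref{simpll}, for any $Z\in \bigcap_{N\ge 1}\bigcup_{k\ge N}\ell_{n_k}$ I get $Z\sim X_{n_k}$ and $Z\sim Y_{n_k}$ for infinitely many $k$, so Lemma \ref{ccc} applied twice gives $Z\in \mathcal{C}_X\cap \mathcal{C}_Y=\ell_{X,Y}=\ell$.

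The main technical point is the closedness of coherency in the case where both arguments vary simultaneously; this is immediate from the fact that $\{M\in M_2(\mathbb{C}):\operatorname{rank} M\le 1\}$ is a closed (algebraic) subset of $M_2(\mathbb{C})$, exactly the observation used in the proof of Lemma \ref{ccc}. Everything else is routine compactness.
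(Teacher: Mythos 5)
Your proof is correct and follows essentially the same strategy as the paper: choose two points on each line, pass to convergent subsequences, verify the two limits are distinct and coherent via the closedness of the rank-at-most-one condition, and conclude with Lemma \ref{ccc} that the limsup lies in $\mathcal{C}_X\cap\mathcal{C}_Y=\ell_{X,Y}$. The only (cosmetic) difference is how distinctness of the limits is secured: you use diameter-realizing pairs, whereas the paper normalizes the first limit to $0$ by an automorphism and picks the second point of each line outside a fixed neighborhood of $0$.
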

\begin{proof}
Take any $A_n\in \ell_n$ for each $n$. 
By passing to a subsequence, we may assume that $A_n$ converges to some point $A\in \oH$ in the compact space $\oH$. 
We may also assume $A=0$ without loss of generality (use Lemma \ref{srtsrt}). 
Set $\mathcal{U}=\{X\in H_2\,:\, \| X\| <1\}$, which is an open neighborhood of $A=0$. 
It is easy to see that we may take $B_n\in \ell_n\setminus \mathcal{U}$ for each $n$. 
By passing to a subsequence, we may assume that $B_n$ converges to some element $B\in \oH\setminus \mathcal{U}$. 
Since $A_n\sim B_n$ for each $n$, we have $A\sim B$ (to see this, consider the corresponding points in $U_2$). 
If 
\[
X\in \bigcap_{N\geq 1}\bigcup_{n\geq N} \ell_{n},
\]
then we have $A_n\sim X\sim B_n$ for infinitely many $n$. 
Since $A_n\to A$ and $B_n\to B$, we have $X\in \ell_{A,B}$.
Thus we obtain the desired conclusion.
\end{proof}

\begin{lemma}\label{gap}
Let $\mathcal{A}\subset H_2$ be a subset and $\varphi\colon \mathcal{A}\to \oH$ a coherency preserver. 
Assume that the following condition holds: For any pair $X, Y\in \mathcal{A}$, there are $X_1=X, X_2, \ldots, X_n=Y\in \mathcal{A}$ such that $X_1\sim X_2\sim \cdots\sim X_n$.
(By lemma \ref{path}, this condition is satisfied whenever $\mathcal{A}$ is open and connected.)
If $A\in \oH$ satisfies $\varphi(\mathcal{A}) \subset \cC_A$, then either $A\in \varphi(\mathcal{A})$, or there is a line $\ell$ passing through $A$ with $\varphi(\mathcal{A}) \subset \ell$.
\end{lemma}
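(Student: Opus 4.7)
The plan is to argue by contradiction on the alternative: assume $A \notin \varphi(\mathcal{A})$, and show that there is a single line $\ell$ through $A$ containing all of $\varphi(\mathcal{A})$. Once $A$ is excluded from the image, every $\varphi(X)$ is a point in $\cC_A \setminus \{A\}$, so $d(A, \varphi(X)) = 1$, and $A$ together with $\varphi(X)$ uniquely determines a line by Lemma \ref{simpll}. The goal becomes: this line does not depend on the choice of $X$.

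First, I would fix some $X_0 \in \mathcal{A}$ and let $\ell := \ell_{A, \varphi(X_0)}$. The claim to prove is that $\varphi(Y) \in \ell$ for every $Y \in \mathcal{A}$. Using the chain hypothesis on $\mathcal{A}$, pick a sequence $X_0 = Z_1 \sim Z_2 \sim \cdots \sim Z_n = Y$ inside $\mathcal{A}$, which produces a coherent chain $\varphi(Z_1), \ldots, \varphi(Z_n)$ in $\cC_A$. I would induct on $i$ to show $\varphi(Z_i) \in \ell$. The base case is trivial.

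The inductive step is the heart of the argument and boils down to the following local fact: if $B, C \in \cC_A \setminus \{A\}$ are coherent and distinct, then $A$ lies on the unique line $\ell_{B,C}$. This is immediate from Lemma \ref{simplll}: were $A$ outside $\ell_{B,C}$, there would be at most one point on $\ell_{B,C}$ coherent to $A$, contradicting that both $B$ and $C$ are such points. Applying this with $B = \varphi(Z_i)$ and $C = \varphi(Z_{i+1})$ (both unequal to $A$ by the contradiction hypothesis, and both coherent to $A$), one gets $A \in \ell_{\varphi(Z_i), \varphi(Z_{i+1})}$. Since by induction $\varphi(Z_i) \in \ell$ and $\varphi(Z_i) \neq A$, the line through $A$ and $\varphi(Z_i)$ is uniquely $\ell$, so $\ell_{\varphi(Z_i), \varphi(Z_{i+1})} = \ell$, and therefore $\varphi(Z_{i+1}) \in \ell$. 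The case $\varphi(Z_{i+1}) = \varphi(Z_i)$ is of course automatic.

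I do not expect a genuine obstacle here: the chain hypothesis propagates membership in $\ell$ along one coherency step at a time, and the only place a subtle issue could arise is the possibility that $\varphi(Z_i) = A$ somewhere along the chain, which is excluded precisely by the contradiction hypothesis $A \notin \varphi(\mathcal{A})$. The key technical input is Lemma \ref{simplll} (uniqueness of the coherent point on a line), together with the uniqueness of the line through two coherent points (Lemma \ref{simpll}).
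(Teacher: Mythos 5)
Your proof is correct and is essentially the paper's argument in abstract clothing: the paper normalizes $A=0$ and invokes Lemma \ref{pq} ($aP\sim bQ$ with $a,b\neq 0$ forces $P=Q$), which is exactly the concrete form of your ``local fact'' that two distinct coherent points of $\cC_A\setminus\{A\}$ lie on a common line through $A$, propagated along the chain. No gap.
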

\begin{proof}
We may assume $A=0$ without loss of generality.
The desired conclusion readily follows from Lemma \ref{pq}.
\end{proof}

Let $A, B \in H_2$ with $A < B$. Any of the sets $(A,B)$, $\{C \in H_2 \, : \, C > A \}$, $\{ C \in H_2 \, :\, C < A \}$, and $H_2$ will be called an \emph{open interval} \index{open interval} in $H_2$. 

\begin{theorem}\label{degenerate}
Let $\mathcal{U}$ be an open interval in $H_2$ and $\varphi\colon \mathcal{U}\to \oH$ a coherency preserver.  Then $\varphi$ is either standard or degenerate. 
\end{theorem}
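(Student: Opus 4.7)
The plan is to reduce to Theorem B on closed matrix sub-intervals of $\mathcal{U}$ and then pass to limits. First I would choose an exhausting increasing sequence of closed matrix intervals $[A_n, B_n] \subset \mathcal{U}$ with $A_n < B_n$ and $\bigcup_n [A_n, B_n] = \mathcal{U}$: when $\mathcal{U} = (A,B)$ take $A_n = A + n^{-1} I$, $B_n = B - n^{-1} I$ for large $n$; the other three shapes of open interval are handled analogously. Any $Z \in \mathcal{U}$ then lies in $[A_n, B_n]$ for all sufficiently large $n$.

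Suppose some restriction $\varphi|_{[A_n, B_n]}$ is standard, so it agrees with a standard automorphism $\psi$ of $\oH$. Since $(A_n, B_n)$ is a nonempty open subset of $\mathcal{U}$ and $\mathcal{U}$ is open and connected (being convex in $H_2$), Theorem \ref{identity} forces $\varphi = \psi|_{\mathcal{U}}$, so $\varphi$ is standard. From now on, assume $\varphi|_{[A_n, B_n]}$ is non-standard, hence degenerate by Theorem B, for every $n$. Passing to a subsequence, I may arrange that either all restrictions are of type $(\mathcal{C})$, or all are of type $(\ell)$.

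In the first case, pick $X_n \in [A_n, B_n]$ and $Y_n \in \oH$ with $\varphi([A_n, B_n] \setminus \{X_n\}) \subset \mathcal{C}_{Y_n}$. Since $\oH$ is compact (it is identified with $U_2$), I refine the subsequence so that $X_n \to X$ and $Y_n \to Y$ in $\oH$. For every $Z \in \mathcal{U}$ with $Z \neq X$, eventually $Z \in [A_n, B_n]$ and $Z \neq X_n$, so $\varphi(Z) \in \mathcal{C}_{Y_n}$ for all large $n$, and Lemma \ref{ccc} yields $\varphi(Z) \in \mathcal{C}_Y$. Hence $\varphi(\mathcal{U} \setminus \{X\}) \subset \mathcal{C}_Y$ (this set equals $\mathcal{U}$ if $X \notin \mathcal{U}$). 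The set $\mathcal{U} \setminus \{X\}$ is an open connected subset of $H_2$ (removing a point from a $4$-dimensional open connected set does not disconnect it), so Lemma \ref{path} applies and Lemma \ref{gap} produces two alternatives: either $Y \in \varphi(\mathcal{U} \setminus \{X\}) \subset \varphi(\mathcal{U})$, and then $\varphi$ is of type $(\mathcal{C})$; or $\varphi(\mathcal{U} \setminus \{X\})$ is contained in some line $\ell'$, in which case $\varphi$ is of type $(\ell)$ with excluded set $\{X\}$ contained in any line through $X$.

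In the second case, pick lines $\ell_n, \ell'_n$ with $\varphi([A_n, B_n] \setminus \ell_n) \subset \ell'_n$. Applying Lemma \ref{elln} twice and taking a common subsequence, I obtain lines $\ell, \ell'$ such that $\bigcap_N \bigcup_{n \geq N} \ell_n \subset \ell$ and $\bigcap_N \bigcup_{n \geq N} \ell'_n \subset \ell'$. Fix $Z \in \mathcal{U}$; then $Z \in [A_n, B_n]$ for all large $n$, and either $Z \in \ell_n$ holds for infinitely many $n$, in which case $Z \in \ell$, or else $\varphi(Z) \in \ell'_n$ for all large $n$, in which case $\varphi(Z) \in \ell'$. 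Thus $\varphi(\mathcal{U} \setminus \ell) \subset \ell'$, so $\varphi$ is of type $(\ell)$. The main obstacle I anticipate is handling the limit point $X$ when it falls outside $\mathcal{U}$ and ensuring the vertex $Y$ lies in $\varphi(\mathcal{U})$ (as required by the definition of type $(\mathcal{C})$); that is precisely where the dichotomy in Lemma \ref{gap}, together with the connectedness of $\mathcal{U}\setminus\{X\}$, is essential.
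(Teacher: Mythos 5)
Your proposal is correct and follows essentially the same route as the paper: exhaust $\mathcal{U}$ by closed matrix intervals, apply Theorem B and Theorem \ref{identity} to each restriction, and pass to convergent subsequences via Lemmas \ref{ccc}, \ref{elln}, and \ref{gap}. The only (harmless) deviations are that you invoke Lemma \ref{elln} for the excluded lines $\ell_n$ where the paper uses the elementary observation that the expanding sets $\bigcap_{n\geq N}\ell_n$ lie in a single line, and that in the subcase $Y\notin\varphi(\mathcal{U})$ you record the conclusion as type $(\ell)$ rather than type $(\mathcal{C})$ --- either way $\varphi$ is degenerate.
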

\begin{proof}
We may find $\{A_n\,:\, n\in \mathbb{Z}\}\subset H_2$ such that 
\[
\cdots<A_{-2}<A_{-1}<A_0<A_1<A_2<\cdots
\]
and $\mathcal{U}=\bigcup_{n\geq 1} [A_{-n}, A_n]$ hold. 
The restriction $\varphi_n\colon [A_{-n}, A_n]\to \oH$ of $\varphi$ to $[A_{-n}, A_n]$ is obviously a coherency preserver. 
Assume that $\varphi$ is not standard. 
Then Theorem \ref{identity} implies that for every $n\geq 1$ the map $\varphi_n$ is not standard.
It follows from Theorem B that $\varphi_n$ is degenerate. 

Assume that there are infinitely many $n\geq 1$ such that $\varphi_n$ is of type $(\mathcal{C})$. 
By passing to a subsequence, we may assume that $\varphi_n$ is of type $(\mathcal{C})$ for every $n\geq 1$. 
For each $n\geq 1$, we may take $B_n\in [A_{-n}, A_n]$ and $C_n\in \oH$ such that $\varphi([A_{-n}, A_n]\setminus \{B_n\})\subset \mathcal{C}_{C_n}$. 
By passing to a subsequence again, we may assume that $\{B_n\}$ and $\{C_n\}$ converge in the compact space $\oH$. 
Let $B_n\to B$ and $C_n \to C$ as $n\to \infty$ in $\oH$. 
It follows that 
\[
\mathcal{U}\setminus\{B\} \subset \bigcup_{n\geq N}[A_{-n}, A_n]\setminus \{B_n\}
\]
for every $N\geq 1$. Therefore, we have
\[
\varphi(\mathcal{U}\setminus \{B\}) \subset \bigcap_{N\geq 1}\bigcup_{n\geq N}\varphi([A_{-n}, A_n]\setminus \{B_n\}) \subset \bigcap_{N\geq 1}\bigcup_{n\geq N} \mathcal{C}_{C_n}.
\]
This together with Lemma \ref{ccc} implies that $\varphi(\mathcal{U}\setminus \{B\})\subset \mathcal{C}_C$. Therefore, $\varphi$ is of type $(\mathcal{C})$ if $C\in \varphi(\mathcal{U})$. 
Assume that $C\notin \varphi(\mathcal{U})$. Since $\mathcal{U}\setminus\{B\}$ is open and connected, Lemma \ref{gap} shows that $\varphi(\mathcal{U}\setminus \{B\})$ is contained in one line. Thus $\varphi$ is of type $(\mathcal{C})$.

Assume that there are infinitely many $n\geq 1$ such that $\varphi_n$ is of type $(\ell)$. 
This case can be considered in a parallel manner as in the preceding case.
By passing to a subsequence, we may assume that $\varphi_n$ is of type $(\ell)$ for every $n\geq 1$. 
For each $n\geq 1$, we may take lines $\ell_n$, $\ell_n'$ such that $\varphi([A_{-n}, A_n]\setminus \ell_n)\subset \ell_n'$. 
By using Lemma \ref{elln} and passing to a subsequence, we may assume that $\bigcap_{N\geq 1}\bigcup_{n\geq N} \ell'_{n}\subset \ell'$ for some line $\ell'$. 
Observe that the sequence $\bigcap_{n\geq N} \ell_n$,  $N\geq 1$, of subsets of $\oH$ is expanding and that each of $\bigcap_{n\geq N} \ell_n$ is either empty or a singleton or a line. Thus we see that $\bigcap_{n\geq N} \ell_n$ is contained in a line $\ell$ for all $N\geq 1$.
It follows that 
\[
\varphi(\mathcal{U}\setminus \ell)\subset \bigcap_{N\geq 1}\bigcup_{n\geq N}\varphi([A_{-n}, A_n]\setminus \ell_n) \subset \bigcap_{N\geq 1}\bigcup_{n\geq N} \ell_n'\subset \ell'.
\]
Thus $\varphi$ is of type $(\ell)$.
\end{proof}
Almost the same proof shows that a coherency preserver defined on an interval of the form $[A, B)$, or $\{ C \in H_2 \,: \, C \ge A\}$, or $(A, B]$, or  $\{ C \in H_2 \,: \, C \le A\}$ is also either standard of degenerate.
We also obtain the following. 

\begin{theorem}\label{degeneratex}
Let $\varphi\colon \oH \to \oH$ be a coherency preserver. Then $\varphi$ is either standard or degenerate. 
\end{theorem}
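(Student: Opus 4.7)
The plan is to reduce Theorem \ref{degeneratex} to Theorem \ref{degenerate} by restricting $\varphi$ to the open interval $H_2\subset\oH$. Theorem \ref{degenerate} applied to $\varphi|_{H_2}$ shows that this restriction is either standard or degenerate. In the standard case, $\varphi|_{H_2}$ extends to an automorphism $\psi$ of $\oH$; since $\oH$ (identified via the Cayley transform with the connected compact group $U_2$) is open and connected and $H_2$ is a nonempty open subset, the identity-type theorem (Theorem \ref{identity} with $\mathcal{A}=\oH$ and $\mathcal{D}=H_2$) forces $\varphi=\psi$ on all of $\oH$, so $\varphi$ is standard.

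Assume then that $\varphi|_{H_2}$ is degenerate; the task is to propagate this to $\mathcal{C}_{\oi}=\oH\setminus H_2$. In the type $(\mathcal{C})$ subcase $\varphi(H_2\setminus\{A_0\})\subset \mathcal{C}_{B_0}$, fix $Z\in\oH\setminus H_2$. For each line $\ell$ through $Z$, the trace $\ell\cap H_2$ is a coherent subset of $H_2$, so its image is a coherent set contained in some line $\tilde\ell$ of $\oH$, and in fact in $\tilde\ell\cap \mathcal{C}_{B_0}$ away from $A_0$. By Lemma \ref{simplll}, this intersection is either all of $\tilde\ell$ (when $B_0\in\tilde\ell$) or a single point. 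Using $\varphi(Z)\sim\varphi(Y)$ for every $Y\in\ell\cap H_2$ and varying $\ell$ over lines through $Z$, one infers $\varphi(Z)\in\mathcal{C}_{B_0}$ except possibly for the single value $\varphi(A_0)$; this yields a global type $(\mathcal{C})$ description. The type $(\ell)$ subcase is handled analogously, replacing $\mathcal{C}_{B_0}$ by a line $\ell'$ and using that two distinct lines in $\oH$ meet in at most one point.

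A more uniform alternative is to mimic the proof of Theorem \ref{degenerate} itself: for each $Z\in\oH\setminus H_2$ use Lemma \ref{srtsrt} to pick an automorphism $\eta_Z$ of $\oH$ with $\eta_Z(Z)=0\in H_2$, then apply Theorem \ref{degenerate} to the pulled-back coherency preserver $\varphi\circ\eta_Z^{-1}|_{H_2}$. Standardness of any such pullback would lift by Theorem \ref{identity} to global standardness of $\varphi$, contradicting the already-established degeneracy on $H_2$; hence every pullback is degenerate. Compactness of $\oH$ together with the limit Lemmas \ref{ccc} and \ref{elln}, used exactly as in the proof of Theorem \ref{degenerate}, should then synthesize the local degenerate data into global parameters. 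The principal obstacle is verifying that the local degenerate descriptions at different basepoints $Z$ fit together consistently — that the types and parameters (vertex $B_0$, respectively line $\ell'$) agree or can be unified across charts — which is where the direct line-by-line argument of the previous paragraph becomes essential.
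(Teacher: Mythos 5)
Your reduction to Theorem \ref{degenerate} on $H_2$ leaves the extension to $\oH\setminus H_2$ genuinely open, and the ``line-by-line'' propagation you sketch does not close it. Two concrete problems. First, $\oi$ is coherent to no point of $H_2$ (indeed $\mathcal{C}_{\oi}=\oH\setminus H_2$), so no line through $\oi$ meets $H_2$ and your argument says nothing at all about $\varphi(\oi)$. Second, for $Z=\infty P+aP^\perp$ the step ``varying $\ell$ over lines through $Z$, one infers $\varphi(Z)\in\mathcal{C}_{B_0}$'' fails in exactly the situation degenerate maps create: if $\varphi(\ell\cap H_2\setminus\{A_0\})$ collapses to a single point $W$ with $W\sim B_0$ but $W\neq B_0$ (entire lines do collapse under degenerate maps; see Example \ref{exoh} and Propositions \ref{jebjo}, \ref{oHoH}), then you only learn $\varphi(Z)\in\mathcal{C}_W$, and $\mathcal{C}_W\not\subset\mathcal{C}_{B_0}$. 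Even combining two lines through $Z$ whose images are distinct singletons $W_1\not\sim W_2$, you get $\varphi(Z)\in\mathcal{S}_{W_1,W_2}\ni B_0$, and Lemma \ref{pjanck} says any point of $\mathcal{S}_{W_1,W_2}$ other than $B_0$ is \emph{not} coherent to $B_0$; nothing you have written rules this out. Your alternative ``chart'' approach suffers from the consistency-of-parameters problem you yourself flag, and you defer its resolution back to the line-by-line argument, so the gap is not closed either way.

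The paper avoids propagation entirely: it writes $\oH=\bigcup_{n\ge1}\mathcal{A}_n$ with $\mathcal{A}_n=\{aP+bP^\perp\,:\,P\in\cP,\ a,b\in[-n,\infty]\}$, observes that the automorphism $X\mapsto(X+(n+1)I)^{-1}$ carries $\mathcal{A}_n$ onto $[0,I]$, so Theorem B applies directly to each restriction $\varphi|_{\mathcal{A}_n}$ (non-standard by Theorem \ref{identity}), and then runs the same compactness/limit argument (Lemmas \ref{ccc} and \ref{elln}) as in the proof of Theorem \ref{degenerate} on this increasing exhaustion. Since the $\mathcal{A}_n$ eventually contain every point of $\oH\setminus H_2$, including $\oi$, no separate boundary analysis is needed. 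If you insist on your structure, you would essentially have to redo the case analysis of Propositions \ref{jebjo} and \ref{oHoH}; the exhaustion by automorphic images of $[0,I]$ is the cleaner route.
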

\begin{proof}
For $n\geq 1$, set $\mathcal{A}_n:=\{aP+bP^\perp\,:\, P\in \cP,\ a,b\in [-n, \infty]\}$. 
Clearly, we have $\bigcup_{n\geq 1} \mathcal{A}_n =\oH$.
If $\varphi$ is not standard, then $\varphi$ restricted to each $\mathcal{A}_n$ is not standard, either. 
Observe that the automorphism $X\mapsto (X+(n+1)I)^{-1}$ sends $\mathcal{A}_n$ onto $[0, I]$. 
Therefore, Theorem B implies that $\varphi$ restricted to each $\mathcal{A}_n$ is degenerate. 
Hence the same argument as above shows that $\varphi$ is degenerate on $\oH$.
\end{proof}

Let us call a subset of $\oH$ a generalized open interval \index{generalized open interval} if it is an image of some open interval in $H_2$ by an automorphism of $\oH$. 
Let $\mathcal{U}$ be an open subset of $\oH$. 
Observe that for every $X\in \mathcal{U}$ there is a generalized open interval $\mathcal{I}$ such that $X\in \mathcal{I}\subset \mathcal{U}$. 
We say that a coherency preserver $\varphi\colon \mathcal{U}\to \oH$ is \emph{locally degenerate} \index{locally degenerate} if $\varphi$ is degenerate on every generalized open interval contained in $\mathcal{U}$. 
Here is a consequence of Theorem \ref{degenerate} with the identity-type theorem (Theorem \ref{identity}).

\begin{theorem}\label{locally}
Let $\mathcal{U}$ be an open connected subset of $\oH$. 
Then every coherency preserver $\varphi\colon \mathcal{U}\to \oH$ is either standard or locally degenerate.
\end{theorem}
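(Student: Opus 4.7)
The plan is to deduce Theorem \ref{locally} from Theorem \ref{degenerate} (its extension to generalized open intervals) together with the identity-type theorem (Theorem \ref{identity}). Suppose $\varphi\colon \mathcal{U}\to\oH$ is not locally degenerate. Then by definition there exists a generalized open interval $\mathcal{I}\subset \mathcal{U}$ such that $\varphi|_{\mathcal{I}}$ is \emph{not} degenerate. The goal is to show $\varphi$ is standard.

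Since $\mathcal{I}$ is a generalized open interval, by definition there is an automorphism $\psi_0$ of $\oH$ and an open interval $\mathcal{I}_0\subset H_2$ with $\psi_0(\mathcal{I}_0)=\mathcal{I}$. Consider the composition $\tilde\varphi:=\varphi\circ\psi_0|_{\mathcal{I}_0}\colon \mathcal{I}_0\to\oH$, which is a coherency preserver on an open interval in $H_2$. I will first observe that $\tilde\varphi$ is not degenerate: if $\tilde\varphi$ were of type $(\mathcal{C})$, say $\tilde\varphi(\mathcal{I}_0\setminus\{A\})\subset \mathcal{C}_B$, then $\varphi(\mathcal{I}\setminus\{\psi_0(A)\})\subset \mathcal{C}_B$, contradicting non-degeneracy of $\varphi|_{\mathcal{I}}$; similarly, if $\tilde\varphi(\mathcal{I}_0\setminus\ell)\subset\ell'$ for lines $\ell,\ell'$, then $\psi_0(\ell)$ is again a line (automorphisms send lines to lines), and we would get $\varphi(\mathcal{I}\setminus\psi_0(\ell))\subset\ell'$, again a contradiction. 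Therefore Theorem \ref{degenerate} applies to $\tilde\varphi$ and forces it to be standard, i.e.\ $\tilde\varphi$ extends to a standard automorphism $\sigma$ of $\oH$.

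It follows that $\sigma\circ\psi_0^{-1}$ is an automorphism of $\oH$ that agrees with $\varphi$ on $\mathcal{I}$. Since $\mathcal{I}$ is a nonempty open subset of the open connected set $\mathcal{U}$ (note that an open interval in $H_2$ is open in $\oH$, and every automorphism is a homeomorphism by Corollary \ref{corp} and the fact that standard automorphisms are continuous with continuous inverses), the identity-type theorem (Theorem \ref{identity}) gives $\varphi(X)=\sigma\circ\psi_0^{-1}(X)$ for every $X\in\mathcal{U}$. Hence $\varphi$ is standard, completing the proof. There is essentially no obstacle here: all the real work has been done in Theorems \ref{degenerate} and \ref{identity}, and the present theorem is a packaging of these two facts, with the connectedness of $\mathcal{U}$ playing exactly the role it plays in Theorem \ref{identity}.
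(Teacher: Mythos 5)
Your proof is correct and is exactly the argument the paper intends: the paper states Theorem \ref{locally} without proof as a direct consequence of Theorem \ref{degenerate} and Theorem \ref{identity}, and your writeup supplies precisely that deduction (transport to an honest open interval via the defining automorphism, apply Theorem \ref{degenerate}, then propagate with the identity-type theorem).
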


Now, we are going to give a proof of Theorem \ref{MM}, which was the original goal of our research.
We need an easy lemma. 

\begin{lemma}\label{lll}
Let $\ell\subset \oH$ be a line and $\xi\colon \mathcal{M}_4 \to H_2$ the mapping as in \eqref{zacet}.
Then $\xi^{-1}(H_2\cap \ell)$ is contained in one lightlike line.
\end{lemma}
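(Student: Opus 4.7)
The plan is to invoke Lemma \ref{jojhcer} to describe $\ell$ explicitly and then perform a short coordinate computation using the linearity of $\xi$. By that lemma, either $\ell=\{aP+A:a\in\oR\}$ for some $A\in H_2$ and $P\in\cP$, or $\ell=\{\infty P+aP^\perp:a\in\oR\}$ for some $P\in\cP$. In the second case every element of $\ell$ lies in $\oH\setminus H_2$, so $\xi^{-1}(H_2\cap\ell)=\emptyset$ and the conclusion is vacuous.

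Thus the content is in the first case. Since $\xi\colon\cM_4\to H_2$ is a linear bijection (immediate from \eqref{zacet}), one has $\xi^{-1}(aP+A)=a\,\xi^{-1}(P)+\xi^{-1}(A)$ for every $a\in\mathbb{R}$. Comparing \eqref{zacet} with the Bloch representation \eqref{blochmap}, a rank one projection $P$ yields $\xi^{-1}(P)=(x,y,z,1/2)$, where $(x,y,z)$ lies on the sphere of radius $1/2$. Writing $(x_A,y_A,z_A,t_A):=\xi^{-1}(A)$ and reparameterizing by $s:=a/2+t_A$, one then reads off that $\xi^{-1}(\ell\cap H_2)$ takes the form
\[
\{(x_0+2sx,\; y_0+2sy,\; z_0+2sz,\; s):s\in\mathbb{R}\}
\]
for constants $x_0,y_0,z_0$ built from $(x_A,y_A,z_A,t_A)$ and $(x,y,z)$. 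The spatial direction $(2x,2y,2z)$ has Euclidean norm $\sqrt{4(x^2+y^2+z^2)}=1$, which is exactly the form of a lightlike line given in the introduction.

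The calculation is entirely routine, and I do not expect any real obstacle. The one substantive observation is that the trace-one normalization of rank one projections forces the time component of $\xi^{-1}(P)$ to equal $1/2$, and after the rescaling $s=a/2+t_A$ this produces the unit spatial direction required.
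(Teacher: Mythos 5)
Your proof is correct and follows essentially the same route as the paper: both reduce via Lemma \ref{jojhcer} to the case $\ell\cap H_2=\{A+aP:a\in\mathbb{R}\}$ and then use the linearity of $\xi$ together with the Bloch picture, where your reparameterization $s=a/2+t_A$ and the shift producing $(x_0,y_0,z_0)$ are just the explicit-coordinate version of the paper's rewriting $\{A+aP\}=\{B+t(2P)\}$ with $B$ of trace zero. The coordinate computation is spelled out correctly, including the key point that $\mathrm{tr}\,P=1$ forces the time component of $\xi^{-1}(P)$ to be $1/2$ and hence a unit spatial direction after rescaling.
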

\begin{proof}
There is nothing to prove when $\ell\cap H_2=\emptyset$. 
If $\ell\cap H_2\neq\emptyset$, then Lemma \ref{jojhcer} implies that there exist $A\in H_2$ and $P \in \mathcal{P}$ such that
$\ell \cap H_2= \{ A + aP \, : \, a \in \mathbb{R} \}$. 
This can be rewritten as $\ell \cap H_2 = \{ B + t(2P)\, : \, t \in \mathbb{R} \}$, where $B = A - (\mathrm{tr}\, A)P$ is a trace zero matrix. The spacetime event $r_0$ that corresponds to $B$ via the identification \eqref{zacet} is of the form $r_0 = (x_0, y_0, z_0 , 0)$. Applying the Bloch representation (Subsection \ref{projectionsx}), we see that 
$\ell \cap H_2$ corresponds to the lightlike line $\{(x_0+tx, y_0+ty, z_0+tz, t)\,:\, t\in \mathbb{R}\}$ for some $(x_0, y_0, z_0), (x,y,z) \in \mathbb{R}^3$ with $x^2+y^2+z^2=1$.
\end{proof}

\begin{proof}[Proof of Theorem \ref{MM}]
Instead of working on $\mathcal{M}_4$, we may equivalently consider a coherency preserver $\varphi\colon H_2\to H_2$. 
By Theorem \ref{degenerate}, $\varphi\colon H_2\to H_2\subset \oH$ is either standard or degenerate. 

If $\varphi$ is standard, then $\varphi$ extends to an automorphism of $\oH$, which necessarily maps $\oi$ to itself. By Corollary \ref{corp}, the extension is an affine automorphism, and the proof is complete in this case (see Subsection \ref{apply}).

Assume that $\varphi$ is of type $(\mathcal{C})$. 
Then there are $A\in H_2$ and $B\in H_2$ such that $\varphi(H_2\setminus\{A\})\subset \cC_B\cap H_2$. It immediately follows that the second item of Theorem \ref{MM} holds. 

Assume that $\varphi$ is of type $(\ell)$. 
Then there are lines $\ell, \ell'\subset \oH$ satisfying $\varphi(H_2\setminus\ell)\subset H_2\cap\ell'$. 
It follows from Lemma \ref{lll} that the third item of Theorem \ref{MM} holds.
\end{proof}

A generalization of Alexandrov's theorem given by Lester is also a simple consequence of our main result.

\begin{theorem}[{\cite{Les}}]\label{June}
Let $\mathcal{U}$ be an open connected subset of $\oH$. 
If $\varphi\colon \mathcal{U}\to \oH$ satisfies 
\[
X\sim Y\iff \varphi(X)\sim \varphi(Y)
\] 
for every pair $X, Y\in \mathcal{U}$, then $\varphi$ is standard.
\end{theorem}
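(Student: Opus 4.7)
The strategy is to apply Theorem~\ref{locally}, which asserts that any coherency preserver on an open connected subset of $\oH$ is either standard or locally degenerate, and then to use the assumption that $\varphi$ preserves coherency in \emph{both} directions to rule out local degeneracy. Since $\mathcal{U}$ is open and nonempty, it contains a generalized open interval $\mathcal{I}$: by Lemma~\ref{srtsrt} I can transport a chosen point of $\mathcal{U}$ to $0\in H_2$ via a homeomorphic automorphism, and a small matrix interval of the form $(-\varepsilon I,\varepsilon I)$ in the preimage provides such an $\mathcal{I}$. Suppose for contradiction that $\varphi$ is not standard. Then $\varphi|_\mathcal{I}$ is degenerate, and therefore of type $(\ell)$ or of type $(\mathcal{C})$, and in each case I aim to exhibit a pair of points in $\mathcal{I}$ whose coherency relation is inverted by $\varphi$.

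The type $(\ell)$ case is immediate. Here there are lines $\ell,\ell'\subset\oH$ with $\varphi(\mathcal{I}\setminus\ell)\subset\ell'$. The set $\mathcal{I}\setminus\ell$ is open and nonempty, so it contains two points $X,Y$ with $X\not\sim Y$ (any nonempty open subset of $\oH$ contains such a pair). The images $\varphi(X),\varphi(Y)$ both lie on the single line $\ell'$ and are therefore coherent, which contradicts bidirectional preservation.

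The $(\mathcal{C})$ case is where the real content lies. Here $\varphi(\mathcal{I}\setminus\{A\})\subset\mathcal{C}_B$ for some $A\in\mathcal{I}$ and $B\in\oH$. I plan to show that $\varphi\equiv B$ on $\mathcal{I}\setminus\{A\}$; once this is established, choosing any two non-coherent points in $\mathcal{I}\setminus\{A\}$ yields both images equal to $B$ and hence coherent, again contradicting bidirectional preservation. Fix $Y\in\mathcal{I}\setminus\{A\}$. Because $\mathcal{I}$ is open, I can pick distinct projections $P,Q\in\cP$ and small nonzero scalars $s,t$ so that $X:=Y+sP$ and $Z:=Y+tQ$ both lie in $\mathcal{I}\setminus\{A\}$; then $X\sim Y\sim Z$, while $X-Z=sP-tQ$ is invertible by Lemma~\ref{apbq}, giving $X\not\sim Z$. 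By bidirectionality, $\varphi(X)\sim\varphi(Y)\sim\varphi(Z)$ and $\varphi(X)\not\sim\varphi(Z)$. If one had $\varphi(Y)\neq B$, then $\varphi(X)$ and $\varphi(Z)$ would both belong to $\mathcal{C}_B\cap\mathcal{C}_{\varphi(Y)}=\ell_{B,\varphi(Y)}$ by Lemma~\ref{simpll}, so they would be coherent to each other---a contradiction. Therefore $\varphi(Y)=B$, as claimed. The main obstacle is this $(\mathcal{C})$ case: one must combine the geometric rigidity of cones (Lemma~\ref{simpll}) with the non-coherence of ``spokes'' through a common point (Lemma~\ref{apbq}) to force the drastic collapse of $\varphi$ onto the vertex $B$, after which the bidirectional hypothesis finishes the job.
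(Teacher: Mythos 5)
Your proof is correct and takes essentially the same route as the paper's: apply Theorem \ref{locally} (restricting to one generalized open interval, which one may take to be a matrix interval after an automorphism) and then use bidirectionality to kill both degenerate types, the type $(\ell)$ case being handled identically. The only divergence is in the type $(\mathcal{C})$ case, where the paper maps the infinite pairwise-non-coherent set $\mathcal{S}_{C,D}$ into the singleton $\ell_{B,\varphi(C)}\cap\ell_{B,\varphi(D)}=\{B\}$, whereas you use the local fork $X=Y+sP$, $Z=Y+tQ$ (with $X\not\sim Z$ by Lemma \ref{apbq}) to force $\varphi\equiv B$ off $A$; both arguments rest on Lemma \ref{simpll} and are equally valid.
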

\begin{proof}
Theorem \ref{locally} tells us that $\varphi$ is either standard or locally degenerate. 
We need to show that $\varphi$ is not locally degenerate. 
By Lemma \ref{Porder}, we just need to show that a degenerate coherency preserver $\varphi\colon (0, I)\to \oH$ never satisfies $X\sim Y\iff \varphi(X)\sim \varphi(Y)$ for every pair $X, Y\in (0, I)$. 

We prove it. 
Assume that $\varphi\colon (0, I)\to \oH$ is a degenerate coherency preserver satisfying $X\sim Y\iff \varphi(X)\sim \varphi(Y)$ for every pair $X, Y\in (0, I)$. We need to obtain a contradiction.

If $\varphi$ is of type $(\cC)$, then there are $A\in (0,I)$ and $B\in \varphi((0,I))$ such that $\varphi((0,I)\setminus\{A\})\subset \cC_B$. 
Take $A<C<D<I$, Then we have $\mathcal{S}_{C,D}\subset [C,D]\subset (0,I)$ by Corollary \ref{A<B}. 
Since $C\not\sim D$, we see that $\varphi(C)\not\sim \varphi(D)$. 
This together with $\varphi(C),\varphi(D)\in \cC_B$ shows that $d(B, \varphi(C))=d(B,\varphi(D))=1$ and $\ell_{B, \varphi(C)}\cap \ell_{B,\varphi(D)}=\{B\}$.
It follows that 
\[
\varphi(\mathcal{S}_{C,D})\subset \cC_{\varphi(C)}\cap \cC_{\varphi(D)}\cap \cC_{B} =\ell_{B, \varphi(C)}\cap \ell_{B,\varphi(D)}=\{B\}.
\]
However, $\mathcal{S}_{C,D}\,(\subset (0,I))$ has infinitely many points. 
For any $X\neq Y$ in $\mathcal{S}_{C,D}$, we have $X\not\sim Y$, but $\varphi(X)=B=\varphi(Y)$, thus we obtain a contradiction. 

If $\varphi$ is of type $(\ell)$, then there are lines $\ell,\ell'$  such that $\varphi((0,I)\setminus\ell)\subset \ell'$.
It is clear that there is a pair $A,B\in (0,I)\setminus\ell$ with $A\not\sim B$, but we have $\varphi(A),\varphi(B)\in \ell'$, hence $\varphi(A)\sim \varphi(B)$, which contradicts our assumption.
\end{proof}

\subsection{More concrete description of degenerate coherency preservers}\label{concrete}
We would like to describe degenerate coherency preservers in a more concrete manner.
We remark that none of the results in this subsection depend on Theorem B.

Let $\mathcal{A}\subset \oH$ be a subset.
First, we study the special case where the image is contained in a cone. 
Consider a coherency preserver $\varphi\colon \mathcal{A}\to \oH$ such that the image $\varphi(\mathcal{A})$ is contained in a cone. 
By considering $\psi\circ \varphi$ for an appropriate automorphism $\psi$ of $\oH$ instead of $\varphi$, the description of such a map reduces to the case $\varphi(\mathcal{A})\subset \mathcal{C}_0$ without loss of generality.
We may give a complete characterization. 
 
\begin{proposition}\label{genC}
Let $\varphi\colon \mathcal{A}\to \mathcal{C}_0$ be a mapping. For $P\in \cP$ we denote $\mathcal{A}_P :=\varphi^{-1}(\ell_{0, P}\setminus \{0\})$. Then the following two conditions are equivalent. 
\begin{itemize}
\item The mapping $\varphi$ is a coherency preserver. 
\item $A\not\sim B$ holds for any $P,Q\in \cP$ with $P\neq Q$ and any $A\in \mathcal{A}_P$, $B\in \mathcal{A}_Q$.
\end{itemize}
\end{proposition}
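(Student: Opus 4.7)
The plan is to derive this equivalence almost immediately from Lemma \ref{pq}, which characterizes coherency on rays through $0$: for $a,b\in \oR\setminus\{0\}$ and $P,Q\in\cP$, we have $aP\sim bQ$ if and only if $P=Q$. First I would observe that every nonzero element of $\mathcal{C}_0$ lies in $\ell_{0,P}\setminus\{0\}$ for a unique $P\in\cP$ (since if $aP=bQ$ with $a,b\in\oR\setminus\{0\}$, then comparing traces in the finite case and the form of $\infty P$ in the infinite case forces $P=Q$). Consequently the sets $\{\mathcal{A}_P\}_{P\in\cP}$ form a partition of $\mathcal{A}\setminus\varphi^{-1}(\{0\})$, so the second condition is well-posed.

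For the forward direction, assume $\varphi$ is a coherency preserver. Fix distinct $P,Q\in\cP$, $A\in\mathcal{A}_P$, $B\in\mathcal{A}_Q$. Then $\varphi(A)=tP$ and $\varphi(B)=sQ$ for some $t,s\in\oR\setminus\{0\}$, and Lemma \ref{pq} gives $\varphi(A)\not\sim\varphi(B)$. Since $\varphi$ preserves coherency, we must have $A\not\sim B$.

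For the converse, assume the second condition and take any $A,B\in\mathcal{A}$ with $A\sim B$; we must show $\varphi(A)\sim\varphi(B)$. If $\varphi(A)=0$ or $\varphi(B)=0$, then the conclusion is immediate because $0$ is coherent to every element of $\mathcal{C}_0$ (as $\mathcal{C}_0$ is the cone with vertex $0$). Otherwise, $A\in\mathcal{A}_P$ and $B\in\mathcal{A}_Q$ for uniquely determined $P,Q\in\cP$. The hypothesis rules out $P\neq Q$, so $P=Q$, and then $\varphi(A),\varphi(B)\in\ell_{0,P}$; since a line is a coherent set, $\varphi(A)\sim\varphi(B)$.

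There is no real obstacle here: the entire content is packaged into Lemma \ref{pq}, and the proof amounts to a case split on whether $\varphi(A)$ and $\varphi(B)$ are zero. The only mild point worth flagging explicitly is the well-definedness observation at the start, ensuring that the partition $\{\mathcal{A}_P\}$ makes sense and that the uniqueness of $P$ with $\varphi(A)\in\ell_{0,P}\setminus\{0\}$ is what lets us invoke the hypothesis cleanly.
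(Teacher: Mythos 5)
Your proof is correct and follows essentially the same route as the paper: both reduce the statement to the contrapositive characterization of coherency preservation and then invoke Lemma \ref{pq} to identify exactly when two points of $\mathcal{C}_0$ fail to be coherent. The extra care you take with the uniqueness of $P$ and the zero-image case is fine but is precisely what the paper compresses into one line.
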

\begin{proof}
The mapping $\varphi$ preserves coherency if and only if $\varphi(A)\not\sim \varphi(B)$ implies $A\not\sim B$ for every pair $A, B\in \cA$. 
By Lemma \ref{pq}, we see that $\varphi(A)\not\sim \varphi(B)$ holds if and only if $A\in \mathcal{A}_P$ and $B\in \mathcal{A}_Q$ hold for some $P,Q\in \cP$ with $P\neq Q$.
This leads to the desired conclusion.
\end{proof}

This generalizes Example \ref{exC}.
We may get more examples in the following manner. 
Let $\mathcal{B}\subset\mathcal{A}$ be a subset satisfying $B_1\not\sim B_2$ for any distinct $B_1, B_2\in \mathcal{B}$, e.g.,\ $\mathcal{B}=\{tI\,:\, t\in \bR\}$, $\mathcal{B}=\{A\in H_2\,:\, \mathrm{tr}\,A =0\}$.
Take any mapping $\varphi\colon \mathcal{A}\to\cC_0$ satisfying $\varphi(\mathcal{A}\setminus\mathcal{B})=\{0\}$. 
Then $\varphi$ satisfies the condition of Proposition \ref{genC}.

In what follows, we consider the case where the image is not contained in any single cone. 
Let us begin with type $(\ell)$. 
Consider a coherency preserver $\varphi\colon \mathcal{A}\to \oH$ of type $(\ell)$. 
We may take lines $\ell_1, \ell_2'$ such that $\varphi(\mathcal{A}\setminus \ell_1)\subset \ell_2'$. 
We may also take a line $\ell_1'$ such that $\varphi(\mathcal{A}\cap \ell_1)\subset \ell_1'$. 
If $\varphi(\mathcal{A})$ is not contained in any single cone, then we have $\ell_1'\cap\ell_2'=\emptyset$. 
By considering $\psi_2\circ \varphi\circ\psi_1$ for an appropriate pair $\psi_1, \psi_2$ of automorphisms of $\oH$ instead of $\varphi$, the general description of such a map reduces to the case $\ell_1=\ell_1'=\ell_{0, E_{11}}$ and $\ell_2'=\ell_{E_{22}, I}$. (To see this, use Lemma \ref{herz} a few times.)

\begin{proposition}\label{ljzggr}
A mapping $\varphi\colon \mathcal{A}\to \oH$ with $\varphi(\mathcal{A}\cap \ell_{0, E_{11}})\subset \ell_{0, E_{11}}$ and $\varphi(\mathcal{A}\setminus  \ell_{0, E_{11}})\subset \ell_{E_{22}, I}$ is a coherency preserver if and only if the equation $\varphi(\mathcal{A}\cap \mathcal{C}_A\setminus  \ell_{0, E_{11}}) \subset\{\varphi(A)+E_{22}\}$ holds for every $A\in \mathcal{A}\cap \ell_{0, E_{11}}$. 
\end{proposition}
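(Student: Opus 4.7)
The plan is to reduce both implications to a single structural fact about the two disjoint lines $\ell_1 := \ell_{0, E_{11}}$ and $\ell_2 := \ell_{E_{22}, I}$. Using Lemma \ref{jojhcer}, these are $\{tE_{11} : t \in \oR\}$ and $\{E_{22} + sE_{11} : s \in \oR\}$, respectively, and one checks they are disjoint (no element of the form $tE_{11}$ equals $E_{22} + sE_{11}$ for any $s, t \in \oR$). Hence, by Lemma \ref{simplll}, for each $A \in \ell_1$ there is a unique point $B(A) \in \ell_2$ coherent to $A$, and a short computation using the definitions in Subsection \ref{automorphism} shows $B(A) = A + E_{22}$. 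For $A = tE_{11}$ with $t \in \mathbb{R}$ this is the observation that $(E_{22} + sE_{11}) - tE_{11}$ is diagonal with entries $s-t, 1$, of rank one iff $s = t$; the case $A = \infty E_{11}$ is handled by the coherency rules in $\oH \setminus H_2$ given in Section \ref{compactification}.

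With this in hand, the forward direction is immediate. Given a coherency preserver $\varphi$, take $A \in \mathcal{A} \cap \ell_1$ and $B \in \mathcal{A} \cap \mathcal{C}_A \setminus \ell_1$. By the hypotheses on $\varphi$, $\varphi(A) \in \ell_1$ and $\varphi(B) \in \ell_2$, and by coherency preservation $\varphi(A) \sim \varphi(B)$; the uniqueness above forces $\varphi(B) = \varphi(A) + E_{22}$, proving the claimed inclusion.

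For the backward direction, assume the stated condition and take $X, Y \in \mathcal{A}$ with $X \sim Y$. If both $X, Y$ lie in $\ell_1$, then $\varphi(X), \varphi(Y) \in \ell_1$ and so are coherent; similarly if both lie outside $\ell_1$, the images both lie in the coherent set $\ell_2$. In the remaining case, say $X \in \mathcal{A} \cap \ell_1$ and $Y \in \mathcal{A} \cap \mathcal{C}_X \setminus \ell_1$, the assumed equation gives $\varphi(Y) = \varphi(X) + E_{22}$, which by the structural fact above is coherent to $\varphi(X)$.

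The main obstacle is essentially none: once the pairing $A \leftrightarrow A + E_{22}$ between $\ell_1$ and $\ell_2$ is pinned down, the proposition reduces to bookkeeping. The only minor subtlety is verifying that this pairing behaves correctly at the point $\infty E_{11}$, which follows uniformly from the extension rules for coherency in $\oH \setminus H_2$.
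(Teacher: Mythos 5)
Your proof is correct and follows essentially the same route as the paper: the key point in both is that $\ell_{E_{22},I}\cap\mathcal{C}_{\varphi(A)}=\{\varphi(A)+E_{22}\}$ for $\varphi(A)\in\ell_{0,E_{11}}$, from which the forward direction is immediate, and the converse is the same two-line case analysis (you argue it directly where the paper argues the contrapositive). Your extra care in checking the pairing at $\infty E_{11}$ is a harmless elaboration of what the paper leaves implicit.
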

\begin{proof}
Let $\varphi\colon \mathcal{A}\to \oH$ be a mapping satisfying $\varphi(\mathcal{A}\cap \ell_{0, E_{11}})\subset \ell_{0, E_{11}}$ and $\varphi(\mathcal{A}\setminus  \ell_{0, E_{11}})\subset \ell_{E_{22}, I}$. 
If $\varphi$ preserves coherency, then we get $\varphi(\mathcal{A}\cap \mathcal{C}_A\setminus  \ell_{0, E_{11}}) \subset \ell_{E_{22}, I} \cap \cC_{\varphi(A)}=\{\varphi(A)+E_{22}\}$ for every $A\in \mathcal{A}\cap \ell_{0, E_{11}}$. 

Conversely, assume that $\varphi(\mathcal{A}\cap \mathcal{C}_A\setminus  \ell_{0, E_{11}}) \subset \{\varphi(A)+E_{22}\}$ holds for every $A\in \mathcal{A}\cap \ell_{0, E_{11}}$. 
If $A, B\in \mathcal{A}$ satisfy $\varphi(A)\not\sim \varphi(B)$, then one of $A, B$ lies in $\mathcal{A}\cap \ell_{0, E_{11}}$ and the other is in $\mathcal{A}\setminus  \ell_{0, E_{11}}$. 
Then it follows from the assumption that $A\not\sim B$, as desired.
\end{proof}
Observe that this generalizes Example \ref{varphi2}.

Lastly, we consider type $(\mathcal{C})$. 
Let $\varphi\colon \mathcal{A}\to \oH$ be a coherency preserver of type $(\mathcal{C})$. 
We may take $A\in \mathcal{A}$ and $B\in \oH$ such that $\varphi(\mathcal{A}\setminus \{A\})\subset \mathcal{C}_B$. 
If $\varphi(\mathcal{A})$ is not contained in any single cone, then we have $\varphi(A)\not\sim B$. 
By considering $\psi_2\circ \varphi\circ\psi_1$ for an appropriate pair $\psi_1, \psi_2$ of automorphisms of $\oH$ instead of $\varphi$, the general description of such a map reduces to the case $A=0$, $B=I$, and $\varphi(0)=0$.

\begin{proposition}\label{gena}
Assume that $0\in \mathcal{A}$. 
Let $\varphi\colon \mathcal{A}\to \oH$ be a mapping with $\varphi(0)=0$ and $\varphi(\mathcal{A}\setminus \{0\})\subset\mathcal{C}_I$. 
For $P\in \cP$ we denote $\mathcal{A}_P :=\varphi^{-1}(\ell_{I, P}\setminus \{I\})$.
Then $\varphi$ is a coherency preserver if and only if the following conditions hold. 
\begin{itemize}
\item $A\not\sim B$ holds for any $P, Q\in \mathcal{P}$ with $P \not=Q$ and any $A\in \mathcal{A}_P$, $B\in \mathcal{A}_Q$. 
\item For each $P\in \cP$ there is $Q\in \cP$ such that $\varphi(\mathcal{A}\cap \ell_{0, P}\setminus\{0\})\subset \{Q\}$.
\end{itemize}
\end{proposition}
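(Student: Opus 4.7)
The plan is to prove the biconditional directly by checking both implications; the common scaffolding is the observation, via Lemmas \ref{jojhcer} and \ref{simpll}, that the punctured cone $\mathcal{C}_I\setminus\{I\}$ decomposes as the disjoint union $\bigcup_{P\in\cP}(\ell_{I,P}\setminus\{I\})$. Indeed, every line through $I$ is of the form $\ell_{I,P}$ for a unique $P \in \cP$ (take the value at parameter $-1$ in the Lemma \ref{jojhcer}-parametrization), and two distinct such lines meet only at $I$. Consequently $\{0\}$, $\varphi^{-1}(\{I\})$, and the family $\{\mathcal{A}_P\}_{P\in\cP}$ form a partition of $\mathcal{A}$.

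For the forward direction, suppose $\varphi$ preserves coherency. For the first bullet, fix $P\neq Q$ in $\cP$ and consider $X\in \ell_{I,P}\setminus\{I\}$ and $Y\in \ell_{I,Q}\setminus\{I\}$; then $X\neq Y$ because the two lines meet only at $I$, and if $X\sim Y$ held, the three pairwise distinct lines $\ell_{I,P}$, $\ell_{I,Q}$, $\ell_{X,Y}$ would intersect pairwise at the three mutually distinct points $I,X,Y$, violating Corollary \ref{triangle}. Hence $\varphi(A)\not\sim\varphi(B)$ for any $A\in\mathcal{A}_P$ and $B\in\mathcal{A}_Q$, and the contrapositive of coherency preservation yields the first bullet. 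For the second, fix $P\in\cP$ and any $A\in\mathcal{A}\cap\ell_{0,P}\setminus\{0\}$: since $A\sim 0$ gives $\varphi(A)\sim 0$ and $\varphi(A)\in\mathcal{C}_I$ gives $\varphi(A)\sim I$, Lemma \ref{veryeasy} forces $\varphi(A)\in\cP$. Since $\ell_{0,P}\setminus\{0\}$ is a coherent set and two coherent rank one projections must coincide (their difference has rank at most one and trace zero, hence vanishes), $\varphi$ collapses $\mathcal{A}\cap\ell_{0,P}\setminus\{0\}$ to a single $Q\in\cP$.

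For the backward direction, assume both bullets and take $A\sim B$ in $\mathcal{A}$; I verify $\varphi(A)\sim\varphi(B)$ by case analysis. The cases $A=B$ and $\{A,B\}=\{0\}$ are immediate. If $A=0\neq B$, then $B\in\ell_{0,P}\setminus\{0\}$ for some $P$, so the second bullet yields $\varphi(B)\in\cP$, which is coherent to $\varphi(0)=0$. Finally, if $A,B\neq 0$, then $\varphi(A),\varphi(B)\in\mathcal{C}_I$: if either equals $I$ the coherency is automatic, and otherwise each lies in a unique $\ell_{I,R_i}\setminus\{I\}$, so $A\in\mathcal{A}_{R_1}$ and $B\in\mathcal{A}_{R_2}$. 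If $R_1=R_2$, both images lie on $\ell_{I,R_1}$ and are coherent; if $R_1\neq R_2$, the first bullet gives $A\not\sim B$, contradicting the standing hypothesis $A\sim B$.

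The only non-routine ingredient is the triangle argument of the forward direction that rules out coherency across distinct lines through $I$; the rest is an organized case analysis in which Lemma \ref{veryeasy} and the partition structure do all the heavy lifting.
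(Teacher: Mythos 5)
Your proof is correct and follows essentially the route the paper intends (the paper's own proof establishes the second bullet exactly as you do and then refers to the argument of Proposition \ref{genC}, i.e., the contrapositive characterization of non-coherent pairs in $\{0\}\cup\mathcal{C}_I$). The only cosmetic difference is that you derive the non-coherence of points on distinct lines $\ell_{I,P}\setminus\{I\}$, $\ell_{I,Q}\setminus\{I\}$ from Corollary \ref{triangle}, whereas the paper would obtain it by translating by $-I$ and invoking Lemma \ref{pq}; both arguments are valid and of comparable length.
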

\begin{proof}
If $\varphi$ is a coherency preserver and $P\in \cP$, then $\varphi(\mathcal{A}\cap \ell_{0, P}\setminus\{0\})$ is a coherent subset of $\cP$, so there is $Q\in \cP$ such that $\varphi(\mathcal{A}\cap \ell_{0, P}\setminus\{0\})\subset \{Q\}$. 
The rest of the proof is similar to the proof of Proposition \ref{genC}.
\end{proof}

In what follows, we observe that $\varphi$ needs to be of a very special form in the case  $\mathcal{A}=H_2$ or $\mathcal{A}=\oH$.  
Recall that $H_2^2= H_2\setminus \cC_0$ is the collection of all invertible matrices in $H_2$.

\begin{proposition}\label{jebjo}
A mapping $\varphi\colon H_2\to \oH$ with $\varphi(0)=0$ and $\varphi(H_2\setminus \{0\})\subset\mathcal{C}_I$ is a coherency preserver if and only if one of the following conditions is satisfied. 
\begin{enumerate}
\item $\varphi(H_2^2)= \{I\}$, and for each $P\in \mathcal{P}$ there is $Q_P \in \cP$ such that $\varphi(H_2\cap\ell_{0, P}\setminus\{0\})= \{Q_P \}$.
\item There is a unique element $R\in \cP$ such that $\mathcal{B}= H_2^2\cap \varphi^{-1}(\ell_{I, R}\setminus\{I\})$ is nonempty, and for each $P\in \mathcal{P}$ there is $Q_P\in \cP$ such that $\varphi(H_2\cap \ell_{0, P}\setminus\{0\})= \{Q_P\}$. Moreover, for every $P\in \cP$ with the property that some point of $H_2\cap \ell_{0, P}$ is coherent to some point of $\mathcal{B}$, we have $Q_P=R$. 
\end{enumerate}
\end{proposition}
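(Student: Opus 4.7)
The plan is to handle the two implications separately, using Proposition \ref{gena} to reduce the analysis to what $\varphi$ can do on the invertible part $H_2^2$. For the forward direction, Proposition \ref{gena} shows that any coherency preserver $\varphi$ with $\varphi(0)=0$ and $\varphi(H_2\setminus\{0\})\subset \cC_I$ automatically produces, for each $P\in\cP$, a projection $Q_P\in\cP$ with $\varphi(H_2\cap \ell_{0,P}\setminus\{0\})=\{Q_P\}$ (the containment in the proposition is an equality because $P$ itself lies in the set). Conditions (1) and (2) then correspond to whether every invertible matrix is sent to $I$, or whether some $A_0\in H_2^2$ has $\varphi(A_0)\in \cC_I\setminus\{I\}$, which belongs to a unique line $\ell_{I,R}\setminus\{I\}$ (distinct lines through $I$ meet only at $I$).

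Assuming the latter, the key computation I would carry out is twofold. (a) For any invertible $A\in H_2^2$ and any $P\in\cP$, the identity $\det(A-tP)=\det(A)(1-t\,\mathrm{tr}(A^{-1}P))$ shows that $tP\sim A$ for some $t\neq 0$ precisely when $\mathrm{tr}(A^{-1}P)\neq 0$. (b) Writing $R=E_{11}$ in a suitable basis and computing directly, a point $R+bR^\perp$ with $b\neq 1$ is coherent to a projection $Q\in\cP$ if and only if $Q=R$, because $\det(R+bR^\perp-Q)=(1-c)(b-1)$ with $c=\mathrm{tr}(RQ)$. Combining (a) and (b) forces $Q_P=R$ for every $P\in\cP$ with $\mathrm{tr}(A_0^{-1}P)\neq 0$; the \emph{moreover} clause of (2) follows by applying (b) to any rank-one point $tP$ coherent to an element of $\mathcal{B}$.

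Uniqueness of $R$ would rest on a Bloch-sphere dimension count: the bad set $Z(A):=\{P\in\cP:\mathrm{tr}(A^{-1}P)=0\}$ is empty when $A\in H_2^{++}\cup H_2^{--}$ and otherwise a one-dimensional circle inside the two-dimensional sphere $\cP$. Hence, given a second $A_0'\in H_2^2$ with $\varphi(A_0')\in \ell_{I,R'}\setminus\{I\}$, I can choose $P\in\cP\setminus(Z(A_0)\cup Z(A_0'))$ to obtain $R=Q_P=R'$. For the reverse direction, I would verify the two conditions of Proposition \ref{gena}: the line condition is the hypothesis itself, while the non-coherence condition follows by a case check. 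Two rank-one points on distinct lines $\ell_{0,P}$, $\ell_{0,P'}$ (so $P\neq P'$) are never coherent by Lemma \ref{apbq}; and in case (2), an invertible element of $\mathcal{B}$ coherent to a rank-one point on $\ell_{0,P}$ would force $Q_P=R$ by the \emph{moreover} hypothesis, contradicting its supposed $\mathcal{A}_{R'}$-membership when $R'\neq R$. No invertible can lie in $\mathcal{A}_{R'}$ for $R'\neq R$ because only $I$ and points of $\ell_{I,R}\setminus\{I\}$ are in the image of $H_2^2$.

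The main obstacle I expect is the uniqueness step for $R$, since it depends on the concrete geometric observation that the vanishing locus of $P\mapsto \mathrm{tr}(A^{-1}P)$ in $\cP$ is always a proper subset, which is specific to the $2\times 2$ setting and the signature dichotomy on $H_2^2$. Everything else is bookkeeping once the determinant identity in (a) and the coherency-to-projection computation in (b) are in hand.
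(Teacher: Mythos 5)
Your proposal is correct and follows essentially the same route as the paper: reduce to the projections $Q_P$ via the structure of $\mathcal{C}_0\cap\mathcal{C}_I$, and use the fact that an invertible $A_0$ with $\varphi(A_0)\in\mathcal{C}_I\setminus\{I\}$ forces $Q_P=R$ for all $P$ outside the vanishing locus of $P\mapsto\mathrm{tr}(A_0^{-1}P)$, which is empty in the definite case and a single circle in the indefinite case (the paper's separate lemma is exactly your computation (a) specialized to $c_1E-c_2E^\perp$). You merely spell out the converse and the uniqueness-of-$R$ step, which the paper leaves to the reader or compresses into one sentence.
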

\begin{proof}
It is left to the reader to show that a mapping satisfying (1) or (2) preserves coherency. 

Assume that $\varphi\colon H_2\to \oH$ with $\varphi(0)=0$ and $\varphi(H_2\setminus \{0\})\subset\mathcal{C}_I$ is a coherency preserver. 
For each $P\in \mathcal{P}$,  $\varphi(H_2\cap\ell_{0, P}\setminus\{0\})$ is a coherent subset of $\cP$, so there is $Q_P\in \cP$ such that $\varphi(H_2\cap\ell_{0, P}\setminus\{0\})= \{Q_P\}$.

Assume in addition (1) does not hold. 
Then we may find $X\in H_2^2$ satisfying $\varphi(X)=R + cR^\perp$ with $R\in \cP$ and $c\in {\oR}\setminus\{1\}$. 
If $X>0$ or $X<0$, then for every $P\in \cP$ there is $d\in\mathbb{R}\setminus\{0\}$ such that $X\sim dP$. 
(Indeed, we know that $\mathcal{S}_{0,X}\subset H_2$ by Lemma \ref{A<B}, and that every line passing through $0$ intersects $\mathcal{S}_{0,X}$.)
This implies $\varphi(dP)=R$, hence we obtain $Q_P=R$ for every $P\in \cP$. 
Assume that $X\not>0$ and $X\not<0$. 
Then we may find $E\in \cP$ and $c_1, c_2>0$ such that $X=c_1E-c_2E^\perp$. 
By the lemma below, we have $Q_P=R$ for every $P\in \cP$ with $\mathrm{tr}\, (PE)\neq c_1/(c_1+c_2)$. 
In both cases, we have $Q_P=R$ for all points $P$ in the Bloch sphere possibly with the exception of points in one circle.
Thus we see that (2) holds in this case.
\end{proof}

\begin{lemma}
Let $E\in \cP$ and $c_1, c_2>0$. 
For each $P\in \cP$ with $\mathrm{tr}\, (PE)\neq c_1/(c_1+c_2)$, there is $d\in\mathbb{R}\setminus\{0\}$ such that $c_1E-c_2E^\perp\sim dP$. 
\end{lemma}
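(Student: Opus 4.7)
The plan is to reduce the singularity condition to a trace computation. Set $M = c_1 E - c_2 E^\perp$; since $c_1, c_2 > 0$, the matrix $M$ is invertible with eigenvalues $c_1, -c_2$, and its inverse is simply $M^{-1} = c_1^{-1} E - c_2^{-1} E^\perp$. I want to find a nonzero real $d$ with $M - dP \sim 0$, i.e. $\det(M - dP) = 0$.

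Using the factorization $M - dP = M(I - dM^{-1}P)$ and $\det(M) = -c_1 c_2 \ne 0$, the condition becomes $\det(I - dM^{-1}P) = 0$. Since $P$ has rank one, $M^{-1}P$ has rank at most one, so its eigenvalues are $\mathrm{tr}\,(M^{-1}P)$ and $0$; hence $\det(I - dM^{-1}P) = 1 - d\cdot\mathrm{tr}\,(M^{-1}P)$. Therefore a suitable $d$ exists as a finite nonzero real number precisely when $\mathrm{tr}\,(M^{-1}P) \ne 0$, in which case $d = 1/\mathrm{tr}\,(M^{-1}P)$.

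The final step is to compute this trace explicitly. Writing $t = \mathrm{tr}\,(PE)$ and using $\mathrm{tr}\,(PE^\perp) = 1-t$, one gets
\[
\mathrm{tr}\,(M^{-1}P) = \frac{t}{c_1} - \frac{1-t}{c_2} = \frac{(c_1+c_2)t - c_1}{c_1 c_2},
\]
which is nonzero if and only if $t \ne c_1/(c_1+c_2)$. This matches the hypothesis exactly, so $d = c_1 c_2/((c_1+c_2)t - c_1)$ is a well-defined nonzero real number satisfying $c_1 E - c_2 E^\perp \sim dP$.

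There is no real obstacle here; the computation is entirely routine once one notices that $c_1 E - c_2 E^\perp$ has an obvious inverse supported on the same pair of orthogonal rank-one projections, which turns the rank-drop equation into a single scalar equation via the rank-one structure of $M^{-1}P$.
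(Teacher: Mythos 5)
Your proof is correct. It reaches the same value $d = c_1c_2/((c_1+c_2)\,\mathrm{tr}\,(PE)-c_1)$ as the paper, but by a different route: the paper normalizes $E=E_{11}$, writes $P$ explicitly as a $2\times 2$ matrix with parameters $c=\mathrm{tr}\,(PE)$ and a phase, and expands $\det(c_1E-c_2E^\perp-dP)$ by hand; you instead exploit the fact that $M=c_1E-c_2E^\perp$ is invertible with the obvious inverse $c_1^{-1}E-c_2^{-1}E^\perp$, factor $M-dP=M(I-dM^{-1}P)$, and use that a rank-one $2\times2$ matrix $N$ satisfies $\det(I-dN)=1-d\,\mathrm{tr}\,N$. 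This is coordinate-free and makes the exceptional threshold transparent: $c_1/(c_1+c_2)$ is exactly where $\mathrm{tr}\,(M^{-1}P)$ vanishes, i.e.\ where the line $\{dP\,:\,d\in\mathbb{R}\}$ misses the cone with vertex $M$. It is also the same device the paper itself uses elsewhere (Lemma \ref{tr1} and the $\mathrm{tr}\,(A^{-1}P)$ argument in the proof of Theorem \ref{ftcgelem}), so your version arguably fits the surrounding text more naturally; the paper's explicit computation buys nothing extra here beyond avoiding the (trivial) verification of the formula for $M^{-1}$. All steps check out, including the observation that the resulting $d$ is automatically a nonzero real number.
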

\begin{proof}
There is no loss of generality in assuming that $E=E_{11}$. 
We may write 
\[
P=\left[ \begin{matrix} c & e^{it}\sqrt{c-c^2}  \cr e^{-it}\sqrt{c-c^2} & 1-c \cr \end{matrix} \right]
\]
for some $c\in [0,1]\setminus\{c_1/(c_1+c_2)\}$ and some $t\in [0,2\pi)$. 
Then we have 
\[
c_1E-c_2E^\perp- dP = \left[ \begin{matrix} c_1-dc & -de^{it}\sqrt{c-c^2}  \cr -de^{-it}\sqrt{c-c^2} & -c_2-d(1-c) \cr \end{matrix} \right]
\]
and thus 
\[
\det(c_1E-c_2E^\perp- dP) = (c_1-dc)(-c_2-d(1-c)) - d^2(c-c^2) = -c_1c_2+d((c_1+c_2)c-c_1),
\]
which takes the value $0$ when $d=c_1c_2/((c_1+c_2)c-c_1)\neq 0$.
\end{proof}

\begin{proposition}\label{oHoH}
A mapping $\varphi\colon \oH\to \oH$ with $\varphi(0)=0$ and $\varphi(\oH\setminus \{0\})\subset\mathcal{C}_I$ is a coherency preserver if and only if one of the following conditions holds. 
\begin{itemize}
\item There is $Q\in \cP$ satisfying $\varphi(\mathcal{C}_0\setminus\{0\})= \{Q\}$ and $\varphi(\oH\setminus \mathcal{C}_0)\subset \ell_{Q,I}$.
(In this case, we have $\varphi(\oH)\subset \cC_Q$.)
\item $\varphi(\oH\setminus\mathcal{C}_0)=\{I\}$, and for each $P\in \mathcal{P}$ there is $Q_P \in \cP$ such that $\varphi(\ell_{0, P}\setminus\{0\})= \{Q_P\}$.
\end{itemize}
\end{proposition}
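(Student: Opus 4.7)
The plan is as follows. The ``if'' direction is a straightforward case-check based on Lemma \ref{pq}: in the first case the image lies in $\{0\}\cup\cC_Q$, on which coherency is automatic; in the second case, if $\varphi(A)\not\sim\varphi(B)$ then $\varphi(A),\varphi(B)$ must both be distinct elements of $\cP$ lying on distinct lines of the form $\ell_{0,P}\setminus\{0\}$, so the corresponding $P$'s differ and Lemma \ref{pq} gives $A\not\sim B$.

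For the ``only if'' direction, I first record a structural observation: for each $P\in\cP$ the set $\ell_{0,P}\setminus\{0\}$ is coherent and each of its elements is coherent to $0$, so its $\varphi$-image is a coherent subset of $\cC_0\cap\cC_I$. Since $0\not\sim I$, neither $0$ nor $I$ lies in $\cC_0\cap\cC_I$, so by Lemma \ref{veryeasy} this intersection equals $\cP$. Because two distinct rank-one projections cannot be coherent (their difference has trace $0$ and rank $\leq 1$, forcing the nonzero eigenvalue to vanish), a coherent subset of $\cP$ is a singleton. Hence there is some $Q_P\in\cP$ with $\varphi(\ell_{0,P}\setminus\{0\})=\{Q_P\}$.

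I then split on whether $P\mapsto Q_P$ is constant. If it is constant with value $Q$, I aim for bullet 1: for each $A\in\oH\setminus\cC_0$ it suffices to produce one point of $\cC_0\setminus\{0\}$ coherent to $A$, for then $\varphi(A)\sim Q$ and $\varphi(A)\sim I$, hence $\varphi(A)\in\cC_Q\cap\cC_I=\ell_{Q,I}$ by Lemma \ref{simpll}. Producing such a point is routine by cases on $A=\oi$, $A=\infty R+aR^\perp$ with $a\neq 0$, or $A\in H_2^2$. If on the other hand there exist $P_1,P_2$ with $Q_{P_1}\neq Q_{P_2}$, I claim bullet 2 holds: $\varphi(A)=I$ for every $A\in\oH\setminus\cC_0$. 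Indeed, Lemma \ref{jojhcer} shows that every line through $I$ has the form $\{I+sP:s\in\oR\}$ for some $P\in\cP$ and meets $\cP$ in the single point $P^\perp$ (rank-one projection forces $s=-1$); so if $\varphi(A)\neq I$, then by Lemma \ref{simpll} the set $\cC_{\varphi(A)}\cap\cC_I\cap\cP$ is a singleton. Hence whenever $\varphi(A)$ is coherent to two distinct projections, $\varphi(A)=I$.

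The main technical step is to show that for every $A\in\oH\setminus\cC_0$ and every $P\in\cP$, some point of $\ell_{0,P}\setminus\{0\}$ is coherent to $A$; combined with the two witnesses $P_1,P_2$ this finishes Case II. This is where the ``extra'' points of $\oH$ beyond $H_2$ earn their keep. When $A=\oi$, use $\infty P\sim\oi$; when $A=\infty R+aR^\perp$ with $a\neq 0$, use $\infty R$ for $P=R$ and the unique real $t=a/\mathrm{tr}(R^\perp P)$ for $P\neq R$; when $A\in H_2^{++}\cup H_2^{--}$, the definiteness of $A^{-1}$ gives $tP\sim A$ for a unique nonzero real $t$. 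The only delicate case is $A\in H_2^{+-}$, in which $\mathrm{tr}(A^{-1}P)$ can vanish on a circle of projections and no real $t$ works. Writing $A=\alpha E-\beta E^\perp$ with $\alpha,\beta>0$, a short direct computation shows that both conditions $\mathrm{tr}(A^{-1}P)=0$ and $\mathrm{tr}(P^\perp A)=0$ reduce to $\mathrm{tr}(EP)=\alpha/(\alpha+\beta)$. Thus exactly when no real scalar works, $\infty P\in\ell_{0,P}\setminus\{0\}$ is coherent to $A$, completing the verification.
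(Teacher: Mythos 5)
Your ``only if'' direction follows the paper's proof almost exactly: the image of each punctured line $\ell_{0,P}\setminus\{0\}$ is a coherent subset of $\mathcal{C}_0\cap\mathcal{C}_I=\mathcal{S}_{0,I}=\mathcal{P}$, hence a singleton $\{Q_P\}$, and one splits on whether $P\mapsto Q_P$ is constant; the non-constant case is finished by noting that a point of $\mathcal{C}_I\setminus\{I\}$ is coherent to at most one projection. The only place you work harder than necessary is the ``main technical step'': since $A\not\sim 0$ forces $A\notin\ell_{0,P}$, Lemma \ref{simplll} already yields a unique point of $\ell_{0,P}$ coherent to $A$, and that point cannot be $0$. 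Your explicit case analysis (including the $H_2^{+-}$ computation showing that $\mathrm{tr}(A^{-1}P)=0$ and $\mathrm{tr}(P^\perp A)=0$ coincide) is correct, but it reproves this special instance of the lemma.

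There is one genuine slip, in the ``if'' direction for the first bullet: coherency is \emph{not} automatic on $\{0\}\cup\mathcal{C}_Q$. For instance $0$ and $I$ both lie in this set but $0\not\sim I$; more to the point, the elements of $\ell_{Q,I}\setminus\{Q\}$ are not coherent to $0$, so the image of $\varphi$ is not a coherent set. The correct verification is a case check on the location of a coherent pair $A\sim B$: if $A=0$ then $B\in\mathcal{C}_0$ and $\varphi(B)\in\{0,Q\}$ is coherent to $0$; if $A,B\in\mathcal{C}_0\setminus\{0\}$ their images coincide; if $A\in\mathcal{C}_0\setminus\{0\}$ and $B\notin\mathcal{C}_0$ then $\varphi(B)\in\ell_{Q,I}\subset\mathcal{C}_Q$ is coherent to $\varphi(A)=Q$; and if $A,B\notin\mathcal{C}_0$ both images lie on the single line $\ell_{Q,I}$. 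The point is that the non-coherent pairs present in the image (namely $0$ against $\ell_{Q,I}\setminus\{Q\}$) never occur as images of coherent pairs. A similar (smaller) omission occurs in your treatment of the second bullet, where the non-coherent image pair $(0,I)$ must also be accounted for, not only pairs of distinct projections. Both are easy to repair, and the paper itself dismisses the converse as easy, but as written your justification of the first case is based on a false claim.
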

\begin{proof}
If $\varphi\colon \oH\to \oH$ with $\varphi(0)=0$ and $\varphi(\oH\setminus \{0\})\subset\mathcal{C}_I$ is a coherency preserver, then $\varphi(\ell_{0, P}\setminus\{0\})$ is a coherent subset of $\cP$ for each $P\in \mathcal{P}$. Thus there is $Q_P\in \cP$ such that $\varphi(\ell_{0, P}\setminus\{0\})= \{Q_P\}$.

Assume that $Q_P=Q\in \cP$ for every $P\in \cP$. 
Since every element of $\oH\setminus \mathcal{C}_0$ is coherent to some point of $\cC_0\setminus\{0\}$, we see that $\varphi(\oH\setminus \mathcal{C}_0)\subset \mathcal{C}_Q$. This together with the assumption implies $\varphi(\oH\setminus \mathcal{C}_0)\subset \mathcal{C}_Q\cap \mathcal{C}_I =\ell_{Q,I}$.

Assume that $Q_{P_1}\neq Q_{P_2}$ for some pair $P_1, P_2\in \cP$. 
Observe that every element of $\oH\setminus \cC_0$ is coherent to some element of $\ell_{0,P_1}\setminus\{0\}$ and some element of $\ell_{0,P_2}\setminus\{0\}$. 
This together with 
\[
\varphi(\ell_{0, P_1}\setminus\{0\})= \{Q_{P_1}\}\neq \{Q_{P_2}\}=\varphi(\ell_{0, P_1}\setminus\{0\})
\]
and $\varphi(\oH\setminus \{0\})\subset\mathcal{C}_I$ shows that $\varphi(\oH\setminus\mathcal{C}_0)=\{I\}$. 

Thus we obtain the direct implication. The converse is easy to show.
\end{proof}

Observe that Proposition \ref{jebjo} generalizes Example \ref{varphi1}
and Proposition \ref{oHoH} generalizes Example \ref{exoh}.


\section{Coherency preservers on a closed interval}\label{ci}
The purpose of the current section is to prove Theorem B. 
Before we begin, we give some easy lemmas about intervals. 
\begin{lemma}\label{dupc}
Let $A, B\in H_2$ satisfy $A<B$, and $P\in \mathcal{S}_{A,B}$. 
Then $[A, B]\cap \cC_P =[A, P]\cup [P, B]$.
\end{lemma}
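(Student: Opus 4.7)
The plan is to observe that both inclusions reduce immediately to the structure of rank-one hermitian matrices and to basic manipulations with the Loewner order. First, Corollary \ref{A<B} gives $P\in\mathcal{S}_{A,B}\subset[A,B]$, so $A\le P\le B$ and the intervals $[A,P]$ and $[P,B]$ are well-defined subsets of $[A,B]$. The key structural fact I will use throughout is that a nonzero hermitian matrix of rank at most one has the form $tQ$ for some $t\in\mathbb{R}\setminus\{0\}$ and some $Q\in\cP$ (so that, allowing $t=0$, every such rank-$\le 1$ hermitian matrix is of the form $tQ$).

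For the inclusion $[A,P]\cup[P,B]\subset[A,B]\cap\cC_P$, take $X\in[A,P]$. Then $0\le P-X\le P-A$. Since $A\sim P$, the positive matrix $P-A$ has rank at most one, hence equals $sR$ for some $s\ge 0$ and $R\in\cP$. From $0\le P-X\le sR$ one sees that $P-X$ vanishes on the orthocomplement of the range of $R$ (testing the inequality on a unit vector $y\perp R\mathbb{C}^2$ gives $0\le(P-X)y\cdot y\le 0$, so $(P-X)y=0$ by positivity), hence $P-X$ has rank at most one and $X\sim P$. A symmetric argument applied to $B-X\le B-P$ handles the case $X\in[P,B]$.

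For the reverse inclusion, take $X\in[A,B]\cap\cC_P$. Then $X-P$ is a hermitian matrix of rank at most one, so $X-P=tQ$ for some $t\in\mathbb{R}$ and $Q\in\cP$ (with $t=0$ allowed, accounting for $X=P$). If $t\ge 0$, then $X\ge P$ and $X\le B$, so $X\in[P,B]$; if $t\le 0$, then $X\le P$ and $X\ge A$, so $X\in[A,P]$. Either way, $X\in[A,P]\cup[P,B]$.

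I do not expect any real obstacle here, since every step is a direct consequence of the classification of rank-one hermitian matrices combined with the definition of the Loewner order. As an alternative streamlining, one could first invoke Lemma \ref{Porder} to normalize to $A=0$, $B=I$, in which case $P\in\cP$, the interval $[0,P]$ becomes $\{tP\,:\,t\in[0,1]\}$, the interval $[P,I]$ becomes $\{P+tP^\perp\,:\,t\in[0,1]\}$, and both inclusions become essentially tautological; but since the general argument is already short, this normalization is not strictly necessary.
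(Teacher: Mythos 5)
Your proof is correct, and it takes a somewhat different route from the paper's. The paper first normalizes via Lemma \ref{Porder} (and a unitary conjugation) to the case $A=0$, $B=I$, $P=E_{11}$, where $[0,E_{11}]$ and $[E_{11},I]$ become explicit line segments; the forward inclusion is then immediate, and for the reverse inclusion the paper writes $X=E_{11}+tQ$ and pins down the projection $Q\in\{E_{11},E_{22}\}$ by testing against the basis vectors $e_1,e_2$. You instead work coordinate-free: for the forward inclusion you use the sandwich $0\le P-X\le P-A$ together with the rank-one positivity of $P-A$ to force $\mathrm{rank}(P-X)\le 1$ (the step $((P-X)y,y)=0\Rightarrow(P-X)y=0$ for a positive matrix is the same device the paper uses elsewhere, e.g.\ in Lemma \ref{neksej}); and for the reverse inclusion you never need to identify $Q$ at all --- the sign of $t$ in $X-P=tQ$ already decides whether $X\ge P$ or $X\le P$, which combined with $A\le X\le B$ places $X$ in $[P,B]$ or $[A,P]$. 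Your reverse-inclusion argument is genuinely shorter and cleaner than the paper's; what the paper's normalization buys is that the forward inclusion becomes a tautology and all computations are with explicit $2\times2$ entries, consistent with the style used throughout the rest of Section \ref{ci}. Both arguments are complete; no gaps.
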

\begin{proof}
By Lemma \ref{Porder}, we may assume $A=0$ and $B=I$ without loss of generality. 
Then $P\in \mathcal{S}_{0,I} =\mathcal{P}$. 
It suffices to think of the case $P=E_{11}$. 

We need to show that $[0, I]\cap \cC_{E_{11}} =[0, E_{11}]\cup [E_{11}, I]$.
Note that $[0, E_{11}] =\{tE_{11}\,:\, t\in [0,1]\}$ and that $[E_{11}, I] =\{E_{11}+tE_{22}\,:\, t\in [0,1]\}$. 
It is clear that $[0, I]\cap \cC_{E_{11}} \supset[0, E_{11}]\cup [E_{11}, I]$.
Let $X\in [0, I]\cap \cC_{E_{11}}$. 
We show that $X\in [0, E_{11}]\cup [E_{11}, I]$.
If $X=E_{11}$, then there is nothing to prove.
So, assume that $X = E_{11} + tQ$ for some rank one projection $Q$ and some nonzero real number $t$. It suffices to show that $Q\in \{E_{11}, E_{22}\}$. If $Q \not = E_{11}$, then $t >0$. Indeed, if $t < 0$, then $Q \not = E_{11}$ implies that $(Qe_2 , e_2) > 0$ and consequently, $(Xe_2 , e_2) <0$, a contradiction. We have $1 \ge (Xe_1 , e_1) = 1 + t(Qe_1, e_1)$, and consequently $Q=E_{22}$. This completes the proof.
\end{proof}

\begin{lemma}\label{0XI}
Let $A, B\in H_2$ satisfy $A<B$, and $P\in \mathcal{S}_{A,B}$. 
Then 
\begin{equation}\label{mmnn}
(A, B]\setminus \ell_{P, B}\subset \bigcup_{X\in [A, P]\setminus\{A, P\}} \cC_X
\end{equation}
 and 
\[
[A, B)\setminus \ell_{A, P}\subset \bigcup_{X\in [P, B]\setminus\{P, B\}} \cC_X
\]
hold.
\end{lemma}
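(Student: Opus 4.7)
The plan is first to normalize using Lemma~\ref{Porder} (reducing to $A=0$, $B=I$) and then a unitary similarity, which is an affine automorphism fixing $0$ and $I$, to further reduce to $P=E_{11}$. In this model $\mathcal{S}_{A,B}=\mathcal{P}$, $[A,P]\setminus\{A,P\}=\{tE_{11}:t\in(0,1)\}$, and $\ell_{P,B}=\{E_{11}+sE_{22}:s\in\oR\}$, so the first inclusion becomes the following explicit claim: for every positive definite $Y\in H_2$ with $Y\leq I$ and $Y\notin\ell_{E_{11},I}$, there exists $t\in(0,1)$ with $Y\sim tE_{11}$.

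For this I would parametrize $Y$ by its diagonal entries $a,d$ and off-diagonal entry $\alpha$; then $Y>0$ gives $a,d>0$ and $\det Y=ad-|\alpha|^2>0$, while $Y\leq I$ gives $a,d\leq 1$. The condition $Y\sim tE_{11}$ reads $\det(Y-tE_{11})=(a-t)d-|\alpha|^2=0$, which has the unique solution $t=(\det Y)/d$; this is automatically positive. The inequality $t<1$ rearranges to $(1-a)d+|\alpha|^2>0$, and since both summands are nonnegative this fails precisely when $a=1$ and $\alpha=0$, i.e., when $Y=E_{11}+dE_{22}\in\ell_{E_{11},I}$. Thus excluding $\ell_{P,B}$ is exactly what forces $t\in(0,1)$, and we are done with the first inclusion.

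The second inclusion I would deduce by symmetry. The map $\psi(X):=A+B-X$ is an affine automorphism of $\oH$ (a composition of $X\mapsto -X$ with translation by $A+B$); it reverses the matrix order, swaps $A$ and $B$, and sends $P$ to $P':=A+B-P$. A quick check using $P'-A=B-P$ and $P'-B=A-P$ shows $P'\in\mathcal{S}_{A,B}$, so applying the already proven first inclusion to $P'$ in place of $P$ gives $(A,B]\setminus\ell_{P',B}\subset\bigcup_{X\in[A,P']\setminus\{A,P'\}}\mathcal{C}_X$. Pushing this forward by $\psi$, using $\psi((A,B])=[A,B)$, $\psi(\ell_{P',B})=\ell_{P,A}=\ell_{A,P}$, $\psi([A,P']\setminus\{A,P'\})=[P,B]\setminus\{P,B\}$, together with the fact that $\psi$ preserves coherency, yields the second inclusion verbatim.

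The only delicate point in the whole argument is the equivalence ``$\det Y=d$ iff $Y\in\ell_{P,B}$'' in the normalized model, which is precisely what forces the line $\ell_{P,B}$ to be excluded in the statement; everything else is routine bookkeeping once the reduction to the standard model and the order-reversing automorphism $\psi$ are in place.
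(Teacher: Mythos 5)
Your proof is correct and follows essentially the same route as the paper: reduce to $A=0$, $B=I$, $P=E_{11}$ and establish the first inclusion by an elementary $2\times 2$ determinant computation (the paper applies the intermediate value theorem to $t\mapsto\det(X-tE_{11})$ using $\det X>0$ and $\det(X-E_{11})<0$, while you solve the linear equation in $t$ explicitly --- same substance). Your derivation of the second inclusion from the first via the order-reversing affine automorphism $X\mapsto A+B-X$, with the check that $P'=A+B-P$ again lies in $\mathcal{S}_{A,B}$, is a clean and fully valid replacement for the paper's ``the proof of the second inclusion is similar.''
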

\begin{proof}
With no loss of generality, we may again assume that $A= 0$, $B=I$, and $P=E_{11}$. 
To verify the first inclusion, let $X\in (0, I]\setminus \ell_{E_{11}, I}$.
We need to find $t\in (0,1)$ satisfying $X\sim tE_{11}$.
Since $X>0$, we have $\det X >0$.
On the other hand, we have $((X-E_{11})e_1, e_1)= (Xe_1,e_1)-1<0$ and $((X-E_{11})e_2, e_2)=(Xe_2, e_2)>0$, thus $X-E_{11}\in H_2^{+-}$. It follows that $\det (X-E_{11}) <0$. 
By the intermediate value theorem, there is $t\in (0,1)$ satisfying $\det (X- tE_{11})=0$, or equivalently, $X\sim tE_{11}$.
The proof of the second inclusion is similar.
\end{proof}
The above two lemmas clearly imply:
\begin{corollary}\label{JeDr}
Let $A, B\in H_2$ satisfy $A<B$, and let $P\in \mathcal{S}_{A, B}$. 
Then every point of $[A, B]$ is coherent to some point of $[A, P]$ and some point of $[P, B]$. 
\end{corollary}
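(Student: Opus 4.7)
The plan is to derive the corollary as an immediate case analysis combining the two preceding lemmas. Fix $Y\in[A,B]$; I will produce a point of $[A,P]$ coherent to $Y$ (the argument for $[P,B]$ is then symmetric, with the roles of $A$ and $B$ and of the two inclusions in Lemma \ref{0XI} swapped).

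First I dispose of the boundary case $Y=A$: here $Y=A$ itself lies in $[A,P]$ and is trivially coherent to itself. So I may assume $Y\in(A,B]$. Now I split on whether $Y$ lies on the line $\ell_{P,B}$. If $Y\in\ell_{P,B}$, then $Y\sim P$ (everything on a line is coherent), and $P\in[A,P]$, so we are done. If instead $Y\in(A,B]\setminus\ell_{P,B}$, the first inclusion of Lemma \ref{0XI} supplies an $X\in[A,P]\setminus\{A,P\}$ with $Y\in\mathcal{C}_X$, i.e.\ $Y\sim X$, which is the desired point of $[A,P]$.

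For the second assertion, I argue symmetrically: if $Y=B$, then $Y\in[P,B]$; if $Y\in\ell_{A,P}$, then $Y\sim P\in[P,B]$; and if $Y\in[A,B)\setminus\ell_{A,P}$, the second inclusion in Lemma \ref{0XI} delivers an $X\in[P,B]\setminus\{P,B\}$ coherent to $Y$. This exhausts all cases in both directions, so the corollary follows.

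There is no real obstacle here: the content of the corollary is essentially a repackaging of Lemma \ref{0XI}, whose two inclusions cover the interior of $[A,B]$ once one removes the lines $\ell_{P,B}$ and $\ell_{A,P}$ respectively; the only subtlety is that those removed lines, together with the endpoints $A$ and $B$, must be handled separately, but on each of these exceptional sets either $Y$ coincides with one of $A,P,B$ or it is automatically coherent to $P$, which belongs to both $[A,P]$ and $[P,B]$. (I did not need Lemma \ref{dupc} for this, but note that it could alternatively be used to handle the points of $\ell_{P,B}\cap[A,B]$ and $\ell_{A,P}\cap[A,B]$ uniformly, since Lemma \ref{dupc} identifies exactly those intersections as $[P,B]$ and $[A,P]$ respectively.)
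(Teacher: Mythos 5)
Your proof is correct and follows essentially the same route as the paper, which simply observes that Lemmas \ref{dupc} and \ref{0XI} imply the corollary; your explicit handling of the endpoints and of the exceptional lines $\ell_{P,B}$ and $\ell_{A,P}$ (where coherence to $P$ itself suffices) is exactly the intended reading.
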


\begin{remark}
When applying the above results, we will often use some straightforward consequences. 
For example, it is easy to see that (\ref{mmnn}) yields the following. 
Every point of $(A, B]$ is coherent to some point of $[A, P]\setminus\{ A\}$, and every point of $[A, B]\setminus \ell_{P, B}$ is coherent to some point of $[A, P]\setminus\{P \}$.
\end{remark}

Now, let us begin the proof of Theorem B.
Therefore, suppose that $A, B\in H_2$ satisfy $A<B$, and that $\varphi\colon [A, B]\to \oH$ is a coherency preserver that is not standard.
We need to show that $\varphi$ is either of type $(\mathcal{C})$ or $(\ell)$.
We will distinguish three possibilities: $d(\varphi (A), \varphi (B)) = 0$, or $1$, or $2$. 
We will first consider the case $d(\varphi (A), \varphi (B)) =2$ in Subsection \ref{dab2} and then apply the result there to attack the case $d(\varphi (A), \varphi (B)) \in \{0, 1\}$ in Subsections \ref{dab0}, \ref{dab1}.

\subsection{The case $d(\varphi (A), \varphi (B)) =2$}\label{dab2}
We first deal with the case $d(\varphi (A), \varphi (B)) =2$. 
We would like to show that $\varphi$ is degenerate. 
In fact, we may give a more precise statement. 
To explain it, we make the following definition. 
\begin{definition}
Let $A, B\in H_2$ satisfy $A<B$. 
Let $\varphi\colon [A, B]\to \oH$ be a coherency preserving map with $d(\varphi(A), \varphi(B))=2$. 
\begin{itemize}
\item We say that $\varphi$ satisfies \emph{Condition (o)} \index{Condition (o)} if the set $\varphi(\mathcal{S}_{A, B})$ is a singleton $\{Q\}$ and $\varphi([A, B])\subset \mathcal{C}_Q$. 
\item We say that $\varphi$ satisfies \emph{Condition (i)} \index{Condition (i)} if $\varphi([A, B))=\{\varphi(A)\}$ and there is no $P\in \mathcal{S}_{A, B}$ such that the image $\varphi(\mathcal{S}_{A, B}\setminus \{P\})$ is a singleton. 
\item We say that $\varphi$ satisfies \emph{Condition (ii)} \index{Condition (ii)} if $\varphi((A, B])=\{\varphi(B)\}$ and there is no $P\in \mathcal{S}_{A, B}$ such that the image $\varphi(\mathcal{S}_{A, B}\setminus \{P\})$ is a singleton. 
\item We say that $\varphi$ satisfies \emph{Condition (iii)} \index{Condition (iii)} if there are $P_\times\in \mathcal{S}_{A, B}$ and distinct points $Q_\circ, Q_\times\in \mathcal{S}_{\varphi(A), \varphi(B)}$ such that $\varphi([A, B]\cap \ell_{A, P_\times})\subset \ell_{\varphi(A), Q_\times}$ and $\varphi([A, B]\setminus \ell_{A, P_\times})\subset \ell_{Q_\circ, \varphi(B)}$. 
\item We say that $\varphi$ satisfies \emph{Condition (iv)} \index{Condition (iv)} if there are $P_\times\in \mathcal{S}_{A, B}$ and distinct points $Q_\circ, Q_\times\in \mathcal{S}_{\varphi(A), \varphi(B)}$ such that $\varphi([A, B]\cap \ell_{P_\times, B})\subset \ell_{Q_\times, \varphi(B)}$ and $\varphi([A, B]\setminus \ell_{P_\times, B})\subset \ell_{\varphi(A), Q_\circ}$. 
\end{itemize} 
\end{definition}

The main purpose of the current subsection is to show the following proposition.
\begin{proposition}\label{newprop}
Let $A, B\in H_2$ satisfy $A<B$. 
Let $\varphi\colon [A, B]\to \oH$ be a coherency preserving map with $d(\varphi(A), \varphi(B))=2$. 
Assume that $\varphi$ is not standard.
Then, $\varphi$ satisfies (at least) one of Conditions (o)--(iv). 
\end{proposition}

Let $\varphi$ be a mapping satisfying the assumption of Proposition \ref{newprop}.
By considering $\psi_2\circ\varphi\circ \psi_1$ instead of $\varphi$ for a pair of suitable automorphisms $\psi_1, \psi_2$ of $\oH$, we may assume that $A=0$, $B=I$, and $\varphi(0)=0$, $\varphi(I)=I$.
In this case, we have $\varphi(\cP)\subset \cP$ by Lemma \ref{veryeasy}.
Therefore, Proposition \ref{newprop} can be obtained from the following.

\begin{proposition}\label{proposition08}
Let $\varphi\colon [0, I]\to \oH$ be a coherency preserving map with $\varphi(0)=0$ and $\varphi(I)=I$. 
Assume that $\varphi$ is not standard. Then (at least) one of the following holds.
\begin{itemize}
\item The mapping $\varphi$ satisfies Condition (o), or equivalently, the set $\varphi(\mathcal{P})$ is a singleton $\{Q\}$, and $\varphi([0, I])\subset \mathcal{C}_Q$. 
\item The mapping $\varphi$ satisfies Condition (i), or equivalently, we have $\varphi([0, I))=\{0\}$, and there is no $P\in \mathcal{P}$ such that the image $\varphi(\mathcal{P}\setminus \{P\})$ is a singleton. In this case, the equality $\varphi(P+cP^\perp)=\varphi(P)$ holds for each $P\in \mathcal{P}$ and each $c\in [0, 1)$.
\item The mapping $\varphi$ satisfies Condition (ii), or equivalently, we have $\varphi((0, I])=\{I\}$, and there is no $P\in \mathcal{P}$ such that the image $\varphi(\mathcal{P}\setminus \{P\})$ is a singleton. In this case, the equality $\varphi(cP)=\varphi(P)$ holds for each $P\in \mathcal{P}$ and each $c\in (0, 1]$.
\item The mapping $\varphi$ satisfies Condition (iii), or equivalently, there are $P_\times, Q_\circ, Q_\times\in \mathcal{P}$ satisfying $Q_\circ\neq Q_\times$, $\varphi([0,I] \cap \ell_{0, P_\times})\subset \ell_{0,Q_\times}$, and $\varphi([0,I] \setminus \ell_{0, P_\times})\subset \ell_{Q_\circ,I}$.
In this case, for every $c\in [0,1]$, the set $\varphi(\mathcal{C}_{cP_\times}\cap [0, I]\setminus \ell_{0, P_\times})$ is a singleton that consists of the unique point in $\ell_{Q_\circ, I}$ that is coherent to $\varphi(cP_\times)$.
\item The mapping $\varphi$ satisfies Condition (iv), or equivalently, there are $P_\times, Q_\circ, Q_\times\in \mathcal{P}$ satisfying $Q_\circ\neq Q_\times$, $\varphi([0,I] \cap \ell_{P_\times,I})\subset \ell_{Q_\times, I}$, and $\varphi([0,I] \setminus \ell_{P_\times, I})\subset \ell_{0,Q_\circ}$.
In this case, for every $c\in [0,1]$, the set $\varphi(\mathcal{C}_{P_\times+cP_\times^\perp}\cap [0, I]\setminus \ell_{ P_\times, I})$ is a singleton that consists of the unique point in $\ell_{0,Q_\circ}$ that is coherent to $\varphi(P_\times+cP_\times^\perp)$.
\end{itemize} 
\end{proposition}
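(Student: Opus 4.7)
The plan is a case analysis driven by the value $\varphi((1/2)I)$, with Proposition \ref{unitary} playing the role of the main obstruction to standardness.

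First, I would record the structural constraints from the coherency relation. Lemma \ref{veryeasy} gives $\varphi(\mathcal{P}) \subset \mathcal{P}$, and Lemma \ref{pq} applied to the constraint $\varphi(tP) \in \mathcal{C}_0 \cap \mathcal{C}_{\varphi(P)}$ forces $\varphi(\ell_{0, P} \cap [0, I]) \subset \ell_{0, \varphi(P)}$; dually $\varphi(\ell_{P, I} \cap [0, I]) \subset \ell_{\varphi(P), I}$. If $\varphi|_\mathcal{P}$ is constant with value $Q$, then Corollary \ref{JeDr} gives that every $X \in [0, I]$ is coherent to some element of $\mathcal{P} = \mathcal{S}_{0, I}$, so $\varphi([0, I]) \subset \mathcal{C}_Q$, yielding case (o).

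For the remaining cases I assume $\varphi|_\mathcal{P}$ attains two distinct values. The key observation is that $\varphi((1/2)I) \notin (0, I)$: indeed, if $C := \varphi((1/2)I) \in (0, I)$, Corollary \ref{cororder} provides a standard automorphism $\psi$ of $\oH$ with $\psi(0) = 0$, $\psi(I) = I$, $\psi([0, I]) = [0, I]$ and $\psi((1/2)I) = C$; then $\psi^{-1} \circ \varphi$ satisfies the hypotheses of Proposition \ref{unitary}, forcing $\varphi$ to be standard, contrary to hypothesis.

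Next, for each $P \in \mathcal{P}$, I apply Lemma \ref{where} to $\varphi|_{\square_P}$: the image lies either in a cone $\mathcal{C}_{A_P}$ (in which case $A_P = \varphi(P) = \varphi(P^\perp) \in \mathcal{P}$, using $0, I \in \varphi(\square_P)$ and $\mathcal{C}_0 \cap \mathcal{C}_I = \mathcal{P}$) or in a surface $\Pi_P$ whose shape is rigidly determined by $\varphi((1/2)I)$ via Lemma \ref{svinj}. When $\varphi((1/2)I) = 0$, the coherency of $(1/2)I$ with $P + (1/2)P^\perp$ forces $\varphi(P + cP^\perp) \in \ell_{\varphi(P), I} \cap \mathcal{C}_0 = \{\varphi(P)\}$, while the interior and the ``lower'' boundary collapse to $0$: this is case (i). Dually, $\varphi((1/2)I) = I$ yields case (ii). In the remaining sub-cases, where $\varphi((1/2)I)$ lies outside $[0, I]$, the rigidity of the surface structure singles out a distinguished projection $P_\times$ such that $\varphi$ maps $\ell_{0, P_\times} \cap [0, I]$ (or dually $\ell_{P_\times, I} \cap [0, I]$) into one line and the complement in $[0, I]$ into a transverse line through $I$ (or through $0$), yielding case (iii) or (iv).

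The hardest step will be the last one: pinning down the unique distinguished $P_\times$ in cases (iii) and (iv), and verifying the identities that $\varphi(\mathcal{C}_{cP_\times} \cap [0, I] \setminus \ell_{0, P_\times})$ (respectively $\varphi(\mathcal{C}_{P_\times + cP_\times^\perp} \cap [0, I] \setminus \ell_{P_\times, I})$) is the unique point on $\ell_{Q_\circ, I}$ (respectively $\ell_{0, Q_\circ}$) coherent to $\varphi(cP_\times)$ (respectively $\varphi(P_\times + cP_\times^\perp)$). I expect this to require combining the uniqueness of lines through two coherent points (Lemma \ref{simpll}) with a careful comparison of the surfaces $\Pi_P$ across different $P$'s along the common spine $\{aI : a \in [0, 1]\}$ that lies in every $\square_P$.
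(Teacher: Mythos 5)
Your opening moves are sound and agree with the paper: $\varphi(\mathcal{P})\subset\mathcal{P}$, the reduction of the constant-on-$\mathcal{P}$ situation to case (o) via Corollary \ref{JeDr}, and the observation that $\varphi((1/2)I)\notin(0,I)$ (which is exactly the paper's use of Corollary \ref{pqyl}, since $0<C<I$ makes the triple $0,C,I$ timelike by Corollary \ref{matrixtriple}). But the case trichotomy you build on this is wrong. In cases (iii) and (iv) one computes $\varphi((1/2)I)=Q_\circ\in\mathcal{P}\subset[0,I]$ (it lies on $\ell_{Q_\circ,I}$ and is coherent to $\varphi(0)=0$, which forces it to equal $Q_\circ$), so these cases are \emph{not} characterized by ``$\varphi((1/2)I)$ lies outside $[0,I]$''; nor does $\varphi((1/2)I)=0$ by itself yield case (i). The correct replacement is the much finer containment $\varphi([0,I])\subset\bigcup_{P}(\ell_{0,\varphi(P)}\cup\ell_{\varphi(P),I})\cup\bigcap_{P}\mathcal{C}_{\varphi(P)}$, and establishing the second member of that union requires a genuinely nontrivial argument for the case where $0,\varphi((1/2)I),I$ form a spacelike triple (the paper transports to $0,(1/2)J,J$ and derives a contradiction from Lemma \ref{nnjk}, Corollary \ref{SJS}, and Lemma \ref{noauto}, concluding $\varphi(P)=\infty Q$ in the transported picture). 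Your sketch contains no trace of this step.

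The larger gap is that the entire second half of the proposition — deciding between the ``global collapse'' cases (i)/(ii) and the ``one exceptional line'' cases (iii)/(iv), proving the non-existence of a $P$ with $\varphi(\mathcal{P}\setminus\{P\})$ a singleton in (i)/(ii), and pinning down $P_\times$ with the stated fiber description in (iii)/(iv) — is left as ``the rigidity of the surface structure singles out a distinguished projection.'' This is precisely where the real work lies. The paper's mechanism is to pass to $\Phi=\psi\circ\varphi\circ\psi^{-1}$ on $\oHp$, introduce the functions $f_P,g_P$ recording $\Phi$ on the lines $\ell_{0,\infty P}$ and $\ell_{\infty P,\oi}$, and exploit the identity $f_{P_1}(t)\,\mathrm{tr}(P_1'{P_2'}^\perp)=g_{P_2}(t\,\mathrm{tr}(P_1P_2^\perp))$ to propagate the dichotomy ``values in $\{0,\infty\}$'' across all projections; the argument further needs the technical Lemma \ref{134} (the one the authors flag as failing in dimension $3$) to exclude a degenerate configuration of projections. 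Comparing the surfaces $\Pi_P$ along the spine $\{aI\}$, as you propose, does not obviously substitute for this: the spine only sees the scalar matrices, whereas the decisive information is how $\Phi$ couples lines through $0$ with lines through $\oi$ attached to \emph{different} projections. As it stands the proposal establishes case (o) and a weak necessary condition on $\varphi((1/2)I)$, but not the proposition.
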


\begin{remark}\label{iiiiv}
When Condition (iii) or (iv) holds, then one may easily see that $\varphi(\mathcal{P}\setminus \{P_\times\})=\{Q_\circ\}$ and $\varphi(P_\times)=Q_\times\neq Q_\circ$ hold.
Therefore, the point $P_\times$ (which is an ``exceptional'' point) is uniquely determined in these cases. 
\end{remark}

Using Proposition \ref{proposition08}, we may see that a mapping satisfying one of Conditions (i)--(iv) is automatically a degenerate coherency preserver. 
More precisely, the following holds.
\begin{lemma}\label{conv}
Let $\varphi\colon [0, I]\to \oH$ be a map with $\varphi(0)=0$ and $\varphi(I)=I$, $\varphi(\cP)\subset \cP$. 
If one of the following two conditions holds, then $\varphi$ is a degenerate coherency preserver of type $(\mathcal{C})$.
\begin{enumerate}
\item[(i)'] $\varphi([0, I))=\{0\}$, and $\varphi(P+cP^\perp)=\varphi(P)$ for every $P\in \mathcal{P}$ and $c\in [0, 1)$.
\item[(ii)'] $\varphi((0, I])=\{I\}$, and $\varphi(cP)=\varphi(P)$ for every $P\in \mathcal{P}$ and $c\in (0, 1]$.
\end{enumerate}
If one of the following two conditions holds, then $\varphi$ is a degenerate coherency preserver of type $(\ell)$.
\begin{enumerate}
\item[(iii)'] There are $P_\times, Q_\circ, Q_\times\in \mathcal{P}$ satisfying $Q_\circ\neq Q_\times$, $\varphi([0,I] \cap \ell_{0, P_\times})\subset \ell_{0,Q_\times}$, and $\varphi([0,I] \setminus \ell_{0, P_\times})\subset \ell_{Q_\circ,I}$. Moreover, for every $c\in [0,1]$, the set $\varphi(\mathcal{C}_{cP_\times}\cap [0, I]\setminus \ell_{0, P_\times})$ is a singleton that consists of the unique point in $\ell_{Q_\circ, I}$ that is coherent to $\varphi(cP_\times)$.
\item[(iv)'] There are $P_\times, Q_\circ, Q_\times\in \mathcal{P}$ satisfying $Q_\circ\neq Q_\times$, $\varphi([0,I] \cap \ell_{P_\times,I})\subset \ell_{Q_\times, I}$, and $\varphi([0,I] \setminus \ell_{P_\times, I})\subset \ell_{0,Q_\circ}$. Moreover, for every $c\in [0, 1]$, the set $\varphi(\mathcal{C}_{P_\times+cP_\times^\perp}\cap [0, I]\setminus \ell_{P_\times, I})$ is a singleton that consists of the unique point in $\ell_{0,Q_\circ}$ that is coherent to $\varphi(P_\times+cP_\times^\perp)$.
\end{enumerate} 
\end{lemma}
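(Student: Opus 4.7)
The plan is to check, in each of the four cases, two things: (a) that the stated inclusions of the image into a cone (cases (i)--(ii)) or into a union of two lines (cases (iii)--(iv)) force $\varphi$ into the claimed degeneracy type, and (b) that $\varphi$ actually preserves coherency. Claim (a) is almost immediate in every case: for (i), since $\varphi(\mathcal{P})\subset\mathcal{P}$ and every rank-one projection is coherent to $0$, we get $\varphi([0,I]\setminus\{I\})\subset\{0\}\cup\varphi(\mathcal{P})\subset\mathcal{C}_0$, together with $0=\varphi(0)\in\varphi([0,I])$, giving type $(\mathcal{C})$; (ii) is the same picture centered at $I$ instead of $0$; and (iii)--(iv) are verbatim the definition of type $(\ell)$ with $\ell=\ell_{0,P_\times}$ and $\ell'=\ell_{Q_\circ,I}$ (resp.\ $\ell=\ell_{P_\times,I}$ and $\ell'=\ell_{0,Q_\circ}$). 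So the substance of the lemma is (b).

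For case (ii) I would take $X,Y\in[0,I]$ with $X\sim Y$ and split on whether each lies in $(0,I]$ or in the complementary set $\{cP:P\in\mathcal{P},\ c\in[0,1]\}$. The only non-trivial sub-case is $X=cP$, $Y=dQ$ with $c,d>0$: then Lemma \ref{pq} forces $P=Q$, so $\varphi(X)=\varphi(P)=\varphi(Q)=\varphi(Y)$. The mixed case $X\in(0,I]$, $Y=cP$ reduces to noting that $Y=0$ is excluded (else $X\sim0$ with $X>0$ is impossible), and then $\varphi(X)=I$ is coherent to $\varphi(P)\in\mathcal{P}$. Case (i) is completely parallel via the substitution $X\mapsto I-X$: if $X=P+cP^\perp$ and $Y=Q+dQ^\perp$ with $c,d\in[0,1)$, then $I-X=(1-c)P^\perp$ and $I-Y=(1-d)Q^\perp$ are in the form governed by Lemma \ref{pq}, so $X\sim Y$ forces $P^\perp=Q^\perp$, i.e.\ $P=Q$; the remaining mixed subcase where $X=I$ and $Y\in[0,I)$ is vacuous because $I-Y>0$ has rank $2$.

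For case (iii) I would partition $[0,I]$ into $L_1:=[0,I]\cap\ell_{0,P_\times}=\{cP_\times:c\in[0,1]\}$ and $L_2:=[0,I]\setminus\ell_{0,P_\times}$. If $X,Y\in L_1$ then $\varphi(X),\varphi(Y)\in\ell_{0,Q_\times}$ and hence are coherent; if $X,Y\in L_2$ then $\varphi(X),\varphi(Y)\in\ell_{Q_\circ,I}$ and are again coherent. If $X=cP_\times\in L_1$ and $Y\in L_2$ with $X\sim Y$, then $Y\in\mathcal{C}_{cP_\times}\cap[0,I]\setminus\ell_{0,P_\times}$, so the third bullet of hypothesis (iii) defines $\varphi(Y)$ to be the unique point of $\ell_{Q_\circ,I}$ coherent to $\varphi(cP_\times)=\varphi(X)$, which is exactly what we need. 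Case (iv) is handled by applying the case (iii) argument to the map $X\mapsto I-\varphi(I-X)$, which sends the hypotheses of (iv) to those of (iii).

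I do not anticipate any serious obstacle; the only delicate point is the uniform reduction, via $X\mapsto I-X$, of the ``coherent pair on the top face'' computation in case (i) to Lemma \ref{pq}, but this is just a rewriting.
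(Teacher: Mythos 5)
Your proposal is correct and follows essentially the same route as the paper: cases (i)--(ii) reduce to the observation that rank-one pieces force equal $\cP$-images via Lemma \ref{pq} (the paper delegates this to Proposition \ref{jebjo}(1), whose verification is in turn the computation you carry out), case (i) and case (iv) are obtained from (ii) and (iii) by the involution $X\mapsto I-\varphi(I-X)$, and cases (iii)--(iv) use exactly your partition into $\ell_{0,P_\times}$ and its complement, which the paper states in contrapositive form. No gaps.
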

\begin{proof}
If (ii)' holds, then $\varphi$ is the restriction of a mapping described in (1) of Proposition \ref{jebjo}. 
Thus $\varphi$ is a degenerate coherency preserver of type $(\mathcal{C})$. 
If (i)' holds, then one may check that the mapping $X\mapsto I-\varphi(I-X)$ is of the form (ii)', so we again see that $\varphi$ is a degenerate coherency preserver of type $(\mathcal{C})$.

Assume that (iii)' holds 
and that $A, B\in [0,I]$ satisfy $\varphi(A)\not\sim\varphi(B)$. 
We see that one of $A, B$ lies in $[0,I]\cap \ell_{0,P_\times}$ and the other is in $[0,I]\setminus \ell_{0,P_\times}$, so (iii)' clearly implies that $A\not\sim B$. 
Thus $\varphi$ is a coherency preserver. 
It is apparent that $\varphi$ is of type $(\ell)$. 
Similarly, we see that $\varphi$ is a degenerate coherency preserver of type $(\ell)$ if (iv)' holds.
\end{proof}

Proposition \ref{proposition08} together with Lemma \ref{conv} implies the following.
\begin{corollary}\label{d=2}
Let $A, B\in H_2$ satisfy $A<B$. 
Let $\varphi\colon [A, B]\to \oH$ be a coherency preserving map with $d(\varphi(A), \varphi(B))=2$. 
Assume that $\varphi$ is not standard.
Then $\varphi$ is degenerate.
\end{corollary}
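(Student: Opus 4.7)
The strategy is to reduce to the normalized setting where the domain is $[0,I]$ and both endpoints are fixed by $\varphi$, then invoke Proposition~\ref{proposition08} and Lemma~\ref{conv}, and finally transfer the conclusion back to the original $\varphi$. For the domain side, Lemma~\ref{Porder} supplies an affine automorphism $\alpha$ of $\oH$ with $\alpha([0,I])=[A,B]$, $\alpha(0)=A$, and $\alpha(I)=B$. For the codomain side, since $d(\varphi(A),\varphi(B))=2$, Lemma~\ref{pocelo} yields a standard automorphism $\gamma$ of $\oH$ satisfying $\gamma(\varphi(A))=0$ and $\gamma(\varphi(B))=\oi$; composing with the standard automorphism $\delta(X):=(X^{-1}+I)^{-1}$ (inversion, translation by $I$, inversion), which satisfies $\delta(0)=0$ and $\delta(\oi)=I$, produces a standard automorphism $\beta:=\delta\circ\gamma$ with $\beta(\varphi(A))=0$ and $\beta(\varphi(B))=I$.

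Set $\tilde\varphi:=\beta\circ\varphi\circ\alpha\colon [0,I]\to\oH$. This is a coherency preserver with $\tilde\varphi(0)=0$ and $\tilde\varphi(I)=I$, and it fails to be standard, for otherwise $\varphi=\beta^{-1}\circ\tilde\varphi\circ\alpha^{-1}|_{[A,B]}$ would extend to a standard automorphism as well, standard automorphisms being a group by Corollary~\ref{corp}. Proposition~\ref{proposition08} therefore places $\tilde\varphi$ into one of the cases (o), (i), (ii), (iii), (iv). In case (o), $\tilde\varphi(\mathcal{P})=\{Q\}\subset\tilde\varphi([0,I])\subset\mathcal{C}_Q$, so choosing any $A_0\in[0,I]$ and $B_0:=Q$ verifies that $\tilde\varphi$ is of type~$(\mathcal{C})$ directly from the definition. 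In cases (i)--(iv), Lemma~\ref{conv} immediately yields that $\tilde\varphi$ is degenerate.

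The final step is to pull the conclusion back to $\varphi$. If $\tilde\varphi([0,I]\setminus\{A_0\})\subset\mathcal{C}_{B_0}$ with $B_0\in\tilde\varphi([0,I])$, then
\[
\varphi([A,B]\setminus\{\alpha(A_0)\})=\beta^{-1}\bigl(\tilde\varphi([0,I]\setminus\{A_0\})\bigr)\subset\beta^{-1}(\mathcal{C}_{B_0})=\mathcal{C}_{\beta^{-1}(B_0)},
\]
with $\beta^{-1}(B_0)\in\varphi([A,B])$; here we use that every automorphism of $\oH$ sends cones to cones. Analogously, if $\tilde\varphi([0,I]\setminus\ell)\subset\ell'$ for lines $\ell,\ell'\subset \oH$, then $\varphi([A,B]\setminus\alpha(\ell))\subset\beta^{-1}(\ell')$, and $\alpha(\ell),\beta^{-1}(\ell')$ are again lines. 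In either case $\varphi$ is degenerate. The main obstacle of the overall theorem sits inside Proposition~\ref{proposition08}, whose proof is carried out elsewhere; for the corollary itself the only real task is to select the right normalizing automorphisms and to check that both forms of degeneracy are invariant under conjugation by automorphisms of $\oH$.
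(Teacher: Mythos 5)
Your proposal is correct and follows essentially the same route as the paper, which derives Corollary~\ref{d=2} directly from Proposition~\ref{proposition08} and Lemma~\ref{conv} after the same normalization to $[0,I]$ with $\varphi(0)=0$, $\varphi(I)=I$. The only difference is that you spell out the normalizing automorphisms and the invariance of both types of degeneracy under conjugation, details the paper leaves implicit.
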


\begin{remark}\label{added}
In Proposition \ref{proposition08}, if $j,k\in \{o,i,ii,iii,iv\}$ with $j\neq k$ and $\{j,k\}\neq \{iii,iv\}$, then it is clear that Conditions (j) and (k) cannot be fulfilled simultaneously. 
However, it can happen that (iii) and (iv) are fulfilled simultaneously, in which case $\varphi([0,I])$ consists of exactly four points, and $\varphi$ is of both types $(\mathcal{C})$ and $(\ell)$. To see this, we may assume with no loss of generality that $P_\times = Q_\times = E_{11}$. (Note that the points $P_\times$, $Q_{\circ}$, $Q_\times$ in (iii) are the same as those in (iv), see Remark \ref{iiiiv}.) If $\varphi$ satisfies both (iii) and (iv), then all elements of the set $\mathcal{A} = [0,I] \setminus (\ell_{0, E_{11}} \cup \ell _{E_{11}, I})$ are mapped into the intersection of lines $\ell_{0, Q_{\circ}}$ and $\ell_{Q_{\circ}, I}$. Thus, we have
\begin{equation}\label{iiiiv1}
\varphi (0) = 0, \ \ \ \varphi (I) = I, \ \ \ \varphi (E_{11}) = E_{11}, \ \ \text{and}\ \ 
\varphi (X) = Q_{\circ},\ \ X \in \mathcal{A}.
\end{equation}
It remains to consider the $\varphi$-images of elements of the set 
\[
\{ tE_{11}\, : \, 0 < t < 1 \} \cup \{ E_{11} + tE_{22}\, : \, 0 < t < 1 \}.
\]
For every $t \in (0,1)$, the matrix $tE_{11}$ is coherent to $0$, $E_{11}$, and some $X \in \mathcal{A}$. It follows that 
\begin{equation}\label{iiiiv2}
\varphi (t E_{11}) = 0,\ \  0 < t < 1.
\end{equation}
Similarly, for every $t \in (0,1)$, the matrix $E_{11} + tE_{22}$ is coherent to $E_{11}$, $I$, and some $X \in \mathcal{A}$. It follows that 
\begin{equation}\label{iiiiv3}
\varphi (E_{11} + t E_{22}) = I,\ \  0 < t < 1.
\end{equation}
Hence, the image of $\varphi$ consists of points $0,E_{11}, Q_{\circ}, I$. 
Because $\varphi ( [0,I] \setminus \{ E_{11} \} ) = \{ 0,I, Q_{\circ} \} \subset \mathcal{C}_{Q_{\circ}}$, the map $\varphi$ is also of type $(\mathcal{C})$. 
Conversely, it is easy to see that $\varphi\colon [0,I]\to \oH$ defined by \eqref{iiiiv1}, \eqref{iiiiv2}, and \eqref{iiiiv3} satisfies (iii) and (iv) simultaneously.
\end{remark}

Since the proof of Proposition \ref{proposition08} is rather long, we separate it into claims. 
In what follows, we assume that $\varphi\colon [0, I]\to \oH$ is a coherency preserving map with $\varphi(0)=0$, $\varphi(I)=I$. We also assume that $\varphi$ is not standard. 

\begin{claim}
We have
\begin{equation}\label{hashi}
\varphi([0, I]\setminus (0, I))\subset \bigcup_{P\in \cP}(\ell_{0, \varphi(P)}\cup\ell_{\varphi(P),I}).
\end{equation}
\end{claim}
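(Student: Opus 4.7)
The plan is to argue by a short case analysis on the possible forms of $X\in[0,I]\setminus(0,I)$. Such an $X$ is a $2\times 2$ hermitian matrix with $0\le X\le I$ where at least one of the strict inequalities $0<X$ or $X<I$ fails. Since $2\times 2$ hermitian matrices have just two eigenvalues, this is equivalent to saying that $0\in\sigma(X)$ or $1\in\sigma(X)$. Thus every such $X$ falls in one of the following cases: $X=0$; $X=I$; $X=P$ for some $P\in\cP$; $X=cP$ for some $P\in\cP$ and $c\in(0,1)$; or $X=P+cP^\perp$ for some $P\in\cP$ and $c\in(0,1)$.

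The strategy in every case is the same. If $0$ is an eigenvalue of $X$ then there is $P\in\cP$ with $X\sim 0$ and $X\sim P$, so $\varphi(X)\sim\varphi(0)=0$ and $\varphi(X)\sim\varphi(P)$; by Lemma \ref{veryeasy} we have $\varphi(P)\in\cP$, hence $d(0,\varphi(P))=1$, and Lemma \ref{simpll} identifies $\mathcal{C}_0\cap\mathcal{C}_{\varphi(P)}$ with the line $\ell_{0,\varphi(P)}$, giving $\varphi(X)\in\ell_{0,\varphi(P)}$. If $1$ is an eigenvalue of $X$, the symmetric argument using $X\sim I$ and $X\sim P$ yields $\varphi(X)\in \mathcal{C}_{\varphi(P)}\cap\mathcal{C}_I=\ell_{\varphi(P),I}$. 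The two extreme points $X=0$ and $X=I$ are handled trivially by noting $0\in\ell_{0,\varphi(P)}$ and $I\in\ell_{\varphi(P),I}$ for any choice of $P\in\cP$, while $X=P$ itself gives $\varphi(P)\in\ell_{0,\varphi(P)}$.

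Combining these cases yields the inclusion \eqref{hashi}. No step presents any difficulty; the claim is essentially a bookkeeping observation that records the coherence constraints inherited from the boundary of $[0,I]$ before the subsequent, more delicate analysis.
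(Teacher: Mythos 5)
Your proof is correct and follows essentially the same route as the paper: the paper simply observes that every $A\in[0,I]\setminus(0,I)$ is coherent to some $P\in\cP$ and to one of $0,I$, then uses $\varphi(\cP)\subset\cP$ to place $\varphi(A)$ on $\ell_{0,\varphi(P)}$ or $\ell_{\varphi(P),I}$. Your explicit eigenvalue case analysis just spells out that observation in more detail.
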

\begin{proof}
Observe that every $A\in [0, I]\setminus (0, I)$ is coherent to some $P\in \cP$ and one of $0, I$, so that we have either $0=\varphi(0)\sim \varphi(A)\sim\varphi(P)$ or $I=\varphi(I)\sim \varphi(A)\sim\varphi(P)$.
Since $\varphi(\cP)\subset \cP$, we obtain \eqref{hashi}.
\end{proof}

Let $A\in (0, I)$. We show 
\[
\varphi(A)\in \bigcup_{P\in \cP}(\ell_{0, \varphi(P)}\cup\ell_{\varphi(P), I}) \cup \bigcap_{P\in \cP} \cC_{\varphi(P)}.
\] 
By Lemma \ref{order}, there is an automorphism $\psi$ of $\oH$ such that 
\[
\psi([0, I])=[0, I],\ \ \psi(0)=0,\ \ \psi(I)=I,\ \  \text{and}\ \ \psi((1/2)I)=A.
\] 
By considering $\varphi\circ\psi$ instead of $\varphi$, we may assume $A=(1/2)I$ without loss of generality.  
By Corollary \ref{pqyl}, the triple $\varphi(0)=0, \varphi((1/2)I), \varphi(I)=I$ is not in timelike position. 
Thus we see that one of the following holds: 
\begin{itemize}
\item Either $\varphi((1/2)I)\in \cC_0\cup \cC_I$, or
\item the triple $0, \varphi((1/2)I), I$ is in spacelike position. 
\end{itemize}

\begin{claim}
If $\varphi((1/2)I)\in \cC_0\cup \cC_I$, then we have
\[
\varphi((1/2)I)\in \bigcup_{P \in \mathcal{P}} \left( \ell_{0, \varphi (P)} \cup \ell_{\varphi (P), I}  \right).
\]
\end{claim}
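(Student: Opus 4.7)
The statement splits into two symmetric cases depending on whether $\varphi((1/2)I)\in\mathcal{C}_0$ or $\varphi((1/2)I)\in\mathcal{C}_I$. These two cases are interchanged by the coherency preserver $X\mapsto I-\varphi(I-X)$ on $[0,I]$, which also fixes $0$ and $I$ and maps $(1/2)I$ to $I-\varphi((1/2)I)$; so it suffices to handle $\varphi((1/2)I)\in\mathcal{C}_0$. Write $\varphi((1/2)I)=tQ$ for some $Q\in\mathcal{P}$ and $t\in\oR$. If $t=0$, then $\varphi((1/2)I)=0\in\ell_{0,\varphi(P)}$ for every $P\in\mathcal{P}$ and we are done. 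Otherwise, with $t\neq 0$, the reduced goal is to produce a single $P_0\in\mathcal{P}$ with $\varphi(P_0)=Q$, since then $\varphi((1/2)I)=tQ=t\,\varphi(P_0)\in\ell_{0,\varphi(P_0)}$.

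Fix an arbitrary $P\in\mathcal{P}$ and inspect the images $\varphi((1/2)P)$ and $\varphi((1/2)P^\perp)$. Each of $(1/2)P$ and $(1/2)P^\perp$ is coherent to both $0$ and $(1/2)I$, so each image has the form $uR$ for some $R\in\mathcal{P}$, $u\in\oR$, and is coherent to $tQ$. If the scalar $u$ is nonzero for either one, then Lemma \ref{pq} forces $R=Q$; combining this with $P\sim(1/2)P$ (respectively $P^\perp\sim(1/2)P^\perp$) together with $\varphi(P),\varphi(P^\perp)\in\mathcal{P}$ and one further application of Lemma \ref{pq} yields $\varphi(P)=Q$ (respectively $\varphi(P^\perp)=Q$), finishing the proof.

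The remaining subcase is $\varphi((1/2)P)=\varphi((1/2)P^\perp)=0$. Here I introduce the auxiliary witness $A:=P+(1/2)P^\perp\in[0,I]$, which is simultaneously coherent to $(1/2)P^\perp$, $P$, $(1/2)I$, and $I$ (each difference is of rank $1$, as a direct check shows). Applying $\varphi$ gives $\varphi(A)\sim 0,\varphi(P),tQ,I$. The coherences with $0$ and $I$ force $\varphi(A)\in\mathcal{P}$ by Lemma \ref{veryeasy}; coherence with $tQ$, where $t\neq 0$, then forces $\varphi(A)=Q$ by Lemma \ref{pq}; and coherence with $\varphi(P)\in\mathcal{P}$ yields $\varphi(P)=\varphi(A)=Q$.

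The main obstacle is precisely this last subcase: when the natural probes $(1/2)P$ and $(1/2)P^\perp$ both collapse to $0$, the direct chain of reasoning via Lemma \ref{pq} stalls because coherency is not transitive. The resolution is the slightly non-obvious choice of the element $A=P+(1/2)P^\perp$, engineered so that the coherence $A\sim(1/2)P^\perp$ indirectly recovers $\varphi(A)\sim 0$, while its other coherences with $I$, $(1/2)I$, and $P$ simultaneously pin down $\varphi(A)$ as a rank one projection equal to both $Q$ and $\varphi(P)$.
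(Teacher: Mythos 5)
Your proof is correct and follows essentially the same route as the paper: reduce by symmetry to $\varphi((1/2)I)=tQ\in\mathcal{C}_0$, probe with $(1/2)P$ (and $(1/2)P^\perp$), and when the probe collapses to $0$, recover a projection mapping to $Q$ via the auxiliary element $P+(1/2)P^\perp$, which is exactly the paper's element $Q^\perp+(1/2)Q$ with the roles of $P$ and $P^\perp$ swapped. The only difference is cosmetic bookkeeping in the case split; all the Lemma~\ref{pq} and Lemma~\ref{veryeasy} applications match the paper's.
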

\begin{proof}
Assume that  $\varphi((1/2)I)\in \cC_0\setminus \{0\}$. 
Then there are $a\in {\oR}\setminus \{0\}$ and $P_0\in \cP$ such that $\varphi((1/2)I)=aP_0$. 
Fix any $Q\in \cP$. 
We have $0\sim (1/2)Q\sim (1/2)I$, thus 
\[
0=\varphi(0)\sim \varphi((1/2)Q)\sim \varphi((1/2)I)=aP_0.
\] 
It follows that $\varphi((1/2)Q)=bP_0$ for some $b\in {\oR}$. 
If $b\neq 0$, then $\varphi(Q)=P_0$ because 
\[
bP_0=\varphi((1/2)Q)\sim \varphi(Q)\in \cP.
\] 
If $b= 0$, then $\varphi(Q^\perp+(1/2)Q)=P_0$ because 
\[
aP_0=\varphi((1/2)I)\sim \varphi(Q^\perp+(1/2)Q)\sim \varphi((1/2)Q)=0
\] 
and $\varphi(Q^\perp+(1/2)Q)\sim \varphi(I)=I$. 
This in turn shows $\varphi(Q^\perp)=P_0$ because 
\[
P_0=\varphi(Q^\perp+(1/2)Q)\sim \varphi(Q^\perp)\in \cP.
\] 
In both cases, we see that $\varphi((1/2)I)=aP_0\in  \bigcup_{P\in \cP}\ell_{0, \varphi(P)}$.
Similarly, we obtain  $\varphi((1/2)I)\in  \bigcup_{P\in \cP}\ell_{\varphi(P), I}$ when $\varphi((1/2)I)\in \cC_I\setminus\{I\}$.
\end{proof}

As before, we use the symbol $J = \left[ \begin{matrix} 1 & 0 \cr 0 & -1 \cr \end{matrix} \right]$.

\begin{claim}
If the triple $0, \varphi((1/2)I), I$ is in spacelike position, then
\begin{equation}\label{cccc}
\varphi((1/2)I)\in \bigcap_{P\in \cP} \cC_{\varphi(P)}.
\end{equation}
\end{claim}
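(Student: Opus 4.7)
The plan is to fix an arbitrary $P\in\mathcal{P}$ and show $\varphi((1/2)I)\sim\varphi(P)$. By Lemma~\ref{veryeasy}, $\varphi(P),\varphi(P^\perp)\in\mathcal{P}$, while the spacelike hypothesis gives $d(\varphi((1/2)I),0)=d(\varphi((1/2)I),I)=2$. I apply Lemma~\ref{where} to $\varphi$ restricted to $\square_P$: the image lies either in a cone $\cC_X$ or in a surface $\Sigma$.

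In the cone case, $0,I\in\cC_X$ together with $d(0,I)=2$ forces $X\in\mathcal{P}$ (Lemma~\ref{veryeasy}); since $\varphi(P)\in\mathcal{P}$ is coherent to $X$, and two distinct rank-one projections can never be coherent (their difference would be a trace-zero Hermitian matrix of rank one, which is impossible), we conclude $X=\varphi(P)$, so $\varphi((1/2)I)\in\cC_{\varphi(P)}$ as desired.

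In the surface case I will derive a contradiction. Necessarily $\varphi(P)\neq\varphi(P^\perp)$ (otherwise the cone argument above applies with $X=\varphi(P)$). Choose a unitary $U$ with $U\varphi(P)U^\ast=E_{11}$ and compose $\varphi$ with $X\mapsto U X U^\ast$; then $\varphi(P)=E_{11}$, $\varphi(P^\perp)=E_{22}$, and the skew lines $\ell_{0,E_{11}},\ell_{E_{22},I}\subset\Sigma$ (Corollary~\ref{ilja}) combined with Lemma~\ref{standsurface} force $\Sigma$ to be the diagonal surface \eqref{ds}. Writing $\varphi((1/2)I)=a_0E_{11}+b_0E_{22}$, the coherencies of $(1/2)I$ with the four midpoints of $\square_P$, combined with $\varphi(tP)\in\ell_{0,E_{11}}$, $\varphi(tP^\perp)\in\ell_{0,E_{22}}$, $\varphi(P+tP^\perp)\in\ell_{E_{11},I}$, $\varphi(tP+P^\perp)\in\ell_{E_{22},I}$, fix the midpoint images as $\varphi((1/2)P)=a_0E_{11}$, $\varphi((1/2)P^\perp)=b_0E_{22}$, $\varphi(P+(1/2)P^\perp)=E_{11}+b_0E_{22}$, $\varphi((1/2)P+P^\perp)=a_0E_{11}+E_{22}$, and show $a_0,b_0\in\mathbb{R}\setminus\{0,1\}$.

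To produce the contradiction, I apply Lemma~\ref{nnjk}: for any $Q\in\mathcal{P}$ with $\mathrm{tr}(PQ)=1/2$, the matrix $A=Q+(1/3)Q^\perp\in[0,I]$ satisfies $A\sim(1/2)P,(1/2)P^\perp,I$, and trivially $A\sim Q$. Hence $\varphi(A)$ must lie in $\cC_{a_0E_{11}}\cap\cC_{b_0E_{22}}\cap\cC_I\cap\cC_{\varphi(Q)}$. Writing a candidate $\varphi(A)\in H_2$ in matrix form and solving the three determinantal equations for the first three cones produces an expression in terms of $a_0,b_0$ whose off-diagonal modulus squared is $a_0b_0(1-a_0)(1-b_0)/(a_0+b_0-a_0b_0)^2$; positivity of this quantity, together with a parallel analysis of the case $\varphi(A)\in\oH\setminus H_2$ (using the trace criterion $\mathrm{tr}(R^\perp A')=c$ for coherency with $\infty R+cR^\perp$), and the coherency $\varphi(A)\sim\varphi(Q)\in\mathcal{P}$ as $Q$ varies over the one-parameter circle $\{Q\in\mathcal{P}:\mathrm{tr}(PQ)=1/2\}$, produce incompatible constraints. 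The main obstacle is this final compatibility check: it requires tracking signs of $a_0b_0(1-a_0)(1-b_0)$ and handling the $\oH\setminus H_2$ case, where $\varphi(Q)$ is forced to equal a single projection for every $Q$ on the circle, which can then be contradicted by further coherency relations obtained by testing against elements like $(1/2)Q$.
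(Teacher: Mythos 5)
Your cone branch is correct: if $\varphi(\square_P)\subset\cC_X$, then $X\sim 0,I$ forces $X\in\cP$ by Lemma \ref{veryeasy}, and since two distinct rank-one projections are never coherent, $X=\varphi(P)$ and you are done. The surface branch, however, has genuine gaps. First, you assert that after conjugating by a unitary $U$ with $U\varphi(P)U^\ast=E_{11}$ you also get $\varphi(P^\perp)=E_{22}$; this amounts to claiming $\varphi(P^\perp)=\varphi(P)^\perp$, which is not available here. In the timelike case that identity is extracted from $\varphi((1/2)I)=(1/2)I$ (Claim \ref{0121}), and that is precisely the normalization you do not have in the spacelike case. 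Without it the surface containing $0$, $I$, $\varphi(P)$, $\varphi(P^\perp)$ need not be the diagonal surface, so the ansatz $\varphi((1/2)I)=a_0E_{11}+b_0E_{22}$ and the entire determinant computation that follows have no foundation. Second, your parenthetical dismissal of the sub-case $\varphi(P)=\varphi(P^\perp)$ (``otherwise the cone argument applies'') is unjustified: Lemma \ref{where} only says the image lies in a cone \emph{or} a surface, and knowing $\varphi(P)=\varphi(P^\perp)$ in the surface branch does not place the image in a cone; that sub-case still needs an argument, and the conclusion there is not a contradiction but the coherency $\varphi((1/2)I)\sim\varphi(P)$ itself. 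Third, the decisive step --- that the constraints coming from the circle $\{Q:\mathrm{tr}(PQ)=1/2\}$ are ``incompatible'' --- is exactly what you acknowledge as the main obstacle and never carry out, including the $\oH\setminus H_2$ alternative; as written, no contradiction has been derived.

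For comparison, the paper's proof bypasses the cone/surface dichotomy entirely. It normalizes the \emph{codomain}: by Example \ref{IJ} there is an automorphism $\psi_3$ with $\varphi'=\psi_3\circ\varphi$ satisfying $\varphi'(0)=0$, $\varphi'((1/2)I)=(1/2)J$, $\varphi'(I)=J$. Assuming $\varphi'(P)\in H_2$, three applications of the unique-coherent-point observation (Remark \ref{ceruk}) pin down $\varphi'((1/2)P)=(1/2)\varphi'(P)$, $\varphi'(P+(1/2)P^\perp)=(1/2)\varphi'(P)+(1/2)J$, and $\varphi'((1/2)P^\perp)=(1/2)J-(1/2)\varphi'(P)$; then the nonempty set $\cC_I\cap\cC_{(1/2)P}\cap\cC_{(1/2)P^\perp}$ of Lemma \ref{nnjk} must map into $\cC_J\cap\cC_{(1/2)\varphi'(P)}\cap\cC_{(1/2)J-(1/2)\varphi'(P)}$, which is empty by Corollary \ref{SJS} and Lemma \ref{noauto}. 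Hence $\varphi'(P)=\infty Q$ with $Q^\perp JQ^\perp=0$, and then $(1/2)J\sim\infty Q$ gives the claim. If you want to salvage your approach, you would need to replace the diagonal-surface reduction by an argument that works for an arbitrary surface through $0$, $I$, $\varphi(P)$, $\varphi(P^\perp)$, treat the sub-case $\varphi(P)=\varphi(P^\perp)$ separately (where the correct output is the coherency, not a contradiction), and actually complete the compatibility computation.
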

\begin{proof}
Assume that the triple $0, \varphi((1/2)I), I$ is in spacelike position.
One may find an automorphism $\psi_3$ of $\oH$ such that the coherency preserver $\varphi':=\psi_3\circ \varphi\colon [0, I]\to \oH$ satisfies 
\[
\varphi'(0)=0,\ \  \varphi'((1/2)I)=(1/2)J,\ \ \text{and} \ \ \varphi'(I)=J
\]
(see Example \ref{IJ}). 

Let $P\in \cP$. 
Then we have $\varphi'(P)\in {\mathcal{S}}_{\varphi'(0), \varphi'(I)} = {\mathcal{S}}_{0, J}$.
Assume that $\varphi'(P)\in H_2$. 
We aim to obtain a contradiction.
Observe that $\varphi'((1/2)P)$ is the unique point (see Remark \ref{ceruk}) that is coherent to the three points 
\[
\varphi'(0)=0,\ \ \varphi'(P),\ \ \text{and}\ \  \varphi'((1/2)I)=(1/2)J.
\] 
Since $(1/2)\varphi'(P)\sim 0, \varphi'(P), (1/2)J$, we obtain $\varphi'((1/2)P)=(1/2)\varphi'(P)$. 
Similarly, $\varphi'(P+(1/2)P^\perp)$ is the unique point that is coherent to the three points 
\[
\varphi'(P),\ \ \varphi'(I)=J,\ \ \text{ and }\varphi'((1/2)I)=(1/2)J,
\]
which leads to $\varphi'(P+(1/2)P^\perp)=(1/2)\varphi'(P)+(1/2)J$. 
Then, the fact that $\varphi'((1/2)P^\perp)$ is the unique point that is coherent to the three points 
\[
\varphi'(0)=0,\ \ \varphi'((1/2)I)=(1/2)J,\ \ \text{and}\ \ \varphi'(P+(1/2)P^\perp)=(1/2)\varphi'(P)+(1/2)J
\]
implies that $\varphi'((1/2)P^\perp)= (1/2)J-(1/2)\varphi'(P)$. 

By Lemma \ref{nnjk}, the set $\cC_I\cap \cC_{(1/2)P}\cap \cC_{(1/2)P^\perp}\subset [0, I]$ is nonempty. 
It follows that 
\[
\begin{split}
\emptyset &\neq \varphi'(\cC_I\cap \cC_{(1/2)P}\cap \cC_{(1/2)P^\perp})\\
&\subset \cC_{\varphi'(I)}\cap \cC_{\varphi'((1/2)P)}\cap \cC_{\varphi'((1/2)P^\perp)}\\
&=\cC_J\cap \cC_{(1/2)\varphi'(P)}\cap \cC_{(1/2)J-(1/2)\varphi'(P)}.
\end{split}
\]
However, Corollary \ref{SJS} together with Lemma \ref{noauto} shows that 
\[
\cC_J\cap\cC_{(1/2)\varphi'(P)}\cap \cC_{(1/2)J-(1/2)\varphi'(P)}=\emptyset,
\] so we get to a contradiction.

It follows that $\varphi'(P)\notin H_2$. 
Thus we obtain $\varphi'(P)=\infty Q$ for some $Q\in \mathcal{P}$ because $0=\varphi'(0)\sim \varphi'(P)$. 
Since $J=\varphi'(I)\in \cC_{\varphi'(P)}=\cC_{\infty Q}$, we have $Q^\perp JQ^\perp=0$. Thus $Q^\perp (1/2)JQ^\perp=0$, which in turn implies $(1/2)J\in \cC_{\infty Q}=\cC_{\varphi'(P)}$. 
Since $P\in \cP$ is arbitrary, we arrive at the conclusion $\varphi'((1/2)I) =(1/2)J\in \bigcap_{P\in \cP} \cC_{\varphi'(P)}$. 
This clearly implies \eqref{cccc}.
\end{proof}

By the above claims, we get 

\begin{equation}\label{lemma07}
\varphi ([0,I]) \subset \bigcup_{P \in \mathcal{P}} \left( \ell_{0, \varphi (P)} \cup \ell_{\varphi (P), I}  \right) \ \cup \ \bigcap_{P \in \mathcal{P}} \mathcal{C}_{\varphi (P)}.
\end{equation}

\begin{claim}
If $\varphi(\cP)$ is a singleton $\{Q\}$, then $\varphi([0, I])\subset \mathcal{C}_{Q}$. 
\end{claim}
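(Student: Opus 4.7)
The plan is to apply the already-established containment \eqref{lemma07} directly and then notice that under the hypothesis $\varphi(\cP) = \{Q\}$ the right-hand side collapses to a single cone.

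First I would specialize \eqref{lemma07}. Under the assumption that $\varphi(P) = Q$ for every $P \in \cP$, the union $\bigcup_{P \in \cP}(\ell_{0,\varphi(P)} \cup \ell_{\varphi(P),I})$ reduces to the two lines $\ell_{0,Q} \cup \ell_{Q,I}$, while the intersection $\bigcap_{P \in \cP}\mathcal{C}_{\varphi(P)}$ becomes simply $\mathcal{C}_Q$. Thus
\[
\varphi([0,I]) \subset \ell_{0,Q} \cup \ell_{Q,I} \cup \mathcal{C}_Q.
\]

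Next I would observe that each of the two lines on the right is itself contained in $\mathcal{C}_Q$. Indeed, by Lemma \ref{simpll} we have $\ell_{0,Q} = \mathcal{C}_0 \cap \mathcal{C}_Q \subset \mathcal{C}_Q$ and likewise $\ell_{Q,I} = \mathcal{C}_Q \cap \mathcal{C}_I \subset \mathcal{C}_Q$. Therefore the above display collapses to $\varphi([0,I]) \subset \mathcal{C}_Q$, which is the desired conclusion. There is no real obstacle here; the work has already been done in establishing \eqref{lemma07}, and the present claim is essentially a bookkeeping observation that records case (o) of Proposition \ref{proposition08}.
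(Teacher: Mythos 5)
Your proof is correct and follows the paper's argument exactly: the paper likewise specializes \eqref{lemma07} to get $\varphi([0,I])\subset \ell_{0,Q}\cup\ell_{Q,I}\cup\mathcal{C}_Q=\mathcal{C}_Q$. Your explicit remark that $\ell_{0,Q}=\mathcal{C}_0\cap\mathcal{C}_Q\subset\mathcal{C}_Q$ (and similarly for $\ell_{Q,I}$) just makes precise the final equality that the paper leaves implicit.
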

\begin{proof}
By \eqref{lemma07}, we get
\[
\varphi([0, I])\subset \bigcup_{P\in \cP}(\ell_{0, \varphi(P)}\cup \ell_{\varphi(P), I}) \cup \bigcap_{P\in \cP} \cC_{\varphi(P)}
= \ell_{0, Q}\cup \ell_{Q, I} \cup \cC_Q
= \cC_Q.
\]
\end{proof}

From now on, we assume that $\varphi(\cP)$ has at least two points.
Consider the automorphism $\psi$ of $\oH$ defined by $X\mapsto (I-X)^{-1} -I$, $X\in \oH$. 
By the definition of the inversion, we have 
\[
\psi([0, I])=\oHp:=\{aP+bP^\perp\,:\, a, b\in [0,\infty],\,P\in \cP\}. \index{$Hp$@$\oHp$}
\]
In the rest of the proof, to simplify the discussion, we work with the coherency preserver $\Phi:=\psi\circ\varphi\circ \psi^{-1}\colon \oHp\to \oH$. 
Note that $\Phi(0)=0$, $\Phi({\oi})={\oi}$ and that $\Phi(\infty\cP)\subset \infty\cP:=\{\infty P\,:\, P\in \cP\}$\index{$Pinfty2$@$\infty\cP$}.
Therefore, for each $P\in \cP$, there is a unique $P'\in \cP$ with $\Phi(\infty P)=\infty P'$.

For distinct $P, Q\in \cP$,
we define $\pp:= \oHp\cap \Pi_{P,Q}$\index{$PiPQ2$@$\pp$}, where 
\[
\Pi_{P,Q} = \{aP+bQ\,:\, a, b\in \bR\}\cup \ell_{\infty P, \oi}\cup \ell_{\infty Q, \oi}
\] 
is the surface as in \eqref{PI}.

\begin{lemma}\label{nasunek}
Let $P, Q\in \cP$ be distinct elements. 
Assume that $T$ is an invertible $2\times 2$ matrix such that $TPT^*=E_{11}$ and $TQT^*=E_{22}$. Let $\psi$ denote the automorphism $X\mapsto TXT^*$ of $\oH$. Then
\begin{itemize}
\item $\psi(\Pi_{P,Q})$ is the diagonal surface $\{ a E_{11} + b E_{22} \, : \, a,b \in {\oR} \}$ as in \eqref{ds}, and
\item $\psi(\oHp) = \oHp$.
\end{itemize}
\end{lemma}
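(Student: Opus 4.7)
The plan is to compute the action of $\psi$ on each of the three pieces making up $\Pi_{P,Q}$, then separately verify that $\oHp$ is preserved. Throughout, I will use the definition of $S(\cdot)S^{\ast}$ on $\oH$ from Claim \ref{sas*}.

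First I would handle the ``finite plane'' of $\Pi_{P,Q}$, namely $\{aP+bQ:a,b\in\bR\}$. On $H_2$, $\psi$ is just the ordinary action $X\mapsto TXT^{\ast}$, so $\psi(aP+bQ)=aTPT^{\ast}+bTQT^{\ast}=aE_{11}+bE_{22}$. For the two lines at infinity, I first compute $\psi(\infty P)$ and $\psi(\infty Q)$. By the defining formula, $\psi(\infty P)=\infty R+0\cdot R^{\perp}$, where $R=(\mathrm{tr}\,(TPT^{\ast}))^{-1}TPT^{\ast}$. Since $TPT^{\ast}=E_{11}$ has trace one, $R=E_{11}$, giving $\psi(\infty P)=\infty E_{11}$, and similarly $\psi(\infty Q)=\infty E_{22}$. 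Since $\psi(\oi)=\oi$ by definition and $\psi$ is an automorphism (Claim \ref{sas*}), it carries lines to lines, so $\psi(\ell_{\infty P,\oi})=\ell_{\infty E_{11},\oi}=\{\infty E_{11}+aE_{22}:a\in\oR\}$ and $\psi(\ell_{\infty Q,\oi})=\{aE_{11}+\infty E_{22}:a\in\oR\}$. Assembling the three pieces gives exactly the diagonal surface $\{aE_{11}+bE_{22}:a,b\in\oR\}$.

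For the second claim I would first check the inclusion $\psi(\oHp)\subset\oHp$. The finite part of $\oHp$ consists precisely of positive semidefinite matrices, and $X\geq 0$ implies $TXT^{\ast}\geq 0$, so finite elements of $\oHp$ are sent into $\oHp$. For an ``infinity point'' $\infty R+aR^{\perp}$ with $a\in[0,\infty]$, the defining formula gives
\[
\psi(\infty R+aR^{\perp})=\infty R'+a\,\mathrm{tr}\!\left((R')^{\perp}TT^{\ast}\right)(R')^{\perp},
\]
where $R'=(\mathrm{tr}\,(TRT^{\ast}))^{-1}TRT^{\ast}\in\cP$. Because $T$ is invertible, $TT^{\ast}$ is positive definite, hence $\mathrm{tr}\,((R')^{\perp}TT^{\ast})>0$ (with the convention $\infty\cdot c=\infty$ for $c>0$), so the coefficient in front of $(R')^{\perp}$ is again in $[0,\infty]$. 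Therefore $\psi(\infty R+aR^{\perp})\in\oHp$, and $\psi(\oi)=\oi\in\oHp$.

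Finally I obtain the reverse inclusion by applying the same argument to $\psi^{-1}(X)=T^{-1}X(T^{-1})^{\ast}$, which is of exactly the same form and hence also maps $\oHp$ into itself. Combining, $\psi(\oHp)=\oHp$. I do not anticipate a real obstacle here; the only thing requiring care is routine bookkeeping of the two-case definition of $T(\cdot)T^{\ast}$ on $\oH$ (the finite case versus the infinity-point case), and remembering that positive definiteness of $TT^{\ast}$ is what keeps the trace factor strictly positive, so nonnegative coefficients stay nonnegative.
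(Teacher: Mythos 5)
Your proof is correct. The second bullet is argued exactly as in the paper: congruence by $T$ preserves positive semidefiniteness on the finite part, sends $\infty R + aR^{\perp}$ to $\infty R' + a\,\mathrm{tr}((R')^{\perp}TT^{\ast})(R')^{\perp}$ with a strictly positive trace factor, and the reverse inclusion follows by running the same argument for $T^{-1}$. For the first bullet you take a slightly different route: the paper observes that $\psi(\Pi_{P,Q})$ is a surface containing $0$, $\infty E_{11}$, $\infty E_{22}$, $\oi$ and then invokes the uniqueness statement of Lemma \ref{inftysurface} to identify it with the diagonal surface, whereas you compute the image of each of the three constituent pieces of \eqref{PI} directly ($\{aP+bQ\}\mapsto\{aE_{11}+bE_{22}\}$, and the two lines at infinity map to $\ell_{\infty E_{11},\oi}$ and $\ell_{\infty E_{22},\oi}$) and check that their union exhausts $\{aE_{11}+bE_{22}: a,b\in\oR\}$. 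Your version is more computational and self-contained (it does not need the uniqueness half of Lemma \ref{inftysurface}, only the explicit formula for the action on infinity points and the fact that automorphisms carry lines to lines); the paper's version is shorter because it reuses machinery already established. Both are valid.
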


\begin{proof}
Clearly, $\psi(\Pi_{P,Q})$ is a surface that contains $0, \infty E_{11}, \infty E_{22}, \oi$. By Lemma \ref{inftysurface}, the diagonal surface is the unique surface that contains these points. Consequently,  $\psi(\Pi_{P,Q})$ is the diagonal surface. 

Let $S$ be any invertible $2\times 2$ matrix. For $A \in H_2$ satisfying $A \ge 0$ we have $SAS^\ast \ge 0$. If $P$ is any rank one projection and $b \in [0, \infty]$ then $S(\infty P +bP^\perp )S^\ast$ is equal to $\infty Q + c Q^\perp$ for some $Q \in \mathcal{P}$ and some $c \in [0, \infty]$. It follows that 
$\psi(\oHp) \subset \oHp$ and $\psi^{-1}(\oHp) \subset \oHp$, and consequently, $\psi(\oHp) = \oHp$.
\end{proof}

\begin{lemma}\label{inftysquare}
Let $P, Q\in \cP$ be distinct elements. 
Then there is an automorphism of $\oH$ that maps $\pp$ onto ${\square}_{E_{11}}=\{ a E_{11} + b E_{22} \, : \, a,b \in [0,1] \}$.
\end{lemma}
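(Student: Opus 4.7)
The plan is to first normalize $P, Q$ to $E_{11}, E_{22}$ via Lemma \ref{nasunek}, and then apply a Cayley-type automorphism that contracts $[0,\infty]$ to $[0,1]$ on each coordinate of the diagonal surface.

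Since $P\ne Q$, their ranges are distinct one-dimensional subspaces that together span $\mathbb{C}^2$, so there is an invertible $T$ with $TPT^\ast = E_{11}$ and $TQT^\ast = E_{22}$. By Lemma \ref{nasunek}, the automorphism $\psi\colon X\mapsto TXT^\ast$ of $\oH$ sends $\Pi_{P,Q}$ onto the diagonal surface and preserves $\oHp$, so
\[
\psi(\pp) = \{aE_{11}+bE_{22}\,:\,a,b\in\oR\}\cap \oHp.
\]
A direct inspection shows that $aE_{11}+bE_{22}$ belongs to $\oHp$ precisely when $a,b\in[0,\infty]$: if both $a,b\in\bR$, then the matrix lies in $\oHp\cap H_2$ iff it is positive semidefinite iff $a,b\ge 0$; if $a=\infty$ and $b\in\oR$, then by the definition of $\oHp$ (with $P$ replaced by $E_{11}$) the element $\infty E_{11}+bE_{22}$ lies in $\oHp$ iff $b\in[0,\infty]$; and $\oi\in\oHp$. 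Hence $\psi(\pp)=\{aE_{11}+bE_{22}\,:\,a,b\in[0,\infty]\}$.

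Next, I would apply the map $\eta(X):= I-(I+X)^{-1}$, which is the composition of a translation, an inversion, a negation, and a second translation, and is therefore an automorphism of $\oH$ by the results of Subsection \ref{automorphism}. Unwinding the definitions of $+$ and $^{-1}$ on $\oH\setminus H_2$ gives
\[
\eta(aE_{11}+bE_{22}) = \frac{a}{1+a}E_{11} + \frac{b}{1+b}E_{22}
\]
for every $a,b\in[0,\infty]$, with the natural conventions $\infty/(1+\infty)=1$ (so that, for instance, $\eta(\infty E_{11})=E_{11}$ and $\eta(\oi)=I$). Since $t\mapsto t/(1+t)$ is a bijection of $[0,\infty]$ onto $[0,1]$, the image is exactly ${\square}_{E_{11}}$, so $\eta\circ\psi$ is the desired automorphism. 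The only step requiring mild care is the boundary computation for $\eta$ at the points where $a$ or $b$ equals $\infty$, which is a routine application of the addition and inversion rules for elements of $\oH\setminus H_2$ introduced in Subsection \ref{automorphism}.
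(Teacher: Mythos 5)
Your proof is correct and follows essentially the same route as the paper: normalize $P,Q$ to $E_{11},E_{22}$ via Lemma \ref{nasunek}, so that $\pp$ becomes $\{aE_{11}+bE_{22}\,:\,a,b\in[0,\infty]\}$, and then apply the automorphism $X\mapsto I-(I+X)^{-1}$ to contract $[0,\infty]$ onto $[0,1]$ coordinatewise. The extra boundary checks you include are fine but not needed beyond what the paper records.
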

\begin{proof}
We may take an invertible $2\times 2$ matrix $T$ such that $TPT^*=E_{11}$ and $TQT^*=E_{22}$. 
Let $\psi_1$ denote the affine automorphism $X\mapsto TXT^*$ of $\oH$. 
By Lemma \ref{nasunek} we have $\psi_1(\pp)=\{aE_{11}+bE_{22}\,:\, a, b\in [0, \infty]\}$. 
Let $\psi_2$ denote the automorphism $X\ \mapsto I - (I+X)^{-1}$. 
Then we obtain $\psi_2(\{aE_{11}+bE_{22}\,:\, a, b\in [0, \infty]\}) ={\square}_{E_{11}}$ by definition.
Thus the automorphism $\psi_2\circ \psi_1$ maps $\pp$ onto ${\square}_{E_{11}}$. 
\end{proof}

We study a pair $P_1, P_2\in \cP$ satisfying $P_1'\neq P_2'$. Note that the condition  $P_1'\neq P_2'$ is equivalent to $ \mathrm{tr}\,(P_1'{P_2'}^\perp) \not= 0$. 
Observe that $0, \infty P_1', \infty P_2', \oi$ are contained in $\Phi(\ppp)$.
Since $P_1'\neq P_2'$, Lemmas \ref{inftysquare}, \ref{where}, and \ref{inftysurface} imply that 
\begin{equation}\label{phippp}
\Phi(\ppp)\subset\Pi_{P_1',P_2'}.
\end{equation} 

\begin{claim}
We have
\begin{equation}\label{njix}
\Phi(\ppp)\subset \ell_{0, \infty P_1'}\cup \ell_{\infty P_1', \oi}\cup \ell_{0, \infty P_2'}\cup \ell_{\infty P_2', \oi}.
\end{equation}
\end{claim}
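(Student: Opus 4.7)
The plan is to reduce via automorphisms to the canonical case $P_1=P_1'=E_{11}$, $P_2=P_2'=E_{22}$, in which $\ppp=\{aE_{11}+bE_{22}:a,b\in[0,\infty]\}$ and $\Pi_{P_1',P_2'}$ is the diagonal surface $\{aE_{11}+bE_{22}:a,b\in\oR\}$. The four corner points of $\ppp$ are then fixed by $\Phi$, and the four edges of $\ppp$ are mapped into the corresponding boundary lines $\ell_{0,\infty E_{11}}$, $\ell_{0,\infty E_{22}}$, $\ell_{\infty E_{11},\oi}$, $\ell_{\infty E_{22},\oi}$ (this is immediate from the images of the corner points and Lemma \ref{jojhcer}).

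For each $b\in[0,\infty]$ the horizontal slice $h_b=\{tE_{11}+bE_{22}:t\in[0,\infty]\}$ is a coherent subset of $\ppp$, so $\Phi(h_b)$ is contained in a line $L_b$ of the diagonal surface. Its two endpoints $bE_{22}$ and $\infty E_{11}+bE_{22}$ map into $\ell_{0,\infty E_{22}}$ and $\ell_{\infty E_{11},\oi}$ respectively; since these two boundary lines are disjoint and no vertical line of the diagonal surface meets both, $L_b$ must be horizontal: $L_b=\{tE_{11}+c(b)E_{22}:t\in\oR\}$ for some $c(b)\in\oR$, yielding a function $c\colon[0,\infty]\to\oR$ with $c(0)=0$, $c(\infty)=\infty$. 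The symmetric argument on vertical slices produces $d\colon[0,\infty]\to\oR$ with $d(0)=0$, $d(\infty)=\infty$, and one concludes that
\[
\Phi(aE_{11}+bE_{22})=d(a)E_{11}+c(b)E_{22}\quad\text{for all } (a,b)\in[0,\infty]^2.
\]

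Suppose, toward contradiction, there exist $a_0,b_0\in(0,\infty)$ with $d(a_0),c(b_0)\in(0,\infty)$; set $X=a_0E_{11}+b_0E_{22}$. For any $P\in\cP\setminus\{E_{11},E_{22}\}$ with $p:=\mathrm{tr}(PE_{11})\in(0,1)$, the point $Y_P:=\infty P+s_PP^\perp$ with $s_P:=\mathrm{tr}(P^\perp X)=a_0(1-p)+b_0p$ lies in $\oHp\setminus\ppp$ and is coherent to every $Z=a_1E_{11}+b_1E_{22}\in\ppp$ on the affine line $a_1(1-p)+b_1p=s_P$. Since $Y_P\in\ell_{\infty P,\oi}$, we have $\Phi(Y_P)\in\ell_{\infty P',\oi}$ with $\Phi(\infty P)=\infty P'$, so $\Phi(Y_P)=\infty P'+T_P(P')^\perp$ for some $T_P\in\oR$. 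Coherency $\Phi(Y_P)\sim\Phi(Z)$ then forces $d(a_1)(1-q)+c(b_1)q=T_P$ (with $q:=\mathrm{tr}(P'E_{11})$), independent of the position $(a_1,b_1)$ along the chosen line. Specializing to $P$ with $p=1/2$, parameterizing $(a_1,b_1)=(a_0+\delta,b_0-\delta)$ for $\delta\in[-a_0,b_0]$, and comparing two choices $P_1,P_2$ with distinct $q$-values yields
\[
(d(a_0+\delta)-d(a_0))(1-q_i)=(c(b_0)-c(b_0-\delta))q_i,\qquad i=1,2,
\]
and since $q\mapsto q/(1-q)$ is injective, this forces $d(a_0+\delta)=d(a_0)$ and $c(b_0-\delta)=c(b_0)$ for every admissible $\delta$. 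Thus $d$ is constant on $[0,a_0+b_0]$, so $d(a_0)=d(0)=0$, contradicting $d(a_0)>0$.

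The main obstacle is to guarantee the existence of two projections $P_1,P_2$ with common $\mathrm{tr}(P_iE_{11})=1/2$ but distinct $q$-values $\mathrm{tr}(P_i'E_{11})$. The family $\{P\in\cP:\mathrm{tr}(PE_{11})=1/2\}$ is a full great circle in the Bloch sphere, so generically such a pair exists; rigorously, if the map $P\mapsto P'$ were constant on this entire circle, one must instead invoke other level-sets $\{\mathrm{tr}(PE_{11})=p_0\}$ and combine the resulting rescaling relations of the form $(d(a_0+\epsilon)-d(a_0))/(d(a_0+\lambda\epsilon)-d(a_0))=K$ (for various $\lambda\neq 1$) with the non-standardness of $\varphi$ to rule out any consistent nontrivial $d$. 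This is the technically delicate step of the proof.
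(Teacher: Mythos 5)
Your argument takes a genuinely different route from the paper's, and it is not complete: the step you yourself flag as ``technically delicate'' is exactly where the whole difficulty of the claim lives, and it is left unproved. Your functional equation
\[
\bigl(d(a_0+\delta)-d(a_0)\bigr)(1-q)=\bigl(c(b_0)-c(b_0-\delta)\bigr)q
\]
only yields the desired conclusion when you can produce two projections on the level circle $\{P\in\cP:\mathrm{tr}(PE_{11})=1/2\}$ whose images have \emph{distinct} values $q=\mathrm{tr}(P'E_{11})$. But for every standard $\Phi$ fixing $0$, $\oi$, $\infty E_{11}$, $\infty E_{22}$ (e.g.\ the identity), the value $q$ is constant (equal to $1/2$) on that entire circle, and the claim \eqref{njix} is in fact \emph{false} for such maps. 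So the unhandled case is not a degenerate technicality: it contains all the maps for which the statement fails, which means the non-standardness hypothesis must do all the work precisely there, and your sketch (``invoke other level-sets \dots\ combine the rescaling relations \dots\ with the non-standardness of $\varphi$'') does not say how. There are also smaller untreated issues (the coherency computations when $d(a_1)$ or $c(b_1)$ equals $\infty$, and the possibility $\Phi(Z)=\oi$), but those look patchable; the level-circle case does not.

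For comparison, the paper's proof is essentially two lines and uses none of this slice analysis: the containment \eqref{lemma07}, already established from the non-standardness of $\varphi$ via Corollary \ref{pqyl}, gives $\Phi(\oHp)\subset \cC_0\cup\cC_{\oi}\cup\cC_{\infty P_1'}$, and intersecting each of these three cones with the surface $\Pi_{P_1',P_2'}$ (into which $\Phi(\ppp)$ was already shown to map in \eqref{phippp}) yields exactly the four boundary lines. Your preliminary reduction and the product form $\Phi(aE_{11}+bE_{22})=d(a)E_{11}+c(b)E_{22}$ are correct and would be a reasonable alternative framework, but without importing the global constraint \eqref{lemma07} (or an equivalent use of the spacelike/timelike trichotomy) the contradiction argument cannot be closed.
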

\begin{proof}
Observe that \eqref{lemma07} assures that 
\[
\Phi(\oHp)\subset \bigcup_{P\in \cP}(\ell_{0, \infty P'}\cup \ell_{\infty P', \oi}) \cup \bigcap_{P\in \cP} \cC_{\infty P'}\subset \cC_0\cup \cC_\oi \cup \cC_{\infty P_1'}.
\]
On the other hand, it is clear that 
\[
\cC_0 \cap \Pi_{P_1',P_2'} = \ell_{0, \infty P_1'} \cup \ell_{0, \infty P_2'}, \ \ \cC_{\oi} \cap \Pi_{P_1',P_2'} = \ell_{ \infty P_1', \oi} \cup \ell_{\infty P_2', \oi},
\]
and
\[
\mathcal{C}_{\infty P_1'} \cap \Pi_{P_1',P_2'} = \ell_{0, \infty P_1'} \cup \ell_{ \infty P_1', \oi}
\]
hold.
Therefore, \eqref{phippp} leads to \eqref{njix}.
\end{proof}

For each $P\in \cP$, we have $\Phi(\ell_{0, \infty P}\cap \oHp)\subset\ell_{0, \infty P'}$, so we may define a function $f_P\colon [0,\infty]\to {\oR}$ \index{$f_P$} by 
\[
\Phi(tP)=f_P(t)P',\ \ t\in [0,\infty].
\] 
Similarly, we may define a function $g_P\colon [0,\infty]\to {\oR}$ \index{$g_P$} by 
\[
\Phi(\infty P+tP^\perp)=\infty P'+ g_P(t){P'}^\perp, \ \ t\in [0,\infty].
\]
Note that $f_P(0)=g_P(0)=0$ and $f_P(\infty)=g_P(\infty)=\infty$ hold. 

In what follows, we apply the rule $c\cdot \infty = \infty\cdot c=\infty$ when $c\in (0, \infty)$.
Let $t\in [0, \infty]$.
Since $tP_1\sim \infty P_2 + t\mathrm{tr}\,(P_1P_2^\perp) P_2^\perp$, we obtain
\[
f_{P_1}(t)P_1' = \Phi(tP_1)\sim \Phi(\infty P_2 + t\mathrm{tr}\,(P_1P_2^\perp) P_2^\perp) = \infty P_2' + g_{P_2}(t\mathrm{tr}\,(P_1P_2^\perp)){P_2'}^\perp,
\]
which in turn implies 
\begin{equation}\label{trace}
f_{P_1}(t) \mathrm{tr}\,(P_1'{P_2'}^\perp) = g_{P_2}(t\mathrm{tr}\,(P_1P_2^\perp)).
\end{equation}
Note that equality \eqref{trace} is valid in the case $f_{P_1}(t)=\infty$ or $g_{P_2}(t\mathrm{tr}\,(P_1P_2^\perp))=\infty$, too. 
Actually, in the case when $f_{P_1}(t) \in \{ 0, \infty \}$ or $g_{P_2}(t \mathrm{tr}\,(P_1P_2^\perp)) \in \{ 0, \infty \}$ we have
\begin{equation}\label{trace2}
f_{P_1}(t)  = g_{P_2}(t\mathrm{tr}\,(P_1P_2^\perp)).
\end{equation}

\begin{claim}\label{twolines}
Either 
\begin{equation}\label{op1}
f_{P_1}([0, \infty])\subset \{0, \infty\} \ \ \text{and} \ \ \Phi(\ppp)\subset\ell_{0, \infty P_2'}\cup\ell_{\infty P_1', \oi},
\end{equation}
 or 
\begin{equation}\label{op2}
g_{P_1}([0, \infty])\subset \{0, \infty\} \ \ \text{and} \ \ \Phi(\ppp)\subset\ell_{0, \infty P_1'}\cup\ell_{\infty P_2', \oi}
\end{equation}
holds.
\end{claim}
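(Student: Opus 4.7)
My plan is to prove Claim~\ref{twolines} by contradiction: assuming both \eqref{op1} and \eqref{op2} fail, I will find a point in $\ppp$ whose $\Phi$-image violates the containment \eqref{njix}. The argument proceeds in three stages.

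The first stage is geometric. For each $b \in (0, \infty)$, the set $\ell_b := \{aP_1 + bP_2 \,:\, a \in \mathbb{R}\} \cup \{\infty P_1 + b\,\mathrm{tr}\,(P_1^\perp P_2)\, P_1^\perp\}$ is a line in $\ppp$ (apply Lemma~\ref{jojhcer} through the point $bP_2$ with direction $P_1$). Its image $\Phi(\ell_b)$ is a coherent set containing the two distinct points $f_{P_2}(b)P_2' \in H_2$ and $\infty P_1' + g_{P_1}(b\,\mathrm{tr}\,(P_1^\perp P_2)){P_1'}^\perp \in \oH \setminus H_2$, so it lies on a unique line $\ell'_b \subset \Pi_{P_1', P_2'}$ (using Corollary~\ref{ilja}), and Lemma~\ref{jojhcer} yields the explicit form $\ell'_b = \{aP_1' + f_{P_2}(b)P_2' \,:\, a \in \mathbb{R}\} \cup \{\infty P_1' + g_{P_1}(b\,\mathrm{tr}\,(P_1^\perp P_2)){P_1'}^\perp\}$. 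Writing $\partial := \ell_{0,\infty P_1'} \cup \ell_{0,\infty P_2'} \cup \ell_{\infty P_1',\oi} \cup \ell_{\infty P_2',\oi}$, a short case analysis exploiting $P_1' \neq P_2'$ shows that whenever $f_{P_2}(b) \in (0,\infty)$, the intersection $\ell'_b \cap \partial$ consists of exactly the two points $f_{P_2}(b)P_2'$ (interior of $\ell_{0,\infty P_2'}$) and $\infty P_1' + g_{P_1}(b\,\mathrm{tr}\,(P_1^\perp P_2)){P_1'}^\perp$ (interior of $\ell_{\infty P_1',\oi}$). Combined with $\Phi(\ell_b) \subset \partial$ from \eqref{njix}, this forces $\Phi(\ell_b)$ into that two-point set. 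An entirely analogous statement holds for the vertical lines $\ell^a := \{aP_1 + bP_2 \,:\, b \in \mathbb{R}\} \cup \{\infty P_2 + a\,\mathrm{tr}\,(P_2^\perp P_1)\, P_2^\perp\}$: when $f_{P_1}(a) \in (0,\infty)$, $\Phi(\ell^a)$ is confined to two specific interior points, one on $\ell_{0,\infty P_1'}$ and one on $\ell_{\infty P_2',\oi}$.

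The second stage produces the contradiction. Suppose both \eqref{op1} and \eqref{op2} fail. Then there exist $a_0 \in (0,\infty)$ with $f_{P_1}(a_0) \in (0,\infty)$ and $c_0 \in (0,\infty)$ with $g_{P_1}(c_0) \in (0,\infty)$. Set $b_0 := c_0/\mathrm{tr}\,(P_1^\perp P_2) \in (0,\infty)$; the coherence-derived identity $g_{P_1}(b\,\mathrm{tr}\,(P_1^\perp P_2)) = f_{P_2}(b)\,\mathrm{tr}\,(P_2' {P_1'}^\perp)$ (obtained in the same way as \eqref{trace} with the roles of $P_1,P_2$ interchanged) forces $f_{P_2}(b_0) \in (0,\infty)$. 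Since $a_0 P_1 + b_0 P_2 \in \ell_{b_0} \cap \ell^{a_0}$, its $\Phi$-image must lie in the intersection of the two two-point sets described at the end of the first stage. But those four points lie strictly in the interiors of four pairwise distinct edges of $\partial$, hence the intersection is empty---a contradiction. Thus at least one of $f_{P_1}([0,\infty]) \subset \{0,\infty\}$ or $g_{P_1}([0,\infty]) \subset \{0,\infty\}$ must hold.

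The third stage completes the proof. Assume $f_{P_1}([0,\infty]) \subset \{0,\infty\}$; the alternative is handled symmetrically. I must verify $\Phi(\ppp) \subset \ell_{0,\infty P_2'} \cup \ell_{\infty P_1',\oi}$ edge by edge. The bottom edge $\{tP_1 \,:\, t \in [0,\infty]\}$ maps into $\{0, \infty P_1'\}$, contained in the target. The coherence identity \eqref{trace}, namely $g_{P_2}(t\,\mathrm{tr}\,(P_1 P_2^\perp)) = f_{P_1}(t)\,\mathrm{tr}\,(P_1' {P_2'}^\perp)$, forces $g_{P_2}$ to be $\{0,\infty\}$-valued, so the top edge maps into $\{\infty P_2',\oi\}$, again in the target. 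For each $a \in (0,\infty)$, $f_{P_1}(a) \in \{0,\infty\}$ makes $\Phi(aP_1)$ a corner of $\Pi_{P_1',P_2'}$; the image line $\ell'^a$ then passes through a corner of the bottom edge, which forces $\ell'^a$ to coincide with either $\ell_{0,\infty P_2'}$ or $\ell_{\infty P_1',\oi}$, so $\Phi(\ell^a)$ lies in the target. The left and right edges of $\ppp$ trivially map into the target. Assembling these pieces gives \eqref{op1}. The main technical obstacle will be the four-way intersection computation verifying the precise shape of $\ell'_b \cap \partial$; once that geometric fact is in hand, the rest of the argument is essentially bookkeeping.
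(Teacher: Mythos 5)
Your proof is correct, but the mechanism you use to obtain the dichotomy differs from the paper's. Both arguments rest on the same key computation: if a line of $\ppp$ joins a point of one edge whose image is \emph{not} a corner of $\Pi_{P_1',P_2'}$ to the opposite edge, then by \eqref{njix} its $\Phi$-image is pinned to exactly the two points where the image line meets the four boundary lines. The paper exploits this for a \emph{single} vertical line: if $f_{P_1}(a)\notin\{0,\infty\}$ for some $a$, then, since every point of $\ppp$ is coherent to some point of that one line (Lemma \ref{segment} via Lemma \ref{inftysquare}), all of $\Phi(\ppp)$ is forced into the union of the two cones over those two image points intersected with the boundary, which equals $\ell_{0,\infty P_1'}\cup\ell_{\infty P_2',\oi}$ --- that is, all of \eqref{op2} in one stroke; in the complementary case $f_{P_1}((0,\infty))\subset\{0,\infty\}$, the line-by-line argument (your stage 3) gives \eqref{op1}. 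You instead derive a contradiction from one horizontal and one vertical ``good'' line meeting at an interior point of $\ppp$, which avoids Lemma \ref{segment} entirely but only yields the function-value dichotomy, so you must then run the stage-3 containment argument in both symmetric forms. Net effect: same geometric core, a different propagation device, and somewhat more bookkeeping on your side.

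Two small points to tighten. First, nothing forces the image of $\Phi$ into $\oHp$, so $f_{P_1}(a_0)$, $g_{P_1}(c_0)$, $f_{P_2}(b_0)$ need only lie in $\oR\setminus\{0,\infty\}$ and may well be negative; fortunately your intersection computation uses only that these values are finite and nonzero, so you should simply replace ``$\in(0,\infty)$'' by ``$\notin\{0,\infty\}$'' for the function values throughout. Second, in stage 3 the step ``passes through a corner of the bottom edge, which forces $\ell'^a$ to coincide with $\ell_{0,\infty P_2'}$ or $\ell_{\infty P_1',\oi}$'' is too quick: a line through $0$ alone need not be $\ell_{0,\infty P_2'}$. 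What forces the conclusion is that the image of $\ell^a$ also contains the image of the top-edge point $\infty P_2 + a\,\mathrm{tr}\,(P_1P_2^\perp)P_2^\perp$, which lies in $\{\infty P_2',\oi\}$, and coherence (equivalently \eqref{trace2}) excludes the mismatched pairs $(0,\oi)$ and $(\infty P_1',\infty P_2')$; the two admissible pairs of endpoints then determine the two target lines.
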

\begin{proof}
Assume that there is a real number $a\in (0, \infty)$ such that $f_{P_1}(a)\notin \{0, \infty\}$. 
Put $A_1:=aP_1$ and $A_2:=\infty P_2 + a\mathrm{tr}\,(P_1P_2^\perp) P_2^\perp$.
Set $B_1:=\Phi(A_1)=f_{P_1}(a)P_1'$ and $B_2:=\Phi(A_2) = \infty P_2' + g_{P_2}(a\mathrm{tr}\,(P_1P_2^\perp)){P_2'}^\perp$. 
Then \eqref{trace} implies $B_2= \infty P_2' +  f_{P_1}(a)\mathrm{tr}\,(P_1'{P_2'}^\perp){P_2'}^\perp$.
It follows that 
\[
\ell_{B_1, B_2} = \{f_{P_1}(a) P_1' + bP_2'\,:\, b\in \mathbb{R}\}\cup \{B_2\}.
\]
Since $f_{P_1}(a)\notin \{0, \infty\}$, \eqref{njix} implies  
\begin{equation}\label{njiy}
\Phi(\ell_{A_1, A_2}\cap \oHp)\subset\ell_{B_1, B_2}\cap (\ell_{0, \infty P_1'}\cup\ell_{\infty P_1', \oi}\cup\ell_{0, \infty P_2'}\cup\ell_{\infty P_2', \oi}) = \{B_1, B_2\}.
\end{equation}
Since $A_1, A_2 \in \ppp$ and $A_1\sim A_2$,
Lemma \ref{inftysquare} combined with Lemma \ref{segment} implies that each element in $\ppp$ is coherent to some element of $\ell_{A_1, A_2}\cap \ppp$. 
This together with \eqref{njix} and \eqref{njiy} yields
\[
\Phi(\ppp) \subset (\cC_{B_1}\cup \cC_{B_2})\cap (\ell_{0, \infty P_1'}\cup\ell_{\infty P_1', \oi}\cup\ell_{0, \infty P_2'}\cup\ell_{\infty P_2', \oi}) =\ell_{0, \infty P_1'}\cup\ell_{\infty P_2', \oi}. 
\]
In particular, for every $t \in [0, \infty]$ we have 
\[
\Phi ( \infty P_1 + tP_{1}^\perp) = \infty P_1' +g_{P_1} (t) {P_1'}^\perp \in \ell_{0, \infty P_1'}\cup\ell_{\infty P_2', \oi}.
\]
Consequently, $g_{P_1}([0, \infty])\subset \{0, \infty\}$ holds.
Thus \eqref{op2} is established in this case.

Assume that $f_{P_1}((0, \infty))\subset \{0,\infty\}$. 
Let $a\in [0,\infty]$. 
By $aP_1\sim \infty P_2 + a\mathrm{tr}\,(P_1P_2^\perp) P_2^\perp$, we have
\[
\Phi(\infty P_2 + a\mathrm{tr}\,(P_1P_2^\perp) P_2^\perp)\sim \Phi(aP_1)\in \{0, \infty P_1'\}.
\]
This together with 
\[
\Phi(\infty P_2 + a\mathrm{tr}\,(P_1P_2^\perp) P_2^\perp)\in \ell_{\Phi(\infty P_2), \Phi(\oi)}=\ell_{\infty P_2', \oi}
\]
implies that $\Phi(\infty P_2 + a\mathrm{tr}\,(P_1P_2^\perp) P_2^\perp)=\infty P_2'$ when $\Phi(aP_1)=0$ and $\Phi(\infty P_2 + a\mathrm{tr}\,(P_1P_2^\perp) P_2^\perp)=\oi$ when $\Phi(aP_1)=\infty P_1'$. 
It follows that 
\[
\Phi(\ell_{aP_1, \infty P_2 + a\mathrm{tr}\,(P_1P_2^\perp) P_2^\perp}\cap \oHp)\subset \ell_{0, \infty P_2'}\cup \ell_{\infty P_1', \oi}.
\] 
Every point of $\ppp$ lies in $\ell_{aP_1, \infty P_2 + a\mathrm{tr}\,(P_1P_2^\perp) P_2^\perp}$ for some $a\in [0, \infty]$. (To see this, one can use the automorphism $\psi_1$ of $\oH$ defined as in the proof of Lemma 6.12 with $P_1 ,P_2$ instead of $P,Q$, which maps $\ppp$ onto $\{aE_{11}+bE_{22}\,:\, a, b\in [0, \infty]\}$ and the line 
$\ell_{aP_1, \infty P_2 + a\mathrm{tr}\,(P_1P_2^\perp) P_2^\perp}$ onto the line $\ell_{aE_{11}, \infty E_{22} + a E_{11}}$.)
Consequently, $\Phi(\ppp)\subset \ell_{0, \infty P_2'}\cup \ell_{\infty P_1', \oi}$ in this case. 
Thus \eqref{op1} is established.
\end{proof}

\begin{claim}\label{zeroinfty}
If the set $\Phi(\infty \cP)$ (or equivalently, $\varphi(\cP)$) has at least three elements, then $f_{P}([0, \infty])\subset \{0, \infty\}$ and $g_{P}([0, \infty])\subset \{0, \infty\}$ hold for every $P\in \cP$.
\end{claim}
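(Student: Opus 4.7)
The plan is to fix an arbitrary $P \in \cP$ and show, by contradiction, that $g_P$ cannot take any value in $(0,\infty)$; a completely symmetric argument (with the roles of $f$ and $g$ and of $0$ and $\oi$ interchanged) will then give $f_P([0,\infty]) \subset \{0,\infty\}$. The main tool throughout is the trace identity \eqref{trace}: for $P_1,P_2 \in \cP$ with $P_1 \neq P_2$ and $t \in [0,\infty]$,
\[
f_{P_1}(t)\,\mathrm{tr}(P_1'{P_2'}^\perp) = g_{P_2}\bigl(t\,\mathrm{tr}(P_1 P_2^\perp)\bigr).
\]
The hypothesis $|\varphi(\cP)| \geq 3$ will be exploited to exhibit triples of projections with pairwise distinct primes.

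Assume for contradiction that $g_P(a) \in (0,\infty)$ for some $a \in (0,\infty)$. Applying \eqref{trace} with $P_1 = Q$, $P_2 = P$ and $t = a/\mathrm{tr}(P^\perp Q)$ for each $Q \in \cP \setminus \{P\}$ gives
\[
f_Q\bigl(a/\mathrm{tr}(P^\perp Q)\bigr)\,\mathrm{tr}(Q'{P'}^\perp) = g_P(a).
\]
If some $Q \neq P$ satisfied $Q' = P'$, the left-hand side would vanish and force $g_P(a) = 0$, a contradiction; hence $P$ is the unique preimage of $P'$ under $(\cdot)'$, and the displayed identity then yields
\[
f_Q\bigl(a/\mathrm{tr}(P^\perp Q)\bigr) = g_P(a)/\mathrm{tr}(Q'{P'}^\perp) \in (0,\infty), \qquad Q \in \cP \setminus \{P\},
\]
so $f_Q$ is not $\{0,\infty\}$-valued for any such $Q$.

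Next, using $|\varphi(\cP)| \geq 3$, for each $Q \in \cP \setminus \{P\}$ I would pick some $S \in \cP$ with $S' \neq Q'$; applying Claim \ref{twolines} to the pair $(Q,S)$ and recalling that $f_Q \not\subset \{0,\infty\}$ forces option \eqref{op2} to hold, yielding $g_Q([0,\infty]) \subset \{0,\infty\}$ for every $Q \in \cP \setminus \{P\}$. To extract the contradiction I would then invoke $|\varphi(\cP)| \geq 3$ again to choose $Q_1,R \in \cP$ with $P',Q_1',R'$ pairwise distinct (so that $Q_1,R \in \cP \setminus \{P\}$ and $Q_1 \neq R$), and apply \eqref{trace} with $P_1 = Q_1$, $P_2 = R$, and $t = b_0 := a/\mathrm{tr}(P^\perp Q_1)$:
\[
f_{Q_1}(b_0)\,\mathrm{tr}(Q_1'{R'}^\perp) = g_R\bigl(b_0\,\mathrm{tr}(Q_1 R^\perp)\bigr).
\]
The left-hand side lies in $(0,\infty)$ by the previous paragraph together with $Q_1' \neq R'$, while the right-hand side lies in $\{0,\infty\}$ because $R \neq P$; this is the contradiction.

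The hardest step will be the very first one, namely forcing $P$ to be the unique preimage of $P'$ in $\cP$: without this uniqueness, both sides of \eqref{trace} can degenerate to zero and the numerical cascade that ultimately pits a value in $(0,\infty)$ against a value in $\{0,\infty\}$ never gets off the ground. The assumption $|\varphi(\cP)| \geq 3$ will be used in two essentially different places: first to guarantee that Claim \ref{twolines} is applicable with a suitable partner $S$ for each $Q \neq P$, and second to produce the three-projection configuration $P,Q_1,R$ with pairwise distinct primes that is indispensable for the final step.
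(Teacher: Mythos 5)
Your overall strategy uses the same ingredients as the paper's proof (the identity \eqref{trace}, Claim \ref{twolines}, and a triple of projections with pairwise distinct primes); the paper just runs the argument directly rather than by contradiction: it extracts the dichotomy $f_P([0,\infty])\subset\{0,\infty\}$ or $g_P([0,\infty])\subset\{0,\infty\}$ from Claim \ref{twolines}, and in the second case chooses $P_1,P_2$ with $P',P_1',P_2'$ pairwise distinct and propagates $\{0,\infty\}$-valuedness around the cycle $g_P\Rightarrow f_{P_1}\Rightarrow g_{P_2}\Rightarrow f_P$ via \eqref{trace2}. Two points in your version need repair. The first is small but real: $g_P$ takes values in $\oR$, so to prove $g_P([0,\infty])\subset\{0,\infty\}$ you must exclude every value in $\mathbb{R}\setminus\{0\}$, not only values in $(0,\infty)$ --- negative values are a priori possible, since nothing forces $\Phi$ to map $\oHp$ into $\oHp$. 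The fix is cosmetic: your cascade works verbatim with ``nonzero finite real'' in place of ``positive'', because the trace factors $\mathrm{tr}\,(Q'{P'}^\perp)$ are positive and preserve being finite and nonzero.

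The second point concerns the step you yourself flag as the hardest, namely forcing $P$ to be the unique preimage of $P'$; as argued it does not go through. If $Q'=P'$ and $f_Q(t)=\infty$, then $\Phi(tQ)=\infty P'$ is automatically coherent to $\infty P'+g_P(a){P'}^\perp$ (two points of $\oH\setminus H_2$ built on the same projection are always coherent), so the underlying coherence relation imposes no constraint on $g_P(a)$; the product $\infty\cdot 0$ on the left of \eqref{trace} is undefined under the paper's conventions, and \eqref{trace} is only justified for pairs with distinct primes. So ``the left-hand side would vanish'' is unwarranted in that subcase, and the uniqueness claim is not established. Fortunately the step is dispensable: your final contradiction only involves $Q_1$ and $R$ with $P'$, $Q_1'$, $R'$ pairwise distinct, and for such projections the deductions $f_{Q_1}(b_0)\in\mathbb{R}\setminus\{0\}$ (ruling out $f_{Q_1}(b_0)=\infty$ via coherence, then dividing by $\mathrm{tr}\,(Q_1'{P'}^\perp)\neq 0$) and $g_R([0,\infty])\subset\{0,\infty\}$ (via Claim \ref{twolines}) require no uniqueness. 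Deleting the uniqueness step and restricting attention throughout to projections whose primes differ from $P'$ yields a correct proof.
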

\begin{proof}
For $P\in \cP$, we have $f_{P}([0, \infty])\subset \{0, \infty\}$ or $g_{P}([0, \infty])\subset \{0, \infty\}$ by Claim \ref{twolines}. 
In the latter case, take $P_1, P_2\in \cP$ such that $P'\neq P_1'\neq P_2'\neq  P'$. 
Then equation \eqref{trace2} implies $f_{P_1}([0, \infty])\subset \{0, \infty\}$, and similarly, we obtain $g_{P_2}([0, \infty])\subset \{0, \infty\}$, and then $f_{P}([0, \infty])\subset \{0, \infty\}$. 
Therefore, we see that $f_{P}([0, \infty])\subset \{0, \infty\}$ holds for every $P\in \cP$. 
Similarly, we have $g_{P}([0, \infty])\subset \{0, \infty\}$. 
\end{proof}

The same argument shows the following.
\begin{claim}\label{zeroorinfty}
Assume that the set $\Phi(\infty \cP)$ (or equivalently, $\varphi(\cP)$) has at least three elements. 
\begin{itemize}
\item If $f_{P}((0, \infty))= \{0\}$ or $g_{P}((0, \infty))= \{0\}$ for some $P\in \cP$, then $f_{P}((0, \infty))=g_P((0,\infty))= \{0\}$ holds for all $P\in \cP$. 
\item If $f_{P}((0, \infty))= \{\infty\}$ or $g_{P}((0, \infty))= \{\infty\}$ for some $P\in \cP$, then $f_{P}((0, \infty))=g_P((0,\infty))= \{\infty\}$ holds for all $P\in \cP$. 
\end{itemize}
\end{claim}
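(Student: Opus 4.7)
The plan is to use Claim \ref{zeroinfty} together with the equation \eqref{trace2} to propagate the value $0$ (respectively $\infty$) across all indices. By the remark immediately following Claim \ref{zeroinfty}, the proof for the $\{\infty\}$ case is parallel to that for the $\{0\}$ case, so I would only treat the latter.

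First I would observe that, thanks to Claim \ref{zeroinfty}, every quantity appearing in \eqref{trace2} lies in $\{0,\infty\}$ whenever $t\in(0,\infty)$, so \eqref{trace2} is valid (as an honest equality) for all $t\in(0,\infty)$ and every pair $P_1,P_2\in \cP$ with $P_1'\neq P_2'$. Note also that for two distinct rank-one projections $P,Q\in \cP$ one always has $\mathrm{tr}(PQ^\perp)\neq 0$, and the map $t\mapsto t\,\mathrm{tr}(PQ^\perp)$ is a bijection of $(0,\infty)$ onto itself; I would invoke this basic fact repeatedly.

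Now assume $f_{P_0}((0,\infty))=\{0\}$ for some $P_0\in \cP$ (the case where $g_{P_0}$ vanishes is symmetric, swapping the roles of $P_1,P_2$ in \eqref{trace2}). Fix an arbitrary $P\in \cP$ with $P'\neq P_0'$; in particular $P\neq P_0$. Applying \eqref{trace2} with $P_1=P_0$ and $P_2=P$ gives $g_P(t\,\mathrm{tr}(P_0P^\perp))=f_{P_0}(t)=0$ for every $t\in(0,\infty)$, hence $g_P((0,\infty))=\{0\}$. Using the hypothesis that $\Phi(\infty\cP)$ has at least three elements, pick $P_*\in \cP$ with $P_*',\,P',\,P_0'$ pairwise distinct; by the previous step $g_{P_*}((0,\infty))=\{0\}$, and then \eqref{trace2} with $P_1=P$, $P_2=P_*$ yields $f_P((0,\infty))=\{0\}$. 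Thus both $f_P$ and $g_P$ vanish on $(0,\infty)$ for every $P$ with $P'\neq P_0'$.

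It remains to handle those $P\in \cP$ with $P'=P_0'$ (including $P=P_0$ itself, where we already know $f_{P_0}((0,\infty))=\{0\}$ but need $g_{P_0}((0,\infty))=\{0\}$). For any such $P$, pick $P_2\in \cP$ with $P_2'\neq P_0'=P'$; then by the preceding paragraph both $f_{P_2}((0,\infty))$ and $g_{P_2}((0,\infty))$ are $\{0\}$, and since $P_2\neq P$ we have $\mathrm{tr}(PP_2^\perp)\neq 0$, so two applications of \eqref{trace2} (taking $(P_1,P_2)=(P,P_2)$ and $(P_1,P_2)=(P_2,P)$ respectively) give $f_P((0,\infty))=\{0\}$ and $g_P((0,\infty))=\{0\}$. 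This completes the first item; the second follows by the same argument with $\{0\}$ replaced by $\{\infty\}$ throughout. The only real subtlety is bookkeeping the three-points hypothesis so that at each step one may choose an auxiliary projection $P_*$ or $P_2$ whose image under $P\mapsto P'$ avoids all finitely many previously fixed images, which is precisely where $|\Phi(\infty\cP)|\geq 3$ is used.
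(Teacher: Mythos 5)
Your proof is correct and follows essentially the same route as the paper, which simply remarks that ``the same argument'' as in Claim \ref{zeroinfty} applies: propagation of the common value through \eqref{trace2} along chains of projections whose $\Phi$-images are pairwise distinct, which the three-element hypothesis makes available. Your write-up is just a more explicit version of that propagation, with the applications of \eqref{trace2} correctly restricted to pairs with $P_1'\neq P_2'$.
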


\begin{claim}\label{qc}
Assume that $P\in \cP$ satisfies $f_P([0, \infty])=\{0, \infty\}$.
Assume in addition that for any $c\in (0, 1)$ there exists $Q_c\in \cP$ satisfying $\mathrm{tr}\,(PQ_c)=c$ and $P'\neq Q_c'\neq (P^\perp)'$.
Then $f_P((0, \infty))=\{0\}$ or $f_P((0, \infty))=\{\infty\}$ holds.
\end{claim}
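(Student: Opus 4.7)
The plan is to argue by contradiction. Suppose there exist $a, b \in (0, \infty)$ with $f_P(a) = 0$ and $f_P(b) = \infty$; we will show that this forces the failure of the existence assertion in the hypothesis for some $c \in (0,1)$.

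First, fix an arbitrary $c \in (0,1)$ and let $Q := Q_c$ be as guaranteed by the hypothesis. Since $Q' \neq P'$ and $f_P \in \{0, \infty\}$, Claim \ref{twolines} applies to the pair $(P, Q)$ and option \eqref{op1} holds, giving
\[
\Phi(\pp) \subset \ell_{0, \infty Q'} \cup \ell_{\infty P', \oi}.
\]
Observing that $\Phi(\ell_{\infty Q, \oi}^+) \subset \ell_{\infty Q', \oi}$ and intersecting (using $P' \neq Q'$, so that $\ell_{\infty Q', \oi} \cap \ell_{\infty P', \oi} = \{\oi\}$ and $\ell_{\infty Q', \oi} \cap \ell_{0, \infty Q'} = \{\infty Q'\}$), we deduce $g_Q \in \{0, \infty\}$. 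The coherency relations $aP \sim \infty Q + a(1-c)Q^\perp$ and $bP \sim \infty Q + b(1-c)Q^\perp$, combined with $\Phi(aP) = 0$ and $\Phi(bP) = \infty P'$, then force
\[
g_Q(a(1-c)) = 0, \qquad g_Q(b(1-c)) = \infty.
\]

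The second main ingredient uses the additional hypothesis $Q' \neq (P^\perp)'$. Applying Claim \ref{twolines} to $(P^\perp, Q)$ yields one of two alternatives. In the first alternative, $f_{P^\perp} \in \{0, \infty\}$, and symmetric analysis (replacing $P$ by $P^\perp$ throughout the argument above) forces additional values of $g_Q$: specifically, $g_Q(\tilde a(1-c)) = 0$ and $g_Q(\tilde b(1-c)) = \infty$ for $\tilde a, \tilde b \in (0, \infty)$ determined by the zero/infinity pattern of $f_{P^\perp}$. In the second alternative, $g_{P^\perp} \in \{0, \infty\}$; by the coherency-induced identity derived from $sP^\perp \sim \infty P + sP^\perp$ (giving $g_P(s) = f_{P^\perp}(s)\cdot \mathrm{tr}(P'^\perp (P^\perp)')$) combined with the interior-point identity $g_P(s(1-c)) = f_Q(s)\cdot \mathrm{tr}(P'^\perp Q')$ from analysis of the line $\ell_{aP + sQ, \infty P + s(1-c)P^\perp}$ inside $\pp$, one again extracts specific location constraints for $g_Q$ in terms of $a, b$, and the scaling factor $1-c$. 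In either alternative the conclusion is the same: beyond the two constraints from the $(P,Q)$-analysis, the $(P^\perp, Q)$-analysis produces further equations pinning down values of $g_Q$.

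The contradiction is then obtained by combining these constraints. Since $g_Q$ takes values only in $\{0, \infty\}$, its preimages $g_Q^{-1}(0)$ and $g_Q^{-1}(\infty)$ partition $[0, \infty]$. The constraints from both analyses prescribe membership of specific scaled thresholds in each part; by choosing $c$ so that the thresholds derived from the $(P, Q)$-analysis collide with those from the $(P^\perp, Q)$-analysis in an incompatible way (for instance, a threshold forced to lie in $g_Q^{-1}(0)$ by one analysis and in $g_Q^{-1}(\infty)$ by the other), the only way to restore consistency is to force $Q' = P'$ or $Q' = (P^\perp)'$, contradicting the hypothesis.

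The main obstacle is the final combinatorial step: identifying a specific $c \in (0,1)$ for which the algebraic incompatibility bites. This requires tracking how the scaling factors $1-c$ interact with the positions of $a, b$ (and $\tilde a, \tilde b$) across the two analyses, and handling with care the degenerate case $(P^\perp)' = P'$, in which the relation $g_P(s) = f_{P^\perp}(s) \cdot \mathrm{tr}(P'^\perp (P^\perp)')$ collapses and must be replaced by a direct argument exploiting the abundance of projections $Q_c$ with $Q_c' \notin \{P', (P^\perp)'\}$.
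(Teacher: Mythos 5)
Your proposal has a genuine gap at exactly the point you flag as ``the main obstacle'': the contradiction is never actually derived, and the mechanism you sketch cannot produce one from a single fixed $c$. For one fixed $c$ and $Q=Q_c$, all the constraints you extract say is that $g_{Q}$ determines both $f_P$ and $f_{P^\perp}$ after rescaling; these constraints are mutually consistent (for instance $g_{Q}$ could equal $0$ on $[0,M)$ and $\infty$ on $[M,\infty]$), so no choice of ``colliding thresholds'' at one $c$ forces $Q_c'=P'$ or $Q_c'=(P^\perp)'$. The hypothesis must be exploited for \emph{all} $c\in(0,1)$ simultaneously. There is also a substantive slip in your ``symmetric analysis'': since $\mathrm{tr}\,(P^\perp Q_c^\perp)=c$ while $\mathrm{tr}\,(PQ_c^\perp)=1-c$, the scaling factor for the pair $(P^\perp,Q_c)$ is $c$, not $1-c$; the whole argument lives or dies on these two factors being different.

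The missing idea is the following chain, which is what the paper does. Because $f_P([0,\infty])=\{0,\infty\}$ and $P'\neq Q_c'$, \eqref{trace2} gives $f_P(t)=g_{Q_c}((1-c)t)$ for every $t$; since $t\mapsto(1-c)t$ is onto $[0,\infty]$, this already shows $g_{Q_c}([0,\infty])\subset\{0,\infty\}$, so \eqref{trace2} applies again to $(P^\perp,Q_c)$ (using $(P^\perp)'\neq Q_c'$) without any case split and yields $f_{P^\perp}(s)=g_{Q_c}(cs)$. Chaining the two identities at $cs=(1-c)t$ gives
\[
f_P(t)=f_{P^\perp}\!\left(\tfrac{1-c}{c}\,t\right)\quad\text{for every } t\in[0,\infty] \text{ and every } c\in(0,1).
\]
Now fix $t\in(0,\infty)$ and choose $c=t/(1+t)$, so that the argument on the right becomes $1$; then $f_P(t)=f_{P^\perp}(1)$ is independent of $t$, which is the claim. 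This also shows that your detour through Claim \ref{twolines}, the alternative involving $g_{P^\perp}$, and the worry about the degenerate case $(P^\perp)'=P'$ are all unnecessary: neither the pair $(P,P^\perp)$ nor the relation between $f_{P^\perp}$ and $g_P$ is ever needed.
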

\begin{proof}
We have $\mathrm{tr}\, (PQ_c^\perp) =1-c$ and  $\mathrm{tr}\, (P^\perp Q_c^\perp) =c$.
Therefore, \eqref{trace2} implies that 
\[
\{0, \infty\} \ni f_{P}(t)=g_{Q_c}((1-c)t)=f_{P^\perp}\left(\frac{1-c}{c}t\right)
\]
for every $t\in [0, \infty]$ and $c\in (0,1)$. 
Let $t \in (0, \infty)$. Then there is $c \in (0,1)$ such that $((1-c)/c)t = 1$. 
Thus $f_P(t)=f_{P^\perp} (1)\in \{0, \infty\}$ for every $t\in (0,\infty)$.
\end{proof}

In the same way, we get the following.

\begin{claim}\label{qc1}
Assume that $R\in \cP$ satisfies $g_R([0, \infty])=\{0, \infty\}$.
Assume in addition that for any $c\in (0, 1)$ there exists $Q_c\in \cP$ such that $\mathrm{tr}\,(RQ_c)=c$ and $R'\neq Q_c'\neq (R^\perp)'$.
Then $g_R((0, \infty))=\{0\}$ or $g_R((0, \infty))=\{\infty\}$ holds.
\end{claim}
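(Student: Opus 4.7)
The plan is to mirror the proof of Claim \ref{qc}, exploiting the duality between $f$ and $g$ encoded in the key identity \eqref{trace2} when one of the $Q_c$'s sits in the right slot. The hypothesis $g_R([0,\infty])\subset\{0,\infty\}$ means that for every $s\in(0,\infty)$ the value $g_R(s)$ is automatically in $\{0,\infty\}$, so \eqref{trace2} will be applicable whenever it has $g_R$ on the right-hand side.

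First, for any $c\in(0,1)$, apply \eqref{trace2} with $P_1=Q_c$ and $P_2=R$; this is legitimate because $Q_c'\neq R'$. Since $\mathrm{tr}\,(Q_c R^\perp)=1-c$, and since the right-hand side $g_R(\cdot)$ is always in $\{0,\infty\}$, we obtain
\[
f_{Q_c}(t)=g_R\bigl(t(1-c)\bigr)\in\{0,\infty\},\qquad t\in[0,\infty].
\]
Next, apply \eqref{trace2} with $P_1=Q_c$ and $P_2=R^\perp$, which is legitimate because $Q_c'\neq (R^\perp)'$. Since $\mathrm{tr}\,(Q_c R)=c$ and $f_{Q_c}(t)\in\{0,\infty\}$ by the previous step, this yields
\[
f_{Q_c}(t)=g_{R^\perp}(tc).
\]
Combining the two equalities and writing $s=t(1-c)$ gives
\[
g_R(s)=g_{R^\perp}\!\left(\frac{c}{1-c}\,s\right)\qquad\text{for every }s\in(0,\infty),\ c\in(0,1).
\]

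Fixing any $s_0\in(0,\infty)$ and letting $c$ range over $(0,1)$, the factor $c/(1-c)$ sweeps all of $(0,\infty)$, so $g_{R^\perp}$ is constant on $(0,\infty)$ with value $g_R(s_0)$. But the right-hand side of the displayed identity then does not depend on $s$, which forces $g_R(s)$ to be independent of $s\in(0,\infty)$ as well. Since $g_R$ only takes values in $\{0,\infty\}$, this constant is either $0$ or $\infty$, proving the claim.

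The only place where care is needed is checking the hypotheses for each invocation of \eqref{trace2}: we must know that $Q_c'$ differs from both $R'$ and $(R^\perp)'$, which is exactly the extra assumption placed in the statement. Once that is in hand, the argument is purely algebraic and symmetric to the proof of Claim \ref{qc}.
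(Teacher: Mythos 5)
Your proof is correct and is essentially the paper's intended argument: the paper proves this claim "in the same way" as Claim \ref{qc}, i.e.\ by the dual application of \eqref{trace2}, first with $(Q_c,R)$ and then with $(Q_c,R^\perp)$, to get $g_R((1-c)t)=f_{Q_c}(t)=g_{R^\perp}(ct)$ and then letting $c$ vary. Your endgame (showing $g_{R^\perp}$ is constant on $(0,\infty)$ before concluding for $g_R$) is a harmless reshuffling of the paper's step of choosing $c$ so that the rescaled argument equals a fixed value.
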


\begin{claim}\label{iorii}
Assume that there is no $P\in \mathcal{P}$ such that the image $\varphi(\mathcal{P}\setminus \{P\})$ is a singleton.
\begin{itemize}
\item If $f_P(t)=0$ for every $P\in \cP$ and every $t\in (0,\infty)$, then (i) holds. 
\item If $f_P(t)=\infty$ for every $P\in \cP$ and every $t\in (0,\infty)$, then (ii) holds. 
\end{itemize}
\end{claim}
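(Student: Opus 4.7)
I will translate everything through $\psi$ into information about $\Phi$, use the functional relation \eqref{trace} to push the hypothesis on the $f_P$ over to the $g_P$, and then exploit that every positive invertible matrix is coherent to $\infty Q+sQ^\perp$ for every $Q\in\cP$.

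First observe that the standing hypothesis forces $|\varphi(\cP)|\ge 2$: otherwise every $\varphi(\cP\setminus\{P\})$ would be a singleton. In fact, for each $P_2\in\cP$ there exists $P_1\in\cP\setminus\{P_2,P_2^\perp\}$ with $P_1'\ne P_2'$, for otherwise $\varphi(\cP\setminus\{P_2^\perp\})$ would collapse to $\{\varphi(P_2)\}$, contradicting the hypothesis. For such a pair $\mathrm{tr}\,(P_1'{P_2'}^\perp)>0$ and $\mathrm{tr}\,(P_1P_2^\perp)\in(0,1)$, so as $t$ runs over $(0,\infty)$ the right-hand argument in \eqref{trace} also sweeps $(0,\infty)$. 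Substituting the hypothesis $f_{P_1}\big|_{(0,\infty)}\equiv 0$ (respectively $\equiv\infty$) into \eqref{trace} therefore gives $g_{P_2}\big|_{(0,\infty)}\equiv 0$ (respectively $\equiv\infty$) for every $P_2\in\cP$.

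The claim $\varphi(P+cP^\perp)=\varphi(P)$ (respectively $\varphi(cP)=\varphi(P)$) then follows by unwinding $\psi$: for $c\in(0,1)$ one has $\psi(P+cP^\perp)=\infty P+\tfrac{c}{1-c}P^\perp$, so the previous step yields $\Phi(\psi(P+cP^\perp))=\infty P'$ and hence $\varphi(P+cP^\perp)=\psi^{-1}(\infty P')=P'=\varphi(P)$; the dual computation with $\psi(cP)=\tfrac{c}{1-c}P$ handles (ii).

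It remains to prove $\varphi([0,I))=\{0\}$ (dually, $\varphi((0,I])=\{I\}$), for which the rank-at-most-one case is immediate from the hypothesis on $f_P$. The main obstacle is a rank-two matrix $X=aP+bP^\perp\in(0,I)$ with $a,b\in(0,1)$. Setting $Y:=\psi(X)=\tfrac{a}{1-a}P+\tfrac{b}{1-b}P^\perp\in H_2^{++}$, the goal is $\Phi(Y)=0$ in case~(i) and $\Phi(Y)=\oi$ in case~(ii). The key observation is that for every $Q\in\cP$ one has $Y\sim\infty Q+sQ^\perp$ with $s=\mathrm{tr}\,(Q^\perp Y)>0$, so the first step yields $\Phi(Y)\sim\infty Q'$ for every $Q\in\cP$. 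In case~(i), $\Phi(Y)\sim\Phi(\tfrac{a}{1-a}P)=0$ places $\Phi(Y)=uR$ for some $R\in\cP$ and $u\in\oR$; the defining rules of $\sim$ (for $u=\infty$) and the trace description $\cC_{\infty Q'}\cap H_2=\{Z:\mathrm{tr}\,({Q'}^\perp Z)=0\}$ (for $u\in\bR\setminus\{0\}$) both force $R=Q'$ for every $Q'\in\varphi(\cP)$, which is impossible since $|\varphi(\cP)|\ge 2$; hence $u=0$ and $\varphi(X)=\psi^{-1}(0)=0$. Case~(ii) is dual: $\Phi(Y)\sim\oi$ places $\Phi(Y)=\infty R_0+cR_0^\perp$, and the condition $\Phi(Y)\sim\infty R'$ for every $R'\in\varphi(\cP)$ together with $|\varphi(\cP)|\ge 2$ forces $c=\infty$, whence $\Phi(Y)=\oi$ and $\varphi(X)=I$. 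The delicate point throughout is the boundary behaviour at $u,c\in\{0,\infty\}$, which is controlled by the explicit cone structure from Section~\ref{compactification}.
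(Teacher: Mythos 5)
Your proof is correct and follows essentially the same route as the paper's: propagate the hypothesis from the $f_P$ to the $g_P$ via \eqref{trace2}, read off the rank-one (resp.\ rank-one-from-$I$) cases by unwinding $\psi$, and pin down the image of an interior rank-two point by exhibiting coherences to points mapped to $0$ (resp.\ $\oi$) and to points mapped to at least two distinct projections. The only step you leave implicit is that in case (ii) a positive invertible $Y$ is coherent to $tR$ for some $t\in(0,\infty)$ for every $R\in\cP$ (clear from $\det(Y-tR)=\det Y\,\bigl(1-t\,\mathrm{tr}\,(Y^{-1}R)\bigr)$), which is what the paper instead extracts from Lemma \ref{0XI}.
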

\begin{proof}
Assume $f_P(t)=0$ for every $P\in \cP$ and every $t\in (0,\infty)$. 
Since $\Phi(\cP)$ is not a singleton,  an application of \eqref{trace2} implies that $g_P(t)=0$ for every $P\in \cP$ and every $t\in (0,\infty)$. 
These equations mean that $\varphi(cP)=0$ and $\varphi(P+cP^\perp)=\varphi(P)\in \cP$ for every $P\in \cP$ and every $c\in (0,1)$. 
Let $X\in (0, I)$. 
For each $P\in \cP$, we may find $c, d\in (0, 1)$ such that $cP\sim X\sim P+dP^\perp$ (Lemma \ref{0XI}). 
It follows that 
\[
0=\varphi(cP)\sim \varphi(X)\sim \varphi(P+dP^\perp)=\varphi(P). 
\]
This implies $\varphi(X)=0$ because $\varphi(\cP)\subset\cP$ has at least two elements.
Thus we have shown that (i) holds in this case. 
Similarly, we see that (ii) holds if $f_P(t)=\infty$ for every $P\in \cP$ and every $t\in (0,\infty)$. 
\end{proof}

Let us continue with a very simple lemma.

\begin{lemma}\label{qapl}
Let $m \colon ( 0, \infty) \to \{ 0 ,\infty \}$ and $g\colon  ( 0, \infty) \to {\oR}$ be two functions and $a,b$ real numbers with $0 < a < b$. Assume that for every $t \in (0, \infty)$ and every $s \in (a,b)$ we have
\[
m(t) = g(ts).
\]
Then either $g(t) = m(t) = 0$ for every $t \in (0, \infty)$, or $g(t) = m(t) =\infty$ for every $t \in (0, \infty)$.
\end{lemma}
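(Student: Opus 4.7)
The plan is to exploit the scaling structure of the hypothesis to show that both $m$ and $g$ are forced to be constant, taking a common value in $\{0,\infty\}$.

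First I would fix $t \in (0,\infty)$ and note that, as $s$ ranges over $(a,b)$, the product $ts$ ranges over the open interval $(ta, tb)$. The hypothesis $m(t) = g(ts)$ then says precisely that $g$ is constant on $(ta, tb)$, with constant value $m(t)$. Next, if $t_1, t_2 \in (0,\infty)$ are such that the intervals $(t_1 a, t_1 b)$ and $(t_2 a, t_2 b)$ intersect, then $g$ must take both values $m(t_1)$ and $m(t_2)$ on their common overlap, forcing $m(t_1) = m(t_2)$.

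Second, I would use this to show $m$ is constant on $(0,\infty)$. Two intervals $(t_1 a, t_1 b)$ and $(t_2 a, t_2 b)$ overlap exactly when $a/b < t_1/t_2 < b/a$, and since $b/a > 1$ this describes a genuine open neighborhood of any $t_0 \in (0,\infty)$. Hence $m$ is locally constant as a function from the connected set $(0,\infty)$ into the discrete set $\{0,\infty\}$, so $m \equiv c$ for some $c \in \{0, \infty\}$.

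Finally, once $m \equiv c$, the assumption gives $g(ts) = c$ for every $t > 0$ and every $s \in (a,b)$. Fixing any $s_0 \in (a,b)$ and letting $t$ range over $(0,\infty)$, the product $ts_0$ exhausts $(0,\infty)$, so $g \equiv c$ on $(0,\infty)$. This yields the dichotomy in the statement. There is no real obstacle here; the only point needing care is the observation that $b/a > 1$ ensures the scaled intervals $(ta, tb)$ overlap in a neighborhood of each point, which is what allows the local-constancy-to-global-constancy passage.
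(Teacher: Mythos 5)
Your proof is correct and follows essentially the same route as the paper: the paper observes that the sets $\{t : m(t)=0\}$ and $\{t : m(t)=\infty\}$ are open and invokes connectedness of $(0,\infty)$, which is exactly your local-constancy argument via overlapping intervals $(ta,tb)$. You have simply spelled out the "it is easy to see" step (why the level sets are open) and the final transfer from $m$ to $g$ in more detail.
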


\begin{proof}
It is easy to see that both sets $\{ t \in ( 0, \infty) \, : \, m(t) = 0 \}$ and  $\{ t \in ( 0, \infty) \, : \, m(t) = \infty \}$ are open. Since $(0, \infty)$ is connected, one of these two sets is equal to $(0, \infty)$.
\end{proof}

\begin{claim}
If the set $\Phi(\infty \cP)$ has at least three elements, then (i) or (ii) holds.
\end{claim}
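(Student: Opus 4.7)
\smallskip

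The plan is to argue by contradiction, combining the preceding claims to funnel the problem into a situation where Claim \ref{qc} can be applied. First, since $|\Phi(\infty\cP)| \ge 3$, Claim \ref{zeroinfty} forces $f_P([0,\infty]) \subset \{0,\infty\}$ and $g_P([0,\infty]) \subset \{0,\infty\}$ for every $P \in \cP$. Suppose now that neither (i) nor (ii) holds. By the contrapositive of Claim \ref{iorii}, there exist $P_0, t_0$ with $f_{P_0}(t_0) = \infty$ and $P_1, t_1$ with $f_{P_1}(t_1) = 0$, where $t_0, t_1 \in (0,\infty)$. Together with Claim \ref{zeroorinfty}, this rules out both $f_P((0,\infty))=\{0\}$ and $f_P((0,\infty))=\{\infty\}$ for any single $P$, so we must have $f_P((0,\infty)) = \{0,\infty\}$ for every $P \in \cP$, and similarly $g_P((0,\infty))=\{0,\infty\}$.

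The strategy is now to produce a $P \in \cP$ satisfying the hypothesis of Claim \ref{qc}; applying that claim would then yield $f_P((0,\infty)) \subset \{0\}$ or $\subset \{\infty\}$, contradicting the non-constancy established above. The hypothesis requires that, for every $c \in (0,1)$, there exist $Q_c \in \cP$ with $\mathrm{tr}(PQ_c)=c$ and $Q_c' \notin \{P', (P^\perp)'\}$. Equivalently, no circle $S_c(P):=\{Q \in \cP : \mathrm{tr}(PQ)=c\}$ on the Bloch sphere should be entirely contained in the union of the two fibers $(')^{-1}(P') \cup (')^{-1}((P^\perp)')$. Using the hypothesis $|\Phi(\infty\cP)|\ge 3$, we select $R_1,R_2,R_3 \in \cP$ with $R_1', R_2', R_3'$ pairwise distinct and use the pigeonhole principle to pick $P$ among them so that the two-point exclusion set $\{P', (P^\perp)'\}$ omits at least one $R_j'$. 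This guarantees at the very least that the complement $\cP \setminus ((')^{-1}(\{P', (P^\perp)'\}))$ is nonempty.

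The main obstacle is then to upgrade this nonemptiness to the statement that, for every $c\in(0,1)$, the circle $S_c(P)$ is not contained in the excluded set. This is the delicate step because the map $Q \mapsto Q'$ is not assumed to be continuous, so individual fibers may behave pathologically. The intended way around is to apply \eqref{trace2} to the pairs $(P,Q_c)$ and $(P^\perp,Q_c)$ for each admissible $Q_c$; as in the proof of Claim \ref{qc}, this yields the identity
\[
f_P(t) = f_{P^\perp}\!\left(t\,\tfrac{1-c}{c}\right), \qquad t \in [0,\infty],
\]
for every $c \in (0,1)$ for which an admissible $Q_c$ exists. If the set of such admissible $c$'s contains an open subinterval of $(0,1)$, then $(1-c)/c$ ranges over an open subinterval of $(0,\infty)$, and Lemma \ref{qapl} applied with $m=f_P$ and $g=f_{P^\perp}$ (both taking values in $\{0,\infty\}$) forces $f_P$ and $f_{P^\perp}$ to be constantly $0$ or constantly $\infty$ on $(0,\infty)$, contradicting $f_P((0,\infty))=\{0,\infty\}$.

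The final piece is therefore to verify that the admissible $c$-set contains an open interval for an appropriate choice of $P$. I would argue as follows: if for \emph{every} $P \in \cP$ the admissible $c$-set failed to contain an open subinterval, then for each $P$ the exceptional set $(')^{-1}(\{P',(P^\perp)'\})$ would be dense in $\cP$ (since the trace map $Q \mapsto \mathrm{tr}(PQ)$ is continuous and surjective onto $[0,1]$, and preimages of dense subsets of the codomain are dense in the domain). Running this across the three choices $P \in \{R_1,R_2,R_3\}$ and combining with the fact that $|\Phi(\infty\cP)|\ge 3$ (so that some fiber $(')^{-1}(R_j')$ lies outside $(')^{-1}(\{P',(P^\perp)'\})$), one reaches a contradiction; this is the step I expect to require the most technical care. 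Once the interval of admissible $c$'s is secured, Lemma \ref{qapl} and Claim \ref{iorii} close out the argument and yield (i) or (ii).
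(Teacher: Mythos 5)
Your opening moves match the paper's: Claim \ref{zeroinfty} pins the values of $f_P$ and $g_P$ to $\{0,\infty\}$, and combining Claims \ref{zeroorinfty} and \ref{iorii} reduces everything to producing one $P$ for which $f_P$ (or $g_P$) is constant on $(0,\infty)$, which you try to obtain from Claim \ref{qc} or from \eqref{trace2} plus Lemma \ref{qapl}. The gap is in your final step. From ``the admissible $c$-set contains no open interval for any $P$'' you infer that the excluded set $(')^{-1}(\{P',(P^\perp)'\})$ is dense in $\cP$, and you hope to contradict this using the existence of some $R_j$ with $R_j'\notin\{P',(P^\perp)'\}$. But there is no contradiction there: the map $Q\mapsto Q'$ carries no continuity, measurability, or cardinality constraints whatsoever, so a union of two fibers can be dense in $\cP$ while its complement remains nonempty (even dense). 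Density of the bad set and nonemptiness of the good set coexist, and no topological refinement of this argument can close the gap, because the only structure available is the family of identities \eqref{trace2}, not any regularity of the assignment $P\mapsto P'$.

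What is actually needed --- and what the paper supplies --- is an algebraic fact about circles on the Bloch sphere. If some entire circle $\mathcal{R}=\{Q\in\cP:\mathrm{tr}\,(P_1Q)=c\}$ is contained in a single fiber, one routes \eqref{trace2} through each $Q\in\mathcal{R}$ against two points $P_3,P_4$ whose images differ from that fiber's value, obtaining $g_{P_3}(t)=g_{P_4}\bigl(t\,\mathrm{tr}\,(QP_4^\perp)/\mathrm{tr}\,(QP_3^\perp)\bigr)$; Lemma \ref{qapl} then applies provided the ratio $\mathrm{tr}\,(QP_4^\perp)/\mathrm{tr}\,(QP_3^\perp)$ is nonconstant as $Q$ runs over $\mathcal{R}$. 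Lemma \ref{134} is the crucial computation showing that for fixed $P_3$ this ratio is constant for at most one choice of $P_4$; hence if no usable pair $P_3,P_4$ exists, then $P'=P_1'$ for all but exactly two points of $\cP$, and those two exceptional points are disposed of by a direct application of Claim \ref{qc}. Without Lemma \ref{134} or an equivalent quantitative statement --- the paper notes this is precisely the step that obstructs the $3$-dimensional analogue --- the case of large fibers cannot be dismissed, so your proof is incomplete at exactly the point you flagged.
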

\begin{proof}
By Claim \ref{zeroinfty}, $f_{P}([0, \infty])\subset \{0, \infty\}$ and $g_{P}([0, \infty])\subset \{0, \infty\}$ hold for every $P\in \cP$.
Assume that $\Phi$ restricted to $\infty \cP$ is injective. 
Then Claims \ref{qc}, \ref{zeroorinfty}, and \ref{iorii} imply that (i) or (ii) holds.

In the rest of this proof, we assume that there is a pair of distinct elements $P_1, P_2\in \cP$ such that $P_1'=P_2'$. 
By considering $\Phi\circ \psi_1$ for an appropriate affine automorphism $\psi_1$ of $\oH$ of the form $X\mapsto SXS^*$ for some invertible $2\times 2$ complex matrix $S$,
we may assume that $P_2=P_1^\perp$ without loss of generality. 
Assume that for each $c\in (0, 1)$ there is $Q_c\in \cP$ with $\mathrm{tr}\,(P_1Q_c)=c$ and $Q_c'\neq P_1'=(P_1^\perp)'$. 
Applying Claims \ref{qc}, 
\ref{zeroorinfty}, and \ref{iorii} again,
we see that either (i) or (ii) holds.

From now on, we assume that there is $c\in (0, 1)$ such that every projection $Q\in \cP$ with $\mathrm{tr}\,(P_1Q)=c$ satisfies $Q'=P_1'$. 
Assume that there is a pair of distinct points $P_3, P_4\in \cP$ such that $P_3'\neq P_1'\neq P_4'$ and the set 
\begin{equation}\label{p3p4}
\left\{ \frac{\mathrm{tr}\, (QP_4^\perp)}{\mathrm{tr}\,(QP_3^\perp)}\,:\, Q\in \cP,\, \mathrm{tr}\,(P_1Q)=c\right\}
\end{equation}
has nonempty interior.  
Then, $P_3'\neq Q'\neq P_4'$, so \eqref{trace2} implies that 
\begin{equation}\label{circle}
\{0, \infty\}\ni g_{P_3}(t)= f_Q\left(\frac{t}{\mathrm{tr}\,(QP_3^\perp)}\right) = g_{P_4}\left(t\frac{\mathrm{tr}\, (QP_4^\perp)}{\mathrm{tr}\,(QP_3^\perp)}\right)
\end{equation}
for every $t\in [0, \infty]$ and $Q\in \cP$ with $\mathrm{tr}\,(P_1Q)=c$. 
Using Lemma \ref{qapl}, we see that $g_{P_3}$ is constantly $0$ or constantly $\infty$ on $(0, \infty)$. 
It follows from Claims \ref{zeroorinfty} and \ref{iorii} that we may obtain the same conclusion as in the preceding paragraphs. 

If there is no pair $P_3, P_4$ with the above property, then the technical lemma below (Lemma \ref{134}) implies that $P'=P_1'$ holds for all but exactly two points in $\cP$. (Here we used the assumption that $\Phi(\infty\cP)$ has at least three points.)
In this case, let $P_5$ and $P_6$ be the exceptional points. 
Considering $\Phi\circ\psi_2$ for an appropriate affine automorphism $\psi_2$ of $\oH$ 
instead of $\Phi$, we may assume that $P_6=P_5^\perp$ without loss of generality.  We have $P_5'\neq P'\neq (P_5^\perp)'$ for every $P\in \cP\setminus\{P_5, P_5^\perp\}$. 
Let $d$ be any real number, $ 0 < d < 1$. For every $Q \in \mathcal{P}$ satisfying $\mathrm{tr}\,(P_5 Q)=d$, we have $P_5\neq Q\neq P_5^\perp$ and therefore $P_5'\neq Q'\neq (P_5^\perp)'$.
 Thus we may once again apply Claims \ref{qc}, \ref{zeroorinfty}, and \ref{iorii} to obtain the same conclusion.
\end{proof}

\begin{lemma}\label{134}
Let $P_1\in \cP$ and $0<c<1$. 
Set $\mathcal{R}:=\{Q\in \cP\,:\, \mathrm{tr}\,(P_1Q)=c\}$. 
For any $P_3\in \cP\setminus (\{P_1, P_1^\perp\}\cup\mathcal{R})$, there is at most one point $P_4\in \cP\setminus (\{P_1, P_1^\perp, P_3\}\cup\mathcal{R})$ with the property that the set
\begin{equation}\label{nonempty}
\left\{ \frac{\mathrm{tr}\, (QP_4^\perp)}{\mathrm{tr}\,(QP_3^\perp)}\,:\, Q\in \mathcal{R}\right\}
\end{equation}
has empty interior.
\end{lemma}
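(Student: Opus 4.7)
The plan is to translate the problem into the Bloch representation from Subsection~\ref{projectionsx}. After replacing every projection by $U(\cdot)U^*$ for a suitable unitary $U$ (a transformation that preserves both $\cP$ and every trace product $\mathrm{tr}(PQ)$), one may assume that $P_1$ has Bloch coordinates $(0,0,1/2)$, the north pole of $\mathcal{S}^2$. The condition $\mathrm{tr}(P_1Q)=c$ then forces the third Bloch coordinate of $\vec q$ to be $h:=c-1/2$, so $\mathcal{R}$ becomes the latitude circle
\[
\{(r\cos\phi,\, r\sin\phi,\, h)\,:\,\phi\in [0,2\pi)\}, \qquad r:=\sqrt{c(1-c)}>0.
\]
For an arbitrary $P\in\cP$ with Bloch coordinates $(a,b,d)$, a direct calculation based on the identity $\mathrm{tr}(QP)=\tfrac12 + 2\,\vec q\cdot\vec p$ gives
\[
\mathrm{tr}(QP^\perp) = \bigl(\tfrac12 - 2hd\bigr) - 2ra\cos\phi - 2rb\sin\phi.
\]

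Next I would observe that the hypothesis $P_3\notin\{P_1,P_1^\perp\}\cup\mathcal{R}$ translates to $(a_3,b_3)\ne(0,0)$ together with $d_3\ne h$. In particular $\mathrm{tr}(QP_3^\perp)$ never vanishes on $\mathcal{R}$ (it would vanish only at $Q=P_3\notin\mathcal{R}$), so the function
\[
f(\phi) := \frac{\mathrm{tr}(QP_4^\perp)}{\mathrm{tr}(QP_3^\perp)}
\]
is continuous on the connected circle, and its image is therefore an interval in $\mathbb{R}$. Hence the set in~\eqref{nonempty} has empty interior if and only if this interval reduces to a single point, that is, if and only if $f$ is a constant $\lambda\in\mathbb{R}$.

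The identity $f\equiv\lambda$ says that $\mathrm{tr}(QP_4^\perp) - \lambda\,\mathrm{tr}(QP_3^\perp)$ vanishes for every $\phi$; since $1,\cos\phi,\sin\phi$ are linearly independent over $\mathbb{R}$, this reduces to the three scalar relations
\[
a_4=\lambda a_3,\qquad b_4=\lambda b_3,\qquad \tfrac12 -2hd_4 = \lambda\bigl(\tfrac12 - 2hd_3\bigr).
\]
Combined with the Bloch constraint $a_4^2+b_4^2+d_4^2=\tfrac14$, this becomes a single quadratic equation in $\lambda$ (when $h\ne 0$; if $h=0$ the third relation already forces $\lambda=1$, after which $d_4^2=d_3^2$ pins down $P_4$ among just two candidates). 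Since $\lambda=1$ always solves it (this is the case $P_4=P_3$), Vieta's formulas leave at most one other root $\lambda_0$, and this root determines at most one further $P_4$. The only genuine check is that $\lambda_0\ne 1$: coincidence of the two roots would force $d_3=h$, i.e.\ $P_3\in\mathcal{R}$, contrary to hypothesis. I expect the Vieta-style counting to be the heart of the argument, with no substantial obstacle beyond this bookkeeping.
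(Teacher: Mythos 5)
Your proof is correct and follows essentially the same route as the paper's: parametrize $\mathcal{R}$ as a circle, observe that the set \eqref{nonempty} is the range of a continuous function on that circle and hence has empty interior iff the function is constant, turn constancy into linear relations via the independence of $1,\cos\phi,\sin\phi$, and combine with the normalization (sphere) constraint to obtain a quadratic having the known root corresponding to $P_4=P_3$. The only differences are cosmetic: you work in Bloch coordinates and use $\lambda$ as the variable of the quadratic, whereas the paper works directly with matrix entries and gets a quadratic in the $(1,1)$-entry $c_4$ with root $c_4=c_3$; also, your final check that $\lambda_0\neq 1$ is not actually needed, since a double root at $\lambda=1$ would simply mean there is no admissible $P_4$ at all, which is still ``at most one.''
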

\begin{proof}
We may assume $P_1=E_{11}$ without loss of generality.
In this case, we have 
\[
\mathcal{R}=\left\{\left[ \begin{matrix} c & e^{it}\sqrt{c-c^2}  \cr e^{-it}\sqrt{c-c^2} & 1-c \cr \end{matrix} \right]\,:\, t\in \mathbb{R}\right\}.
\]
For a pair of elements $P_3, P_4\in  \cP\setminus(\{P_1, P_1^\perp\}\cup\mathcal{R})$, 
there are  $c_j \in (0, 1)$ with $c_j\neq c$ and $t_j \in [0,2\pi)$ such that 
\[
P_j=\left[ \begin{matrix} c_j & e^{it_j}\sqrt{c_j-c_j^2}  \cr e^{-it_j}\sqrt{c_j-c_j^2} & 1-c_j \cr \end{matrix} \right], \ \ j=3,4.
\]
Let $t\in \mathbb{R}$ and
\[
Q=\left[ \begin{matrix} c & e^{it}\sqrt{c-c^2}  \cr e^{-it}\sqrt{c-c^2} & 1-c \cr \end{matrix} \right] \in \mathcal{R}.
\]
Then
\[
\mathrm{tr}\, (QP_j^\perp) = c+c_j-2cc_j-2\cos(t-t_j)\sqrt{(c-c^2)(c_j-c_j^2)}.
\]
We have $c+c_j-2cc_j>0$. 
Note that $P_j\notin \mathcal{R}$ implies $\mathrm{tr}\, (QP_j^\perp) \not=0$, $j=3,4$, for all $Q \in \mathcal{R}$. We see that the set \eqref{nonempty} is equal to the range of the continuous function $f\colon \mathbb{R} \to \mathbb{R}$ given by
\[
f(t) = { a_4 - b_4 \cos ( t-t_4) \over a_3 - b_3 \cos ( t-t_3) } = { g_4(t) \over g_3(t) },
\]
where
\[
a_j = c+ c_j - 2cc_j,\ \ \  b_j = 2 \sqrt{(c-c^2)(c_j-c_j^2)}, \ \ \text{and}\ \  g_j(t)=a_j - b_j \cos ( t-t_j)
\]
for $j=3,4$, $t\in \bR$.
Therefore, the set \eqref{nonempty} has empty interior if and only if $f$ is a constant function, that is, $g_3$ and $g_4$ are linearly dependent. 
Assume that these equivalent conditions hold. Then, using the fact that $a_3, a_4, b_3, b_4$ are positive together with the assumption $0\leq t_3 , t_4 <2\pi$, and considering the behavior of the functions $g_3, g_4$, it is easily seen that $t_3=t_4$ and $a_4/a_3=b_4/b_3>0$.
It follows that
\[
\frac{c+c_4-2cc_4}{c+c_3-2cc_3}=\frac{\sqrt{c_4-c_4^2}}{\sqrt{c_3-c_3^2}},
\]
hence
\[
\left(\frac{c+c_4-2cc_4}{c+c_3-2cc_3}\right)^2=\frac{c_4-c_4^2}{c_3-c_3^2}.
\]
Observe that for a fixed $c_3$ this may be viewed as a quadratic equation with respect to $c_4$ one of whose solution is $c_3$. 
Thus the desired conclusion is obtained.
\end{proof}

In what follows, we assume that the set $\Phi(\infty \cP)$ (or equivalently,  $\varphi(\cP)$) has exactly two elements. 
Let $\Phi(\infty \cP)$ consist of exactly two points $\infty Q_1, \infty Q_2$ with $Q_1, Q_2\in \cP$. 
Set $\mathcal{Q}:=\{P\in \cP\,:\, P'=Q_1\}$.

\begin{claim}
Assume that both $\mathcal{Q}$ and $\cP\setminus \mathcal{Q}$ have at least two points.
Then (i) or (ii) holds. 
\end{claim}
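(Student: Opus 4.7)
The strategy is to adapt the blueprint of the preceding case, where $\Phi(\infty\cP)$ has at least three elements, to the present two-element case, exploiting the new hypothesis that both $\mathcal{Q}$ and $\cP\setminus\mathcal{Q}$ contain at least two elements. Note first that (iii) and (iv) of Proposition \ref{proposition08} are precluded by this hypothesis, since each of them would force one of $\mathcal{Q}, \cP\setminus\mathcal{Q}$ to be a singleton. So it suffices to derive (i) or (ii).

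As a first step, I would show that for every $P\in\cP$ both $f_P$ and $g_P$ take values only in $\{0,\infty\}$. Claim \ref{twolines} applied to any pair $(P, P_2)$ with $P_2'\neq P'$ (such $P_2$ exists because both $\mathcal{Q}$ and $\cP\setminus\mathcal{Q}$ are nonempty) ensures that at least one of $f_P, g_P$ is $\{0,\infty\}$-valued. Using the equation \eqref{trace2}, together with the bijectivity of the scaling $t\mapsto t\,\mathrm{tr}(P_1 P_2^\perp)$ whenever $P_1\neq P_2$, the $\{0,\infty\}$-valuedness propagates between elements in opposite sets: if $f_{P_1}$ is $\{0,\infty\}$-valued then $g_{P_2}$ is $\{0,\infty\}$-valued for every $P_2$ with $P_2'\neq P_1'$, and symmetrically. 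Iterating this exchange and exploiting the availability of multiple pivots in each of $\mathcal{Q}$ and $\cP\setminus\mathcal{Q}$, I would promote this to: all $f_P, g_P$ are $\{0,\infty\}$-valued.

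The second step is to establish uniformity, namely either $f_P((0,\infty))=\{0\}$ for every $P\in\cP$ or $f_P((0,\infty))=\{\infty\}$ for every $P\in\cP$. This mirrors Claims \ref{qc}, \ref{qc1}, and \ref{zeroorinfty}, combined with the technical Lemma \ref{qapl}. For fixed $P_1\in\mathcal{Q}$, equation \eqref{trace2} gives $f_{P_1}(t)=g_{P_2}(t\,\mathrm{tr}(P_1 P_2^\perp))$ for every $P_2\in\cP\setminus\mathcal{Q}$, presenting $g_{P_2}$ as a rescaling of $f_{P_1}$ by the factor $\mathrm{tr}(P_1 P_2^\perp)\in(0,1)$. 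Whenever the collection $\{\mathrm{tr}(P_1 P_2^\perp):P_2\in\cP\setminus\mathcal{Q}\}$ has nonempty interior, Lemma \ref{qapl} forces $f_{P_1}$ to be uniformly $0$ or uniformly $\infty$ on $(0,\infty)$, and this value is then spread through \eqref{trace2} across all of $\cP$.

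Once uniformity is in hand, the earlier Claim \ref{iorii} yields (i) or (ii) of Proposition \ref{proposition08}; its hypothesis that no $P\in\cP$ makes $\varphi(\cP\setminus\{P\})$ a singleton is secured by the assumption $|\mathcal{Q}|,|\cP\setminus\mathcal{Q}|\geq 2$. The hard part is the uniformity step: the two-element restriction on $\varphi(\cP)$ sharply limits the geometric flexibility of $\mathcal{Q}$ and $\cP\setminus\mathcal{Q}$ on the Bloch sphere $\mathcal{S}^2$, and when one of these sets is very thin (for instance, finite) the set of scaling factors may be discrete, so Lemma \ref{qapl} cannot be invoked directly. Such pathological configurations will require a delicate ad hoc analysis on the Bloch sphere, in the spirit of the use of Lemma \ref{134} in the three-element case, to either derive a direct contradiction via the containment statements of \eqref{op1}--\eqref{op2} or treat the handful of exceptional structures of $\mathcal{Q}$ explicitly.
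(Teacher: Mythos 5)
Your outline tracks the paper's own strategy closely: first force the relevant $f_P$'s and $g_P$'s to be $\{0,\infty\}$-valued via Claim \ref{twolines} and \eqref{trace2}, then upgrade to constancy on $(0,\infty)$, then conclude with the earlier claim that constant $0$ gives (i) and constant $\infty$ gives (ii). But the proposal does not actually close the step you yourself identify as the hard part, and that step is precisely where the content of this claim lies. You say that when the set of scaling factors is too thin for Lemma \ref{qapl}, the pathological configurations ``will require a delicate ad hoc analysis \dots to either derive a direct contradiction \dots or treat the handful of exceptional structures of $\mathcal{Q}$ explicitly.'' That analysis is not optional: the paper's proof is exactly a three-way case split that carries it out. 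Either (a) every circle $\{Q:\mathrm{tr}(P_1Q)=c\}$, $c\in(0,1)$, around a suitably normalized $P_1\in\mathcal{Q}$ with $P_1^\perp\in\mathcal{Q}$ meets $\cP\setminus\mathcal{Q}$, in which case Claim \ref{qc} applies directly; or (b) some circle lies entirely in $\mathcal{Q}$ and some pair $P_3,P_4\in\cP\setminus\mathcal{Q}$ makes the ratio set \eqref{p3p4} have nonempty interior, in which case \eqref{circle} and Lemma \ref{qapl} apply; or (c) neither, in which case Lemma \ref{134} forces $\cP\setminus\mathcal{Q}$ to consist of exactly two points, which after an affine automorphism may be taken to be an antipodal pair $\{R,R^\perp\}$, so that every circle $\{Q:\mathrm{tr}(RQ)=c\}$ lies in $\mathcal{Q}$ and Claim \ref{qc1} finishes the job. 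Without case (c) worked out, the claim is not proved: the configuration $\cP\setminus\mathcal{Q}=\{R,R^\perp\}$ is a genuine possibility, not a contradiction, and it is handled by a different mechanism (Claim \ref{qc1} applied to $R$) than the one you sketch.

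Two smaller inaccuracies. First, your opening step asserts that \emph{all} $f_P$ and $g_P$, for every $P\in\cP$, are $\{0,\infty\}$-valued; the propagation via \eqref{trace2} only yields this for $f_P$ with $P$ in one of the two classes and $g_P$ with $P$ in the other (after possibly exchanging the roles of $\mathcal{Q}$ and $\cP\setminus\mathcal{Q}$), since \eqref{trace2} always pairs an $f$ indexed in one class with a $g$ indexed in the other. The paper only claims, and only needs, this weaker statement. Second, after uniformity you still need the constant value to reach $g_P$ for $P\in\mathcal{Q}$ and $f_P$ for $P\in\cP\setminus\mathcal{Q}$ before the earlier claim applies; \eqref{trace2} alone does not deliver this, and the paper inserts a short extra argument using $\infty P+t\,\mathrm{tr}(QP^\perp)P^\perp\sim tQ$ with $Q\in\mathcal{Q}$, $Q\neq P$. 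Neither of these is fatal, but the missing case (c) is.
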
 
\begin{proof}
Apply Claim \ref{twolines} together with \eqref{trace2} to see the following.
By exchanging the roles of $\mathcal{Q}$ and $\cP\setminus \mathcal{Q}$ if necessary, we may assume $f_P([0, \infty])\subset \{0,\infty\}$ for every $P\in \mathcal{Q}$
and $g_P([0, \infty])\subset \{0,\infty\}$ for every $P\in \cP\setminus \mathcal{Q}$.

From here, we go along more or less the same lines as in the proof of the preceding claim for a while.
Take distinct elements $P_1, P_2\in \mathcal{Q}$.  
Considering $\Phi\circ \psi_1$ for an appropriate affine automorphism $\psi_1$ of $\oH$ instead of $\Phi$, we may assume that $P_2=P_1^\perp$ without loss of generality.

Assume that for each $c\in (0, 1)$ there is $Q_c\in \cP\setminus \mathcal{Q}$ with $\mathrm{tr}\,(P_1Q_c)=c$. 
Using Claim \ref{qc}, we see that $f_{P_1}(t)$ is constantly $0$ on $(0, \infty)$ or constantly $\infty$ on $(0,\infty)$.

In the current and the next paragraph, we assume that there is $c\in (0, 1)$ such that every projection $Q\in \cP$ with $\mathrm{tr}\,(P_1Q)=c$ is in $\mathcal{Q}$. 
Assume additionally that there is a pair of distinct points $P_3, P_4\in \cP\setminus \mathcal{Q}$ such that the set \eqref{p3p4} has nonempty interior.  
Then, \eqref{trace2} implies \eqref{circle}
for every $t\in [0, \infty]$ and $Q\in \mathcal{Q}$ with $\mathrm{tr}\,(P_1Q)=c$. 
Using Lemma \ref{qapl}, we see that $g_{P_3}((0, \infty))=\{0\}$ or $g_{P_3}((0, \infty))=\{\infty\}$ holds.

Assume that there is no pair $P_3, P_4$ with the above property.
Lemma \ref{134} implies that $P\in \mathcal{Q}$ holds for all but exactly two points in $\cP$. 
Let $\mathcal{P} \setminus \mathcal{Q} = \{ R , R_1 \}$ with $R \not= R_1$. 
As in the second paragraph of the proof, there is no loss of generality in assuming that $R_1 = R^\perp$. Clearly, for every $c \in (0,1)$ there exists $Q_c \in \mathcal{Q}$ such that $\mathrm{tr}\, (RQ_c) = c$. Moreover, $R' \not= Q_{c}' \not= (R^\perp)'$.  
Using Claim \ref{qc1}, we see that $g_{P}((0, \infty))=\{0\}$ or $g_{P}((0, \infty))=\{\infty\}$ holds for every $P\in \cP\setminus\mathcal{Q}$.

Therefore, in all cases, by using \eqref{trace2}, we see that one of the following holds. 
\begin{itemize}
\item $f_Q(t)=g_P(t)=0$ for every $Q\in \mathcal{Q}$, $P\in \cP\setminus \mathcal{Q}$, and $t\in (0, \infty)$, or
\item $f_Q(t)=g_P(t)=\infty$ for every $Q\in \mathcal{Q}$, $P\in \cP\setminus \mathcal{Q}$, and $t\in (0, \infty)$.
\end{itemize}

Let us assume that the first condition holds. 
Let $P\in \cP$.
Then we may find $Q\in \mathcal{Q}$ that is different from $P$. 
For each $t\in (0,\infty)$, we have $\infty P +t \mathrm{tr}\,(Q{P}^\perp) P^\perp \sim t Q$ and $\Phi(tQ)=0$, thus we get $\Phi(\infty P +t \mathrm{tr}\,(Q{P}^\perp) P^\perp)\sim 0$. 
It follows that $g_P((0,\infty))=\{0\}$ for every $P\in \cP$, hence \eqref{trace2} implies $f_P((0,\infty))=\{0\}$ for every $P\in \cP$.
Thus Claim \ref{iorii} implies that (i) holds.
Similarly, we obtain (ii) in the latter case.
\end{proof}

We consider the remaining case. 
Therefore, we assume that either $\mathcal{Q}$ or $\cP\setminus \mathcal{Q}$ has only one point. 
By exchanging the roles of $\mathcal{Q}$ and $\cP\setminus \mathcal{Q}$ if necessary, we may assume that $\mathcal{Q}$ has only one point $P_\times$. 
Put $Q_\times:= (P_\times)'$ and $Q_\circ := P'$ for any $P\in \cP\setminus \{P_\times\}$.

\begin{claim}\label{rojdanb}
Either $f_{P_\times}([0, \infty))=\{0\}$, or $g_{P_\times}((0, \infty])=\{\infty\}$ holds.
\end{claim}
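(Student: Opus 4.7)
The plan is to argue the contrapositive: assume there exist $t_0, s_0 \in (0,\infty)$ with $f_{P_\times}(t_0) \neq 0$ and $g_{P_\times}(s_0) \neq \infty$, and derive a contradiction by a three-step coherency chase. Each step exploits the asymmetry $Q_\circ \neq Q_\times$ together with $\Phi$'s preservation of coherency on carefully chosen pairs. Notably, neither Claim \ref{twolines} nor the formulas \eqref{trace}, \eqref{trace2} will be needed; everything follows directly from the definition of $\sim$ on $\oH$.

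First I would show that $g_Q(t_0(1-p_Q)) \neq 0$ for every $Q \in \cP \setminus \{P_\times\}$, where $p_Q := \mathrm{tr}(P_\times Q)$. The relation $t_0 P_\times \sim \infty Q + t_0(1-p_Q) Q^\perp$ (which follows from $\mathrm{tr}(Q^\perp (t_0 P_\times)) = t_0(1-p_Q)$) forces, under $\Phi$, the coherency $f_{P_\times}(t_0) Q_\times \sim \infty Q_\circ + g_Q(t_0(1-p_Q)) Q_\circ^\perp$. Split on whether $f_{P_\times}(t_0)$ is finite or infinite: in the finite case the coherency rule for $H_2 \ni X \sim \infty P + aP^\perp$ gives $g_Q(t_0(1-p_Q)) = f_{P_\times}(t_0) \mathrm{tr}(Q_\times Q_\circ^\perp) \neq 0$; in the infinite case the rule for two elements of $\oH \setminus H_2$ (which requires their underlying projections to agree) combined with $Q_\times \neq Q_\circ$ forces $g_Q(t_0(1-p_Q)) = \infty$. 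Either way the value is nonzero.

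Next, fix $Q, Q' \in \cP \setminus \{P_\times\}$ with $Q \neq Q'$ and set $t := t_0(1-p_Q)/\mathrm{tr}(Q'Q^\perp)$. The coherency $tQ' \sim \infty Q + t\mathrm{tr}(Q'Q^\perp) Q^\perp$ becomes, under $\Phi$, coherency between $f_{Q'}(t) Q_\circ$ and $\infty Q_\circ + g_Q(t_0(1-p_Q)) Q_\circ^\perp$. If $f_{Q'}(t)$ were finite, the $H_2$-vs-infinite coherency rule would force $0 = g_Q(t_0(1-p_Q))$, contradicting the previous step; hence $f_{Q'}(t) = \infty$. Now apply the coherency $tQ' \sim \infty P_\times + t(1-p_{Q'}) P_\times^\perp$: the $\Phi$-images are $\infty Q_\circ$ and $\infty Q_\times + g_{P_\times}(t(1-p_{Q'})) Q_\times^\perp$, both in $\oH \setminus H_2$. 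Since $Q_\circ \neq Q_\times$, the rule for two infinite elements requires one of them to equal $\oi$, which forces $g_{P_\times}(t(1-p_{Q'})) = \infty$.

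To close, specialize $Q = P_\times^\perp$, so $p_Q = 0$ and $\mathrm{tr}(Q'Q^\perp) = p_{Q'}$; the equality reads $g_{P_\times}(t_0(1-p_{Q'})/p_{Q'}) = \infty$ for every $Q' \in \cP \setminus \{P_\times, P_\times^\perp\}$. The map $p_{Q'} \mapsto t_0(1-p_{Q'})/p_{Q'}$ bijects $(0,1)$ onto $(0,\infty)$, so $g_{P_\times}((0,\infty)) = \{\infty\}$, which together with $g_{P_\times}(\infty) = \infty$ contradicts $g_{P_\times}(s_0) \neq \infty$. The principal bookkeeping challenge is correctly parsing the four distinct coherency rules for pairs in $\oH$ according to whether each element lies in $H_2$ or in $\oH \setminus H_2$; once these are separated case by case in each of the three propagation steps, the chain closes mechanically.
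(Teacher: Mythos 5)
Your proof is correct, and it is essentially the paper's own argument presented in contrapositive form: both proofs propagate the nonvanishing of $f_{P_\times}$ at a single point through the coherency relations $tP \sim \infty R + t\,\mathrm{tr}\,(PR^\perp)R^\perp$, exploiting that every $P \neq P_\times$ satisfies $P' = Q_\circ$, so that a finite nonzero multiple of $Q_\circ$ cannot be coherent to $\infty Q_\circ + aQ_\circ^\perp$ with $a \neq 0$, which forces first $f_{Q'} = \infty$ and then $g_{P_\times} = \infty$ on all of $(0,\infty)$. The only difference is organizational: the paper fixes $\Phi(tP)$ and varies $R$ to invoke uniqueness of the coherent point on the line $\ell_{\infty Q_\circ, \oi}$, whereas you name the intermediate quantities and chase them through three explicit steps.
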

\begin{proof}
Fix a point $P\in \cP\setminus\{P_\times\}$ and a real number $t\in (0, \infty)$. 
Let $R\in \cP\setminus \{P_\times\}$. 
We have
\[
\Phi(tP)\sim \infty Q_\circ + g_{R}(t\mathrm{tr}\,(PR^\perp))Q_\circ^\perp 
\]
because $tP\sim \infty R + t\mathrm{tr}\,(PR^\perp) R^\perp$.
Moreover, by \eqref{trace}, we obtain  
\begin{equation}\label{kkkjk}
f_{P_\times}(s) \mathrm{tr}\,(Q_\times Q_\circ^\perp) = g_{R}(s\mathrm{tr}\,(P_\times R^\perp))
\end{equation}
for every $s\in [0, \infty]$.
It follows that
\[
\Phi(tP)\sim \infty Q_\circ + f_{P_\times}\left(t\frac{\mathrm{tr}\,(PR^\perp)}{\mathrm{tr}\,(P_\times R^\perp)}\right) \mathrm{tr}\,(Q_\times Q_\circ^\perp) Q_\circ^\perp. 
\]
Note that the number ${\mathrm{tr}\,(PR^\perp)}/{\mathrm{tr}\,(P_\times R^\perp)}$ can take all values of $[0, \infty)$ as $R$ runs over $\cP\setminus \{P_\times\}$. 

Therefore, if $f_{P_\times}$ restricted to $[0, \infty)$ is not a constant function, then $\Phi(tP)$ needs to lie in $\ell_{\infty Q_\circ, \oi}$, which together with $0\sim tP$ implies $\Phi(tP)=\infty Q_\circ$ for every $P\in \cP\setminus \{P_\times\}$ and $t\in (0, \infty)$. In this case, \eqref{trace2} implies $g_{P_\times}(t)=\infty$ for every $t\in (0, \infty)$.
\end{proof}

The following claim completes the proof of Proposition \ref{proposition08}.
\begin{claim}\label{saigo}
If $f_{P_\times}([0, \infty))=\{0\}$, then (iv) holds.
If $g_{P_\times}((0, \infty])=\{\infty\}$, then (iii) holds.
\end{claim}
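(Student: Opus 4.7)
The plan is to translate the claim into the $\oHp$-picture via $\psi$: the conclusion (iv) becomes the inclusion $\Phi(\oHp\setminus\ell_{\infty P_\times,\oi})\subset \ell_{0,\infty Q_\circ}$ together with the corresponding singleton condition (the companion inclusion $\Phi(\oHp\cap\ell_{\infty P_\times,\oi})\subset\ell_{\infty Q_\times,\oi}$ is automatic from the definition of $g_{P_\times}$), and (iii) is the analogous statement with the roles of $\ell_{\infty P_\times,\oi}$ and $\ell_{0,\infty P_\times}$ (respectively $\ell_{0,\infty Q_\circ}$ and $\ell_{\infty Q_\circ,\oi}$) interchanged. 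I will prove the first implication directly and then obtain the second by conjugating with the inversion automorphism $\sigma(Y)=Y^{-1}$ of $\oH$.

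Assume first $f_{P_\times}([0,\infty))=\{0\}$. The identity \eqref{kkkjk} applied with any $R\neq P_\times$ forces $g_R(u)=0$ for all $u\in[0,\infty)$, hence $\Phi(\infty R+uR^\perp)=\infty Q_\circ$. Since $f_{P_\times}((0,\infty))\subset\{0,\infty\}$, Claim \ref{twolines} applied to the pair $(P_1,P_2)=(P_\times,R)$ yields the first alternative $\Phi(\Pi_{P_\times,R}^+)\subset\ell_{0,\infty Q_\circ}\cup\ell_{\infty Q_\times,\oi}$ for every $R\in\cP\setminus\{P_\times\}$. The boundary elements of $\oHp\setminus\ell_{\infty P_\times,\oi}$ (namely $Y=0$, $Y=tP$ with $t\in(0,\infty)$, and $Y=\infty R+uR^\perp$ with $R\neq P_\times$) land in $\ell_{0,\infty Q_\circ}$ by direct inspection. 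The main case is $Y=aP+bP^\perp\in H_2\cap\oHp$ positive invertible: solving $\det(Y-\alpha P_\times)=0$ gives a unique $\alpha\in(0,\infty)$ such that $Y-\alpha P_\times=\beta R$ is of rank one with $\beta\neq 0$, and rank-two-ness of $Y$ forces $R\neq P_\times$; thus $Y\in\Pi_{P_\times,R}^+$ and hence $\Phi(Y)\in\ell_{0,\infty Q_\circ}\cup\ell_{\infty Q_\times,\oi}$. Picking any $R'\neq P_\times$, the coherence $Y\sim\infty R'+\mathrm{tr}(R'^\perp Y)R'^\perp$ yields $\Phi(Y)\sim\infty Q_\circ$, which combined with $\ell_{\infty Q_\times,\oi}\cap\cC_{\infty Q_\circ}=\{\oi\}$ narrows $\Phi(Y)\in\ell_{0,\infty Q_\circ}\cup\{\oi\}$; finally $Y\sim\alpha P_\times$ gives $\Phi(Y)\sim 0\not\sim\oi$, excluding $\oi$. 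The singleton assertion is a direct trace computation: for $A=\infty P_\times+sP_\times^\perp$ and $B\in\cC_A\cap\oHp\setminus\ell_{\infty P_\times,\oi}$, the constraints $\Phi(B)\in\ell_{0,\infty Q_\circ}$ and $\Phi(B)\sim\Phi(A)=\infty Q_\times+g_{P_\times}(s)Q_\times^\perp$ determine $\Phi(B)$ uniquely (equal to $(g_{P_\times}(s)/\mathrm{tr}(Q_\times^\perp Q_\circ))\,Q_\circ$ when $g_{P_\times}(s)\in\mathbb{R}$, and to $\infty Q_\circ$ when $g_{P_\times}(s)=\infty$).

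For the second implication, suppose $g_{P_\times}((0,\infty])=\{\infty\}$ and define $\tilde\Phi(Y):=\Phi(Y^{-1})^{-1}$, a coherency preserver on $\oHp$. A direct computation gives $\tilde\Phi(0)=0$, $\tilde\Phi(\oi)=\oi$, $\tilde\Phi(\infty P_\times^\perp)=\infty Q_\times^\perp$, and $\tilde\Phi(\infty P)=\infty Q_\circ^\perp$ for every $P\in\cP\setminus\{P_\times^\perp\}$, so $\tilde\Phi$ fits the same setup as $\Phi$ with $(P_\times,Q_\times,Q_\circ)$ replaced by $(P_\times^\perp,Q_\times^\perp,Q_\circ^\perp)$. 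The relation $\tilde f_{P_\times^\perp}(t)=g_{P_\times}(t^{-1})^{-1}$ translates the hypothesis to $\tilde f_{P_\times^\perp}([0,\infty))=\{0\}$, so the first implication applied to $\tilde\Phi$ yields (iv) for $\tilde\Phi$. Pulling back by $\sigma$, which exchanges $\ell_{0,\infty P_\times}\leftrightarrow\ell_{\infty P_\times^\perp,\oi}$ and $\ell_{0,\infty Q_\circ^\perp}\leftrightarrow\ell_{\infty Q_\circ,\oi}$, produces (iii) for $\Phi$. The principal obstacle is the positive-invertible case in the first implication: the restriction from Claim \ref{twolines} places $\Phi(Y)$ on a union of two lines, and narrowing it further to $\ell_{0,\infty Q_\circ}$ requires both the coherence with $\infty Q_\circ$ (eliminating most of $\ell_{\infty Q_\times,\oi}$) and the coherence with $0$ via $Y\sim\alpha P_\times$ (ruling out $\oi$).
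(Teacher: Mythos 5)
Your proof is correct and is essentially the paper's argument in mirror image: the paper proves the $g_{P_\times}$-case directly by showing every point of $\oHp\setminus\ell_{0,\infty P_\times}$ is coherent both to a point mapped to $\oi$ and to a point mapped to $\infty Q_\circ$, hence lands in $\cC_{\oi}\cap\cC_{\infty Q_\circ}=\ell_{\infty Q_\circ,\oi}$, and declares the $f_{P_\times}$-case ``parallel,'' whereas you prove the $f_{P_\times}$-case directly (coherence with $0$ and with $\infty Q_\circ$) and make the parallelism explicit via conjugation by the inversion. Note only that your detour through Claim \ref{twolines} is superfluous: once you have $\Phi(Y)\sim 0$ and $\Phi(Y)\sim\infty Q_\circ$, Lemma \ref{simpll} already gives $\Phi(Y)\in\cC_0\cap\cC_{\infty Q_\circ}=\ell_{0,\infty Q_\circ}$, and in any case Claim \ref{twolines} as stated is only a disjunction, so extracting its first alternative from the hypothesis $f_{P_\times}((0,\infty))\subset\{0,\infty\}$ relies on its proof rather than its statement.
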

\begin{proof}
We only confirm the latter statement because the other can be verified in a parallel manner.
Thus we assume $g_{P_\times}((0, \infty])=\{\infty\}$.
In this case, we have  $\Phi(\infty P_\times+ tP_\times^\perp)=\oi$ for every $t\in (0, \infty]$. 
It is clear from the definition of the coherency relation that every element of $\oHp \setminus  \ell_{0, \infty P_\times}$ is coherent to $\infty P_\times+ tP_\times^\perp$ for some $t\in (0,\infty]$. 
Thus, we get $\Phi(\oHp \setminus  \ell_{0, \infty P_\times})\subset \cC_{\oi}$.

By \eqref{trace2}, we have $\Phi(tP)=\infty Q_\circ$ for every $P\in \cP\setminus \{P_\times\}$ and $t\in (0, \infty]$.
On the other hand, we see that every $A\in\oHp \setminus  \ell_{0, \infty P_\times}$ is coherent to $tP$ for some $P\in \cP\setminus \{P_\times\}$ and $t\in (0, \infty]$. 
Indeed, this is clear if $A\sim 0$. 
If $A\not\sim 0$, then $A$ is coherent to $tP_\times^\perp$ for some $t\in (0,\infty]$. 
Therefore, we get $\Phi(\oHp \setminus  \ell_{0, \infty P_\times})\subset \cC_{\infty Q_\circ}$.

For any $c \in [0, \infty]$, we have $\Phi (cP_\times) \in \ell_{0, \infty Q_\times}$, so $\Phi (cP_\times) \not\in  \ell_{\infty Q_{\circ}, {\oi}}$. Hence, any $A \in \oHp \setminus  \ell_{0, \infty P_\times}$ that is coherent to $cP_\times$ is mapped by $\Phi$ to the unique point on the line $ \ell_{\infty Q_{\circ}, {\oi}}$ that is coherent to $\Phi (cP_\times)$.
It is now straightforward to deduce that (iii) holds.  
\end{proof}

\begin{remark}\label{saigo2}
In the language of $\varphi$, the condition $g_{P_\times}((0, \infty])=\{\infty\}$ in Claim \ref{saigo} means that $\varphi(P_\times +tP_\times^\perp) =\varphi(I)=I$ for every $t\in (0,1)$.
\end{remark}

Before we close this subsection, we give two lemmas which will be used in the subsequent arguments.
\begin{lemma}\label{ABC}
Let $j\in \{o, i, ii, iii, iv\}$.
Let $A, B, C\in H_2$ satisfy $A<C\leq B$. 
Let $\varphi\colon [A, B]\to \oH$ be a coherency preserving map with $d(\varphi(A), \varphi(B))=d(\varphi(A), \varphi(C))=2$.
Assume that $\varphi\colon [A, B]\to \oH$ satisfies Condition (j). 
Then $\varphi$ restricted to $[A, C]$ also satisfies Condition (j). 
\end{lemma}

\begin{proof}
By Corollary \ref{kiakia}, we see that $\varphi$ restricted to $[A, C]$ is degenerate. Thus, this restriction satisfies (at least) one of the five Conditions (o), (i), (ii), (iii), and (iv). For each $P \in \mathcal{S}_{A,C}$, there is a unique $P' \in \mathcal{S}_{A,B}$ such that $P \sim P'$. Clearly, $P'$ is the unique point on $\ell_{A,P}$ that is coherent to $B$. The correspondence $P \mapsto P'$ is a bijection of  
$ \mathcal{S}_{A,C}$ onto $ \mathcal{S}_{A,B}$. 
Indeed, for $R\in \mathcal{S}_{A,B}$, we take the unique point $Q$ on the line $\ell_{A,R}$ that is coherent to $C$. Then $Q' = R$.

The point $\varphi (P')$ is the unique point on $\ell_{\varphi (A), \varphi (P)}$ that is coherent to $\varphi (B)$. It follows that for $P,Q \in \mathcal{S}_{A,C}$ we have $\varphi (P) = \varphi (Q)$ if and only if 
$\varphi (P') = \varphi (Q')$. Hence, if $\varphi$ satisfies (o), then the restriction of $\varphi$ to $[A,C]$ satisfies (o), and if $\varphi$ satisfies (i) or (ii), then the restriction of $\varphi$ to $[A,C]$ satisfies (i) or (ii), and
if $\varphi$ satisfies (iii) or (iv), then the restriction of $\varphi$ to $[A,C]$ satisfies (iii) or (iv). It is trivial to see that if $\varphi$ satisfies (j), $j \in \{ i, ii \}$, then $\varphi$ restricted to $[A,C]$ satisfies (j).

Assume next that $\varphi$ satisfies (iii). 
Then we have $\varphi([A, B]\cap\ell_{A, P_\times})\subset \ell_{\varphi(A), Q_\times}$ and $\varphi([A, B]\setminus\ell_{A, P_\times})\subset \ell_{Q_\circ, \varphi(B)}$.
It follows that $\varphi([A, C]\cap\ell_{A, P_\times})\subset \ell_{\varphi(A), Q_\times}$ and $\varphi([A, C]\setminus\ell_{A, P_\times})\subset \ell_{Q_\circ, \varphi(B)}$.
From this, it is straightforward to see that $\varphi$ restricted to $[A,C]$ also satisfies (iii).

Finally, assume that $\varphi$ satisfies (iv). 
Then we have $\varphi([A, B]\cap\ell_{P_\times, B})\subset \ell_{Q_\times, \varphi(B)}$ and $\varphi([A, B]\setminus\ell_{P_\times, B})\subset \ell_{\varphi(A), Q_\circ}$.
In this case, the assumption $d(\varphi(A), \varphi(C))=2$ implies that $C\in \ell_{P_\times, B}\setminus\{P_\times\}$. 
Thus we have $\ell_{P_\times, C}=\ell_{P_\times, B}$ and $\ell_{Q_\times, \varphi(C)}=\ell_{Q_\times, \varphi(B)}$, and it is not hard to see that $\varphi$ restricted to $[A,C]$ satisfies (iv).
\end{proof}

In the same way, we get the following lemma.

\begin{lemma}\label{ABC2}
Let $j\in \{o, i, ii, iii, iv\}$.
Let $A, B, C\in H_2$ satisfy $A\le C < B$. 
Let $\varphi\colon [A, B]\to \oH$ be a coherency preserving map with $d(\varphi(A), \varphi(B))=d(\varphi(C), \varphi(B))=2$.
Assume that $\varphi\colon [A, B]\to \oH$ satisfies Condition (j). 
Then $\varphi$ restricted to $[C, B]$ also satisfies Condition (j). 
\end{lemma}

\subsection{The case $d(\varphi(A), \varphi(B))=0$}\label{dab0}
We consider the case $d(\varphi(A), \varphi(B))=0$, or equivalently, $\varphi(A)=\varphi(B)$. 
In this case, we have either $\varphi([A, B])\subset \mathcal{C}_{\varphi(A)}=\mathcal{C}_{\varphi(B)}$, or there is $C\in (A, B)$ such that $d(\varphi(C), \varphi(A))=2$. 
In the former case, $\varphi$ is clearly of type $(\mathcal{C})$. 

We study the latter case.
By considering $\psi_2\circ\varphi\circ \psi_1$ for a pair of suitable automorphisms $\psi_1, \psi_2$ of $\oH$, we may assume that $A=-I$, $B=I$, and $\varphi(0)=0$, $\varphi(-I)=\varphi(I)=I$ (use Corollary \ref{cororder}).

\begin{proposition}\label{00II}
Let $\varphi\colon [-I, I]\to \oH$ be a coherency preserver with $\varphi(0)=0$ and $\varphi(-I)=\varphi(I)=I$. 
Then $\varphi$ is degenerate.
\end{proposition}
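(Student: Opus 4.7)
The plan is to reduce to the previously analyzed case of rank-two endpoints by restricting $\varphi$ to the two halves $[0, I]$ and $[-I, 0]$. First, observe that $\varphi$ itself cannot be standard, since an automorphism of $\oH$ is injective but $\varphi(-I) = \varphi(I) = I$ while $-I \neq I$; the goal is therefore to show $\varphi$ is degenerate. Set $\varphi_+ := \varphi|_{[0, I]}$ and $\varphi_- := \varphi|_{[-I, 0]}$. Each is a coherency preserver whose endpoints have rank-two differences, so by Corollary \ref{cor08} (applied to $\varphi_+$ directly and, after composing with $X \mapsto -X$, to $\varphi_-$) each restriction falls into one of the six classes: standard, (o), (i), (ii), (iii), (iv).

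The key preliminary observation is that $\varphi(P) = \varphi(-P) \in \cP$ for every $P \in \cP$. Indeed, using $P \sim 0$ and $P \sim I$ together with Lemma \ref{veryeasy} gives $\varphi(P) \in \cC_0 \cap \cC_I = \cP$; the same argument using $-P \sim 0$ and $-P \sim -I$ gives $\varphi(-P) \in \cP$; and since $P \sim -P$ (their difference is the rank-one matrix $2P$), the projections $\varphi(P)$ and $\varphi(-P)$ are coherent rank-one projections and hence equal. Denote the common value by $\mu(P)$, yielding a map $\mu \colon \cP \to \cP$.

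The proof then splits into three cases. Case 1 is when both $\varphi_\pm$ are standard with extensions $\psi_\pm$; consider the automorphism $\chi := \psi_+^{-1} \circ \psi_- \circ (X \mapsto -X)$, which fixes $0$ and $I$ and satisfies $\chi(P) = \psi_+^{-1}(\psi_-(-P)) = \psi_+^{-1}(\mu(P)) = \psi_+^{-1}(\psi_+(P)) = P$ for every $P \in \cP$. But then $\chi((1/2)I)$ must lie in $\cC_0 \cap \cC_I = \cP$, and injectivity of $\chi$ combined with $\chi|_{\cP} = \mathrm{id}_{\cP}$ then forces $(1/2)I \in \cP$, a contradiction. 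Case 2 is when $\mu$ is a constant, say $\mu(\cP) = \{Q\}$; since a standard restriction would force $\mu$ to be a bijection of $\cP$ and each of types (i)--(iv) forces $\mu$ to be non-constant, both $\varphi_\pm$ must be of type (o) with the same common $Q$, giving $\varphi([-I, I]) \subset \cC_Q$, hence $\varphi$ is degenerate of type $(\cC)$.

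The third and main case is when $\mu$ is non-constant and at least one of $\varphi_\pm$ falls into one of the types (i), (ii), (iii), (iv) (the other being in the same list or standard). The hard part will be the combinatorial case analysis across these remaining pairings of types. In each pairing, the explicit descriptions of $\varphi_\pm$ provided by Corollary \ref{cor08} pin down $\varphi$ on the union $[0, I] \cup [-I, 0]$; one then propagates this information to the remaining indefinite matrices of $[-I, I]$ (those with one positive and one negative eigenvalue) by exploiting coherency relations with suitably chosen elements of $[0, I]$ and $[-I, 0]$ (using Lemmas \ref{dupc} and \ref{0XI} and Corollary \ref{JeDr}) and by forcing the image via Lemmas \ref{pq}, \ref{veryeasy}, and \ref{smtr}. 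In each sub-case one concludes that $\varphi([-I, I])$ is contained either in a single cone or in the union of two lines, so $\varphi$ is degenerate of type $(\cC)$ or $(\ell)$.
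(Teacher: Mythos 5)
Your overall strategy---splitting $[-I,I]$ into the two halves, observing $\varphi(P)=\varphi(-P)\in\cP$, and classifying each restriction via Corollary \ref{cor08}---is the same as the paper's, but two steps do not hold up. First, your Case 1 argument is broken: the claim that $\chi((1/2)I)$ must lie in $\cC_0\cap\cC_I=\cP$ is false, since $(1/2)I$ is not coherent to $0$ or to $I$, and an automorphism fixing $0$ and $I$ permutes $\cC_0\cap\cC_I$ but constrains $\chi((1/2)I)$ not at all. Indeed, an automorphism fixing $0$, $I$, and every point of $\cP$ exists (the identity), so no contradiction can be extracted from the mere existence of $\chi$. The correct way to kill every sub-case in which either restriction is standard is the identity-type theorem: if $\varphi|_{[0,I]}$ agreed with an automorphism $\psi$, then Corollary \ref{kiakia}, applied to $\varphi$ on $[-I,I]$, would force $\varphi=\psi$ on all of $[-I,I]$, contradicting $\varphi(-I)=\varphi(I)$ and the injectivity of $\psi$. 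With that observation neither restriction can be standard, and both must fall into (o)--(iv); your Case 1 and the ``or standard'' branch of your Case 3 disappear.

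Second, and more seriously, your Case 3 is where essentially all of the work lies, and it is only announced, not carried out. The blanket assertion that ``in each sub-case one concludes that $\varphi([-I,I])$ is contained in a single cone or in the union of two lines'' is not something one reads off from the descriptions in Corollary \ref{cor08}: two of the pairings (both halves of type (i), and one of type (i) with the other of type (ii)) are in fact \emph{impossible}, and proving that requires an argument (one shows each would force $\varphi(\cP)$ to be a singleton, contradicting the defining property of types (i)--(iv)); and the pairings involving types (iii)/(iv) require a genuine propagation argument to show that all of $[-I,I]\setminus\ell_{0,P_\times}$ is mapped into $\ell_{Q_\circ,I}$ (this uses the ``unique coherent point'' clauses of Proposition \ref{proposition08} together with Lemma \ref{c(s)} and Corollary \ref{JeDr}, and is the longest part of the paper's proof). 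Your Case 2 also silently skips the step of pushing the conclusion from $[0,I]\cup[-I,0]$ to the indefinite matrices of $[-I,I]$, which needs Lemma \ref{c(s)}. As written, the proposal establishes the easy reductions but leaves the core of the proposition unproved.
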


To prove this proposition, let us consider a coherency preserver $\varphi\colon [-I, I]\to \oH$ satisfying $\varphi(0)=0$ and $\varphi(-I)=\varphi(I)=I$. 
Observe that $\varphi$ is not standard because $\varphi(-I)=\varphi(I)$. 
By Corollary \ref{kiakia}  we see that $\varphi$ is not standard on every nonempty open subset of $[-I, I]$.
We have $\varphi(\cP)\subset\cP$ and $\varphi(-\cP)\subset \cP$ by the assumption. 
For each $P\in \cP$, we have $P\sim -P$ and thus $\varphi(P)\sim \varphi(-P)$. 
Since $\varphi(P),\varphi(-P)\in \cP$, we obtain $\varphi(P)=\varphi(-P)$.

The restriction $\varphi_1$ of $\varphi$ to $[0, I]$, and the mapping $\varphi_2\colon [0, I]\to \oH$ defined by $\varphi_2(X)=\varphi(-X)$, $X\in [0, I]$, satisfy the assumption of Proposition \ref{proposition08}. 
Moreover, we know that $\varphi(P)=\varphi_1(P)=\varphi_2(P)$ for every $P\in \cP$.
By considering the mapping $X\mapsto \varphi(-X)$ instead of $\varphi$, if necessary, we see that we only need to consider the following seven possibilities: 
\begin{enumerate}[(I)]
\item  Both $\varphi_1$ and $\varphi_2$ satisfy Condition (o). 
\item  Both $\varphi_1$ and $\varphi_2$ satisfy Condition (i). 
\item  The maps $\varphi_1$ and $\varphi_2$ satisfy Condition (i) and Condition (ii), respectively. 
\item  Both $\varphi_1$ and $\varphi_2$ satisfy Condition (ii). 
\item  Both $\varphi_1$ and $\varphi_2$ satisfy Condition (iii). 
\item  The maps $\varphi_1$ and $\varphi_2$ satisfy Condition (iii) and Condition (iv), respectively.
\item  Both $\varphi_1$ and $\varphi_2$ satisfy Condition (iv). 
\end{enumerate}

Let us study each case.

\begin{claim}
If (I) holds, then $\varphi ([-I,I])$ is contained in one cone and $\varphi$ is of type $(\mathcal{C})$.
\end{claim}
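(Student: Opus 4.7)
The plan is to extract from the hypothesis a single projection $Q$ that captures the common behavior of $\varphi_1$ and $\varphi_2$, and then check that the whole image lies in the cone with vertex $Q$.

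First, since $\varphi_1$ satisfies (o) of Proposition \ref{proposition08}, there is $Q_1 \in \mathcal{P}$ such that $\varphi_1(\mathcal{P}) = \{Q_1\}$ and $\varphi_1([0,I]) \subset \mathcal{C}_{Q_1}$. Similarly, $\varphi_2$ yields $Q_2 \in \mathcal{P}$ with $\varphi_2(\mathcal{P}) = \{Q_2\}$ and $\varphi_2([0,I]) \subset \mathcal{C}_{Q_2}$. It has already been observed in the paragraph preceding the list (I)--(VII) that $\varphi(P) = \varphi(-P)$ for every $P \in \mathcal{P}$, so $Q_1 = \varphi(P) = \varphi_2(P) = Q_2$ for any choice of $P \in \mathcal{P}$. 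Denote this common value by $Q$.

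Second, observe that $\varphi([0,I]) = \varphi_1([0,I]) \subset \mathcal{C}_Q$, and that the relation $\varphi_2(X) = \varphi(-X)$ gives $\varphi([-I,0]) = \varphi_2([0,I]) \subset \mathcal{C}_Q$. Therefore
\[
\varphi([-I,I]) = \varphi([-I,0]) \cup \varphi([0,I]) \subset \mathcal{C}_Q ,
\]
which shows that $\varphi([-I,I])$ is contained in a single cone.

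Finally, to verify that $\varphi$ is of type $(\mathcal{C})$ in the sense of Definition \ref{degeneratefirst}, I would note that $\mathcal{P} \subset [0,I] \subset [-I,I]$, so $Q = \varphi(P) \in \varphi([-I,I])$ for any $P \in \mathcal{P}$. Taking any point $A_0 \in [-I,I]$ (for instance $A_0 = -I$) and $B = Q$, the inclusion $\varphi([-I,I] \setminus \{A_0\}) \subset \mathcal{C}_Q = \mathcal{C}_B$ holds automatically from the previous step. There is no real obstacle here: the content is entirely in the identification $Q_1 = Q_2$, which was already prepared by the observation that $\varphi$ agrees on $P$ and $-P$ for every rank one projection $P$.
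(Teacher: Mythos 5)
There is a genuine gap in your second step: the identity $\varphi([-I,I]) = \varphi([-I,0]) \cup \varphi([0,I])$ is false, because $[-I,I] \neq [-I,0] \cup [0,I]$. The Loewner order is only a partial order, so the matrix interval $[-I,I]$ contains many points that are neither $\geq 0$ nor $\leq 0$ — for instance $J = \left[\begin{smallmatrix} 1 & 0 \\ 0 & -1\end{smallmatrix}\right]$, or more generally every indefinite matrix in $[-I,I]$. Your argument only covers $\varphi([-I,0] \cup [0,I]) \subset \mathcal{C}_Q$ and says nothing about the image of the remaining (large) set $[-I,I]\setminus([-I,0]\cup[0,I])$.

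The missing step is handled in the paper as follows: for every $X \in [-I,I]\setminus([-I,0]\cup[0,I])$ one has $X \notin (0,I)$, $X \leq I$, and $X \not< 0$, so Lemma \ref{c(s)} (applied with $A=0$, $B=I$) shows that $X$ is coherent to some point of $\mathcal{S}_{0,I} = \mathcal{P}$. Since $\varphi(\mathcal{P}) = \{Q\}$, this forces $\varphi(X) \in \mathcal{C}_Q$ for such $X$ as well, completing the inclusion $\varphi([-I,I]) \subset \mathcal{C}_Q$. Your first step (the identification $Q_1 = Q_2$ via $\varphi(P) = \varphi(-P)$) and your final step (deducing type $(\mathcal{C})$ from the inclusion in one cone) are both correct and match the paper; only this coherence argument for the indefinite part of the interval needs to be added.
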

\begin{proof}
Assume that both $\varphi_1$ and $\varphi_2$ satisfy Condition (o). 
Then we have $\varphi([-I,0]\cup [0,I])\subset \cC_Q$ and $\varphi(\cP)=\{Q\}$. 
For every $X\in [-I,I]\setminus([-I,0]\cup [0,I])$ we have $X \not\in (0,I)$, $X \le I$, and $X\not< 0$, and so Lemma \ref{c(s)} implies that $X$ is coherent to some point of $\cP$.
Thus $\varphi([-I,I])\subset \cC_Q$. 
\end{proof}

\begin{claim}\label{(II)}
Condition (II) never holds.
\end{claim}
\begin{proof}
If both $\varphi_1$ and $\varphi_2$ satisfy (i), then we have $\varphi([0, I)\cup (-I, 0])=\{0\}$. 
For each real number $c\in (0, 2)$ and $P\in \cP$, we see that $-I+cP^\perp$ is coherent to some point of $[0, I)\cup (-I, 0]$. 
By $-P\sim -I+cP^\perp \sim -I$ and 
\[
\varphi([0, I)\cup (-I, 0])=\{0\},\ \  \varphi(-P)=\varphi(P)\in \cP, \ \  \varphi(-I)=I,
\] we have $\varphi(-I+cP^\perp)=\varphi(P)\in \cP$. 
Similarly, we have $\varphi(I-cP^\perp)=\varphi(P)\in \cP$ for every $P\in \cP$ and $c\in (0, 2)$. 
By Lemma \ref{0XI}, for every pair $P, Q\in \cP$, $P \not= Q^\perp$ and  $c\in (0, 2)$, there is $d\in (0,2)$ such that $-I+cP^\perp\sim I-dQ^\perp$. 
Thus $\varphi(P)\in \cP$ is coherent to $\varphi(Q)\in \cP$, hence $\varphi(P)=\varphi(Q)$ whenever $P \not= Q^\perp$.
This yields that $\varphi(\cP)$ is a singleton, which contradicts our assumption.
\end{proof}

\begin{claim}\label{(III)}
Condition (III) never holds.
\end{claim}
\begin{proof}
If $\varphi_1$ satisfies (i) and $\varphi_2$ satisfies (ii), then we see that $\varphi ([0,I)) = \{ 0 \}$ and $\varphi ([-I,0)) = \{ I \}$. Set $\mathcal{N}:=(-I,I)\cap H_2^{+-}$. Each $A \in \mathcal{N}$ is coherent to some scalar matrix in $(0,I)$ and some scalar matrix in $(-I,0)$ and therefore $\varphi (\mathcal{N}) 
\subset {\mathcal{S}}_{0, I}=\cP$. 
Recall that any two distinct elements in $\cP$ are not coherent.
On the other hand, $\mathcal{N}$ is open and connected, so Lemma \ref{path} implies that $\varphi(\mathcal{N})$ is a singleton $\{ Q \}$ in $\mathcal{P}$. For every $P \in \cP$, we have 
$\varphi (P) = \varphi (P + (1/2)P^\perp)$ because $\varphi_1$ satisfies (i), while
\[
\varphi (P + (1/2)P^\perp) \sim 
\varphi (-(1/2) P + (1/2)P^\perp) \in \varphi(\mathcal{N})=\{Q\}.
\]
This further implies that $\varphi(\cP)=\{Q\}$ and we again get a contradiction.
\end{proof}

\begin{claim}
If (IV) holds, then $\varphi([-I, I]\setminus\{0\})\subset \mathcal{C}_I$, thus $\varphi$ is of type $(\mathcal{C})$.
\end{claim}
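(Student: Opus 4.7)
The plan is to unpack the hypothesis (IV) into concrete formulas for $\varphi$ and then verify $\varphi(X)\sim I$ separately on each ``stratum'' of $[-I,I]\setminus\{0\}$, classified by the sign pattern of the eigenvalues of $X$. Recall that (ii) of Proposition~\ref{proposition08} says $\varphi_j((0,I])=\{I\}$ and $\varphi_j(cP)=\varphi_j(P)$ for every $P\in\cP$ and $c\in(0,1]$. Applying this to $\varphi_1$ and to $\varphi_2(X)=\varphi(-X)$, and using the already established identity $\varphi(-P)=\varphi(P)$ for $P\in\cP$, yields three pieces of data that I would record first: (a) $\varphi(Y)=I$ for every invertible $Y\in[-I,0)\cup(0,I]$; (b) $\varphi(cP)=\varphi(P)\in\cP$ for every $P\in\cP$ and $c\in[-1,1]\setminus\{0\}$; (c) $\varphi(0)=0$.

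I would then stratify $[-I,I]\setminus\{0\}$ according to the signs of the eigenvalues of $X$. If the two eigenvalues have the same sign and both are nonzero, $X$ lies in the domain of (a), so $\varphi(X)=I\in\cC_I$. If exactly one eigenvalue is zero, then $X=cP$ with $c\in[-1,1]\setminus\{0\}$ and $P\in\cP$, and (b) gives $\varphi(X)\in\cP\subset\cC_I$ (using $P\sim I$). This leaves only the indefinite case $X\in H_2^{+-}\cap[-I,I]$, which is the one substantive step.

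For such $X$, write $X=aP+bP^\perp$ with $P\in\cP$, $a\in(0,1]$, $b\in[-1,0)$ (the reversed sign pattern is handled identically). The key idea is to find a coherent neighbor of $X$ on which $\varphi$ is already known. I would choose
\[
Y := aP + P^\perp,
\]
which is positive invertible, satisfies $Y\le I$ (since $a\le 1$), and hence lies in $(0,I]$; moreover $Y-X=(1-b)P^\perp$ has rank one, so $X\sim Y$. By (a), $\varphi(Y)=I$, so $\varphi(X)\sim I$, i.e.\ $\varphi(X)\in\cC_I$. This exhausts all possibilities for $X\in[-I,I]\setminus\{0\}$ and gives the inclusion $\varphi([-I,I]\setminus\{0\})\subset\cC_I$. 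Since $I=\varphi(I)\in\varphi([-I,I])$ and $\varphi(0)=0\neq I$, the map $\varphi$ fits Definition~\ref{degeneratefirst} with the distinguished point $A=0$ and target vertex $B=I$, so $\varphi$ is of type $(\mathcal{C})$. There is no serious obstacle: the whole proof hinges on spotting the correct coherent neighbor $Y$ in the indefinite case, and everything else is a direct reading of (ii) of Proposition~\ref{proposition08}.
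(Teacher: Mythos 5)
Your proof is correct and follows essentially the same route as the paper: the paper observes that (IV) gives $\varphi([-I,0)\cup(0,I])=\{I\}$ and that every point of $[-I,I]\setminus\{0\}$ is coherent to some point of that set, which is exactly the content of your stratification (your explicit neighbor $Y=aP+P^\perp$ in the indefinite case supplies the detail the paper leaves unstated). The only cosmetic difference is that you handle the rank-one stratum via the clause $\varphi(cP)=\varphi(P)$ rather than by perturbing $cP$ to a nearby definite matrix, but both work.
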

\begin{proof}
Assume that both $\varphi_1$ and $\varphi_2$ satisfy (ii). 
Then we have $\varphi([-I, 0)\cup (0, I])=\{I\}$. 
Since every point of $[-I, I]\setminus \{0\}$ is coherent to some point in $[-I, 0)\cup (0, I]$, we have $\varphi([-I, I]\setminus\{0\})\subset \mathcal{C}_I$.
\end{proof}

In what follows, we assume that $\varphi_1$ satisfies (iii) or (iv). 
Then there is $P_\times\in \mathcal{P}$ such that the image $\varphi(\mathcal{P}\setminus \{P_\times\})$ is a singleton $\{Q_\circ\}$ and $Q_\times:=\varphi(P_\times)\neq Q_\circ$.
Since $\varphi(P)=\varphi(-P)$ for every $P\in \cP$, we obtain $\varphi(-\mathcal{P}\setminus \{-P_\times\})=\{Q_\circ\}$ and $\varphi(-P_\times)=Q_\times$.

\begin{lemma}\label{hodai}
If $X \in[-I, I]\setminus \ell_{0, P_{\times}}$, then $X$ is coherent to some point of $\cC_0\cap [-I, I]\setminus \ell_{0, P_{\times}}$.
\end{lemma}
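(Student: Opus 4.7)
The plan is to argue by case analysis on the rank of $X$ and, in the rank-two case, on the spectral direction of $X$. Throughout, recall that $\cC_0\cap[-I,I]$ is exactly the set of matrices of the form $tP$ with $P\in\cP$ and $t\in[-1,1]$, and that $tP\in\ell_{0,P_{\times}}$ if and only if $t=0$ or $P=P_{\times}$. So the target set $\cC_0\cap[-I,I]\setminus\ell_{0,P_{\times}}$ consists of matrices $tP$ with $t\in[-1,1]\setminus\{0\}$ and $P\in\cP\setminus\{P_{\times}\}$.

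First I would dispense with the low-rank case. If $X=0$ then $X\in\ell_{0,P_{\times}}$, contrary to hypothesis. If $X$ has rank $1$, write $X=\gamma P$ with $P\in\cP$ and $\gamma\in\mathbb{R}\setminus\{0\}$; since $X\in[-I,I]$, we must have $|\gamma|\le1$, and $X\notin\ell_{0,P_{\times}}$ forces $P\ne P_{\times}$. Thus $X$ itself already lies in $\cC_0\cap[-I,I]\setminus\ell_{0,P_{\times}}$, and $X\sim X$ trivially.

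Next, I would handle the rank-two case using the spectral decomposition $X=\alpha R+\beta R^{\perp}$ with $R\in\cP$ and $\alpha,\beta\in[-1,1]\setminus\{0\}$. If $R\ne P_{\times}$, I would take $tP:=\alpha R$: then $X-\alpha R=\beta R^{\perp}$ has rank at most one, so $X\sim\alpha R$, and $\alpha R\in\cC_0\cap[-I,I]\setminus\ell_{0,P_{\times}}$ since $\alpha\ne0$ and $R\ne P_{\times}$. If instead $R=P_{\times}$, then $R^{\perp}=P_{\times}^{\perp}\ne P_{\times}$, and the same observation with the roles of $\alpha R$ and $\beta R^{\perp}$ swapped works: $X-\beta P_{\times}^{\perp}=\alpha P_{\times}$ has rank at most one, so $X\sim\beta P_{\times}^{\perp}\in\cC_0\cap[-I,I]\setminus\ell_{0,P_{\times}}$.

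There is no real obstacle here — the whole point is that the two ``coordinate'' rank-one drop directions $\alpha R$ and $\beta R^{\perp}$ of a rank-two matrix are both in $\cC_0$, and one can always choose the one whose supporting projection is not $P_{\times}$. The only thing to double-check is that the scalar remains in $[-1,1]$, which is automatic because the spectral values of $X\in[-I,I]$ lie in $[-1,1]$.
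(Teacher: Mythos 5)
Your proof is correct and is essentially the paper's argument: the paper also writes $X=aP+bP^\perp$ with $P\in\cP\setminus\{P_\times\}$ and $a\neq 0$ (always possible since $X\notin\ell_{0,P_\times}$) and observes that $X\sim aP\in\cC_0\cap[-I,I]\setminus\ell_{0,P_\times}$. Your case split by rank just makes explicit the choice of which spectral summand to keep.
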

\begin{proof}
We may write $X=aP+bP^\perp$ for some $P\in \cP\setminus\{P_\times\}$ and $a,b\in [-1,1]$, $a\neq 0$. Thus $X$ is coherent to $aP\in \cC_0\cap [-I, I]\setminus \ell_{0, P_{\times}}$.
\end{proof}

\begin{lemma}\label{mukpuk}
If $X \in [-I,  I] \setminus \ell_{0, P_\times}$, then $X$ is coherent to some point of 
\[
\{P_\times + cP_\times^\perp\,:\, c\in [-1,1]\setminus\{0\}\}\cup \{-P_\times + cP_\times^\perp\,:\, c\in [-1,1]\setminus\{0\}\}.
\] 
\end{lemma}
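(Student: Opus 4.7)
My plan is to prove the lemma by a direct $2 \times 2$ matrix computation. First, since any unitary conjugation $X \mapsto U X U^\ast$ is an (affine) automorphism of $\oH$ that fixes $I$ and $-I$, preserves $[-I, I]$ and the coherency relation, and respects the structure of the two segments in the statement, I may pick a unitary $U$ with $U P_\times U^\ast = E_{11}$ and assume without loss of generality that $P_\times = E_{11}$. Writing $X$ as the $2 \times 2$ Hermitian matrix with diagonal entries $a, b \in \mathbb{R}$ and off-diagonal entry $z \in \mathbb{C}$, the hypothesis $X \in [-I, I]$ becomes $a, b \in [-1, 1]$, $|z|^2 \le (1 - a)(1 - b)$, and $|z|^2 \le (1 + a)(1 + b)$, while $X \notin \ell_{0, P_\times}$ becomes $(b, z) \ne (0, 0)$.

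If $z = 0$ then $b \ne 0$ and $X - (E_{11} + b E_{22}) = (a - 1) E_{11}$ has rank at most one, so $X \sim P_\times + b P_\times^\perp$ with $b \in [-1,1] \setminus \{0\}$ and we are done. For the remaining case $z \ne 0$, the two quadratic inequalities force $-1 < a < 1$. For each $\varepsilon \in \{-1, +1\}$, I compute
\[
\det\bigl(X - (\varepsilon E_{11} + c E_{22})\bigr) = (a - \varepsilon)(b - c) - |z|^2,
\]
a linear function of $c$ vanishing at the unique point $c_\varepsilon := b - |z|^2/(a - \varepsilon)$.

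It remains to check that $c_+, c_- \in [-1, 1]$ and that at least one of them is nonzero. The first assertion follows immediately from the order inequalities: $c_+ \in [b, 1]$ because $|z|^2/(1 - a) \le 1 - b$, and $c_- \in [-1, b]$ because $|z|^2/(1 + a) \le 1 + b$. For the second, I compute
\[
c_+ - c_- = |z|^2 \bigl( \tfrac{1}{1 - a} + \tfrac{1}{1 + a} \bigr) = \frac{2 |z|^2}{1 - a^2} > 0,
\]
so $c_+ \ne c_-$ and hence at least one of them is nonzero. Choosing $\varepsilon$ so that $c_\varepsilon \ne 0$ yields $X \sim \varepsilon P_\times + c_\varepsilon P_\times^\perp$ with $c_\varepsilon \in [-1,1] \setminus \{0\}$, as required.

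The argument is entirely elementary; the only delicate point is the careful bookkeeping of the order inequalities in the last paragraph (verifying both $c_\pm \in [-1, 1]$ and that they cannot vanish simultaneously), and this is really the only step that needs thought.
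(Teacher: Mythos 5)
Your proof is correct: the reduction to $P_\times = E_{11}$ by unitary conjugation is legitimate, the description of $[-I,I]$ by the inequalities $a,b\in[-1,1]$, $|z|^2\le(1-a)(1-b)$, $|z|^2\le(1+a)(1+b)$ is right, and the verification that $c_\pm\in[-1,1]$ with $c_+>c_-$ (hence at least one nonzero) is exactly the bookkeeping needed. You do, however, take a genuinely different route from the paper. The paper's proof is two lines: it applies Corollary \ref{JeDr} with $A=-I$, $B=I$ and the point $P_\times - P_\times^\perp\in\mathcal{S}_{-I,I}$ (whose upper interval $[P_\times-P_\times^\perp, I]$ is precisely $\{P_\times+cP_\times^\perp : c\in[-1,1]\}$), notes that a point outside $\mathcal{C}_{P_\times}$ cannot be coherent to the excluded point $P_\times$ itself, does the same with $-P_\times$, and finishes with the observation $\mathcal{C}_{P_\times}\cap\mathcal{C}_{-P_\times}=\ell_{0,P_\times}$. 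Your computation is essentially an explicit, self-contained version of the intermediate-value/determinant argument that underlies Lemma \ref{0XI} and hence Corollary \ref{JeDr}: you solve $\det(X-(\varepsilon P_\times + cP_\times^\perp))=0$ for $c$ directly instead of invoking the structural lemma. What the paper's approach buys is brevity and reuse of machinery already in place; what yours buys is independence from that machinery, an explicit formula for the coherent point, and the sharper observation that the two candidate parameters $c_+$ and $c_-$ are always distinct, which cleanly handles the exclusion of $c=0$ without splitting into the cones $\mathcal{C}_{\pm P_\times}$.
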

\begin{proof}
By Corollary \ref{JeDr}, we know that every point of $[-I, I]\setminus \cC_{P_\times}$ is coherent to some point of $\{P_\times + cP_\times^\perp\,:\, c\in [-1,1]\setminus\{0\}\}$, and every point of $[-I, I]\setminus \cC_{-P_\times}$ is coherent to some point of $\{-P_\times + cP_\times^\perp\,:\, c\in [-1,1]\setminus\{0\}\}$. 
Since $\cC_{P_\times}\cap \cC_{-P_\times}=\ell_{0, P_\times}$, we get to the desired conclusion.
\end{proof}

\begin{claim}\label{vtovii}
If (V), (VI), or (VII) holds, then  $\varphi$ is of type $(\ell)$.
\end{claim}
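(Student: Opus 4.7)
The plan is to treat cases (V), (VI), and (VII) by a common scheme: identify candidate source and target lines from the shared data of $\varphi_1$ and $\varphi_2$, then use coherence arguments to push points of $H_2^{+-}\cap[-I,I]$ into the target line. The preliminary ingredient is that $\varphi(P)=\varphi(-P)$ for every $P\in\mathcal{P}$ (established just before Claim \ref{vtovii}) forces $\varphi_1$ and $\varphi_2$ to share the same exceptional projection $P_\times$ and the same image projections $Q_\circ\ne Q_\times$ from Proposition \ref{proposition08}(iii)/(iv), since each is characterized as the unique $P\in\mathcal{P}$ with $\varphi(P)\ne Q_\circ$ (respectively the values of that image).

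My main case is (V), where both $\varphi_1$ and $\varphi_2$ are of type (iii). I take $\ell=\ell_{0,P_\times}$ and $\ell'=\ell_{Q_\circ,I}$. Proposition \ref{proposition08}(iii) applied to $\varphi_1$ and to $\varphi_2$ gives immediately that $\varphi([-I,I]\cap\ell_{0,P_\times})\subset\ell_{0,Q_\times}$ and that $\varphi(([0,I]\cup[-I,0])\setminus\ell_{0,P_\times})\subset\ell_{Q_\circ,I}$, so the remaining work is to show $\varphi(X)\in\ell_{Q_\circ,I}$ for every $X\in H_2^{+-}\cap[-I,I]$ (these $X$ are automatically outside $\ell_{0,P_\times}$). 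For $X=aP-bP^\perp$ with $a,b\in(0,1]$ and $P\in\mathcal{P}\setminus\{P_\times,P_\times^\perp\}$, I use $X\sim aP\in(0,I]$ and $X\sim -bP^\perp\in[-I,0)$, noting that both $\varphi(aP)$ and $\varphi(-bP^\perp)$ equal $Q_\circ$ (the unique point of $\ell_{Q_\circ,I}$ coherent to $0$). This gives one coherence $\varphi(X)\sim Q_\circ$, which is not enough. I then pick a second coherence $X\sim Y$ with $Y\in([0,I]\cup[-I,0])\setminus\ell_{0,P_\times}$ of rank two (for instance $Y=aP+tP^\perp$ with $t\in(0,1]$, or analogously in $[-I,0]$): Proposition \ref{proposition08}(iii)'s explicit description of $\varphi$ on $\mathcal{C}_{cP_\times}\cap[0,I]\setminus\ell_{0,P_\times}$ identifies $\varphi(Y)$ as the unique point of $\ell_{Q_\circ,I}$ coherent to $\varphi(c(t)P_\times)$ with $c(t)=\tfrac{at}{a(1-d)+td}$ and $d=\mathrm{tr}(PP_\times)$, which is distinct from $Q_\circ$ whenever $\varphi(c(t)P_\times)\ne 0$. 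Lemma \ref{simpll} then forces $\varphi(X)$ onto $\ell_{Q_\circ,I}$. The boundary subcases $P\in\{P_\times,P_\times^\perp\}$ are treated by a direct coherence argument using an auxiliary rank-one matrix off $\ell_{0,P_\times}$.

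Cases (VI) and (VII) are handled in parallel, with the candidate lines adjusted. A key additional observation for these cases is that the extra coherence $\varphi(cP_\times)\sim\varphi(P_\times)=Q_\times$ combined with the constraint $\varphi(cP_\times)\in\ell_{0,Q_\circ}$ forced by $\varphi_2$ (in (VI)) or by both $\varphi_1,\varphi_2$ (in (VII)) pins $\varphi(cP_\times)=0$ on a large part of $\ell_{0,P_\times}$, via the computation $\ell_{0,Q_\circ}\cap\mathcal{C}_{Q_\times}=\{0\}$. This substantial collapse on $\ell_{0,P_\times}$ allows one to identify the appropriate source and target lines and to run the coherence-counting argument of the previous paragraph with points drawn from the line $\ell_{P_\times,I}$ (in cases (VI), (VII)) and its image on $\ell_{Q_\times,I}$. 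The main obstacle I foresee is the treatment of the degenerate sub-cases in which the scalar function $c\mapsto\varphi(cP_\times)$ becomes so trivial that the coherence-counting argument yields only a single constraining point on the candidate target line; handling these requires a sharper argument that appeals to multiple coherences of $X$ with points on the exceptional line (via Corollary \ref{cor08}'s explicit description) and to Remark \ref{ceruk}, together with a careful verification that the resulting images for $X\in H_2^{+-}$ remain confined to a single line of $\oH$, yielding the type $(\ell)$ conclusion.
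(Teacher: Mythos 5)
Your argument for case (V) hinges on producing, for each $X\in H_2^{+-}\cap[-I,I]$, a rank-two neighbour $Y=aP+tP^\perp$ with $\varphi(Y)\neq Q_\circ$, so that $\varphi(X)$ is coherent to two distinct points of $\ell_{Q_\circ,I}$ and Lemma \ref{simplll} applies. This fails exactly in the sub-case where $\varphi(cP_\times)=0$ for all $c\in[0,1)$: then the explicit description in Proposition \ref{proposition08}(iii) forces $\varphi(Y)=Q_\circ$ for every rank-two $Y\in(0,I]\setminus\ell_{P_\times,I}$ (and symmetrically on $[-I,0]$), your chosen $Y$'s are all of this kind since $P\neq P_\times$, and the coherence count collapses to the single constraint $\varphi(X)\sim Q_\circ$. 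This degenerate sub-case is not a fringe possibility --- Remark \ref{added} exhibits type-(iii) preservers with $\varphi(tE_{11})=0$ for all $t\in(0,1)$ --- so it must be handled, and your proposal explicitly defers it to an unspecified ``sharper argument''. (Appealing to coherences with the exceptional line $\ell_{0,P_\times}$ will not help directly, since its image lies in $\ell_{0,Q_\times}$, not in the target line.) The same gap propagates to (VI) and (VII), which are only sketched and where, as you yourself note, the image of a large part of $\ell_{0,P_\times}\cap[-I,I]$ is pinned to $\{0\}$, making the degenerate situation the generic one.

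The paper closes exactly this gap by a mechanism your proposal lacks: it first evaluates $\varphi$ on the cross $\pm P_\times\pm bP_\times^\perp$, $b\in(0,1]$, two of whose branches lie in $H_2^{+-}$ and are therefore invisible to both $\varphi_1$ and $\varphi_2$. The key step is that $-P_\times+bP_\times^\perp$ is simultaneously coherent to $-I$, to $-P_\times$, and (by Lemma \ref{c(s)}) to some point of $\cP\setminus\{P_\times\}$; since these map to $I$, $Q_\times$, $Q_\circ$ respectively and $I$ is the \emph{unique} point coherent to all three (Remark \ref{ceruk}, using $Q_\times\not\sim Q_\circ$), one gets $\varphi(-P_\times+bP_\times^\perp)=I$, and a chain of similar triple-coherence evaluations yields $\varphi(\pm P_\times\pm bP_\times^\perp)=I$. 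Lemma \ref{mukpuk} then places $\varphi(X)$ in $\cC_I$ for every $X\in[-I,I]\setminus\ell_{0,P_\times}$, while Lemma \ref{hodai}, combined with the observation (via Claim \ref{saigo} and Remark \ref{saigo2}) that these identities force both restrictions to satisfy (iii), places $\varphi(X)$ in $\cC_{Q_\circ}$; hence $\varphi(X)\in\cC_I\cap\cC_{Q_\circ}=\ell_{Q_\circ,I}$, uniformly in cases (V)--(VII) and with no non-degeneracy hypothesis. Without a substitute for this step --- some way to evaluate $\varphi$ at points of $H_2^{+-}$ independently of the values $\varphi(cP_\times)$ --- your argument does not go through.
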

\begin{proof}
Let $b\in (0, 1]$. 
Then $-P_\times+bP_\times^\perp$ is coherent to $-I$ and $-P_\times$. 
Moreover, since $-P_\times+bP_\times^\perp\not\sim P_\times$, Lemma \ref{c(s)}  implies that $-P_\times+bP_\times^\perp$ is also coherent to some point of ${\mathcal{S}}_{0,I}\setminus \{P_\times\}=\cP\setminus \{P_\times\}$. 
From
\[
\varphi(-I)=I,\ \ \varphi(-P_\times)=Q_\times,\ \ \text{and}\ \ \varphi(\cP\setminus \{P_\times\})=\{Q_\circ\},
\]
we obtain 
\begin{equation}\label{b1}
\varphi(-P_\times+bP_\times^\perp)=I.
\end{equation} 
Observe that $bP_\times^\perp$ is coherent to the three points $0,  P_\times^\perp, -P_\times+bP_\times^\perp$. 
Since 
\[
\varphi(0)=0,\ \ \varphi(P_\times^\perp)=Q_\circ, \ \ \text{and}\ \ \varphi(-P_\times+bP_\times^\perp)=I, 
\]
we get $\varphi(bP_\times^\perp)=Q_\circ$.
Observe that $P_\times +bP_\times^\perp$ is coherent to the three points $P_\times, I, bP_\times^\perp$. 
Since
\[
\varphi(P_\times)=Q_\times,\ \ \varphi(I)=I, \ \ \text{and}\ \ \varphi(bP_\times^\perp)=Q_\circ,
\]
we get 
\begin{equation}\label{b2}
\varphi(P_\times +bP_\times^\perp)=I.
\end{equation} 
A similar discussion shows 
\begin{equation}\label{b3}
\varphi(P_\times-bP_\times^\perp)=I
\end{equation}  
and 
\begin{equation}\label{b4}
\varphi(-P_\times -bP_\times^\perp)=I.
\end{equation} 
By Lemma \ref{mukpuk}, every point in $[-I, I]\setminus \ell_{0, P_{\times}}$ is coherent to some point of 
\[
\{P_{\times}+cP_{\times}^\perp\,:\, c\in [-1,1]\setminus\{0\}\} \cup \{-P_{\times}+cP_{\times}^\perp\,:\, c\in [-1,1]\setminus\{0\}\}.
\] 
By \eqref{b1}, \eqref{b2}, \eqref{b3}, \eqref{b4}, $\varphi$ sends this set to $\{I\}$.

On the other hand, by  Claim \ref{saigo} (see also Remark \ref{saigo2}) and \eqref{b2} (resp.\ \eqref{b4}), we see that $\varphi_1$ (resp.\ $\varphi_2$) satisfies Condition (iii).
It follows that $\varphi(\cC_0\cap [-I, I]\setminus \ell_{0, P_{\times}})=\{Q_{\circ}\}$. 
By Lemma \ref{hodai}, every point in $[-I, I]\setminus \ell_{0, P_{\times}}$ is coherent to some point of $\cC_0\cap [-I, I]\setminus \ell_{0, P_{\times}}$.
Therefore, we get $\varphi(X)\in \ell_{Q_{\circ}, I}$ for every $X\in [-I, I]\setminus \ell_{0, P_{\times}}$. 
Thus $\varphi$ is of type $(\ell)$.
\end{proof}

Thus we have completed the proof of Proposition \ref{00II}.

\begin{corollary}\label{d=0}
Let $A, B\in H_2$ satisfy $A<B$. 
Let $\varphi\colon [A, B]\to \oH$ be a coherency preserving map with $\varphi(A)=\varphi(B)$. 
Then $\varphi$ is degenerate.
\end{corollary}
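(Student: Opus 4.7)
The plan is to reduce Corollary \ref{d=0} to Proposition \ref{00II} by normalizing via suitable automorphisms of $\oH$, after first dispatching the trivial case where $\varphi([A,B])$ is already contained in one cone.

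First I would split into two cases. If $\varphi([A,B]) \subset \mathcal{C}_{\varphi(A)}$, then since $\varphi(A) \in \varphi([A,B])$, taking $X_0 = A$ we have $\varphi([A,B]\setminus\{X_0\}) \subset \mathcal{C}_{\varphi(A)}$, so $\varphi$ is of type $(\mathcal{C})$ and we are done. Otherwise, there exists some $C \in (A,B)$ with $d(\varphi(A), \varphi(C)) = 2$, and I would work with such a $C$.

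In that case I would apply Corollary \ref{cororder} to produce an automorphism $\psi_1$ of $\oH$ satisfying $\psi_1([-I,I]) = [A,B]$, $\psi_1(-I) = A$, $\psi_1(0) = C$, and $\psi_1(I) = B$. Next, by Lemma \ref{pocelo} there is an automorphism $\sigma$ of $\oH$ with $\sigma(\varphi(A)) = 0$ and $\sigma(\varphi(C)) = \oi$. Post-composing $\sigma$ with the standard automorphism $X \mapsto (I-X)^{-1}$, which sends $0$ to $I$ and $\oi$ to $0$, I obtain an automorphism $\psi_2$ with $\psi_2(\varphi(A)) = I$ and $\psi_2(\varphi(C)) = 0$. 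The composition $\tilde\varphi := \psi_2 \circ \varphi \circ \psi_1 \colon [-I,I] \to \oH$ is then a coherency preserver satisfying $\tilde\varphi(-I) = \tilde\varphi(I) = I$ and $\tilde\varphi(0) = 0$ (using $\varphi(A) = \varphi(B)$), so Proposition \ref{00II} yields that $\tilde\varphi$ is degenerate.

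To finish, I would observe that the property of being degenerate is invariant under pre- and post-composition with automorphisms of $\oH$: if $\tilde\varphi([-I,I] \setminus \{X_0\}) \subset \mathcal{C}_{Y_0}$ then
\[
\varphi([A,B] \setminus \{\psi_1(X_0)\}) = \psi_2^{-1}\bigl(\tilde\varphi([-I,I]\setminus\{X_0\})\bigr) \subset \psi_2^{-1}(\mathcal{C}_{Y_0}) = \mathcal{C}_{\psi_2^{-1}(Y_0)},
\]
and analogously for type $(\ell)$ since automorphisms send lines to lines. There is no real obstacle here; all the substantive work is absorbed into Proposition \ref{00II}, which has been proved via Proposition \ref{proposition08} together with a lengthy case analysis distinguishing the seven possibilities (I)--(VII). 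Corollary \ref{d=0} itself is just a matter of executing the normalization cleanly, the only minor point being to verify that one can indeed find a single automorphism $\psi_2$ matching $\varphi(A)$ and $\varphi(C)$ to $I$ and $0$ simultaneously, which is guaranteed by the composition of Lemma \ref{pocelo} with the explicit standard automorphism $X\mapsto(I-X)^{-1}$.
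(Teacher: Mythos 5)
Your proposal is correct and follows essentially the same route as the paper: dispose of the case $\varphi([A,B])\subset\mathcal{C}_{\varphi(A)}$ as type $(\mathcal{C})$, otherwise pick $C\in(A,B)$ with $d(\varphi(A),\varphi(C))=2$, normalize with automorphisms (Corollary \ref{cororder} on the domain, Lemma \ref{pocelo} composed with an inversion on the codomain) to the hypotheses of Proposition \ref{00II}, and transport degeneracy back through the automorphisms. The only point left implicit in both your write-up and the paper's is why the failure of $\varphi([A,B])\subset\mathcal{C}_{\varphi(A)}$ forces a witness $C$ in the \emph{open} interval $(A,B)$; this holds because every point of $[A,B]\setminus(A,B)$ is coherent to $A$ or to $B$ and $\varphi(A)=\varphi(B)$.
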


\subsection{The case $d(\varphi(A), \varphi(B))=1$}\label{dab1}

Now let us consider a coherency preserver $\varphi\colon [A, B]\to \oH$ with $d(\varphi(A), \varphi(B))=1$.  
By considering $\psi_2\circ\varphi\circ \psi_1$ for a pair of suitable automorphisms $\psi_1, \psi_2$ of $\oH$, we may assume that $A=0$, $B=I$, and $\varphi(0)=0$, $\varphi(I)=E_{11}$.

\begin{proposition}\label{E11}
Let $\varphi\colon [0, I]\to \oH$ be a coherency preserver with $\varphi(0)=0$, $\varphi(I)=E_{11}$. 
Then $\varphi$ is degenerate.
\end{proposition}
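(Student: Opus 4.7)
The approach is to reduce Proposition \ref{E11} to the $d=2$ classification (Corollary \ref{cor08}) by restricting $\varphi$ to suitable sub-intervals of $[0,I]$. The key preliminary observation is that $\varphi(\mathcal{P})\subset\ell_{0,E_{11}}$: for every $P\in\mathcal{P}$, both $P\sim 0$ and $P\sim I$ hold, so $\varphi(P)\sim 0$ and $\varphi(P)\sim E_{11}$, and Lemma \ref{simpll} gives $\varphi(P)\in\mathcal{C}_0\cap\mathcal{C}_{E_{11}}=\ell_{0,E_{11}}$.

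If $\varphi([0,I])\subset\mathcal{C}_0$ or $\varphi([0,I])\subset\mathcal{C}_{E_{11}}$, then $\varphi$ is trivially of type $(\mathcal{C})$ and we are done. Otherwise, I pick $X_0\in[0,I]$ with $\varphi(X_0)\notin\mathcal{C}_0$. Because $X_0\sim 0$ would force $\varphi(X_0)\in\mathcal{C}_0$, we have $X_0>0$ strictly; since $\varphi(\mathcal{P})\subset\mathcal{C}_0$ we also have $X_0\notin\mathcal{P}$; and clearly $X_0\neq I$. Hence $[0,X_0]$ is a proper sub-interval of $[0,I]$ with $0<X_0$ strictly and $d(\varphi(0),\varphi(X_0))=2$. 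Symmetrically, from $\varphi([0,I])\not\subset\mathcal{C}_{E_{11}}$ I extract a sub-interval $[X_1,I]$ with $X_1<I$ strictly and $d(\varphi(X_1),\varphi(I))=2$.

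Next I claim $\varphi|_{[0,X_0]}$ is not standard. Indeed, if it extended to an automorphism $\chi$ of $\oH$, then by the identity theorem (Theorem \ref{identity}) applied to the connected open set $(0,I)$ (on which $\varphi$ agrees with $\chi$ on the nonempty open subset $(0,X_0)$), we would have $\varphi=\chi$ throughout $(0,I)$. Since every $X=I-tR\in(0,I)$ with $t\in(0,1)$, $R\in\mathcal{P}$ satisfies $X\sim I$, it follows that $\chi(X)=\varphi(X)\sim E_{11}$, so $\chi^{-1}(E_{11})$ belongs to $\bigcap_X\mathcal{C}_X$ over all such $X$. A direct computation (writing $W=\chi^{-1}(E_{11})-I$ and analyzing the rank conditions $\mathrm{rank}(W+tR)\leq 1$ as $(t,R)$ varies over $(0,1)\times\mathcal{P}$) shows this intersection is $\{I\}$, forcing $\chi(I)=E_{11}$. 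But then $d(\chi(0),\chi(I))=d(0,E_{11})=1\neq 2=d(0,I)$, contradicting that automorphisms preserve $d$. The same argument rules out standardness of $\varphi|_{[X_1,I]}$. Thus Corollary \ref{cor08} yields, for each of the two restrictions, one of the types (o)--(iv).

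Finally, I combine these two degenerate structures using the boundary conditions $\varphi(0)=0$, $\varphi(I)=E_{11}$ and the observation $\varphi(\mathcal{P})\subset\ell_{0,E_{11}}$, with Lemmas \ref{ABC} and \ref{ABC2} facilitating comparison of classifications on nested sub-intervals. Types (o), (i), and (ii) of Corollary \ref{cor08} will force $\varphi([0,I])$ into a single cone, yielding global type $(\mathcal{C})$; types (iii) and (iv) will force $\varphi([0,I]\setminus\ell)$ into a second line for appropriate lines $\ell,\ell'$, yielding global type $(\ell)$. The main obstacle is precisely this case-by-case gluing, in which one must rule out incompatible pairings between the classifications of $\varphi|_{[0,X_0]}$ and $\varphi|_{[X_1,I]}$: in particular, one must verify that whenever a distinguished line appears on one side (as in (iii) or (iv)), the constraint $\varphi(\mathcal{P})\subset\ell_{0,E_{11}}$ locates that line inside $\ell_{0,E_{11}}$ and forces the remaining image to lie in a single transverse line intersecting both sides consistently.
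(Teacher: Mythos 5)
Your setup — observe $\varphi(\mathcal{P})\subset\ell_{0,E_{11}}$, find sub-intervals on which the endpoint images are at distance $2$, verify non-standardness there, and invoke Corollary \ref{cor08} — is sound and is essentially how the paper begins. (Your non-standardness argument is correct but roundabout: Corollary \ref{kiakia} applied to $[0,I]$ itself gives it in one line, since an automorphism agreeing with $\varphi$ on the interior of $[0,X_0]$ would agree on all of $[0,I]$ and hence satisfy $\psi(0)\sim\psi(I)$ while $0\not\sim I$.) However, the proof stops exactly where the real work starts. The final paragraph is a promise, not an argument: the case-by-case gluing of the two classifications is the entire substance of the proposition, occupying several pages of claims in the paper, and you have not carried out any of it.

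Worse, your predictions for how the cases resolve are partly wrong, which shows the gluing is not routine. In the paper's analysis, case (ii) of Corollary \ref{cor08} for the restriction to $[0,A]$ does not yield type $(\mathcal{C})$ — it leads to a contradiction and never occurs; and the pairing in which both restrictions are of type (iii) yields $\varphi([0,I]\setminus\{B\})\subset\mathcal{C}_{E_{11}}$, i.e.\ type $(\mathcal{C})$, not type $(\ell)$ as your dichotomy asserts. Only the ``mixed'' pairings ((iii) with (iv), or the configurations $\varphi(\mathcal{S}_{0,A}\setminus\{E\})$ a singleton) produce type $(\ell)$. Finally, your choice of sub-intervals $[0,X_0]$, $[X_1,I]$ with arbitrary interior points is too loose to support the gluing: the two intervals need not intersect, and the arguments that compare the two classifications rely on the specific geometry the paper sets up — namely $A\in\mathcal{C}_I\cap[0,I]$ and $B\in\mathcal{S}_{0,A}$ both with image off $\ell_{0,E_{11}}$, normalized so that $A=E+cE^\perp$ and $B=cE^\perp$ for the projections $E\sim A$, $E^\perp\sim B$. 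One must also first dispose of the case $\varphi([0,I]\setminus(0,I))\subset\ell_{0,E_{11}}$ and the case $\varphi(\mathcal{S}_{0,A})\subset\ell_{0,E_{11}}$ separately before such $A$ and $B$ even exist. As written, the proposal establishes the framework but not the theorem.
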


The proof of this proposition is involved, so we will separate it into claims, as usual.
We assume that $\varphi\colon [0, I]\to \oH$ is a coherency preserver with $\varphi(0)=0$, $\varphi(I)=E_{11}$. 
Observe that $\varphi$ is not standard because $0\not\sim I$ and $\varphi(0)\sim \varphi(I)$. 
By Corollary \ref{kiakia}, we see that $\varphi$ is not standard on every nonempty open subset of $[0, I]$.
Set $\ell:=\ell_{0, E_{11}}$. 
Note that $\varphi(\cP)\subset \ell_{\varphi(0), \varphi(I)}=\ell$ holds.

\begin{claim}
If $\varphi([0, I]\setminus(0,I))\subset \ell$, then $\varphi([0, I])$ is contained in a cone and $\varphi$ is of type $(\mathcal{C})$.
\end{claim}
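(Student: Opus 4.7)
The plan is to split the argument on the cardinality of $\varphi(\mathcal{P})$, which lies in $\ell$ by hypothesis. If $\varphi(\mathcal{P})=\{Q\}$ is a singleton, then Corollary \ref{JeDr} (applied with $A=0$, $B=I$, and any $P\in \mathcal{P}$) guarantees that every $X\in[0,I]$ is coherent to some $P\in \mathcal{P}$; hence $\varphi(X)\sim Q$ and $\varphi([0,I])\subset \mathcal{C}_Q$. Since $Q\in \varphi(\mathcal{P})\subset \varphi([0,I])$, type $(\mathcal{C})$ is realized with $B=Q$ and any $A\in [0,I]$.

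Suppose instead $\varphi(\mathcal{P})$ has at least two distinct elements. The easy sub-case is $\varphi((0,I))\subset \ell$, where $\varphi([0,I])\subset \ell\subset \mathcal{C}_0$ and we conclude with $B=0=\varphi(0)$. Otherwise I fix $X_0\in (0,I)$ with $W:=\varphi(X_0)\notin \ell$, and Lemma \ref{simplll} furnishes a unique $Q\in \ell$ with $W\sim Q$. For each $P\in \mathcal{P}$, Lemma \ref{0XI} produces $t_P\in (0,1)$ and $s_P\in (0,1)$ with $X_0\sim t_PP$ and $X_0\sim P+s_PP^\perp$; the hypothesis places $\varphi(t_PP)$ and $\varphi(P+s_PP^\perp)$ in $\ell$, and coherence with $W$ forces both to equal $Q$. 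Thus the entire two-parameter family of rank-one and rank-$(1,1)$ boundary neighbors of $X_0$ collapses under $\varphi$ to the single point $Q$.

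The remaining task is to show $\varphi([0,I])\subset \mathcal{C}_Q$. Boundary points go to $\ell\subset \mathcal{C}_Q$, and any $X'\in (0,I)$ with $\varphi(X')\in \ell$ is automatically coherent to $Q$. If $\varphi(X')\notin \ell$, Lemma \ref{simplll} gives a unique $Q'\in \ell$ with $\varphi(X')\sim Q'$, and I must prove $Q=Q'$. For this I plan to exhibit a rank-one matrix $R\in [0,I]$ coherent simultaneously to $X_0$ and $X'$: then $\varphi(R)=Q$ by the previous paragraph, while $\varphi(R)\in \ell$ coherent to $\varphi(X')$ forces $\varphi(R)=Q'$ by the uniqueness defining $Q'$, so $Q=Q'$. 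The existence of such a common rank-one neighbor is controlled by the formula
\[
t_P(X)=\frac{\lambda_1\lambda_2}{\lambda_1 u+\lambda_2(1-u)},
\]
where $\lambda_1\le \lambda_2$ are the eigenvalues of $X$ and $u\in [0,1]$ is determined by the angle of $P$ to an eigenvector of $X$; this parametrizes the rank-one neighbors of $X\in (0,I)$ as a $2$-dimensional sphere-like family in the $3$-dimensional rank-one stratum of $[0,I]$. A continuity/intermediate-value argument on the compact $\mathcal{P}$ yields the intersection $\mathcal{R}_{X_0}\cap \mathcal{R}_{X'}$ whenever the spectral intervals $[\lambda_1(X_0),\lambda_2(X_0)]$ and $[\lambda_1(X'),\lambda_2(X')]$ overlap; the disjoint-spectrum case will be reduced to the previous one by chaining through intermediate matrices in $(0,I)$, or handled via the parallel family of rank-$(1,1)$ boundary neighbors $P+s_PP^\perp$, which carries the same information. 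Having obtained $Q=Q'$ and noting $Q=\varphi(t_PP)\in \varphi([0,I])$, type $(\mathcal{C})$ follows with $B=Q$.

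The principal obstacle will be this shared-rank-one-neighbor step, and in particular its disjoint-spectrum edge cases, which is the only place where some honest geometry of the rank-one stratum, rather than the abstract coherence-preserving manipulations used elsewhere, is required.
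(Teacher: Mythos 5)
There are two genuine gaps, one in each half of your case division. The smaller one is in the singleton case: Corollary \ref{JeDr} does not say that every $X\in[0,I]$ is coherent to some element of $\mathcal{P}$ --- it says $X$ is coherent to some point of $[0,P]$ and some point of $[P,I]$ --- and the stronger statement you use is false: no $X\in(0,I)$ is coherent to any projection (e.g.\ $\det(\tfrac12 I-P)=-\tfrac14$ for every $P\in\mathcal{P}$; see also Lemma \ref{c(s)} with $A=0$, $B=I$). Knowing only that $\varphi(X)$ is coherent to some point of $\ell_{0,Q}$ and some point of $\ell_{Q,E_{11}}$ does not give $\varphi(X)\sim Q$, so this case is not settled.

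The fatal gap is the shared-boundary-neighbour step, and it is not an edge case. If $X',X_0\in(0,I)$ with $X'<X_0$ (or $X'>X_0$), then $d(X',X_0)=2$ and, by Corollary \ref{A<B}, every common neighbour lies in $\mathcal{S}_{X',X_0}\subset[X',X_0]\subset(0,I)$; hence \emph{no} point of $[0,I]\setminus(0,I)$ is coherent to both, neither in the rank-one stratum nor in the rank-$(1,1)$ stratum. (For $0<\epsilon<\tfrac12$, the common neighbours of $\epsilon I$ and $(1-\epsilon)I$ are exactly the matrices $\epsilon I+(1-2\epsilon)P$, $P\in\mathcal{P}$, all interior.) Chaining does not repair this: a link through an intermediate $Y\in(0,I)$ transmits information only when $\varphi(Y)\notin\ell$, since otherwise a common boundary neighbour $R$ of consecutive links has $\varphi(R)\in\ell\ni\varphi(Y)$ and the relation $\varphi(R)\sim\varphi(Y)$ is vacuous; you have no control over which interior points have images off $\ell$. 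This is precisely where the paper cannot avoid heavier machinery: it splits instead on whether some $A\in(0,I)$ satisfies $\varphi(A)\notin\mathcal{C}_0\cup\mathcal{C}_{E_{11}}$, which forces $d(\varphi(0),\varphi(A))=d(\varphi(A),\varphi(I))=2$ and allows Corollary \ref{cor08}(o) to be applied to the restrictions of $\varphi$ to $[0,A]$ and $[A,I]$; that is what controls the images of \emph{interior} points of those subintervals, and only the points of $[0,I]\setminus([0,A)\cup(A,I])$ are then handled by a common-neighbour argument via Lemma \ref{c(s)}. Your proposal has no substitute for that step.
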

\begin{proof}
Let $\varphi$ satisfy $\varphi([0, I]\setminus(0,I))\subset \ell$.
Assume in addition that there is a point $A\in (0, I)$ such that  $\varphi(A)\notin\mathcal{C}_0\cup \mathcal{C}_{E_{11}}$. 
Let $A'$ be the unique point in $\ell$ such that $\varphi(A)\sim A'$.
Note that $\mathcal{S}_{0, A}$ and $\mathcal{S}_{A, I}$ are both contained in $[0,I]\setminus(0,I)$.
Since $\varphi([0, I]\setminus (0, I))\subset  \ell$, we have $\varphi(\mathcal{S}_{0, A})=\varphi(\mathcal{S}_{A, I}) = \{A'\}$. 
Thus, the restriction of $\varphi$ to $[0, A]$ and that to $[A, I]$ both satisfy (o). 
It follows that $\varphi([0,A]\cup [A, I])\subset \cC_{A'}$.
Moreover, we see from Lemma \ref{c(s)} that every element of $[0,I]\setminus([0, A)\cup (A, I])$ is coherent to some point of ${\mathcal{S}}_{0, A}$.
Since $\varphi({\mathcal{S}}_{0, A})=\{A'\}$, we get $\varphi([0,I]\setminus([0, A)\cup (A, I]))\subset\cC_{A'}$. 
Thus we have shown that $\varphi([0,I])\subset \cC_{A'}$ in this case.

It remains to consider the case where
\[
\varphi([0, I])\subset\mathcal{C}_{\varphi(0)}\cup \mathcal{C}_{\varphi(I)}=\mathcal{C}_0 \cup \mathcal{C}_{E_{11}}.
\] 
If every $X\in (0, I)$ satisfies $\varphi(X)\in \mathcal{C}_{E_{11}}$, then the assumption $\varphi([0, I]\setminus (0, I))\subset \ell$ implies $\varphi([0, I])\subset \cC_{E_{11}}$, as desired. 
Therefore, we assume that there is an element $X\in (0, I)$ satisfying $\varphi(X)\in \mathcal{C}_0\setminus \mathcal{C}_{E_{11}}$.
For any $W \in \mathcal{S}_{X,I}$, we have $\varphi (W) \in \varphi([0, I]\setminus (0, I))\subset \ell$. 
Since $0$ is the unique point on the line $\ell$ that is coherent to $\varphi(X)$, we conclude that
$\varphi(\mathcal{S}_{X, I}) =\{0\}$. 
Thus Proposition \ref{newprop} implies that $\varphi$ satisfies Condition (o), and we see that $\varphi ([X,I]) \subset  \mathcal{C}_0$.
We claim that 
\begin{equation}\label{metka}
\varphi([0, I]\setminus [0, X))\subset \mathcal{C}_0.
\end{equation}
Indeed, we already know that $\varphi (Z) \in  \mathcal{C}_0$ whenever $Z \in [X,I]$. For $Z \in [0, I]\setminus ([0, X) \cup [X,I])$, we can apply Lemma \ref{c(s)} together with $\varphi(\mathcal{S}_{X, I}) =\{0\}$ to get  $\varphi (Z) \in  \mathcal{C}_0$.

We show that every $Y\in [0,X)$ satisfies $\varphi(Y)\in\mathcal{C}_0$. 
Assume contrarily that $\varphi(Y)\notin \mathcal{C}_0$.
Then $\varphi(Y)\in \mathcal{C}_{E_{11}}\setminus \ell$. 
Observe that $0\leq Y<X<I$ and thus Corollary \ref{A<B} implies $\mathcal{S}_{Y,X}\subset [0,I]$.
By \eqref{metka}, we have 
\begin{equation}\label{eq1}
\varphi(\mathcal{S}_{Y,X})\subset \mathcal{C}_0\cap \mathcal{C}_{\varphi(Y)}=\mathcal{S}_{0, \varphi(Y)}.
\end{equation}
On the other hand, $\mathcal{S}_{Y, I}\subset [0, I]\setminus (0, I)$ implies
\begin{equation}\label{eq2}
\varphi(\mathcal{S}_{Y, I})\in \ell \cap \mathcal{C}_{\varphi(Y)} = \{E_{11}\}.
\end{equation}
Since every point of $\mathcal{S}_{Y, X}$ is coherent to some point of $\mathcal{S}_{Y, I}$,   \eqref{eq1} and \eqref{eq2} lead to 
\[
\varphi(\mathcal{S}_{Y, X})\subset \mathcal{S}_{0, \varphi(Y)}\cap \cC_{E_{11}} = \{E_{11}\}.
\]
This contradicts $\varphi(X)\not\sim E_{11}$.
Thus we get $\varphi([0,I])\subset \cC_0$.
\end{proof}

Therefore, in what follows we assume that $\varphi([0, I]\setminus(0,I))\not\subset \ell$.
By considering the mapping $X\mapsto E_{11}-\varphi(I-X)$ instead of $\varphi$, if necessary, we may and do assume that there is $A\in \cC_I\cap [0, I]$ such that $\varphi(A)\notin \ell$.
Since $\varphi(\cP)\subset \ell$, we get $0<A$.
We will apply Proposition \ref{newprop} to the restriction $\varphi_1$ of $\varphi$ to $[0, A]$.

\begin{claim}
If $\varphi_1$ satisfies (o), then $\varphi([0,I])$ is contained in one cone and $\varphi$ is of type $(\mathcal{C})$.
\end{claim}
\begin{proof}
Suppose that $\varphi_1$ satisfies (o).
There is a singleton $\{ Q \}$ such that $\varphi (\mathcal{S}_{0,A}) = \{ Q \}$ and $\varphi ([0,A]) \subset \mathcal{C}_{Q}$. 
By Lemma \ref{c(s)}, every element of $[0,I] \setminus [0, A]$ is coherent to some element of $\mathcal{S}_{0,A}$. 
It follows from $\varphi (\mathcal{S}_{0,A}) = \{ Q \}$ that $\varphi([0,I] \setminus [0, A]) \subset \mathcal{C}_{Q}$.
\end{proof}

\begin{claim}\label{oc}
If $\varphi(\mathcal{S}_{0, A})\subset \ell$, then $\varphi_1$ satisfies (o), so $\varphi$ is of type $(\mathcal{C})$.
\end{claim}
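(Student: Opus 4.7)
The plan is to apply Corollary \ref{cor08} to $\varphi_1\colon[0,A]\to\oH$ and rule out options (i)--(iv), leaving only option (o), which by the preceding claim gives that $\varphi$ is of type $(\mathcal{C})$.

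First I would verify that Corollary \ref{cor08} is applicable, i.e., that $d(\varphi(0),\varphi(A))=2$. By hypothesis $\varphi(A)\notin\ell$, while $\varphi(A)\sim E_{11}=\varphi(I)$ because $A\sim I$. By Lemma \ref{simpll} we have $\mathcal{C}_0\cap\mathcal{C}_{E_{11}}=\ell_{0,E_{11}}=\ell$, so $\varphi(A)\not\sim 0$, hence $d(\varphi(0),\varphi(A))=2$ as required.

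Next, I would show that $\varphi(\mathcal{S}_{0,A})=\{E_{11}\}$. Since $\varphi$ preserves coherency, $\varphi(\mathcal{S}_{0,A})\subset\mathcal{S}_{\varphi(0),\varphi(A)}\subset\mathcal{C}_{\varphi(A)}$. Combined with the standing hypothesis $\varphi(\mathcal{S}_{0,A})\subset\ell$, this yields
\[
\varphi(\mathcal{S}_{0,A})\subset\ell\cap\mathcal{C}_{\varphi(A)}.
\]
Since $\varphi(A)\notin\ell$ but $E_{11}\in\ell\cap\mathcal{C}_{\varphi(A)}$ (as $\varphi(A)\sim E_{11}$), Lemma \ref{simplll} gives $\ell\cap\mathcal{C}_{\varphi(A)}=\{E_{11}\}$. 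Thus $\varphi(\mathcal{S}_{0,A})\subset\{E_{11}\}$, and as $\mathcal{S}_{0,A}$ is nonempty (e.g.\ Corollary \ref{A<B} shows it lies in $[0,A]$), we conclude $\varphi(\mathcal{S}_{0,A})=\{E_{11}\}$.

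It remains to rule out (i)--(iv) of Corollary \ref{cor08}. Options (i) and (ii) each require that there be no point $P\in\mathcal{S}_{0,A}$ with $\varphi(\mathcal{S}_{0,A}\setminus\{P\})$ a singleton; but here $\varphi(\mathcal{S}_{0,A}\setminus\{P\})\subset\varphi(\mathcal{S}_{0,A})=\{E_{11}\}$ is a singleton for every such $P$, contradicting their hypotheses. Options (iii) and (iv) supply distinct points $Q_\circ,Q_\times\in\mathcal{S}_{\varphi(0),\varphi(A)}$ both lying in $\varphi(\mathcal{S}_{0,A})$ (take any $P\in\mathcal{S}_{0,A}\setminus\{P_\times\}$ for $Q_\circ$, and note that $P_\times\in\mathcal{S}_{0,A}$ maps to $Q_\times$), which again contradicts $|\varphi(\mathcal{S}_{0,A})|=1$. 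Therefore $\varphi_1$ must satisfy (o), and the preceding claim yields that $\varphi$ is of type $(\mathcal{C})$. The only delicate point in this plan is the identification $\ell\cap\mathcal{C}_{\varphi(A)}=\{E_{11}\}$ via Lemma \ref{simplll}; once that is in hand, the rest is a purely combinatorial check against the five cases of Corollary \ref{cor08}.
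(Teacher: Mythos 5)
Your proposal is correct and follows essentially the same route as the paper: show that every point of $\mathcal{S}_{0,A}$ is sent to the unique point of $\ell$ coherent to $\varphi(A)$, namely $E_{11}$, so that $\varphi(\mathcal{S}_{0,A})$ is a singleton, and then observe that among the alternatives of Corollary \ref{cor08} only (o) is compatible with this. The paper's proof is just a terser version of the same argument, leaving the elimination of (i)--(iv) implicit.
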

\begin{proof}
Suppose that $\varphi(\mathcal{S}_{0, A})\subset \ell$. 
Let $C \in \mathcal{S}_{0, A} \,(\subset [0,I])$. 
Because $\varphi (C) \sim \varphi (A)$ and $\varphi (C) \in \ell$, we have $\varphi (C) = E_{11}$. 
Thus $\varphi_1$ satisfies (o).
\end{proof}

In what follows, we assume that $\varphi(\mathcal{S}_{0, A})\not\subset \ell$.
Then there is $B\in \mathcal{S}_{0, A}$ such that $\varphi(B)\notin \ell$. 
Observe that $E_{11}$ (resp.\ $0$) is the unique point on $\ell$ that is coherent to $\varphi(A)$ (resp.\ $\varphi(B)$).
Let $E\in\mathcal{P}$ be the unique projection satisfying $A\sim E$.
Then we have $E\sim A, 0, I$, which implies $\varphi(E)\in \cC_{\varphi(A)} \cap \ell=\{E_{11}\}$. 
Similarly, for the unique projection $F\in\mathcal{P}$ satisfying $B\sim F$, we have $F\sim B, 0, I$, and thus $\varphi(F)=0$. 

\begin{lemma}\label{kicmabol}
Let $E,F \in \cP$, $E \not= F$. 
Then there exists an automorphism $\psi$ of $\oH$ satisfying $\psi([0,I])=[0,I]$,  $\psi (0) = 0$, $\psi (I) = I$, $\psi (E) = E$, and $\psi (F) = E^\perp$.
\end{lemma}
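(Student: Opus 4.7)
The plan is to reduce the problem to a canonical form and then construct $\psi$ as an explicit composition of standard automorphisms.

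First, I would apply a unitary conjugation $X \mapsto UXU^*$, which by Claim \ref{sas*} is an affine automorphism of $\oH$ and which preserves the Loewner order (hence the matrix interval $[0,I]$ by Lemma \ref{Porder}), to reduce to the case $E = E_{11}$. Next, by composing with a further diagonal unitary conjugation of the form $X \mapsto VXV^*$ with $V = \mathrm{diag}(1, e^{i\theta})$ --- which fixes $E_{11}$, $E_{22}$, $0$, and $I$, and preserves $[0,I]$ --- I may additionally assume the off-diagonal entry of $F$ is real and nonnegative. If $F = E^\perp$ already, the identity works; otherwise $F = \begin{pmatrix} c & \sqrt{c(1-c)} \\ \sqrt{c(1-c)} & 1-c \end{pmatrix}$ for some $c \in (0, 1)$.

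Second, I would construct $\psi$ as a composition of the form $\psi = \alpha \circ \iota \circ \beta$, where $\iota(X) = X^{-1}$ is the inversion and $\alpha, \beta$ are affine automorphisms, chosen so that: $\beta$ sends $F$ to a rank-one projection aligned along the $e_1$-direction while sending $E$ to one aligned along the $e_2$-direction, the inversion then interchanges $0$ with $\infty$ on each of these axes, and finally $\alpha$ brings everything back into $[0,I]$ with $0$, $I$, $E$ in their original positions and $F$ sent to $E^\perp$. Concretely, I would take $\beta(X) = SXS^*$ for an invertible $S$ whose first column is chosen so that $Sv$ lies in $\mathrm{span}(e_1)$ (where $v$ spans the range of $F$) and whose second column is chosen so that $Se_1$ lies in $\mathrm{span}(e_2)$; such an $S$ exists and is determined up to two scalar parameters, which provide exactly the freedom needed to solve the remaining conditions $\psi(I) = I$ and $\psi(E) = E$ upon picking $\alpha$ as a translation combined with a scaling.

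Third, I would verify that $\psi([0,I]) = [0,I]$. Since $\psi$ is a standard automorphism of $\oH$ that fixes the endpoints $0$ and $I$, maps the set $\cP = \cC_0 \cap \cC_I$ into itself, and is a bijection, it suffices to track the image of the interval through each piece of the composition, using the same kind of case analysis as in the proof of Lemma \ref{order} where the interval $[0,I]$ was related to $\oHp$ via an inversion.

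The main obstacle is step two: showing that a single composition $\psi = \alpha \circ \iota \circ \beta$ can simultaneously fix three of the four required points and send the fourth to a specified target. Enforcing $\psi(E) = E$ and $\psi(F) = E^\perp$ forces $\beta$ (up to scalars in its two columns) to interchange the one-dimensional subspaces $\mathrm{range}(E)$ and $\mathrm{range}(F)$ with $\mathrm{range}(E^\perp)$ and $\mathrm{range}(E)$ respectively; the remaining two scalar degrees of freedom in $\beta$ must then be pinned down so that the subsequent affine correction $\alpha$ can achieve $\psi(0) = 0$ and $\psi(I) = I$ --- and verifying that this overdetermined-looking algebraic system is in fact consistent for every $c \in (0, 1)$ is the crux of the argument.
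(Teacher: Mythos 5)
Your step 1 (normalizing $E=E_{11}$ and the phase of $F$ by unitary congruences) is fine, but step 2 contains a genuine error that kills the construction. You prescribe $\psi=\alpha\circ\iota\circ\beta$ with a \emph{single} inversion, where $\beta$ is affine and sends $F$ to a rank one projection, say $E_{11}$, and $E$ to $E_{22}$. Then
$\iota(\beta(F))=\iota(E_{11}+0\cdot E_{22})=E_{11}+\infty E_{22}\in\oH\setminus H_2$,
and every affine automorphism maps $\oH\setminus H_2$ onto itself (it fixes $\oi$ and sends $\infty P+aP^\perp$ to a point of the same form). Hence $\psi(F)=\alpha(\iota(\beta(F)))\notin H_2$, so it cannot equal $E^\perp$; the same objection applies to $\psi(E)$. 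The problem is structural, not a matter of tuning the free parameters: one inversion carries the singular matrices (in particular all of $\mathcal{P}$, which is exactly where $E$ and $F$ live) out to the points at infinity, and no affine correction can bring them back. So the ``overdetermined-looking algebraic system'' you flag as the crux is in fact inconsistent as you have set it up.

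The repair is to use \emph{two} inversions, which is what the paper does: let $\psi_0(X)=I-(I+X)^{-1}$, which maps $\oHp=\{aP+bP^\perp: a,b\in[0,\infty],\,P\in\cP\}$ onto $[0,I]$ with $\psi_0(0)=0$, $\psi_0(\oi)=I$ and $\psi_0(\infty P)=P$ for every $P\in\cP$. Choosing an invertible $T$ with $Tu=u$ and $Tv=w$, where $u,v,w$ are unit vectors spanning the ranges of $E,F,E^\perp$ respectively, the congruence $X\mapsto TXT^*$ preserves $\oHp$ (Lemma \ref{nasunek}), fixes $0$, $\oi$, $\infty E$, and sends $\infty F$ to $\infty E^\perp$. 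Conjugating by $\psi_0$ gives $\psi=\psi_0\circ(T\cdot T^*)\circ\psi_0^{-1}$ with all the required properties; here the projections pass through the points at infinity on the way out and are brought back by the second inversion hidden in $\psi_0$. Your step 3 would then also become unnecessary, since $\psi_0^{-1}([0,I])=\oHp$ is preserved by the middle map.
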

\begin{proof}
Let $\psi_0$ be the automorphism $X\mapsto I-(I+X)^{-1}$.
Take an invertible $2 \times 2$ complex matrix $T$ such that $TET^\ast = E$ and $TFT^\ast = E^\perp$. 
Then the automorphism $X\mapsto \psi_0(T\psi_0^{-1}(X)T^\ast)$ satisfies the desired properties.
(See also Lemmas \ref{nasunek}, \ref{inftysquare} and their proofs.)
\end{proof}

Take an automorphism $\psi$ as in this lemma. 
By considering $\varphi\circ \psi^{-1}$ instead of $\varphi$, we may and do assume that $F=E^\perp$ without loss of generality. 
In this case, we have $A=E+cE^\perp$ for some $c\in (0,1)$,  which implies $B=cE^\perp$.

\begin{claim}\label{1i}
If $\varphi_1$ satisfies (i), then $\varphi([0, I]\setminus\{A\})\subset \cC_0$, and thus $\varphi$ is of type $(\mathcal{C})$.
\end{claim}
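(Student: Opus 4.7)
The plan is to first establish $\varphi([0,A]\setminus\{A\})\subset \cC_0$ directly from the structural content of (i), then extend to all of $[0,I]\setminus\{A\}$ by a contradiction argument invoking Corollary \ref{cor08} applied to a sub-interval.

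First, I will translate condition (i) of Corollary \ref{cor08} for $\varphi_1$ via Lemma \ref{Porder}. This yields: $\varphi([0,A))=\{0\}$, and for each $P'\in \mathcal{S}_{0,A}$, $\varphi$ is constantly $\varphi(P')$ on the half-open segment $[P',A)=\{(1-t)P'+tA\,:\,t\in[0,1)\}$. Because every $P'\in \mathcal{S}_{0,A}$ satisfies $P'\sim 0$, one has $\varphi(P')\sim \varphi(0)=0$, so $\varphi(P')\in \cC_0$. Combining these with the decomposition $[0,A]\setminus\{A\}=[0,A)\cup \bigcup_{P'\in \mathcal{S}_{0,A}}[P',A)$ (which holds because any $X\in [0,A]$ distinct from $A$ either satisfies $X<A$ or is coherent to $A$, and in the latter case lies on a unique segment from some $P'\in \mathcal{S}_{0,A}$ to $A$), we conclude $\varphi([0,A]\setminus\{A\})\subset \cC_0$. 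For $Y\in [0,I]$ of rank at most one, $Y\sim 0$ gives $\varphi(Y)\in \cC_0$ directly.

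For the remaining case, $Y\in [0,I]$ invertible with $Y\not\le A$, I will argue by contradiction: assume $\varphi(Y)\notin \cC_0$, so $d(0,\varphi(Y))=2$. Since $Y>0$, the restriction $\varphi|_{[0,Y]}:[0,Y]\to \oH$ is a coherency preserver to which Corollary \ref{cor08} applies; it is not standard, since $\varphi\equiv 0$ on the non-singleton open subset $(0,A)\cap [0,Y]$ of $[0,Y]$. Thus $\varphi|_{[0,Y]}$ falls under one of (o)--(iv), and each possibility is excluded: case (ii) forces $\varphi\equiv \varphi(Y)\ne 0$ on $(0,Y]$, contradicting $\varphi=0$ on $(0,A)\cap [0,Y]$; cases (iii) and (iv) force the line $\ell_{Q_\circ,\varphi(Y)}$ to pass through $0$ (since $\varphi$ takes the value $0$ on a large portion of $[0,Y]\setminus \ell_{0,P_\times}$), whence $\varphi(Y)$ lies on a line through $0$ and so $\varphi(Y)\in \cC_0$; case (o) gives $\varphi(\mathcal{S}_{0,Y})=\{Q\}$ for some $Q\in \cC_0$, which combined with the image $\varphi(E)=E_{11}$ (using $E\in \mathcal{S}_{0,Y}$ when $Y\sim E$) and the constraint $\varphi(\cP)\subset \ell$ propagates via rank-one coherency to $\varphi(\mathcal{S}_{0,A})=\{E_{11}\}$, contradicting the last clause of (i); and case (i) applied to $\varphi|_{[0,Y]}$ forces $\varphi$ to be constant on $[E,Y)$, which for appropriate $Y$ contains $A$, giving $\varphi(A)=E_{11}\in \ell$, contradicting $\varphi(A)\notin \ell$.

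The main obstacle is carrying out this case analysis uniformly over all such $Y$: for $Y$ where the specific references (like $E\in \mathcal{S}_{0,Y}$ or $A\in [E,Y)$) do not directly apply, substitute arguments using other known images---chiefly $\varphi(B)\notin \ell$ and $\varphi(E^\perp)=0$---must be invoked, and the intersection $\mathcal{S}_{0,Y}\cap \mathcal{S}_{0,A}$ tracked carefully in each instance. Once the conclusion $\varphi([0,I]\setminus\{A\})\subset \cC_0$ is in hand, the assertion that $\varphi$ is of type $(\mathcal{C})$ is immediate by taking the special point to be $A$ and the cone vertex to be $0\in \varphi([0,I])$.
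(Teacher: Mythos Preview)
Your approach has a genuine gap, and it is precisely the one you flag at the end: the case analysis on $\varphi|_{[0,Y]}$ does not close for general invertible $Y\in[0,I]$ with $Y\not\le A$. In case (i), the conclusion $\varphi([0,Y))=\{0\}$ only yields a contradiction if you can exhibit some $X\in[0,Y)$ (or on some $[P',Y)$ with $P'\in\mathcal{S}_{0,Y}$) whose image is already known to be nonzero. Your candidates are $A$, $E$, $B$, $I$, but for a generic $Y$ that is incomparable with $A$, none of $A$, $E$, $B$, $I$ need lie in $[0,Y]$ at all (for instance, $E\le Y$ forces $Y=E+dE^\perp$, and $E^\perp\le Y$ forces $Y=aE+E^\perp$). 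So the ``substitute arguments'' you allude to are not merely bookkeeping---they are the crux of the proof, and you have not supplied them. Case (o) has the same defect.

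The paper avoids this difficulty entirely by \emph{not} attempting to treat all $Y$ directly. Instead it first uses the fact $\varphi(tE)=0$ for $t\in[0,1)$ (a consequence of $\varphi([0,A))=\{0\}$, since $tE<A=E+cE^\perp$) together with Lemma~\ref{0XI} to get $\varphi([0,I]\setminus\ell_{E,I})\subset\cC_0$ in one stroke. This reduces the problem to the single line $\ell_{E,I}$. For $Y=E+tE^\perp$ with $t\in(c,1]$, one now has the crucial inclusion $A<Y$, so that Lemma~\ref{ABC} applies to the pair $[0,A]\subset[0,Y]$: whatever type (o)--(iv) the restriction $\varphi|_{[0,Y]}$ has, the sub-restriction $\varphi_1=\varphi|_{[0,A]}$ must have the same type. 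Since $\varphi_1$ is of type (i) and types (o), (ii), (iii), (iv) are mutually exclusive with (i), $\varphi|_{[0,Y]}$ is forced to be of type (i), giving $\varphi([0,Y))=\{0\}$ and hence $\varphi(A)=0$, the desired contradiction. The case $t\in[0,c)$ is handled by the elementary coherency $E+tE^\perp\sim tE^\perp\in[0,A)$. Thus the reduction to the line $\ell_{E,I}$ is exactly what makes Lemma~\ref{ABC} usable and replaces your five-way case split with a single clean step.
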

\begin{proof}
Recall that $A=E+cE^\perp$.
Suppose that $\varphi_1$ satisfies (i).
Then we have $\varphi([0, A)) =\{\varphi(0)\}=\{0\}$. 
In particular, we have $\varphi(tE)=0$ for every $t\in [0,1)$. 
Therefore, Lemma \ref{0XI} implies that $\varphi([0,I]\setminus\ell_{E, I})\subset \cC_0$.
It remains to consider an element of the form $E+tE^\perp$, $t\in [0,1]\setminus\{c\}$.
If $t\in [0,c)$, then $E+tE^\perp\sim tE^\perp\in [0,A)$ and hence we get $\varphi(E+tE^\perp)\sim \varphi(tE^\perp)=0$.
Let $t>c$
and assume that $\varphi(E+tE^\perp)\not\sim 0$.
Then Lemma \ref{ABC} implies that $\varphi$ restricted to $[0,E+tE^\perp]$ also satisfies (i). 
(Note that a coherency preserver satisfying (i) cannot satisfy (j) for $j\in \{o, ii, iii, iv\}$.)
It follows that $\varphi([0,E+tE^\perp]\setminus\{E+tE^\perp\})\subset \mathcal{C}_0$, which contradicts the facts $A\in [0,E+tE^\perp]\setminus\{E+tE^\perp\}$ and $\varphi(A)\not\sim 0$. 
Therefore, we get $\varphi(E+tE^\perp)\sim 0$.
Thus we have shown that $\varphi([0, I]\setminus\{A\})\subset \cC_0$.
\end{proof}

\begin{claim}
The mapping  $\varphi_1$ never satisfies (ii). 
\end{claim}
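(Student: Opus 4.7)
Assume for contradiction that $\varphi_1$ satisfies (ii) of Corollary \ref{cor08}. Then $\varphi((0,A])=\{\varphi(A)\}$, and there is no $P\in \mathcal{S}_{0,A}$ such that $\varphi(\mathcal{S}_{0,A}\setminus\{P\})$ is a singleton. Using the automorphism $\psi_1\colon [0,I]\to[0,A]$ from Corollary \ref{cororder} with $\psi_1(0)=0$, $\psi_1(I)=A$ to transfer $\varphi_1$ into the setting of Proposition \ref{proposition08}, the ray-constancy of (ii) yields $\varphi(sX)=\varphi(X)$ for every $X\in \mathcal{S}_{0,A}$ and $s\in (0,1]$. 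In particular $\varphi(sE)=E_{11}$ for $s\in (0,1]$ and $\varphi(tE^\perp)=\varphi(B)$ for $t\in (0,c]$.

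The aim is to show $\varphi(\mathcal{S}_{0,A}\setminus\{B\})=\{E_{11}\}$, which contradicts the no-singleton clause of (ii). I would first extend the constancy: for $C\in \ell_{A,I}\cap [A,I]$, the coherency $\varphi(C)\in \ell_{\varphi(A),E_{11}}$ together with the uniqueness statement of Lemma \ref{simplll} shows $\varphi(C)\sim 0$ forces $\varphi(C)=E_{11}$; otherwise $d(\varphi(0),\varphi(C))=2$, and Lemma \ref{ABC} applied to $\varphi|_{[0,C]}$ (which must itself be of type (ii) by Lemma \ref{ABC} and the assumption) forces $\varphi(C)=\varphi(A)$. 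Hence $\varphi(E+tE^\perp)\in \{\varphi(A),E_{11}\}$ for every $t\in [c,1]$.

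Next I would examine $\varphi$ restricted to each square ${\square}_P$ for $P\in \mathcal{P}\setminus \{E,E^\perp\}$, applying Lemma \ref{where}. The cone case fails by an analysis mirroring the beginning of the proof of Proposition \ref{proposition08}: the vertex would have to lie in $\cC_0\cap \cC_{E_{11}}=\ell$, and then $\varphi(A)\sim R\in \ell$ combined with Lemmas \ref{pq} and \ref{apbq} forces $\varphi(A)\in \ell$, contradicting our hypothesis on $A$. Hence $\varphi({\square}_P)$ lies in a surface $\Pi_P$ containing $\ell\cup\{\varphi(A)\}$; the unique line in $\Pi_P$ through $\varphi(A)$ meeting $\ell$ is $\ell_{\varphi(A),E_{11}}$, and the two lines through $0$ are $\ell$ itself and a second line $\ell_{0,\varphi(X_P)}$. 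The chain of scalar matrices $\{aI:a\in (0,c]\}\subset {\square}_E\cap {\square}_P$ maps constantly to $\varphi(A)$, and I would use coherency of such $aI$ with $aE,aE^\perp$, and arbitrary $aQ$ to identify $\mathcal{S}_{0,\varphi(A)}\cap \Pi_P=\{E_{11},\varphi(B)\}$, forcing $\varphi(X_P)\in \{E_{11},\varphi(B)\}$.

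Once this is established, I would complete the argument by examining $\varphi(P)$ for $P\in \mathcal{R}_2\setminus\{E^\perp\}$ (where $\mathcal{R}_2=\{P\in \mathcal{P}:\varphi(X_P)\neq E_{11}\}$): since $\varphi(P)\in \ell$ (from $P\sim 0,I$) and $\varphi(P)\sim \varphi(X_P)\notin \ell$, Lemma \ref{apbq} forces $\varphi(P)=0$. Combining $\varphi(X_P)\in \{E_{11},\varphi(B)\}$ with the no-singleton clause reduces to showing $\mathcal{R}_2=\{E^\perp\}$ (giving the immediate contradiction) or, if $|\mathcal{R}_2|\geq 2$, deriving a contradiction between $\varphi(P)=0$ and the constancy $\varphi((0,A])=\{\varphi(A)\}$ through a further coherency chase involving elements of ${\square}_{P_1}\cap {\square}_{P_2}$ for distinct $P_1,P_2\in \mathcal{R}_2$. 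The principal obstacle is pinning down $\Pi_P=\Pi$ (equivalently, showing $\varphi(X_P)\in \{E_{11},\varphi(B)\}$ rather than some third point of $\mathcal{S}_{0,\varphi(A)}$); this requires careful exploitation of the constancy on scalar matrices $(0,c]\cdot I$ together with the specific rigidity imposed by the ``no-singleton'' condition in (ii).
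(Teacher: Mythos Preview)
Your plan is headed toward the right contradiction --- showing $\varphi(\mathcal{S}_{0,A}\setminus\{B\})=\{E_{11}\}$ --- but the route you take is far more elaborate than necessary and, as you yourself note, leaves an unresolved ``principal obstacle.'' The surface/square machinery via Lemma~\ref{where} is not needed at all, and your ray-constancy claim ``$\varphi(sX)=\varphi(X)$ for every $X\in\mathcal{S}_{0,A}$ and $s\in(0,1]$'' is overstated: the transferring automorphism $\psi_1$ from Corollary~\ref{cororder} is not linear, so $\psi_1(sP)\neq s\psi_1(P)$ in general. It happens to work for $X\in\{E,B\}$ only because these are eigenprojections of $A$.

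The paper's argument is a five-line coherency chase that you missed. From (ii) one has $\varphi((0,A])=\{\varphi(A)\}$; since $A=E+cE^\perp$, the segment $\{tE+cE^\perp : t\in(0,1]\}$ lies in $(0,A]$, so $\varphi(tE+cE^\perp)=\varphi(A)$ for all such $t$. Now $tE+E^\perp$ is coherent to $tE+cE^\perp$, to $E^\perp$, and to $I$, forcing
\[
\varphi(tE+E^\perp)\in \ell_{\varphi(E^\perp),\varphi(I)}\cap\cC_{\varphi(A)}=\ell\cap\cC_{\varphi(A)}=\{E_{11}\}.
\]
By Lemma~\ref{0XI}, every point of $[0,I]\setminus\ell_{0,E^\perp}$ --- in particular every point of $\mathcal{S}_{0,A}\setminus\{B\}$ --- is coherent to some $tE+E^\perp$ with $t\in(0,1]$. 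Hence $\varphi(\mathcal{S}_{0,A}\setminus\{B\})\subset\mathcal{S}_{\varphi(0),\varphi(A)}\cap\cC_{E_{11}}=\{E_{11}\}$, the last equality by Lemma~\ref{pjanck}. This contradicts the no-singleton clause of (ii). The key step you overlooked is that the horizontal segment $\{tE+cE^\perp\}$ already sits inside $(0,A]$, giving $\varphi(A)$ for free on that segment and unlocking the whole chase.
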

\begin{proof}
Recall that $A=E+cE^\perp$ and $B=cE^\perp$.
Suppose that $\varphi_1$ satisfies (ii). We are going to get a contradiction.
We have $\varphi((0,A])=\{\varphi(A)\}$. In particular, we get $\varphi(tE+cE^\perp)=\varphi(A)$ for every $t\in (0,1]$.
If $t\in (0,1]$, then $tE+E^\perp$ is coherent to $tE+cE^\perp$.
Consequently,
\[
\varphi(tE+E^\perp)\in \ell_{\varphi(E^\perp), \varphi(I)}\cap \cC_{\varphi(A)} =\ell\cap \cC_{\varphi(A)}=\{E_{11}\}.
\]  
Hence 
\begin{equation}\label{apor}
\varphi(tE+E^\perp)=E_{11}.
\end{equation}
By Lemma \ref{0XI}, every point of $[0, I]\setminus \ell_{0, E^\perp}$ is coherent to $tE+E^\perp$ for some $t\in (0,1]$. 
Since $\mathcal{S}_{0, A}\setminus\{B\}\subset [0,I]\setminus \ell_{0, E^\perp}$, \eqref{apor} implies 
\[
\varphi(\mathcal{S}_{0, A}\setminus\{B\})\subset \mathcal{S}_{\varphi(0), \varphi(A)}\cap \cC_{E_{11}}
=\{ E_{11} \}.
\]
Thus we obtain $\varphi(\mathcal{S}_{0, A}\setminus\{B\})=\{E_{11}\}$, contradicting our assumption that $\varphi_1$ satisfies (ii). 
\end{proof}

From now on, let us also consider the restriction $\varphi_2$ of $\varphi$ to $[B, I]$. 
By imitating the above argument, one may complete the proof of Proposition \ref{E11} whenever we assume that $\varphi_2$ satisfies (o),  (i), or (ii). 
Therefore, let us consider the case where $\varphi_1$ satisfies (iii) or (iv) and $\varphi_2$ also satisfies (iii) or (iv).  
Since $E, B\in \mathcal{S}_{0, A}$ and $E_{11}=\varphi(E)\neq \varphi(B)$, we have two possibilities: 
Either $\varphi(\mathcal{S}_{0, A}\setminus\{E\})=\{\varphi(B)\}$, or $\varphi(\mathcal{S}_{0, A}\setminus\{B\})=\{\varphi(E)\}=\{E_{11}\}$.
Similarly, either  $\varphi(\mathcal{S}_{B, I}\setminus\{{E^\perp}\})=\{\varphi(A)\}$, or $\varphi(\mathcal{S}_{B, I}\setminus\{A\})=\{\varphi(E^\perp)\}=\{0\}$ holds.

\begin{claim}
If $\varphi(\mathcal{S}_{0, A}\setminus\{E\})=\{\varphi(B)\}$, then $\varphi([0, I]\setminus\ell_{E, I})\subset \ell_{0, \varphi(B)}$, hence $\varphi$ is of type $(\ell)$.
\end{claim}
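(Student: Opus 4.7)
Let $Q := \varphi(B)$, so that $Q \in \cC_0 \setminus \ell$ and the task is to show $\varphi(X) \in \ell_{0, Q} = \cC_0 \cap \cC_Q$ for every $X \in [0, I] \setminus \ell_{E, I}$; the type $(\ell)$ conclusion then follows with exceptional lines $\ell_{E, I}$ and $\ell_{0, Q}$. Two preliminary observations will be used constantly: the three lines $\ell_{E, A}$, $\ell_{A, I}$, and $\ell_{E, I}$ coincide (since $E$, $A = E + cE^\perp$, and $I = E + E^\perp$ are mutually coherent and collinear), and $\ell_{0, Q} \cap \ell_{Q, \varphi(A)} = \{Q\}$ because $\varphi(A) \not\sim 0$ in view of $d(\varphi(0), \varphi(A)) = 2$.

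The first step is to identify the type of $\varphi_1 := \varphi|_{[0, A]}$ via Corollary \ref{cor08}: the hypothesis forces $P_\times = E$, $Q_\times = E_{11}$, $Q_\circ = Q$ in either option (iii) or (iv). In option (iv), the identification $\ell_{E, A} = \ell_{E, I}$ yields $\varphi([0, A] \setminus \ell_{E, I}) \subset \ell_{0, Q}$ immediately. Option (iii) gives only $\varphi([0, A] \setminus \ell_{0, E}) \subset \ell_{Q, \varphi(A)}$ and is the nontrivial case; there I would invoke the sharper form of Proposition \ref{proposition08}(iii), which says that for each $s \in [0, 1]$ the image $\varphi(\cC_{sE} \cap [0, A] \setminus \ell_{0, E})$ is the singleton consisting of the unique point of $\ell_{Q, \varphi(A)}$ coherent to $\varphi(sE) \in \ell$. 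Testing this against suitably chosen points $Y \in \mathcal{S}_{B, I} \setminus \{A, E^\perp\}$---whose $\varphi$-images are controlled by Corollary \ref{cor08} applied to $\varphi_2 := \varphi|_{[B, I]}$---the plan is to force $\varphi(sE) = 0$ for every $s \in (0, 1)$, which pins $\varphi$ on all of $[0, A] \setminus \ell_{E, I}$ to the set $\{0, Q\} \subset \ell_{0, Q}$.

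I would then perform the parallel analysis on $\varphi_2$. Its special point in $\mathcal{S}_{B, I}$ is either $A$, in which case $\varphi(\mathcal{S}_{B, I} \setminus \{A\}) = \{0\}$, or $E^\perp$, in which case $\varphi(\mathcal{S}_{B, I} \setminus \{E^\perp\}) = \{\varphi(A)\}$; combined with either (iii) or (iv), this yields four subcases. Matching each compatible subcase against the structure of $\varphi_1$ just established---by evaluating $\varphi$ at shared test points such as elements of the segment $\{tE + cE^\perp : t \in [0, 1]\} \subset [0, A] \cap [B, I]$ and at points coherent to both $B$ and elements of $\mathcal{S}_{B, I}$---I will deduce $\varphi([B, I] \setminus \ell_{E, I}) \subset \ell_{0, Q}$. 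Inconsistent subcases are ruled out by intersection identities such as $\ell_{Q, \varphi(A)} \cap \ell_{\varphi(A), E_{11}} = \{\varphi(A)\}$ and by comparing with values of $\varphi$ already determined on $[0, A]$.

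Finally, for $X \in [0, I] \setminus ([0, A] \cup [B, I] \cup \ell_{E, I})$, the strategy is to exhibit two distinct points $Y_1, Y_2$ lying in $([0, A] \cup [B, I]) \setminus \ell_{E, I}$, coherent to $X$, and whose $\varphi$-images are two distinct elements of $\ell_{0, Q}$ (typically $0$ and $Q$); Lemma \ref{simplll} then forces $\varphi(X) \in \ell_{0, Q}$. The existence of such $Y_1, Y_2$ will be extracted by analyzing the rank-one slices of $X$ in the $\{E, E^\perp\}$-frame, using the fact (established in the earlier steps) that concretely many points map to $0$ or to $Q$. The main obstacle I expect is the delicate combinatorial case work in the middle steps---specifically, verifying that Case (iii) for $\varphi_1$ combined with each of the four subcases of $\varphi_2$ still produces values only in $\{0, Q\}$ on $[0, A] \setminus \ell_{E, I}$ and leaves no residual freedom for $\varphi$-values like $\varphi((1/2)A)$ to escape $\ell_{0, Q}$.
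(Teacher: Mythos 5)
Your proposal is an outline rather than a proof, and its central mechanism fails at the decisive point, namely the determination of $\varphi$ on the segment $\{sE\,:\,s\in(0,1)\}$ in your ``option (iii)'' branch for $\varphi_1$. You propose to force $\varphi(sE)=0$ by testing the sharper form of (iii) against points $Y\in\mathcal{S}_{B,I}\setminus\{A,E^\perp\}$. But $\mathcal{S}_{B,I}\cap[0,A]=\{A\}$ (if $Y\in\mathcal{S}_{B,I}$ satisfies $Y\le A$, then $0\le Y-B\le E$ and $\mathrm{rank}(Y-B)\le 1$ force $Y=rE+cE^\perp$, and $Y\sim I$ forces $r=1$), so no such $Y$ lies in the domain of $\varphi_1$ and the sharper form of (iii) for $\varphi_1$ cannot be evaluated at them. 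Using them only through coherence with $sE$ is no better: in the subcase $\varphi(\mathcal{S}_{B,I}\setminus\{A\})=\{0\}$ you get $\varphi(sE)\sim 0$, which is vacuous since $\varphi(sE)$ already lies on $\ell=\ell_{0,E_{11}}\subset\mathcal{C}_0$, and in the subcase $\varphi(\mathcal{S}_{B,I}\setminus\{E^\perp\})=\{\varphi(A)\}$ you would get $\varphi(sE)\in\ell\cap\mathcal{C}_{\varphi(A)}=\{E_{11}\}$, the opposite of what you need. Since your ``four subcases'' for $\varphi_2$ and the final covering step both presuppose knowing exactly which points map to $0$ and which to $\varphi(B)$, the gap propagates through the rest of the argument.

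The paper avoids the case split on Corollary \ref{cor08} altogether and pins down the relevant values by direct triple-coherence computations. For $t\in[0,1)$, the point $tE+E^\perp$ is coherent to $E^\perp$, to $I$, and (by Lemma \ref{c(s)}, since $tE+E^\perp\not\sim E$) to some point of $\mathcal{S}_{0,A}\setminus\{E\}$, whence $\varphi(tE+E^\perp)\in\ell\cap\mathcal{C}_{\varphi(B)}=\{0\}$. Then $tE+cE^\perp\sim A,\,B,\,tE+E^\perp$ gives $\varphi(tE+cE^\perp)=\varphi(B)$, and $tE\sim 0,\,E,\,tE+cE^\perp$ gives $\varphi(tE)\in\ell\cap\mathcal{C}_{\varphi(B)}=\{0\}$. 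Two covering arguments (Lemmas \ref{0XI} and \ref{c(s)}) then show that every point of $[0,I]\setminus\ell_{E,I}$ is coherent both to some $tE$ (image $0$) and to some point of $(\mathcal{S}_{0,A}\setminus\{E\})\cup\{tE+cE^\perp\,:\,t\in[0,1)\}$ (image $\varphi(B)$), yielding $\varphi([0,I]\setminus\ell_{E,I})\subset\mathcal{C}_0\cap\mathcal{C}_{\varphi(B)}=\ell_{0,\varphi(B)}$. Any repair of your route would still have to import this computation to settle option (iii), at which point the case analysis buys nothing.
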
 
\begin{proof}
Suppose that $\varphi(\mathcal{S}_{0, A}\setminus\{E\})=\{\varphi(B)\}$.
Let $t\in [0,1)$. 
By Lemma \ref{c(s)} and the relation $tE+E^\perp\not\sim E$, we see that $tE+E^\perp$ is coherent to some point in $\mathcal{S}_{0, A}\setminus\{E\}$. 
This together with $E^\perp\sim tE+E^\perp\sim I$ and 
\[
\varphi(E^\perp)=0,\ \ \varphi(I)=E_{11},\ \  \varphi(\mathcal{S}_{0, A}\setminus\{E\})=\{\varphi(B)\}
\]
shows that $\varphi(tE+E^\perp) \in \ell_{0,E_{11}}\cap \ell_{0, \varphi(B)} =\{0\}$
and hence
$\varphi(tE+E^\perp)= 0$.

Observe that $tE+cE^\perp$ is coherent to the three points $A, B$, and $tE+E^\perp$.
It follows that 
\[
\varphi (tE+cE^\perp) \in \cC_{\varphi(A)}\cap \cC_{\varphi(B)}\cap\cC_{\varphi(tE+cE^\perp)} =\ell_{\varphi(A), \varphi(B)}\cap \cC_0=\{\varphi(B)\}.
\]
Since $tE$ is coherent to the three points $0, E$, and $tE+cE^\perp$, we get
\[
\varphi(tE) \in \cC_{\varphi(0)}\cap\cC_{\varphi(E)}\cap\cC_{\varphi(tE+E^\perp)} = \ell\cap \cC_{\varphi(B)} =\{0\}.
\]
By Lemma \ref{0XI}, every point of $[0, I]\setminus\ell_{E, I}$ is coherent to $tE$ for some $t\in [0,1)$. 
Thus we get $\varphi([0, I]\setminus\ell_{E, I})\subset \cC_0$.

To complete the proof, we show that $\varphi([0, I]\setminus\ell_{E, I})\subset \cC_{\varphi(B)}$. 
It suffices to verify that every point of $[0, I]\setminus\ell_{E, I}$ is coherent to some point of 
\[
(\mathcal{S}_{0, A}\setminus\{E\})\cup \{tE+cE^\perp\,:\, t\in [0,1)\}.
\]
Lemma \ref{0XI} implies that every element of $[0,A)=[0, E + cE^\perp)$ is coherent to $tE+cE^\perp$ for some $t\in [0,1)$.
So, let $X \in [0, I]\setminus(\ell_{E, I}\cup [0, A))$. 
Lemma \ref{dupc} implies that $X\not\sim E$. 
Because $X\not> A$ and $X \ge 0$,  Lemma \ref{c(s)} yields that $X$ is coherent to some point in  $\mathcal{S}_{0, A}\setminus\{E\}$. 
Thus we get the desired conclusion. 
\end{proof}

Essentially the same argument shows that $\varphi$ is of type $(\ell)$ whenever $\varphi(\mathcal{S}_{B, I}\setminus\{{E^\perp}\})=\{\varphi(A)\}$. 
In what follows, we consider the case where 
\[
\varphi(\mathcal{S}_{0, A}\setminus\{B\})=\{\varphi(E)\}=\{E_{11}\}\ \ \text{and}\ \ \varphi(\mathcal{S}_{B, I}\setminus\{A\})=\{\varphi(E^\perp)\}=\{0\}
\]
hold.

\begin{claim}
If $\varphi_1$ satisfies (iii), then $\varphi_2$ also satisfies (iii), and 
$\varphi([0, I]\setminus\{B\})\subset \cC_{E_{11}}$.
Thus $\varphi$ is of type $(\mathcal{C})$.
\end{claim}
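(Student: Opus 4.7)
The proof has two steps: first I show that $\varphi_2$ must also be of type (iii), then I deduce the inclusion $\varphi([0,I]\setminus\{B\})\subset\cC_{E_{11}}$.

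\emph{Step 1 ($\varphi_2$ is of type (iii)).} With $\varphi_1$ of type (iii), the given $\varphi(\mathcal{S}_{0,A}\setminus\{B\})=\{E_{11}\}$ forces the exceptional data to be $P_\times=B$, $Q_\circ=E_{11}$, $Q_\times=\varphi(B)$. The singleton strengthening of Proposition~\ref{proposition08}(iii), translated via Lemma~\ref{Porder} to $[0,A]$ and applied at $Y=B$ together with Lemma~\ref{dupc} ($\cC_B\cap[0,A]=[0,B]\cup[B,A]$), gives $\varphi([B,A]\setminus\{B\})=\{\varphi(A)\}$; in particular $\varphi(A)\sim\varphi(B)$. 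If $\varphi_2$ were of type (iv), the analogous identification would force $P'_\times=A$, $Q'_\circ=0$, $Q'_\times=\varphi(A)$, so $\varphi([B,I]\setminus[A,I])\subset\ell_{0,\varphi(B)}$. But $(B,A)\subset[B,I]\setminus[A,I]$ and $\varphi((B,A))=\{\varphi(A)\}$, hence $\varphi(A)\in\ell_{0,\varphi(B)}\subset\cC_0$; combined with $\varphi(A)\sim E_{11}$ (from $A\sim I$) this yields $\varphi(A)\in\cC_0\cap\cC_{E_{11}}=\ell$ by Lemma~\ref{simpll}, contradicting $\varphi(A)\notin\ell$. Therefore $\varphi_2$ is of type (iii) with data $A$, $0$, $\varphi(A)$.

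\emph{Step 2 (the inclusion, away from $[0,B)$).} The type-(iii) descriptions together with the strengthening of $\varphi_2$(iii) at $Y=A$ (which yields $\varphi([A,I]\setminus\{A\})=\{E_{11}\}$) immediately give $\varphi([0,A]\setminus[0,B])\subset\ell_{E_{11},\varphi(A)}\subset\cC_{E_{11}}$, $\varphi([B,I]\setminus[B,A])\subset\ell\subset\cC_{E_{11}}$, and $\varphi([B,A]\setminus\{B\})=\{\varphi(A)\}\subset\cC_{E_{11}}$. For $X$ in the remaining region $[0,I]\setminus([0,A]\cup[B,I])$, writing $X$ in the $\{E,E^\perp\}$-basis with diagonal entries $a,b$ and off-diagonal $\beta$, the inequalities $X\not\le A$ and $X\not\ge B$ force $a<1$ and (when $b\le c$) $|\beta|^2>(c-b)(1-a)$; the coherence equation $(1-a)(s-b)=|\beta|^2$ then solves uniquely for an $s\in(c,1]$, producing $X\sim E+sE^\perp\in[A,I]\setminus\{A\}$ and hence $\varphi(X)\sim E_{11}$. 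The parallel computation $s'=s(1-\alpha)/(1-s\alpha)\in(c,1]$ handles rank-one $sP\in[0,I]\setminus[0,A]$ with $P\ne E^\perp$, and rank-one elements of $[0,A]$ are covered by the strengthening of $\varphi_1$(iii) at $Y=0$.

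\emph{The final piece and main obstacle.} It remains to treat $X=tE^\perp\in[0,B)$. By $\varphi_1$(iii), $\varphi(tE^\perp)\in\ell_{0,\varphi(B)}$; since $\varphi(B)\notin\ell$ implies $E_{11}\notin\ell_{0,\varphi(B)}$, Lemma~\ref{simplll} gives $\ell_{0,\varphi(B)}\cap\cC_{E_{11}}=\{0\}$, so the goal is exactly to prove $\varphi(tE^\perp)=0$. The plan is to combine the strengthening of $\varphi_2$(iii) at $Y=B$ (giving $\varphi(sE^\perp)=0$ for $s\in(c,1]$) with coherence relations between $tE^\perp$ and off-diagonal points $Y=aE+s_YE^\perp+\beta E_{12}+\bar\beta E_{21}$ in $[0,I]\setminus([0,A]\cup[B,I])$, whose images were just shown to lie in $\cC_{E_{11}}$. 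The main obstacle is that $tE^\perp$ admits no direct coherent partner in $[A,I]\setminus\{A\}$ (such coherence would demand $s=t\in[0,c)\cap(c,1]=\emptyset$); forcing $\varphi(tE^\perp)\in\cC_{E_{11}}$ therefore requires propagating the value $0$ indirectly through simultaneous cone intersections in the off-diagonal region, and this is the technical heart of the proof.
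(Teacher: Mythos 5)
Your Steps 1 and 2 are essentially sound and, apart from one slip, reproduce the paper's treatment of four of the five pieces of $[0,I]\setminus\{B\}$: ruling out type (iv) for $\varphi_2$, and handling $[0,A]\setminus[0,B]$, $[B,I]\setminus[B,A]$, $[B,A]\setminus\{B\}$, and $[0,I]\setminus([0,A]\cup[B,I])$ (your explicit computation producing $s\in(c,1]$ with $X\sim E+sE^\perp$ is a legitimate substitute for the paper's appeal to Lemma \ref{0XI}). The slip: since $A-B=E$ has rank one, the order interval $(B,A)=\{X\,:\,B<X<A\}$ is \emph{empty}, so ``$\varphi((B,A))=\{\varphi(A)\}$'' is vacuous and yields no contradiction; you should instead feed a point such as $(1/2)E+cE^\perp\in[B,A]\setminus\{A,B\}\subset[B,I]\setminus\ell_{A,I}$ into the type-(iv) hypothesis, exactly as the paper does. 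That is cosmetic and immediately repairable.

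The genuine gap is the one you flag yourself: the set $[0,B)=\{tE^\perp\,:\,0\le t<c\}$. Your proposed strategy cannot close it as described. Knowing $\varphi(Y)\in\cC_{E_{11}}$ for the various $Y\sim tE^\perp$ in the off-diagonal region only yields $\varphi(tE^\perp)\sim\varphi(Y)$ for some \emph{unspecified} points of the cone $\cC_{E_{11}}$, and coherence with elements of a cone does not place $\varphi(tE^\perp)$ in that cone; likewise $\varphi(tE^\perp)\sim\varphi(sE^\perp)=0$ for $s\in(c,1]$ only reproduces $\varphi(tE^\perp)\in\cC_0$, which you already have from $\varphi(tE^\perp)\in\ell_{0,\varphi(B)}$. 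The paper closes this case by contradiction with a different tool, the hereditary Lemma \ref{ABC2}: if $\varphi(tE^\perp)\not\sim E_{11}=\varphi(I)$, then the restriction of $\varphi$ to $[tE^\perp,I]$ is a non-standard coherency preserver with $d(\varphi(tE^\perp),\varphi(I))=2$, and since its further restriction to $[B,I]\subset[tE^\perp,I]$ is of type (iii) and not of type (iv), Lemma \ref{ABC2} forces the restriction to $[tE^\perp,I]$ to be of type (iii) as well; because $(tE^\perp,I]$ meets no line through $tE^\perp$, this puts $\varphi((tE^\perp,I])$ inside a single line, contradicting the fact that the image already contains the three points $0$, $E_{11}$, $\varphi(A)$, which lie on no common line since $\varphi(A)\notin\ell=\ell_{0,E_{11}}$. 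Without this (or an equivalent) argument your proof is incomplete.
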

\begin{proof}
Suppose that $\varphi_1$ satisfies (iii).
Then we have 
\[
\varphi([0,A]\setminus\ell_{0,B})\subset \ell_{\varphi(E), \varphi(A)}=\ell_{E_{11}, \varphi(A)}.
\]
Let $t\in (0,1]$.
Since the set $\varphi(\mathcal{C}_B\cap [0, A]\setminus \ell_{0, B})$ is a singleton, 
we get $\varphi(tE+cE^\perp)=\varphi(E+cE^\perp)=\varphi(A)$.
Observe that $tE+E^\perp$ is coherent to three points $tE+cE^\perp, E^\perp, I$. 
Since 
\[
\varphi(tE+cE^\perp) =\varphi(A), \ \ \varphi(E^\perp)=0, \ \ \text{and} \ \ \varphi(I)=E_{11}, 
\]
we get $\varphi(tE+E^\perp)\in\ell_{E_{11}, \varphi (A)}\cap\cC_0=\{E_{11}\}$.

We now prove that $\varphi_2$ does not satisfy (iv). 
Assume on the contrary that $\varphi_2$ satisfies (iv). 
Then we have 
\[
\varphi([B,I]\setminus \ell_{A, I}) \subset \ell_{\varphi(E^\perp), \varphi(B)} =\ell_{0, \varphi(B)}.
\]
However, we already know that $\varphi((1/2)E+cE^\perp)=\varphi(A)\notin \ell_{0, \varphi(B)}$ although $(1/2)E+cE^\perp\in [B,I]\setminus \ell_{A, I}$, so we obtain a contradiction. 
Thus $\varphi_2$ satisfies (iii). 
It follows that for every $t\in [0,1]$, the set $\varphi (\mathcal{C}_{tE+cE^\perp} \cap [B,I] \setminus \ell_{B,A})$ is a singleton.
This together with Lemma \ref{0XI} and the fact that $\varphi(tE+E^\perp)=E_{11}$ for every $t\in (0,1]$ yields $\varphi((B, I])=\{E_{11}\}$. 

We show that $\varphi([0, I]\setminus\{B\})\subset \cC_{E_{11}}$. 
By Lemma \ref{0XI}, we see that every element of $[0,I]\setminus\ell_{0,E^\perp}$ is coherent to $tE+E^\perp$ for some $t\in (0,1]$. Thus $\varphi([0,I]\setminus\ell_{0,E^\perp}) \subset \cC_{E_{11}}$.
If $t\in (c,1]$, then $tE^\perp \sim E+tE^\perp \in (B, I]$, thus $\varphi(tE^\perp)\in \cC_{E_{11}}$. 
It remains to show that $\varphi(tE^\perp)\sim E_{11}$ for every $t\in [0,c)$.
Assume towards a contradiction that $\varphi(tE^\perp)\not\sim E_{11}$ for some $t\in [0,c)$. 
Since $\varphi$ restricted to $[B,I]$ satisfies (iii) and does not satisfy (iv), Lemma \ref{ABC2} implies that $\varphi$ restricted to $[tE^\perp, I]$ needs to satisfy (iii). 
In particular, we see that $\varphi((tE^\perp, I])$ is necessarily contained in a line. 
However, it is easily seen that $\{I\}\cup({\mathcal{S}}_{B, I}\setminus\{E^\perp\}) \subset (tE^\perp, I]$ and
that $\varphi(\{I\}\cup({\mathcal{S}}_{B, I}\setminus\{E^\perp\}))=\{E_{11},\varphi(A), 0\}$ is not contained in a line. 
Thus we get to a contradiction.
\end{proof}

Similarly,  if $\varphi_2$ satisfies (iv), then so does $\varphi_1$, and $\varphi([0, I]\setminus\{A\})\subset \cC_0$.
Let us finish the proof of Proposition \ref{E11} by considering the remaining case.

\begin{claim}
If $\varphi_1$ satisfies (iv) and $\varphi_2$ satisfies (iii), then $\varphi([0, I]\setminus\ell_{A, B})\subset\ell$, so $\varphi$ is of type $(\ell)$.
\end{claim}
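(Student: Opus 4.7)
The plan is to combine the two inclusions already established just above this claim with the singleton refinement hidden inside item (iv) of Proposition~\ref{proposition08}, transported to $\varphi_1=\varphi|_{[0,A]}$ and $\varphi_2=\varphi|_{[B,I]}$ via Lemma~\ref{Porder}. We already have
\[
\varphi([0,A]\setminus\ell_{A,B})\subset \ell,\qquad \varphi([B,I]\setminus\ell_{A,B})\subset \ell,
\]
and the refinement (which is a direct consequence of Lemma~\ref{simplll} once one notices that $\ell_{\varphi(A),\varphi(B)}\cap\ell=\emptyset$) says that for every $W\in[B,A]$ both $\mathcal{C}_W\cap[0,A]\setminus\ell_{A,B}$ and $\mathcal{C}_W\cap[B,I]\setminus\ell_{A,B}$ are mapped by $\varphi$ to the single point $S_W\in\ell$ coherent to $\varphi(W)$. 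Since $0\in\mathcal{C}_B\cap[0,A]\setminus\ell_{A,B}$ and $E\in\mathcal{C}_A\cap[0,A]\setminus\ell_{A,B}$, we read off $S_B=0$ and $S_A=E_{11}$. Observing that $(E+sE^\perp)-A=(s-c)E^\perp$ and $tE^\perp-B=(t-c)E^\perp$ both have rank at most one, this gives $\varphi(E+sE^\perp)=E_{11}$ for every $s\in[0,1]\setminus\{c\}$ and $\varphi(tE^\perp)=0$ for every $t\in[0,1]\setminus\{c\}$.

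The key step will be a Loewner-order observation: any $X\in[0,I]$ with $X\sim A$ already lies in $[0,A]\cup[B,I]$. Indeed, writing $X=A+tR$ with $R\in\mathcal{P}$ and $t\in\mathbb{R}$, the condition $X\le I$ is $tR\le(1-c)E^\perp$; when $t>0$, evaluating both sides on $e_1$ forces $Re_1=0$ (because $(1-c)E^\perp$ is rank one with range $\mathrm{span}(e_2)$), hence $R=E^\perp$, $t\le 1-c$, and $X=E+(c+t)E^\perp\in[E,I]\subset[B,I]$; for $t\le 0$ the inequality $A-X=|t|R\ge 0$ is automatic and $X\in[0,A]$. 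The symmetric argument (swapping the roles of $0$ and $I$) shows that $X\sim B$ likewise forces $X\in[0,A]\cup[B,I]$.

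With these tools the proof finishes cleanly. Fix $X\in[0,I]\setminus\ell_{A,B}$; if $X\in[0,A]\cup[B,I]$ we are done. Otherwise $X\not\sim A$ and $X\not\sim B$ by the previous paragraph, and $X$ lies on none of the four lines $\ell_{0,E},\ell_{E,I},\ell_{0,E^\perp},\ell_{E^\perp,I}$, since their intersections with $[0,I]$ (the intervals $[0,E],[E,I],[0,E^\perp],[E^\perp,I]$) all sit inside $[0,A]\cup[B,I]$. Corollary~\ref{JeDr} with pivot $E$ then supplies a unique $Z\in[E,I]$ with $Z\sim X$, and $Z\ne A$ because $X\not\sim A$; writing $Z=E+sE^\perp$ with $s\ne c$ we get $\varphi(Z)=E_{11}$. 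Similarly, Corollary~\ref{JeDr} with pivot $E^\perp$ produces a unique $Y'\in[0,E^\perp]$ with $Y'\sim X$ and $Y'\ne B$, giving $\varphi(Y')=0$. Hence $\varphi(X)$ is coherent to the two distinct points $0$ and $E_{11}$ of $\ell$, so Lemma~\ref{simpll} forces $\varphi(X)\in\mathcal{C}_0\cap\mathcal{C}_{E_{11}}=\ell$.

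The chief obstacle is the Loewner-order step: without it one must confront points of the remaining region that are coherent to $A$ or $B$, and the singleton along $[B,A]$ alone is not enough to pin down their $\varphi$-images, since a priori $\varphi|_{[B,A]}$ could collapse several points onto $\varphi(A)$ or $\varphi(B)$. The rank-one structure of $I-A=(1-c)E^\perp$ and of $B=cE^\perp$ is exactly what ensures that this never happens.
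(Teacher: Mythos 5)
Your argument is correct and follows the same overall strategy as the paper: first pin down $\varphi$ on $\ell_{E,I}\setminus\{A\}$ and $\ell_{0,E^\perp}\setminus\{B\}$ (constant $E_{11}$ and $0$ respectively), then show every $X\in[0,I]\setminus([0,A]\cup[B,I])$ is coherent to a point mapped to $E_{11}$ and a point mapped to $0$, forcing $\varphi(X)\in\mathcal{C}_0\cap\mathcal{C}_{E_{11}}=\ell$. Where you diverge is in how the coherent points are produced: the paper argues directly with determinant signs and the intermediate value theorem (showing each $X\notin[0,A]$ is coherent to $E+tE^\perp$ for some $t\in(c,1]$, and dually), whereas you first prove the Loewner-order containment $\mathcal{C}_A\cap[0,I]\subset[0,A]\cup[B,I]$ (and its analogue for $B$) and then invoke Corollary \ref{JeDr} plus Lemma \ref{simplll} to get a coherent point on $[E,I]$ distinct from $A$ and one on $[0,E^\perp]$ distinct from $B$. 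Your route buys a cleaner reduction (no case analysis on $\det(X-A)$) at the cost of the extra rank-one Loewner computation; both are sound. One small slip: the containment $[E,I]\subset[B,I]$ you assert in the Loewner step is false ($E-B=E-cE^\perp$ has a negative eigenvalue, so $E\notin[B,I]$); what you actually need and have is that your specific point $X=A+tE^\perp$ with $t>0$ satisfies $X\geq A\geq B$, hence $X\in[A,I]\subset[B,I]$. This does not affect the validity of the conclusion. Your independent derivation of the singleton refinement from $\ell_{\varphi(A),\varphi(B)}\cap\ell=\emptyset$ and Lemma \ref{simplll} is also a nice self-contained substitute for quoting the refinement from Proposition \ref{proposition08}.
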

\begin{proof}
From the assumption, we obtain $\varphi ([0, A] \setminus \ell_{A,B}) \subset \ell$ and $\varphi ([B, I] \setminus \ell_{A,B}) \subset \ell$.
We also see that $\varphi(tE^\perp)=\{0\}$ for every $t\in [0,c)$ and $\varphi(E+tE^\perp)=\{E_{11}\}$ for every $t\in (c, 1]$.

We see that each element of $[0, I]\setminus[0, A]$ is coherent to $E+tE^\perp$ for some $t\in (c, 1]$.
Indeed, if $X\in \cC_A\cap[0, I]\setminus[0, A]$, then $X=A+R$ for some $R$ of rank at most one, and $X \not\le A$ implies $R \ge 0$, which yields $A\leq X\leq I$ and hence $X\sim I$.
If $X\in [0, I]\setminus([0, A]\cup \cC_A)$, then $\det (X-A) \not= 0$. Because neither $X <A$ nor $X > A$, we have $\det (X-A) < 0$. This together with $\det (X-I)\geq 0$ and 
the intermediate value theorem shows the existence of $t\in (c, 1]$ with $X\sim E+tE^\perp$.
Similarly, each element of $[0, I]\setminus[B, I]$ is coherent to $tE^\perp$ for some $t\in [0,c)$.
Thus we obtain $\varphi([0, I]\setminus([0, A]\cup[B, I]))\subset \ell$. 
Therefore, we get $\varphi([0, I]\setminus\ell_{A, B})\subset\ell$.
\end{proof}

\begin{corollary}\label{d=1}
Let $A, B\in H_2$ satisfy $A<B$. 
Let $\varphi\colon [A, B]\to \oH$ be a coherency preserving map with $d(\varphi(A),\varphi(B))=1$. 
Then $\varphi$ is degenerate.
\end{corollary}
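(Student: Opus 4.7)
The plan is to reduce the statement to Proposition \ref{E11} by pre- and post-composing $\varphi$ with appropriate automorphisms of $\oH$. First I would invoke Lemma \ref{Porder} (with the roles $A,B,C,D$ taken to be $0,I,A,B$) to obtain an affine automorphism $\psi_1$ of $\oH$ satisfying $\psi_1([0,I]) = [A,B]$, $\psi_1(0)=A$, and $\psi_1(I)=B$. Next, since $d(\varphi(A),\varphi(B)) = 1$, Lemma \ref{pocelo1} produces a standard automorphism $\psi_2$ of $\oH$ with $\psi_2(\varphi(A))=0$ and $\psi_2(\varphi(B))=E_{11}$.

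Then I would consider the composition $\tilde{\varphi} := \psi_2 \circ \varphi \circ \psi_1 \colon [0,I] \to \oH$. Because $\psi_1$ and $\psi_2$ preserve coherency in both directions, $\tilde{\varphi}$ is a coherency preserver, and by construction $\tilde{\varphi}(0)=0$, $\tilde{\varphi}(I)=E_{11}$. Hence Proposition \ref{E11} applies and tells us that $\tilde{\varphi}$ is degenerate, i.e.\ of type $(\mathcal{C})$ or of type $(\ell)$.

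It remains to transport this degeneracy back to $\varphi$. Suppose $\tilde{\varphi}$ is of type $(\mathcal{C})$, so that $\tilde{\varphi}([0,I]\setminus\{X_0\})\subset \mathcal{C}_{Y_0}$ for some $X_0\in[0,I]$ and some $Y_0\in\tilde{\varphi}([0,I])$. Applying $\psi_2^{-1}$ and using the fact that any automorphism of $\oH$ sends cones to cones (immediate from the definition of the coherency relation being preserved in both directions), we obtain
\[
\varphi([A,B]\setminus\{\psi_1(X_0)\}) \subset \mathcal{C}_{\psi_2^{-1}(Y_0)},
\]
with $\psi_2^{-1}(Y_0)\in\varphi([A,B])$; thus $\varphi$ is of type $(\mathcal{C})$. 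If instead $\tilde{\varphi}$ is of type $(\ell)$, with $\tilde{\varphi}([0,I]\setminus\ell)\subset \ell'$ for some lines $\ell,\ell'$, then using the fact that an automorphism sends lines to lines (a line is a maximal coherent set) we get $\varphi([A,B]\setminus \psi_1(\ell))\subset \psi_2^{-1}(\ell')$, so $\varphi$ is of type $(\ell)$.

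There is no real obstacle here since all the substantive work has been done in Proposition \ref{E11}; the argument is a routine change-of-coordinates reduction entirely analogous to the deductions of Corollaries \ref{d=2} and \ref{d=0} from Propositions \ref{proposition08} and \ref{00II} respectively. The only point to keep in mind is that both the ``excluded point'' in type $(\mathcal{C})$ and the two lines in type $(\ell)$ transform in the natural way under the automorphisms $\psi_1$ and $\psi_2$.
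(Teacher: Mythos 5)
Your reduction is correct and is exactly what the paper does: the case $d(\varphi(A),\varphi(B))=1$ is normalized at the outset of Subsection \ref{dab1} to $A=0$, $B=I$, $\varphi(0)=0$, $\varphi(I)=E_{11}$ via a pair of automorphisms, Proposition \ref{E11} is proved in that normalized setting, and the corollary follows by transporting degeneracy back (automorphisms send cones to cones and lines to lines). Nothing is missing.
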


Theorem B is the union of Corollaries \ref{d=2}, \ref{d=0}, \ref{d=1}.

\medskip

\noindent\textbf{Acknowledgements.} \quad 
The authors are grateful to the anonymous referee for the helpful comments.



\bibliographystyle{amsalpha}

\printindex

\end{document}